\patchcmd{\part}{\thispagestyle{plain}}{\thispagestyle{empty}}{}{}
\patchcmd{\chapter}{\thispagestyle{plain}}{\thispagestyle{empty}}{}{}
\def\@partimage{}
\newcommand{\partimage}[2][]{\gdef\@partimage{\includegraphics[#1]{#2}}}
\def\@part[#1]#2{%
    \ifnum \c@secnumdepth >-2\relax
      \refstepcounter{part}%
      \addcontentsline{toc}{part}{\thepart\hspace{1em}#1}%
    \else
      \addcontentsline{toc}{part}{#1}%
    \fi
    \markboth{}{}%
    {\centering
     \interlinepenalty \@M
     \normalfont
     \ifnum \c@secnumdepth >-2\relax

        \vspace{24pt}
     
       \huge\bfseries \textsc{\partname}\nobreakspace\thepart
       \par
       \vskip 20\p@
     \fi
     \Huge \bfseries #2\vfil\@partimage\vfil\par}%
    \@endpart}
\definecolor{darkblue}{HTML}{004C93} 
\newcommand{\tud}[2]{^{#1}_{\phantom{#1}#2}}
\newcommand{\tudu}[3]{^{#1\phantom{#2}#3}_{\phantom{#1}#2}}
\newcommand{\tdu}[2]{_{#1}^{\phantom{#1}#2}}
\newcommand{\tdud}[3]{_{#1\phantom{#2}#3}^{\phantom{#1}#2}}
\newcommand{\tudud}[4]{^{#1\phantom{#2}#3}_{\phantom{#1}#2\phantom{#3}#4}}
\newcommand{\defining}[1]{\textbf{#1}}
\newcommand{\so}{{\text{so}}}
\newcommand{\sw}{{\text{sw}}}
\newcommand{\tp}{{\text{tp}}}
\newcommand{\sL}{{{}^*\!L}}
\newcommand{\sV}{{{}^*\!V}}
\newcommand{\sRs}{{{}^*\!R^*}}
\newcommand{\sWs}{{{}^*\!W^*}}
\newcommand{\sMs}{{{}^*\!M^*}}
\newcommand{\ldv}[2]{\frac{\text{D}#1}{\dd#2}}
\newcommand{\heart}{\ensuremath\heartsuit}
\newcommand{\os}[1]{\mathcal O\qty(\mathcal S^{#1})}
\newtheorem{proposition}{Proposition}[chapter]
\newtheorem{definition}{Definition}[chapter]
\newtheorem{conjecture}{Conjecture}[chapter]
\newtheorem{theorem}{Theorem}[chapter]
\newtheorem{lemma}{Lemma}[chapter]
\newtheorem{main_result}{Main result}[chapter]
\newtheorem{preliminary_result}{Preliminary result}[chapter]
\newcommand{\sint}{\rotatebox[origin=c]{-90}{$\backslash$}\hspace{-14pt}\int}
\newcommand{\sintline}{\scriptsize\rotatebox[origin=c]{-90}{$\backslash$}\hspace{-10pt}\int}
\newcommand*\widefbox[1]{\fbox{\hspace{2em}#1\hspace{2em}}}
\newcommand{\boxedeqn}[2]{
\begin{empheq}[box=\widefbox]{align}
  #1
\end{empheq}\index{#2}

\noindent
}
\newcommand{\cell}[4]{\fill[color=#4] (#1,#2) rectangle ({#1+1.5},{#2+.75});
\node[] () at ({#1+.75},{#2+.375}) {#3}
}
\begin{document}

\renewcommand{\labelitemi}{{\color{black!40}$\blacktriangleright$}}

\pagestyle{empty}
\pagenumbering{gobble}

\color{white}

\definecolor{dukeblue}{rgb}{0.15, 0.0, 0.61}

\newpagecolor{dukeblue}\afterpage{\restorepagecolor}

\begin{center}

\phantom{moumoulamouette}

\vspace{\stretch{0.2}}

\textbf{\fontsize{34}{50}\selectfont The Motion of Test Bodies\\\vspace{18pt}
around Kerr Black Holes}

\vspace{21pt}

\rule{0.8\textwidth}{2pt}

\vspace{21pt}

{\LARGE Adrien Druart}

\vspace{\stretch{0.6}}

\includegraphics[width=0.8\textwidth]{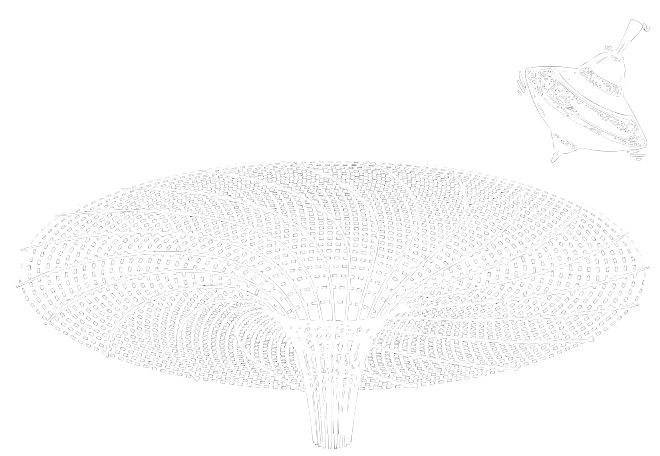}

\vspace{\stretch{0.6}}

\textsc{\large---\hspace{6pt} PhD Thesis \hspace{6pt}---}

\vspace{\stretch{0.1}}
    
\end{center}

\color{black}
\pagenumbering{gobble}
\pagestyle{empty}

\phantom{a}
\newpage

\begin{center}

\phantom{moumoulamouette}

\vspace{\stretch{0.2}}

\textbf{\fontsize{34}{50}\selectfont The Motion of Test Bodies\\\vspace{18pt}
around Kerr Black Holes}

\vspace{21pt}

\rule{0.8\textwidth}{2pt}

\vspace{21pt}

\large{\textbf{Thesis presented by Adrien DRUART}} \\
\color{black}
\large{in fulfilment of the requirements of the PhD Degree in Sciences ("Docteur en Sciences")\\\vspace{6pt}
Ann\'ee acad\'emique 2022-2023}

\vspace{\stretch{0.6}}

\textbf{Supervisor:}\\ Geoffrey COMP\`ERE (Universit\'e Libre de Bruxelles)

\vspace{\stretch{0.2}}

\textbf{Thesis jury : } \\
Riccardo ARGURIO (Universit\'e Libre de Bruxelles, Chair) \\
Stéphane DETOURNAY (Universit\'e Libre de Bruxelles, Secretary) \\
Tanja HINDERER (Utrecht University) \\
Justin VINES (Max-Planck-Institute for Gravitational Physics)

\vspace{\stretch{0.6}}

\includegraphics[height=2.5cm]{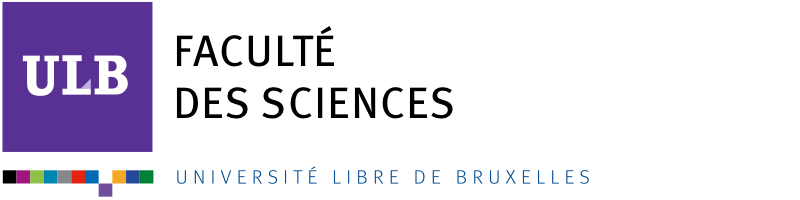} 
\includegraphics[height=2.5cm]{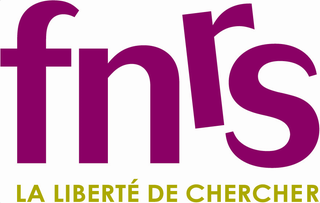}
    
\end{center}

\newpage

\phantom{a}

\newpage

\pagestyle{empty}

\phantom{a}
\vspace{\stretch{0.3}}
\begin{flushright}
\begin{minipage}{0.8\textwidth}\textsf{
In my entire scientific life, extending over forty-five years, the most shattering experience has been the realization that an exact solution of Einstein's equations of general relativity, discovered by the New Zealand mathematician Roy Kerr, provides the absolute exact representation of untold numbers of massive black holes that populate the universe. This ``shuddering before the beautiful,'' this incredible fact that a discovery motivated by a search after the beautiful in mathematics should find its exact replica in Nature, persuades me to say that beauty is that to which the human mind responds at its deepest and most profound level.
}
\end{minipage}

\vspace{24pt}

\textsf{--- Subrahmanijan Chandrasekhar }

\end{flushright}
\vspace{\stretch{0.6}}

\newpage
\phantom{a}
\newpage

\pagenumbering{arabic}
\pagestyle{plain}

\pagenumbering{gobble}
\phantom{a}
\vspace{30pt}
\pagestyle{empty}

\begin{center}\Large
    \textsc{Summary of the thesis}
\end{center}

\vspace{\stretch{0.75}}

This thesis aims to explore the properties of the motion of finite size, compact test bodies around a Kerr black hole in the small mass-ratio approximation. The small body is modelled as a perturbation of Kerr geometry, neglecting its gravitational back-reaction but including deviations from a purely geodesic motion by allowing it to possess a non-trivial internal structure. Such a body can be accurately described by a worldline endowed with a collection of multipole moments. Hereafter, we shall always consider the multipole expansion truncated at quadrupole order. Moreover, only spin-induced quadrupole moment will be taken into account, thus discarding the presence of any tidal-type deformation. For astrophysically realistic objects, this approximation is consistent with expanding the equations of motion up to second order in the body's spin magnitude.

The text is structured as follows. The first part is devoted to an extended review of geodesic motion in Kerr spacetime, including Hamiltonian formulation and classification of timelike geodesics, with a particular emphasis put on near-horizon geodesics of high spin black holes. The second part introduces the equations of motion for extended test bodies in generic curved spacetime, also known as Mathisson-Papapetrou-Dixon (MPD) equations. They are derived from a generic action principle, and their physical significance and mathematical consistency is examined in details. The third part discusses conserved quantities for the MPD equations in Kerr spacetime, restricting to the aforementioned quadrupole approximation. The conservation is required to hold perturbatively in the test body's spin magnitude, and the related conserved quantities are build through the explicit resolution of the conservation constraint equations. Finally, the covariant Hamiltonian formulation of test body motion in curved spacetime is presented, and an Hamiltonian reproducing the spin-induced quadrupole MPD equations is derived. Two applications of the Hamiltonian formalism are subsequently discussed: (i) the integrability properties of MPD equations in Kerr and Schwarzschild spacetimes and (ii) the Hamilton-Jacobi formulation of MPD equations in Kerr spacetime. It is shown that the constants of motion obtained in the previous part directly arise while solving the Hamilton-Jacobi equation at first order in the spin magnitude. Some expectations regarding the computation at quadratic order close the discussion.  
\vspace{\stretch{1}}

\newpage

\phantom{a}

\newpage

\pagestyle{plain}

\pagenumbering{Roman}
\pagestyle{contents}
\tableofcontents

\pagestyle{empty}

\addcontentsline{toc}{chapter}{Acknowledgements}
\chapter*{Acknowledgements}

My warmest thanks first go to my supervisor, Geoffrey Compère. His constant availability, the trust and the freedom he provided me all along these four years have been fundamental to the achievement of the present work.

\vspace{24pt}
\noindent
It would like to thank all the people I had the occasion to collaborate with during my PhD degree: Lorenzo Küchler, Justin Vines and Paul Ramond on the research side, Riccardo Argurio and Stéphane Detournay on the teaching side. A big thank you to the three wonderful secretaries of our group, who always managed to make things easy on the organizational level.

\vspace{24pt}
\noindent
Enfin, je tiens à remercier de tout mon coeur ma famille et mes amis, qu'ils soient de Liège, de Bruxelles, de Clerheid ou d'ailleurs. Vous tous qui m'avez entouré pendant ces années et sans qui je ne serais pas moi-même aujourd'hui, cette thèse est également la vôtre.

\vspace{24pt}
\noindent
Finally, thanks to Stéphane Detournay and Tanja Hinderer for pointing a couple of typos in the draft.

\addcontentsline{toc}{part}{\textsc{Introduction}}
\part*{\textsc{Introduction}}
\chaptermark{\textsc{Introduction}}
\pagestyle{intro}

\lettrine{T}{his} thesis is devoted to the understanding of some theoretical aspects of the motion of small objects in the neighbourhood of supermassive blacks holes. By small objects, we have in mind either neutron stars or stellar mass black holes, which both share the property of not being (too much) tidally deformed even in a strong gravitational field, thus remaining compact at any time. Any bounded binary system composed of a ``central'' supermassive black hole and of such a stellar mass companion will dissipate energy and angular momentum though the emission of Gravitational Waves (GWs), finally leading to its coalescence.  

Our framework for modelling this kind of systems will be General Relativity (GR), and our main concern will be the understanding on how the internal structure of the small compact object (``the secondary'') will affect its motion around the much more massive black hole (``the primary'', or the ``central'' black hole). These deviations originate from the fact that the secondary is not a point-wise particle, but can be spinning, exhibit quadrupole or higher order multipole moments induced by either its proper rotation or its tidal deformability\ldots All these effects will be collectively referred to as \textit{finite size effects}. The zeroth order approximation to this description (corresponding to switch off all the finite size effects) amounts to study timelike geodesics in Kerr spacetime, which is the most generic GR solution accounting for an astrophysically realistic stationary black hole. The finite size induced corrections can then be studied as perturbations added on the top of this geodesic motion. 

This introduction aims to both motivate the present work from the current context of GW observations and to briefly set the background context in which the forthcoming discussion will take place. This thesis is divided into four parts, which are intended to be rather independent one to another, and more specific introductions will be provided at the beginning of each of them. We encourage the reader willing to acquire an overview of this text to first read the four part's introductions before turning to the chapters themselves.

\subsubsection{Observational motivations and large mass ratio binaries}

\begin{figure}
    \centering
    \includegraphics[width=\textwidth]{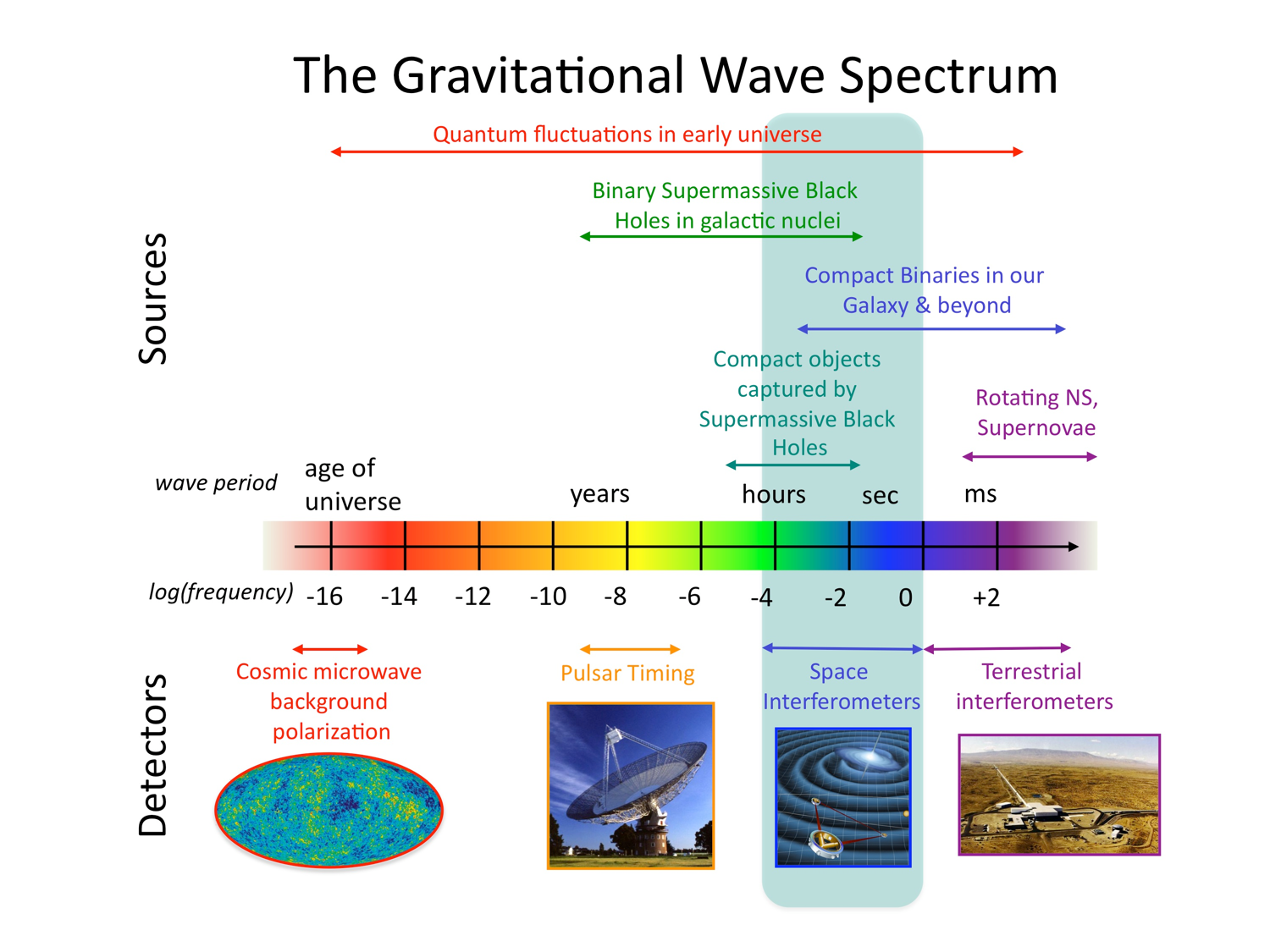}
    \caption{The gravitational wave spectrum: astrophysical sources, and corresponding detectors. Image credit: NASA Goddard Space Flight Center.}
    \label{fig:spectrum}
\end{figure}

Since 2015 and the first detection of a binary black hole coalescence by the LIGO collaboration \cite{LIGOScientific:2016aoc}, we have entered into a gravitational wave astronomy era, having potentially huge amounts of informations to bring to the scientific community, both on the theoretical and astrophysical sides.

The main GW sources for the present detectors are the coalescences of systems of binary stellar mass black holes and/or neutron stars. Such systems exhibit a mass ratio which is typically not much greater than 1:10 \cite{LIGOScientific:2021djp}. Altogether, this yields the frequency of the GW signal emitted to be centered around one hundred Hertz, thus exactly lying in the LIGO/VIRGO/KAGRA terrestrial interferometers band. However, the upcoming space-based detectors such as the Laser Interferometer Space Antenna (LISA) mission \cite{LISA:2017pwj} will be sensitive to GW signals centered around the millihertz, thus allowing the detection of GWs emitted by radically different types of astrophysical sources, see Figure \ref{fig:spectrum}.

One of these new in-band phenomena corresponds to the capture of a compact, stellar mass object by a supermassive black hole. This phenomenon is known as an \textit{Extreme Mass Ratio Inspiral} (or EMRI for short), provided that the mass ratio $\epsilon$ between the two bodies satisfies
\begin{align*}
10^{-4}\geq\epsilon\triangleq\frac{\mu}{M}.
\end{align*}
Here, $M$ denotes the mass of the supermassive black hole while $\mu$ stands for the mass of the small compact object. The prospective observation of this type of events motivates the modelling of black hole binaries in the small mass ratio regime up to a high precision, since accurate parameter extraction from the LISA data would require to keep track of the orbital phase of the binary with a precision of about one radian over the whole in-band inspiral, which can last for a few hundred thousand cycles \cite{AmaroSeoane:2007aw,Babak:2017tow}.

\subsubsection{Extreme mass ratio inspirals in General Relativity}

\begin{figure}
    \centering
    \includegraphics[width=\textwidth]{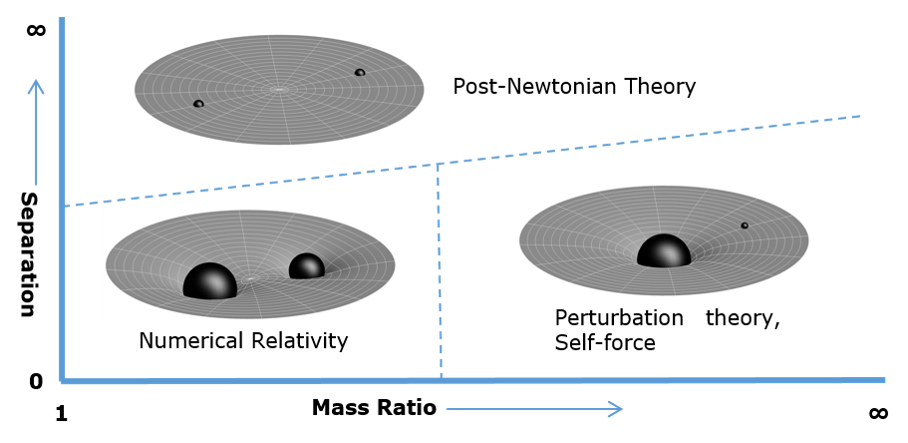}
    \caption{The various approaches currently used for modelling extreme mass ratio inspirals. Image credit: Timothy Rias.}
    \label{fig:corners}
\end{figure}

This huge accuracy requirement has motivated a community-scaled effort aimed at providing precise models of EMRIs within the framework of GR in the last decades. Achieving this task requires to solve the general relativistic two body problem in the strong field/small mass ratio regime. As depicted in Figure \ref{fig:corners}, various methods can be used for tackling this problem, depending on both the separation and the mass ratio between the objects. For EMRIs, the most adapted method is to treat the secondary compact object as a perturbation moving in the curved geometry generated by the supermassive black hole, modelled as a rotating black hole within GR. This amounts to apply black hole perturbation theory over a Kerr background, which is also known as \textit{self-force} theory in the literature, see \cite{Pound:2021qin} for a state of the art review.

The presence of a small mass ratio parameter $\epsilon\ll 1$ allows to perform the analysis perturbatively, order by order in $\epsilon$. The actual motion of the secondary withing the curved spacetime generated by the primary will take the form of a forced geodesic equation:
\begin{center}
    \begin{tikzpicture}
    \node[fill=green!10,rounded corners,draw=green,thick](A) at (-0.25,0) {\large$\frac{\text{D} z^\mu}{\dd\tau}=\epsilon\, f^\mu[g_{\alpha\beta},z^\alpha,T_{\alpha\beta}]$};
    \node[fill=blue!10,rounded corners, draw=blue,thick] (B) at (-5.5,-2) {\small$f^\mu=0$: geodesic motion};
    \node[fill=purple!10,rounded corners, draw=purple,thick] (C) at (-0.25,-2) {\small$f^\mu=f^\mu_\text{GSF}$: self-force corrections};
    \node[fill=red!10,rounded corners, draw=red,thick] (D) at (5.5,-2) {\small$f^\mu=f^\mu_\text{spin}$: finite size effects};
    \draw[ultra thick,->] (A.south)--(B.north);
    \draw[ultra thick,->] (A.south)--(C.north);
    \draw[ultra thick,->] (A.south)--(D.north);
    \end{tikzpicture}
\end{center}
which shall be solved consistently with the field equations for obtaining the evolution of the position of the secondary $z^\mu(\tau)$ over the whole inspiral, whose duration scales as $1/\epsilon$ (this timescale is known as the \textit{radiation reaction time}). 

Adapted techniques like the two timescale expansion \cite{Hinderer:2008dm} allow to solve perturbatively these equations. Actually, one can show that the orbital phase of the secondary schematically takes the form \cite{Hinderer:2008dm,Witzany:2018ahb}
\begin{align*}
\phi&=\phi^{(1)}_\text{avg}&&\mathcal O\qty(\epsilon^{-1})\\
&+\phi^{(1)}_\text{osc}+\phi^{(2)}_\text{avg}+\phi^{(1)}_\text{spin}&&\mathcal O\qty(\epsilon^{0})\\
&+\phi^{(2)}_\text{osc}+\phi^{(3)}_\text{avg}+\phi^{(2)}_\text{spin}&&\mathcal O\qty(\epsilon^{1})\\
&+\ldots
\end{align*}
The various contributions $\phi^{(i)}_\text{\ldots}$ to the right hand side of this equation (originating from the forcing terms $f^{\mu}_\text{\ldots}$ of the forced geodesic equation) have two main origins:
\begin{itemize}
    \item \textbf{Self-force corrections:} they are due to the self-interaction between the mass of the secondary and its own gravitational field. This is the main effect which has been addressed in the literature, see \cite{Pound:2021qin} and references therein, and the most difficult to treat. It is of prime importance, since it is its presence that will lead to the dissipative nature of the inspiral through the emission of gravitational waves. Meeting the LISA precision requirements ($\Delta\phi\sim 1\,\text{rad}$) requires both the knowledge of the first order conservative piece of the self-force (leading to the term $\phi^{(1)}_\text{osc}$), and to the first and second order averaged, dissipative parts of the self-force ($\phi^{(i)}_\text{avg}$).
    \item \textbf{Finite size effects corrections:} they originate from the non point-like nature of the secondary. The leading piece $\phi^{(1)}_\text{spin}$ of these corrections arises at $\mathcal O\qty(\epsilon^0)$ in the orbital phase, and is due to the spin (intrinsic angular momentum) of the compact body. As shown in \cite{Warburton:2017sxk}, discarding this term can lead to a dephasing of a few dozen of cycles over the whole inspiral.
\end{itemize}

Actually, at the level of the equations of motion of the secondary, both self-force and finite size corrections arise at the same order, if one consider astrophysically realistic secondaries. As will be detailed in Part \ref{part:spinnin_bodies} of this thesis, one can introduce a spin magnitude parameter $\mathcal S$ defined from the body's spin dipole tensor $S^{\mu\nu}$ as $\mathcal S^2\triangleq\frac{1}{2}S_{\mu\nu}S^{\mu\nu}$. One can show that the leading order spin force term then scales as
\begin{align*}
    \abs{f^\mu_\text{spin}}\sim \frac{\mathcal S}{\mu M}.
\end{align*}
However, a realistic astrophysical secondary will always spin at a rate lower than the one of a maximally spinning Kerr black hole\footnote{As we will see in Chapter \ref{chap:kerr}, this maximal bound on the spin is necessary to prevent the appearance of a naked singularity in spacetime.}, yielding $\mathcal S\leq \mu^2$. Gathering these results allows to write
\begin{align*}
    \abs{f^\mu_\text{spin}}\lesssim\epsilon.
\end{align*}
The leading forcing term describing finite size effects therefore scales at most as the leading self-force term. Moreover, from this short reasoning, one see that $\mathcal S$ itself can be formally used as a small expansion parameter.

In this thesis, we will discard self-force effects but account for the leading and the first subleading finite-size corrections, thus considering \textit{extended test bodies}. This approximation amounts to truncate the actual motion at zeroth order in the mass-ratio $\epsilon$ and at second order in the spin magnitude $\mathcal S$. Therefore, for timescales much shorter than the radiation-reaction time $1/\epsilon$, it provides a valid approximation of the motion. Moreover, even if the independent study of finite size effects does not directly provide the motion over the whole inspiral, its results can in principle be used to inform the self-forced motion \cite{Warburton:2017sxk,Pound:2021qin,Hinderer:2008dm,Witzany:2019nml}. 

The first subleading finite size correction (quadratic in the spin) is due to the quadrupole moment of the secondary, its study deserves some attention since it is the first term of the expansion where the nature of the compact object plays a role. Moreover, since they appear at 2PN order in the post-Newtonian expansion \cite{Kidder:1992fr,Kidder:1995zr,Will:1996zj,Gergely:1999pd,Mikoczi:2005dn,Racine:2008kj} (see \cite{Porto:2008jj,Bohe:2015ana} for the 3PN order and \cite{Kastha:2019brk} for a recent status) quadratic effects are generically relevant for gravitational waveform modeling of compact binaries.

Finally, let us notice that being able to consider finite-size effects as corrections added on the top of geodesic motion in Kerr spacetime (describing the central, supermassive black hole) will be of prime importance for most of the analytical computations of this thesis. This originates from the fact that the Kerr geometry exhibits a \emph{hidden symmetry}, not corresponding to any spacetime isometry, responsible for the integrability of geodesic motion and for the separability of various field equations. 

\subsubsection{How this thesis is organized}
The vast majority of the chapters of this thesis aims to provide a pedagogical exposition of the subject, and will be denoted with a (P) in the upcoming plan. Nevertheless, a few chapters, denoted with a (T), contain technical derivations and details which can be skipped during a first reading of this work. They are nevertheless included in order to give a feeling of the computational complexity lying behind some results provided in this work. Along the text, some computations were also performed or checked thanks to the software \textit{Mathematica}. The related notebooks are available on simple request.

Part \ref{part:geodesics} of the thesis is concerned with the zeroth order approximation for EMRIs motion, timelike geodesic motion in Kerr spacetime. Chapter \ref{chap:kerr} (P) reviews the main features of the Kerr metric and of its timelike geodesics, while Chapter \ref{chap:hamilton_kerr} (P) describes the associated Hamiltonian formulation. Finally, Chapter \ref{chap:classification} (T) describes the classification of the polar geodesic motion in generic Kerr spacetime, as well as the radial motion for near-horizon geodesics of extremal Kerr black holes.

In Part \ref{part:spinnin_bodies}, we depart from the geodesic approximation and turn on spin and finite-size effects, which are studied for a generic curved background. Chapter \ref{chap:EOM} (P) derives the equations of motion encompassing finite size effects, also known as Mathisson-Papapetrou-Dixon (MPD) equations. Chapter \ref{chap:skeleton} (P) discuss the physical significance of these equations from another point of view, namely the gravitational skeletonization, which amounts to replace the smooth compact body by a worldline endowed with a collection of multipoles. Chapter \ref{chap:SSC} (P) describes the necessity of supplementing the MPD equations with supplementary algebraic conditions for obtaining a closed set of equations. These conditions are known as \textit{spin supplementary conditions} and can be understood making some specific choice for the worldline upon which the multipole moments of the body are defined. Finally, Chapter \ref{chap:quadrupole} (P) aims at deriving the structure of the quadrupole moment in the case where it is only induced by the proper rotation of the test body, which will be the approximation used for the remaining of the thesis (thus discarding the presence of any tidal-type effects).

As any dynamical system, the understanding of the motion of finite size test bodies will be enormously facilitated if one is able to find quantities which are conserved along the motion. This is the core of the present thesis, to which Part \ref{part:conserved_quantities} is devoted. Chapter \ref{chap:building} (P) aims at introducing a generic procedure for building conserved quantities directly from conservation constraint equations. Conserved quantities for MPD equations in Kerr spacetime at first and second order in the spin magnitude are respectively investigated in Chapters \ref{chap:first_order_solutions} and \ref{chap:second_order_solution} (T). A readable summary of the results (P) can be found in the specific introduction to Part \ref{part:conserved_quantities}.

The thesis ends by a rough discussion of a covariant Hamiltonian formulation of MPD equations, valid at quadratic order in $\mathcal S$. Chapter \ref{chap:symplectic} (P) introduces the related phase space, Poisson brackets structure and symplectic coordinates relevant for coping with the problem. Chapter \ref{chap:covariant_H} (P) deals with the construction of the Hamiltonian. The two last chapters discuss two applications of the Hamiltonian formulation: Chapter \ref{chap:integrability} (P) is devoted to the non-integrability of MPD equations in Kerr spacetime, while Chapter \ref{chap:HJ} reviews the solution of the associated Hamilton-Jacobi equation, and its relation with the constants of motion obtained in Part \ref{part:conserved_quantities}.

\subsubsection{Personal contributions}

The original contributions presented in this thesis are based upon the three following publications \cite{Compere:2020eat,Compere:2021kjz,Compere:2023alp}:
\begin{itemize}
    \item G.~Comp\`ere and A.~Druart, ``Near-horizon geodesics of high-spin black holes,'' Phys. Rev. D \textbf{101} (2020) no.8, [erratum: Phys. Rev. D \textbf{102} (2020) no.2, 029901] [arXiv:2001.03478 [gr-qc]]: complete classification and discussion of the physical properties of both generic Kerr polar geodesic motion and near-horizon extremal Kerr radial geodesic motion.
    \item G.~Comp\`ere and A.~Druart, ``Complete set of quasi-conserved quantities for spinning particles around Kerr,'' SciPost Phys. \textbf{12} (2022) no.1 [arXiv:2105.12454 [gr-qc]]: investigation of the conserved quantities for MPD equations at linear order in the spin magnitude, including the proof of uniqueness of Rüdiger's quadratic invariant in Kerr spacetime and discussion of the non-integrability of linearized MPD equations in Kerr spacetime. 
    \item G.~Comp\`ere, A.~Druart and J.~Vines, ``Generalized Carter constant for quadrupolar test bodies in Kerr spacetime,'' [arXiv:2302.14549 [gr-qc]]: generalization of Rüdiger's deformed Carter constant to quadratic order in the spin magnitude.
\end{itemize}

Moreover, Part \ref{part:hamilton} also contains a few results that do not appear in the literature, at least to our knowledge: the covariant Hamiltonian for spin-induced quadrupole MPD equations \eqref{HQ_offshell} and the discussion about the second order swing region solution to the associated Hamilton-Jacobi equation of Section \ref{sec:HJ:second}.

\subsubsection{Conventions}
All along this text, we stay within the realm of General Relativity, choosing to follow the conventions of \cite{carroll2003spacetime}. We will always consider a 4d Lorentzian manifold equipped with a metric $g_{\mu\nu}$. The metric signature is chosen to be $(-+++)$. Unless otherwise stated, lowercase Greek indices run from $0$ to $3$ and denote spacetime indices. Lowercase Latin indices denote purely spatial indices and run from $1$ to $3$. Uppercase Latin indices represent tetrad indices. The Einstein summation convention is used everywhere. $\nabla_\alpha$ denotes the covariant derivative with respect to the Levi-Civita connection, the Riemann tensor is defined such that $\comm{\nabla_\alpha}{\nabla_\beta}A_\mu=-R\tud{\lambda}{\mu\alpha\beta}A_\lambda$ and the Ricci tensor is $R_{\mu\nu}=R\tud{\lambda}{\mu\lambda\nu}$. Covariant derivatives will sometimes be denoted by a semicolon, while a comma might be used for partial derivatives.

\newpage

\phantom{a}

\newpage

\pagestyle{plain}

\pagenumbering{arabic}

\partimage[width=\textwidth]{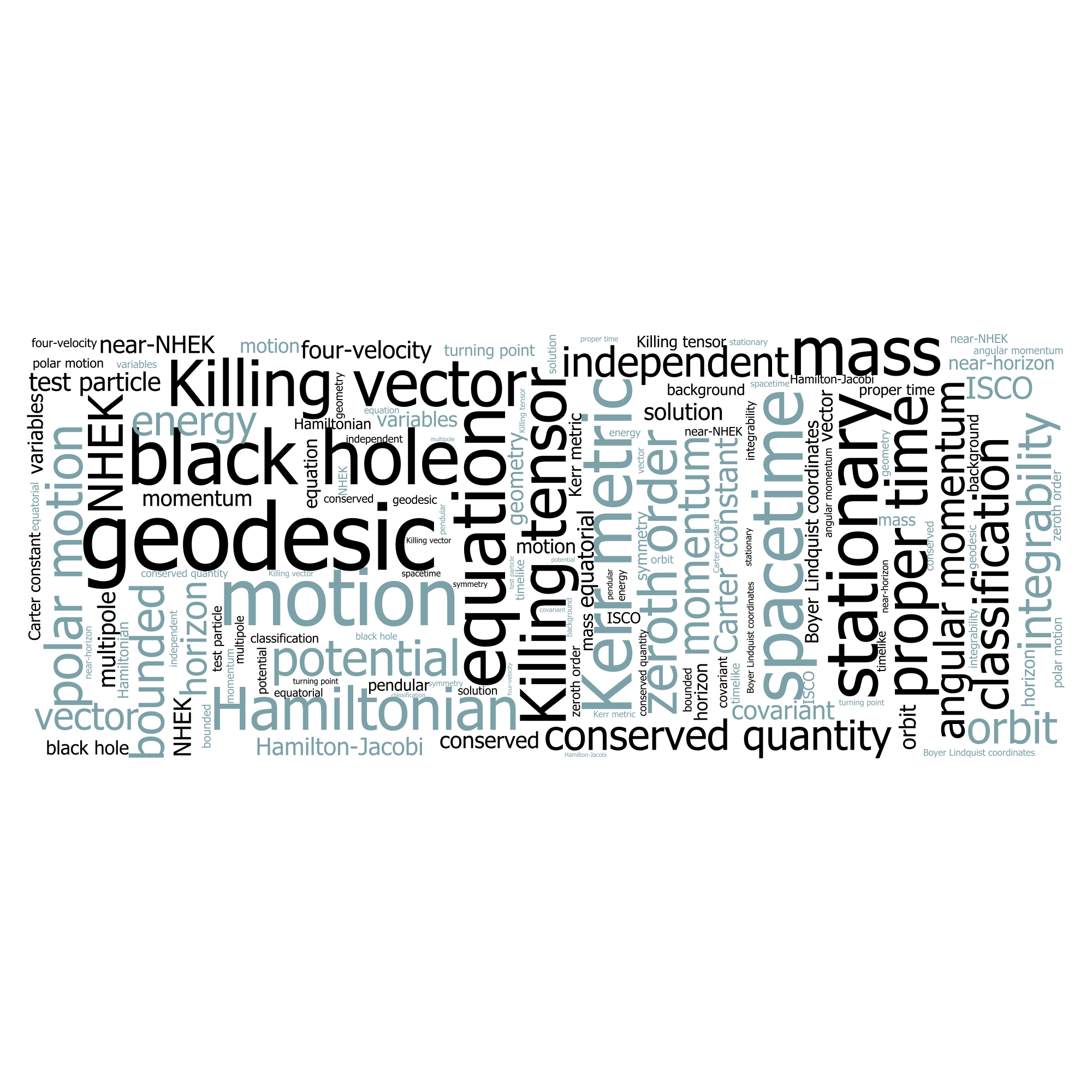}

\part[{\textsc{Geodesic Motion in Kerr Spacetime}}]{\textsc{Geodesic Motion\\in Kerr Spacetime}}
\label{part:geodesics}

{\renewcommand{\thefigure}{I.1}
\begin{figure}[h!]
    \centering
    \begin{tabular}{ccc}
       \includegraphics[height=6.5cm]{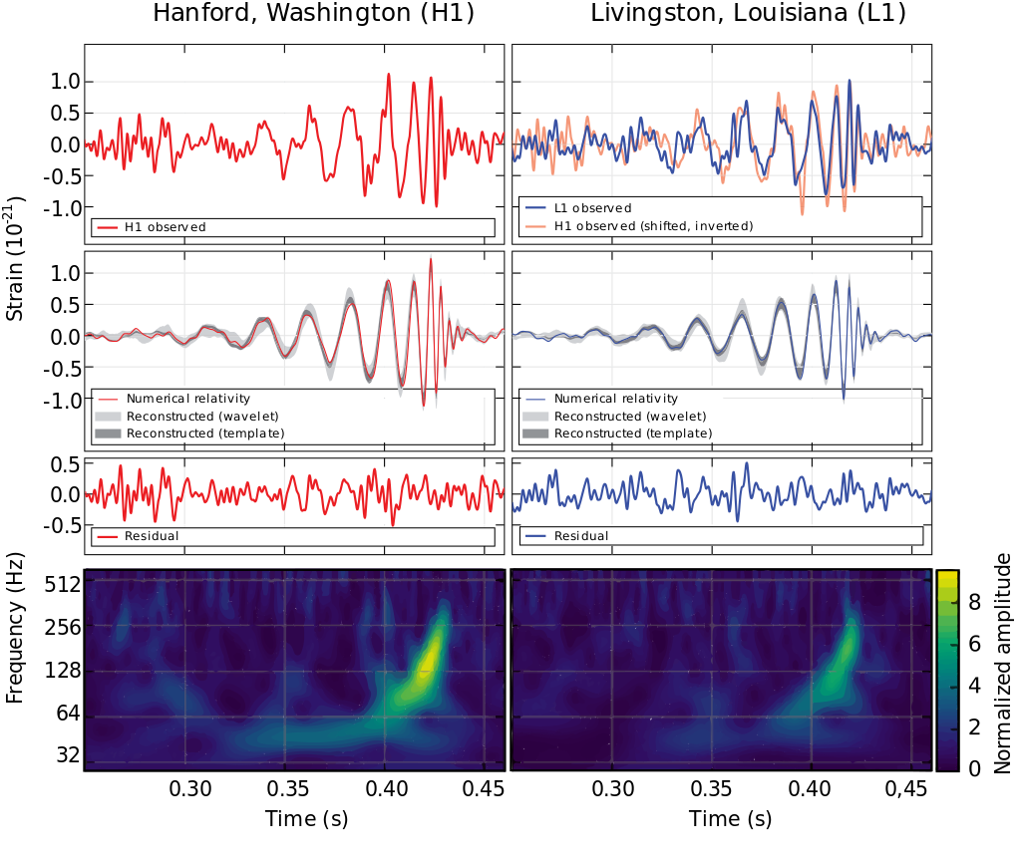}  & \hspace{1cm}&\includegraphics[height=6.5cm]{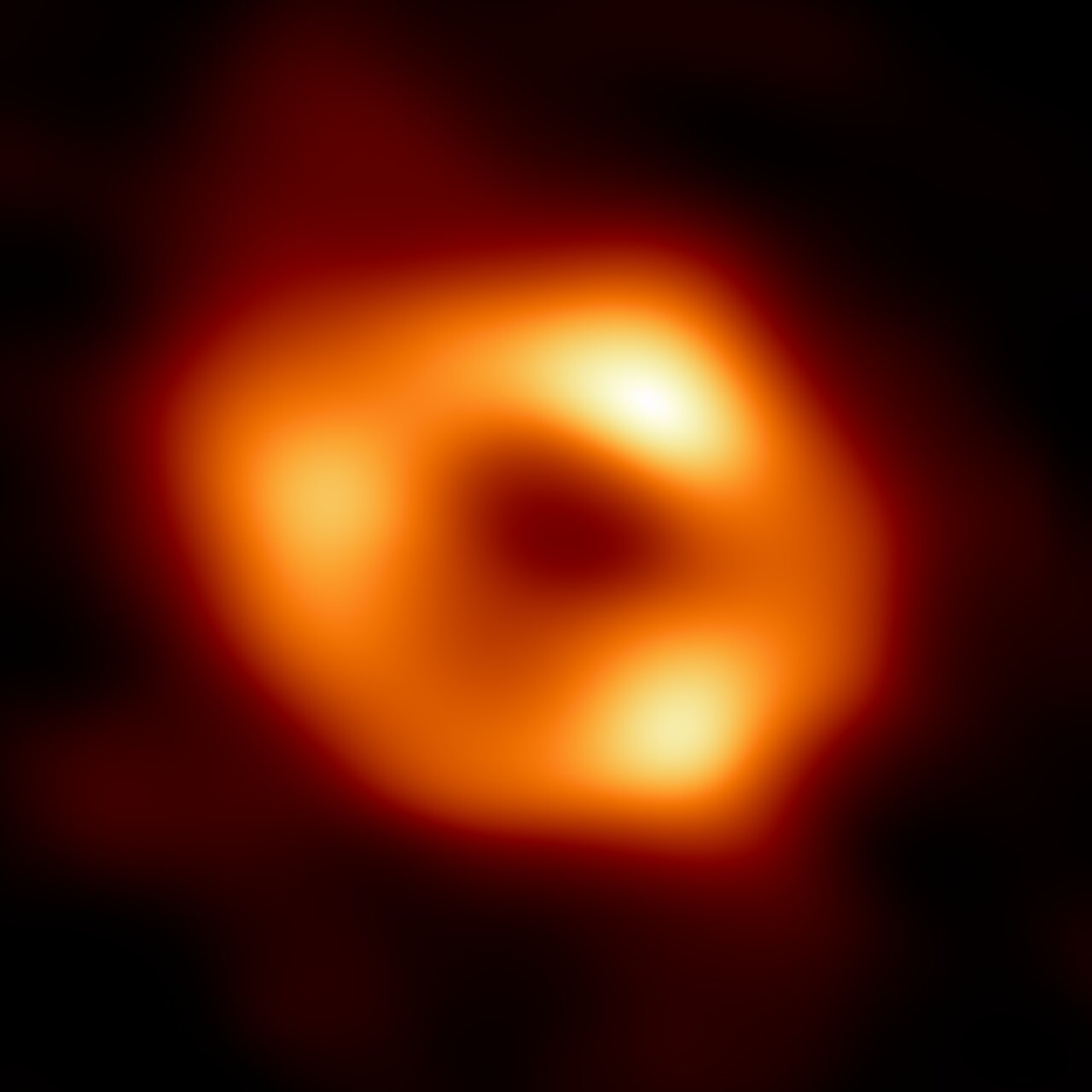} \\
        (a) & & (b)
    \end{tabular}
    \caption{Kerr black holes in the sky? (a) GW150914, the first detection of the coalescence of a binary black hole system by the LIGO collaboration \cite{collinson1976}; (b) the shadow of Sagittarius A$^*$, the supermassive black hole lying at the center of our galaxy by the Event Horizon Telescope collaboration \cite{EventHorizonTelescope:2022wkp}.}
    \label{fig:kerr_in_the_sky}
\end{figure}
}

\lettrine[lines=2]{T}{his} first part will be devoted to the study of geodesic motion in Kerr spacetime. Chapter \ref{chap:kerr} will describe the basics features of Kerr metric, discuss its geodesic equations and provide their formal solutions. We then turn to the discussion of geodesic motion in the near-horizon region of highly spinning black holes, which exhibits an enhanced symmetry group. Chapter \ref{chap:hamilton_kerr} then examines again the problem of geodesic motion in Kerr spacetime, tackling it from the perspective of Hamiltonian mechanics. This point of view will enable us to obtain in an easy way some fundamental results, including (i) the proof of complete integrability of Kerr geodesic motion, (ii) the derivation of the standard form of Kerr geodesic equations through solving the associated Hamilton-Jacobi equation and (iii) the formulation of equations of motion in terms of action-angle variables. Finally, in Chapter \ref{chap:classification}, we will establish an exhaustive and comprehensive classification of polar geodesic motion in generic Kerr spacetime, as well as radial motion in the near-horizon region of high spin Kerr black holes. Explicit solutions to the geodesic equations will also be provided. Before we start, this introduction briefly summarizes the history and the state of the art of each of these topics.

\subsubsection{Kerr geometry}

The exact solutions to Einstein Equations describing static black holes (namely, the Schwarzschild and Reissner-Nordström metrics) were discovered very quickly after the birth of General Relativity in 1915. However, it took nearly fifty years to discover a solution representing a rotating black hole, more accurate for depicting the astrophysical black holes that surrounds us. This solution is the Kerr metric, due to the New Zealand mathematician R. Kerr in 1963 \cite{Kerr:1963ud}. It would be too long to include here a detailed review of the discovery of this solution and of all the subsequent developments it has generated. The interested reader should consult fruitfully the personal reminiscences of R. Kerr himself \cite{Kerr:2004zz,Kerr:2007dk}, as well as the very nice historical and scientific review written by S. Teukolsky \cite{Teukolsky:2014vca}.

The exceptional features of Kerr solution come from the existence of uniqueness theorems \cite{Teukolsky:2014vca} that give a fundamental importance to Kerr metric within the realm of General Relativity. Actually, it is the most generic solution of Einstein Equations in vacuum which is stationary and asymptotically flat. Kerr metric is therefore the stationary state predicted by General Relativity towards which are believed to tend all the massive stars collapsing into black holes, as well as the mergers of extremely compact objects (black holes and/or neutrons stars) that can be observed in our Universe, see Fig. \ref{fig:kerr_in_the_sky}. It therefore provides an unique possibility for testing Einstein's theory within the strong field regime. 

\subsubsection{Geodesics in Kerr spacetime}

There are at least two motivations for studying geodesic motion in Kerr spacetime. First, null geodesics (together with the modeling of light sources) underpin the field of black hole imaging \cite{Bardeen:1972fi,Luminet:1979nyg,Falcke:1999pj,Vazquez:2003zm,James:2015yla,Luminet:2019hfx,Gralla:2019xty,Gralla:2019drh}, which has recently become an observational science \cite{Akiyama:2019cqa}. Second, as already discussed in the main introduction, timelike geodesics provide the zeroth-order motion of binary systems in the perturbative small-mass-ratio expansion, which leads in the adiabatic approximation to the leading-order gravitational waveforms of extreme-mass-ratio inspirals (EMRIs) \cite{1973ApJ...185..635T,Sasaki:1981sx,Ryan:1995zm,Finn:2000sy}. As we have seen, they are two directions in which this zeroth order geodesic motion can be refined in order to obtain a more accurate description of EMRIs dynamics: including gravitational self-force effects \cite{poisson_2004,Pound:2015tma,Pound:2021qin}, which are responsible for the dissipative character of the dynamics, finally leading to the coalescence of the system. Another refinement appearing at the same order in the small mass ratio expansion is the inclusion of corrections due to the finite size nature of the secondary. In realistic situations, the latter is not a point particle but a compact object that possesses some internal structure, which can be accurately described by the means of a tower of multipoles (mass, spin, quadrupole\ldots). In astrophysically realistic situations, the appearance of this structure can be modelled by adding perturbative corrections to the geodesic dynamics. Because this subject will be the main focus of the subsequent parts of this thesis, a deep understanding of geodesic motion in Kerr spacetime appears to be of prime importance.

The study of timelike and null geodesics of the Kerr metric has a long history which is still ongoing \cite{carter68,Wilkins:1972rs,Bardeen:1972fi,Chandrasekhar:1984siy,Rauch:1994aa,Neill:1995aa,Schmidt:2002qk,Mino:2003yg,Vazquez:2003zm,Kraniotis:2005zm,Dexter:2009fg,Fujita:2009bp,Kraniotis:2010gx,Hackmann:2015vla,Porfyriadis:2016gwb,Compere:2017hsi,Kapec:2019hro,Gralla:2019ceu,Rana:2019bsn,Stein:2019buj,Teo:2020sey,Compere:2020eat,Compere:2021bkk}, and is briefly summarized in the introduction of \cite{Compere:2021bkk}. Two remarkable benchmarks are the discovery of the fourth constant of motion by B. Carter in 1968 \cite{carter68}, which allowed the separation of the geodesic equations and the derivation of closed-form analytical solutions for bound geodesic motion by R. Fujita and W. Hikida \cite{Fujita:2009bp}.

In recent years, the community's efforts were mostly directed towards topics directly useful for the modelling of extreme mass-ratio inspirals (EMRIs) dynamics, such as understanding the location of the innermost stable spherical orbit (ISSO) in Kerr spacetime \cite{Stein:2019buj,Compere:2020eat} or obtaining analytic expressions for plunging geodesics \cite{Dyson:2023fws}. Another program that has been completed recently is the construction of a classification of all the possible forms that geodesic motion can exhibit in Kerr spacetime. This classification was initiated by the one of near-horizon geodesics \cite{Compere:2017hsi,Kapec:2019hro,Compere:2020eat}. It has succeeded into providing a complete classification of the polar geodesic motion \cite{Kapec:2019hro,Compere:2020eat} and, more recently, of generic radial motion \cite{Compere:2021bkk}.

\subsubsection{Action-angle formulation} 
Kerr bounded geodesic motion is well-known to be tri-periodic, in its radial, polar and azimutal directions. A formulation that directly reflects this behaviour is the action-angle formalism of Hamiltonian mechanics, which was applied to Kerr spacetime for the first time in the early 2000's \cite{Schmidt:2002qk}. The unfamiliar reader will find an introduction to action-angle formalism in the classical textbook of Goldstein \cite{Goldstein2001}, together with a self-contained introduction to Hamiltonian mechanics and Hamilton-Jacobi theory. Arnold's classical text \cite{arnold1989mathematical} reviews the fundamentals of the symplectic formulation of Hamilton's mechanics and provides the proof of the central result known as the \textit{Liouville-Arnold theorem}. 
        
However, as will be review in Chapter \ref{chap:hamilton_kerr}, the non-compactness of level sets for geodesic motion in Kerr disables us to use directly these results. The generalized theorem for setting up action-angle formalism for non-compact level sets was developed in \cite{Fiorani_2003}. A brief description of this result in the context of EMRIs evolution (on which the present review is heavily based) can be found in E. Flanagan and T. Hinderer's classical paper \cite{Hinderer:2008dm}. Action-angle variables description of Kerr geodesics was initiated by W. Schmidt who derived explicit expressions for the fundamental frequencies and the action variables \cite{Schmidt:2002qk}. Coherent mathematical foundations were provided later by E. Flanagan and T. Hinderer \cite{Hinderer:2008dm}. This formulation of Kerr bounded geodesic motion lies at the heart of the two timescale analysis for describing self-forced EMRIs dynamics \cite{Hinderer:2008dm}. See also \cite{Pound:2021qin} for a recent review of the field.

\subsubsection{Near-horizon geometries}

The spin $a$ of a Kerr black hole admits a maximal bound, given by its mass $M$ ($a^2\leq M^2$). In the extremely high-spin limit, a throat-like geometry possessing a conformal $\textsf{SL}(2,\mathbb R)$ symmetry shows up close to the event horizon of the hole \cite{Bardeen:1999px,Amsel:2009ev,Dias:2009ex,Bredberg:2009pv}. This leads to the appearance of very peculiar physics in this region. The first studies of geodesic motion in the high-spin near-horizon region were restricted either to equatorial orbits \cite{Porfyriadis:2014fja,Hadar:2014dpa,Hadar:2015xpa,Gralla:2015rpa,Hadar:2016vmk,Compere:2017hsi,Hod:2017uof}, to specific orbits \cite{AlZahrani:2010qb}, or to parametrically generic geodesics \cite{Kapec:2019hro} that discard relevant measure-zero sets in parameter space such as the separatrix between bound and unbound motion. Complete classifications of geodesics in the high-spin near-horizon Kerr region were then obtained in recent years, as well as the explicit solutions of the related equations of motion \cite{Kapec:2019hro,Compere:2020eat}.  

It is known since many years that any null orbit that enters or leaves the near-horizon region has a polar motion bounded by the minimal angle $\cos^2\theta_{\text{\text{min}}} = 2 \sqrt{3}-3$ ($47^\circ \lessapprox \theta \lessapprox 133^\circ$), which corresponds to the polar inclination of the velocity-of-light surface in the near-horizon and high-spin limit \cite{AlZahrani:2010qb,Porfyriadis:2016gwb}. This property was recently proven to hold also for any timelike geodesic \cite{Compere:2020eat}. The polar motion is more restricted for the innermost bound spherical orbits (IBSOs): $\cos^2\theta_{\text{\text{min}}} = 1/3$ ($55^\circ \lessapprox \theta \lessapprox 125^\circ$) \cite{Hod:2017uof,Stein:2019buj} and even more restricted for the innermost stable spherical orbits (ISSOs): $\cos^2\theta_{\text{\text{min}}} = 3-2 \sqrt{2}$ ($65^\circ \lessapprox \theta \lessapprox 115^\circ$), as independently shown in \cite{Stein:2019buj,Compere:2020eat}.

Conformal symmetry in the near-horizon high spin Kerr geometry leads to potentially observable signatures if such high spin black holes are realized in nature. The behavior of null geodesics on the image of an extremely spinning Kerr black hole leads to the NHEKline \cite{Bardeen:1972fi,Gralla:2017ufe} and to specific polarization whorls \cite{Gates:2018hub}. Gravitational waveforms on adiabatic inspirals lead to exponentially decaying tails at fixed oscillation frequencies with amplitudes suppressed as $(1-a^2/M^2)^{1/6}$ \cite{Porfyriadis:2014fja,Gralla:2016qfw}, while plunging trajectories lead to impact-dependent polynomial quasinormal ringing with a power ranging from inverse time to square root of inverse time \cite{Compere:2017hsi}. It was shown in \cite{Compere:2017hsi} that conformal symmetry together with a discrete symmetry leads to equivalence classes of equatorial timelike geodesics with circular orbits as distinguished representatives. This allows to simplify the computation of Teukolsky waveforms by applying conformal transformations to the seed circular waveform \cite{Hadar:2014dpa,Compere:2017hsi}. In \cite{Compere:2020eat}, we showed that -- in whole generality -- conformal symmetry and discrete symmetries lead to equivalence classes which each admit spherical orbits as distinguished representatives. For orbits with an angular momentum lower than the ISSO one, no spherical orbit exists but a ``complex spherical orbit'' exists that generates the equivalence class. Such a complex spherical orbit can be used as a seed and complexified conformal transformations allow to reach all real subcritical geodesics. 

\chapter[Kerr Geometry and its Geodesics]{Kerr Geometry\\and its Geodesics}\label{chap:kerr}


The main goal of this chapter is to set the stage for all the forthcoming developments of this thesis, by reviewing the main features of Kerr spacetime and of its geodesics and setting on the fly numerous notations and conventions that will be extensively used latter on. We will end by discussing a side topic, the near-horizon limits of (near-)extremal Kerr black holes, whose geodesic radial classification shall deserve an extended analysis in Chapter \ref{chap:classification}.

\section{Kerr metric and its main features}

One of the most widely used coordinate system for expressing the Kerr metric \cite{Kerr:1963ud} are the   \defining{Boyer-Lindquist coordinates} $(t,r,\theta,\varphi)$ in which Kerr's solution reads \cite{Wald:1984rg,Misner1973,carroll2003spacetime}
\boxedeqn{
      \begin{split}
      \dd s^2&=-\frac{  \Delta(r) }{\Sigma(r,\cos\theta)}\qty(  \dd t-a\sin^2\theta  \dd\varphi)^2+\Sigma(r,\cos\theta)\qty(\frac{  \dd r^2}{  \Delta(r) }+  \dd\theta^2)\\
      &\quad+\frac{\sin^2\theta}{\Sigma(r,\cos\theta)}\qty[\qty(r^2+a^2)  \dd\varphi-a  \dd t]^2,\label{kerr_metric}
      \end{split}
    }{Kerr metric}
with
\begin{align}
      \Delta(r) \triangleq r^2-2Mr+a^2,\qquad \Sigma(r,\cos\theta)\triangleq r^2+a^2\cos^2\theta.
\end{align}
Eq. \eqref{kerr_metric} actually provides us with a two-parameters family of solutions to the Einstein equations, that we will denote $\textsf{Kerr}\qty(M,a)$. Using \textit{e.g.} Komar integrals, one can show that $M$ can be interpreted as the mass of the black hole, whereas $a$ is its angular momentum per unit of mass, $a=J/M$ \cite{carroll2003spacetime}. The interested reader can find a readable derivation of the metric Eq. \eqref{kerr_metric} in Carter's contribution to \textit{Les Houches} proceedings \cite{Carter:1973rla}.

For numerical symbolic evaluation, it is often useful to get rid of the trigonometric functions appearing in Eq. \eqref{kerr_metric} by using the variable\footnote{Actually, depending on the context, we will sometimes define $z\triangleq\cos\theta$ or $z\triangleq\cos^2\theta$. The former is well-suited when dealing with the equations of motion, which are better expressed in terms of $\cos\theta$ because it allows to keep track of the direction $\pm_\theta$ of the motion, whereas the latter is mostly useful for establishing the classification of polar geodesic motion, since the corresponding potential is a function of $\cos^2\theta$.} $z\triangleq\cos\theta$ instead of $\theta$. In terms of the coordinates $(t,r,z,\varphi)$, the non-vanishing components of the Kerr metric read
\begin{align}
g_{tt}&=-\frac{  \Delta(r) }{\Sigma(r,z)},\qquad
g_{rr}=\frac{\Sigma(r,z)}{  \Delta(r) },\qquad
g_{zz}=\frac{\Sigma(r,z)}{1-z^2},\nonumber\\
g_{\varphi\varphi}&= \frac{1-z^2}{\Sigma(r,z)}\qty[\qty(r^2+a^2)^2-a^2  \Delta(r) \qty(1-z^2)],\quad
g_{t\varphi}=g_{\varphi t}=\frac{2aMr\qty(z^2-1)}{\Sigma(r,z)}
\end{align}
whereas the components of the inverse metric are
\begin{align}\label{inverse_metric}
    g^{tt}&=-1-\frac{2Mr\qty(r^2+a^2)}{\Sigma(r,z)  \Delta(r) },\qquad
    g^{rr}=\frac{  \Delta(r) }{\Sigma(r,z)},\qquad
    g^{zz}=\frac{1-z^2}{\Sigma(r,z)},\nonumber\\
    g^{\varphi\varphi}&=\frac{1}{\Sigma(r,z)}\qty(\frac{1}{1-z^2}-\frac{a^2}{\Delta(r)}),\qquad
    g^{t\varphi}=g^{\varphi t}=-\frac{2aMr}{\Sigma(r,z)\Delta(r)}.
\end{align}
We finally notice that the determinant of the metric is simply
\begin{align}
    g\triangleq  \det g_{\mu\nu}=-\Sigma(r,z)=-\qty(r^2+a^2z^2).
\end{align}

\subsection{Existence and uniqueness of Kerr solution}

The power of the Kerr solution is mathematically stated through the \defining{Carter-Robinson theorem} \cite{Mazur:1986ki}:
\begin{theorem}[Carter-Robinson]
Any asymptotically flat, static and axisymmetric solution to vacuum Einstein equations which is non-singular on and outside its event horizon is a member of $\textsf{Kerr}\qty(M,a)$.
\end{theorem}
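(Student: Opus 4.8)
The plan is to follow the classical route to black-hole uniqueness: reduce the four-dimensional vacuum equations to a two-dimensional elliptic boundary value problem, exhibit $\textsf{Kerr}(M,a)$ as a solution of that problem, and then invoke a divergence identity to show the solution is unique once the boundary data (mass and angular momentum) are fixed. \textbf{First, the reduction to two dimensions.} One works on the domain of outer communication, which carries the stationary Killing field $k^\mu$ and the axial Killing field $m^\mu$ with $[k,m]=0$ (in the merely stationary case the axial field and the Killing-horizon property of the boundary are supplied by Hawking's rigidity theorem). Introducing Weyl--Lewis--Papapetrou coordinates $(t,\varphi,\rho,z)$ adapted to the two commuting symmetries, the equations $R_{\mu\nu}=0$ collapse: $\rho$ becomes a globally defined harmonic function, and the remaining content is packaged into a single complex Ernst potential $\mathcal{E}(\rho,z)$ obeying the Ernst equation $(\mathrm{Re}\,\mathcal{E})\,\Delta\mathcal{E}=(\nabla\mathcal{E})\cdot(\nabla\mathcal{E})$ on the half-plane $\{\rho>0\}$ --- an axially symmetric harmonic-map equation into the hyperbolic plane $\mathbb{H}^2\cong SU(1,1)/U(1)$.

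\textbf{Second, boundary data and the Kerr family.} The structure of the domain of outer communication dictates the boundary conditions for $\mathcal{E}$: the rotation axis and the event horizon (which, under the stated asymptotic-flatness and regularity hypotheses, must be connected) map to adjacent segments of $\{\rho=0\}$, the horizon being a finite ``rod'' whose length is governed by its area and surface gravity; absence of conical singularities on the axis, the Killing-horizon condition on the rod, and asymptotic flatness as $\rho^2+z^2\to\infty$ then fix the data completely in terms of the ADM mass $M$ and angular momentum $J=Ma$. Writing $\textsf{Kerr}(M,a)$ with $a^2\le M^2$ in prolate spheroidal coordinates, I would check directly that its Ernst potential solves this boundary value problem and that, as $(M,a)$ ranges over the admissible set, the induced data exhaust all possibilities, the extremal case $a^2=M^2$ appearing as a degenerate rod of vanishing length.

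\textbf{Third, uniqueness.} Given two solutions $\mathcal{E}_1,\mathcal{E}_2$ with the same boundary data, I would form Robinson's algebraic identity --- equivalently the Mazur identity for the difference of the two $SU(1,1)/U(1)$ coset representatives, or Bunting's convexity estimate for harmonic maps into $\mathbb{H}^2$ --- a pointwise expression whose integral over $\{\rho>0\}$ reduces, after integration by parts, to a boundary integral that vanishes by the conditions of the previous step, while the bulk integrand is a manifestly non-negative quantity built from $|\mathcal{E}_1-\mathcal{E}_2|^2$ and its gradient. Non-negativity then forces $\mathcal{E}_1\equiv\mathcal{E}_2$, hence the metrics agree on the domain of outer communication, and analytic continuation propagates the identification to the whole spacetime, establishing membership in $\textsf{Kerr}(M,a)$. \textbf{The hard part} will not be the algebra of the Ernst equation but the global and analytic inputs that make the reduction legitimate: the rigidity and analyticity results needed to pass to the two-Killing-vector setting, the proof that the horizon is connected and (generically) non-degenerate so that one is dealing with a single rod rather than a multi-rod configuration, the regularity analysis at the ``corner'' where the horizon rod meets the axis, and the careful verification that every boundary term in the divergence identity genuinely vanishes given the prescribed asymptotics and axis behavior. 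In the strictly static sub-case the conclusion collapses to Schwarzschild $=\textsf{Kerr}(M,0)$ and can be reached more directly by the Israel / Bunting--Masood-ul-Alam argument.
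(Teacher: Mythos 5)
The paper does not actually prove this theorem: it states it and cites the literature (Mazur's review), so there is no in-text argument to compare against. Your outline is, however, a faithful reconstruction of the standard proof strategy for stationary axisymmetric black-hole uniqueness --- Weyl--Lewis--Papapetrou reduction, the Ernst potential as a harmonic map into $SU(1,1)/U(1)$, identification of the boundary data on the axis and horizon rod, and the Robinson/Mazur/Bunting divergence identity forcing two solutions with the same data to coincide. In that sense you have correctly identified the architecture of the result the paper is invoking.

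Two caveats. First, as written your text is a roadmap rather than a proof: every item you park under ``the hard part'' (Hawking rigidity and the analyticity it requires, connectedness and non-degeneracy of the horizon so that one has a single rod, regularity at the axis--horizon corner, and the verification that all boundary contributions to the Mazur/Robinson integral vanish) is precisely where the mathematical substance of Carter--Robinson lives, and none of it is supplied. A referee would not accept the sketch as a proof of the theorem, only as a correct statement of how the proof in the literature is organized. Second, note a mismatch with the statement you are proving: the theorem as printed in the paper hypothesizes a \emph{static} and axisymmetric solution, whereas your argument (correctly) treats the \emph{stationary} axisymmetric case --- a genuinely static solution has vanishing angular momentum and the conclusion would collapse to $\textsf{Kerr}(M,0)=\textsf{Schwarzschild}(M)$ via Israel's theorem, as you observe at the end. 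Your proof is aimed at the theorem the paper presumably intends (stationary, axisymmetric, yielding the full two-parameter family), not at the one it literally states; it would be worth flagging that the hypothesis should read ``stationary'' for the conclusion ``member of $\textsf{Kerr}(M,a)$ with $a$ arbitrary'' to be non-vacuous.
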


The strength of this result can be further enhanced by convoking \textit{Hawking rigidity theorem}: under certain assumptions on the matter fields present in spacetime (and which are verified in vacuum), any stationary solution of Einstein equations is axisymmetric \cite{Hawking:1971vc}. The combination of these two results has stringent consequences if we stay within the realm of General Relativity: Kerr metric is the most generic stationary, asymptotically flat solution to vacuum Einstein Equations that is non-singular on and outside its event horizon. Macroscopic astrophysical objects being expected to be electromagnetically neutral, Kerr solution is therefore largely believed to be the most generic stationary state reached after the gravitational collapse of enough massive stars or the merging of extremely compact objects (black holes and/or neutron stars). Notice that this statement is not a direct consequence of the uniqueness theorems discussed above, a large amount of physics shall be convoked for reaching this conclusion \cite{Misner1973,carroll2003spacetime}.

This prominent status of the Kerr solution largely accounts for the huge, commu-nity-scaled effort put in understanding its properties since its discovery, and corroborates the central role that its study will deserve in the present thesis: in studying the motion of bodies around Kerr black holes, we are \textit{really} saying something about the astrophysical systems involving black holes that surround us.
Another striking feature of Kerr solution is that it has ``no hair'': the entire geometry of the spacetime (that is, the entire gravitational field of the black hole) is only characterized by two parameters, its mass and its spin. Any multipole expansion of the spacetime will lead to a collection of multipole moments whose value is entirely fixed by $M$ and $a$ \cite{Hansen:1974zz}. In our analysis, this will also be what happens when we will consider the test body as a Kerr black hole itself: the form and the coupling strength of its high order multipole moments (quadrupole and higher) will be entirely determined by the knowledge of its linear momentum (monopole) and spin (current-type dipole), see Chapter \ref{chap:quadrupole}.

\subsection{Generic properties of Kerr spacetime}

Before going further, let us have a look at some generic features of the Kerr metric. A first point of interest is to get a bit more intuition about the meaning of Boyer-Lindquist coordinates. In the $M\to 0$ limit, Kerr metric \eqref{kerr_metric} reduces to flat (Minkowski) spacetime expressed in ellipsoidal coordinates \cite{carroll2003spacetime}. Moreover, in the spinless limit $a\to 0$, the geometry reduces to Schwarzschild solution (that is, $\textsf{Kerr}(M,0)=\textsf{Schwarzschild}(M)$) and the Boyer-Lindquist coordinates reduce to standard Schwarzschild ones:
\begin{align}
      \dd s^2\stackrel{a\to 0}{\longrightarrow}-f(r)  \dd t^2+\frac{  \dd r^2}{f(r)}+r^2  \dd\Omega^2,\label{schwarzschild_metric}
\end{align}
with $f(r)\triangleq 1-\frac{2M}{r}$ and $  \dd\Omega^2\triangleq  \dd\theta^2+\sin^2\theta  \dd\varphi^2$. Therefore, $t$ takes the interpretation of the proper time of an asymptotically far away observer, $r$ is the radius to the origin of the spacetime and $(\theta,\varphi)$ are respectively polar and azimutal angles.

As it has been already stated numerous times, Kerr spacetime is a black hole spacetime. It is characterized by two event horizons located at radial distances
\begin{align}
    r_\pm\triangleq M\pm\sqrt{M^2-a^2}\label{event_horizons}
\end{align}
which are known as the \textit{outer} ($r=r_+$) and the \textit{inner} ($r=r_-$) horizons, since $r_+\geq r_-$. In the continuation of this thesis, we shall only be concerned with phenomena occurring in the \textit{exterior} Kerr spacetime, \textit{i.e.} phenomena taking place in the region of spacetime located outside of the outer event horizon, at $r>r_+$.

Finally, in order to avoid the appearance of a naked singularity (and thus to prevent violating the \textit{Cosmic Censorship Conjecture}), the outer horizon radius described by Eq. \eqref{event_horizons} should be a real number. This enforces the magnitude of the black hole spin to be bounded by its mass,
\begin{align}
    \abs{a}\leq M.
\end{align}
The special case of a maximally spinning black hole (also referred to as an \defining{extremal black hole}) $\abs{a}=M$ deserves special attention, since the structure of the spacetime will dramatically change close to the event horizon of the hole, as it will be extensively discussed in Section \ref{sec:near_horizon}.

\subsection{Symmetries: Killing vectors and tensors}

In this thesis, one of our main concerns with respect to the Kerr metric is the study of its symmetries. This is their existence that will allow to solve the equations of motion for test bodies. They are also closely related to the existence of conserved quantities along the motion and to its integrability properties, as will be extensively discussed in the continuation of this text.

As always in differential geometry, isometries of the spacetime will be encoded under the form of Killing vectors. These symmetries are \textit{explicit} symmetries of the metric. However, Kerr geometry also possesses other symmetries -- referred to as \defining{hidden symmetries} -- which are generated by higher rank Killing-type tensorial objects. Even if these symmetries are not ``true'' symmetries of the metric, they play a fundamental role for studying the motion in Kerr spacetime since they also provide us with conserved quantities and are strongly related to the separability of the Hamilton-Jacobi equation. Table \ref{tab:killing_quantities} summarizes the defining equations for Killing vectors, Killing and Killing-Yano rank 2 tensor as well as their conformal counterparts. These objects are not all independent one of another, but the existence of one type of Killing object often imply the existence of others. Actually,
\begin{proposition}\label{prop:killing_quantities}
The following statements hold:
\begin{enumerate}
    \item If $Y_{\mu\nu}$ is a Killing-Yano tensor, then its Hodge dual $Y^*_{\mu\nu}\triangleq\frac{1}{2}\epsilon_{\mu\nu\rho\sigma}Y^{\rho\sigma}$ is a conformal Killing-Yano tensor.
    \item If $Y_{\mu\nu}$ is a Killing-Yano tensor, then $K_{\mu\nu}\triangleq Y\tdu{\mu}{\lambda}Y_{\nu\lambda}$ is a Killing tensor.
    \item If $Y_{\mu\nu}$ is a conformal Killing-Yano tensor, then $K_{\mu\nu}\triangleq Y\tdu{\mu}{\lambda}Y_{\nu\lambda}$ is a conformal Killing tensor.
    \item If $Y_{\mu\nu}$ is a Killing-Yano tensor or a conformal Killing-Yano tensor, then $K_{\mu\nu}\triangleq Y\tdu{\mu}{\lambda}Y^*_{\nu\lambda}$ is a conformal Killing tensor.
    \item In Ricci-flat spacetimes ($R_{\mu\nu}=0$), if $Y_{\mu\nu}$ is a conformal Killing-Yano tensor, then $\xi^\mu=-\frac{1}{3}\nabla_\lambda Y^{\lambda\mu}$ is a Killing vector.
\end{enumerate}
\end{proposition}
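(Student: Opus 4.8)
\noindent\emph{Proof strategy.} All five statements have the same flavour: the defining property of each Killing-type object (see Table~\ref{tab:killing_quantities}) says that the totally symmetric part of a covariant derivative either vanishes or is a pure-trace term $g_{(\cdots}\alpha_{\cdots)}$. So the uniform plan is to differentiate the candidate object, symmetrise over the relevant free indices, and feed in the hypothesis. For items~2--4, where the object is the quadratic combination $K_{\mu\nu}=Y\tdu{\mu}{\lambda}Y_{\nu\lambda}$ (or its dual variant), I would first note that $K_{\mu\nu}$ is symmetric, then apply Leibniz, $\nabla_\rho K_{\mu\nu}=(\nabla_\rho Y\tdu{\mu}{\lambda})Y_{\nu\lambda}+Y\tdu{\mu}{\lambda}\nabla_\rho Y_{\nu\lambda}$, and take the cyclic symmetrisation $\nabla_{(\rho}K_{\mu\nu)}$. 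In the Killing--Yano case (item~2) one uses that $\nabla_a Y_{bc}$ is totally antisymmetric, so that after symmetrisation the undifferentiated factors $Y\tdu{\nu}{\sigma}$, $Y\tdu{\mu}{\sigma}$, $Y\tdu{\rho}{\sigma}$ each multiply a combination $\nabla_\rho Y_{\mu\sigma}+\nabla_\mu Y_{\rho\sigma}$ that vanishes identically; hence $\nabla_{(\rho}K_{\mu\nu)}=0$. For the conformal version (item~3) I would substitute the decomposition $\nabla_\rho Y_{\mu\sigma}=\nabla_{[\rho}Y_{\mu\sigma]}-g_{\rho\mu}\xi_\sigma+g_{\rho\sigma}\xi_\mu$ with $\xi_\sigma=-\tfrac13\nabla^\lambda Y_{\lambda\sigma}$: the totally antisymmetric pieces still cancel pairwise, the genuinely non-trace terms $Y_{\nu\rho}\xi_\mu$ etc.\ cancel once the three contractions are added, and what survives is $\nabla_{(\rho}K_{\mu\nu)}=-2\,g_{(\rho\mu}\alpha_{\nu)}$ with $\alpha_\nu=\xi_\sigma Y\tdu{\nu}{\sigma}$ --- the conformal Killing-tensor form.

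Items~1 and~4 are four-dimensional statements resting on Hodge-duality identities for $\epsilon_{\mu\nu\rho\sigma}$ rather than on differential manipulations of the hypothesis. For item~1 I would write $\nabla_\lambda Y^*_{\mu\nu}=\tfrac12\epsilon_{\mu\nu\rho\sigma}\nabla_\lambda Y^{\rho\sigma}$ (using $\nabla\epsilon=0$); since $\nabla_\lambda Y_{\rho\sigma}$ is totally antisymmetric it equals $\epsilon_{\lambda\rho\sigma\tau}w^\tau$ for a vector $w^\tau=-\tfrac13\nabla_\kappa Y^{*\kappa\tau}$ (the Hodge dual of $\dd Y$), and contracting the two epsilons via $\epsilon_{\mu\nu\rho\sigma}\epsilon_\lambda{}^{\rho\sigma}{}_\tau=-2(g_{\lambda\mu}g_{\tau\nu}-g_{\lambda\nu}g_{\tau\mu})$ collapses the right-hand side to $-g_{\lambda\mu}w_\nu+g_{\lambda\nu}w_\mu$, which is exactly the (closed) conformal Killing--Yano equation. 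For item~4 I would invoke the purely algebraic identity, valid only in four dimensions, $Y\tdu{\mu}{\lambda}Y^*_{\nu\lambda}=\tfrac14 g_{\mu\nu}\,Y_{\alpha\beta}Y^{*\alpha\beta}$ (the single-two-form avatar of $F\cdot\tilde F$ familiar from electromagnetism): this exhibits $K_{\mu\nu}$ as a scalar multiple of the metric, which is automatically a conformal Killing tensor, so here the (conformal) Killing--Yano hypothesis enters only through making $Y^*$ meaningful.

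Item~5 is the one that genuinely needs the Ricci-flat hypothesis, and this is where I expect the work to be. The plan: start from the conformal Killing--Yano equation $\nabla_\mu Y_{\nu\lambda}=\nabla_{[\mu}Y_{\nu\lambda]}-g_{\mu\nu}\xi_\lambda+g_{\mu\lambda}\xi_\nu$, apply $\nabla^\lambda$, and compare with the same object obtained by first commuting the derivatives, $\nabla^\lambda\nabla_\mu Y_{\nu\lambda}=\nabla_\mu\nabla^\lambda Y_{\nu\lambda}+g^{\lambda\sigma}[\nabla_\sigma,\nabla_\mu]Y_{\nu\lambda}=3\nabla_\mu\xi_\nu+C_{\mu\nu}$, where the Ricci identity (paper convention $[\nabla_\alpha,\nabla_\beta]A_\mu=-R\tud{\lambda}{\mu\alpha\beta}A_\lambda$) together with the symmetries of Riemann reduces the curvature term to $C_{\mu\nu}=-R_{\rho\nu\lambda\mu}Y^{\rho\lambda}+R\tud{\rho}{\mu}Y_{\nu\rho}$. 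Two observations then close the argument: $\nabla^\lambda\nabla_{[\mu}Y_{\nu\lambda]}$ is antisymmetric in $\mu\nu$ (the divergence of a totally antisymmetric tensor), and $R_{\rho\nu\lambda\mu}Y^{\rho\lambda}$ is also antisymmetric in $\mu\nu$ (by the pair symmetry of Riemann and the antisymmetry of $Y$); symmetrising in $\mu\nu$ leaves $2\nabla_{(\mu}\xi_{\nu)}=-g_{\mu\nu}\nabla_\lambda\xi^\lambda-R\tud{\rho}{(\mu}Y_{\nu)\rho}$, and since the double divergence $\nabla^\mu\nabla^\lambda Y_{\lambda\mu}$ of any antisymmetric tensor vanishes identically (its curvature terms collapsing to $R^{\mu\nu}Y_{\mu\nu}=0$), one obtains $\nabla_\lambda\xi^\lambda=0$ and finally $\nabla_{(\mu}\xi_{\nu)}=-\tfrac12 R\tud{\rho}{(\mu}Y_{\nu)\rho}$, which vanishes precisely because $R_{\mu\nu}=0$.

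The main obstacle is the sign-and-normalisation bookkeeping in item~5: one must correctly separate the several curvature contractions produced by the Ricci identity into their $\mu\nu$-symmetric and $\mu\nu$-antisymmetric parts --- getting this wrong either makes $\xi$ look Killing without any curvature hypothesis (too strong) or hides that only the Ricci contraction survives --- and one must keep the $\tfrac13=\tfrac1{n-1}$ factors, the $(-+++)$ signature, and the Riemann/$\epsilon$ conventions mutually consistent throughout. As a cheap final check I would evaluate the normalisation on the principal conformal Killing--Yano tensor of Kerr, whose divergence is known to reproduce (a multiple of) the stationary Killing vector $\partial_t$.
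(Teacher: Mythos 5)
Your proposal is correct, and for the two items the paper actually proves (items 1 and 5, in Section \ref{sec:KY}) it follows essentially the same route: item 1 via the total antisymmetry of $\nabla_\alpha Y_{\beta\gamma}$ and an $\epsilon$-contraction yielding $\nabla_\alpha Y^*_{\beta\gamma}=-2g_{\alpha[\beta}\xi_{\gamma]}$, and item 5 via commuting derivatives in $\nabla_\mu\xi_\nu$ and symmetrising down to the reduced integrability condition $\nabla_{(\mu}\xi_{\nu)}\propto R\tud{\rho}{(\mu}Y_{\nu)\rho}$, which the Ricci-flat hypothesis kills. For items 2--4 the paper supplies no proof; your Leibniz-plus-symmetrisation arguments are the standard ones and are sound, and your observation that $Y\tdu{\mu}{\lambda}Y^*_{\nu\lambda}=\mathcal Z\,g_{\mu\nu}$ (the four-dimensional $F\cdot{}^*F$ identity, with $\mathcal Z$ as in Eq.~\eqref{defZ}) trivialises item 4 is a clean shortcut consistent with the paper's later use of that identity.
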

The proofs of some of these assertions -- as well as many other properties of spacetimes admitting Killing-Yano tensors -- will be provided in Section \ref{sec:KY}. We now list the various Killing-type objects that appear in Kerr spacetime.

\subsubsection{Explicit symmetries} Since Kerr spacetime is stationary, it admits a timelike Killing vector
\begin{align}
    \xi\triangleq\partial_t.
\end{align}
Its axisymmetric character leads to the existence of another Killing vector, which is simply
\begin{align}
    \eta\triangleq\partial_\varphi.
\end{align}
Notice that there exists also a discrete $\mathbb Z_2$ symmetry -- that we shall refer to as the $\uparrow\!\downarrow$-flip -- which flips the sign of the time and the azimutal coordinates:
\begin{align}
    \uparrow\!\downarrow:\qquad t\to-t,\qquad\varphi\to-\varphi.
\end{align}
Physically, it corresponds to ``rewind'' the movie in time, thus considering a black hole of opposite spin flowing backwards in time. This symmetry will play an important role when we will consider conformal mapping between the near horizon geodesics of highly spinning Kerr black holes, see Chapter \ref{chap:classification}.

\subsubsection{Hidden symmetries}
In Petrov's algebraic classification \cite{Petrov:2000bs}, Kerr spacetime turns out to be of type D, thus possessing two distinct, doubly degenerated, principal null directions that we will denote $l^\mu$ and $n^\mu$. Building a null orthonormal tetrad by adding two (self-conjugated) complex null directions $m^\mu$ and $\bar m^\mu$, the metric can be written in terms of the null tetrad legs as
\begin{align}
    g_{\mu\nu}=-2 l_{(\mu}n_{\nu)}+2 m_{(\mu}\bar m_{\nu)}.\label{null_tetrad_decomposition}
\end{align}
A common choice for the null tetrad is \defining{Kinnersley tetrad} \cite{Kinnersley:1969zza}, which reads, in Boyer-Lindquist coordinates
\begin{subequations}\label{kinnersley}
\begin{align}
    l^\mu&{=}\frac{1}{\Delta}\qty(r^2+a^2,\Delta,0,a),\qquad
    n^\mu{=}\frac{1}{2\Sigma}\qty(r^2+a^2,-\Delta,0,a),\\
    m^\mu&{=}\frac{1}{\sqrt{2}\mathcal R}\qty(ia\sin\theta,0,\sqrt{1-z^2},\frac{i}{\sin\theta}).
\end{align}
\end{subequations}
Here, the scalar quantity $\mathcal R$ is defined as\footnote{Notice that $\mathcal R$ is directly related to the only non-vanishing Kerr's Weyl scalar $\Psi_2$ by $\Psi_2=-\frac{M}{\mathcal{\bar{R}}^3}$.}
\begin{align}
    \mathcal R\triangleq r+ia\cos\theta.
\end{align}

Kerr spacetime admits a Killing-Yano tensor which is advantageously written in terms of the null directions as
\begin{align}
    Y_{\mu\nu}=i\qty(\mathcal R-\bar{\mathcal{R}})l_{[\mu}n_{\nu]}-i\qty(\mathcal R+\bar{\mathcal{R}})m_{[\mu}\bar m_{\nu]}.
\end{align}
In Boyer-Lindquist coordinates, this becomes
\boxedeqn{
    \frac{1}{2}Y_{\mu\nu}   \dd x^\mu \wedge   \dd x^\nu &= a \cos\theta   \dd r \wedge (  \dd t - a \sin^2\theta   \dd \varphi)\nonumber\\
    &\quad+  r \sin\theta   \dd \theta \wedge \left[ ( r^2+a^2)   \dd \varphi - a   \dd t\right].\label{kerr_KY}
}{Kerr Killing-Yano tensor}
This allows us to define a Killing tensor as
\begin{subequations}
    \begin{align}
    K_{\mu\nu}&=Y\tdu{\mu}{\lambda}Y_{\nu\lambda}\\
    &=-\frac{1}{2}\qty(\mathcal R-\bar{\mathcal{R}})^2l_{(\mu}n_{\nu)}+\frac{1}{2}\qty(\mathcal R+\bar{\mathcal{R}})^2m_{(\mu}\bar m_{\nu)}\\
    &=2\Sigma l_{(\mu}n_{\nu)}+r^2 g_{\mu\nu}.\label{kerr_K_tensor}
\end{align}
\end{subequations}
The last equality allow to express the Killing tensor in terms of only manifestly real quantities and is obtain from the second line by using the decomposition Eq. \eqref{null_tetrad_decomposition} of the metric and noticing that $\abs{\mathcal R}^2=\Sigma$. 

Kerr's Killing tensor is often presented as the fundamental quantity related to Kerr spacetime hidden symmetry. Let us stress that, in the light of Proposition \ref{prop:killing_quantities}, we see that (conformal) Killing-Yano tensor appear to be more fundamental objects that Killing tensors, since their existence allow to systematically build a (conformal) Killing tensor, and even a Killing vector in Ricci-flat spacetime. However, the converse is not true: any Killing tensor  cannot be written as the contraction of two Killing-Yano tensors. Collinson \cite{collinson1976} worked out necessary and sufficient conditions under which such a decomposition is realizable, which are satisfied in the Kerr case. 

Regarding test motion in Kerr spacetime, the fundamental nature of the Killing-Yano tensor only shows up when studying the motion of spinning test bodies: at geodesic level, the knowledge of the Killing tensor is sufficient to construct a constant of motion -- the Carter constant -- that will allow the equations of motion to be solvable, as we will discuss in this chapter and in the following one. However, generalizing this Carter constant to include spin effects will explicitly require Killing tensor to be written as the contraction of two Killing-Yano tensors, as we will see in full details in Part \ref{part:conserved_quantities} of this text. Moreover, despite Kerr Killing tensor becomes reducible to combinations of Killing vectors in the Schwarzschild limit $a\to 0$, the Kerr Killing-Yano tensor does not and thus remains a hidden symmetry of Schwarzschild spacetime.

Finally, notice the remarkable identity
\begin{align}
    K^{\mu\nu}\xi_\nu=a\qty(a\xi^\mu+\eta^\mu)\triangleq\tilde\eta^\mu,
\end{align}
which relates the two Kerr Killing vectors thanks to the Killing tensor. The right hand side $\tilde \eta^\mu$ is itself a Killing vector.

\begin{table}[t!]
    \centering
    \begin{tabular}{c|c|c}
         & \textsc{standard} & \textsc{conformal}\\\hline
        \rule{0pt}{15pt}Killing vector & $\nabla_{(\mu} X_{\nu)}=0$ & $\nabla_{(\mu}X_{\nu)}=\frac{1}{4}g_{\mu\nu}\nabla_\rho X^\rho$\\
        \rule{0pt}{15pt}Killing-Yano tensor (antisymmetric) & $\nabla_{(\mu}Y_{\nu)\rho}=0$ & $\nabla_{(\mu}Y_{\nu)\rho}=-2g_{\mu[\nu}\xi_{\rho]}$\\
        \rule{0pt}{15pt}Killing tensor (symmetric) & $\nabla_{(\mu}K_{\nu\rho)}=0$ & $\nabla_{(\mu}K_{\nu\rho)}=K_{(\mu}g_{\nu\rho)}$
    \end{tabular}
    \caption{Ranks 1 and 2 Killing and conformal Killing objects for a $d=4$ manifold. We have here defined $\xi^\mu\triangleq-\frac{1}{3}\nabla_\lambda Y^{\lambda\mu}$ and $K_\mu\triangleq\frac{1}{6}\qty(2\nabla_\lambda K\tud{\lambda}{\mu}+\nabla_\mu K\tud{\lambda}{\lambda})$.}
    \label{tab:killing_quantities}
\end{table}

\section{Geodesic equations}

In this section, we will have a first look at Kerr geodesic equations, depicting how the symmetries discussed above allow to write them as a set of four first order ODEs. We will then briefly show how formal solutions to the geodesic equations can be written. These solutions will be the cornerstone used to discuss classification of Kerr geodesics and their explicit solutions, as we will undertake in Chapter \ref{chap:classification}.

Hereafter, we restrict to timelike geodesics, describing the physical motion of massive, point-wise test particles. This is the relevant problem to study, since it is the ``zeroth order'' approximation for describing the motion of massive compact objects around a Kerr black hole. However, notice that the study of null geodesics -- of prime importance for the fields of ray tracing and black hole imaging -- only deserves minor modifications with respect to the timelike case, see \cite{Kapec:2019hro,Compere:2020eat,Gralla:2019ceu} for more details.

\subsection{Conserved quantities along geodesics}\label{sec:conserved_killing_geodesic}

Let us first introduce a bit of formalism. We denote $z^\mu(\tau)$ the particle's worldline and $v^\mu \triangleq \dv{z^\mu}{\tau}$ its four-velocity. The parameter $\tau$ will always denote the proper time, other parametrizations being usually denoted by the symbol $\lambda$. As usually, one has $v_\mu v^\mu=-1$. The particle's linear momentum is $p_\mu=\mu v_\mu$, with $\mu>0$ the dynamical mass of the particle, which leads to $p_\mu p^\mu=-\mu^2$.

For any vector non-null $X^\mu$, we denote $\hat X^{\mu}\triangleq\frac{X^\mu}{\sqrt{\abs{X_\mu X^\mu}}}$ its normalized version. In particular, one obtains directly
\begin{align}
    \hat p^\mu\triangleq\frac{p^\mu}{\mu}=v^\mu.
\end{align}
This relation between the normalized linear momentum and the four-velocity is of course trivial in this case, where it follows simply from the definitions given above. We can therefore equivalently use the linear momentum or the four-momentum in our computations. However, as we will see in Part \ref{part:spinnin_bodies}, it will acquires a totally new meaning when we shall include spin effects, the latter breaking the parallelism between linear momentum and four-velocity.

A function of the dynamical variables $Q=Q(x^\mu,p_\mu)$ is \defining{conserved} along the worldline $z^\mu(\tau)$ if
\begin{align}
    \dv{Q}{\tau}=0\quad\Leftrightarrow\quad\frac{\text D Q}{\dd\tau}\triangleq v^\lambda\nabla_\lambda Q=0.
\end{align}

In what follows, we will always consider ``proper'' constants of motion, \textit{i.e.} constants normalized by the right power of $\mu$ to make them independent of the particle's mass. It amounts to consider constants of motions ``per unit of test mass'', and will enable us to get rid of the $\mu$ factors appearing in the equations. There appears to be four independent quantities that are conserved along geodesic motion in Kerr spacetime:
\begin{itemize}
    \item The dynamical mass $\mu$, since it is defined as (minus the square of) the norm of the geodesic tangent vector;
    \item As it is well-known, for any Killing vector $X^\mu$, the quantity $p_\mu X^\mu$ is conserved along geodesic motion. In Kerr spacetime, this leads to the two conserved quantities
    \begin{align}
        E_0\triangleq -\hat p_\mu\xi^\mu,\qquad L_0\triangleq \hat p_\mu\eta^\mu,\label{kerr_KV}
    \end{align}
    which respectively take the interpretation of particle's (proper) energy and projection of (proper) angular momentum onto the rotation axis of the black hole.
    \item The existence of the Killing tensor \eqref{kerr_K_tensor} directly implies that
    \begin{align}
        K_0\triangleq K_{\mu\nu} \hat p^\mu \hat p^\nu
    \end{align}
    is conserved along geodesics. For subsequent purposes, it is useful to define the shifted quantity
    \begin{align}
        Q_0 = K_0- (L_0 - a E_0)^2.
    \end{align}
    This is nothing but the celebrated (proper) \defining{Carter constant},\footnote{Notice that there is sometime a confusion in the literature about which of $Q_0$ and $K_0$ is called ``Carter constant''. With respect to the original paper of Carter \cite{carter68}, our $K_0$ is Carter's $\mathcal K/\mu^2$, whereas our $Q_0$ is his $Q/\mu^2$.} which was discovered by Carter \cite{carter68} as the separation constant of the geodesic Hamilton-Jacobi equation in Kerr spacetime, see Section \ref{sec:HJ_geodesic} for a discussion of this approach.
\end{itemize}

The existence of these four constants allows to reduce Kerr timelike geodesic motion to the following set of four first-order ODEs:
\begin{subequations}\label{kerr_geodesics}
\boxedeqn{
\Sigma\,  \dv{ t }{\tilde\tau}&=a\qty(L_0 -a E_0\sin^2\theta)+( r ^2+a^2)\frac{P_0( r )}{  \Delta ( r )},\label{eq:kerr_t}\\
\Sigma\,  \dv{ r }{\tilde\tau}&=\pm_{r}\sqrt{R( r )},\label{eq:kerr_r}\\
\Sigma\,  \dv{\cos\theta}{\tilde\tau}&=\pm_\theta \sqrt{\Theta(\cos^2\theta)},\label{eq:kerr_theta}\\
\Sigma\,  \dv{ \varphi }{\tilde\tau}&=-a E_0+ L_0 \csc^2\theta+a\frac{P_0( r )}{  \Delta ( r )},\label{eq:kerr_phi}
}{Kerr timelike geodesic equations}
\end{subequations}
with $\tilde\tau\triangleq\mu\tau$ and where we have defined
\begin{subequations}
    \begin{align}
P_0( r )&\triangleq E_0( r ^2+a^2)-a  L_0,\\
R( r )&\triangleq P_0^2( r )-  \Delta ( r )\qty( K_0+r ^2),\label{eq:kerr_vr}\\
\Theta(\cos^2\theta)&\triangleq Q_0 \qty(1-\cos^2\theta)+ \cos^2\theta\qty[a^2(E_0^2-1) \qty(1-\cos^2\theta)-  L_0^2].\label{eq:kerr_vtheta}
\end{align}
\end{subequations}
Moreover, one can show that Carter constant $Q_0$ takes the value
\boxedeqn{
    Q_0=\hat p_\theta^2+\cos^2\theta\qty[a^2\qty(1-E_0^2)+\qty(\frac{L_0}{\sin\theta})^2].
}{Carter constant}
Even if all these results can be directly obtained from tricky algebraic manipulations of the conserved quantities, their easiest derivation is obtained through Hamilton-Jacobi formulation, and will be discussed in Section \ref{sec:HJ_geodesic}.

\subsection{Formal solutions}
In a given Kerr geometry $\textsf{Kerr}(M,a)$, a geodesic is fully characterized by the quadruplet of parameters $\qty(\mu \geq 0,E_0 \geq 0, L_0 \in \mathbb R,Q_0 \geq -(L_0 - a E_0)^2)$, its initial spacetime position and two signs, $s_r^i \equiv \pm_{\hat r}|_{\tau = \tau_i}$, $s_\theta^i \equiv \pm_{\theta}|_{\tau = \tau_i}$, that correspond to the signs of the radial and polar velocity at the initial time $\tau_i$. 

The motion can be integrated using Mino time \cite{Mino:2003yg} defined as $ \dd\lambda \triangleq \dd\tilde\tau/\Sigma$ thanks to the property
\begin{align}
  \dd\lambda = \frac{  \dd r}{\pm_{ r}  \sqrt{R( r )} } = \frac{  \dd\cos \theta}{\pm_\theta\sqrt{\Theta(\cos^2\theta)}}. 
\end{align}
We consider a timelike geodesic path linking the initial event $\qty( t _i, r _i,\theta_i, \varphi _i)$ at  Mino time $\lambda_i$ and the final event  $\qty( t _f, r _f,\theta_f, \varphi _f)$ at Mino time $\lambda_f$. The geodesic can be formally integrated as 
\begin{align}
 t(\lambda_f)-{t}(\lambda_i) &= a( L_0  - a  E_0)(\lambda_f-\lambda_i)  +a^2 E_0\,  \qty( T_{\theta}(\lambda_f)-T_{\theta}(\lambda_i) )+T_{ r}(\lambda_f)-T_{ r}(\lambda_i) ,\nonumber \\
 \varphi (\lambda_f) - \varphi (\lambda_i) &= ( L_0 - a E_0) (\lambda_f-\lambda_i) +  L_0  \qty( \Phi_\theta (\lambda_f)-\Phi_\theta (\lambda_i)) + a \qty(\Phi_{r}(\lambda_f)-\Phi_{r}(\lambda_i)) \nonumber
\end{align}
where
\begin{subequations}
    \begin{align}
\lambda &= \sint \frac{  \dd r}{\pm_{ r}  \sqrt{ R( r )} } = \sint \frac{  \dd\cos\theta}{\pm_\theta \sqrt{\Theta(\cos^2\theta)}},\label{eqn:lambda}\\
T_\theta(\lambda) &\triangleq \sint \frac{\cos^2 \theta \,  \dd\cos \theta}{\pm_\theta \sqrt{\Theta (\cos^2 \theta)}}, \qquad \Phi_\theta(\lambda) \triangleq  \sint \frac{   \dd\cos\theta}{\pm_\theta \sqrt{\Theta (\cos^2\theta)}}\qty(\csc^2 \theta-1), \label{eqn:polarIntegrals}\\
T_{r}(\lambda) &\triangleq \sint \frac{(r^2 + a^2)P_0(r) \,  \dd r}{\pm_{r}  \sqrt{R( r )}   \Delta ( r )},\qquad \Phi_{r}(\lambda) \triangleq \sint \frac{P_0(r) \,  \dd r}{\pm_{r}  \sqrt{R( r )}   \Delta ( r )}. 
\end{align}
\end{subequations}
The notation $\sintline$ (already used in \cite{Kapec:2019hro,Gralla:2019ceu}) indicates that the signs $\pm_r$ and $\pm_\theta$ are flipped each time a zero of $R$ and $\Theta$, respectively, is encountered. Since the signs $\pm_r$, $\pm_\theta$ are identical to the signs of $  \dd r$ and $  \dd\cos\theta$, respectively, the integral \eqref{eqn:lambda} is monotonic around each turning point, as it should be in order to define an increasing Mino time $\lambda$ along the geodesic. Note that ${T}_\theta,~\Phi_\theta$ are normalized to be vanishing for equatorial motion. The initial signs $s_r^i \equiv \pm_{\hat r}|_{\lambda = \lambda_i}$, $s_\theta^i \equiv \pm_{\theta}|_{\lambda = \lambda_i}$, as well as the initial spacetime position, are fixed as a part of the specification of the orbit. If we denote by $w(\lambda),m(\lambda)$ the number of turning points in the radial and polar motion, respectively, at Mino time $\lambda$, then as the velocity changes sign at each turning point,
\begin{align}
\pm_r = s_r^i (-1)^w, \qquad \pm_\theta = s_\theta^i (-1)^m.\label{signs}
\end{align}

\section{Near-horizon geodesics of high spin Kerr}\label{sec:near_horizon}

When we start considering the behaviour of geodesics close to the horizon of Kerr black holes whose spin is close to its maximal value (\textit{i.e.} $a^2\to M^2$), a totally unexpected behaviour of Kerr geometry shows up, leading to the appearance of a throat-like geometry near the horizon. For exactly extremal black holes ($a^2=M^2$), the spacetime separates in three independent, geodesically complete spacetimes: the pre-existing Kerr spacetime, and two new spacetimes called \textit{Near Horizon Extremal Kerr} (NHEK) \textit{and near-NHEK}. This remarkable fact was first noticed by Bardeen, Press and Teukolsky in 1972 \cite{Bardeen:1972fi}.

\subsection{Kerr ISCO, and the need for near-horizon limits in the high spin regime}

To reach this conclusion, let us start by studying the behaviour of some specific circular (that is equatorial, $\theta=\pi/2$) geodesics in the high spin limit. The deviation from extremality of the black hole can be characterized by the parameter
\begin{equation}
\lambda\triangleq\sqrt{1-\frac{a^2}{M^2}},
\end{equation}
which tends to $0$ as $a^2\to M^2$.

The condition for having a circular orbit is given by
\begin{align}
    R(r)=0,\qquad\text{ et }\qquad R'(r)=0,
\end{align}
which can be solved for $E_0$ and $L_0$ as \cite{Bardeen:1972fi}
\begin{subequations}
    \begin{align}
    E_0&=\frac{r^{3/2}-2Mr^{1/2}\pm aM^{1/2}}{r^{3/4}\sqrt{r^{3/2}-3M r^{1/2}\pm 2aM^{1/2}}},\\
    L_0&=\frac{\pm M^{1/2}\qty(r^2\mp 2a M^{1/2}r^{1/2}+a^2)}{r^{3/4}\sqrt{r^{3/2}-3M r^{1/2}\pm 2aM^{1/2}}}.
\end{align}
\end{subequations}
The upper (resp. lower) signs correspond here to prograde ($L_0>0$) (resp. retrograde ($L_0<0$)) orbits. The existence of such orbits therefore require the condition
\begin{align}
    r^{3/2}-3Mr^{1/2}\pm 2aM^{1/2}\geq 0
\end{align}
to hold. It is possible to show that this inequality is only saturated for null geodesics (since $E_0$ scales as $1/\mu$, which implies $E_0\to\infty$ for $\mu\to 0$). We now review the near-extremal behaviour of some specific prograde circular geodesics. More details may be found in \cite{Bardeen:1972fi}.
\begin{itemize}
    \item The \defining{Innermost Stable Circular Orbit (ISCO)} is characterized by the stability condition
    \begin{align}
        \dv{E_0}{r}\eval_{r=r_\textsf{ISCO}}=0.
    \end{align}
    Close to extremality, the ISCO radius scales as
    \begin{align}
        r_\textsf{ISCO}=M\qty(1+2^{1/3}\lambda^{2/3})+\mathcal O\qty(\lambda).
    \end{align}
    Section \ref{sec:ISSO} will study the generalization of the ISCO to non-equatorial orbits, the \defining{Innermost Stable Spherical Orbit (ISSO)}.
    \item Wilkins showed \cite{Wilkins:1972rs} that a necessary condition for a Kerr timelike geodesic to be radially bounded was
    \begin{align}
        E_0<1.
    \end{align}
    A consequence of this result is that the \defining{Innermost Bounded Circular Orbit (IBCO)} can be determined by the condition
    \begin{align}
        E_0=1.
    \end{align}
    Near extremality, we obtain
    \begin{align}
        r_\textsf{IBCO}=M\qty(1+\sqrt{2}\lambda)+o\qty(\lambda).
    \end{align}
    \item A last case of interest is the \defining{light ring}, which is the outermost circular orbit for which circular geodesics are null. This amounts to find the largest root of
    \begin{align}
    r^{3/2}-3Mr^{1/2}\pm 2aM^{1/2}=0,
    \end{align}
    which gives 
    \begin{align}
        r_\gamma=M\qty(1+\frac{2}{\sqrt{3}}\lambda)+o(\lambda)
    \end{align}
    close to extremality.
\end{itemize}
Let us now compare these results to the outer horizon radius, which can be written as $r_+=M\qty(1+\lambda)$.
In the Schwarzschild black hole, $r_\textsf{ISCO}=6M$, $r_\textsf{IBCO}=4M$ and $r_\gamma=3M$ remain all larger than the horizon radius $r_+=2M$. These radii are monotonically decreasing functions of the black hole spin $a$, and both tend to $r=M$ for a maximally spinning black hole. This fact appears paradoxical, since the ISCO and the IBCO are timelike curves, but seem to be projected onto the light ring and the event horizon in the $\lambda\to 0$ limit, which are null submanifolds. We also notice that, close to extremality, the first subleading term in the radius expansion scales as $\lambda^{2/3}$ for the ISCO, whereas it scales as $\lambda$ for both the IBCO, the light ring and the event horizon.

We will resolve the geometry around these orbits by introducing a new system of coordinates $(\hat t,\hat r,\theta,\hat\varphi)$ defined as (see \cite{Compere:2012jk} for a review)
\begin{align}
    \hat t\triangleq \frac{t}{2M\kappa}\lambda^p,\qquad \hat r\triangleq\frac{\kappa}{M}\qty(r-r_+)\lambda^{-p},\qquad\hat\varphi\triangleq\varphi-\frac{t}{2M},
\end{align}
where $\kappa>0$ is a scale factor. The new coordinate system is tuned to be adimensional, comoving (the angular velocity $\dd\hat\varphi/\dd\hat t$ vanishes on the event horizon) and possesses a radial coordinate which is vanishing on the event horizon. The parameter $p$ shall be set to some specific value in order to resolve the geometry around some specific orbit. We will make the following choices:
\begin{itemize}
    \item $p=2/3$ will resolve the geometry around the ISCO, and gives rise to a new geometry, called \defining{Near Horizon Extremal Kerr} (NHEK);
    \item $p=1$ resolves the geometry around the IBCO, the light ring and the outer event horizon. It gives rise to the so-called \defining{near-NHEK} geometry.
\end{itemize}
Standing on the ISCO in the extremal limit $\lambda\to 0$, the proper radial distance to all the near-NHEK and the asymptotic orbits (characterized by $r>r_+$) becomes infinite. Moreover, Bardeen and Horowitz showed \cite{Bardeen:1999px} that NHEK and near-NHEK spacetimes turn on to be geodesically complete. Everything thus happens as if the original Kerr spacetime was splitting in three independent geometries, namely near-NHEK, NHEK and asymptotic ($r>r_+$) Kerr in this limit. It is then physically meaningful to study geodesic motion in (near-)NHEK spacetimes on their own, which is the subsequent task we will undertake.

\subsection{Near-horizon extremal Kerr (NHEK)}
\label{sec:NHEK}

Suitable coordinates for resolving the ISCO are defined as
\begin{align}
T&=\frac{ t }{2\kappa M}\lambda^{2/3},\qquad R =\frac{ \kappa\qty(r - r _+)}{M \lambda^{2/3}},\label{eq:cvn}\qquad
\Phi= \varphi -\frac{ t }{2M} .
\end{align}
Plugging \eqref{eq:cvn} into the Kerr metric \eqref{kerr_metric} and expanding the result in powers of $\lambda$ gives the NHEK spacetime in Poincaré coordinates
\begin{equation}
\dd s^2=2M^2\Gamma(\theta)\qty(-R^2\dd T^2+\frac{\dd R^2}{R^2}+\dd\theta^2+\Lambda^2(\theta)\qty(\dd\Phi+R\,\dd T)^2)+\mathcal{O}(\lambda^{2/3})\label{eq:NHEK_metric}
\end{equation}
where
\begin{align}
\Gamma(\theta)&\triangleq\frac{1+\cos^2\theta}{2},\qquad \Lambda(\theta) \triangleq\frac{2\sin\theta}{1+\cos^2\theta}.
\end{align}
In the $\lambda\to 0$ limit, NHEK metric is independent of the scale factor $\kappa$ and thus invariant under the scale transformation
\begin{align}
    R\to\kappa R,\qquad T\to\kappa^{-1}T.\label{NHEK_scale_Symmetry}
\end{align}
We therefore conventionally set $\kappa=1$ in all the NHEK formula.

In these NHEK coordinates, the Kerr ISCO is mapped onto the circular orbit of radius $R_\textsf{ISCO}=2^{1/3}$. However, considering the NHEK geometry on its own, the scale symmetry \eqref{NHEK_scale_Symmetry} makes all circular orbits equivalent, since their can be all mapped onto another. It is really the coordinate change Eq. \eqref{eq:cvn} (\textit{i.e.} the way we map Kerr spacetime to NHEK one) which gives a specific meaning to this peculiar radius.

The NHEK geometry admits a $\textsf{SL}(2,\mathbb R) \times \textsf{U}(1)$ symmetry generated by $\partial_\Phi$ and 
 \begin{align}
H_0 = T \partial_T - R \partial_R,\qquad H_+ = \partial_T,\qquad H_- = (T^2 + \frac{1}{R^2})\partial_T -2 T R \partial_R - \frac{2}{R}\partial_\Phi,
 \end{align}
with $H_0$ the generator of the scale symmetry Eq. \eqref{NHEK_scale_Symmetry}. The Killing tensor $K_{\mu\nu}$ \eqref{kerr_K_tensor} becomes reducible \cite{AlZahrani:2010qb,Galajinsky:2010zy} and can be expressed as
 \begin{align}
K^{\mu\nu} = M^2 g^{\mu\nu} + \mathcal C^{\mu\nu} + (\partial_\Phi)^\mu  (\partial_\Phi)^\nu
 \end{align}
where the $\textsf{SL}(2,\mathbb R)$ Casimir is given by
 \begin{align}
\mathcal C^{\mu\nu} \partial_\mu \partial_\nu = -H_0 H_0  + \frac{1}{2} (H_+  H_-  + H_- H_+).\label{Cas}
 \end{align}

We are interested in the Kerr geodesics that exist in the near-extremal limit within the NHEK geometry at leading order in $\lambda$. The NHEK angular momentum $L_0$ and Carter constant $Q_0$ are identical to their values defined in Boyer-Lindquist coordinates. The NHEK energy $E$ is related to the Boyer-Lindquist energy $E_0$ as
\begin{align}
 E_0 &=\frac{L_0}{2M}+\frac{\lambda^{2/3}}{2M}E. \label{rel1}
\end{align}
From now on, we will consider the leading high-spin limit; \textit{i.e.} we will neglect all $\mathcal O(\lambda^{2/3})$ corrections in \eqref{eq:NHEK_metric}.

\subsubsection{Geodesics}

In the NHEK geometry, Mino time is defined as $\lambda\triangleq\int^{\tilde\tau}\frac{\dd\tilde\tau'}{2M^2\Gamma(\theta(\tilde\tau'))}$, and the geodesic equations of motion simplify to 
\begin{subequations}\label{NHEK_geodesic_equations}
\boxedeqn{
\dv{T}{\lambda}&=\frac{E}{R^2}+\frac{L_0}{R},\label{eq:nhekT}\\
\dv{R}{\lambda}&=\pm_R\sqrt{v_R(R)},\label{eq:nhekR}\\
\dv{\cos \theta}{\lambda}&=\pm_\theta \sqrt{v_\theta(\cos^2\theta)},\label{eq:nhektheta}\\
\dv{\Phi}{\lambda}&=\frac{ L_0 }{\Lambda^2}-\frac{E}{R}-  L_0 ,\label{eq:nhekphi}
}{NHEK geodesic equations}
\end{subequations}
with
\begin{subequations}
    \begin{align}
v_R(R)&\triangleq E^2+2E  L_0  R+\frac{R^2}{4}\qty(3 L_0 ^2-4(Q_0+ M^2)),\label{eq:vR}\\
v_\theta(\cos^2\theta)&\triangleq Q_0 \sin^2\theta +\cos^2\theta  \sin^2\theta \qty(\frac{ L_0 ^2}{4}-M^2)- L_0 ^2\cos^2\theta\label{eq:vtheta}.
\end{align}
\end{subequations}
The limitation that $E$ remains real and finite implies from  \eqref{rel1} that we are only considering orbits with energy close to the extremal value $ L_0 /(2M)$. The ISSO angular momentum at extremality is equal to
\begin{align}
 L_{*}  = \frac{2}{\sqrt{3}} \sqrt{M^2 + Q_0}\label{ls}
\end{align}
It will play a key role in the following. On the equatorial plane ($\theta=\pi/2\Rightarrow Q_0=0$), the definition reduces to the $\ell_0$ used in Refs. \cite{Compere:2017hsi,Chen:2019hac}. We can also write down more simply 
 \begin{align}
v_R(R)= E^2+2E  L_0  R-\mathcal C R^2
 \end{align}
where $\mathcal C$ is the conserved quantity obtained from the $\textsf{SL}(2,\mathbb R)$ Casimir,
 \begin{align}
\mathcal C \triangleq \mathcal C^{\mu\nu} P_\mu P_\nu = Q - \frac{3}{4} L_0 ^2 + M^2 = \frac{3}{4} ( L_{*}^2 - L_0 ^2).\label{eqn:Carter}
 \end{align}
We also have 
\begin{equation}
v_R(R)=\left\lbrace\begin{array}{ll}
    -\mathcal C (R-R_+)(R-R_-), & \mathcal{C}\neq 0 ;\\
    2E L_0  (R-R_0), & \mathcal{C}=0 ,
\end{array}\right.
\end{equation}
with
 \begin{align}
R_\pm \triangleq \frac{E}{\mathcal C}  L_0  \pm \frac{\vert E \vert }{\vert \mathcal C\vert } \sqrt{ L_0 ^2+\mathcal C},\qquad R_0\triangleq -\frac{E}{2 L_0 }  . 
 \end{align}
The non-negative Carter constant $K_0$ is $K_0 = Q_0+\frac{ L_0 ^2}{4} > 0$, which implies that $\mathcal C > - L_0 ^2$ and that $R_- < R_+$ with $R_\pm$ both real. These equations all agree with Ref. \cite{Kapec:2019hro}. 

Similarly, defining $z\triangleq\cos^2\theta$, one can rewrite the polar potential as
\begin{equation}
    v_\theta(z)= - L_0 ^2 z+\qty(Q_0+\mathcal C_\circ z)(1-z)=\left\lbrace
    \begin{array}{ll}
        (Q_0+ L_0 ^2)(z_0-z) & \text{ for } \mathcal{C}_\circ=0\\
        \mathcal{C}_\circ(z_+-z)(z-z_-) & \text{ for } \mathcal{C}_\circ\neq 0
    \end{array}\right.
\end{equation}
where $\mathcal{C}_\circ$ is defined through the critical value of the angular momentum $ L_{\circ}$:
\begin{equation}
    \mathcal{C}_\circ\triangleq\frac{ L_0 ^2- L ^2_{\circ}}{4}, \qquad  L_{\circ}\triangleq 2M.
\end{equation}
The roots of the polar potential are given by
\begin{equation}
    z_0\triangleq\frac{Q_0}{Q_0+ L_0 ^2}\,\quad z_\pm\triangleq\Delta_\theta\pm \text{sign}(\mathcal C_\circ) \sqrt{\Delta_\theta^2+\frac{Q_0}{\mathcal{C}_\circ}},\quad\Delta_\theta\triangleq\frac{1}{2}\qty(1-\frac{Q_0+ L_0 ^2}{\mathcal{C}_\circ}).
\end{equation}

\subsubsection{Formal solution to the geodesic equations}

Using the same reasoning as for the Kerr geometry, the formal solutions to the geodesic equations are given by 
\begin{subequations}
    \begin{align}
\lambda_f  -\lambda_i&=  T^{(0)}_R(R_f)-T^{(0)}_R(R_i)  = \lambda_\theta(\theta_f)-\lambda_\theta(\theta_i)  , \label{eq:12}\\
T(\lambda_f) -T(\lambda_i)&= E \qty( T^{(2)}_R(R(\lambda_f))-T^{(2)}_R(R(\lambda_i))) \nonumber\\
&\quad+  L_0  \qty(  T^{(1)}_R(R(\lambda_f))-T^{(1)}_R(R(\lambda_i))), \\
\Phi(\lambda_f)-\Phi(\lambda_i) &=-\frac{3}{4}  L_0  (\lambda_f-\lambda_i) - E \qty( T^{(1)}_R(R(\lambda_f))-T^{(1)}_R(R(\lambda_i))) \nonumber\\
&~+  L_0  \big( \Phi_\theta(\theta(\lambda_f))-\Phi_\theta(\theta(\lambda_i))\big)\label{eq:NHEKphi}
\end{align}
\end{subequations}
where the three radial integrals are 
 \begin{align}
T^{(i)}_R(\lambda) & \triangleq  \sint \frac{ \dd R}{\pm_RR^i\sqrt{E^2+2E  L_0  R-\mathcal C R^2}} ,\qquad i = 0,1,2\label{eqn:radialNHEK}
 \end{align}
and the two polar integrals are 
\begin{subequations}
\begin{align}
\lambda_\theta(\lambda) & \triangleq \sint \frac{ \dd\cos \theta}{\pm_\theta\sqrt{v_\theta(\theta)}}, \\
\Phi_\theta(\lambda) & \triangleq \sint \frac{ \dd\cos \theta}{\pm_\theta\sqrt{v_\theta(\theta)}}\left( \frac{1}{\Lambda^2(\theta)} - \frac{1}{4}\right). 
 \end{align}
\end{subequations}
The notation was explained previously. We defined $\Phi_\theta$ such that it is zero for equatorial orbits (since $\Lambda(\pi/2)=2$). After integration, the equation \eqref{eq:12} can be inverted to give $R(\lambda)$ and $\theta(\lambda)$. We need to solve these five integrals as a function of the geodesic parameters $E, L_0 ,Q_0,s_\theta^i,s_R^i,T_i,R_i,\theta_i,\Phi_i$.

\subsection{Near-NHEK}\label{sec:nn}
\subsubsection{Metric and geodesic equations}
We now turn to the study of near-NHEK spacetime, which resolves a closer neighborhood of the black hole event horizon. We introduce the so-called \textit{near-NHEK} coordinates ($\hat t,\hat r,\theta,\hat\varphi$), related to Boyer-Lindquist coordinates through the relations
\begin{align}
\hat t&=\frac{ t }{2M\kappa}\lambda,\qquad 
\hat r=\frac{\kappa}{M}\qty( r - r _+)\lambda^{-1},\label{eq:cvnn}\qquad
\hat\varphi= \varphi -\frac{ t }{2M}.
\end{align}
At leading order in $\lambda$, the metric becomes
\begin{align}
\dd s^2&=2M^2\Gamma(\theta)\qty(-\hat r(\hat r+2\kappa)\dd \hat t^2+\frac{\dd \hat r^2}{\hat r(\hat r+2\kappa)}+\dd\theta^2+\Lambda^2(\theta)\qty(\dd\hat\varphi+(\hat r+\kappa)\dd \hat t)^2)\nonumber\\&\quad+\mathcal{O}(\lambda).
\end{align}
In these coordinates, the outer event horizon is located at $\hat r=0$, the light ring at $\hat r_\gamma=\frac{2}{\sqrt{3}}-1$ and the IBCO at $\hat r_\textsf{IBCO}=\sqrt{2}-1$.
Though the metric explicitly depends upon $\kappa$, no physical quantity depends upon it, since it is introduced from a coordinate transformation. The corresponding geodesic equations are now given by
\begin{subequations}\label{nNHEK_geodesic_equations}
    \boxedeqn{
\dv{\hat t}{\lambda}&=\frac{e+ L_0  (\hat r+\kappa)}{\hat r(\hat r+2\kappa)},\label{eq:nn1}\\
\dv{\hat\varphi}{\lambda}&=-\frac{e(\hat r+\kappa)+ L_0  \kappa^2}{\hat r(\hat r+2\kappa)}+ L_0 \qty(\frac{1}{\Lambda^2}-1),\label{eq:nn2}\\
\dv{\theta}{\lambda}&=\pm_\theta\sqrt{v_\theta(\cos^2 \theta)},\label{eq:nn3}\\
\dv{\hat r}{\lambda}&=\pm_{\hat r}\sqrt{v_{\hat r;\kappa}(\hat r)}\label{eq:nn4}
}{Near-NHEK geodesic equations}
\end{subequations}
where the radial potential can be written as
\begin{align}
v_{\hat r;\kappa}(\hat r)&\triangleq (e+ L_0  \kappa)^2+2e  L_0  \hat r+\frac{3}{4}\qty( L_0 ^2- L_{*} ^2)\hat r(\hat r+2\kappa )
\end{align}
and the angular potential is still as given in \eqref{eq:vtheta}. Although $\hat r$ is a meaningful radial coordinate (because of the horizon location at $\hat r=0$), it is convenient to introduce the shifted radial variable $R\triangleq \hat r+\kappa$ to get more elegant expressions. The symbol $R$ is also used in NHEK, but the context allows us to distinguish them. The generators of $\textsf{SL}(2,\mathbb R)\times \textsf{U}(1)$ are $\partial_{\hat{\varphi}}$ and 
\begin{equation}
    H_0=\frac{1}{\kappa}\partial_{\hat t},\qquad H_\pm=\frac{\text{exp}({\mp\kappa \hat t})}{\sqrt{R^2-\kappa^2}}\qty[\frac{R}{\kappa}\partial_{\hat t}\pm(R^2-\kappa^2)\partial_R-\kappa\partial_{\hat{\varphi}}].\label{vecs}
\end{equation}
The $\textsf{SL}(2,\mathbb R)$ Casimir $\mathcal C^{\mu\nu}\partial_\mu  \partial_\nu$ takes the form \eqref{Cas}, where the vectors are now given by \eqref{vecs}. 

The radial potential can be recast as 
 \begin{align}
v_{R;\kappa}(R)=\left\lbrace\begin{array}{ll}
    -\mathcal C(R-R_+)(R-R_-), & \mathcal C\neq 0 ;\\
    2e  L_{*} (R-R_0), & \mathcal{C}=0
\end{array}\right.
 \end{align}
where 
 \begin{align}
R_\pm \triangleq \frac{e  L_0 }{\mathcal C} \pm \frac{\sqrt{(\mathcal C+ L_0 ^2)(e^2+\kappa^2\mathcal{C})}}{|\mathcal{C}|},\qquad R_0\triangleq-\frac{e^2+\kappa^2 L_{*}^2}{2e L_{*}}.
 \end{align}
The near-NHEK energy $e$ is related to Boyer-Lindquist energy by
\begin{equation}
E_0=\frac{ L_0 }{2M}+\frac{\lambda}{2M\kappa}e.\label{eN}
\end{equation}
The near-NHEK and Boyer-Lindquist angular momenta and Carter constants $Q_0$ are equal. We define again the critical radius 
 \begin{align}
R_c = -\frac{e}{ L_0 }. 
 \end{align}
and the future orientation of the orbit again requires \eqref{consRc}. 

\subsubsection{Solutions to the equations of motion}
The NHEK and near-NHEK geodesic equations being very similar, this section will only briefly point out the similarities and the differences between the two cases. The formal solutions to the near-NHEK geodesic equations are
\begin{subequations}
    \begin{align}
    \lambda_f-\lambda_i&=t^{(0)}_{R;\kappa}(R_f)-t^{(0)}_{R;\kappa}(R_i)=\lambda_\theta(\theta_f)-\lambda_\theta(\theta_i),\\
    t(\lambda_f)-t(\lambda_i)&=e\qty(t^{(2)}_{R;\kappa}(R(\lambda_f))-t^{(2)}_{R;\kappa}(R(\lambda_i)))\nonumber\\
    &~+ L_0 \qty(t^{(1)}_{R;\kappa}(R(\lambda_f))-t^{(1)}_{R;\kappa}(R(\lambda_i))),\label{eqn:nnLambda}\\
    \hat\varphi(\lambda_f)-\hat\varphi(\lambda_i)&=-\frac{3}{4} L_0 (\lambda_f-\lambda_i)-e \qty(t^{(1)}_{R;\kappa}(R(\lambda_f))-t^{(1)}_{R;\kappa}(R(\lambda_i)))\nonumber\\
    &~-\kappa^2 L_0 \qty( t^{(2)}_{R;\kappa}(R(\lambda_f))- t^{(2)}_{R;\kappa}(R(\lambda_i)))\nonumber\\
    &~+ L_0 \qty(\Phi_\theta(\theta(\lambda_f))-\Phi_\theta(\theta(\lambda_i))) \label{eqn:nnPhi} 
\end{align}
\end{subequations}
where the polar integrals are the same as in NHEK (see above) and the radial ones are defined by
\begin{subequations}
    \begin{align}
    t^{(0)}_{R;\kappa}(\lambda)&\triangleq\sint\frac{\dd R}{\pm_R\sqrt{v_{R;\kappa}(R)}},\\
    t^{(i)}_{R;\kappa}(\lambda)&\triangleq\sint\frac{\dd R}{\pm_R\sqrt{v_{R;\kappa}(R)}}\frac{R^{2-i}}{R^2-\kappa^2},\qquad i=1,2.
\end{align}
\end{subequations}
Notice that NHEK geodesics equations can be recovered from near-NHEK ones by taking the formal limit $\kappa\to 0$; the normalization of the radial integrals has been chosen to satisfy $\lim_{\kappa\to 0} t^{(i)}_{R;\kappa}(R)=T^{(i)}(R)$ ($i=0,1,2$) as defined in \eqref{eqn:radialNHEK}. Therefore, the formal solutions to NHEK geodesic equations can also be recovered by taking the limit $\kappa\to 0$ in \eqref{eqn:nnLambda} and \eqref{eqn:nnPhi}.

\section{Concluding remarks}
We have now in our possession the geodesic equations for generic Kerr spacetime as well as for the associated NHEK and near-NHEK geometries, respectively given by the sets of equations \eqref{kerr_geodesics}, \eqref{NHEK_geodesic_equations} and \eqref{nNHEK_geodesic_equations}. These equations are all parametrized only in terms of constants quantities along the geodesics (energy, angular momentum and Carter constant) and of initial data of the motion.

Another remarkable fact to notice about these equations is that the radial and the polar motion are totally decoupled from each other. One can solve independently for both, before injecting the obtained solutions into the two remaining equations for obtaining the time and azimutal behaviour of the trajectory. This is also the reason why the polar and the radial part of the motion can be classified independently, as we will see in Chapter \ref{chap:classification}.

\chapter[Hamiltonian description of geodesic motion]{Hamiltonian description\\of geodesic motion}\label{chap:hamilton_kerr}

\vspace{\stretch{1}}

Before turning to the classification of Kerr geodesics, we will re-derive the generic Kerr geodesic equations \eqref{kerr_geodesics} from another perspective, namely that of Hamiltonian mechanics. 

Even if it might seems -- at first glance -- redundant with the previous chapter, this analysis will turn out to be of prime importance for the continuation of this work. The goal of this chapter is actually threefold: first, it will enable us to review the tools from Hamiltonian mechanics that will be extensively used in Part \ref{part:hamilton} of the thesis in a computationally simpler framework. Second, it will give us new perspectives on the results discussed so far. For instance, the resolvable character of geodesic equations in Kerr is a direct consequence of the fact that they form a completely (Liouville) integrable system, as can be seen directly within the framework of Hamiltonian mechanics. Finally, it will allow us to derive some practically insightful results: the action-angle formalism will be used to make explicit the tri-periodicity of Kerr (radially bounded) geodesic motion, and to compute its associated fundamental frequencies. This machinery reveals to be of prime importance in moderns developments, since it lies at the heart of perturbative schemes for describing the evolution of extreme mass ratio inspirals, such as two-timescale expansion \cite{Hinderer:2008dm,Pound:2021qin}.

Each section of this chapter first review the necessary concepts in an abstract way, before applying them to the problem of geodesic motion in Kerr spacetime.

\section{Hamiltonian description of geodesic motion in Kerr spacetime}

\subsubsection{Basic of Hamiltonian mechanics} 
We recall the basics of the geometrical (symplectic) formulation of Hamiltonian mechanics \cite{arnold1989mathematical}. The phase space of an Hamiltonian system possessing $N$ degrees of freedom is represented by a $2N$ differentiable manifold $\mathcal M$ equipped with a non-degenerated two-form $\mathbf \Omega$ (the \defining{symplectic form}) which is closed ($\dd\mathbf\Omega=0$). In this framework, the Hamiltonian $H$ is defined as a smooth function on $\mathcal M$. Hereafter, we will always consider time-independent Hamiltonians (\textit{i.e.}, conservative systems). The evolution of the system through its phase space is given by the integral curves of the Hamiltonian vector field
\begin{align}
v^\mathfrak{i}\triangleq\Omega^{\mathfrak i\mathfrak j}\nabla_\mathfrak{j} H.
\end{align}

To enable specific computations to be carried out, we have introduced some coordinates $(x^\mathfrak{i},p_\mathfrak{i})$ on the phase space. Here, $\mathfrak i,\mathfrak j,\ldots\in\qty{1,\ldots,N}$ stand here for phase space indices. 
Coordinates are said to be \defining{canonical} if $\mathbf\Omega=\dd p_\mathfrak{i}\wedge\dd x^\mathfrak{i}$. Prescribing a symplectic structure on phase space is equivalent to providing the algebra of Poisson brackets $\pb{\cdot}{\cdot}$ between all the (independent) coordinates. For canonical coordinates, one has $\pb{x^\mathfrak{i}}{p_\mathfrak{j}}=\delta^\mathfrak{i}_\mathfrak{j}$ and the Poisson brackets between two phase space functions $f$ and $g$ are
\begin{align}
    \pb{f}{g}\triangleq\sum_{\mathfrak{i}=1}^N\pdv{f}{q^\mathfrak{i}}\pdv{g}{p_\mathfrak{i}}-\pdv{f}{p_\mathfrak{i}}\pdv{g}{q^\mathfrak{i}}.\label{poisson_brackets}
\end{align}
Finally, the time evolution of any phase space quantity $f$ is given in term of Poisson brackets by
\begin{align}
    \dot F\triangleq\dv{F}{\tau}=\pb{F}{H}+\pdv{F}{\tau}.
\end{align}
In particular, $F$ is a constant of motion provided that $\pb{F}{H}=0$. Constants of motion are often called \defining{first integrals} of motion within the framework of Hamiltonian mechanics.

\subsubsection{Constrained Hamiltonian systems}
This is all for the basic tools we need from Hamiltonian mechanics. However, there is a subtlety that arises when building Hamiltonians for describing geodesic motion, and that our analysis shall account for: the presence of constraints. This subject is very technical, and we only bring here the elements useful for our purposes. The curious reader will consult fruitfully the book of M. Henneaux and C. Teitelboim \cite{Henneaux:1992ig}.

The appearance of constraints is more naturally introduced by starting from the Lagrangian viewpoint:
consider a theory described by the Lagrangian action
\begin{align}
    S_L\qty[q^\mathfrak{i}]=\int L(q^\mathfrak{i},\dot q^\mathfrak{i},\tau)\dd\tau.
\end{align}
The classical equations of motion are the Euler-Lagrange equations, which correspond to the extrema of $S_L$. They can be written in the enlightening way
\begin{align}
    \ddot q^\mathfrak{j}\pdv{L}{\dot q^\mathfrak{i}}{\dot q^\mathfrak{j}}=\pdv{L}{q^\mathfrak{i}}-\dot q^\mathfrak{j}\pdv{L}{\dot q^\mathfrak{i}}{q^\mathfrak{j}}.\label{euler_lagrange}
\end{align}

The starting point for switching to the Hamiltonian formulation is to define the conjugate momenta as
\begin{align}
    p_\mathfrak{i}\triangleq\pdv{L}{\dot q^\mathfrak{i}}.\label{momentaH}
\end{align}
This relation can be inverted to express the velocities in terms of the positions and the momenta provided that
\begin{align}
    \det \pdv{L}{\dot q^\mathfrak{i}}{\dot q^\mathfrak{j}}\neq 0.\label{hessian}
\end{align}
Notice that this is the very same condition that allows to invert Eq. \eqref{euler_lagrange} in order to express the accelerations uniquely in terms of the velocities and the positions. However, in many situations, this will not be the case and Eq. \eqref{hessian} will be vanishing. In that case, Eq. \eqref{momentaH} cannot be inverted. In others words, this means that the momenta are not all independent, but subjected to some constraints
\begin{align}
    \phi_A(q,p)\approx 0,\qquad A=1,\ldots,M.\label{csts}
\end{align}
Here, $M$ is an integer labelling the number of linearly independent constraints, and we use capital letters for indices running over the set of constraints. These constraints are referred to as \defining{primary constraints}, since they directly arise from the non-invertibility of the momentum-velocity relation. Notice that we use the weak equality symbol $\approx$ to emphasize that the constraints are not identically vanishing on phase space, but that they shall be enforced to be numerically vanishing. We call the submanifold of phase space defined by the constraints Eq. \eqref{csts} the \defining{primary constraint surface}. More generally, we say that two phase space functions $F$ and $G$ are \defining{weakly equal} if they are equal on the primary constraint surface. This is denoted $F\approx G$.

At the level of the equations of motion, the vanishing of Eq. \eqref{hessian} directly implies that one cannot express the accelerations solely in terms of the velocities and the positions. The EOMs solutions can then potentially contain arbitrary functions of time, which is the very signature of the presence of a gauge freedom in the theory.

The Hamiltonian is defined from the Lagrangian through the Legendre transformation
\begin{align}
    H= \dot q^\mathfrak{i}p_\mathfrak{i}-L.
\end{align}
Hamilton equations as well as the set of constraints \eqref{csts} can be obtained by varying the Hamiltonian action
\begin{align}
    S_H[q^\mathfrak{i},p_\mathfrak{i},u^A]=\int\qty(\dot q^\mathfrak{i}p_\mathfrak{i}-H-u^A\phi_A)\dd\tau.\label{hamiltonian_action}
\end{align}
Here, the functions $u^A$ take the role of Lagrange multipliers, and ensure the Legendre transformation to be invertible. The evolution equations for any phase space function $F$ can be conveniently written as
\begin{align}
    \dot F=\pb{F}{H}+u^A\pb{F}{\phi_A}.
\end{align}
Notice that all of this is equivalent to the reduced variational principle (with fewer variables) where the constraints have already been enforced, defined from
\begin{align}
    S_R[q^{\hat{\mathfrak{i}}},p_{\hat{\mathfrak{i}}}]=\int\qty(\dot q^{\hat{\mathfrak{i}}} p_{\hat{\mathfrak{i}}}-H)\dd\tau\label{reduced_action}
\end{align}
and subjected to the regularity conditions 
\begin{align}
    \phi_A=0,\qquad\delta\phi_A=0,\qquad\forall A\in 1,\ldots,M.
\end{align}

All this discussion only makes sense if the constraints are preserved under time evolution. This is the case provided that
\begin{align}
    \dot\phi_A=\pb{\phi_A}{H}+u^B\pb{\phi_A}{\phi_B}\stackrel{!}{\approx}0.
\end{align}
This equation will automatically be satisfied if
\begin{align}
    \pb{\phi_A}{H}\approx 0,\qquad\pb{\phi_A}{\phi_B}\approx 0,\qquad\forall A=1,\ldots,M.
\end{align}
Such constraints are known as \defining{first class constraints}. They are the only one that we will encounter in the continuation of this work. Since they are preserved under time evolution, these constraints can either be resolved at the level of the action or at the one of the equations of motion. Stated differently, one can replace the full phase space by the primary constraint surface, still using the same Poisson brackets than the ones defined on the original phase space. In this setup, the Lagrange multipliers $u^A$ are totally arbitrary functions.

\subsubsection{Constraints and gauge transformations} We will now discuss the strong link between first class primary constraints and gauge transformations, that is, transformations that do not alter the physical state of the system. Consider the evolution of some dynamical quantity $F$ from a time $\tau_1$ to a time $\tau_2=\tau_1+\delta \tau$. Since the Lagrange multipliers are arbitrary, one can make two different choices $u^A$ and $\tilde u^A$ at time $\tau_1$. We let the system evolve up to time $\tau_2$. The difference in its evolution -- corresponding to the two choices of multipliers -- is \cite{Henneaux:1992ig}
\begin{align}
    \delta F=\delta u^A\pb{F}{\phi_A},\qquad\delta u^A\triangleq\qty(u^A-\tilde u^A)\delta\tau.\label{gauge_generator}
\end{align}
The ambiguity in the evolution of $F$ is then physically irrelevant, since it is proportional to the physically irrelevant quantity $\delta u^A$. The physical state of the system at time $\tau_2$ is therefore not affected by the transformation Eq. \eqref{gauge_generator}. Importing the language of field theory, we say that \textit{the first class primary constraints generate gauge transformations}.

\subsubsection{Application to geodesic motion}
Let us go back to the problem of geodesic motion. For the sake of simplicity, we shall only consider timelike geodesics. They are the curves that extremize the proper distance
 \begin{align}
     S_L[x^\mu]=-\mu\int\sqrt{-\dot x_\mu\dot x^\mu} \dd\lambda,\qquad\dot x^\mu\triangleq\dv{x^\mu}{\lambda},\label{action_reparam}
 \end{align}
 with $\lambda$ an arbitrary time parameter and $\mu>0$ the mass of the particle. This problem is equivalently described by a $N=4$ Hamiltonian system. The phase space is spanned by $(x^\mu,p_\mu)$ (notice that the phase space indices $\mathfrak i,\ldots$ are now replaced by the spacetime indices $\mu,\ldots$), with
 \begin{align}
     p_\mu=\pdv{L}{\dot x^\mu}=\frac{\mu}{\sqrt{-\dot x_\alpha\dot x^\alpha}}\dot x_\mu.\label{geodesic_momentum}
 \end{align}
 The action Eq. \eqref{action_reparam} is invariant under arbitrary time reparametrizations $\lambda\to\sigma(\lambda)$. From Eq. \eqref{geodesic_momentum}, we see that it creates a dependence between the momenta, since
 \begin{align}
     p_\mu p^\mu=-\mu^2.
 \end{align}
 This leads to the existence of the so-called \defining{mass shell constraint}
 \begin{align}
     \mathcal H\triangleq p_\mu p^\mu+\mu^2\approx 0.\label{mass_shell}
 \end{align}
 The Hamiltonian is then obtained through the Legendre transformation
 \begin{align}
     H=\dot x^\mu p_\mu-L=x^\mu p_\mu+\mu\sqrt{\dot x_\mu\dot x^\mu}.
 \end{align}
 Contracting both sides of Eq. \eqref{geodesic_momentum} with $v^\mu$, we get the identity
 \begin{align}
     \dot x^\mu p_\mu=-\mu\sqrt{\dot x_\mu\dot x^\mu}=L.\label{identity_L}
 \end{align}
 We are therefore left with an identically vanishing Hamiltonian $H=0$, and the Hamiltonian action only contains the constraint,
 \begin{align}
     S_H[x^\mu,p_\mu,u]=\int\qty(\dot x^\mu p_\mu-u\mathcal H)\dd\lambda
 \end{align}
 with $u$ a Lagrange multiplier. Moreover, it is easy to show that $\mathcal H$ is indeed a first class constraint. 

Despite this Hamiltonian formulation is very elegant, having an identically vanishing Hamiltonian is inadequate for some of our subsequent purposes, \textit{e.g.} for studying the Hamilton-Jacobi equation. We therefore tackle the analysis from another viewpoint, namely by breaking from the start the reparametrization invariance. It is a textbook level statement \cite{carroll2003spacetime,Wald:1984rg,Misner1973} that the stationary points of the action \eqref{action_reparam} are equivalent to the stationary points of 
\begin{align}
    \bar S_L[x^\mu]=\frac{\mu}{2}\int\dot x_\mu\dot x^\mu\dd\tau
\end{align}
with $\tau$ being the proper time along the geodesic. The conjugate momenta are now
\begin{align}
    p_\mu=\mu \dot x^\mu.
\end{align}
Since the choice of $\tau$ as the proper time enforces the equality $\dot x_\mu\dot x^\mu=-1$ to hold, the constraint Eq. \eqref{mass_shell} is still present. However, the new Hamiltonian is non vanishing since 
\begin{align}
    \bar H=\dot x^\mu p_\mu-\frac{\mu}{2}\dot x_\mu\dot x^\mu=\frac{1}{2\mu}p_\mu p^\mu.
\end{align}
Expliciting all the dependencies, we get
\boxedeqn{
    \bar H(x^\alpha,p_\alpha)=\frac{1}{2\mu}g^{\mu\nu}(x^\alpha)p_\mu\, p_\nu.\label{H}
}{Hamiltonian for geodesic motion}
Again, the constraint $\mathcal H$ is first class. Evaluating Eq. \eqref{H} on the primary constraint surface gives simply
\begin{align}
   \bar H\approx-\frac{\mu}{2},\label{on_shell_H_geodesic}
\end{align}
and the on-shell value of the Hamiltonian is thus simply (minus one half of) the mass of the particle. Eq. \eqref{on_shell_H_geodesic} will turn on to admit a simple generalization for extended test bodies, as will be described in Chapter \ref{chap:covariant_H}.

\subsubsection{Hamilton and geodesic equations}
Let us convince ourselves that we have indeed worked out in a fancy way the geodesic equations, whose standard form is
\begin{align}\label{forced_geod_eq}
    \dv[2]{x^\mu}{\tau}+\Gamma^\mu_{\alpha\beta}\dv{x^\alpha}{\tau}\dv{x^\beta}{\tau} = 0.
\end{align}
A straightforward computation shows that Hamilton equations read
\begin{subequations}\label{geod_xp}
\begin{align}\label{geod_x}
    \dv{x^\nu}{\tau} &= g^{\nu\sigma}\frac{p_\sigma}{\mu},
    \\[1ex]\label{geod_p}
    \dv{p_\nu}{\tau} &= -\frac{1}{2\mu}g\tud{\sigma\rho}{,\nu}p_\sigma p_\rho.
\end{align}
\end{subequations}

Eq. \eqref{geod_x} is clearly a direct consequence of the definition of the impulsion, but Eq. \eqref{geod_p} requires more work. Taking the $\tau$-derivative of the 4-momentum $p_\nu$ we get
\begin{subequations}
    \begin{align}
    \dv{p_\nu}{\tau} =& \dv{}{\tau}\left(\mu\,g_{\mu\nu}\dv{x^\mu}{\tau}\right)\\
    =& \mu\,g_{\mu\nu,\alpha}\dv{x^\mu}{\tau}\dv{x^\alpha}{\tau} + \mu\,g_{\mu\nu}\dv[2]{x^\mu}{\tau}\\
    =& \mu\,g_{\mu\nu,\alpha}\dv{x^\mu}{\tau}\dv{x^\alpha}{\tau} + \mu\,g_{\mu\nu}\left[-\frac{1}{2}g^{\mu\alpha}\left(g_{\sigma\alpha,\rho} + g_{\rho\alpha,\sigma} - g_{\rho\sigma,\alpha}\right)\dv{x^\rho}{\tau}\dv{x^\sigma}{\tau} \right]\\
    =& \frac{1}{2\mu}g_{\sigma\rho,\nu}p^\sigma p^\rho  \\
    =& -\frac{1}{2\mu}g\tud{\sigma\rho}{,\nu}p_\sigma p_\rho.
\end{align}
\end{subequations}
In the third line we have made use of the geodesic equation \eqref{forced_geod_eq}, while in the last step we have used $p^\sigma p^\rho g_{\sigma\rho,\nu} = -p_\sigma p_\rho g\tud{\sigma\rho}{,\nu}$. This identity can be proven starting from the following
\begin{equation}
    p^\sigma p^\rho g_{\sigma\rho,\nu} = p^\sigma p^\rho \left(g^{\alpha\beta}g_{\alpha\sigma}g_{\beta\rho}\right)_{,\nu} = p_\alpha p_\beta g\tud{\alpha\beta}{,\nu} + p^\beta p^\rho g_{\beta\rho,\nu} + p^\sigma p^\alpha g_{\alpha\sigma,\nu}.
\end{equation}
Now, renaming the indices ($\alpha\rightarrow\beta$, $\sigma\rightarrow\rho$) and using the symmetry of the metric, the last two terms on the right-hand side are equal
\begin{equation}
    p^\sigma p^\rho g_{\sigma\rho,\nu} = p_\alpha p_\beta g\tud{\alpha\beta}{,\nu} + 2p^\beta p^\rho g_{\beta\rho,\nu}.
\end{equation}
Rearranging the terms we obtain the result
\begin{equation}
    -p^\sigma p^\rho g_{\sigma\rho,\nu} = p_\alpha p_\beta g\tud{\alpha\beta}{,\nu}.
\end{equation}
Gathering all the pieces leaves us with the second Hamilton equation \eqref{geod_p} and concludes the proof.

\subsubsection{Coordinate-time Hamiltonian} 
Actually, the Hamiltonian Eq. \eqref{H} is a \textit{generally covariant Hamiltonian}, since the time evolution is driven by an exterior parameter $\tau$ and not by the coordinate time $t$ itself. This formulation presents the advantage of being more symmetric with respect to all spacetime coordinates.

However, expressing the dynamics in terms of the time coordinate $t$ presents advantages in some situations, \textit{e.g.} for performing post-Newtonian expansions. In doing so, we will work directly with the reduced action Eq. \eqref{reduced_action}, thus reducing our phase space to be spanned only by $(x^i,p_i)$ with $i=1,2,3$. Let us choose the timelike coordinate such that $x^0=t$, and also take this $t$ to be the time parameter of the system. We therefore get
\begin{align}
    \dot x^0=\dv{t}{t}=1.
\end{align}
Starting again from the reparametrization invariant Lagrangian action Eq. \eqref{action_reparam} and using Eq. \eqref{identity_L}, we obtain the Hamiltonian through the Legendre transformation
\begin{align}
    H=\dot x^i p_i-L=\dot x^ip_i-\dot x^\mu p_\mu=-p_0.
\end{align}
Notice that, since we work on the reduced phase space, we have only to sum on the spatial indices in the Legendre transformation.

The last step is to rewrite $p_t$ by solving the constraint $\mathcal H\approx 0$. We write the metric in ADM coordinates \cite{poisson_2004}
\begin{align}
    g^{\mu\nu}=\mqty(g^{tt} & g^{ti} \\ g^{tj} & g^{ij})
    \triangleq\mqty(-\frac{1}{\alpha^2} & -\frac{\beta^i}{\alpha^2} \\ -\frac{\beta^j}{\alpha^2} & \gamma^{ij}-\frac{\beta^i\beta^j}{\alpha^2}),\qquad
    g_{\mu\nu}=\mqty(-\alpha^2+\beta_i\beta^i & -\beta_i\\-\beta_j & \gamma_{ij}),
\end{align}
where we respectively defined the lapse, the shift and the spatial metric as:
\begin{align}
    \alpha\triangleq\qty(-g^{tt})^{-1/2},\qquad \beta^i\triangleq\frac{g^{ti}}{g^{tt}},\qquad\gamma^{ij}\triangleq g^{ij}-\frac{g^{ti}g^{tj}}{g^{tt}}.
\end{align}
One has $\gamma^{ij}\gamma_{jk}=\delta^i_k$ and $\beta^i\triangleq \gamma^{ij}\beta_j$. Assuming that $p^t\propto\dv{x^0}{\lambda}$ is future-oriented, the constraint Eq. \eqref{mass_shell} yields
\begin{align}
    p_t\approx -\beta^ip_i-\alpha\sqrt{\mu^2+\gamma^{ij}p_ip_j}.
\end{align}
We therefore end up with the Hamiltonian
\begin{align}
    H\approx\beta^ip_i+\alpha\sqrt{\mu^2+\gamma^{ij}p_ip_j}.
\end{align}

\section{Complete (Liouville) integrability}
We now turn to the discussion of integrability. Remember that
\begin{definition}
An Hamiltonian system is \defining{completely integrable} (or Liouville integrable) in some open set $\mathcal U\subset\mathcal M$ provided that there exists $N$ linearly independent first integrals of motion $P_i=\qty(P_1=H,P_2,\ldots,P_N)$ that are linearly independent and in involution,
\begin{align}
    \pb{P_i}{P_j}=0\qquad\forall i,j=1,\ldots,N
\end{align}
at each point of $\mathcal U$.
\end{definition}
Notice that since $\pb{H}{H}=0$, the Hamiltonian is automatically a first integral, and we set $P_1=H$ by convention.

\defining{Liouville-Arnold theorem for integrable systems} \cite{arnold1989mathematical} states that a completely integrable system exhibits the following features:
\begin{enumerate}
    \item Its phase space is foliated by \defining{level sets}
\begin{align}
\mathcal M_\mathbf{p}\equiv \qty{x\in\mathcal M\,|\, P_i(x)=p_i\,|\,i=1,\ldots,N}
\end{align} 
which correspond to surfaces of constant $P_\alpha$ and are invariant under the Hamiltonian flow;
\item Its Hamilton equations can be integrated by quadratures, that is by a finite number of integrations and algebraic operations;
\item Finally, in the case of compact and connected level sets, one can switch to action-angle variables, see Section \ref{sec:aa} for a more precise description.
\end{enumerate}

In the case of geodesic motion in Kerr spacetime, the phase space is eight dimensional (that is, $N=4$). A set of four independent first integrals turns out to be the four conserved quantities studied in Chapter \ref{chap:kerr} (recall that $H\approx-\mu/2$, we can thus choose arbitrarily one of these two quantities). We set
\begin{align}
    P_\alpha=\qty(H,E_0,L_0,K_0).
\end{align}
Even if we already know that the last three quantities Poisson commute with $H$ (since they are constants of motion), we shall verify it as a consistency check. Using the identity (which comes from the metric compatibility condition)
\begin{align}
    \partial_\nu g^{\alpha\beta}=-2\Gamma^{(\alpha}_{\nu\lambda}g^{\beta)\lambda},
\end{align}
we have
\begin{align}
    \begin{split}
    \pb{H}{E_0}&=-\frac{1}{2\mu}\qty(\partial_\lambda g^{\alpha\beta}\xi^\mu\pb{x^\lambda}{p_\mu}+2g^{\alpha\beta} p_\alpha\pb{p_\beta}{x^\lambda}\partial_\lambda \xi^\mu)\\
    &=\frac{1}{\mu}\nabla_\mu\xi_\nu p^\mu p^\nu\\
    &=0,
    \end{split}
\end{align}
since $\xi^\mu$ is a Killing vector field. We use the exact same reasoning for the two others quantities to get
\begin{align}
    \pb{H}{L_0}=-\frac{1}{\mu}\nabla_\mu\eta_\nu p^\mu p^\nu=0,\qquad \pb{H}{K_0}=-\frac{1}{\mu}\nabla_\mu K_{\nu\rho}p^\mu p^\nu p^\rho=0.
\end{align}

It remains to show that $E_0,L_0$ and $K_0$ are mutually Poisson commuting. One has
\begin{align}
    \pb{E_0}{L_0}&=-\xi^\mu\partial_\mu\eta^\nu p_\nu+\eta^\mu\partial_\mu\xi^\nu p_\nu=0,
\end{align}
since $\partial_\mu\xi^\nu=\partial_\mu\eta^\nu=0$ are direct consequences of the explicit expressions Eq. \eqref{kerr_KV} of the Killing vectors. Moreover,
\begin{align}
    \pb{E_0}{K_0}&=\xi^\mu\partial_\mu K^{\alpha\beta} p_\alpha p_\beta-2 p_\mu\partial_\alpha\xi^\mu K^{\alpha\beta} p_\beta=0.
\end{align}
The last equality comes from the fact that the second term of the RHS is vanishing for the same reason used in the previous computation, while the vanishing of the first term arises from the stationarity of Kerr background, which ensures that $\xi^\mu\partial_\mu K^{\alpha\beta}=\partial_t K^{\alpha\beta}=0$. The computation of the last bracket $\pb{L_0}{K_0}$ is identical, except that one shall use the consequence of Kerr axial symmetry, $\eta^\mu\partial_\mu K^{\alpha\beta}=\partial_\varphi K^{\alpha\beta}=0$.

At the end of the day, we have formally proven that Kerr geodesic equations do form a completely integrable system. Notice that it is also the case for the (near-)NHEK spacetimes. We therefore have all the properties expected from Liouville-Arnold theorem. In particular, the equations of motion have already been formally integrated in the previous chapter. Section \ref{sec:aa} will discuss action-angle formulation of geodesic motion. As we will see, this will require some care, since level sets of Kerr geodesic motion are unbounded in the time direction.

\section{Hamilton-Jacobi equation}\label{sec:HJ_geodesic}

The Hamilton-Jacobi formulation of Kerr geodesic motion has been famously studied by Carter in the late sixties \cite{carter68}, and enabled him to discover the Carter constant. We reproduce this computation here, because it is the simplest way to reduce Kerr geodesic equations to the set of four first order ODEs Eqs. \eqref{kerr_geodesics}.

Let us first remind ourselves the Hamilton-Jacobi formulation for a time independent Hamiltonian system, $H(x^i,p_i,\tau)=H(x^i,p_i)$. In this case, $H$ is constant along the motion, and we denote $\alpha$ its constant value:
\begin{align}
    H(x^i,p_i)=\alpha.
\end{align}
For such a system, Hamilton's equations are equivalent to the Hamilton-Jacobi equation
\begin{align}
    H\qty(x^i,\pdv{S}{x^i})+\pdv{S}{\tau}=0.
\end{align}
where $S(x^i,\alpha,\tau)$ is referred to as \defining{Hamilton's principal function}.
Because the Hamiltonian is time independent, it necessarily takes the form
\begin{align}
    S(x^i,\alpha,\tau)=W(x^i,\alpha)-\alpha \tau
\end{align}
where $W(x^i,\alpha)$ is known as \defining{Hamilton's characteristic function}. Finally, we notice that the conjugate momenta are given by the partial derivatives of the Hamilton's principal/characteristic function with respect to the coordinates,
\begin{align}
    p_i=\pdv{S}{x^i}=\pdv{W}{x^i}.\label{p_from_HJ}
\end{align}
In many practical cases, this formulation is very powerful to separate the equations of motion in a set of independent ODEs \cite{Goldstein2001,arnold1989mathematical}.

Let us go back to geodesic motion in Kerr spacetime. Denoting\footnote{The notation $W^{(0)}$ will become meaningful in Chapter \ref{chap:HJ}, when we will study the Hamilton-Jacobi equation for spinning test bodies as a perturbation of the geodesic Hamilton-Jacobi problem.} $W^{(0)}\triangleq W/\mu$ and noticing that $\alpha=-\mu/2$ in our case, the Hamilton-Jacobi equation reduces to
\begin{align}
    g^{\mu\nu}W^{(0)}_{,\mu}W^{(0)}_{,\nu}+1=0.
\end{align}
We assume that it admits a separated solution of the form
\begin{align}
    W^{(0)}=-E_0 t+L_0\varphi+w_{0r}(r)+w_{0z}(z),
\end{align}
where $E_0$ and $L_0$ are (for now arbitrary) constants and $w_{0r}(r)$, $w_{0z}(z)$ functions that remain to be determined. Notice that we work here with $z\triangleq\cos\theta$ instead of the Boyer-Lindquist polar angle $\theta$ itself, because it allows to save a lot of computation time for computer-assisted symbolic checks of our equations (no trigonometric functions being involved). Moreover, Eq. \eqref{p_from_HJ} implies that
\begin{align}
    \hat p_t&=\pdv{W^{(0)}}{t}=-E_0,\qquad \hat p_\varphi=\pdv{W^{(0)}}{\varphi}=L_0\label{E0_L0}
\end{align}
and comparing these results with the discussion of Section \ref{sec:conserved_killing_geodesic}, we directly see that $E_0$ and $L_0$ are the quantities conserved along geodesics associated to Kerr's timelike and axial Killing vectors, thus explaining the notation chosen.

Using the explicit expression of the inverse Kerr metric \eqref{inverse_metric}, the Hamilton-Jacobi equation reads
\begin{align}
    \begin{split}
        &1-E_0^2+\frac{L_0^2}{(r^2+a^2)(1-z^2)}-\frac{2Mr\qty[E_0\qty(r^2+a^2)-aL_0]^2}{\Delta(r)\Sigma(r,z)\qty(r^2+a^2)}\\
    &\quad+\frac{\Delta(r)}{\Sigma(r,z)}\qty(w'_{0r})^2+\frac{1-z^2}{\Sigma(r,z)}\qty(w'_{0z})^2=0,
    \end{split}
\end{align}
where $'$ indicates a derivative of a single-variable function with respect to its argument.

A bunch of algebraic manipulations allows to show that this equation can be separated as
\begin{align}
    \begin{split}
        &\qty(1-E_0^2)r^2+L_0^2\frac{a^2}{r^2+a^2}-\frac{2Mr\qty[E_0(r^2+a^2)-aL_0]^2}{\Delta(r)(r^2+a^2)}+\Delta(r)\qty(w'_{0r})^2\\
    &=-\qty[\qty(1-E_0)^2a^2z^2+\frac{L_0^2}{1-z^2}+\qty(1-z^2)\qty(w'_{0z})^2].
    \end{split}
\end{align}
Since this equation is separated, the two sides of the equality shall be equal to a constant, that we will denote $-Q_0-L_0^2$. Another bunch of straightforward algebra then allows to turn the Hamilton-Jacobi equation in the two following ODEs:
\begin{subequations}
\begin{align}
    \qty(1-z^2)\qty(w'_{0z})^2&=Q_0-z^2\qty[a^2\qty(1-E_0^2)+\frac{L_0^2}{1-z^2}],\\
    \Delta^2(r)\qty(w'_{0r})^2&=-\qty(K_0+r^2)\Delta(r)+P_0^2(r).
\end{align}
\end{subequations}
Remind that we are using the shortcut notations
\begin{align}
    K_0\triangleq Q_0+\qty(L_0-aE_0)^2,\qquad P_0(r)\triangleq E_0\qty(r^2+a^2)-aL_0.
\end{align}
Solutions for the first order derivatives of the action with respect to $r$ and $z$ can now be written as
\begin{align}
    w'_{0r}(r)&=\pm_r\frac{\sqrt{R(r)}}{\Delta(r)},\qquad
    w'_{0z}(z)=\pm_\theta\frac{\sqrt{\Theta(z^2)}}{1-z^2}\label{HJ_potentials}
\end{align}
with the potentials $R(r)$ and $\Theta(z^2)$ defined in Eqs. \eqref{eq:kerr_vr} and \eqref{eq:kerr_vtheta}.
Moreover, since $w'_{0z}=\hat p_\theta$, the constant $Q_0$ takes the value\begin{align}
    Q_0=\hat p_\theta^2+\cos^2\theta\qty[a^2\qty(1-E_0^2)+\qty(\frac{L_0}{\sin\theta})^2],
\end{align}
which is in agreement with the value of Carter constant found in the previous chapter. However, $Q_0$ appears here as the separation constant of the geodesic Hamilton-Jacobi equation in Kerr spacetime, as in the original derivation of Carter \cite{carter68}.

The final step is to recover geodesic equations \eqref{kerr_geodesics} from our separated solution to the Hamilton-Jacobi equation. Using Eq. \eqref{p_from_HJ}, we notice that
\begin{align}
    \dv{x^\mu}{\tilde\tau}=g^{\mu\nu} W^{(0)}_{,\nu}.
\end{align}
This yields
\begin{subequations}
    \begin{align}
    \dv{t}{\tilde\tau}&=-g^{tt}E_0 + g^{t\varphi}L_0,\qquad
    &&\dv{r}{\tilde\tau}=g^{rr}w'_{0r}(r),\\
    \dv{z}{\tilde\tau}&=g^{zz} w'_{0z}(z),\qquad
    &&\dv{\varphi}{\tilde\tau}=- g^{\varphi t}E_0+g^{\varphi\varphi}L_0.
\end{align}
\end{subequations}
Replacing the derivatives of $w_{0r,z}$ by the solutions Eq. \eqref{HJ_potentials} and using the components of the inverse metric Eq. \eqref{inverse_metric}, we find that these equations are actually the geodesic equations in the form \eqref{kerr_geodesics}, as anticipated.

\section{Action-angle formulation}\label{sec:aa}

We now turn to the action-angle formulation of the geodesic motion. The philosophy is very simple: we will switch to a new set of variables $\qty(x^\mu,p_\mu)\to\qty(q^\mu,J_\mu)$ in which the geodesic (that is, Hamilton) equation take the simple form
\begin{align}
    \dv{q^\mu}{\tau}=\Omega^\mu(J_\alpha),\qquad\dv{J_\mu}{\tau}=0,
\end{align}
which thus make explicit the periodicity of the system with respect to some \textit{angle variables} $q^\mu$. Here, the \textit{action variables} $J_\mu$ are constants, and the \textit{fundamental frequencies} given by $\Omega^\mu$ only depend upon them. Moreover, as we will see, this machinery comes with a prescription for computation the fundamental frequencies from the Hamiltonian. The main advantage of this formulation relies on the fact that it allows -- for radially bounded geodesics -- to express any dynamical quantity $f(\tau)$ in a Fourier series \cite{Goldstein2001,Schmidt:2002qk}
\begin{align}
    f(\tau)&=\sum_{\boldsymbol k\in\mathbb Z^3}f_{\boldsymbol{k}}e^{i\Omega_{\boldsymbol{k}}\tau}=\sum_{\boldsymbol k\in\mathbb Z^3}f_{\boldsymbol{k}}e^{i\boldsymbol k\cdot\boldsymbol\Omega},
\end{align}
with $\boldsymbol k\cdot\boldsymbol\Omega=\Omega_{\boldsymbol{k}}\tau=k_r\Omega_r+k_\theta\Omega_\theta+k_\varphi\Omega_\varphi$. The dynamics is thus entirely specified by a discrete number of Fourier coefficients,
\begin{align}
    q_{\boldsymbol{k}}\triangleq\frac{1}{\qty(2\pi)^3}\int_{[0,2\pi]^3}\dd[3]\Omega f\qty(\frac{\boldsymbol k\cdot\boldsymbol\Omega}{\Omega_{\boldsymbol{k}}})e^{i\boldsymbol k\cdot\boldsymbol\Omega}.
\end{align}
This is this very fact that makes action-angle formalism to lies at the heart of the two timescale expansion scheme for describing self-forced motion around a Kerr black hole \cite{Hinderer:2008dm,Pound:2021qin}.

Even if it appears conceptually simple, the rigorous derivation of action-angle variables for geodesic motion in Kerr spacetime is quite technical, the main difficulty arising from the fact that the level sets foliating the phase space are non-compact in the time direction, thus forbidding to use Liouville-Arnold theorem directly. We will proceed in three steps: (i) introduce the ideas behind action-angle formulation on the simplest example possible, a $N=1$ Hamiltonian system; (ii) discuss the generalization of Liouville-Arnold theorem needed for dealing with non-compact level sets and finally (iii) apply this formalism to obtain action variables and fundamental frequencies in Boyer-Lindquist coordinates.

All along this section, we will remain very pictorial and skip most of the lengthy computations as well as too subtle technical details, our aim being to give a pedagogical overview of the subject and to obtain the explicit expressions of the fundamental frequencies. The interested reader may refer to \cite{Schmidt:2002qk,arnold1989mathematical,Hinderer:2008dm,Goldstein2001} for more detailed developments.

\subsection{A first glance at action-angle variables} 

Let us expose the philosophy of action-angle formalism on the simplest example possible. Consider a one dimensional conservative Hamiltonian system described by the conjugated variables $(x,p)$, which is periodic in $x$ (either librating or rotating).

Turning to Hamilton-Jacobi description, we notice that the associated Hamilton principal function possesses the right structure to be the generating function of a type II canonical transformation $(x,p)\to(X,\alpha)$, $\alpha$ being the constant value of the Hamiltonian. One has the usual relations for this kind of transformation:
\begin{align}
    p=\pdv{W(x,\alpha)}{x}\eval_{\alpha=\text{cst}},\qquad X=\pdv{W(x,\alpha)}{\alpha}\eval_{x=\text{cst}}.
\end{align}

Going from the original variables $(x,p)$ to the action-angle ones $(q,J)$ is achieved through a sequence of two canonical transformations, $(x,p)\to(X,\alpha)\to(q,J)$. The first one is the type II canonical transformation mentioned above, whereas the second transformation is performed by introducing a new variable $J$ (the \defining{action variable}) to replace $\alpha$ as the new (constant) momentum. This variable is defined as
\begin{align}
    J\equiv\frac{1}{2\pi}\oint p\,\dd x.
\end{align} 
Here, $\oint$ designs the integration over a complete period of the motion. One has
\begin{align}
    \alpha=H=H(J),\qquad W=W(x,J).
\end{align}
The \defining{angle variable} $q$ is then defined as being the generalized coordinate conjugated to $J$:
\begin{align}
    q\equiv\pdv{W}{J}\eval_{x=\text{cst}}.
\end{align}
The main gain obtained using action-angle formalism is that Hamilton's equations take the simple form
\begin{align}
    \dv{q}{t}=\Omega(J),\qquad\dv{J}{t}=0,
\end{align}
where 
\begin{align}
    \Omega(J)\equiv\pdv{H(J)}{J}.\label{freq}
\end{align}
They are trivially solved by
\begin{align}
    q(t)=\Omega(J)\, t+\beta, \qquad J(t)=\text{constant},
\end{align}
with $\beta$ being an arbitrary integration constant. Moreover, the frequencies associated with the periodic motion appear explicitly. Namely, the change in $q$ over a whole period $\Delta\tau$ of the motion is
\begin{align}
    \Delta q\equiv\oint\pdv{q}{x}\dd x=\oint\pdv[2]{W}{q\,}{J}\dd q=\dv{J}\oint\pdv{W}{x}\dd x=\dv{J}\oint p\,\dd x=2\pi.
\end{align}
It is then straightforward to notice that $\Omega={2\pi}/{\Delta\tau}$. Consequently, $\Omega$
can be interpreted as the \defining{fundamental frequency} of the periodic motion, and Eq. \eqref{freq} gives a prescription for computing the system's fundamental frequency from the Hamiltonian expressed in terms of the action variable $J$.

\subsection{Generalized action-angle variables for non-compact level sets}\label{sec:generalized_Liouville}

Our subsequent goal is to generalize this analysis up to a level enabling the treatment of Kerr geodesic motion. The relevant generalization will take the form of a generalized version of the Liouville-Arnold theorem, discussed in Section \ref{sec:generalized_Liouville}.

Recall that Liouville-Arnold theorem states that, for any system completely integrable in the neighbourhood of some compact and simply connected level set, one can define symplectic coordinates exhibiting the same features than the action-angle variables introduced above. Moreover, a prescription for defining them explicitly in a coordinate-invariant way can be formulated.

However, this result cannot be  applied directly to Kerr geodesic motion, because, in this case, the phase space is not bounded in the time direction, leading to the non-compactness of the level sets. Nevertheless, a generalization of the Arnold-Liouville result to systems possessing non-compact level-sets has been provided by Fiorani, Giachetta and Sardanashvily in 2003 \cite{Fiorani_2003}. Formally, the statement is the following: 
\begin{theorem}
Let be an Hamiltonian system possessing $N$ degrees of freedom and completely integrable in $\mathcal U\supset\mathcal M_\mathbf{p}$ for which the vector fields $v^a_\alpha\equiv \Omega^{ab}\nabla_bP_\alpha$ are complete on $\mathcal U$ and such that the level sets $\mathcal M_\mathbf{p}$ foliating $\mathcal U$ are all diffeomorphic one to another. Then:
\begin{itemize}
\item $\exists k\in\mathbb N$, with $0\leq k\leq N$, such that $\mathcal{M}_\mathbf{p}$ is diffeomorphic to $T^k\times \mathbb R^{N-k}$ (with $T^k$ the k-torus). Moreover, there exists some open set $\mathcal V\supset	\mathcal M_\mathbf{p}$ diffeomorphic to $T^k\times\mathbb R^{N-k}\times \mathcal B$, with $\mathcal B$ being an open ball.
\item There exist symplectic coordinates $(q^\alpha,J_\alpha)$ ($\alpha=1,\ldots,N$) whose first $k$ variables $q_\alpha$ are $2\pi$-periodic ($q_\alpha+2\pi\equiv q_\alpha$) and for which the first integrals $P_\alpha$ can be expressed as functions of the $J_\alpha$ only: $P_\alpha=P_\alpha(J_1,\ldots,J_N)$. The coordinates $(q^\alpha,J_\alpha)$ are called \textit{generalized action-angle variables}.
\end{itemize}
\end{theorem}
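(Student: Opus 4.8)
The plan is to follow the classical Liouville--Arnold strategy, replacing the compactness step by the structure theory of abelian Lie-group actions. The starting observation is that involution of the first integrals, $\pb{P_\alpha}{P_\beta}=0$, is equivalent to the commutation $\comm{v_\alpha}{v_\beta}=0$ of the associated Hamiltonian vector fields $v_\alpha$; since the $P_\alpha$ are functionally independent on $\mathcal U$, the $v_\alpha$ are pointwise linearly independent and span the tangent space to each level set $\mathcal M_\mathbf p$, which is therefore an $N$-dimensional embedded submanifold, and Lagrangian because $\mathbf\Omega(v_\alpha,v_\beta)=\pb{P_\alpha}{P_\beta}=0$. By the completeness hypothesis the flows of the $v_\alpha$ are defined for all parameter values, and since they commute they assemble into a smooth action of the abelian group $\mathbb R^N$ on $\mathcal U$ that preserves each $\mathcal M_\mathbf p$; this action is locally free and, restricted to a connected component of $\mathcal M_\mathbf p$, transitive.

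First I would fix a base point on a connected component of $\mathcal M_\mathbf p$ and consider its isotropy subgroup $\Lambda\subset\mathbb R^N$. Being the stabiliser of a locally free smooth action, $\Lambda$ is a discrete subgroup of $\mathbb R^N$, hence a lattice of some rank $k$ with $0\le k\le N$; the orbit map descends to a diffeomorphism $\mathbb R^N/\Lambda\xrightarrow{\ \sim\ }\mathcal M_\mathbf p$, and after a linear change of basis adapting $\mathbb R^N$ to $\Lambda$ one gets $\mathcal M_\mathbf p\cong T^k\times\mathbb R^{N-k}$. To propagate this to a neighbourhood, the $N$ functions $P_\alpha$ serve as coordinates on an open ball $\mathcal B$ labelling nearby level sets, while the $\mathbb R^N$-orbit parameters (together with the identification above) furnish the directions along the fibres, producing a diffeomorphism $\mathcal V\cong T^k\times\mathbb R^{N-k}\times\mathcal B$. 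The hypothesis that all level sets in $\mathcal U$ are mutually diffeomorphic is precisely what forbids the lattice rank $k$ from jumping as $\mathbf p$ varies, so that this product structure is uniform over $\mathcal B$.

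Next I would construct the symplectic coordinates. For the $k$ compact directions pick a smoothly $\mathbf p$-dependent basis $\gamma_1(\mathbf p),\ldots,\gamma_k(\mathbf p)$ of cycles generating $H_1(T^k;\mathbb Z)$ on $\mathcal M_\mathbf p$, and define
\begin{align}
J_i(\mathbf p)\triangleq\frac{1}{2\pi}\oint_{\gamma_i(\mathbf p)}p_{\mathfrak i}\,\dd x^{\mathfrak i},\qquad i=1,\ldots,k,
\end{align}
which is well defined and smooth by closedness of $\mathbf\Omega$, the Lagrangian property of $\mathcal M_\mathbf p$, and Stokes' theorem (independence of the chosen representative cycle and local primitive); for the remaining directions take $J_{k+1},\ldots,J_N$ to be a functionally independent subcollection of the $P_\alpha$. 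Checking that the $(J_1,\ldots,J_N)$ form coordinates transverse to the level sets — i.e.\ that $\partial(J)/\partial(P)$ is invertible — is the one genuinely computational point, handled exactly as in the compact case by differentiating the period integrals under the integral sign; it follows that each $P_\alpha$ is a function of the $J$'s alone. The angle variables are then produced from the locally defined generating function $W(x,J)=\int^x p_{\mathfrak i}\,\dd x^{\mathfrak i}$ (path integral within a level set from a smoothly chosen base point), whose periods along $\gamma_j$ are $2\pi J_j$ by construction, by setting
\begin{align}
q^\alpha\triangleq\pdv{W}{J_\alpha}.
\end{align}
The change of variables is canonical because it is generated by $W$ (equivalently $\mathbf\Omega=\dd J_\alpha\wedge\dd q^\alpha$ by direct computation), the increment of $q^i$ around $\gamma_j$ is $\partial/\partial J_i$ of $2\pi J_j$, namely $2\pi\delta^i_j$, so the first $k$ angles are genuinely $2\pi$-periodic and the remaining $N-k$ coordinates run along the $\mathbb R^{N-k}$ factor; and since $H$ is among the $P_\alpha$ it depends on $J$ only.

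I expect the main obstacle to be the global and topological bookkeeping rather than any single calculation: showing that the $\mathbb R^N$-action descends to a diffeomorphism $\mathbb R^N/\Lambda\cong\mathcal M_\mathbf p$ on each connected component, and — crucially — that the period lattice has locally constant rank $k$ so that $\mathcal V$ really splits as $T^k\times\mathbb R^{N-k}\times\mathcal B$. This is exactly where completeness of the $v_\alpha$ and the hypothesis that the level sets are pairwise diffeomorphic enter, and it is the part of the argument with no counterpart in the elementary one-dimensional warm-up.
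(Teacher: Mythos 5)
The paper itself does not prove this theorem: it is quoted from Fiorani, Giachetta and Sardanashvily \cite{Fiorani_2003}, and only the prescription $J_\alpha=\frac{1}{2\pi}\oint_{\gamma_\alpha}\mathbf\Theta$ and the residual ambiguities are discussed afterwards. Your argument is, in outline, the standard proof of that result (and of the cited reference): involution turns the complete Hamiltonian vector fields into a locally free action of $\mathbb R^N$ on each level set, the isotropy subgroup is a rank-$k$ lattice $\Lambda$, whence $\mathcal M_{\mathbf p}\cong\mathbb R^N/\Lambda\cong T^k\times\mathbb R^{N-k}$, and the action/angle variables are then built from period integrals and a generating function exactly as in the compact case. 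This is sound, and it correctly identifies where the two extra hypotheses (completeness of the $v_\alpha$, mutually diffeomorphic fibres) enter. Two points that you defer to ``bookkeeping'' are genuinely the substance of the non-compact generalization and should be spelled out if this were to stand as a full proof: (i) the \emph{smooth} dependence of the period lattice $\Lambda(\mathbf p)$ on $\mathbf p$ — obtained by applying the implicit function theorem to the fixed-point equation of the $\mathbb R^N$-flow, with the pairwise-diffeomorphism hypothesis preventing the rank $k$ from jumping — which is what makes the cycles $\gamma_i(\mathbf p)$, hence the $J_i$, smooth and yields the product structure of $\mathcal V$; and (ii) the existence of a primitive $\mathbf\Theta$ of $\mathbf\Omega$ on a whole neighbourhood of $\mathcal M_{\mathbf p}$, which is not automatic from the topology of $T^k\times\mathbb R^{N-k}\times\mathcal B$ when $k\geq 2$ and must be extracted from the Lagrangian character of the fibres before the period integrals are even defined. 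Neither is an obstruction, so your proposal is a correct reconstruction of the argument the paper delegates to the literature.
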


This theorem comes with a prescription for computing explicitly the action variables $J_\alpha$. Because the symplectic form is closed, there exists, in some neighbourhood of $\mathcal M_\mathbf{p}$, a one-form $\mathbf \Theta$ (the \defining{symplectic potential}) defined through $\dd\mathbf\Theta=\mathbf\Omega$. The set of inequivalent closed paths in the level set $\mathcal M_\mathbf{p}$ are given by a set of generators $\gamma_1,\ldots,\gamma_k$ of the fundamental homotopy group of $\mathcal M_\mathbf{p}$,
\begin{align}
\Pi_1\qty(\mathcal M_\mathbf{p})\simeq\qty(\mathbb Z_k,+_{\!\!\!\mod k}).
\end{align}
One then defines the generalized action variables as
\begin{align}
J_\alpha\equiv\frac{1}{2\pi}\oint_{\gamma_\alpha}\mathbf\Theta,\qquad\alpha=1,\ldots,k.
\end{align}
This definition turns out to be independent of the choice of the symplectic potential and of the generators. Moreover, this prescription is unique up to the redefinition of the origin of the angle variables
\begin{align}
q_\alpha\to q_\alpha+\pdv{Z(J_\beta)}{J_\alpha},\qquad J_\alpha\to J_\alpha\label{ambig2}
\end{align}
with $Z$ an arbitrary function of the action variables, and up to transformations of the form
\begin{align}
q_\alpha\to A_{\alpha\beta}q_\alpha,\qquad J_\alpha\to B_{\alpha\beta}J_\beta\label{ambig1}
\end{align}
where $A_{\alpha\beta}$, $B_{\alpha\beta}$ are constant real matrices with $A_{\alpha\beta}B_{\alpha\gamma}=\delta_{\beta\gamma}$ such that the $J_\alpha$'s are left invariant for $\alpha=1,\ldots,k$.
In this generalized action-angle formulation, Hamilton's equations take the simple form
\begin{align}\label{hamilton_general_AA}
\dv{q_\alpha}{t}=\Omega_\alpha({\mathbf J}),\qquad\dv{J_\alpha}{t}=0
\end{align}
with $\Omega_\alpha(\mathbf J)\equiv\pdv{H(\mathbf J)}{J_\alpha}$. Their solutions read
\begin{align}
q_\alpha(t)=\Omega_\alpha(\mathbf J_0)t+q_{\alpha 0},\qquad J_\alpha(t)=J_{\alpha0}.
\end{align}
For $\alpha=1,\ldots,k$, the functions $\Omega_\alpha(\mathbf J)$ are interpreted as the angular frequencies of the motion.

\subsection{Action-angle formulation of the geodesic motion in Kerr}

Given the results of the previous sections of this chapter, the generalized Arnold-Liouville theorem with $k=3$ and $N=4$ applies to \textit{radially bounded} Kerr geodesic motion. The generators of $\Pi_1(\mathcal M_\mathbf{p})$ can be constructed directly from the periodic motion $x^i(\tau)=(r,\theta,\varphi)$ in Boyer-Lindquist coordinates \cite{Schmidt:2002qk}. The ambiguity \eqref{ambig1} in the definition of the action variables takes the form\begin{align}
J_t\to\gamma J_t+v^iJ_i,\qquad J_i\to J_i.
\end{align}
Here, $\gamma\in\mathbb R$ and $v^i\in\mathbb R^3$ are arbitrary constants. It can be resolved by setting
\begin{align}
J_t\equiv\frac{1}{2\pi}\oint_{\gamma_t}\mathbf\Theta,
\end{align}
with $\gamma_t$ being a $2\pi$-lenght integral curve of the extension to $ \mathcal M$ of the timelike Killing vector field $\partial_t$  \cite{Hinderer:2008dm}. As shown in \cite{Hinderer:2008dm}, this definition is independent of the curve $\gamma_t$ chosen.

This scheme can be shown to realize a \textit{coordinate-invariant} definition of \textit{generalized action-angle variables} $(q^t,q^r,q^\theta,q^\varphi,J_t,J_r,J_\theta,J_\varphi)$ for bounded geodesic motion in Kerr spacetime. This definition is unique up to the (trivial) residual ambiguity
\begin{align}
q^\alpha\to q^\alpha+\pdv{Z(\mathbf J)}{J_\alpha},\qquad J_\alpha\to J_\alpha.
\end{align}

We will now explicit the action variables $J_\alpha$ and the frequencies $\Omega_\alpha$ in Boyer-Lindquist coordinates. A detailed computation can be found in \cite{Schmidt:2002qk} and is reviewed in \cite{Hinderer:2008dm}. Using the symplectic potential $\mathbf\Theta=p_\mu\dd x^\mu$, the action variables can be written in terms of the first integrals $P_\alpha=(H,E_0,L_0,Q_0)$ as
\begin{subequations}\label{JofP}
\begin{align}
    J_t&=\frac{1}{2\pi}\int_0^{2\pi}p_t\dd t=-\mu E_0,\qquad
    &&J_r=\frac{1}{2\pi}\oint\frac{\sqrt{R(r)}}{\Delta(r)}\dd r\\
    J_\theta&=\frac{1}{2\pi}\oint\sqrt{\Theta(\cos^2\theta)}\dd\theta,\qquad
    &&J_\phi=\frac{1}{2\pi}\oint p_\phi\dd\phi=\mu L_0.
\end{align}
\end{subequations}

Expressing the frequencies is slightly more cumbersome. One defines the integrals 
\begin{subequations}
    \begin{align}
    W&\triangleq\int_{r_1}^{r_2}\frac{r^2E_0(r^2+a^2)-2Mra(L_0-aE_0)}{\Delta(r)\sqrt{R(r)}}\dd r,\qquad
    X\triangleq\int_{r_1}^{r_2}\frac{\dd r}{\sqrt{R(r)}},\\
    Y&\triangleq\int_{r_1}^{r_2}\frac{r^2}{\sqrt{R(r)}}\dd r,\qquad
    Z\triangleq \int_{r_1}^{r_2}\frac{r\qty[L_0r-2M(L_0-aE_0)]}{\Delta(r)\sqrt{R(r)}}\dd r
\end{align}
\end{subequations}
with $r_{1,2}$ the two turning points of the radial motion (\textit{i.e.} the two largest roots of $R(r)$). One also denotes $z_\pm$ the roots of $\Theta(z)$ ($z_-<z_+$) with $z\triangleq\cos^2\theta$. Defining $\beta^2\triangleq a^2(1-E_0^2)$ and $k\triangleq\sqrt{z_-/z_+}$, one has
\begin{subequations}
    \boxedeqn{
    \Omega_t&=\frac{K(k)W+a^2z_+E_0\qty[K(k)-E(k)]X}{K(k)Y+a^2z_+\qty[K(k)-E(k)]X},\\
    \Omega_r&=\frac{\pi K(k)}{K(k)Y+a^2z_+\qty[K(k)-E(k)]X},\\
    \Omega_\theta&=\frac{\pi\beta\sqrt{z_+}X/2}{K(k)Y+a^2z_+\qty[K(k)-E(k)]X},\\
    \Omega_\phi&=\frac{K(k)Z+L_0\qty[\Pi(z_-,k)-K(k)]X}{K(k)Y+a^2z_+\qty[K(k)-E(k)]X}.
}{Fundamental frequencies of Kerr geodesic motion}
\end{subequations}
Here $K$, $E$ and $\Pi$ are the standard Legendre forms of the complete elliptic integrals. We use the conventions of \cite{olver10}, which are reviewed in Appendix \ref{app:ellipticFunctions}.

The most important take home message is that the action variables $J_\alpha$ are only functions of the first integrals of motion $P_\alpha$, $J_\alpha=J_\alpha(P_\beta)$ (see Eq. \eqref{JofP} for the explicit expressions). In terms of these generalized action-angle variables, the geodesic equations take the simple form
\begin{equation}\label{geod_aa_unperturbed}
    \dv{J_\alpha}{\tau} = 0, \quad \dv{q_\alpha}{\tau} = \omega_\alpha(J_\beta).
\end{equation}
These equations can be easily integrated to obtain
\begin{equation}
    J_\alpha(\tau) = J_{\alpha}(0), \qquad q_\alpha(\tau) = q_{\alpha}(0) + \omega_\alpha(J_\beta)\tau .
\end{equation}
The constancy of the action variables is a direct consequence of the conservation of the first integrals of the geodesic motion in Kerr spacetime.

\chapter[Classification of Kerr geodesics]{Classification\\of Kerr geodesics}\label{chap:classification}

\vspace{\stretch{1}}

This Chapter will be devoted to the classification of Kerr polar motion and (near-)NHEK radial motion, relying on the equations of motion obtained in Chapter \ref{chap:kerr}. As noticed before, radial and polar geodesic equations in these three spacetimes are decoupled one from another, and can thus be studied independently. Moreover, they both take the form
\begin{align}
    \dv{y}{\lambda}=\pm_y\sqrt{V(y)},
\end{align}
with $V(y)$ some polynomial \textit{effective potential}. Consequently, the study of the generic features of the motion (existence and location of turning points\ldots) reduces mainly to the study of the roots and the sign of the potential $V(y)$ for all the possible values of the parameters upon which it depends. We will also provide explicit solutions to the geodesic equations for each possible type of motion, thus verifying in a concrete way one of the main consequences of complete integrability of the system.

In this chapter, we will study Kerr polar motion in full generality, but leave the radial motion aside. Its study is actually much more involved, since the associated potential is of degree four, whereas it is only of degree two for the polar motion.  Instead, we will analyze in details the radial geodesic motion in both NHEK and near-NHEK spacetimes. This analysis turns out to be relevant for at least three reasons: (i) it will provide us with a deep comprehension of the peculiarities of the motion near highly spinning black holes, and in particular of the behaviour of their spherical geodesics. (ii) As we will see, any geodesic can be obtained by applying a well-chosen conformal transformation to a \textit{spherical} geodesic. This enables to use the computational scheme extensively applied in \cite{Compere:2017hsi} to generate the waveforms emitted by objects moving along geodesics in the near-horizon region. Such an analysis can allow to unravel ``smoking gun'' signatures for the existence of highly spinning black holes in Nature. Finally, (iii) this radial classification was also the work which paved the road to the one of generic Kerr radial motion, which was completed by G. Compère, Y. Liu and J. Long \cite{Compere:2021bkk}.

All along this chapter, we will adopt a different notation than the one used in all the rest of this thesis for Kerr geodesic conserved quantities. The main reason for this switch is to match with the conventions used in \cite{Compere:2020eat}. We also take the opportunity to restore the dependence of the equations upon $\mu$.  This enables our classification scheme to hold for null geodesics as well as for timelike ones. The temporary convention is given by the following dictionary
\begin{align}
    \begin{split}
        &\hat t \to t,\quad,\hat r\to r\quad,\hat\varphi\to\phi,\quad E_0\to \frac{\hat E}{\mu},\quad L_0\to\frac{\ell}{\mu},\\ &L_*\to\frac{\ell_*}{\mu},\quad L_\circ\to\frac{\ell_\circ}{\mu},\quad Q_0\to \frac{Q}{\mu^2}.
    \end{split}
\end{align}

\section{Classification of polar motion}\label{sec:polar_kerr}

This section aims to describe in full detail the classification of the polar geodesic motion around a Kerr black hole. The classification scheme used has been first introduced in \cite{Kapec:2019hro}, and subsequently completed in \cite{Compere:2020eat}. The taxonomy is depicted in Figs. \ref{fig:angularKerrTaxonomy} and \ref{fig:angularNullKerrTaxonomy}. The phenomenology of the polar behavior is principally governed by the sign of $Q$. The details of the classification are summarized in Tables \ref{table:taxPolarKerr1} and \ref{table:taxPolarKerrNull}. 

Let us recall that, due to mirror symmetry between the two hemispheres, the natural variable to describe the polar motion is $z\triangleq\cos^2\theta$. Let us define
\begin{equation}
     \epsilon _0(\hat E,\mu)  \triangleq a^2 (\hat E^2 - \mu^2), \label{defeps0}
\end{equation}
and
\begin{equation}\label{defzpm}
    z_\pm\triangleq\Delta_\theta\pm \text{sign}\, { \epsilon _0} \sqrt{\Delta_\theta^2+\frac{Q}{ \epsilon _0}},\qquad\Delta_\theta\triangleq\frac{1}{2}\qty(1-\frac{Q+\ell^2}{ \epsilon _0}),\qquad z_0\triangleq\frac{Q}{Q+\ell^2}. 
\end{equation}

The classification is determined by the roots of the polar potential \eqref{eq:kerr_vtheta}. Assuming $a \neq 0$, we can rewrite it as
\begin{equation}
    \Theta(z)=-\ell^2 z+(Q+ \epsilon _0 z)(1-z)= \left\lbrace\begin{array}{ll}
         \epsilon _0(z_+-z)(z-z_-), &  \epsilon _0\neq 0 ;\\
        (Q+\ell^2)(z_0-z), &  \epsilon _0=0.
    \end{array}\right.
\end{equation}

Our definition of the roots $z_\pm$ implies the ordering $z_-<z_+$ (and respectively $z_+<z_-$) for $ \epsilon _0>0$ (respectively $ \epsilon _0<0$). This is a convenient convention because in both cases the maximal angle will be related to $z_+$. The positivity of the polar potential implies that the poles $z=1$ ($\theta=0,\pi$) can only be reached if $\ell=0$. Note that when a geodesic crosses a pole, its $\varphi$ coordinates discontinuously jump by $\pi$. The invariance of the polar geodesic equation under $(\hat{E},\ell)\to(-\hat{E},-\ell)$ allows us to reduce the analysis to prograde $\ell\geq 0$ orbits. We distinguish the orbits with angular momentum $\ell \neq 0$ and without, $\ell = 0$:

\paragraph{I. Nonvanishing angular momentum $\ell\neq0$.} We must consider the following cases:
\begin{enumerate}
    \item {\textbf{$\boldsymbol{-(\ell-a\hat{E})^2\leq Q<0}$}} can only occur if $ \epsilon _0>0$, otherwise leading to $\Theta<0$. For $ \epsilon _0>0$, the motion is \textit{vortical}; \textit{i.e.}, it takes place only in one of the two hemispheres without crossing the equatorial plane and is bounded by
    \begin{equation}
        0<z_-\leq z\leq z_+<1.
    \end{equation}
This vortical motion can only occur provided $\ell^2\leq\qty(\sqrt{ \epsilon _0}-\sqrt{-Q})^2$. 
    
    \item {$\boldsymbol{Q>0}$} leads to motion crossing the equator and symmetric with respect to it, bounded by
    \begin{subequations}
    \begin{align}
       0 &\leq z \leq z_+<1\qquad( \epsilon _0\neq0),\\
        0 & \leq z \leq z_0<1\qquad( \epsilon _0=0).
    \end{align}
    \end{subequations}
    We will refer to such a motion as \textit{pendular}; 
    
    \item {$\boldsymbol{Q=0}$} allows us to write 
    \begin{equation}
        \Theta(z)= \epsilon _0\, z\, (1-\frac{\ell^2}{ \epsilon _0}-z).
    \end{equation}
    If $ \epsilon _0\leq 0$, the positivity of the polar potential enforces the motion to be equatorial. For $ \epsilon _0\geq 0 $, equatorial motion exists at $z=0$. For $ \epsilon _0\geq 0 $  and $\ell^2 \leq  \epsilon _0$, another motion exists bounded by
    \begin{equation}
        0 < z \leq 1-\frac{\ell^2}{ \epsilon _0}<1,
    \end{equation}
    which is a marginal case separating the pendular and vortical regimes; the motion then admits only one turning point and asymptotes to the equator both at future and at past times. Since we could not find a terminology for such a motion in the literature, we propose to call it \emph{equator-attractive}\footnote{This neologism accurately reflects the fact that the motion is polar and that the equator is an attractor. The terminology ``homoclinic'' is already used in the literature to refer to radial motion.}. In the special case where $z=0$ at the initial time, the motion remains $z=0$ at all times: it is \emph{equatorial}. 
\end{enumerate}

\vspace{4pt}
\noindent
\paragraph{II. Vanishing angular momentum $\ell=0$.} The polar potential reduces to
\begin{equation}
\Theta(z)=\left\lbrace\begin{array}{ll}
 \epsilon _0\qty(\frac{Q}{ \epsilon _0}+z)\qty(1-z), &  \epsilon _0\neq 0\\
Q\qty(1-z), &  \epsilon _0=0. 
\end{array}\right.
\end{equation}
We distinguish the following cases:
\begin{enumerate}
    \item {$\boldsymbol{ \epsilon _0=0}$} leads to motion over the whole polar range $0\leq z \leq 1$ for $Q>0$; we called it \textit{polar} motion. The only turning point is located at $z=1$. For $Q=0$, the potential vanishes identically and the polar angle remains constant; we call it \textit{azimuthal} motion; for $Q<0$ the potential is positive only if the motion takes place along the black hole axis $z=1$; we call it \textit{axial} motion.
    \item {$ \boldsymbol{\epsilon _0>0}$} leads to a \textit{polar} motion $0\leq z\leq1$ for {$Q>0$}. For $Q=0$ and $z=0$, the motion is \textit{equatorial}. For $Q=0$ and $z\neq0$, $z=0$ is an asymptotic attractor of the motion which only takes place in one of the hemispheres. It is therefore a special case of \textit{equator-attractive} motion where the turning point is at the pole $z=1$. For {$Q<0$}, the motion is either \textit{vortical} ({$0<-\frac{Q}{ \epsilon _0}\leq z \leq 1$}) for {$- \epsilon _0<Q<0$} or \textit{axial} with $z=1$ for {$Q\leq- \epsilon _0<0$}.
    \item {$\boldsymbol{ \epsilon _0<0}$} leads to a \textit{polar} motion $0\leq z\leq1$ for {$Q\geq- \epsilon _0>0$} and to a \textit{pendular} one ({$0\leq z \leq -\frac{Q}{ \epsilon _0}<1$}) for {$0<Q<- \epsilon _0$}. For $Q=0$, the motion is either \textit{equatorial} or \textit{axial} for the potential to be positive. For {$Q<0$}, the motion also has to take place along the axis.
\end{enumerate}
Let us finally notice that, for any value of $ \epsilon _0$ and $Q\geq -(a E_0 )^2$, an axial motion is always possible.

 \begin{table}[!hbt]
    \centering
    \begin{tabular}{|c|c|c|c|}
    \hline
        \textbf{Energy} & \textbf{Carter constant} & \textbf{Polar range} & \textbf{Denomination}\\\hline
        $ \epsilon _0<0$ ($|\hat{E}|<\mu$) & $-(\ell-a\hat{E})^2\leq Q<0$ & $\emptyset$ &\rule{0pt}{13pt} $\emptyset$\\\cline{2-4}
         & $Q=0$ & $z=0$ &\rule{0pt}{13pt} Equatorial$(\hat{E})$\\\cline{2-4}
          & $Q>0$ & $0\leq z\leq z_+<1$ & \rule{0pt}{13pt}Pendular$(\hat{E},Q)$\\\hline
           $ \epsilon _0=0$ ($|\hat{E}|=\mu$) & $-(\ell-a\hat{E})^2\leq Q<0$ & $\emptyset$ & \rule{0pt}{13pt}$\emptyset$\\\cline{2-4}
         & $Q=0$ & $z=0$ &\rule{0pt}{13pt} Equatorial${}_\circ$\\\cline{2-4}
          & $Q>0$ & $0\leq z\leq z_0<1$ &\rule{0pt}{13pt} Pendular${}_\circ(Q)$\\\hline
           $ \epsilon _0>0$ ($|\hat{E}|>\mu$) & $-(\ell-a\hat{E})^2\leq Q<0$ & $0<z_-\leq z \leq z_+<1$ &\rule{0pt}{13pt} Vortical$(\hat{E},Q)$\\\cline{2-4}
         & $Q=0$ and $ \epsilon _0\geq\ell^2$ & \begin{tabular}{c} \rule{0pt}{13pt}$0 < z \leq 1-\frac{\ell^2}{ \epsilon _0} <1$\\ ($ \text{sign}\,{\cos\theta}$ fixed)\end{tabular} &\begin{tabular}{c}\rule{0pt}{13pt} Equator-\\attractive$(\hat{E})$\end{tabular}\\\cline{3-4}
        & &$z=0$&Equatorial$(\hat{E})$  \\\cline{2-4}
               & $Q>0$ & $0\leq z\leq z_+<1$ &\rule{0pt}{13pt} Pendular$(\hat{E},Q)$\\\hline
         \end{tabular}
       \caption{Polar taxonomy of Kerr geodesics  with $\ell\geq 0$. The orbits with $\ell<0$ are obtained from $\ell>0$ by flipping the signs of both $\hat{E}$ and $\ell$.\\ }
    \label{table:taxPolarKerr1}
\end{table}

\begin{landscape}

\phantom{a}

\vspace{\stretch{1}}

\begin{table}[!htb]
    \centering
        \begin{tabular}{|c|c|c|c|}
\hline
        \textbf{Energy} &\textbf{ Carter constant} & \textbf{Polar range} & \textbf{Denomination}\\\hline
        $ \epsilon _0<0$ ($|\hat{E}|<\mu$) & $ -(a \hat E)^2 \leq Q<0$ & $z=1$ &\rule{0pt}{13pt} Axial${}^0(\hat{E},Q)$\\\cline{2-4}
        & $Q=0$ & $z=0,1$ &\rule{0pt}{13pt} \begin{tabular}{c}\rule{0pt}{13pt}Equatorial${}^0(\hat{E})$\\ Axial${}^0(\hat{E})$\end{tabular}\\\cline{2-4}
         & $0<Q<- \epsilon _0$ & $0\leq z\leq- \frac{Q}{ \epsilon _0}<1$ &\rule{0pt}{13pt} Pendular${}^0(\hat{E},Q)$\\\cline{2-4}
          & $0<- \epsilon _0\leq Q$ & $0\leq z\leq 1$ & \rule{0pt}{13pt}Polar${}^0(\hat{E},Q)$\\\hline
           $ \epsilon _0=0$ ($|\hat{E}|=\mu$) & $-(a\hat{E})^2\leq Q<0$ & $z=1$ & \rule{0pt}{13pt}Axial${}^0_\circ(Q)$\\\cline{2-4}
         & $Q=0$ & $z=\text{constant}$ &\rule{0pt}{13pt} Azimuthal${}^0_\circ$\\\cline{2-4}
          & $Q>0$ & $0\leq z\leq 1$ &\rule{0pt}{13pt} Polar${}^0_\circ(Q)$\\\hline
           $ \epsilon _0>0$ ($|\hat{E}|>\mu$) & $ -(a \hat E)^2 \leq Q\leq - \epsilon _0<0$ & $z=1$ &\rule{0pt}{13pt} Axial${}^0(\hat{E},Q)$\\\cline{2-4}
           & $- \epsilon _0<Q<0$ & $0<-\frac{Q}{ \epsilon _0}\leq z\leq 1$ &\rule{0pt}{13pt} Vortical${}^0(\hat{E},Q)$\\\cline{2-4}
         & $Q=0$  & $0 < z \leq  1$ ($ \text{sign}\,{\cos\theta}$ fixed)&\rule{0pt}{13pt} \begin{tabular}{c}Equator-\\attractive${}^0(\hat{E})$\end{tabular}\\\cline{3-4}
         &   & $z=0$ & \rule{0pt}{13pt}Equatorial${}^0(\hat{E})$ \\\cline{2-4}
                        
           & $Q>0$ & $0\leq z\leq1$ & \rule{0pt}{13pt}Polar${}^0(\hat{E},Q)$\\\hline
\rule{0 pt}{13 pt}$ \epsilon _0\in\mathbb{R}$ & $Q\geq - (a \hat E)^2$ & $z=1$ & Axial$^0( E_0 ,Q)$\\\hline
\end{tabular}
    \caption{Polar taxonomy of Kerr  geodesics  with $\ell =  0$.\vspace{20pt}}
    \label{table:taxPolarKerrNull}
\end{table}

\vspace{\stretch{1}}

\end{landscape}

\begin{figure}[!ph]\vspace{0pt}
    \centering
    \vspace{-30pt}
    \includegraphics[width=10.5cm]{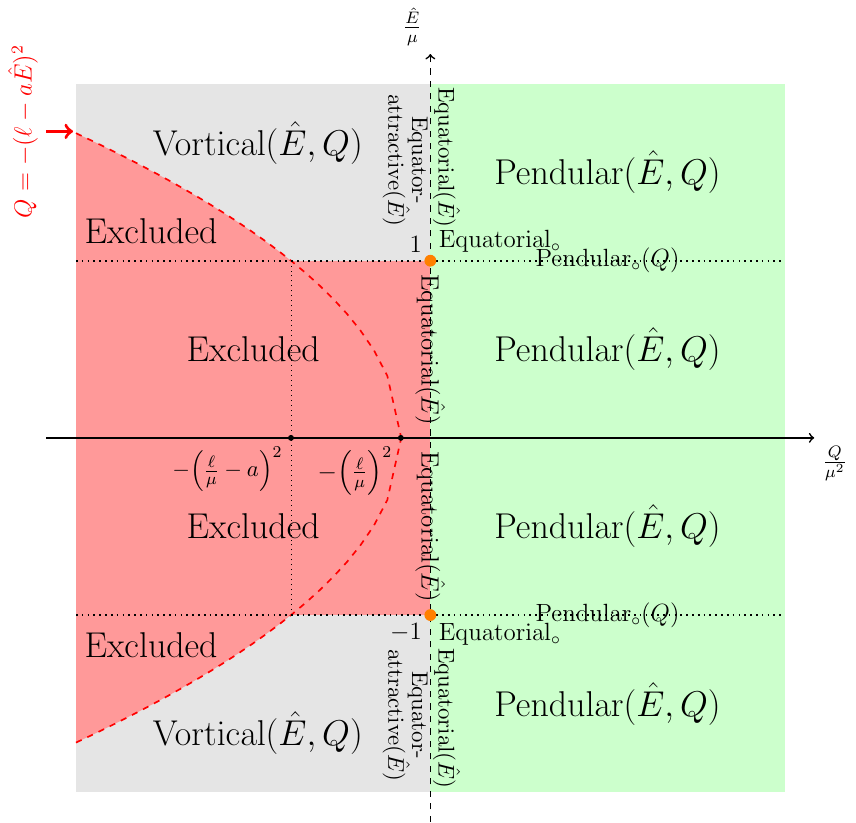}
    \caption{Polar taxonomy of $\ell\neq 0$ Kerr geodesics. Equator-attractive$(\hat{E})$ orbits become Equatorial$(\hat{E})$ orbits when the initial angle is at the equator. }\vspace{0pt}
    \label{fig:angularKerrTaxonomy}  
    \centering
    \includegraphics[width=10.5cm]{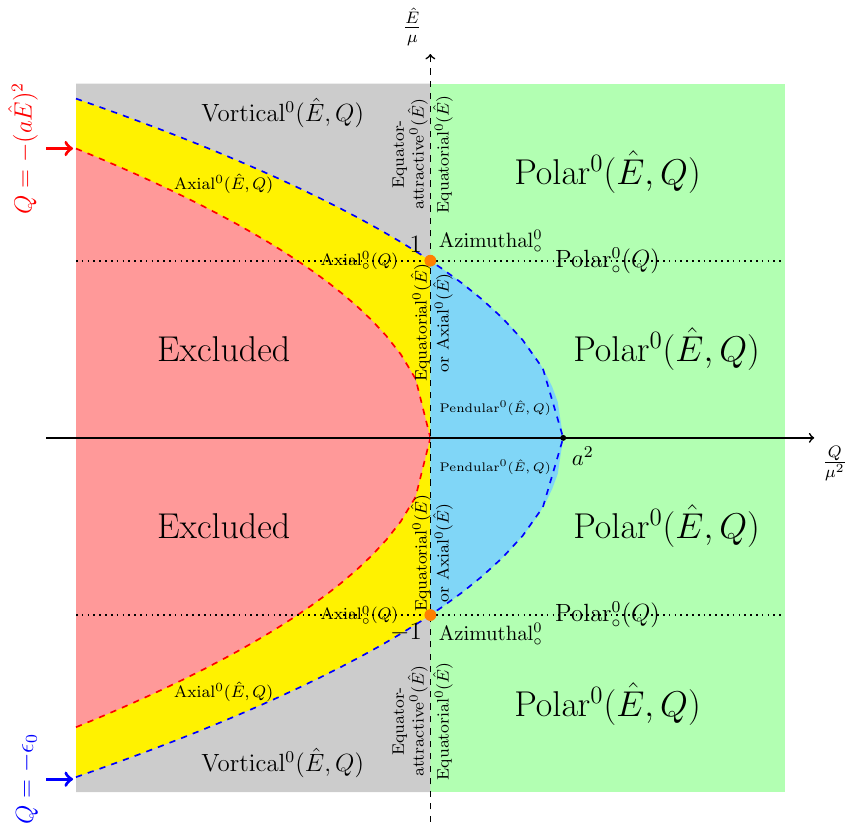}
    \caption{Polar taxonomy of $\ell= 0$ Kerr geodesics. In addition to the possible motions depicted in the figure, an axial motion is always possible for any value of $ \epsilon _0$ and $Q\geq-(a E_0 )^2$.}
    \label{fig:angularNullKerrTaxonomy}\vspace{0pt}
\end{figure}

\clearpage

\subsection{Solution to the polar integrals} 

After having classified the different types of motion allowed, we will provide manifestly real and positive explicit solutions in terms of elliptic integrals for each type of polar motion with $\ell \neq 0$, in line with the recent analysis \cite{Kapec:2019hro}. All such integrals will turn out to agree with Ref. \cite{Kapec:2019hro}, but our presentation will be slightly simpler.

The solution to the polar integrals \eqref{eqn:lambda} and \eqref{eqn:polarIntegrals} can be organized in terms of the categories of polar motion with $\ell \neq 0$: 
\begin{center}
    \begin{tabular}{|c|c|c|c|}
    \cline{2-4}
        \multicolumn{1}{c|}{ } & \textbf{Vortical} & \textbf{Equator-attractive} & \textbf{Pendular} \\\hline
       \rule{0pt}{13pt} $\boldsymbol{ \epsilon_0<0}$ & $\emptyset$ & $\emptyset$ & Pendular$(\hat E,Q)$\\\hline
       \rule{0pt}{13pt} $\boldsymbol{ \epsilon _0=0}$ & $\emptyset$ & $\emptyset$ & Pendular${}_*(Q)$\\\hline
       \rule{0pt}{13pt} $\boldsymbol{ \epsilon _0>0}$ & Vortical$(\hat E,Q)$ & Equator-attractive$(\hat E)$ & Pendular$(\hat E,Q)$\\\hline
    \end{tabular}
\end{center}

Each type of motion yields to a specific decomposition of the line integrals $\sintline$ in terms of basic integrals. In order to simplify the notations, we drop the ``$f$" indices labeling the final event and define $h \equiv  \text{sign}\,{ \cos\theta}$, $\theta_a \triangleq \arccos \sqrt{z_a}$ ($a=+,-,0$), as well as the initial and final signs $\eta_i$, $\eta$:
 \begin{align}
\eta_i \triangleq   - s_\theta^i  \,\text{sign}\,{\cos\theta_i}  ,\qquad \eta \triangleq  - (-1)^m s_\theta^i  \,\text{sign}\,{\cos\theta }.
 \end{align}
We are now ready to perform the explicit decomposition:
\begin{enumerate}
    \item \textbf{Pendular motion.}  We have $0< z_+ \leq 1$, and $\theta$ therefore belongs to the interval $\theta_+ \leq \theta \leq \pi - \theta_+$. The polar integral can be written (see Ref. \cite{Kapec:2019hro}) 
\begin{subequations}
    \begin{align}
\sint_{\cos \theta_i}^{\cos \theta} &= 2 m \left| \int_{0}^{\cos\theta_+} \right|  -\eta \left| \int_0^{\cos\theta}\right|+\eta_i \left| \int_0^{\cos \theta_i} \right|,\qquad \epsilon _0\neq 0,\label{int1}\\
\sint_{\cos \theta_i}^{\cos \theta} &= 2 m \left| \int_{0}^{\cos\theta_0} \right|  -\eta \left| \int_0^{\cos\theta}\right|+\eta_i \left| \int_0^{\cos \theta_i} \right|,\qquad \epsilon _0= 0.
\end{align}
\end{subequations}
It is useful to note that our definitions of the roots imply
\begin{equation}
     \epsilon _0\,z_-<0,\qquad \epsilon _0(z-z_-)>0,\qquad\frac{z_+}{z_-}\leq 1.
\end{equation}

    \item \textbf{Vortical motion.} We have $ \epsilon _0>0$ and $0 < z_-\leq \cos^2\theta \leq z_+ < 1$. The motion therefore never reaches the equator. The sign of $\cos\theta$ is constant and determines whether the motion takes place in the northern or the southern hemisphere. Without loss of genericity, let us focus on the northern hemisphere: $0 \leq \theta_+\leq \theta \leq \theta_- < \frac{\pi}{2}$; we denote again as $m$ the number of turning points at Mino time $\lambda$. The polar integral can be written (see Ref. \cite{Kapec:2019hro} and Appendix A of Ref. \cite{Gralla:2019ceu}):
 \begin{align}
\sint_{\cos \theta_i}^{\cos \theta} = \left( m -\eta_i \frac{1-(-1)^m}{2} \right) \left| \int_{\cos\theta_-}^{\cos\theta_+} \right|  -\eta  \left| \int_{\cos\theta_-}^{\cos\theta} \right|+ \eta_i \left| \int_{\cos\theta_-}^{\cos \theta_i} \right| .\label{int2}
 \end{align}
    
    \item \textbf{Equator-attractive motion.} This is a limit case of the vortical motion reached in the limit $z_-\to 0$, $z_+\to 2\Delta_\theta$. As detailed in Ref. \cite{Kapec:2019hro}, the turning point $z_-=0$ corresponds to a non-integrable singularity of the polar integrals and the motion exhibits consequently at most one turning point at $z_+=2\Delta_\theta$, leading to the line-integral decomposition
    \begin{equation}
    \sint_{\cos \theta_i}^{\cos \theta} = \eta\abs{\int_{\cos\theta_+}^{\cos\theta}}-\eta_i\abs{\int_{\cos\theta_+}^{\cos\theta_i}}.
    \end{equation}
\end{enumerate}
In all cases but the equator-attractive case, the polar motion is periodic. Denoting by $\Lambda_\theta$ its period, one can easily give an explicit formula for the number of turning points $m$ as a function of the Mino time:
\begin{equation}
    m(\lambda)=\left\lbrace\begin{array}{ll}
    \rule{0pt}{6pt}\left\lfloor \frac{2}{\Lambda_\theta}(\lambda-\lambda_i^\theta)+\frac{1}{2}\right\rfloor, & Q>0\\
    \rule{0pt}{16pt}\left\lfloor \frac{2}{\Lambda_\theta}(\lambda-\lambda_i^\theta)\right\rfloor + \left\lfloor \frac{2}{\Lambda_\theta}(\lambda_i^\theta-\lambda_i)\right\rfloor + \frac{3-s^i_\theta}{2}, & Q<0
    \end{array}\right. 
\end{equation}
with $\lambda_i^\theta\triangleq\lambda_i-s^i_\theta\int_{0}^{\cos\theta_i}\frac{\dd \cos \theta}{\sqrt{\Theta(\cos^2 \theta)}}$ and where the floor function is defined as $\left\lfloor x \right\rfloor\triangleq\max\qty{n\in\mathbb{Z}|n\leq x}$. For the equator-attractive case, one has simply $m(\lambda)=\theta(\lambda-\lambda_i^\theta)$ where $\theta$ is here the Heaviside step function.

The integrals introduced above are solved explicitly in Appendix \ref{app:basicIntegrals}. For each case, the corresponding solutions are detailed below and schematically depicted in Fig. \ref{fig:polarClassesFigure}.

\begin{figure}[!htb]
    \centering
    \begin{tabular}{cc}
      \includegraphics[width=4cm]{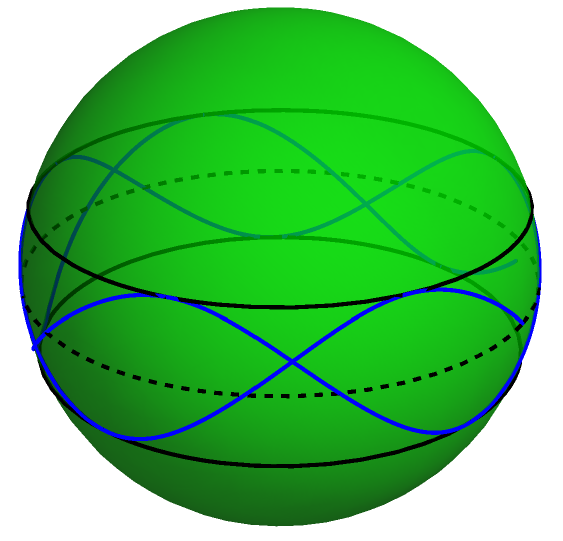} & \includegraphics[width=4cm]{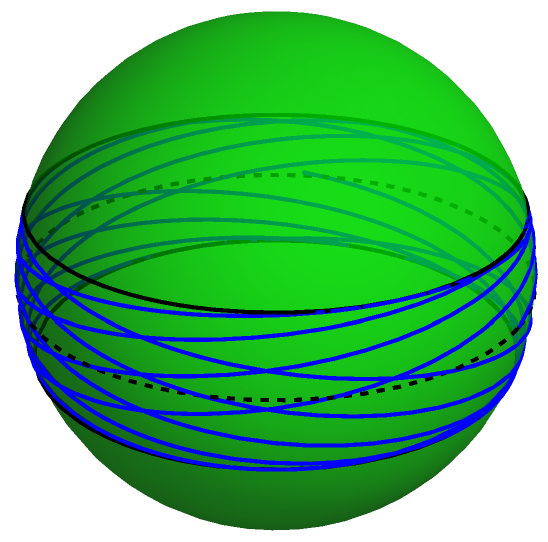} \\ (a) Pendular$(\hat E, Q)$ & (b) Pendular${}_\circ(Q)$\vspace{0.6cm}\\
        \includegraphics[width=4cm]{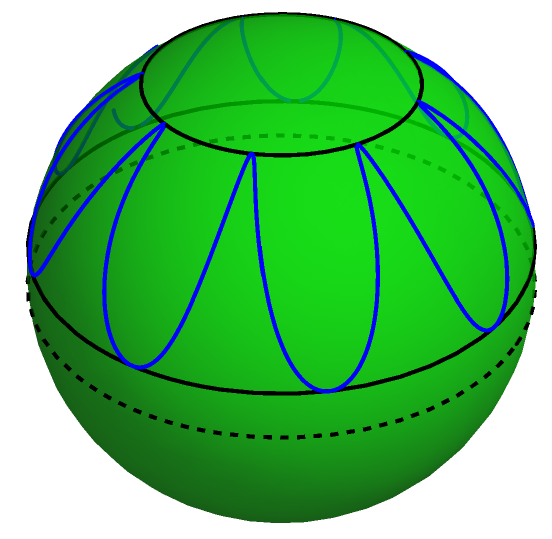} & \includegraphics[width=4cm]{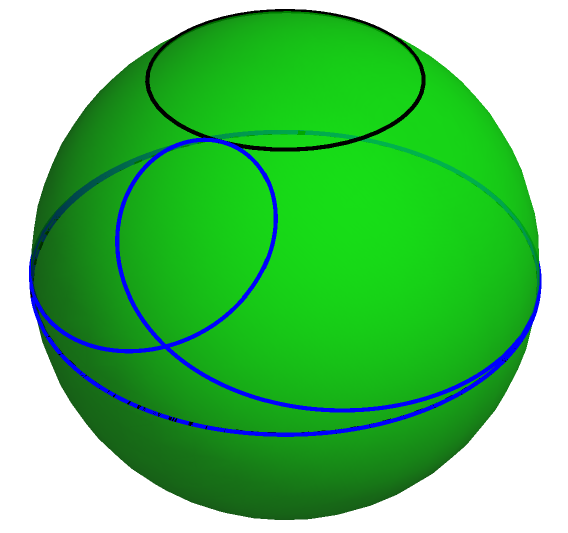} \\ (c) Vortical$(\hat E,Q)$ & (d) Equator-attractive$(\hat E)$
    \end{tabular}
    \caption{Angular taxonomy of $\ell\neq 0$ Kerr geodesics. The angular behavior is depicted in spherical coordinates on the unit sphere: the polar angle is $\theta(\lambda)$, and the azimuthal angle is the purely angular part of the Kerr azimuthal angle $(\ell- a \hat E) (\lambda-\lambda_i) + \ell \Phi_\theta (\lambda)$.}
    \label{fig:polarClassesFigure}
\end{figure}

\paragraph{Pendular$(\hat E,Q)$ motion.} The motion exhibits a positive Carter constant $Q$ and can occur for any $ \epsilon _0\neq 0$; our definition of the roots $z_\pm$ allows us to treat simultaneously the two cases $ \epsilon _0<0$ and $ \epsilon _0>0$, which is a simplification with respect to the analysis carried out in Ref. \cite{Kapec:2019hro}. The period of the polar motion (comprising two turning points) in Mino time is given by 
\begin{equation}
\Lambda_\theta = 4  \int_0^{\cos\theta_+} \frac{\text{d} \cos\theta}{\sqrt{\Theta(\cos^2\theta)}}\triangleq 4 I^{(0)}(\sqrt{z_+}) =\frac{4}{\sqrt{- \epsilon _0z_-}}K\qty(\frac{z_+}{z_-}).
\end{equation}
Using the basic integrals of Appendix \ref{app:ellipticFunctions}, one can write \eqref{eqn:lambda} as
\begin{align}
    \lambda-\lambda_i&=\frac{1}{\sqrt{- \epsilon _0z_-}}\left[2mK\qty(\frac{z_+}{z_-})+s^i_\theta(-1)^mF\qty(\Psi^+(\cos\theta),\frac{z_+}{z_-})\right.\nonumber\\
    &~\left.-s_\theta^iF\qty(\Psi^+(\cos\theta_i),\frac{z_+}{z_-})\right]\label{eqn:pendularLambda}
\end{align}
where we define $\Psi^+(x)\triangleq\arcsin\qty(\frac{x}{\sqrt{z_+}})$. 
Using \eqref{eqn:inversionEpsNeq0}, one can invert \eqref{eqn:pendularLambda} as
\begin{equation}
    \cos\theta=s^i_\theta(-1)^m\sqrt{z_+} \text{sn} \qty(\sqrt{- \epsilon _0 z_-}\qty(\lambda-\lambda_i^\theta)-2mK\qty(\frac{z_+}{z_-}),\frac{z_+}{z_-})
\end{equation}
where we introduce
\begin{align}
        \lambda_i^\theta&\triangleq\lambda_i-\frac{s^i_\theta}{\sqrt{- \epsilon _0z_-}}F\qty(\Psi^+(\cos\theta_i),\frac{z_+}{z_-}).
\end{align}
This expression matches with Eq. (38) of Ref. \cite{Fujita:2009bp}. Using the periodicity property \eqref{per} of the elliptic sine, we can further simplify it to 
\begin{equation}
\cos\theta(\lambda)=s^i_\theta \sqrt{z_+} \text{sn}\qty(  \sqrt{- \epsilon _0 z_-}(\lambda-\lambda_i^\theta), \frac{z_+}{z_-}).\label{eq:costh}
\end{equation}
It consistently obeys $\cos\theta(\lambda_i)=\cos\theta_i$ and $ \text{sign}\,{\cos\theta'(\lambda_i)}=s_\theta^i$. This formula agrees with (53) of Ref. \cite{Kapec:2019hro} but it is written in a simpler form. We also obtain
\begin{align}
T_\theta &=\frac{-2z_+}{\sqrt{- \epsilon _0 z_-}} \left[ 2 m E' \qty(\frac{z_+}{z_-}) +( \pm_\theta) E'\qty(\Psi^+(\cos\theta),\frac{z_+}{z_-})\right.\nonumber\\
&\left.~- s_\theta^i  E'\qty(\Psi^+(\cos\theta_i),\frac{z_+}{z_-} )\right], 
\end{align}
\begin{align}
\Phi_\theta &=  \frac{1}{\sqrt{- \epsilon _0 z_-}} \left[ 2 m \Pi \qty(z_+,\frac{z_+}{z_-}) +( \pm_\theta) \Pi\qty(z_+,\Psi^+(\cos\theta),\frac{z_+}{z_-})\right.\nonumber\\
&~\left.- s_\theta^i  \Pi\qty(z_+,\Psi^+(\cos\theta_i) ,\frac{z_+}{z_-})\right]-(\lambda-\lambda_i).
\end{align}
where $\lambda-\lambda_i$ is given by \eqref{eqn:pendularLambda}. All quantities involved are manifestly real. These final expressions agree with Ref. \cite{Kapec:2019hro}.

\paragraph{Pendular${}_\circ(Q)$ motion.} We now consider the critical case $|\hat E| = \mu$. The period of the polar motion is
\begin{equation}
    \Lambda_\theta=4 I^{(0)}(\sqrt{z_0})=2\pi\sqrt{\frac{z_0}{Q}}.
\end{equation}
In this critical case, \eqref{eqn:lambda} leads to 
\begin{equation}
    \lambda-\lambda_i=\sqrt{\frac{z_0}{Q}}\qty[m\pi+s^i_\theta(-1)^m\arcsin{\frac{\cos\theta}{\sqrt{z_0}}}-s^i_\theta\arcsin{\frac{\cos\theta_i}{\sqrt{z_0}}}],
\end{equation}
which can be simply inverted as
\begin{equation}
    \cos\theta=s_\theta^i\sqrt{z_0}\,\sin\qty(\sqrt{\frac{Q}{z_0}}(\lambda-\lambda_i^\theta)),\qquad\lambda_i^\theta\triangleq\lambda_i-\sqrt{\frac{z_0}{Q}}\arcsin{\frac{\cos\theta_i}{\sqrt{z_0}}}.
\end{equation}
The other polar integrals are

    \begin{align}
    T_\theta&=\frac{1}{2}\qty{z_0(\lambda-\lambda_i)-\sqrt{\frac{z_0}{Q}}\qty[(\pm_\theta)\cos\theta\sqrt{z_0-\cos^2\theta}-s^i_\theta\cos\theta_i\sqrt{z_0-\cos^2\theta_i}]},\nonumber\\
    \Phi_\theta&=\sqrt{\frac{z_0}{Q(1-z_0)}}\bigg[m\pi+(\pm_\theta)\arcsin\qty(\sqrt{\frac{1-z_0}{z_0}}\cot\theta)\nonumber\\
    &\quad-s^i_\theta\arcsin\qty(\sqrt{\frac{1-z_0}{z_0}}\cot\theta_i)\bigg]-(\lambda-\lambda_i).
\end{align}
 
\paragraph{Vortical$(\hat E,Q)$ motion.} 
The period in Mino time is given by 
\begin{equation}
\Lambda_\theta = 2 \left| \int_{\cos\theta_-}^{\cos\theta_+} \frac{\text{d} \cos\theta}{\sqrt{\Theta(\cos^2\theta)}} \right|=\frac{2}{\sqrt{ \epsilon _0 z_+}} K\qty(1-\frac{z_-}{z_+}).
\end{equation}
Using the basic integrals of Appendix \ref{app:basicIntegrals}, one has
\begin{align}
    \lambda-\lambda_i&=\frac{1}{\sqrt{ \epsilon _0z_+}}\left[\qty(m-h s^i_\theta\frac{1-(-1)^m}{2})K(\tilde m)-s^i_\theta(-1)^mF\qty(\Psi^-(\cos\theta),\tilde{m})\right.\nonumber\\
    &~\left.+s^i_\theta F\qty(\Psi^-(\cos\theta_i),\tilde m)\right]
\end{align}
where
\begin{equation}
    \tilde{m}\triangleq 1-\frac{z_-}{  z_+},\qquad\Psi^-(x)=\arcsin{\sqrt{\frac{z_+-x^2}{z_+-z_-}}}.
\end{equation}
Using the inversion formula \eqref{eqn:inversionFormula} and the periodicity property \eqref{eqn:perDn}, we obtain
\begin{equation}
    \cos\theta=h\sqrt{z_+} \text{dn}\qty(\sqrt{ \epsilon _0z_+}(\lambda-\lambda_\theta^i),\tilde m)
\end{equation}
with
\begin{equation}
    \lambda_i^\theta\triangleq\lambda_i+\frac{s^i_\theta h}{\sqrt{ \epsilon _0z_+}}F\qty(\Psi^-(\cos\theta_i),\tilde m).
\end{equation}
Again, one has $\cos\theta(\lambda_i)=\cos\theta_i$ and $ \text{sign}\,{\cos\theta'(\lambda_i)}=s_\theta^i$. The two other polar integrals are
\begin{align}
    T_\theta&=\sqrt{\frac{z_+}{ \epsilon _0}}\left[\qty(m-h s^i_\theta\frac{1-(-1)^m}{2})E(\tilde m)-(\pm_\theta)E\qty(\Psi^-(\cos\theta),\tilde{m})\right.\nonumber\\
    &~\left.+s^i_\theta E\qty(\Psi^-(\cos\theta_i),\tilde m)\right],\\
    \Phi_\theta&=\frac{1}{(1-z_+)\sqrt{ \epsilon _0z_+}}\left[\qty(m-h s^i_\theta\frac{1-(-1)^m}{2})\Pi\qty(\frac{z_--z_+}{1-z_+},\tilde m)\right.\nonumber\\&~\left.-(\pm_\theta)\Pi\qty(\frac{z_--z_+}{1-z_+},\Psi^-(\cos\theta),\tilde{m})+s^i_\theta \Pi\qty(\frac{z_--z_+}{1-z_+},\Psi^-(\cos\theta_i),\tilde m)\right]\nonumber\\& ~-(\lambda-\lambda_i)
\end{align}
in agreement with the results of Ref. \cite{Kapec:2019hro}.

\paragraph{Equator-attractive$(\hat E)$ motion.}
This is the only polar motion which is not periodic. One has
\begin{equation}
    \lambda-\lambda_i=\frac{h}{\sqrt{ \epsilon _0z_+}}\qty[-(\pm_\theta)\, \text{arctanh}\sqrt{1-\frac{\cos^2\theta}{z_+}}+s^i_\theta\, \text{arctanh}\sqrt{1-\frac{\cos^2\theta_i}{z_+}}]
\end{equation}
leading to
\begin{subequations}
    \begin{align}
    \cos\theta&=h\sqrt{z_+}\,\text{sech}\qty(\sqrt{ \epsilon _0z_+}(\lambda-\lambda_i^\theta)),\\
    \lambda_i^\theta&\triangleq\lambda_i+\frac{s^i_\theta h}{\sqrt{ \epsilon _0z_+}} \text{arctanh}\sqrt{1-\frac{\cos^2\theta_i}{z_+}}.
\end{align}
\end{subequations}
The polar integrals are
\begin{subequations}
    \begin{align}
    T_\theta&=\frac{h}{\sqrt{ \epsilon _0}}\qty[-(\pm_\theta)\sqrt{z_+-\cos^2\theta}+s^i_\theta\sqrt{z_+-\cos^2\theta_i}],\\
    \Phi_\theta&=\frac{h}{\sqrt{ \epsilon _0(1-z_+)}}\bigg[-(\pm_\theta)\arctan\sqrt{\frac{z_+-\cos^2\theta}{1-z_+}}\nonumber\\
    &\quad+s^i_\theta\arctan\sqrt{\frac{z_+-\cos^2\theta_i}{1-z_+}}\bigg].
    \end{align}
\end{subequations}
This agrees with the results of Ref. \cite{Kapec:2019hro}.

\section{Classification of near-horizon motion for high spin Kerr black holes}

In this section, we derive a complete classification of timelike and null geodesic trajectories lying in the near-horizon region of a quasi extremal Kerr black hole. We will provide explicit manifestly real analytic expressions for all geodesic trajectories. We will present the classification in terms of the geodesic energy, angular momentum, and Carter constant $Q$. We will also illustrate each radial motion in NHEK with a Penrose diagram. 

Partial classifications were performed in Refs. \cite{Compere:2017hsi} and \cite{Kapec:2019hro}. In Ref. \cite{Compere:2017hsi}, equatorial timelike prograde incoming (i.e. that originate from the Kerr exterior geometry) geodesics were classified. Such geodesics reach the spatial boundary of the near-horizon region at infinite past proper time and therefore physically reach the asymptotically flat Kerr region once the near-horizon is glued back to the exterior Kerr region. It turns out that bounded geodesics in the near-horizon Kerr region also arise in the study of gravitational waves since they correspond to the end point of the transition motion \cite{Compere:2019cqe}. Timelike outgoing geodesics originating from the white hole horizon and reaching the near-horizon boundary are also relevant for particle emission within the near-horizon region \cite{Gralla:2017ufe}. In addition, null outgoing  geodesics are relevant for black hole imaging around high-spin black holes \cite{Gralla:2019drh}. 

The generic non-equatorial geodesics were obtained in Ref. \cite{Kapec:2019hro}. In particular, real forms were obtained for each angular integral involved in geodesic motion. However, zero-measure sets of parameters were discarded. These zero-measure sets include in particular the separatrix between bounded and unbounded radial motion which plays a key role in EMRIs.  

 In the following, we do not make any assumption on the geodesic parameters. We will treat both timelike and null geodesics, prograde or retrograde, and with any boundary conditions. Without loss of genericity, we will consider future-directed orbits. Past-directed geodesics can be obtained from future-directed geodesics using the $\mathbb Z_2$ map: 
 \begin{align}
T \rightarrow -T,\quad \Phi \rightarrow -\Phi,\quad  E\rightarrow -E,\quad  \ell \rightarrow -\ell,\label{PTflip}
 \end{align}
which will play an important role in Sec. \ref{sec:classes}. We will denote it as the $\uparrow\!\downarrow$-flip.

\subsection{NHEK}
Future orientation of the geodesic is equivalent to $\dd T/\dd \lambda > 0$ or 
 \begin{align}
E+ L_0  R > 0. 
 \end{align}
Future-oriented geodesics with $ L_0  = 0$ have $E >0$. For $ L_0  \neq 0$, we define the critical radius as in \cite{Kapec:2019hro}:
 \begin{align}
R_c = -\frac{E}{ L_0 }. \label{defRc}
 \end{align}
Future-orientation of the orbit requires 
 \begin{align}\label{consRc}
R < R_c \quad \text{for} \quad  L_0  < 0, \quad \text{and} \quad R > R_c \quad \text{for} \quad  L_0  > 0. 
 \end{align} 

 \subsubsection{Polar behavior}

The results derived in Sec. \ref{sec:polar_kerr} in the context of generic Kerr still hold in the near-horizon high-spin limit which is obtained by the scaling limit $\lambda \rightarrow 0$ taken in the near-horizon coordinates \eqref{eq:cvn}. We anticipate that the results also hold in the distinct near-NHEK limit $\lambda \rightarrow 0$ taken in the near-horizon coordinates \eqref{eq:cvnn}. Due to the high-spin limit, the following substitution can be made:
\begin{align}
&a \mapsto M, \qquad \hat E \mapsto \frac{\ell}{2M},\qquad   \epsilon _0 \mapsto \mathcal{C}_\circ\triangleq\frac{\ell^2-\ell_\circ^2}{4}, \\
&\Theta(z) \mapsto v_\theta(z),\qquad  \hat\Phi_\theta-\frac{1}{4}\hat{T}_\theta \mapsto \Phi_\theta.
\end{align}
Notice that the dependence on $\hat E$ of $ \epsilon _0$ has been changed into a dependence in $\ell$, the Kerr energy being the same at zeroth order on $\lambda$ for all trajectories. Therefore, the quadratic term of the polar potential vanishes at the critical value $\ell_\circ$ of the angular momentum $\ell$.

One of the most striking features of the near-horizon polar motion is that $Q$ is non-negative as a consequence of the reality of polar motion, as noticed in Ref. \cite{Kapec:2019hro}: 
 \begin{proposition}\label{propQ}
\begin{equation}
    \forall z\in[0,1] :v_\theta(z)\geq0\Rightarrow Q\geq0.
\end{equation}
 \end{proposition}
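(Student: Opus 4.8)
The plan is to exploit the fact that the equatorial plane $\theta=\pi/2$, i.e.\ $z=\cos^2\theta=0$, always belongs to the closed polar range $[0,1]$, and that the polar potential is trivial there. Reading off the constant term of the near-horizon polar potential $v_\theta(z)=-\ell^2 z+(Q+\mathcal C_\circ z)(1-z)$: every $z$-dependent contribution carries at least one power of $z$, hence $v_\theta(0)=Q$. Applying the hypothesis at the single point $z=0$ then gives immediately $Q=v_\theta(0)\geq 0$. This is the whole argument, and no case distinction on the sign of $\mathcal C_\circ$ (equivalently, on whether $|\ell|$ exceeds $\ell_\circ$) is required.

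It is perhaps more illuminating to phrase this as a contrapositive. If $Q<0$, then $v_\theta$ is already strictly negative at the equator, $v_\theta(0)=Q<0$, so the reality condition $v_\theta\geq 0$ must fail somewhere in $[0,1]$. In this form the statement makes transparent \emph{why} the vortical orbits of generic Kerr --- which do admit $Q<0$ --- disappear in the near-horizon high-spin regime: there the Boyer--Lindquist energy is pinned to its extremal value $E_0=\ell/(2M)$ at leading order in $\lambda$, and it is exactly this degeneracy that collapses $v_\theta(0)$ to the bare value $Q$, rather than $Q$ shifted by the (possibly positive) $\epsilon_0$ contribution that survives for generic Kerr.

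A sharper geometric companion to the proposition would assert that genuine, non-axial polar motion forces $Q\geq 0$, i.e.\ that for $Q<0$ the kinematically allowed set $\{z\in[0,1]:v_\theta(z)\geq 0\}$ reduces to at most the pole $z=1$. Establishing that refinement requires inspecting the roots $z_\pm$ of the quadratic $v_\theta$ already tabulated in the classification above. The one step that is not immediate is the regime $\mathcal C_\circ>0$, where $v_\theta$ is a downward-opening parabola that could, a priori, rise above zero in the interior of $[0,1]$ while remaining negative at both endpoints, $v_\theta(0)=Q<0$ and $v_\theta(1)=-\ell^2\leq 0$. This is handled by a short computation: the vertex lies at $z_\star=b/(2\mathcal C_\circ)$ with $b=\mathcal C_\circ-\ell^2-Q$, its value is $v_\theta(z_\star)=Q+b^2/(4\mathcal C_\circ)$, and substituting the expression for $\mathcal C_\circ$ shows $v_\theta(z_\star)<0$ throughout the range of parameters for which $z_\star\in(0,1)$. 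For the proposition exactly as stated, however, this is superfluous: evaluating $v_\theta$ at $z=0$ already closes the proof.
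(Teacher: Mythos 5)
Your evaluation at the equator is correct as far as it goes: the constant term of $v_\theta$ is indeed $Q$, so $v_\theta(0)=Q$ and the hypothesis applied at $z=0$ immediately gives $Q\geq 0$. But this only establishes the weak reading of the proposition, in which the potential is assumed non-negative on \emph{all} of $[0,1]$. What the paper proves, and what it actually uses, is the pointwise statement: for \emph{each} $z\in[0,1]$, $v_\theta(z)\geq 0$ implies $Q\geq 0$ --- equivalently, $Q\geq v_\theta(z)$ everywhere on $[0,1]$. This stronger form is exactly what is needed for the corollary drawn immediately afterwards, namely that near-horizon polar motion cannot be vortical: a vortical orbit has $Q<0$ and $v_\theta\geq 0$ only on a band $[z_-,z_+]\subset(0,1)$ that never reaches the equator, so your $z=0$ evaluation simply does not apply to it. The content you set aside as a ``sharper geometric companion'' is therefore not superfluous; it is the proposition. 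The paper obtains the pointwise bound in one line from the $\textsf{SL}(2,\mathbb R)$ Casimir, $Q=\mathcal C+\frac{3}{4}\ell^2-M^2\mu^2\geq \mathcal C+(1-\Lambda^{-2})\ell^2-M^2\mu^2\geq v_\theta(z)\geq 0$, using $\Lambda^2\leq 4$.

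If you prefer to complete your elementary route, the vertex computation you sketch is the right idea but is not yet a proof. For $\mathcal C_\circ>0$ you must combine the vertex value $v_\theta(z_\star)=Q+(\mathcal C_\circ-\ell^2-Q)^2/(4\mathcal C_\circ)$ with \emph{both} the location constraint $z_\star\in(0,1)$ (i.e.\ $\ell^2-\mathcal C_\circ<-Q<\ell^2+\mathcal C_\circ$) \emph{and} the structural bound $\mathcal C_\circ=\frac{\ell^2}{4}-M^2\mu^2\leq\frac{\ell^2}{4}$: writing $q=-Q>0$, one finds $v_\theta(z_\star)>0$ iff $q<(\sqrt{\mathcal C_\circ}-\ell)^2$ or $q>(\sqrt{\mathcal C_\circ}+\ell)^2$, and since $\sqrt{\mathcal C_\circ}\leq\ell/2$ these windows are disjoint from the interval allowed by $z_\star\in(0,1)$. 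Without the constraint on $z_\star$ the vertex value can perfectly well be positive with $Q<0$, so the assertion that it is negative ``throughout the range of parameters'' is not true as stated. The remaining cases $\mathcal C_\circ\leq 0$ (upward-opening or linear, non-positive at both endpoints, hence non-positive on $[0,1]$) are easy.
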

\noindent \emph{Proof.} This property is a consequence of the dependence on $Q$ of $\mathcal{C}$ defined in \eqref{eqn:Carter}. Indeed,  using the fact that $z=\cos^2\theta\in[0,1]$ one can write
 \begin{align} 
 Q = \mathcal C + \frac{3}{4}\ell^2 -M^2 \mu^2 \geq \mathcal C + (1-\Lambda^{-2})\ell^2 -M^2 \mu^2 \geq v_\theta (z) \geq 0. 
 \end{align} 
A direct consequence is that the near-horizon polar motion cannot be vortical and is consequently either equatorial, pendular, polar or axial. 
We note that the condition $ \epsilon _0\geq\ell^2$ is never obeyed in the near-horizon case after using the definition \eqref{defeps0}, $a=M$ and $\hat E=\frac{\ell}{2M}$. The equator-attractive class is therefore discarded. The resulting polar classes are listed in Table \ref{table:taxPolar} and the phase space is represented in Fig.\ref{fig:angularTaxonomy}.

\subsubsection{Radial behavior} 
The radial behavior for generic inclined orbits can be solved using the equatorial results \cite{Compere:2017hsi} thanks to the following observation:
 \begin{proposition}\label{thm:equivNHEK}
The radial integrals  $T_R^{(i)}(R)$ $(i=0,1,2)$ only depend upon the NHEK energy $E$ and angular momentum $\ell$ while all the dependence upon the mass $\mu$ and Carter constant $Q$ is through $\ell_* = \frac{2}{\sqrt{3}}\sqrt{M^2 \mu^2 +Q}$. 
 \end{proposition}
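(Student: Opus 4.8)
The plan is to obtain the statement by direct inspection of the radial integrals. Recall from \eqref{eqn:radialNHEK}, together with the NHEK radial potential \eqref{eq:vR}, that
\[
T^{(i)}_R = \sint \frac{\dd R}{\pm_R\, R^i\, \sqrt{v_R(R)}},\qquad v_R(R) = E^2 + 2 E \ell\, R - \mathcal{C}\, R^2,\qquad i = 0,1,2.
\]
For each fixed value of the three numbers $E$, $\ell$, $\mathcal{C}$, the integrand is a function of the single variable $R$ only: the quadratic $v_R$ is fixed by its coefficients $\bigl(E^2,\,2E\ell,\,-\mathcal{C}\bigr)$, and the only remaining ingredient is the monomial $R^i$. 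Hence each $T^{(i)}_R$ depends on $E$, $\ell$ and $\mathcal{C}$ alone, with no separate dependence on the mass $\mu$ or the Carter constant $Q$.

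The remaining step is to feed in the identity \eqref{eqn:Carter} for the $\textsf{SL}(2,\mathbb{R})$ Casimir $\mathcal{C}$, which expresses it entirely through the combination $\ell_*^2 - \ell^2$ with $\ell_* = \tfrac{2}{\sqrt{3}}\sqrt{M^2\mu^2 + Q}$ as in \eqref{ls}; in particular $\mathcal{C}$ depends on the pair $(\mu, Q)$ only via $\ell_*$. Substituting into the previous paragraph, every $T^{(i)}_R$ is a function of $E$, $\ell$ and $\ell_*$ in which all the $\mu$- and $Q$-dependence has been channelled through $\ell_*$, which is exactly the assertion.

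There is no genuine difficulty here, only a small bookkeeping check: one should verify that the $\sintline$ prescription — the sign flip of $\pm_R$ at each radial turning point — does not secretly reintroduce $\mu$ or $Q$. This is immediate, since the turning points are the roots $R_\pm$ of $v_R$, and from their closed forms (using $\ell^2 + \mathcal{C} = \tfrac14(\ell^2 + 3\ell_*^2)$) the $R_\pm$ are themselves functions of $E$, $\ell$ and $\ell_*$ only; thus the branch structure of the line integral, and therefore the chain of elliptic integrals obtained upon integration, is governed by $E$, $\ell$, $\ell_*$ alone. For completeness one also inspects the degenerate locus $\mathcal{C} = 0$ (equivalently $\ell = \ell_*$), where $v_R$ reduces to a linear polynomial and \eqref{eqn:radialNHEK} yields a different closed form; there the integrand depends only on $E$ and $\ell$, consistently with the $\ell_* = \ell$ specialization. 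This completes the plan.
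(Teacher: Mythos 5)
Your proposal is correct and matches the paper's reasoning: the paper treats this as a "simple observation" following directly from the form of the radial integrals \eqref{eqn:radialNHEK} with $v_R(R)=E^2+2E\ell R-\mathcal{C}R^2$ and the Casimir identity $\mathcal{C}=\tfrac{3}{4}(\ell_*^2-\ell^2)$, which is exactly the inspection argument you give. Your extra bookkeeping on the turning points and the $\mathcal{C}=0$ case is consistent with the paper's explicit root formulas and adds nothing that conflicts with it.
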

This simple observation has far-reaching consequences. For any timelike geodesic with $Q \neq 0$, one could directly reuse the classification established in Ref. \cite{Compere:2017hsi}, modulo the substitution $\frac{2}{\sqrt{3}}M \mu \to \ell_*$ in every expression encountered. Moreover, null geodesics with $\mu = 0$ have $Q \geq 0$ from Proposition \ref{propQ}. We can therefore reuse the classification established in Ref. \cite{Compere:2017hsi} to classify null geodesics modulo the substitution $M \mu \to Q$ in every expression encountered. Overall, all radial integrals can be described in closed form for all cases by keeping the dependence upon $\ell_*$ or, equivalently, upon the Casimir invariant $\mathcal C$.

Since the equatorial taxonomy of Ref. \cite{Compere:2017hsi} did not consider bounded orbits and only considered $\ell > 0$, we will expand the taxonomy to the generic case. The generic classification can be achieved by studying the roots of $v_R$ and the range of $R$ where $v_R \geq 0$. We only consider orbits outside the horizon, $R>0$. There are three broad categories depending on the angular momentum: the supercritical case $\vert \ell  \vert > \ell_*$ or equivalently $\mathcal C < 0$, the critical case $\vert \ell \vert= \ell_*$ or equivalently $\mathcal C = 0$ and the subcritical case $0 \leq \vert \ell \vert  < \ell_*$ or $\mathcal C > 0$. 

The relative position of the critical radius \eqref{defRc} with respect to the roots of $v_R$ may restrict the allowed classes of future-oriented orbits. As a result of \eqref{consRc}, subcritical $\ell^2 <\ell^2_*$ orbits have either $R_+ < R_c$ for $\ell < 0$ or $R_c < 0$ for $\ell > 0$, and all orbits are future oriented. Critical orbits $\ell^2 =\ell^2_*$ have either $R_c < 0$ for $\ell = \ell_*$ or $R_c > R_0$ for $\ell = -\ell_*$. This restricts the classes of orbits. Supercritical orbits $\ell^2 > \ell^2_*$ with $E,\ell>0$ are future directed. Supercritical orbits with $E>0$, $\ell<0$ admit $R_- < R_c < R_+$, and only bounded orbits with $R \leq R_-$ are admissible. Finally, supercritical orbits with $E<0$ and $\ell > 0$ obey $R  \geq R_+ > R_c $ and are therefore deflecting. 

After a simple analysis, we reach the following taxonomy, displayed in Table \ref{table:taxNHEK} and in Fig. \ref{fig:taxonomyEquator}. In comparison with Ref. \cite{Compere:2017hsi},  the classes Outward$(E,\ell)$, Outward$_*(E)$,  Bounded$_>(E,$ $\ell)$, Bounded$^-_*(E)$, and Bounded$_<(E,\ell)$ are new, while all other classes with $\ell > 0$ appeared in Ref. \cite{Compere:2017hsi}. The class $\text{Osculating}(E,\ell)$ is now better called $\text{Def}\text{lecting}(E,\ell)$. The classes with $\ell = \pm \ell_*$ will be denoted with a subscript $_*$. The Spherical$_*$ orbit with $\ell = \ell_*$ is also the prograde ISSO. For $\ell\geq 0$, the conformal diagrams corresponding to those orbits are depicted in Fig. \ref{fig:penrose_NHEK} and their explicit forms are given in Appendix \ref{app:equatorial}. Past-oriented geodesics (not depicted) are obtained from a central symmetry around the origin $E=\ell=0$ as a result of the $\uparrow\!\downarrow$-flip \eqref{PTflip}. 

\begin{table}[!htb]
    \centering
    \begin{tabular}{|c|c|c|c|}
    \hline
        \textbf{Angular momentum} & \textbf{Carter constant} & \textbf{Polar range} & \textbf{Denomination}\\\hline
        $\ell=0$ ($\mathcal{C}_\circ=-\ell_\circ^2/4$) &  $Q=0$ & $z=0,~1$ & \begin{tabular}{c} \rule{0pt}{13pt}Equatorial${^0}$\\\rule{0pt}{13pt}Axial${}^0$\end{tabular}\\\cline{2-4}
         & $Q> 0$ & $z=1$ & \rule{0pt}{13pt}Axial${}^0(Q)$\\\hline
         $0<\ell<\ell_\circ$ ($-\frac{\ell_\circ^2}{4}<\mathcal{C}_\circ<0$) & $Q=0$ & $z=0$ & \rule{0pt}{13pt}Equatorial$(\ell)$\\\cline{2-4}
          & $Q>0$ & $0\leq z \leq z_+$ & \rule{0pt}{13pt}Pendular$(Q,\ell)$\\\hline
        $\ell=\ell_\circ$ ($\mathcal{C}_\circ=0$) &  $Q=0$ & $z=0$ &\rule{0pt}{13pt} \rule{0pt}{13pt}Equatorial${}_\circ$\\\cline{2-4}
         & $Q>0$ & $0\leq z \leq z_0$ & \rule{0pt}{13pt}Pendular${}_\circ(Q)$\\\hline
         $\ell>\ell_\circ$ ($\mathcal{C}_\circ>0$) & $Q>0$ & $0\leq z \leq z_+$ & \rule{0pt}{13pt}Pendular$(Q,\ell)$\\\hline
         \end{tabular}
    \caption{Polar taxonomy of near-horizon  geodesics  with $\ell\geq 0$. The orbits with $\ell<0$ are obtained from $\ell>0$ by flipping the sign of $\ell$ with the rest unchanged.}
    \label{table:taxPolar}
\end{table}

\begin{figure}[!hbt]
    \centering \vspace{-0.2cm}
    \includegraphics[width=11.8cm]{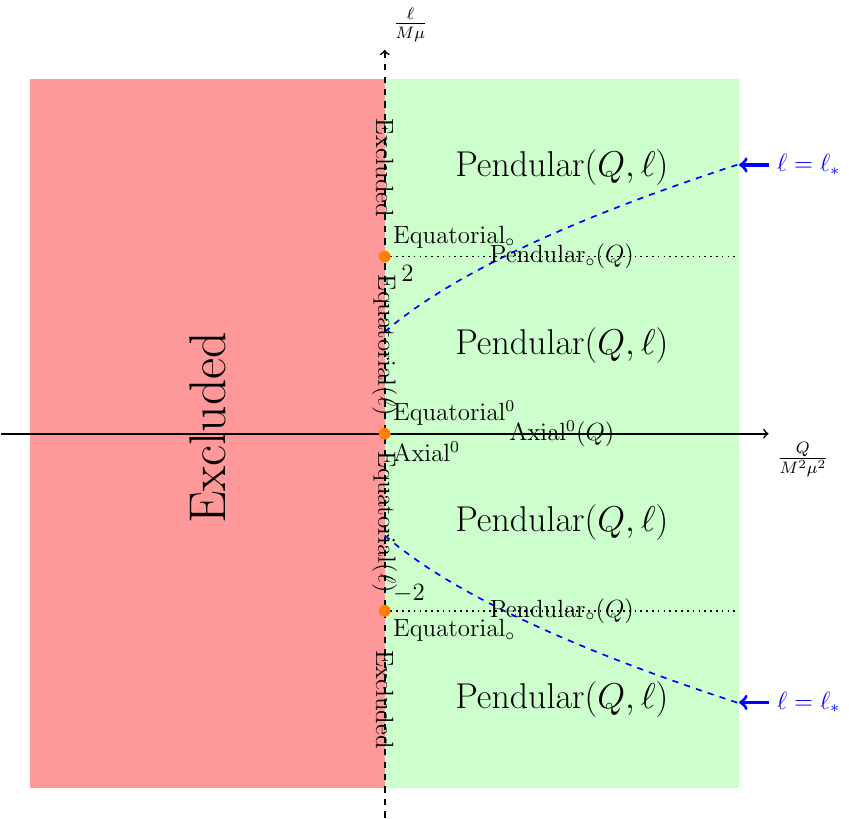}
    \vspace{-0.1cm}\caption{Polar taxonomy of near-horizon geodesics. For clarity, the scale is not respected on the horizontal axis. The dashed blue curves $\ell^2=\ell^2_*$ represent the position of the spherical orbits in parameter space. This figure contrasts with Figure \ref{fig:angularKerrTaxonomy}.}\vspace{-1cm}
    \label{fig:angularTaxonomy}
\end{figure}

\clearpage

\begin{table}[!tbh]    \centering
\begin{tabular}{|c|c|c|c|}\hline
\rule{0pt}{13pt}\textbf{Angular momentum (and Casimir)} & \textbf{Energy} & \textbf{Radial range} & \textbf{Denomination} \\ \hline
Supercritical: $\ell > \ell_*$ ($-\ell^2 < \mathcal C < 0$)& $E > 0$ & $0 \leq R \leq \infty$ & \begin{tabular}{c} \rule{0pt}{13pt}Plunging$(E,\ell)$\\\rule{0pt}{13pt}Outward$(E,\ell)$\end{tabular}\\\cline{2-4}
 & $E = 0$ & $0 < R \leq \infty$ & \rule{0pt}{13pt}Marginal$(\ell)$ \\ \cline{2-4}
& $E < 0$ & $R_+ \leq R \leq \infty$ & \rule{0pt}{13pt}Def\mbox{}lecting$(E,\ell)$ \\ \hline
Critical: $\ell = \ell_*$ ($\mathcal C = 0$) & $E > 0$ & $0 \leq R \leq \infty$ & \begin{tabular}{c} \rule{0pt}{13pt}Plunging$_*(E)$\\\rule{0pt}{13pt}Outward$_*(E)$\end{tabular} \\ \cline{2-4}
& $E = 0$ & $0 < R \leq \infty$ & \rule{0pt}{13pt}Spherical$_*$ (ISSO) \\  \hline
Subcritical: $0 \leq \ell^2 < \ell_*^2$ ($0 < \mathcal C \leq \frac{3 \ell_*^2}{4}$) & $E > 0$ & $0 \leq R \leq R_+$ & \rule{0pt}{13pt}Bounded$_<(E,\ell)$ \\ \hline
Critical: $\ell = -\ell_*$ ($\mathcal C = 0$) & $E > 0$ & $0 \leq R \leq R_0$ &\rule{0pt}{13pt} Bounded$^-_*(E)$ \\ \hline
Supercritical: $\ell < - \ell_*$ ($-\ell^2 < \mathcal C < 0$)& $E > 0$ & $0 \leq R \leq R_-$ &\rule{0pt}{13pt} Bounded$_>(E,\ell)$\\\hline 

\end{tabular}\caption{Radial taxonomy of future-oriented geodesics in NHEK. }\label{table:taxNHEK}
\end{table}
\begin{figure}[!hbt]
    \centering
    \includegraphics[width=0.8\textwidth]{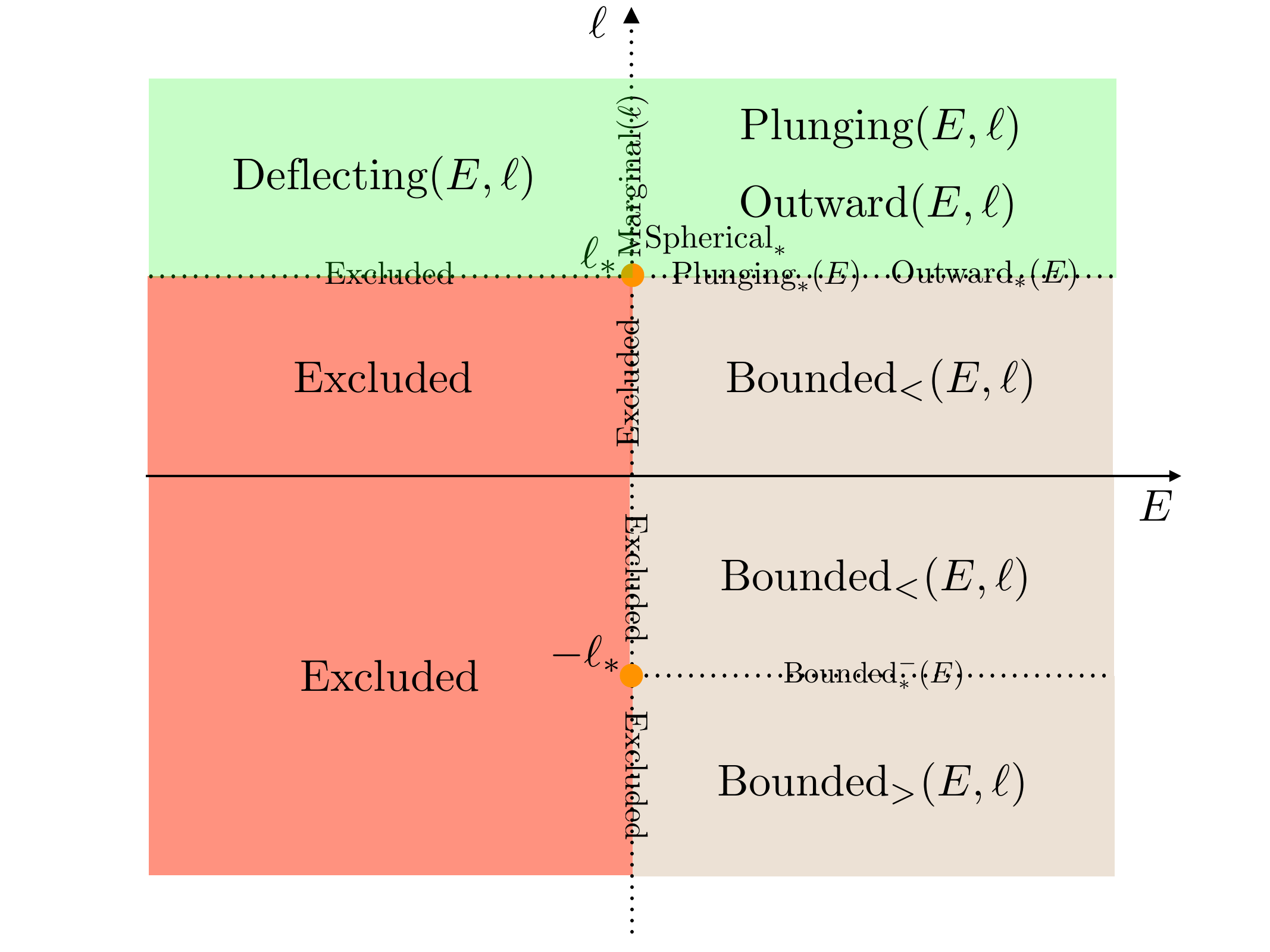} 
    \caption{Radial taxonomy of future oriented geodesics in NHEK. For equatorial geodesics, $\ell_*=\frac{2}{\sqrt{3}}M \mu$, while for orbits with inclination, $\ell_* = \frac{2}{\sqrt{3}} \sqrt{M^2\mu^2 + Q}$.}
    \label{fig:taxonomyEquator}
\end{figure}

\begin{figure}[!htbp]
    \centering
   \begin{tabular}{cccc}
    \includegraphics[width=3.5cm]{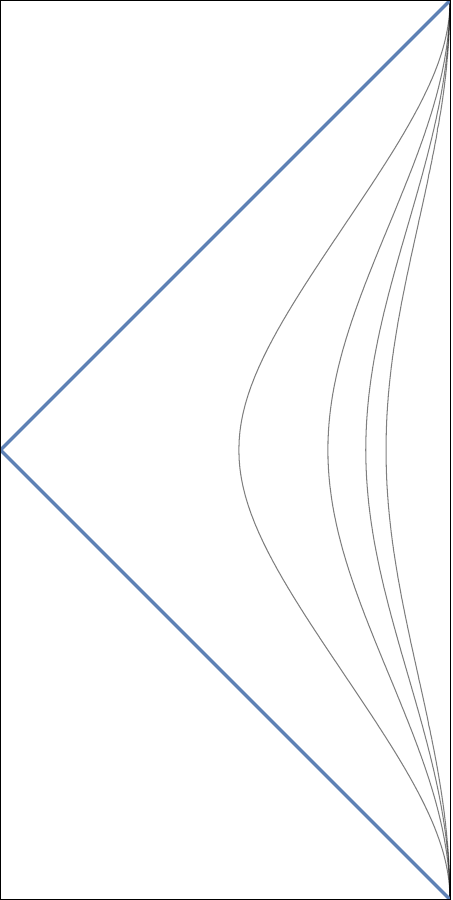} & \includegraphics[width=3.5cm]{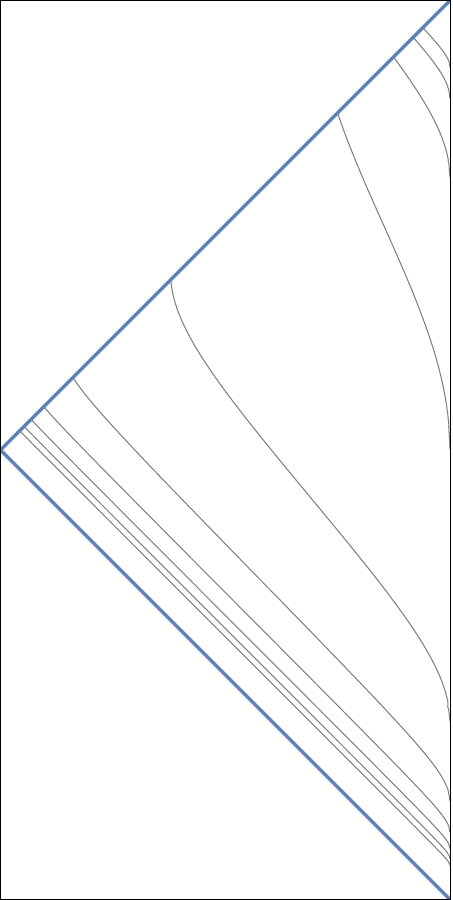} & \includegraphics[width=3.5cm]{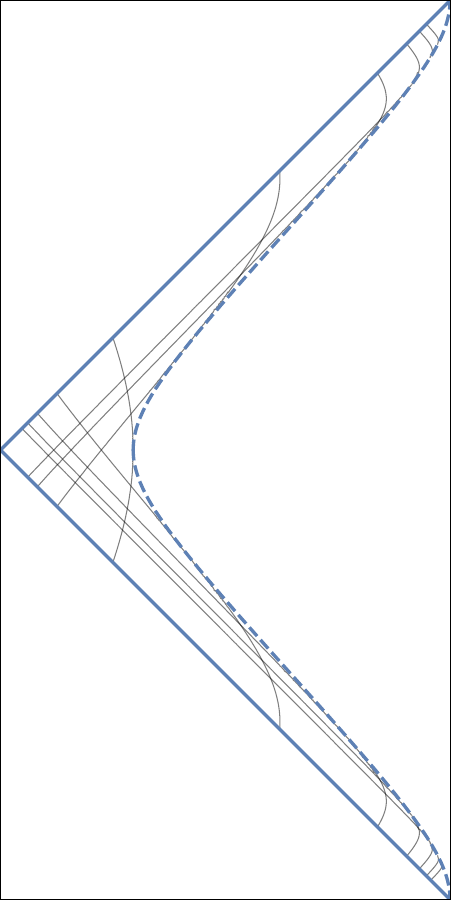} &
    \includegraphics[width=3.5cm]{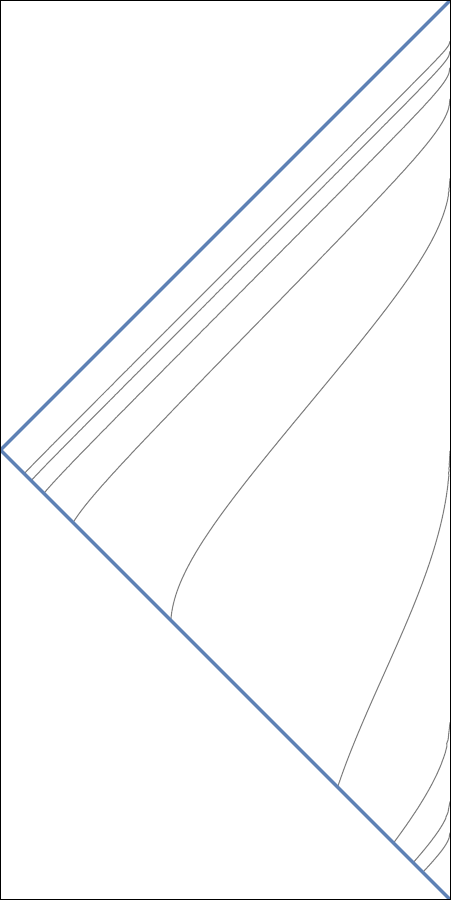}\\
    (a) Spherical${}_*(ISSO)$ & (b) Plunging${}_*(E)$ & (c) Bounded${}^-_*(E)$ & \begin{tabular}{c}
    (d) Outward$_*(E)$,\\
    Outward$(E,\ell)$
    \end{tabular} \\
    \rule{0pt}{30pt} & & & \\
    \includegraphics[width=3.5cm]{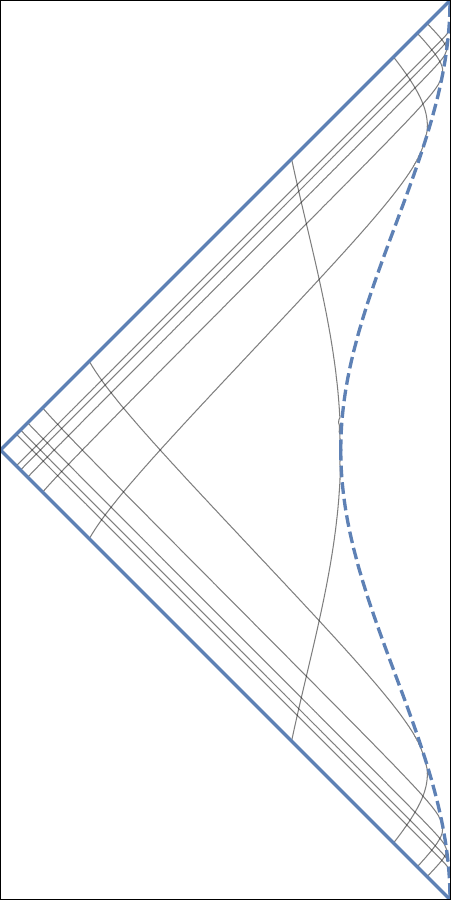} &
    \includegraphics[width=3.5cm]{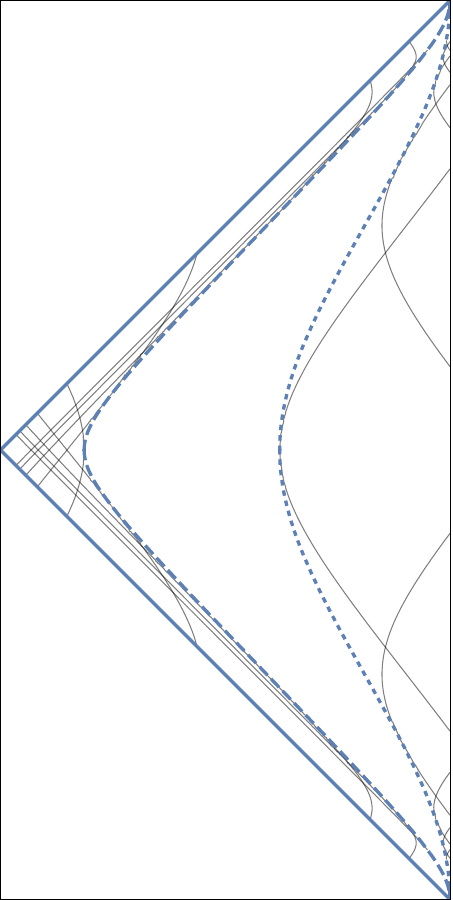} & \includegraphics[width=3.5cm]{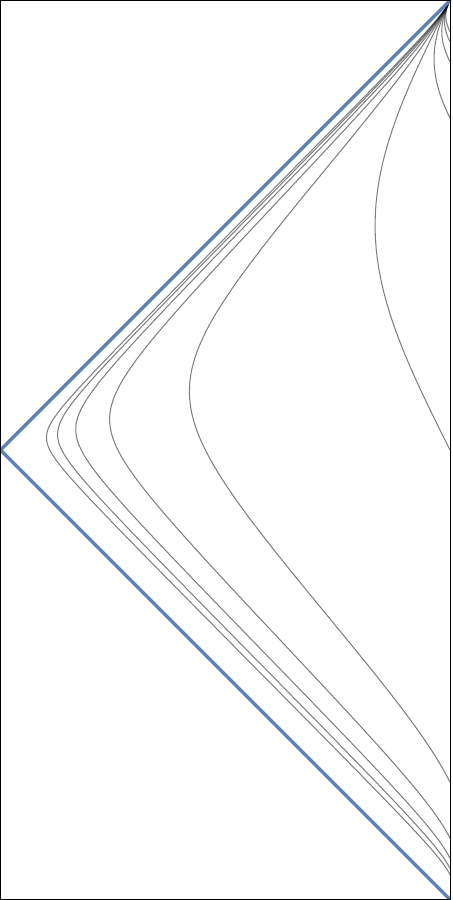} & \includegraphics[width=3.5cm]{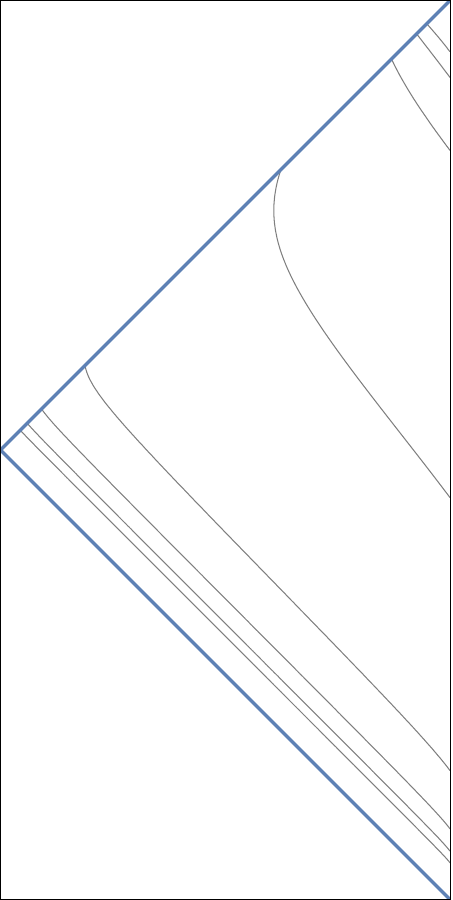} \\
     (e) Bounded${}_<(E,\ell)$ & \begin{tabular}{@{}c@{}}(f) Bounded${}_>(E,\ell)$, \\  Def\mbox{}lecting$(E,\ell)$\end{tabular} & (g) Marginal$(\ell)$ & (h) Plunging$(E,\ell)$\\
   \end{tabular}
    \caption{Taxonomy of NHEK geodesics depicted in the global NHEK conformal diagram. The upper (or respectively, lower) blue line represent the future (respectively, past) event horizon $R=0$ and the dashed/dotted lines are the roots of the radial potential. We used $M=1$, $E=\pm1$ and $\ell=\pm 2\ell_*$ (and $\ell=\pm \frac{1}{2}\ell_*$, respectively) for supercritical (and subcritical, respectively) trajectories.}
    \label{fig:penrose_NHEK}
\end{figure}

\begin{landscape}

\phantom{a}

\vspace{\stretch{1}}

\begin{table}[!bht]    \centering
\begin{tabular}{|c|c|c|c|}\hline
\begin{tabular}{c} \rule{0pt}{13pt} \textbf{Angular momentum}\\\textbf{(and Casimir)}\end{tabular} & \textbf{Energy} & \textbf{Radial range} & \textbf{Denomination} \\ \hline
Supercritical: $\ell > \ell_*$  & $e>-\kappa\sqrt{-\mathcal C}$ & $\kappa\leq R\leq\infty$ & \begin{tabular}{c}
\rule{0pt}{13pt}Plunging$(e,\ell)$ \\\rule{0pt}{13pt} Outward$(e,\ell)$ 
\end{tabular}\\ \cline{2-4}
($-\ell^2 < \mathcal C < 0$) & $e=-\kappa\sqrt{-\mathcal C}<0$ & $R=\frac{\kappa\ell}{\sqrt{-\mathcal C}}$ &\rule{0pt}{13pt} Spherical$(\ell)$ \\ \cline{2-4}
 & $ e<-\kappa\sqrt{-\mathcal C}<0$ & $R_+\leq R\leq\infty$ &\rule{0pt}{13pt} Def\mbox{}lecting$(e,\ell)$ \\ \hline
Critical: $\ell = \ell_*$ ($\mathcal C = 0$) & $e > 0$ & $\kappa \leq R \leq \infty$ & \begin{tabular}{c}
\rule{0pt}{13pt}Plunging$_*(e)$ \\
\rule{0pt}{13pt}Outward$_*(e)$
\end{tabular} \\ \cline{2-4}
 & $e = 0$ & $\kappa \leq R \leq \infty$ & \begin{tabular}{c}
\rule{0pt}{13pt}Plunging$_*$ \\
\rule{0pt}{13pt}Outward$_*$
 \end{tabular} \\ \cline{2-4}
 & $-\kappa\ell < e < 0$ & $\kappa \leq R \leq R_0$ &\rule{0pt}{13pt} Bounded$_*(e)$ \\ \hline
\begin{tabular}{c}
Subcritical: $0 \leq \ell^2 < \ell_*^2$\\($0 < \mathcal C \leq \frac{3 \ell_*^2}{4}$)\end{tabular} & $e>-\kappa\ell$ & $\kappa \leq R \leq R_+$ &\rule{0pt}{13pt} Bounded$_<(e,\ell)$ \\ \hline
Critical: $\ell = - \ell_*$ ($\mathcal C = 0$) & $e > -\kappa\ell>0$ & $\kappa \leq R \leq R_0$ &\rule{0pt}{13pt} Bounded$_*^-(e)$\\\hline
 \begin{tabular}{c}Supercritical: $\ell <- \ell_*$ \\ ($-\ell^2<\mathcal{C}<0$)\end{tabular}  & $e>-\kappa\ell>0$ & $\kappa\leq R\leq R_-$ & \rule{0pt}{13pt} Bounded$_>(e,\ell)$ \\\hline
\end{tabular}\caption{Taxonomy of future-directed geodesics in near-NHEK.}\label{table:taxNN}
\end{table}

\vspace{\stretch{1}}
    
\end{landscape}

\subsection{Near-NHEK}
The only difference between NHEK and near-NHEK geodesic solutions lies in the terms involving the radial coordinate. The proposition stating the equivalence relation between the equatorial and inclined radial parts of the geodesic motion takes the same form as in NHEK:
 \begin{proposition}\label{thm:equivnearNHEK}
For a given normalization $\kappa$, the radial integrals  $t^{(i)}_{R;\kappa}(R)$ $(i=0,1,2)$ only depend upon the near-NHEK energy $e$ and angular momentum $\ell$ while all the dependence upon the mass $\mu$ and Carter constant $Q$ is through $\ell_* = \frac{2}{\sqrt{3}}\sqrt{M^2 \mu^2 +Q}$. 
 \end{proposition}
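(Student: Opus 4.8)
The plan is to reproduce, for the near-NHEK potential, the same structural argument that underlies Proposition~\ref{thm:equivNHEK}. First I would go back to the definitions of the radial integrals and write each one as
\[
t^{(i)}_{R;\kappa}(R)=\sint\frac{g_i(R,\kappa)\,\dd R}{\pm_R\sqrt{v_{R;\kappa}(R)}},\qquad g_0=1,\quad g_1=\frac{R}{R^2-\kappa^2},\quad g_2=\frac{1}{R^2-\kappa^2}.
\]
The key observation is that the prefactors $g_i$ are rational functions of $R$ and of the (purely conventional) normalization $\kappa$ only; they involve neither $e$, nor $\ell$, nor $\mu$, nor $Q$. Hence the sole channel through which the geodesic data can enter $t^{(i)}_{R;\kappa}$ is the radial potential $v_{R;\kappa}$, both explicitly under the square root and implicitly through its roots, which determine the turning points and therefore the sign-flip bookkeeping hidden in the symbol $\sint$.

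Second, I would examine $v_{R;\kappa}$ itself. Substituting $R=\hat r+\kappa$ into the expression for $v_{\hat r;\kappa}(\hat r)$ displayed above, one sees that its coefficients are polynomials in $e$, $\ell$, $\kappa$ and $\ell_*^2$; equivalently, one has $v_{R;\kappa}(R)=-\mathcal C\,(R-R_+)(R-R_-)$, or $2e\ell_*(R-R_0)$ in the degenerate case $\mathcal C=0$, with $\mathcal C=\frac{3}{4}(\ell_*^2-\ell^2)$ and with $R_\pm$, $R_0$ built algebraically from $e$, $\ell$, $\kappa$ and $\mathcal C$. In particular $\mu$ and $Q$ never occur on their own but only through $\ell_*^2=\frac{4}{3}(M^2\mu^2+Q)$.

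Third, I would conclude: since the turning points are precisely the roots of $v_{R;\kappa}$, the count of turning points governing the sign flips in $\sint$ depends on the geodesic parameters only through $(e,\ell,\ell_*)$; together with the first two steps this shows that every antiderivative $t^{(i)}_{R;\kappa}(R)$ is a function of $R$, $\kappa$ and $(e,\ell,\ell_*)$ alone, which is the assertion. Taking the formal limit $\kappa\to0$ recovers Proposition~\ref{thm:equivNHEK}.

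I do not expect a genuine obstacle here: the statement is really a bookkeeping observation about where $\mu$ and $Q$ are allowed to appear in the equations of motion, and the only point that needs a moment's care is to check that the turning points --- i.e.\ the integration endpoints concealed inside $\sint$ --- introduce no further dependence on $\mu$ or $Q$ beyond $\ell_*$, which the closed-form expressions for $R_\pm$ and $R_0$ make manifest. The payoff, which is the actual reason for stating the proposition, is that the entire radial taxonomy and all closed-form radial quadratures can be carried over from the equatorial ($Q=0$) analysis of Ref.~\cite{Compere:2017hsi} under the single replacement $\frac{2}{\sqrt{3}}M\mu\to\ell_*$, and for null geodesics ($\mu=0$) under $M^2\mu^2\to Q$.
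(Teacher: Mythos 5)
Your proposal is correct and follows essentially the same reasoning as the paper, which treats the proposition as a direct observation from the definitions: the prefactors $R^{2-i}/(R^2-\kappa^2)$ involve only $R$ and $\kappa$, and the potential $v_{R;\kappa}$ (and hence its roots $R_\pm$, $R_0$ and the turning-point bookkeeping in $\sint$) carries all dependence on $\mu$ and $Q$ through $\ell_*^2=\tfrac{4}{3}(M^2\mu^2+Q)$ alone. Your explicit attention to the sign-flip structure of $\sint$ is a welcome bit of extra care, but it does not change the argument.
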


As in NHEK, the radial taxonomy of Ref. \cite{Compere:2017hsi} is easily extended to bounded, outward and/or retrograde orbits by studying the roots and the sign of $v_{R;\kappa}(R)$. This leads to the classification displayed in Table \ref{table:taxNN} and Figure \ref{fig:taxonomyNN}. 

The future-orientation condition \eqref{consRc} implies $e> -\kappa \ell$ for each orbit that reaches the horizon at $R=\kappa$. In the case $\ell > \ell_*$ and $e < 0$, the condition $e \leq -\kappa \sqrt{-\mathcal C}$ implies $e+\kappa \ell \geq 0$ and therefore the parabola does not intersect the line. Past-oriented geodesics (not depicted here) are obtained from a central symmetry around the origin $e=\ell=0$ as a result of the $\uparrow\!\downarrow$-flip \eqref{PTflip}.  The explicit expressions of all near-NHEK geodesics are listed in Appendix \ref{app:explicitNN}. Also notice that the energy range of the Deflecting$(e,\ell)$ class has been corrected in this thesis with respect to the original derivation \cite{Compere:2020eat}, accordingly to the observation of \cite{Compere:2021bkk}.

\begin{figure}[!bth]
    \centering
    \includegraphics[width=\textwidth]{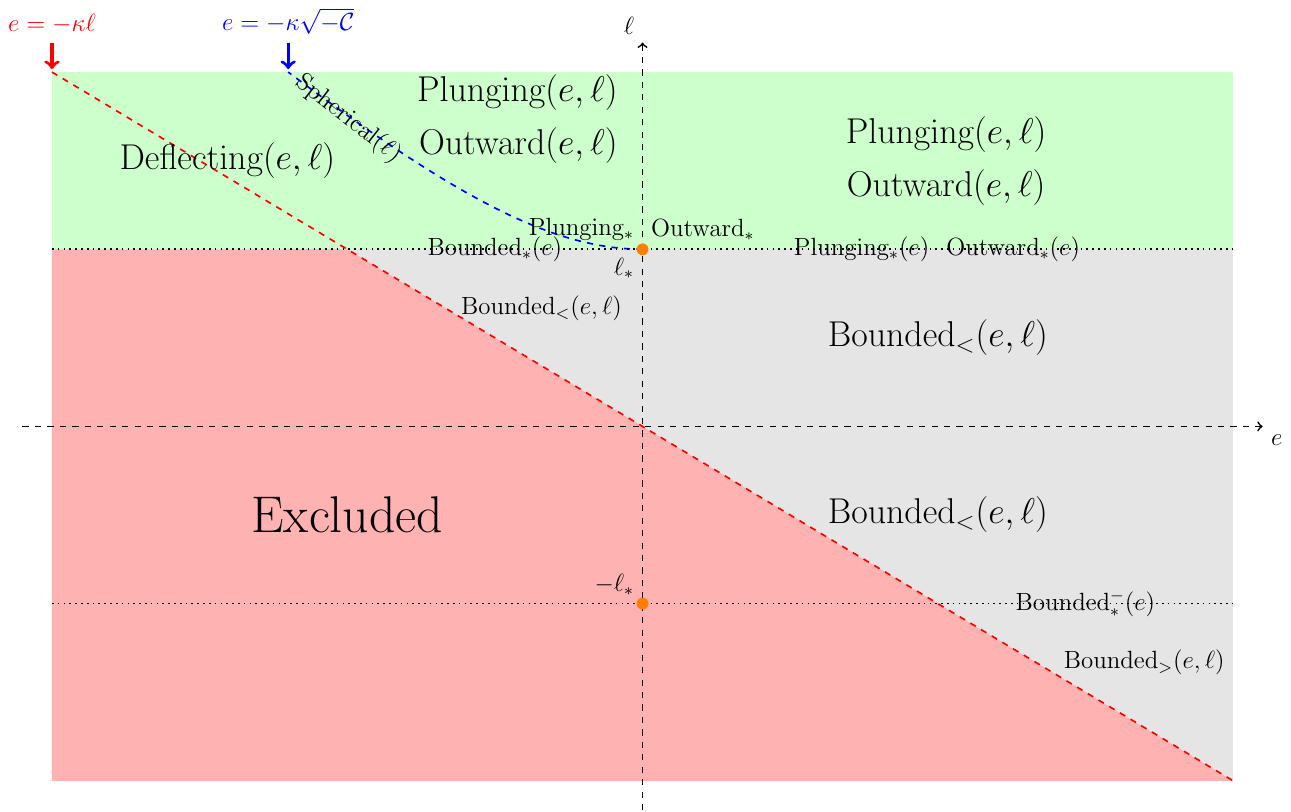} 
    \caption{Radial taxonomy of geodesics in near-NHEK. For equatorial geodesics, $\ell_*=\frac{2}{\sqrt{3}}M \mu$, while for orbits with inclination, $\ell_* = \frac{2}{\sqrt{3}} \sqrt{M^2\mu^2 + Q}$.}
    \label{fig:taxonomyNN}
\end{figure}

\subsection{High-spin features of geodesic motion}
\label{sec:univ}
Let us now discuss a few generic and universal features of near-horizon geodesic motion holding in the high-spin case.

\subsubsection{Radial motion}
A first straightforward conclusion one can derive from the analysis of the near-horizon radial geodesic motion is that
 \begin{proposition}\label{prop4}
All radially unbounded NHEK or near-NHEK geodesics are prograde and either critical or supercritical; \emph{i.e.}, they satisfy $\ell\geq\ell_*$.
 \end{proposition}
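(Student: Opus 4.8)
The plan is to inspect the radial potentials $v_R(R)$ (NHEK) and $v_{R;\kappa}(R)$ (near-NHEK) case by case according to the sign of the Casimir $\mathcal{C} = \tfrac{3}{4}(\ell_*^2 - \ell^2)$, and show that in every subcritical ($\ell^2 < \ell_*^2$, i.e. $\mathcal{C} > 0$) and every retrograde critical or supercritical ($\ell \le -\ell_*$) configuration the allowed range of $R$ is compact, hence the orbit is radially bounded. Equivalently: radial unboundedness forces $\ell \geq \ell_*$. Since all radial integrals depend on $\mu$ and $Q$ only through $\ell_* = \tfrac{2}{\sqrt 3}\sqrt{M^2\mu^2 + Q}$ (Propositions \ref{thm:equivNHEK} and \ref{thm:equivnearNHEK}), and since $Q \geq 0$ on all near-horizon orbits (Proposition \ref{propQ}), the argument is purely about the quadratic-in-$R$ structure of the potential together with the future-orientation constraint \eqref{consRc}; it covers timelike and null orbits simultaneously.

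First I would treat NHEK. Write $v_R(R) = E^2 + 2E L_0 R - \mathcal{C} R^2$. A geodesic is radially unbounded iff $v_R(R) \geq 0$ for arbitrarily large $R > 0$, which for the leading term $-\mathcal{C} R^2$ requires $\mathcal{C} \leq 0$, i.e. $\ell^2 \geq \ell_*^2$. This immediately kills all subcritical orbits. It remains to eliminate the retrograde critical/supercritical branch $\ell \leq -\ell_*$. For $\mathcal{C} = 0$ (critical) one has $v_R = 2E L_0(R - R_0)$ with $L_0 = \ell = -\ell_* < 0$; future orientation \eqref{consRc} with $L_0 < 0$ forces $R < R_c = -E/\ell$, and one checks the admissible region is $0 \le R \le R_0$ — bounded. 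For $\mathcal{C} < 0$ with $\ell < -\ell_*$ and $E > 0$, the roots $R_\pm$ are real with $R_- < R_c < R_+$ (this relative ordering is exactly what is asserted in the paragraph preceding Table \ref{table:taxNHEK}), and future orientation selects $R \le R_-$ — again bounded (the class Bounded$_>(E,\ell)$). The case $E \le 0$ with $\ell < 0$ is incompatible with future orientation $E + L_0 R > 0$ for $R > 0$. So every radially unbounded NHEK geodesic has $\ell > 0$ and $\mathcal{C} \leq 0$, i.e. $\ell \geq \ell_*$.

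The near-NHEK case is handled identically using $v_{R;\kappa}(R) = (e + L_0\kappa)^2 + 2eL_0 R + \tfrac34(\ell^2 - \ell_*^2)\,R(R+2\kappa)$ after passing to the shifted variable $R = \hat r + \kappa$: the leading coefficient is again $-\mathcal{C}$, so unboundedness demands $\mathcal{C} \leq 0$, and the retrograde critical/supercritical subcases are excluded by the future-orientation condition $e > -\kappa\ell$ together with the root structure recorded in Table \ref{table:taxNN} (Bounded$_*^-$, Bounded$_>$), whose admissible $R$-ranges are all finite intervals $[\kappa, R_0]$ or $[\kappa, R_-]$. The one subtlety worth spelling out is that when $\ell > \ell_*$ and $e < 0$ the constraint $e \le -\kappa\sqrt{-\mathcal C}$ implies $e + \kappa\ell \ge 0$, so the deflecting branch genuinely exists and is prograde, consistent with the claim. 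The main obstacle is not any single computation but making the case enumeration exhaustive and airtight — in particular verifying the relative position of the critical radius $R_c$ with respect to $R_\pm$ in each sign sector, since that is what converts "the potential allows large $R$" into "but future orientation forbids it" for the retrograde supercritical orbits; once that ordering lemma is in hand (it is essentially quoted from \cite{Kapec:2019hro} and restated before Table \ref{table:taxNHEK}), the proposition follows by assembling the tables.
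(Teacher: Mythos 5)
Your argument is correct and is essentially the paper's own: the paper states Proposition \ref{prop4} as "directly visible" from the radial taxonomy in Tables \ref{table:taxNHEK} and \ref{table:taxNN}, and that taxonomy is built from exactly the two ingredients you use — the sign of the leading coefficient $-\mathcal{C}$ of the radial potential (forcing $\ell^2\geq\ell_*^2$ for unboundedness) and the future-orientation constraint $R<R_c$ for $\ell<0$ (which caps retrograde orbits at a finite radius). Your write-up merely makes explicit the case enumeration the paper leaves implicit in the tables, so no further comparison is needed.
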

This feature of the near-horizon radial motion is directly visible in Figs. \ref{fig:taxonomyEquator} and \ref{fig:taxonomyNN} and leads to remarkable consequences concerning the polar behavior of such trajectories that we will derive in the following section.

The separatrix between bound and unbound motion is clearly visible in Figs. \ref{fig:taxonomyEquator} and \ref{fig:taxonomyNN}. It consists of the geodesic classes Plunging$_*(E)$ and Outward$_*(E)$ for NHEK and the geodesic classes Plunging$_*(e)$, Outward$_*(e)$, and Bounded$_*(e)$ for near-NHEK that each lie at the critical angular momentum line $\ell = \ell_*$.

\subsubsection{Polar motion} 
The polar motion of both NHEK and near-NHEK trajectories is bounded in an interval around the equator, $\theta_{\text{\text{min}}}\leq \theta \leq \pi - \theta_{\text{\text{min}}}$, where $\cos\theta_{\text{\text{min}}}=\sqrt{z_+}$ or  $\cos\theta_{\text{\text{min}}}=\sqrt{z_0}$. The maximal polar angle is determined for $\ell^2 \neq \ell^2_\circ = 4 M^2 \mu^2$ as
\begin{equation}
z_+(\ell ,Q ) =\frac{3\ell^2+4(Q+M^2\mu^2)-\sqrt{9\ell^4+16(M^2\mu^2-Q)^2+8\ell^2(3M^2\mu^2+5Q)}}{2(4M^2\mu^2-\ell^2)}\nonumber
\end{equation}
and for $\ell = \pm \ell_\circ$ as
 \begin{align}
z_0(Q) = \lim_{\ell \rightarrow \pm 2 M \mu} z_+ = \frac{Q}{Q+4 M^2 \mu^2}. 
 \end{align}
Remember that $Q \geq 0$ by consistency of polar motion. The asymptotic values are
\begin{align}
\lim_{\scriptsize{\begin{array}{l} Q\to 0 \\ \ell \text{ fixed} \end{array}}}z_+(\ell,Q)&=0, \qquad\qquad\qquad\qquad\qquad\;\;\;\;\;\;\;\; \lim_{\scriptsize{\begin{array}{l} Q\to \infty \\ \ell \text{ fixed} \end{array}}}z_+(\ell,Q)=1,\label{eq:asymptotics}\\
\lim_{\scriptsize{\begin{array}{l} \ell \to 0 \\ Q \text{ fixed} \end{array}}}z_+(\ell,Q)&=\left\lbrace
\begin{array}{cl}
\frac{Q}{M^2\mu^2} &\text{ if }Q<M^2\mu^2\\
1 & \text{ if }Q\geq M^2\mu^2
\end{array}\right. ,\qquad \lim_{\scriptsize{\begin{array}{l} \ell \to \infty \\ Q \text{ fixed} \end{array}}}z_+(\ell,Q) = 0.
\end{align}
 For fixed $\ell$, $z_+$ is a monotonic function of $Q$, and reciprocally $z_+$ is monotonic in $\ell$ at fixed $Q$. The pendular oscillation around the equatorial plane will explore a larger range of $\theta$ when $\theta_\text{\text{min}}$ is smallest or $z_+$ closer to 1, which occurs either for small $\ell$ and $Q \geq M^2 \mu^2$ or large $Q$. 
 
Now, one can check that for critical or supercritical angular momentum $\ell^2 \geq \ell^2_*(Q)$, one has $z_+ < 2 \sqrt{3}-3$ for $\ell \neq \ell_\circ(Q)$ and $z_0 < 2 \sqrt{3}-3$ for $\ell^2 = \ell^2_\circ$. The special angle 
 \begin{align}
\theta_{\text{VLS}} \triangleq \arccos{\sqrt{2\sqrt{3}-3}} \approx 47^\circ
 \end{align}
is in fact the velocity-of-light surface in the NHEK geometry \eqref{eq:NHEK_metric} (or near-NHEK geometry) defined as the polar angle such that $\partial_T$ is null. It obeys $\Lambda(\theta_{\text{VLS}}) = 1$. The polar region closer to either the north or south poles admits a timelike Killing vector, namely $\partial_T$. On the contrary, the polar region around the equator $\theta\in ]\theta_{\text{VLS}},\pi-\theta_{\text{VLS}}[$ does not admit a timelike Killing vector. The velocity-of-light surface separates these two polar regions. We have therefore proven the following property:
 \begin{proposition}\label{prop5}
All critical or supercritical orbits $\ell^2 \geq \ell^2_*$ in (near-)NHEK geometry lie in the polar region $\theta\in ]\theta_{\text{VLS}},\pi-\theta_{\text{VLS}}[$ where there is no timelike Killing vector. This applies in particular to all spherical orbits.
 \end{proposition}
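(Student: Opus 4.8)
The plan is to combine Proposition \ref{propQ} (which forces $Q\geq 0$ for any geodesic with real polar motion), Proposition \ref{prop4} (which restricts radially unbounded orbits and, together with the taxonomy tables, tells us that spherical orbits have $\ell^2\geq\ell_*^2$), and the explicit asymptotic/monotonicity information about the maximal polar root $z_+(\ell,Q)$ gathered just above the statement. The key claim to verify is the purely algebraic inequality
\begin{align}
\ell^2\geq \ell_*^2(Q)=\tfrac{4}{3}\bigl(M^2\mu^2+Q\bigr)\quad\Longrightarrow\quad z_+(\ell,Q)<2\sqrt{3}-3,
\end{align}
together with its degenerate analogue $z_0(Q)<2\sqrt3-3$ when $\ell^2=\ell_\circ^2=4M^2\mu^2$. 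Once this is in hand, the geometric input $\Lambda(\theta_{\text{VLS}})=1$ and the fact that $\partial_T$ is timelike precisely where $\Lambda(\theta)<1$ (i.e. for $z=\cos^2\theta>2\sqrt3-3$) immediately yields that any such orbit stays in $\theta\in\,]\theta_{\text{VLS}},\pi-\theta_{\text{VLS}}[$, and spherical orbits are a special case since they satisfy $\ell=\pm\ell_*$ exactly.

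\textbf{Main steps, in order.} First I would recall from the explicit formula for $z_+(\ell,Q)$ that it is the smaller root in $[0,1]$ of the quadratic $v_\theta(z)/(\text{const})$, so it suffices to evaluate the polar potential $v_\theta$ at the candidate value $z_\star\triangleq 2\sqrt3-3$ and check the sign. Concretely, since $v_\theta(z)=\mathcal C_\circ(z_+-z)(z-z_-)$ with $z_-<0<z_+$ (using $Q\geq 0$), one has $z_+<z_\star$ iff $v_\theta(z_\star)<0$ when $\mathcal C_\circ<0$, and the complementary sign when $\mathcal C_\circ>0$; I would handle the two sign regimes of $\mathcal C_\circ=(\ell^2-4M^2\mu^2)/4$ uniformly by instead writing $v_\theta(z_\star)$ directly in terms of $\ell$, $Q$, $M\mu$ and showing it has the correct sign exactly when $\ell^2\geq\ell_*^2$. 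Using the defining relation $v_\theta(z)=Q\sin^2\theta+\cos^2\theta\sin^2\theta(\tfrac{\ell^2}{4}-M^2\mu^2)-\ell^2\cos^2\theta$ and substituting $z=\cos^2\theta=z_\star$ (so $\sin^2\theta=1-z_\star$, and note the nice identity $z_\star(1-z_\star)=(2\sqrt3-3)(4-2\sqrt3)$ which simplifies), the inequality $v_\theta(z_\star)\leq 0$ reduces after clearing denominators to a linear inequality in $\ell^2$ and $Q$; one then checks that its boundary is exactly the line $\ell^2=\tfrac{4}{3}(M^2\mu^2+Q)$, i.e. $\ell^2=\ell_*^2$. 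Second, I would treat the borderline $\ell=\pm\ell_\circ$ separately because there the quadratic degenerates; here $z_0(Q)=Q/(Q+4M^2\mu^2)$ and the inequality $z_0(Q)<2\sqrt3-3$ is equivalent to $Q(1-(2\sqrt3-3))<4M^2\mu^2(2\sqrt3-3)$, i.e. $Q<\tfrac{2\sqrt3-3}{4-2\sqrt3}\cdot 4M^2\mu^2$, and I would check this is implied by $\ell_\circ^2=4M^2\mu^2\geq\ell_*^2=\tfrac43(M^2\mu^2+Q)$, which rearranges to $Q\leq 2M^2\mu^2$ — comfortably inside the required bound since $\tfrac{2\sqrt3-3}{4-2\sqrt3}=\tfrac{(2\sqrt3-3)}{2(2-\sqrt3)}>2$. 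Third, I would invoke the definition of the velocity-of-light surface: $\partial_T$ has norm proportional to $-R^2+R^2\Lambda^2$ ... more carefully, from the NHEK metric \eqref{eq:NHEK_metric} the $g_{TT}$ component is $2M^2\Gamma(\theta)R^2(\Lambda^2(\theta)-1)$, so $\partial_T$ is timelike iff $\Lambda(\theta)<1$ iff $z>2\sqrt3-3$ (using $\Lambda^2=\tfrac{4(1-z)}{(1+z)^2}$ and solving $\Lambda^2=1$), and spacelike for $z<2\sqrt3-3$; the same computation goes through in near-NHEK. Combining: the orbit's polar range $z\leq z_+<2\sqrt3-3$ (or $z\leq z_0<2\sqrt3-3$) lies entirely in the region where $\partial_T$ is spacelike, i.e. $\theta\in\,]\theta_{\text{VLS}},\pi-\theta_{\text{VLS}}[$. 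Finally, spherical orbits (Spherical$_*$ in NHEK, Spherical$(\ell)$ in near-NHEK) all have $\ell^2=\ell_*^2$ by the taxonomy in Tables \ref{table:taxNHEK} and \ref{table:taxNN}, so they are a special case.

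\textbf{Anticipated obstacle.} The genuinely delicate part is the second step — verifying the algebraic inequality uniformly across the sign change of $\mathcal C_\circ$ and making sure the reduction $v_\theta(z_\star)\lessgtr 0 \iff \ell^2\gtrless\ell_*^2$ carries the correct direction on each side. One has to be careful that when $\mathcal C_\circ>0$ (subcritical-in-$\ell_\circ$ but possibly still supercritical-in-$\ell_*$) the root $z_+$ is the \emph{smaller} positive root, so the sign test of $v_\theta$ at $z_\star$ flips relative to the $\mathcal C_\circ<0$ case; I expect the cleanest route is to avoid case analysis entirely by proving the equivalent statement $z_+(\ell,Q)$ is a strictly decreasing function of $\ell^2$ at fixed $Q$ (already asserted in the text as monotonicity) together with the single boundary evaluation $z_+(\ell_*,Q)=2\sqrt3-3$, which can be checked by direct substitution into the closed-form expression for $z_+$. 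The rest is routine: the velocity-of-light computation is a one-line evaluation from the metric, and the reduction of spherical orbits to the $\ell=\pm\ell_*$ locus is immediate from the taxonomy already established.
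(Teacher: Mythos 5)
Your overall strategy is exactly the paper's: show that $\ell^2\geq\ell_*^2(Q)$ forces $z_+<2\sqrt3-3$ (resp.\ $z_0<2\sqrt3-3$ on the degenerate line $\ell=\pm\ell_\circ$), then identify $\theta_{\text{VLS}}$ through $\Lambda(\theta_{\text{VLS}})=1$ and the sign of $g_{TT}=2M^2\Gamma R^2(\Lambda^2-1)$, and finally read off the spherical orbits from the taxonomy. The paper leaves the key inequality as ``one can check,'' so your attempt to make it explicit is welcome — but two of your concrete algebraic claims are wrong, and one of your two proposed verification routes fails as stated. First, the boundary of $v_\theta(z_\star)\leq 0$ is \emph{not} the line $\ell^2=\tfrac43(M^2\mu^2+Q)$. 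Writing $z_\star=2\sqrt3-3$, $1-z_\star=2(2-\sqrt3)$ and $z_\star(1-z_\star)=14\sqrt3-24=(6-4\sqrt3)(2-\sqrt3)$, one finds
\begin{equation}
v_\theta(z_\star)=(2-\sqrt3)\Big[-\tfrac32\ell^2+2Q+(6-4\sqrt3)M^2\mu^2\Big],
\end{equation}
whose zero locus is $\ell^2=\tfrac43 Q+\big(4-\tfrac{8\sqrt3}{3}\big)M^2\mu^2$, a line lying strictly \emph{below} $\ell_*^2(Q)$ since $6-4\sqrt3<2$. The implication $\ell^2\geq\ell_*^2\Rightarrow v_\theta(z_\star)<0\Rightarrow z_+<z_\star$ therefore holds (and this route does work, with no case analysis on $\mathcal C_\circ$: for $Q>0$ one has $v_\theta(0)=Q>0$ and $v_\theta(1)=-\ell^2<0$, so $z_+$ is the unique root in $(0,1)$ and $z_+<z_\star\iff v_\theta(z_\star)<0$ regardless of the sign of the leading coefficient), but it is not an equivalence and the bound is not saturated on the critical line.

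Second, and consequently, your ``cleanest route'' — monotonicity of $z_+$ in $\ell^2$ plus the single evaluation $z_+(\ell_*,Q)=2\sqrt3-3$ — rests on a false identity. On the critical line one has
\begin{equation}
z_+\big(\ell_*(Q),Q\big)=\frac{Q}{M^2\mu^2+Q+\sqrt{M^4\mu^4+\tfrac43 M^2\mu^2 Q+\tfrac43 Q^2}},
\end{equation}
which equals $0$ at $Q=0$ and only \emph{tends} to $2\sqrt3-3$ as $Q\to\infty$; the value $2\sqrt3-3$ is a supremum, never attained. So that route requires proving monotonicity in $Q$ along the critical line as well, at which point it is no simpler than the direct sign test. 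A minor further slip: $\tfrac{2\sqrt3-3}{4-2\sqrt3}=\tfrac{\sqrt3}{2}\approx 0.87$, not $>2$; your conclusion for the $\ell=\ell_\circ$ case still stands because the required bound is $Q<2\sqrt3\,M^2\mu^2$ while criticality gives $Q\leq 2M^2\mu^2$. With these corrections your first route yields a complete proof along the same lines as the paper's.
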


The subcritical orbits $\ell^2 < \ell^2_*$ can explore all polar regions of the (near)-NHEK geometry. As a consequence of Propositions \ref{prop4} and \ref{prop5}, we have 
 \begin{proposition}\label{prop6}
All radially unbounded geodesics in (near-)NHEK geometry lie in the polar region $\theta\in ]\theta_{\text{VLS}},\pi-\theta_{\text{VLS}}[$ bounded by the velocity-of-light surface. 
 \end{proposition}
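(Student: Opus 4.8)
The statement is an immediate corollary of Propositions \ref{prop4} and \ref{prop5}, so the plan is simply to chain them. First I would recall Proposition \ref{prop4}: any radially unbounded geodesic of the NHEK or near-NHEK geometry --- timelike or null, prograde or retrograde, and with arbitrary boundary conditions --- necessarily satisfies $\ell \geq \ell_*$, hence in particular $\ell^2 \geq \ell_*^2$. (This is read off the radial taxonomies of Tables \ref{table:taxNHEK} and \ref{table:taxNN}, equivalently Figs. \ref{fig:taxonomyEquator} and \ref{fig:taxonomyNN}: every class with unbounded radial range has angular momentum at least $\ell_*$.)

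Next I would feed this into Proposition \ref{prop5}, whose hypothesis $\ell^2 \geq \ell_*^2$ is exactly what Proposition \ref{prop4} supplies. Proposition \ref{prop5} then gives that the polar motion of such an orbit is confined to $\theta_{\text{min}} \leq \theta \leq \pi - \theta_{\text{min}}$ with $\cos^2\theta_{\text{min}} = z_+ < 2\sqrt 3 - 3$ when $\ell \neq \ell_\circ$, or $\cos^2\theta_{\text{min}} = z_0 < 2\sqrt 3 - 3$ when $\ell = \pm\ell_\circ$. In both cases $\cos^2\theta_{\text{min}} < 2\sqrt 3 - 3 = \cos^2\theta_{\text{VLS}}$, whence $\theta_{\text{min}} > \theta_{\text{VLS}}$, and therefore the whole trajectory lies inside the open interval $]\theta_{\text{VLS}},\pi-\theta_{\text{VLS}}[$. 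This concludes the proof.

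Since everything reduces to combining two already-established propositions, there is no genuine obstacle here; the only point to watch is that both branches of Proposition \ref{prop5} (the case $\ell \neq \ell_\circ$ and the case $\ell = \pm\ell_\circ$) are covered, which they are, as Proposition \ref{prop5} is stated for all critical and supercritical $\ell$. If a self-contained argument were wanted instead, one would inline the monotonicity of $z_+(\ell,Q)$ in $\ell$ (established just before Proposition \ref{prop5}), the boundary value $z_+ \to 2\sqrt 3 - 3$ as $\ell^2 \to \ell_*^2$, and the inequality $Q \geq 0$ coming from Proposition \ref{propQ} --- but with the results already in hand the short deduction above is all that is needed.
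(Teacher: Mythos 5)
Your proposal is correct and matches the paper's own argument, which derives Proposition \ref{prop6} precisely as a corollary of Propositions \ref{prop4} and \ref{prop5}. The extra care you take in checking both branches of Proposition \ref{prop5} (the $\ell\neq\ell_\circ$ and $\ell=\pm\ell_\circ$ cases) is sound but not needed beyond what the paper already states.
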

In particular, for null geodesics, this feature provides the ``NHEKline'' in the imaging of light sources around a nearly extreme Kerr black hole \cite{Bardeen:1972fi,Gralla:2017ufe}. In \cite{AlZahrani:2010qb,Porfyriadis:2016gwb}, Proposition \ref{prop6} was proven for null geodesics. Here, we show that it is a generic property of all timelike geodesics as well.

\section{Spherical geodesics}
\label{sec:spherical_properties}

The spherical (near-)NHEK geodesics take a distinguished role among all geodesics. First, a subclass of spherical geodesics in NHEK and near-NHEK constitute the innermost stable spherical orbits (ISSOs) and the innermost spherical bound orbits (ISBOs) in the high-spin limit, respectively.  Our first motivation is to fully characterize the ISSO, in order to generalize the analysis of the inspiral/merger transition performed around the equatorial plane in the high-spin limit  \cite{Compere:2019cqe,Burke:2019yek} to inclined orbits.

Second, as noticed in Ref. \cite{Compere:2017hsi}, the equatorial NHEK (resp. near-NHEK) orbits are the simplest representatives for each equivalence class of prograde incoming critical (respectively, supercritical) equatorial orbits under $\textsf{SL}(2,\mathbb R) \times \textsf{U}(1) \times \mathbb Z_2$ symmetry. We will show in Sec. \ref{sec:classes} that the spherical (near-)NHEK orbits are the simplest representatives for each equivalence class of arbitrary timelike (near-)NHEK geodesics under $\textsf{SL}(2,\mathbb R) \times \textsf{U}(1) \times (\mathbb Z_2)^3$ symmetry without any restriction. These two reasons justify the comprehensive study of the spherical geodesics. 

\subsection{Innermost stable spherical orbits} \label{sec:ISSO}

The ISSOs are defined as the last stable spherical orbits of Kerr. They are defined from the solutions to 
 \begin{align}\label{defISSO}
R(r) =R'(r)= R''(r) = 0
 \end{align}
where $R$ is defined in \eqref{eq:kerr_vr}. They admit a constant radius $ r $ and a fixed $\hat E$ and $\ell$, which can be obtained as solutions of polynomial equations which we will not give explicitly. There are two branches at positive $\hat E$ corresponding to prograde ($\ell \geq 0$) and retrograde orbits  ($\ell < 0$). For the Schwarzschild black hole, the parameters on the two branches of the ISSO are
 \begin{align}
r_{\text{ISSO}}=6M,\qquad \frac{\hat E_{\text{ISSO}}}{\mu}= \frac{2\sqrt{2}}{3},\qquad \frac{\ell_{\text{ISSO}}}{\mu M} = \pm \sqrt{12 - \frac{Q}{M^2\mu^2}},
 \end{align}
which implies the bound $Q \leq 12 M^2 \mu^2$.

For arbitary spin, the \textit{innermost stable circular orbit} (ISCO) is defined as the prograde ISSO equatorial orbit, i.e. restricted to $Q=0$ ($\theta = \frac{\pi}{2}$). The parameters are \cite{Bardeen:1972fi}
 \begin{align}
\frac{\hat E_{\text{ISCO}}}{M \mu} = \frac{1-2 /\tilde r_{\text{ISCO}}-\tilde a/\tilde r_{\text{ISCO}}^{3/2}}{\sqrt{1-3/\tilde r_{\text{ISCO}}-2\tilde a / \tilde r_{\text{ISCO}}^{3/2}}},\qquad \frac{\ell_{\text{ISCO}}}{\mu M} = \frac{2}{\sqrt{3 \tilde r_{\text{ISCO}}}} (3 \sqrt{\tilde r_{\text{ISCO}}} +2\tilde a),
 \end{align} 
where $\tilde a = a/M$ and 
\begin{subequations}
     \begin{align}\label{ISCO}
\tilde r_{\text{ISCO}} &\triangleq  \frac{r_{\text{ISCO}}}{M} =3+Z_2-\sqrt{(3-Z_1)(3+Z_1+2Z_2)}, \\
Z_1 & \triangleq  1+(1-\tilde a^2)^{1/3}[(1+\tilde a)^{1/3}+(1-\tilde a)^{1/3}],\qquad Z_2 \triangleq \sqrt{3\tilde a^2+(Z_1)^2}.
 \end{align}
\end{subequations}

\subsubsection{Minimal polar angle} 

In the generic case $\ell \neq 0$, the polar motion is pendular -- \textit{i.e.}, oscillating around the equator in the interval $[\theta_\text{\text{min}},\pi-\theta_\text{\text{min}}]$. The minimal angle as a function of the spin $a$ and ISCO radius $r_{\text{ISSO}}$ can simply be found by solving numerically the three equations \eqref{defISSO} that define the ISSO together with the condition that there is a polar turning point, $\Theta(\cos \theta_{\text{\text{min}}}) = 0$ where $\Theta(\cos^2\theta)$ is defined in \eqref{eq:kerr_vtheta}. The resulting minimal angle is displayed in Fig. \ref{fig:hugues} for a large range of spins including nearly extremal. This completes a similar plot drawn in Ref. \cite{Apte:2019txp} for spins far from extremality.

\begin{figure}[!hbt]
    \centering
    \includegraphics[width=13cm]{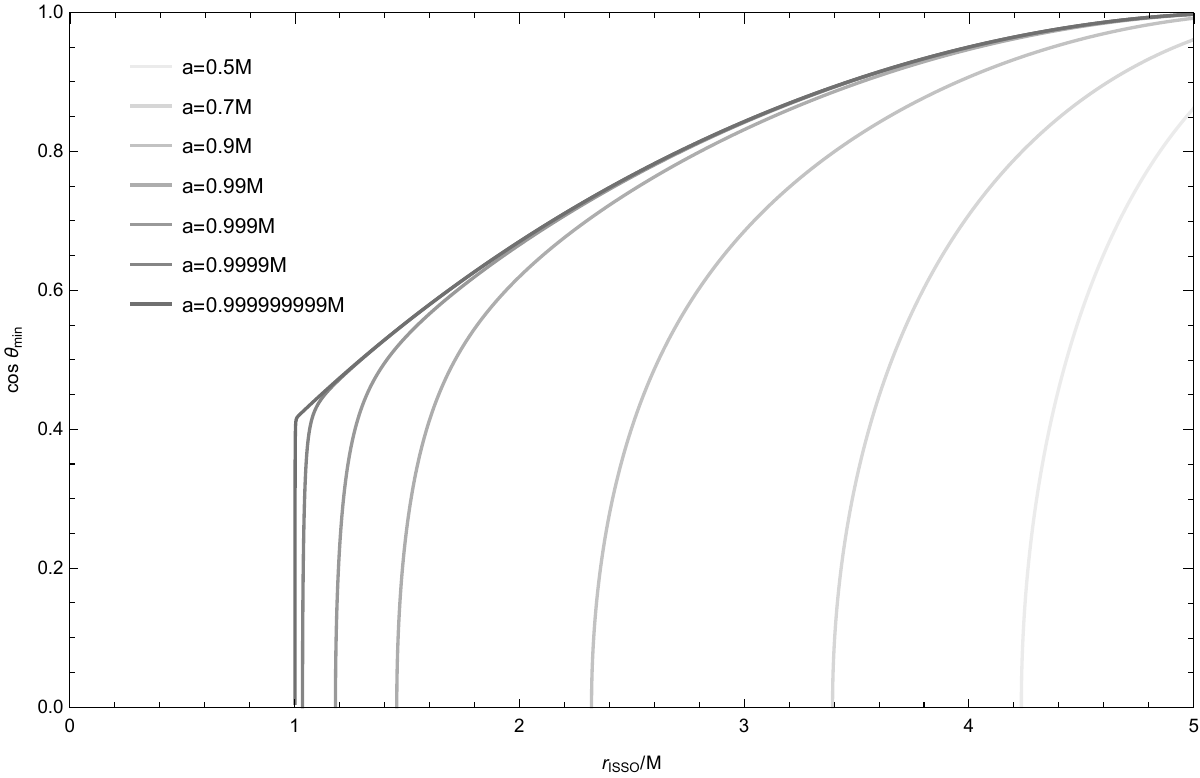}
    \caption{$\cos\theta_\text{\text{min}}$ as a function of ISSO radius for several black hole spins $a$.}
    \label{fig:hugues}
\end{figure}
\begin{figure}[!hbt]
    \centering
    \includegraphics[scale=0.5]{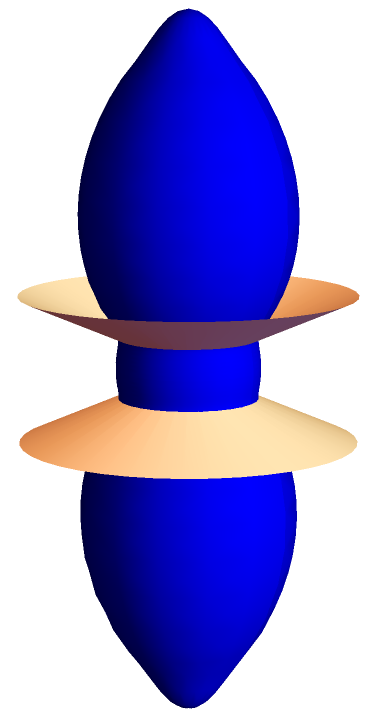}
    \caption{Euclidean embedding of the ISSO using the Boyer-Lindquist radius $r$, azimuthal angle $\varphi$ and polar angle $\theta$ for $a=0.9999$. A cone is drawn at the critical polar angle beyond which the ISSO lies in the NHEK region. In the extremal limit, this polar angle is $\theta \approx 65^\circ$.}
    \label{fig:cone}
\end{figure}

\clearpage

We note that for high-spins, the radius asymptotes to $r = M$ and the minimal angle reaches a critical value around $0.42$ radians or $65^\circ$. When the motion reaches regions sufficiently far from the equatorial plane, the ISSO radius increases steeply and leaves the near-horizon region $ r \simeq M$. Another graphical representation of this behavior is shown in Fig. \ref{fig:cone}. We will explain these features in the next section. 

\subsection{The NHEK spherical orbit and the high-spin ISSOs}

In the high-spin limit $a \rightarrow M$, the prograde ISSOs are characterized by the following Boyer-Lindquist energies and angular momentum:
 \begin{align}\label{ISSOext}
\hat E_{\text{ISSO}}= \frac{1}{\sqrt{3}M}\sqrt{M^2 \mu^2 +Q},\qquad \ell_{\text{ISSO}} = + 2M \hat E_{\text{ISSO}} 
 \end{align}
and the following Boyer-Lindquist radius:
 \begin{align} 
r_{\text{ISSO}} = M + M\left( \frac{Q+M^2 \mu^2}{-Q+\frac{M^2 \mu^2}{2}} \right)^{1/3} \lambda^{2/3}+\mathcal O(\lambda^{4/3}). \label{hatrISSO}
 \end{align}
Given the scaling in $\lambda$, for the range 
 \begin{align}
0 \leq Q \leq \frac{M^2 \mu^2}{2},\label{rangeQ}
 \end{align}
the ISSOs belong to the NHEK geometry and admit the NHEK radius 
 \begin{align}
R = R_{\text{ISSO}} \triangleq  \left( \frac{Q+M^2 \mu^2}{-Q+\frac{M^2 \mu^2}{2}} \right)^{1/3}. 
 \end{align}
In particular, the ISCO has the minimal radius $R_{\text{ISCO}}=2^{1/3}$. In terms of NHEK quantities, the orbits admit a critical angular momentum and a vanishing NHEK energy, 
 \begin{align}
\ell = \ell_* \triangleq \frac{2}{\sqrt{3}}\sqrt{Q+M^2 \mu^2},\qquad E = 0. 
 \end{align}
In the high-spin limit, the prograde ISSOs in the range \eqref{rangeQ} are therefore exactly the $\text{Spherical}_*(Q)$ orbits in the classification of Sec. \ref{sec:NHEK}.  The prograde ISSOs outside the range \eqref{rangeQ} and the retrograde ISSOs do not belong to the near-horizon geometry and will not be described here. 

In terms of polar behavior, $\text{Spherical}_*(Q)$ orbits are instances of $\text{Pendular}(Q,\ell_*)$ motion (except for $Q=0$, where they are just equatorial orbits). In the range \eqref{rangeQ}, they admit an $ \epsilon _0$ as defined in \eqref{defeps0} given by $ \epsilon _0 = \frac{Q-2 M^2 \mu^2}{3} < 0$, and the angular momentum lies below the value $\ell_\circ$:
 \begin{align}
\ell_* \leq \sqrt{2} M \mu < \ell_\circ. 
 \end{align}

The main property of $\text{Pendular}(Q,\ell_*)$ motion is that the polar angle $\theta$ is bounded in an interval around the equator (see \eqref{eq:costh} and \eqref{defzpm}) :
\begin{equation}
\theta\in[\theta_{\text{min}},\pi-\theta_{\text{min}}]
\end{equation}
where
\begin{equation}
\cos\theta_{\text{min}}=\sqrt{z_+}=\sqrt{\frac{Q}{\frac{3}{4}\ell^2_*+\sqrt{\frac{9}{16}\ell_*^4-\frac{\ell_*^2 Q}{2}+Q^2}}}.
\end{equation}

At fixed $M\mu$, $\theta_{\text{min}}(Q)$ is a monotonic function interpolating between the equator $\theta=90^\circ$ at $Q=0$ and $\theta_{\text{VLS}} \triangleq \arccos{\sqrt{2\sqrt{3}-3}} \approx 47^\circ$ for $Q \rightarrow \infty$. The special angle $\theta_{\text{VLS}}$ is the velocity-of-light surface in the NHEK geometry \eqref{eq:NHEK_metric} as described in Sec. \ref{sec:univ}. The ISSO therefore always lies in the region of NHEK spacetime around the equator, where there is no timelike Killing vector. This is depicted in Fig. \ref{fig:opening}.

\begin{figure}[!t]\center
\begin{tabular}{ccccc}\vspace{-20pt}
\includegraphics[width=0.2\textwidth]{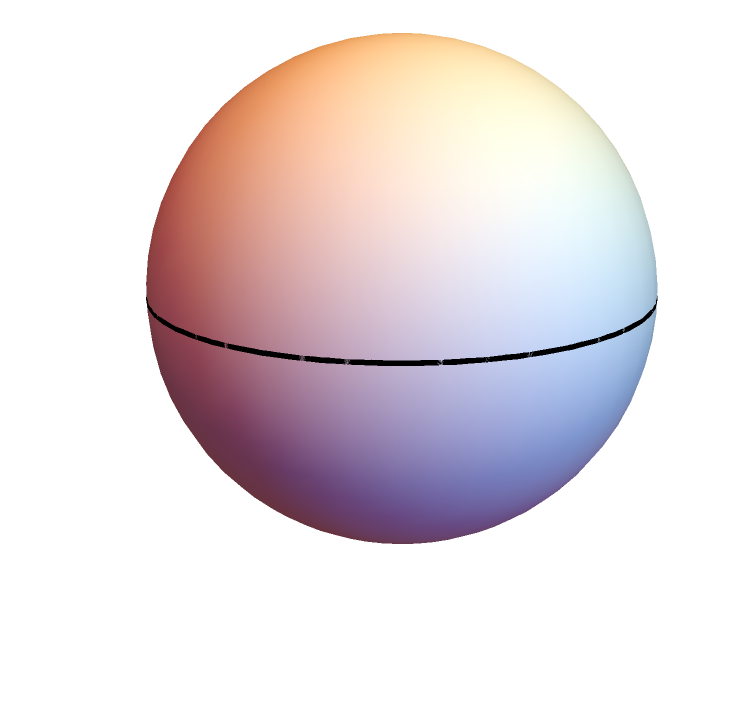} &  \includegraphics[width=0.2\textwidth]{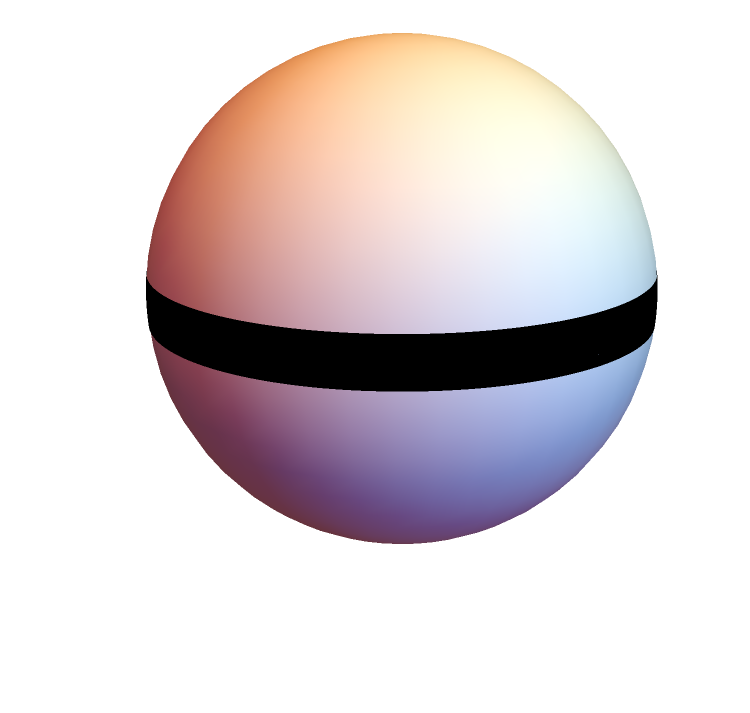} &  \includegraphics[width=0.2\textwidth]{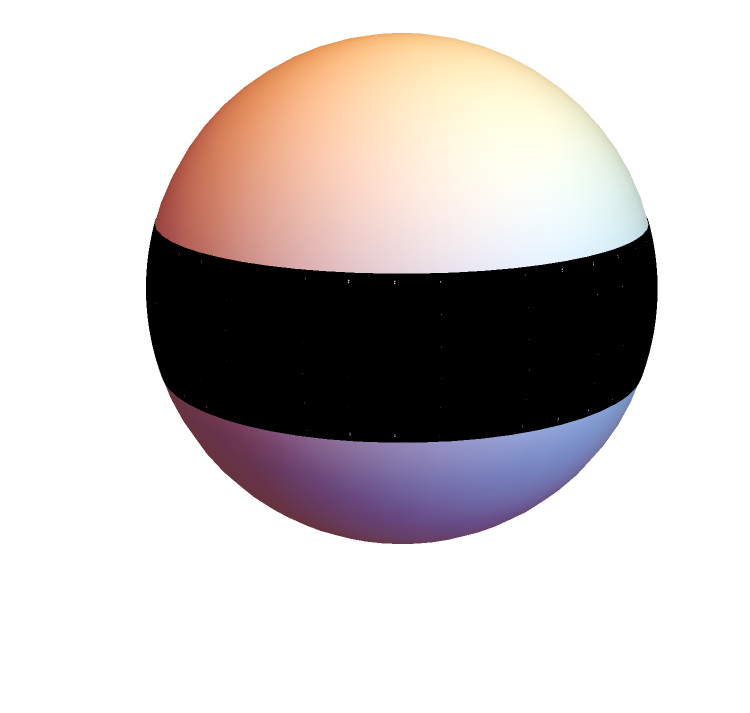} & \shortstack{\ldots\\\rule{0pt}{35pt}} &
\includegraphics[width=0.2\textwidth]{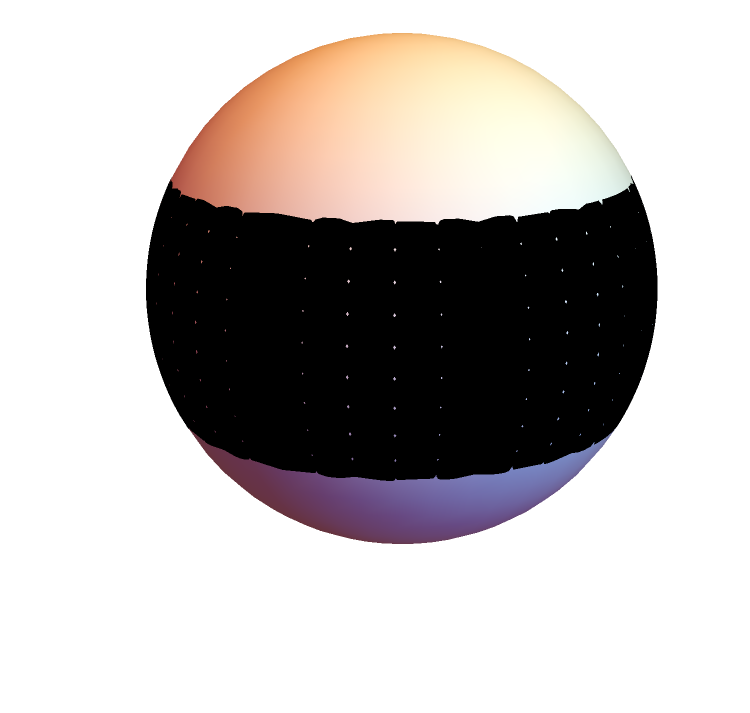}\\
\end{tabular}
\begin{tikzpicture}[scale=1.2]
\draw[thick] (0,0)--(7.5,0);
\draw[dashed,thick] (7.5,0)--(8.5,0);
\draw[->,thick] (8.5,0)--(10,0);
\node[right]() at (10,0) {$Q$};
\end{tikzpicture}
\caption{For increasing $Q \geq 0$, $\text{Spherical}_*(Q)$ orbits can explore a equator-centered band whose width becomes larger, finally reaching for $Q \rightarrow \infty$ the angular range $\theta\in[\theta_{\text{VLS}},\pi-\theta_{\text{VLS}}]$ bounded by the velocity-of-light surface. The prograde IBSOs lie in the near-NHEK region for $Q \leq 2 M^2\mu^2$, which further bounds the angular range.}
\label{fig:opening}
\end{figure}

However, since the ISSO admits the range \eqref{rangeQ} due to its relationship to the asymptotically flat Boyer-Lindquist radius \eqref{hatrISSO}, the limiting angle is reached first for $Q=\frac{M^2 \mu^2}{2}$ at $\arccos{\sqrt{3-2\sqrt{2}}}\approx 65^\circ$. This explains the behavior depicted in Fig. \ref{fig:hugues}. This result was discovered simultaneously in \cite{Stein:2019buj,Compere:2020eat}. The limiting angle of the ISSO is given by $\arcsin{\sqrt{2(\sqrt{2}-1)}} = \arccos{\sqrt{3-2\sqrt{2}}} \approx 65^\circ$. 

\subsection{The near-NHEK spherical orbits and the high-spin IBSOs}

The innermost bound spherical orbits (IBSOs) are determined by the equations
 \begin{align}
R(r) = R'(r) = 0,\qquad \hat E = \mu.
 \end{align}
In the high-spin limit $\lambda \rightarrow 0$, the angular momentum and Boyer-Lindquist radius of the prograde IBSOs are given by 
\begin{subequations}
     \begin{align}
\ell &= \ell_\circ \left(1+\frac{\lambda}{\sqrt{2}}\sqrt{1-\frac{Q}{2M^2 \mu^2}}+\mathcal O(\lambda^2)\right), \\
r &= M \left(1+ \frac{\sqrt{2}\lambda}{\sqrt{1-\frac{Q}{2M^2 \mu^2}}}+\mathcal O(\lambda^2)\right)  \end{align}
\end{subequations}
where $\ell_\circ \equiv 2M \mu$. In particular, for $Q=0$ we recover the scaling of the innermost bound circular orbit (IBCO) \cite{Bardeen:1972fi}. Given the scaling $\sim \lambda$, the prograde IBCOs therefore lie in the near-NHEK region for all $Q < 2M^2 \mu^2$. Using \eqref{eq:cvnn}--\eqref{eN}, the angular momentum, near-NHEK energy and near-NHEK radius are given in the high-spin limit by 
\begin{subequations}
     \begin{align}
\ell &=  \ell_\circ,\\
\frac{e}{\kappa} &= -\sqrt{2M^2 \mu^2 -Q}, \\
\frac{r}{\kappa} &= \frac{\sqrt{2}\lambda}{\sqrt{1-\frac{Q}{2M^2 \mu^2}}}. 
 \end{align}
\end{subequations}
The prograde IBCOs in the range $0 \leq Q < 2M^2 \mu^2$ are described by instances of Spherical$(\ell)$ orbits. In terms of polar motion, $Q=0$ are equatorial and $Q>0$ are pendular of class Pendular$_\circ(Q)$; see Table \ref{table:taxPolar}. The polar range is determined as $\theta_{\text{\text{min}}} \leq \theta \leq \pi - \theta_{\text{\text{min}}}$ where
 \begin{align}
\theta_{\text{\text{min}}} = \arccos\sqrt{\frac{Q}{Q+\ell_\circ^2}}. 
 \end{align}
The maximal polar angle reachable within the near-NHEK region by IBSOs is obtained for the limiting value $Q = 2M^2 \mu^2$ at
 \begin{align}
\theta_{\text{\text{min}}} = \arccos\sqrt{1/3} = \arcsin{\sqrt{2/3}} \approx 55^\circ.
 \end{align} 
This critical angle was also previously obtained in Refs. \cite{Hod:2017uof,Stein:2019buj}. Finally, note that spherical photon orbits in the high-spin limit were also discussed in Refs. \cite{Yang:2012he,Hod:2012ax}.

\section{Conformal mappings between radial classes}\label{sec:classes}

The near-horizon region of near-extremal Kerr black holes admits four Killing vectors forming the group $\textsf{SL}(2,\mathbb R) \times \textsf{U}(1)$, hereafter denoted as the conformal group $G$. The geodesic equations are invariant under $G$ and the geodesics therefore transform under the action of $G$. Moreover, a group generated by four $\mathbb Z_2$ symmetries exists that preserve the geodesic equations. The subgroup preserving the domain $R > 0$ for NHEK (or $r > 0$ for near-NHEK) is generated by the $\uparrow\!\downarrow$-flip \eqref{PTflip}, which flips the geodesic orientation, and two additional $\mathbb Z_2$ transformations that preserve the geodesic orientation: namely, the parity flip
 \begin{align}
\theta \rightarrow \pi - \theta,\qquad  \Phi \rightarrow \Phi + \pi,\qquad s_\theta^i\rightarrow -s_\theta^i, \label{parity}
 \end{align}
and the $\rightleftarrows$-flip
   \begin{align}
    T \rightarrow -T,\qquad \Phi \rightarrow -\Phi,\qquad \lambda \rightarrow -\lambda,\qquad s_R^i \rightarrow -s_R^i,\qquad s_\theta^i \rightarrow -s_\theta^i .\label{Tphiflip}
 \end{align}
The last discrete transformation that we use as a basis is the \rotatebox[origin=c]{-45}{$\rightleftarrows$}-flip
 \begin{align}
R \rightarrow -R, \qquad \Phi \rightarrow -\Phi, \qquad \ell \rightarrow -\ell, \qquad s^i_R \rightarrow -s^i_R. 
 \end{align}

The parity transformation defined in \eqref{parity} leaves each motion invariant and will not be considered further. The $\rightleftarrows$-flip changes the boundary conditions of the geodesics, which may affect their denomination. It maps bounded orbits to bounded orbits, and deflecting orbits to deflecting orbits, but plunging orbits to outward orbits, as illustrated in Fig. \ref{fig:flipTPhi}. For bounded orbits, the part before the turning point is mapped to the part after the turning point, and vice-versa. The \rotatebox[origin=c]{-45}{$\rightleftarrows$}-flip can be used as follows: one first continues a geodesic defined in $R > 0$ beyond the horizon $R = 0$ and the resulting geodesic with $R < 0$ is then mapped to a geodesic in the $R > 0$ region using the \rotatebox[origin=c]{-45}{$\rightleftarrows$}-flip. Together with the action of \eqref{Tphiflip}, it allows us to map plunging orbits with $\ell > 0$ to bounded orbits with $\ell < 0$. This process is illustrated in Fig. \ref{fig:bounded_from_plunging}.

\begin{figure}[ht!]
    \centering
    \begin{tabular}{c|c}
      \includegraphics[width=7cm]{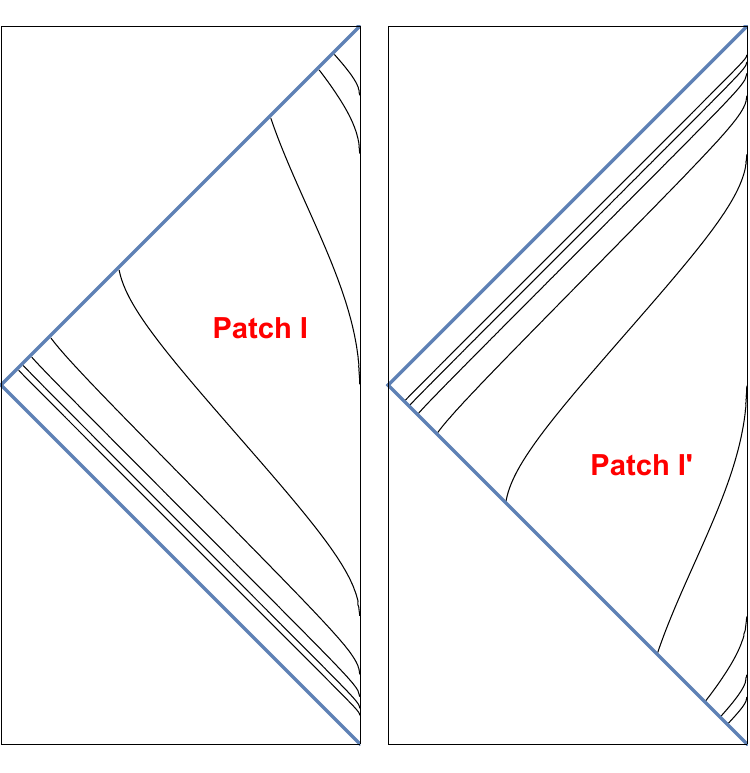}   & \includegraphics[width=7cm]{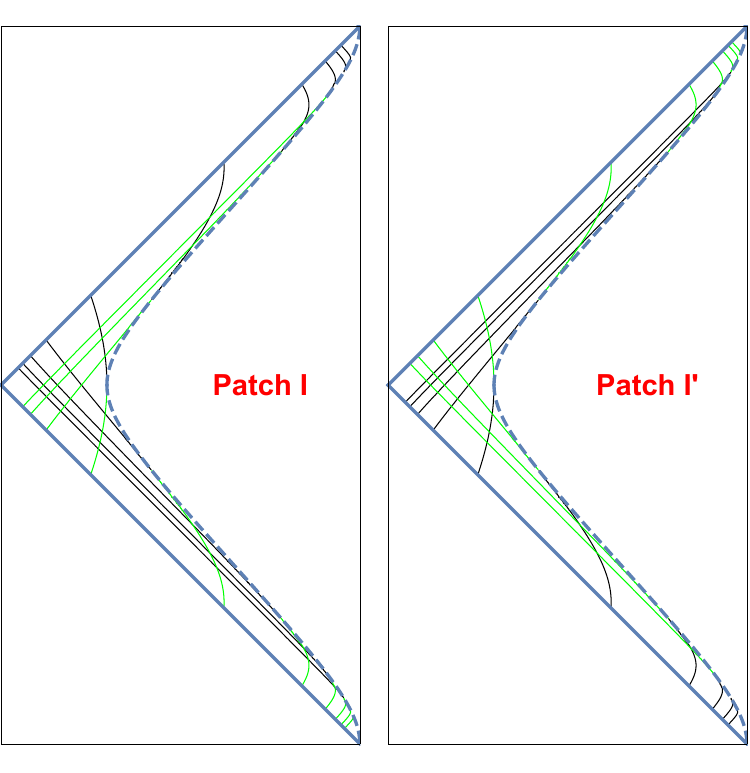}  \\
        (a) & (b)
    \end{tabular}
    \caption{Penrose diagram of NHEK spacetime depicting the action of the $\rightleftarrows$-flip on (a) plunging and (b) bounded geodesics. Under this transformation, a trajectory belonging to the patch I is mapped to an orbit of the patch I'. While plunging geodesics become outward ones, bounded motion remains bounded. The energy and angular momentum of the trajectory are unchanged.}
    \label{fig:flipTPhi}
\end{figure}

\begin{figure}
    \centering
    \begin{tabular}{ccc}
     \includegraphics[width=4cm]{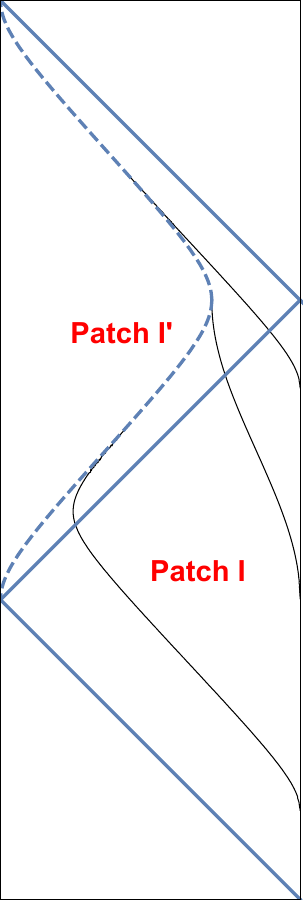} & \includegraphics[width=4cm]{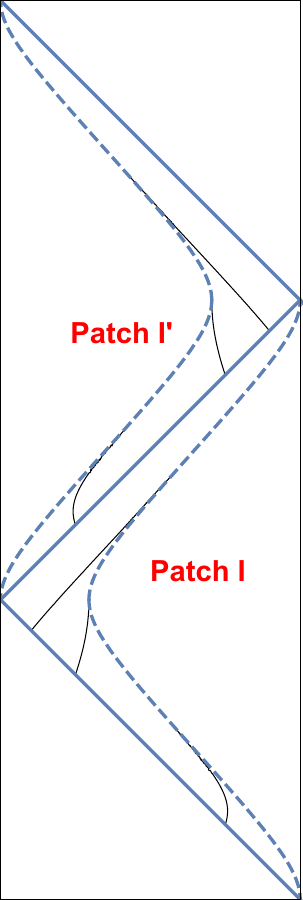} & \includegraphics[width=4cm]{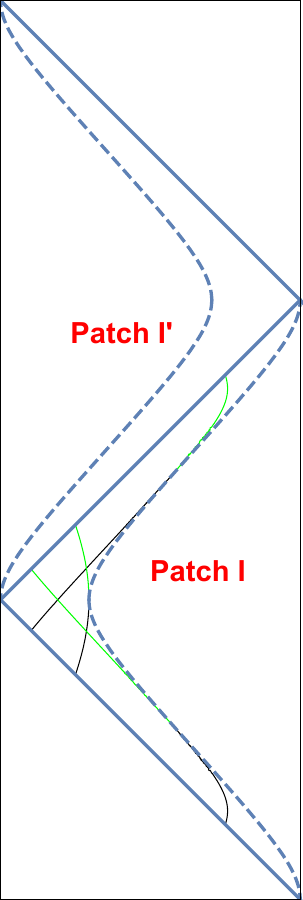}\\  (a) & (b) & (c) 
    \end{tabular}
    \caption{Penrose diagram representation of the construction of a critical NHEK bounded geodesic from a plunging one. (a) Continuation of the trajectory beyond the horizon in patch I' until the radial potential root (depicted with dashes); (b) \protect\rotatebox[origin=c]{-45}{$\rightleftarrows$}-flip which brings the part of the bounded geodesic before the turning point in the NHEK Poincar\'e patch I; (c) $\rightleftarrows$-flip, which maps the part of the bounded geodesic before the turning point to the part after the turning point.}
    \label{fig:bounded_from_plunging}
\end{figure}

The equivalence classes of equatorial critical and supercritical prograde timelike geodesics under the action of $\textsf{SL}(2,\mathbb R) \times \textsf{U}(1) \times \uparrow\!\downarrow$ symmetry were derived in  Ref. \cite{Compere:2017hsi} following earlier work \cite{Porfyriadis:2014fja,Hadar:2014dpa,Hadar:2015xpa,Hadar:2016vmk}. In this section, we will perform the decomposition of arbitrary geodesics into equivalence classes under the action of $\textsf{SL}(2,\mathbb R) \times \textsf{U}(1) \times \uparrow\!\downarrow \times \rightleftarrows \times$\rotatebox[origin=c]{-45}{$\rightleftarrows$}.

The Casimir $\mathcal{C}$ of $\textsf{SL}(2,\mathbb{R})$ cannot vary upon acting with $G \triangleq \textsf{SL}(2,\mathbb{R}) \times \textsf{U}(1)$ transformations. Moreover, the action of the group $G$ acts trivially on the polar coordinate $\theta$. These two properties imply that both $Q$ and $\ell$ are invariant under the action of $G$. In particular, critical, supercritical or subcritical geodesics form distinct classes under $G$. On the contrary, the (near-)NHEK energy $E$ (or $e$) can vary under conformal transformations. Conformal transformations can map NHEK to near-NHEK orbits, and vice-versa. As a result of Propositions \ref{thm:equivNHEK} and \ref{thm:equivnearNHEK}, null geodesics can be treated on the same footing as timelike geodesics.

A conformal transformation belonging to $\textsf{SL}(2,\mathbb{R}) \times \textsf{U}(1)$ maps (near)-NHEK spacetime parametrized by $(T,R,\theta,\Phi)$ to (near-)NHEK spacetime parametrized by $(\bar T,\bar R,\theta,\bar \Phi)$\footnote{We denote here without distinction NHEK and near-NHEK coordinates with capital letters.} where 
\begin{align}
\overline{T}&=\overline{T}(T,R),\nonumber\\
\overline{R}&=\overline{R}(T,R),\label{eq:conformal_tfo}\\
\overline{\Phi}&=\Phi+\delta\bar{\Phi}(T,R)\nonumber.
\end{align}
The geodesic equations in (near)-NHEK imply $T=T(R)$. Therefore, the action of conformal symmetries reduces to an action on the radial motion, leaving the polar motion unchanged. More precisely, in the decomposition of $\Phi(\lambda)$ \eqref{eq:NHEKphi}--\eqref{eqn:nnPhi} in terms of a radial part and a polar part, the polar part will remain untouched by conformal transformations.

It was shown in Ref. \cite{Compere:2017hsi} that each equivalence class of equatorial prograde critical (respectively, supercritical) geodesics with incoming boundary conditions under $G \times \uparrow\!\downarrow$ admits a distinguished simple representative, namely the NHEK (respectively, near-NHEK) circular orbits. After analysis, we obtain that each geodesic equivalence class under $G \times \uparrow\!\downarrow \times \rightleftarrows \times$\rotatebox[origin=c]{-45}{$\rightleftarrows$} admits a spherical orbit as the simplest representative as illustrated in Fig. \ref{fig:conformal_mappings}. Past directed geodesics must be considered as intermediate steps in order to relate each future directed geodesic to spherical geodesics.  Supercritical orbits ($\ell^2 > \ell^2_*$) admit the near-NHEK Spherical$(\ell)$ orbit as a representative and critical orbits ($\ell = \pm\ell_*$) admit the NHEK Spherical$_*$ orbit as a representative. No subcritical spherical geodesic exists. However, we introduce an analytically continued complex subcritical geodesic by continuing the radius $R_0 \mapsto i R_0$ and show that it generates the subcritical class.

The explicit formulas for the three categories of equivalence classes of orbits under $G \times \uparrow\!\downarrow \times \rightleftarrows \times$\rotatebox[origin=c]{-45}{$\rightleftarrows$} are given in the following sections. We will denote the final coordinates and orbital parameters reached by the conformal mappings with bars.

\subsection{Critical $\mathcal{C}=0$}
\paragraph{Spherical$_*$ $\Leftrightarrow$ Plunging$_*(E)$ (NHEK/NHEK).} The conformal mapping is given by
\begin{align}
    \bar T&= - \frac{R^2T}{R^2T^2-1},\nonumber\\
    \bar R&=\frac{R^2T^2-1}{R},\label{ct1}\\
    \bar\Phi&=\Phi+\log\frac{RT+1}{RT-1}-i\pi.\nonumber
\end{align}
It maps a (future-directed) NHEK spherical trajectory of radius $R_0$ to a (future-directed)  critical plunge of energy $\bar E=\frac{2\ell_*}{R_0} > 0$.

\paragraph{Spherical$_*$ $\Leftrightarrow$ Plunging$_*$ (NHEK/near-NHEK).} One performs the NHEK/near-NHEK diffeomorphism $(T,R,\theta,\Phi)\to(\bar t,\bar R,\theta,\bar\phi)$, whose explicit form is
\begin{align}
T&=-\exp\qty(-\kappa \bar t)\frac{\bar R}{\sqrt{\bar R^2-\kappa^2}}\nonumber,\\
    R&=\frac{1}{\kappa}\exp\qty(\kappa \bar t)\sqrt{\bar R^2-\kappa^2},\label{ct2}\\
    \Phi&=\phi-\frac{1}{2}\log\frac{\bar R-\kappa}{\bar R+\kappa}.\nonumber
\end{align}
Its inverse is
\begin{align}
    \bar t &= \frac{1}{\kappa}\log\frac{R}{\sqrt{R^2T^2-1}},\nonumber\\
    \bar R &=-\kappa RT,\\
    \bar\phi &= \Phi+\frac{1}{2}\log\frac{RT+1}{RT-1}\nonumber
\end{align}
for $R>0$ and $RT<-1$. The orbital parameters are related as
\begin{align}
    R_0=\frac{1}{\kappa}\exp\qty(\kappa t_0),\qquad\Phi_0=\phi_0-\frac{3}{4}.
\end{align}

\paragraph{Plunging$_*$ $\Leftrightarrow$ Outward$_*$ (near-NHEK/near-NHEK).} The orbits are related by the $\rightleftarrows$-flip \eqref{Tphiflip}.

\paragraph{Plunging$_*$ $\Leftrightarrow$ Plunging$_*(e)$ (near-NHEK/near-NHEK).}
The two (future-directed) orbits are related via the diffeomorphism 
\begin{align}
\bar t &= \frac{1}{2\kappa}\log\frac{\sqrt{R^2-\kappa^2}\cosh{\kappa t}-R}{\sqrt{R^2-\kappa^2}\cosh{\kappa t}+R}-\frac{i\pi}{\kappa},\nonumber\\
    \bar R &= \sqrt{R^2-\kappa^2}\sinh{\kappa t},\label{ct3}\\
    \bar \phi &= \phi + \frac{1}{2}\log\frac{R\sinh\kappa t+\kappa\cosh\kappa t}{R\sinh\kappa t-\kappa\cosh\kappa t}\nonumber.
\end{align}
The energy of the new trajectory is a function of the initial time $t_0$ of the former one:
\begin{equation}
    \bar e=\kappa^2\ell_*\exp\qty(-\kappa t_0) > 0.
\end{equation}

\paragraph{Plunging$_*(e)$ $\Leftrightarrow$ Outward$_*(e)$ (near-NHEK/near-NHEK).} The orbits are related by the $\rightleftarrows$-flip.

\paragraph{Plunging$_*(E)$ $\Leftrightarrow$ Bounded$_*^-(E)$ (NHEK/NHEK).} The critical bounded orbit is obtained from the plunging orbit by a continuation of the trajectory beyond the horizon ($R<0$) combined with $\mathbb{Z}_2$ flips. One must proceed in three steps:
\begin{enumerate}
    \item Continue the plunge defined from the physical domain $0\leq R\leq \infty$ to its whole domain of definition $R_0\leq R\leq\infty$ (\textit{i.e.}, up to the root of the radial potential $R_0=-\frac{e}{2\ell_*}$) and consider now only the part of the trajectory located beyond the horizon $R_0 \leq R\leq 0$.
    \item Apply the \rotatebox[origin=c]{-45}{$\rightleftarrows$}-flip to the latter part of the solution. This transformation restores the positivity of the radial coordinate. It preserves the time orientation of the geodesic but flips the sign of its angular momentum $\ell_*\to-\ell_*$. The new domain of definition of the trajectory is consequently $0\leq R \leq \frac{E}{2\ell_*}$.
    \item The procedure outlined above only leads to the part of the geodesic with $R'(\lambda)>0$, which is located before the turning point. As outlined in Appendix \ref{app:equatorial}, the part of a bounded trajectory located after the turning point can be obtained from the one located before it by a $\rightleftarrows$-flip. 
\end{enumerate}
This whole procedure is represented in Fig. \ref{fig:bounded_from_plunging}.

\paragraph{Plunging$_*(e)$ $\Leftrightarrow$ Bounded$_*^-(e)$ (near-NHEK/near-NHEK).} The mapping is similar to the one outlined above using the \rotatebox[origin=c]{-45}{$\rightleftarrows$}-flip. One subtlety is that one should start with the Plunging$_*(e)$ orbit with $e > \kappa \ell_*$ in order to obtain the future-directed Bounded$_*^-(e)$ orbit.

\paragraph{Plunging$_*(e)$ $\Leftrightarrow$ Bounded$_*(-e)$ (near-NHEK/near-NHEK).} We apply the \rotatebox[origin=c]{-45}{$\rightleftarrows$}-flip as outlined in the previous paragraph, but now choosing $0< e < \kappa \ell_*$. This leads to a retrograde past-directed bounded orbit. The future-directed prograde geodesic is then reached using the $\uparrow\!\downarrow$-flip.

\subsection{Supercritical $\mathcal{C}<0$}
\paragraph{Spherical$(\ell)$ $\Leftrightarrow$ Marginal$(\ell)$ (near-NHEK/NHEK).}
One applies the NHEK/near-NHEK diffeomorphism
\begin{align}
    T&=-\exp\qty(-\kappa \bar t)\frac{\bar R}{\sqrt{\bar R^2-\kappa^2}}\nonumber,\\
    R&=\frac{1}{\kappa}\exp\qty(\kappa \bar t)\sqrt{\bar R^2-\kappa^2},\label{eqn:diffeo}\\
    \Phi&=\phi-\frac{1}{2}\log\frac{\bar R-\kappa}{\bar R+\kappa}\nonumber
\end{align}
which maps the orbit Spherical$(\ell)$ on the past-directed Marginal$(-\ell)$ orbit. The future-directed Marginal$(\ell)$ orbit is recovered by composing this transformation with a $\uparrow\!\downarrow$-flip.

\paragraph{Marginal$(\ell)$ $\Leftrightarrow$ Plunging$(E,\ell)$ or Def\mbox{}lecting$(E,\ell)$ (NHEK/NHEK).}
One performs the transformation ($\zeta\neq 0$)
\begin{align}
	\bar T &=\frac{1}{\bar R}\frac{2R^2T\cos\zeta-(1+R^2(1-T^2))\sin\zeta}{2 R},\nonumber\\
    \bar R &=\frac{R^2(1+T^2)-1+(1+R^2(1-T^2))\cos\zeta+2R^2T\sin\zeta}{2R},\label{eqn:NHEK_shift}\\
    \bar\Phi &= \Phi+\log\frac{\cos\frac{\zeta}{2}R+\sin\frac{\zeta}{2}(RT+1)}{\cos\frac{\zeta}{2}R+\sin\frac{\zeta}{2}(RT-1)}.\nonumber
\end{align}
As outlined in Ref. \cite{Compere:2017hsi}, this mapping can be viewed as the action on Poincaré NHEK coordinates of a shift of the global NHEK time $\tau\to\tau-\zeta$. The energy of the final orbit is
\begin{align}
    \bar E=\sqrt{-\mathcal C}\qty(\sin\zeta+T_0(\cos\zeta-1)).
\end{align}
We directly see that any energy $E\neq 0$ can be reached by conveniently choosing the values of $T_0$ and $\zeta$. 

\paragraph{Plunging$(E,\ell)$ $\Leftrightarrow$ Outward$(E,\ell)$ (NHEK/NHEK).} The orbits are related by the $\rightleftarrows$-flip.

\paragraph{Plunging$(E,\ell)$ $\Leftrightarrow$ Bounded$_>(E,-\ell)$ (NHEK/NHEK).} The mapping consists in extending the radial range of the plunging orbit beyond the horizon, $R<0$, then using the \rotatebox[origin=c]{-45}{$\rightleftarrows$}-flip, which leads to the Bounded$_>(E,-\ell)$ orbit.

\paragraph{Spherical$(\ell)$ $\Leftrightarrow$ Plunging$(e,\ell)$ or Def\mbox{}lecting$(e,\ell)$ (near-NHEK/ near-NHEK).}
One uses the diffeomorphism ($\chi\neq\pm 1$)
\begin{align}
	t &= \frac{1}{\kappa}\log\frac{\sqrt{\bar R^2-\kappa^2}\cosh \kappa\bar t-\bar R}{\sqrt{ R^2-\kappa^2}}\nonumber,\\
     R &=\sqrt{\bar R^2-\kappa^2}\qty(\sinh \kappa\bar t+\chi\cosh \kappa\bar t)-\chi\bar R,\label{eqn:nn_shift}\\
    \phi &= \bar \phi -\frac{1}{2}\log\qty[\frac{\sqrt{\bar R^2-\kappa^2}-\bar R\cosh \kappa\bar t+\kappa\sinh \kappa\bar t}{\sqrt{\bar R^2-\kappa^2}-\bar R\cosh \kappa\bar t-\kappa\sinh \kappa\bar t}\frac{R+\kappa}{R-\kappa}]\nonumber.
\end{align}
This mapping can be seen as a NHEK global time shift written in near-NHEK coordinates; see Refs. \cite{Compere:2017hsi,Hadar:2016vmk}. The explicit inversion formula can be found in Ref. \cite{Hadar:2016vmk}. The energy of the new trajectory reads as
\begin{align}
    \bar e=\kappa\sqrt{-\mathcal{C}}\,\chi.
\end{align}
For $-\frac{\ell}{\sqrt{-\mathcal{C}}}<\chi<-1$, the orbit reached is future-directed and deflecting. The trajectory becomes plunging for $\chi>-1$. Note that for $\abs{\chi}>1$, $\bar t_0=-\frac{1}{2\kappa}\log\frac{1+\chi}{1-\chi}$ is complex and one has to perform an additional shift on $\bar t$ to make it real.

\paragraph{Plunging$(e,\ell)$ $\Leftrightarrow$ Outward$(e,\ell)$ (near-NHEK/near-NHEK).} The orbits are related by the $\rightleftarrows$-flip.

\paragraph{Plunging$(e,\ell)$ $\Leftrightarrow$ Bounded$_>(e,-\ell)$ (near-NHEK/near-NHEK).} The mapping consists in extending the radial range of the plunging orbit with $e > \kappa \ell$ beyond the horizon, $r<0$, then using the \rotatebox[origin=c]{-45}{$\rightleftarrows$}-flip, which leads to the Bounded$_>(e,-\ell)$ orbit.

\subsection{Subcritical $\mathcal{C}>0$}

There is no near-NHEK spherical geodesic for $\mathcal{C}>0$. We can nevertheless introduce the formal class of \textit{complex} spherical trajectories
\begin{align}
    \begin{split}
    t(\lambda) &= -i\frac{\ell}{R_0}\lambda,\\
    R(\lambda) &= i R_0,\qquad R_0\triangleq\frac{\kappa\ell}{\sqrt{\mathcal{C}}},\\
    \phi(\lambda) &= \phi_0-\frac{3}{4}\ell\lambda+\ell\Phi_\theta(\lambda)
    \end{split}
\end{align}
which is a formal (but nonphysical) solution of the near-NHEK geodesic equations, of complex near-NHEK ``energy'' $e=-i\kappa\sqrt{\mathcal{C}}$. We will denote this class of solutions as Spherical$_\mathbb{C}(\ell)$ and show that it can be used to generate all subcritical bounded trajectories by acting on it with properly chosen conformal transformations. The parametrized form of the orbit reads as
\begin{align}
    \begin{split}
        R &= i R_0,\\
    \phi(t) &= \phi_0-\frac{3}{4}iR_0t+\ell\Phi_\theta(\lambda(t)).
    \end{split}
\end{align}

\paragraph{Spherical$_\mathbb{C}(\ell)$ $\Leftrightarrow$ Bounded$_<(E,\ell)$.}
 One has to proceed in two steps, mimicking the procedure used to obtain the NHEK Plunging$(E,\ell)$ class:
 \begin{itemize}
     \item  We apply the near-NHEK/NHEK diffeomorphism \eqref{eqn:diffeo} to a Spherical$_\mathbb{C}(\ell)$ orbit, leading to another complex NHEK geodesic of null energy parametrized by
\begin{align}
    \begin{split}
    T(R) &= -\frac{i\ell}{\sqrt{C}R},\\
    \Phi(R) &= \Phi_0-\frac{3i\ell}{8\sqrt{C}}\log\frac{\mathcal{C}R^2}{\mathcal{C}+\ell^2}
    \end{split}
\end{align}
with the initial azimuthal angle $\Phi_0\triangleq\phi_0-\frac{3\pi\ell}{8\sqrt{C}}-\frac{1}{2}\log\qty(1-\frac{2\sqrt{C}}{\sqrt{C}+i\ell})$. We denote this class as Marginal$_\mathbb{C}(\ell)$.
    \item Second, we apply to the trajectory found above the global time shift \eqref{eqn:NHEK_shift}, but upgraded with an \textit{imaginary} parameter $\zeta\to i\zeta$. This leads to the Bounded$_<(E,\ell)$ class with orbital parameters
    \begin{subequations}
    \begin{align}
        \bar E &= \sqrt{\mathcal{C}}\sinh \zeta,\\
        \bar\Phi_0 &= \phi_0-\frac{3\pi\ell}{8 \sqrt{\mathcal C}}-\log\qty(\sqrt{\mathcal{C}}-i\ell)+\frac{3i\ell}{8\sqrt{\mathcal{C}}}\log\qty[\mathcal C(\mathcal{C}+\ell^2)\qty(1+\sqrt{\frac{\mathcal{C}+E^2}{\mathcal{C}}})^2]\nonumber\\
        &~-\frac{3\ell}{8\sqrt{C}}\log\qty[E^2(\mathcal{C}+\ell^2)]+\arctan\frac{\sqrt{\mathcal{C}}}{\ell}.
    \end{align}
    \end{subequations}
    Note that choosing $\zeta>0$ is sufficient to reach the full range of energies allowed for such a geodesic ($E>0$). Any geodesic of orbital parameters ($T_0,\tilde\Phi_0$) can  finally be obtained by performing the transformation $T\to T+T_0$, $\Phi\to\Phi-\bar\Phi_0+\tilde{\Phi}_0$, which also removes the unphysical imaginary part of the azimuthal coordinate.
 \end{itemize}

\paragraph{Spherical$_\mathbb{C}(\ell)$ $\Leftrightarrow$ Bounded$_<(e,\ell)$.} We apply to the Spherical$_\mathbb{C}(\ell)$ class the near-NHEK global time shift \eqref{eqn:nn_shift} upgraded with an imaginary parameter $\chi\to i\chi$ ($\chi\neq\pm1$), leading to a Bounded$_<(e,\ell)$ orbit of parameters 
\begin{align}
\begin{split}
\bar e &= \kappa\sqrt{\mathcal{C}}\,\chi,\\
\bar t_0 &= t_0+\frac{i}{\kappa}\arctan\frac{\kappa\sqrt{\mathcal{C}}}{e},\\
\bar\phi_0 &= \bar\phi_0(\phi_0,e,\ell,\mathcal{C},\kappa).
\end{split}
\end{align}
The explicit value of $\bar\phi_0$ is easily calculable, but too long to be reproduced here. To reach a manifestly real orbit of orbital parameters $(\tilde t_0,\tilde\phi_0)$, one has to perform the final shift
\begin{equation}
    t\to t-\bar{t}_0+\tilde t_0,\qquad\phi\to\phi-\bar\phi_0+\tilde\phi_0.
\end{equation}

\begin{landscape}
\begin{figure}
    \centering
    \begin{minipage}{0.95\textwidth}
    \includegraphics[width=\textwidth]{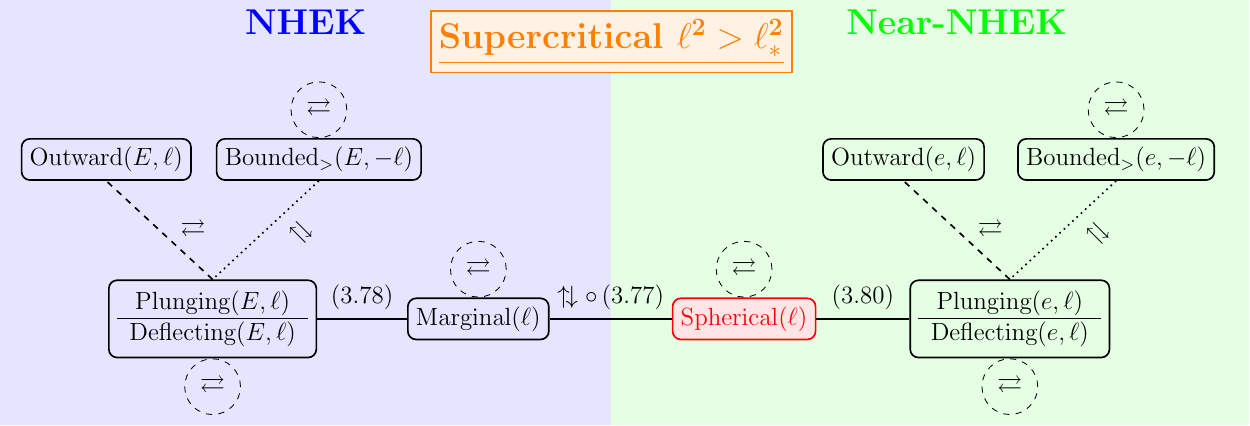}\\
    
    \includegraphics[width=\textwidth]{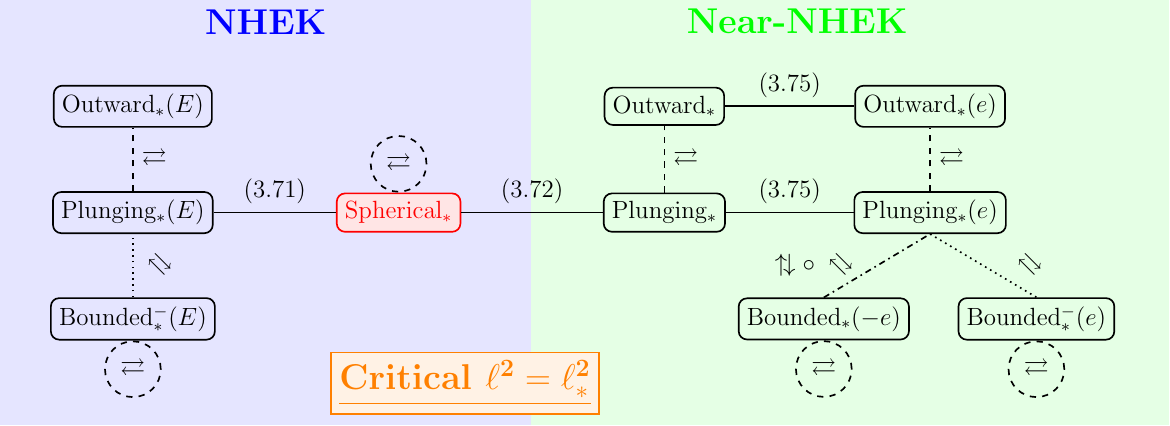}
    \end{minipage}
    \begin{minipage}{0.28\textwidth}
    \includegraphics[width=\textwidth]{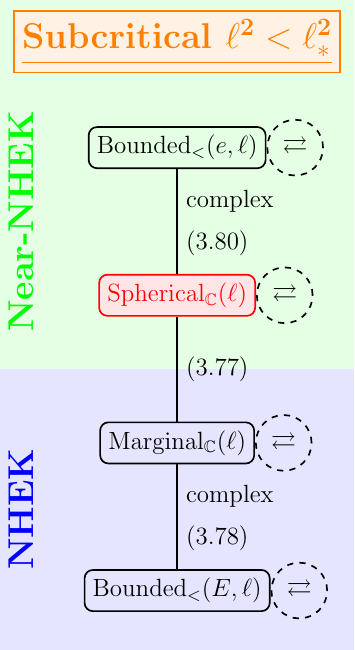}
    \end{minipage}
    
    \caption{Schematic overview of the action of the group  $\textsf{SL}(2,\mathbb{R})\times \textsf{U}(1) \times \uparrow\!\downarrow \times \rightleftarrows \times\!$\protect\rotatebox[origin=c]{-45}{$\rightleftarrows$} on near-horizon geodesics.
    }
    \label{fig:conformal_mappings}
\end{figure}{}
\end{landscape}

\partimage[width=\textwidth]{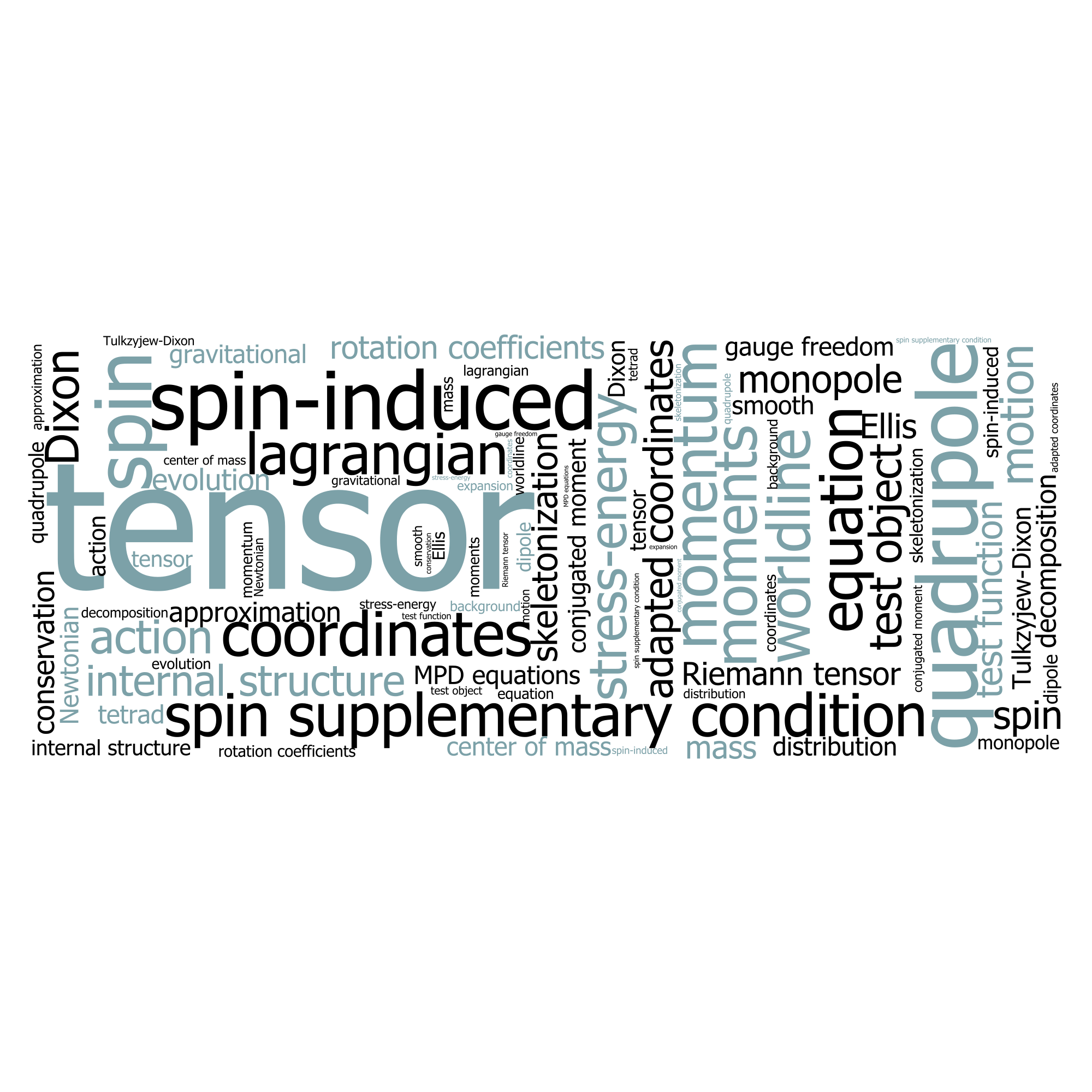}

\part[{\textsc{Test bodies in Curved Spacetime: Theoretical Foundations}}]{\textsc{Test bodies in Curved Spacetime:\newline Theoretical Foundations}}\label{part:spinnin_bodies}
{
\renewcommand{\thefigure}{II.1}
\begin{figure}
    \centering
    \includegraphics[height=7.5cm]{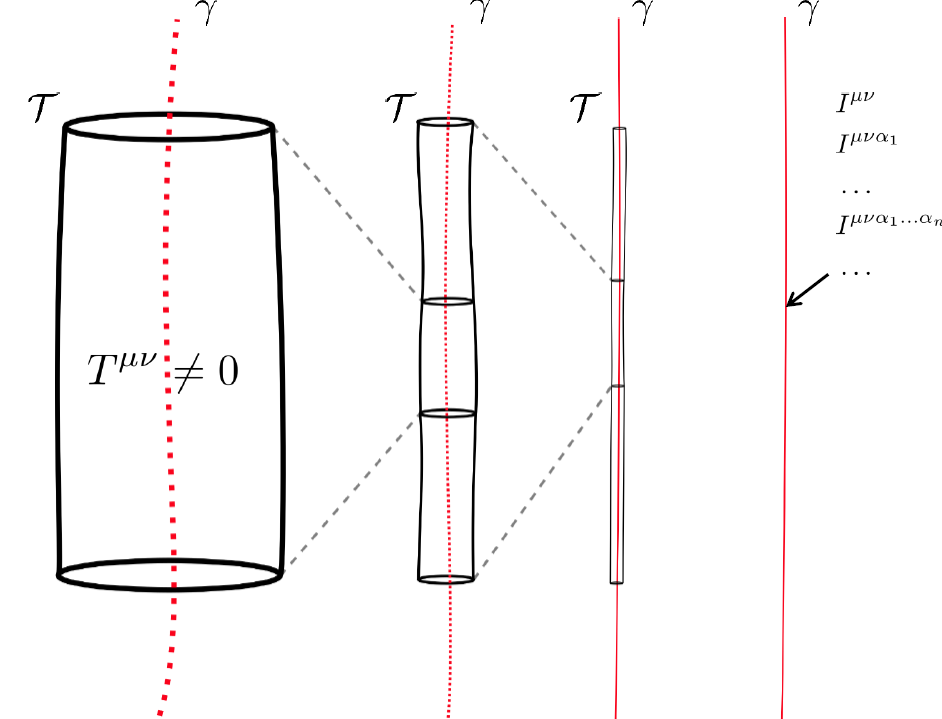}\hspace{0.5cm}\rule{0.5pt}{8cm}\hspace{0.5cm}
    \includegraphics[height=7.5cm]{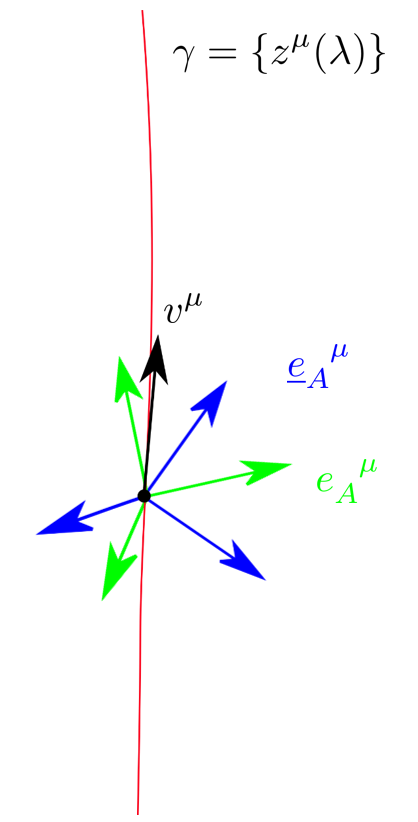}
    \caption{The two main pictures for the description of extended test bodies used in this text: the gravitational skeletonization (left) and the Lagrangian formulation (right).}
    \label{fig:approximations_schemes}
\end{figure}
}

\noindent 

\lettrine{L}{et} us consider the motion of a object described by some smooth stress-energy tensor $T_{\mu\nu}$ in a fixed background metric $g_{\mu\nu}$, thus neglecting self-force effects. Provided that the body has a finite spatial extension, its stress-energy tensor is supported on compact slices for any 3+1 decomposition of the spacetime. Such an object will be referred to as an \defining{extended test body}. We have in mind the motion of a ``small'' astrophysical object (stellar mass black hole or neutron star) around a hypermassive black hole. In this situation, the former can be viewed as a perturbation of the spacetime geometry created by the later. 

While the geodesic equations describe the motion of a structureless monopole test body in a fixed background spacetime, an important generalization is to allow the test body (while still having negligible mass and thus negligible influence on the gravitational field) to have a finite size and a nontrivial structure. All these effects -- departing from a bare geodesic motion -- are known as \defining{finite size effects}.

\subsubsection{Worldline description of extended test bodies}

In the case where the extended test body is \defining{compact}, that is, if its typical size $\ell$ is small compared to the radius of curvature $r$ of the background ($\ell\ll r$), there exists various equivalent approximation schemes for describing its motion in a somehow simpler way than considering its full stress-energy tensor. Both approaches end into characterizing the body by a centroid worldline $\gamma=\qty{z^\mu(\lambda)}$ along which a tower of gravitational multipole moments $I^{\mu\nu\alpha_1\ldots\alpha_n}$ replacing the original stress-energy tensor are defined, see Figure \ref{fig:approximations_schemes}. These moments can be understood as spatial integrals of $T^{\mu\nu}$,
\begin{equation*}
    I^{\mu\nu\alpha_1\ldots\alpha_n}\triangleq\int_{x^0=\text{cst}}\dd[3]{x}\sqrt{-g}T^{\mu\nu}\delta x^{\alpha_1}\ldots\delta x^{\alpha_n},
\end{equation*}
where $\delta x^\mu\triangleq x^\mu-z^{\mu}(\lambda)$.

The first of these schemes is known as the \defining{gravitational skeletonization}: the body is described by a distributional stress-energy tensor, which is non-vanishing only on a certain worldline, and contains the aforementioned tower of multipole moments. This tensor must be conserved within the background, $\nabla_\mu T^{\mu\nu}=0$, and one can show that it implies that the monopole $p_\mu$ and dipole $S_{\mu\nu}$ must evolve according to the \defining{Mathisson-Papapetrou-Dixon (MPD) equations} \cite{Mathisson:1937zz,1951RSPSA.209..248P,1970RSPSA.314..499D}
\begin{align*}
      \frac{\text{D}p^\mu}{\dd\tau}&=-\frac{1}{2}R\tud{\mu}{\nu\alpha\beta}v^\nu S^{\alpha\beta}+\ldots,\qquad
      \frac{\text{D}S^{\mu\nu}}{\dd\tau}=2 p^{[\mu}v^{\nu]}+\ldots,
\end{align*}
where the dots represent corrections due to the quadrupole and higher multipole moments. The monopole $p^\mu$ takes the interpretation of the linear momentum of the body, whereas the dipole $S^{\mu\nu}$ can be seen as its skew-symmetric spin tensor, describing the relativistic angular momentum of the body. This last object will play a central role in our description, which is the reason why we will sometimes refer to extended test bodies as \textit{spinning test bodies}. In terms of the original smooth stress-energy distribution, one can show that the two first moments are related to the original stress-energy tensor by
\begin{align*}
    p^\mu&\triangleq\int_{x^0=\text{constant}}\dd[3]{x}\sqrt{-g}\,T^{\mu 0}, \\
    S^{\mu\nu}&\triangleq\int_{x^0=\text{constant}}\dd[3]{x}\sqrt{-g}\,\qty(\delta x^\mu T^{\nu0}-\delta x^{\nu}T^{\mu 0}). 
\end{align*}

This approach has been investigated since the late thirties. The leading-order EOMs were first derived in the seminal works of M. Mathisson \cite{Mathisson:1937zz,mathisson_1940} and A. Papapetrou \cite{1951RSPSA.209..248P}. They have been subsequently generalized to higher multipolar orders by W.G. Dixon \cite{tul1959,dixon1973,dixon1974,dixon1979}. Despite its elegance and rigour, this approach appears to be quite long to perform and technically involved, discarding it from being a well-suited viewpoint for exposing comprehensively the problem in an introductory text like the present one.

We will instead follow the \defining{Lagrangian approach}, whose generic formulation in curved spacetime is due to I. Bailey and W. Israel in 1975 \cite{Bailey:1975fe}. Nevertheless, this method was pioneered for Special Relativity in earlier works, notably by A.J. Hanson and T. Regge (see \textit{e.g.} \cite{Westpfahl:1969ea,Hanson:1974qy}). It consists in formulating a generic action principle for the extended body modelled as a worldline, representing the motion of some ``center'' of its stress-energy distribution, and endowed with an orthonormal tetrad rigidly attached to it, whose evolution describes the orientation of the body. This approach also leads to the very same MPD equations. Having two equivalent descriptions of the same problem is extremely fruitful, since each of them turns out to be more appropriated for different purposes: as we will see, the skeletonization will be powerful for providing us with physical insights about the interpretation of multipole moments, while the Lagrangian approach will reveal particularly useful when we will turn to the Hamiltonian description of extended test bodies.

Others routes yielding the same equations of motion have also been followed. A supersymmetric description of classical spinning particles has been provided in 1993 by G.W. Gibbons, R.H. Rietdijk and J.W. van Holten \cite{Gibbons:1993ap}, and recently extended to include quadrupole effects \cite{Jakobsen:2021zvh}. Another recent (and somehow elegant) formulation accounting for the description of finite size effects is due to A. Harte \cite{Harte:2011ku,Harte:2014wya}, using the concept of ``generalized Killing vector fields''. Its main interest it that it allows naturally to account for the inclusion of gravitational back-reaction effects.

\subsubsection{Spin supplementary conditions}

There is a technical subtlety arising when studying the motion of test bodies described by the MPD equations. In order to obtain a closed system of equations, they have to be supplemented by an algebraic condition of the form $\mathcal V_\mu S^{\mu\nu}=0$, for some timelike vector field $\mathcal V^\mu$. Such conditions are known as \defining{spin supplementary conditions}. Physically, enforcing this kind of condition amounts to fix a choice of centroid worldline, setting to zero the mass-dipole moment in the proper frame whose timelike vector is aligned $\mathcal V^\mu$. The discussion of what is really happening is quite subtle, the main reason being that the notion of center of mass is observer-dependent in relativity.

\subsubsection{Truncation of the multipole expansion}

Another point of concern is to understand if the multipole expansion introduced above can be consistently truncated as some desired order, that is, if there exists a small parameter such that the magnitude of the successive multipoles decreases when the order of the multipoles increases. As we will see in Chapter \ref{chap:skeleton}, for compact objects, this small parameter will be the ratio between the typical size of the object and the typical radius of the background curvature, which is small by assumption.

As discussed in the introduction of the thesis, we will always truncate the expansion at the \textit{quadrupole} order. This is the first order in the multipole expansion where the internal structure of the body begins to matter. At pole-dipole order, the motion of finite size bodies is universal, in the sense that it is independent of the nature of the object.
Because we exclude the self-force in our description, our expansion will thus be valid at zeroth order in the mass ratio and at second order in the spin.

MPD equations promote the two first multipole $p^\mu$ and $S^{\mu\nu}$ to the rank of dynamical variables, but leave the higher order multipoles acting as sources. The latter shall consequently be prescribed depending on the internal structure of the test body. In this thesis, we will only be concerned with multipole moments induced by the proper rotation of the object, also known as \defining{spin-induced multipoles}. We therefore discard tidal and other type of contributions to the multipole structure. As we will see, this is the relevant description for modelling a binary black hole system evolving in vacuum, and this is the choice of multipole structure that will allow the existence of the largest number of conserved quantities along the motion.

Actually, for compact test bodies, they are several equivalent way of thinking about the truncation of the multipole expansion, which are consistent one to another, as we will check explicitly: (i) as a truncation of the number of multipoles that we use to describe the stress-energy tensor, which is the viewpoint of gravitational skeletonization that will be discussed in Chapter \ref{chap:skeleton}; (ii) as a truncation of the number of the derivatives of the background Riemann tensor upon which the action of the Lagrangian formulation can depend upon, as will be described in Chapter \ref{chap:EOM}; and finally (iii) as an expansion in integer powers of the magnitude of the spin dipole $\mathcal S$. This latter viewpoint turns out to be consistent with the two former for spin-induced multipoles, as will be reviewed in Chapter \ref{chap:quadrupole}.

\subsubsection{Plan of the text}
This part of the thesis is organized as follows: Chapter \ref{chap:EOM} will describe the Lagrangian formulation for extended test bodies in full generality, up to quadrupole order. In Chapter \ref{chap:skeleton}, we will discuss a particularly simple form of the gravitational skeletonization up to dipole order, which will enable to gain more intuition about the physical meaning of the monopole and the dipole moments. Our discussion will be however specialized to a specific choice of coordinates. Chapter \ref{chap:SSC} will describe the problem of enforcing the aforementioned spin supplementary conditions, as well as their physical interpretation. Finally, Chapter \ref{chap:quadrupole} will be devoted to spin-induced multipoles, and will focus on the explicit construction of the spin-induced quadrupole moment. 

\chapter{Lagrangian Formulation}
\label{chap:EOM}


This chapter discusses the Lagrangian formalism for spinning test bodies in General Relativity. In this text, we will always restrict our derivations up to quadrupole order. Nevertheless, higher orders can be reached, see \textit{e.g.} \cite{Marsat_2015}. This chapter mainly follows the excellent exposition of Marsat \cite{Marsat_2015}.

The core idea of the Lagrangian approach is to construct the most generic worldline Lagrangian action $S=\int L\,\dd\lambda$ describing the motion of a spinning test body in curved spacetime. As we will see, the form of the allowed Lagrangian $L$ can be highly constrained from very generic symmetry arguments.
Like in any classical mechanics problem, the equations of motion can then be derived from the associated first order variational principle $\delta S=0$ \cite{Goldstein2001,arnold1989mathematical}.

\section{Rotational degrees of freedom}
The two main questions one should ask for building an action are 
\begin{enumerate}
    \item What are the relevant degrees of freedom that should be introduced for describing a spinning body in curved spacetime ?
    \item What are the symmetries under which the action should be invariant?
\end{enumerate}
This section aims to tackle the first of them.

Let us denote $z^\mu(\lambda)$ the body's worldline. Here, $\lambda$ is an arbitrary ``time'' parameter describing the evolution of the motion. We also define the four-velocity
\begin{align}
    v^\mu\triangleq\dv{z^\mu}{\lambda}.
\end{align}
For any physical massive object, the four-velocity will be a timelike vector, $v_\mu v^\mu<0$. In the canonical language of Lagrangian mechanics, the four components $v^\mu$ will play the role of the velocities describing the position of the test body and associated to the coordinates $z^\mu$.

All along this discussion, the specific form of $\lambda$ as well as the normalization of the four-velocity will be left arbitrary at the level of the action; they will only be constrained later at the level of the equations of motion, setting $\lambda$ to be the proper time and consequently yielding the standard normalization $v_\mu v^\mu=-1$.

We are left with the problem of choosing the degrees of freedom that will account for the rotational orientation of the test body.
Following the proposition of Hanson and Regge for Special Relativity \cite{Hanson:1974qy}, the \textit{spin} (that is, the rotational) degrees of freedom of the body will be represented by an orthonormal tetrad $e\tdu{A}{\mu}(\lambda)$ rigidly attached to the body's worldline. Its orientation at any value of $\lambda$ will be measured thanks to the introduction of another \textit{background} orthonormal tetrad frame $\underline{e}\tdu{\underline A}{\mu}(x)$. At any point of the worldline, these two tetrads are related by a Lorentz transformation $\Lambda\tud{A}{\underline A}(\lambda)$:
\begin{align}
    \underline{e}\tdu{\underline A}{\mu}\qty(z(\lambda))=\Lambda\tud{A}{\underline A}(\lambda) e\tdu{A}{\mu}(\lambda).\label{tfo_tetrads}
\end{align}
Of course, we have the standard relations for Lorentz matrices
\begin{align}
    \Lambda\tdu{A}{\underline A}(\lambda)\Lambda_{B\underline A}(\lambda)=\eta_{AB},\qquad \Lambda_{A\underline A}(\lambda)\Lambda\tud{A}{\underline B}(\lambda)=\eta_{\underline A\,\underline B},
\end{align}
and for tetrad frames
\begin{subequations}
\begin{align}
    e\tdu{A}{\mu}(\lambda)e_{B\mu}(\lambda)&=\eta_{AB},\quad e\tdu{A}{\mu}(\lambda)e^{A\nu}(\lambda)=g^{\mu\nu}(z(\lambda)),\label{metric_e}\\
    \underline e\tdu{\underline A}{\mu}(x)\underline e_{\underline B\mu}(x)&=\eta_{\underline A\,\underline B},\quad \underline e\tdu{\underline A}{\mu}(x)\underline e^{\underline A\nu}(x)=g^{\mu\nu}(x).
\end{align}
\end{subequations}
The evolution of the body's tetrad will be described using the standard antisymmetric rotation coefficients $\Omega^{\mu\nu}$ (see \textit{e.g.} \cite{poisson_2004})
\begin{align}
    \frac{\text{D}e\tdu{A}{\mu}}{\dd\lambda}\triangleq-\Omega^{\mu\nu} e_{A\nu}\qquad\Leftrightarrow\qquad
    \Omega^{\mu\nu}\triangleq e^{A\mu}\frac{\text D e\tdu{A}{\nu}}{\dd\lambda}.
\end{align}
Here and in the remaining of this text, we use the notation $\text D/\dd\lambda\triangleq v^\alpha \nabla_\alpha$.

The Lorentz matrices $\Lambda\tud{\underline A}{A}(\lambda)$ encode all the informations regarding the orientation of the body's tetrad with respect to the background. As any homogeneous Lorentz transformation, they contain 6 degrees of freedom: three of them represent spatial rotations, and the three others relativistic boosts. Intuitively, one can see the three rotational degrees of freedom (DOFs) as being the spin ones, whereas the three boosts originate from the fact that one has not chosen yet the exact position of the worldline $z^\mu(\lambda)$ inside the body's worldtube. This ambiguity will be extensively discussed and resolved in Chapter \ref{chap:SSC}, by enforcing a so-called \textit{spin supplementary condition} (SSC).

\section{Constraining the action}
It is now time to write down an action for our theory. It seems natural to require the following symmetry requirements to hold \cite{Blanchet_2014}:
\begin{itemize}
    \item \textbf{Spacetime diffeomorphisms}: as any GR scalar expression, the action should be invariant under any generic spacetime diffeomorphism $x^\mu\to x^{\mu'}\qty(x^\mu)$. The Lagrangian should consequently be a tensorial scalar, in the sense that all the spacetime indices of the objects it is built from should be properly contracted between themselves;
    \item \textbf{Lorentz transformations}: the action should be invariant under local Lorentz transformations, which transform the body and the background tetrad as 
    \begin{align}
        e\tdu{A}{\mu}\to \Lambda\tud{A}{B}e\tdu{B}{\mu},\qquad\underline e\tdu{\underline A}{\mu}\to \bar\Lambda\tud{\underline A}{\underline B}\underline e\tdu{\underline B}{\mu}.
    \end{align}
    It amounts to require all the Lorentz indices of the tetrads to be properly contracted;
    \item \textbf{Reparametrization invariance}: the time parameter $\lambda$ being arbitrary, the action \eqref{action} should be invariant under any reparametrization $\lambda\to\lambda'(\lambda)$ of the trajectory.
\end{itemize}

In order to actually describe a spinning body, the Lagrangian should kinematically depend on the worldline velocity $v^\mu$ and on the rotation coefficients $\Omega^{\mu\nu}$, but not on the ``positions'' ($z^\mu$ and $e\tdu{A}{\mu}$) themselves, for the purpose of ensuring general covariance. Moreover, we forbid any dependence in the background structure $\underline e\tdu{\underline A}{\mu}$, so that our description depends only on degrees of freedom intrinsic to the body. The prescribed Lagrangian should account for finite size effects, that is, dynamical effects originating from the coupling between the body's spin and the background's curvature. The later is accounted for by the Riemann tensor and its derivatives. Notice that the background metric $g_{\mu\nu}$ is assumed to appear only for the purpose of contracting indices, thus allowing to construct scalars from the other tensorial objects in a natural way. We however forbid any dependence upon first derivatives of the metric (that is, upon Christoffel symbols). Derivatives of the metric are only allowed to enter in the action through the Riemann tensor and its derivatives.

Excluding any coupling with other external fields and given the discussion above, the generic action for an extended test body is then assumed takes the form:
\begin{align}
    S\qty[z^\mu,e\tdu{A}{\mu}]=\int_\gamma L\qty(v^\mu,\Omega^{\mu\nu},g_{\mu\nu}(z),R_{\mu\nu\rho\sigma}(z),\nabla_\lambda R_{\mu\nu\rho\sigma}(z),\ldots)\dd\lambda.\label{action}
\end{align}
The subscript $\gamma$ just refers to the fact that the integration over $\lambda$ is actually an integration over the worldline $\gamma$.

\subsubsection{\textbf{Homogeneity condition}} The next step will be to constrain the generic form of the action Eq. \eqref{action}. Actually, a very simple argument allows to provide a simple explicit -- but still non-uniquely fixed -- expression for the Lagrangian. As we have just mentioned, the action Eq. \eqref{action} should be invariant under any reparametrization of the trajectory; in particular, it should be invariant under a scaling $\lambda\to\Delta\lambda$ ($\Delta\neq 0)$. This implies that the Lagrangian must be \textit{homogeneously linear} in $v^\mu$ and $\Omega^{\mu\nu}$, which both scale as $\Delta^{-1}$ under this transformation. Euler's theorem on homogeneous functions\footnote{
Let $f:\mathbb R^n\to\mathbb R$ be a positively homogeneous function of degree $k\in\mathbb Z$, \textit{i.e.} 
\begin{align*}
    \forall \Delta>0:f(\Delta x_1,\ldots\Delta x_n)=\Delta^k f(x_1,\ldots,x_n)
\end{align*}
which is continuously differentiable in some open subset $\mathcal U\subset\mathbb R^n$.
Then, 
\begin{align*}
    k f(x_1,\ldots,x_n)=\sum_{i=1}^nx_i\pdv{f}{x_i}\qty(x_1,\ldots,x_n),\qquad\forall \qty(x_1,\ldots,x_n)\in\mathcal U.
\end{align*}
}
then implies that
\begin{align}
    L(v^\mu,\Omega^{\mu\nu},g_{\mu\nu},R_{\mu\nu\rho\sigma},\nabla_\lambda R_{\mu\nu\rho\sigma},\ldots)=\pdv{L}{v^\mu}v^\mu+\pdv{L}{\Omega^{\mu\nu}}\Omega^{\mu\nu}.
\end{align}
Defining the conjugate momenta\footnote{The factor $2$ in the definition of $S_{\mu\nu}$ is present for consistency with the conventions used in the literature.} (respectively referred to as the \textit{linear momentum} and the \textit{spin tensor})
\begin{align}
    p_\mu\triangleq\pdv{L}{v^\mu},\qquad S_{\mu\nu}\triangleq 2\pdv{L}{\Omega^{\mu\nu}},\label{momenta}
\end{align}
one can write
\begin{align}
    L=p_\mu v^\mu+\frac{1}{2}S_{\mu\nu}\Omega^{\mu\nu}.\label{generic_lagragian}
\end{align}

Be careful: we have \textit{not} provided a unique expression for the Lagrangian of our theory. The momenta $p_\mu$ and $S_{\mu\nu}$ remain here arbitrary functions of $v^\mu$, $\Omega^{\mu\nu}$,  the Riemann tensor and its derivatives. They will be fixed when a spin supplementary condition will be enforced, which will provide us with an explicit relation between the linear momentum $p_\mu$ and the four-velocity $v^\mu$. Nevertheless, the form of the Lagrangian \eqref{generic_lagragian} is relevant to mention for later purposes (\textit{e.g.} Hamiltonian description of the present problem); in particular, it is the same regardless to the finite size interactions allowed in the theory, \textit{i.e.} regardless to the dependence of $L$ in the Riemann tensor and its derivatives that is allowed. 

\subsubsection{\textbf{Quadrupole approximation}}
We will now constrain our theory by restricting the functional dependency of the Lagrangian in the (derivatives of the) Riemann tensor. In the continuation of this text, we will always restrict ourselves to the so-called \textit{quadrupole approximation} which consists into allowing $L$ to depends on the Riemann tensor, but not on its derivatives:
\begin{align}
    L=L\qty(v^\mu,\Omega^{\mu\nu},g_{\mu\nu},R_{\mu\nu\rho\sigma}).\label{L_quad}
\end{align}
More generically, allowing the Lagrangian to depend in the Riemann tensor up to its $n^\text{th}$ derivative will lead to the appearance of $2^{n+2}$-pole moments terms in the equations of motion. For an action of the form \eqref{action}, the multipole moments are only sourced by the spin of the body. For spin-induced moments, one can show (see Chapter \ref{chap:quadrupole}) that the $2^n$-pole moment scales as $\mathcal O\qty(\mathcal S^{n})$, where the spin magnitude $\mathcal S$ is defined as
\begin{align}
    \mathcal S^2\triangleq\frac{1}{2}S_{\mu\nu}S^{\mu\nu}.
\end{align}
The aforementioned approximation makes sense, because (i) the spin magnitude will turn out to be a constant of motion, regardless to the multipole order we are working with and because (ii) $\mathcal S$ can be assumed to be small in astrophysically realistic situations. It can consequently be used as an expansion parameter for setting up a perturbative treatment of the generic problem.

From a physical viewpoint, the quadrupole approximation amounts to consider deformations induced by the proper rotation of the object in our description up to quadrupole order, while neglecting higher order corrections.

Finally, let us notice that only the (mass) monopole moment $p_\mu$ and the (spin) dipole moment $S_{\mu\nu}$ are dynamical variables, because they are the only multipole moments present in the explicit form of the Lagrangian \eqref{generic_lagragian}. The higher moments are non-dynamical and entirely written in terms of these two first moments. They will act as sources in the equations of motion.

\section{Equations of motion}
Before deriving the equations of motion, it is useful to explicit some results concerning the first-order variations of the Lagrangian. First, in the quadrupole approximation, a generic variation of $L$ takes the form
\begin{align}
    \delta L=p_\mu\delta v^\mu+\frac{1}{2}S_{\mu\nu}\delta\Omega^{\mu\nu}+\pdv{L}{g_{\mu\nu}}\delta g_{\mu\nu}-\frac{1}{6}J^{\mu\nu\rho\sigma}\delta R_{\mu\nu\rho\sigma}.\label{variation}
\end{align}
Here, we have defined -- up to a proportionality factor -- the \defining{quadrupole moment} $J^{\mu\nu\rho\sigma}$ as the conjugate moment to the Riemann tensor:
\begin{align}
    J^{\mu\nu\rho\sigma}\triangleq -6\pdv{L}{R_{\mu\nu\rho\sigma}}.
\end{align}
Notice that the variation \eqref{variation} is independent of the explicit form \eqref{generic_lagragian} of the Lagrangian.

In particular, the action should be invariant under an infinitesimal change of coordinates $z^\mu\to z^\mu+\epsilon\xi^\mu$ ($\abs{\epsilon}\ll 1$). Particularizing the variation \eqref{variation} to this case and working in a locally inertial frame yields
\begin{align}
    \delta_\xi L&=\epsilon\qty(p^\nu v^\mu+S\tud{\nu}{\lambda}\Omega^{\mu\lambda}-2\pdv{L}{g_{\mu\nu}}+\frac{2}{3}J^{\nu\alpha\beta\gamma}R\tud{\mu}{\alpha\beta\gamma})\partial_\mu\xi_\nu.
\end{align}
This variation must vanish regardless to the value of $\xi^\mu$; therefore, the following constraint must hold:
\begin{align}
    p_\nu v_\mu+S_{\nu\lambda}\Omega\tdu{\mu}{\lambda}-2\pdv{L}{g_{\mu\nu}}+\frac{2}{3}J\tdu{\nu}{\alpha\beta\gamma}R_{\mu\alpha\beta\gamma}=0.
\end{align}
Taking the antisymmetric part of this expression allows to write
\begin{align}
    S^{\lambda[\mu}\Omega\tud{\nu]}{\lambda}=p^{[\mu}v^{\nu]}+\frac{2}{3}R\tud{[\mu}{\alpha\beta\gamma}J^{\nu]\alpha\beta\gamma},\label{S_Omega_terms}
\end{align}
which is valid in any frame. This last relation will become extremely useful in the following derivations.

\subsubsection{\textbf{Evolution equation for the spin tensor}}
The evolution equation for the spin tensor, also known as the \textit{precession equation}, is obtained by varying the action with respect to the body's tetrad $e\tdu{A}{\mu}$. In the background frame, the variation of the rotation coefficients takes the form
\begin{align}
    \delta\Omega^{\underline A\,\underline B}=e\tud{\underline A}{\mu}e\tud{\underline B}{\nu}\frac{\text{D}\,\delta\theta^{\mu\nu}}{\dd\lambda}+\Omega\tud{\underline A}{\underline C}\delta\theta^{\underline C\,\underline B}-\Omega\tud{\underline B}{\underline C}\delta\theta^{\underline C\,\underline A}.
\end{align}
For convenience, the variation of the tetrad has been entirely expressed in terms of the object
\begin{align}
    \delta\theta^{\underline A\,\underline B}\triangleq \Lambda^{A\underline A}\delta\Lambda\tdu{A}{\underline B}.
\end{align}

Plugging this result either in the explicit expression of the Lagrangian \eqref{generic_lagragian} or in the generic variation \eqref{variation}, one obtains
\begin{align}
    \delta_\theta L=\frac{1}{2}\qty(-\frac{\text{D}\, S^{\mu\nu}}{\dd\lambda}+S^{\nu\rho}\Omega\tud{\mu}{\rho}-S^{\mu\rho}\Omega\tud{\nu}{\rho})\delta\theta_{\mu\nu}.
\end{align}
Requiring this variation to be vanishing yields the evolution equation
\begin{align}
    \frac{\text{D}\, S^{\mu\nu}}{\dd\lambda}=S^{\nu\rho}\Omega\tud{\mu}{\rho}-S^{\mu\rho}\Omega\tud{\nu}{\rho}.
\end{align}

Before going further on, let us stress some points useful for the continuation of this work.
\begin{enumerate}
    \item A direct computations shows that the identity
    \begin{align}
        S\tud{\rho}{\mu}\Omega^{\mu\nu}S_{\nu\rho}=0
    \end{align}
    holds. It implies that the spin magnitude is conserved,
    \begin{align}
        \dv{\mathcal S}{\lambda}=0.
    \end{align}
    This conservation equation actually holds at any multipole order and is independent of the spin supplementary condition (see \textit{e.g.} \cite{Marsat_2015} and references therein).
    \item In the object's frame, the evolution equation becomes
    \begin{align}
        \frac{\text{D}\, S^{AB}}{\dd\lambda}=0.
    \end{align}
    The components of the spin tensor in the object's frame $S^{AB}$ are thus constant. This provides \textit{a posteriori} a justification to the statement that the tetrad $e\tdu{A}{\mu}$ is ``rigidly attached'' to the compact object.
    \item Using Eq. \eqref{S_Omega_terms}, one can eliminate the dependence of the precession equation in the rotation coefficients and make explicit its dependence in the Riemann tensor. A straightforward computation yields
    \boxedeqn{
        \frac{\text{D}\, S^{\mu\nu}}{\dd\lambda}=2 p^{[\mu}v^{\nu]}+\mathcal L^{\mu\nu},\qquad\mathcal L^{\mu\nu}\triangleq\frac{4}{3}R\tud{[\mu}{\alpha\beta\gamma}J^{\nu]\alpha\beta\gamma}.\label{MPD_spinning}
    }{MPD equation for the spin tensor}
    This is the standard form of the precession equation that can be found in the literature.
    \item Finally, contracting Eq. \eqref{MPD_spinning} with $v_\nu$, we remark that the momentum and the four velocity are not aligned anymore when spin is present, by contrast to the geodesic case:
    \begin{align}
        -v^2p^\mu=\mathfrak m v^\mu+p_\perp^\mu,\qquad p_\perp^\mu\triangleq \qty(-\frac{\text D S^{\mu\nu}}{\dd\lambda}+\mathcal L^{\mu\nu})v_\nu.
    \end{align}
    Here, $\mathfrak m\triangleq-v_\alpha p^\alpha$ denotes the body's mass in the frame attached to the worldline, also known as the \defining{kinetic mass}. The norm of the four-velocity can be set to $-1$ if the time evolution parameter is chosen to be the body proper time. The orthogonal component of the momentum $p^\mu_\perp$ can be expressed as a function of $x^\mu$, $v^\mu$ and $S^{\mu\nu}$ solely when a spin supplementary condition has been enforced, see Chapter \ref{chap:SSC}.
\end{enumerate}

\subsubsection{\textbf{Evolution equation for the linear momentum}}
The method for finding the evolution equation for the linear momentum is to vary the action with respect to the worldline. The procedure is the very same that the one which can be used for the derivation of the geodesic deviation equation \cite{Wald:1984rg,Misner1973}: let us consider an infinitesimal change of the worldline, parametrized by a displacement vector $\xi^\mu(\lambda)$ which is Lie-dragged along the worldline:
\begin{align}
    \mathcal L_v\xi^\mu=0\quad\Leftrightarrow\quad \xi^\lambda\nabla_\lambda v^\mu=v^\lambda\nabla_\lambda \xi^\mu.
\end{align}
In this case, the variation of the action takes the form
\begin{align}
    \delta_\xi S=\int_\gamma\delta_\xi L\,\dd\lambda=\int_\gamma\xi^\lambda\partial_\lambda L\,\dd\lambda=\int_\gamma\xi^\lambda\nabla_\lambda L\,\dd\lambda.
\end{align}
It is useful to notice that the following identities hold \cite{Marsat_2015}: 
\begin{subequations}
\begin{align}
    \xi^\lambda\nabla_\lambda v^\mu v^\mu&=\frac{\text{D}\,\xi^\mu}{\dd\lambda},\\
    \xi^\lambda \nabla_\lambda\Omega^{\mu\nu}&=-\frac{\text{D}}{\dd\lambda}\qty(e^{A\mu}\delta_\xi e\tdu{A}{\nu})+e^{A\mu}\frac{\text{D}\,\delta_\xi e\tdu{A}{\nu}}{\dd\lambda}-e^{A\nu}\frac{\text{D}\, \delta_\xi e\tdu{A}{\mu}}{\dd\lambda}-\xi^\alpha v^\beta R\tud{\mu\nu}{\alpha\beta}.
\end{align}
\end{subequations}
Here, we have defined
\begin{align}
    \delta_\xi e\tdu{A}{\mu}\triangleq\xi^\lambda\nabla_\lambda e\tdu{A}{\mu}
\end{align}
The value of this quantity is actually is arbitrary, since we are left with the freedom of choosing the way the tetrad is transported from the original worldline $z^\mu(\lambda)$ to the perturbed one $z^\mu(\lambda)+\xi^\mu(\lambda)$. Choosing the tetrad to be parallelly transported between the two worldlines allows to set
\begin{align}
    \delta_\xi e\tdu{A}{\mu}=0.
\end{align}
Gathering all the previous pieces, the evolution equation for the linear momentum can then be derived -- after integration by parts -- from the variational problem $\delta_\xi S=0$, yielding
    \boxedeqn{
    \frac{\text{D}\,p^\mu}{\dd\lambda}=-\frac{1}{2}R\tud{\mu}{\nu\alpha\beta}v^\nu S^{\alpha\beta}+\mathcal F^\mu,\qquad\mathcal F^\mu\triangleq-\frac{1}{6}J^{\alpha\beta\gamma\delta}\nabla^\mu R_{\alpha\beta\gamma\delta}.\label{MPD_impulsion}
    }{MPD equation for the linear momentum}
This is the standard evolution equation of the linear momentum at quadrupole order. Together with Eq. \eqref{MPD_spinning}, these equations are the Mathisson-Papapetrou-Dixon equations, restricted to quadrupole order. At higher orders, the structure of the equations remains the same, the contribution of the higher order multipoles being only contained in the force $\mathcal F^\mu$ and torque $ \mathcal L^{\mu\nu}$ terms \cite{Steinhoff_2010}.

\section{Stress-energy tensor}
An interesting computation to be performed at this stage of the discussion is to write out the stress-energy tensor of the theory. It is advantageously computed from the variation of the action with respect to the body's tetrad frame: first expressing the metric in terms of the tetrad in the action thanks to Eq. \eqref{metric_e}, the stress-energy tensor as be computed from
\begin{align}
    T_{\mu\nu}\triangleq\frac{1}{\sqrt{-g}}e_{a(\mu}\frac{\delta S}{\delta e\tdu{a}{\nu)}}.
\end{align}
The stress-energy tensor can be split as a sum over all the multipole orders involved:
\begin{align}
    T_{\mu\nu}=T^\text{pole}_{\mu\nu}+T^\text{dipole}_{\mu\nu}+T^\text{quad}_{\mu\nu}.
\end{align}
As usually when computing stress-energy tensors for point-like objects moving along worldlines, the action should be written as an integral over the spacetime by introducing a Dirac delta:
\begin{align}
    S=\int\dd[4]x\sqrt{-g}\int_\gamma\,L\qty(v^\mu,\Omega^{\mu\nu},e_{A\mu}e\tud{A}{\nu},R_{\mu\nu\rho\sigma})\delta_4(x,z)\dd\lambda.
\end{align}
Here, the symbol $\delta_4(x,z)$ stands for the diffeomorphism invariant Dirac distribution
\begin{align}
    \delta_4(x,z)\triangleq\frac{\delta^{(4)}(x-z)}{\sqrt{-g}},
\end{align}
where $\delta^{(4)}(x-z)$ is the standard four-dimensional Dirac distribution \cite{schwartz1966theorie}. After computation, the contributions of the right hand side are found to be given by
\begin{subequations}\label{T_lagrangian}
\begin{align}
    T^\text{pole}_{\mu\nu}&=\int_\gamma p_{(\mu}v_{\nu)}\delta_4(x,z)\dd\lambda,\\
    T^\text{dipole}_{\mu\nu}&=-\nabla_\lambda\int_\gamma S\tud{\lambda}{(\mu}v_{\nu)}\delta_4(x,z)\dd\lambda,\\
    T^\text{quad}_{\mu\nu}&=\frac{1}{3}\int_\gamma R\tdu{(\mu}{\alpha\beta\gamma}J_{\nu)\alpha\beta\gamma}\delta_4(x,z)\dd\lambda\nonumber\\
    &\quad-\frac{2}{3}\nabla_\lambda\nabla_\rho\int_\gamma J\tudu{\lambda}{(\mu\nu)}{\rho}\delta_4(x,z)\dd\lambda.
\end{align}
\end{subequations}
As expected, the stress energy tensor is not a function, but a distribution which is only non-vanishing on the body's worldline.

\chapter[Skeletonization of the stress-energy tensor]{Skeletonization\\of the stress-energy tensor}
\label{chap:skeleton}

\vspace{\stretch{1}}

Until now, we have obtained the Mathisson-Papapetrou-Dixon equations governing the motion of spinning test bodies in curved spacetime at quadrupole order, which are given by Eqs. \eqref{MPD_spinning} and \eqref{MPD_impulsion}. This was performed through writing down an action principle for our theory and deriving the associated equations of motion from the associated variational principle.

In this chapter, we will see that a totally different method allows to recover the very same equations of motion. It consists into replacing the smooth stress-energy tensor of the physical extended body by a distributional one, which is only supported on a single worldline encompassed in the body's worldtube. The equations of motion then follow from the stress-energy conservation equation. Justifying rigorously this approximation from first principles in GR is however very involved and technical. We refer the interested reader to the references mentioned in the introduction of Part \ref{part:spinnin_bodies} for more details (especially Dixon ones). In the present text, we will give some insights about the coherence of this approximation by comparing the (involved) GR situation to the (simpler) Newtonian one.

At quadrupole order, the computations associated to gravitational skeletonization turn out to be very cumbersome. This chapter aiming to provide a pedagogical introduction, we will restrict ourselves to the dipole order. Explicit computations for the quadrupole may be found in \cite{Steinhoff_2010,Gratus:2020cok}. 

This chapter is organized as follows: Section \ref{sec:newtonian_skeleton} reviews the gravitational skeletonization in Newtonian theory, which is then generalized to General Relativity in Section \ref{sec:GR_skeleton}. As we will see, they are several decompositions that can be chosen for performing the skeletonization. In Section \ref{sec:ellis_skeleton}, we use the \textit{Ellis decomposition} to recover the MPD equations at dipole order. The computations are carried out in a specific coordinates system, the \textit{adapted coordinates}, which enable to reduce dramatically the length and the technicality of the derivation.

\section{Invitation: gravitational skeleton in Newtonian gravity}
\label{sec:newtonian_skeleton}

\textit{This section is mainly based on \cite{Ramond_2021,Steinhoff_2010}.}

In order to acquire some feeling about the form of the Ansatz of the GR's gravitational skeleton of the stress-energy tensor, let us have a look at the equivalent problem in Newtonian gravity. The gravitational potential $U(t,\mathbf x)$ created by an object of mass density $\rho(t,\mathbf x)$ enclosed in a volume $\mathcal V\subset\mathbb R^3$ is solution of the Poisson equation
\begin{align}
    \Delta U(t,\mathbf x)=-4\pi\rho(t,\mathbf x).
\end{align}
This equation can be solved analytically, and its solution reads (up to a trivial additive constant)
\begin{align}
    U(t,\mathbf x)=\int_{\mathcal V}\dd[3]x'\,\frac{\rho(t,\mathbf x)}{\norm{\mathbf x-\mathbf x'}}.
\end{align}
This potential admit a convenient rewriting under the form of a multipole decomposition above an arbitrary point $\mathbf x_0\in\mathcal V$. For any $\mathbf x\notin \mathcal V$, one can write
\begin{align}
    U(t,\mathbf x)=\sum_{l=0}^{+\infty}\frac{\qty(-)^l}{l!}I^{i_1\ldots i_l}(t,\mathbf x_0)\partial_{i_1}\ldots\partial_{i_l}\norm{\mathbf x-\mathbf x_0}^{-1}\label{decomposition_newton_potential}
\end{align}
where we have defined the multipole moments
\begin{align}
    I^{i_1\ldots i_l}(t,\mathbf x_0)\triangleq \int_{\mathcal V}\dd[3]x\,\qty(x-x_0)^{i_1}\ldots\qty(x-x_0)^{i_l}\rho(t,\mathbf x).
\end{align}
The proof of Eq. \eqref{decomposition_newton_potential} is easily carried out by performing a Taylor expansion of $\norm{\mathbf x-\mathbf x'}$ with respect to $\mathbf x'$ above some point $\mathbf x_0\in\mathcal V$.

The gravitational skeletonization consists here in replacing the \textit{smooth} mass density distribution $\rho(t,\mathbf x)$ (which is supported on a finite-size region of space, $\text{supp}\,\rho\subseteq\mathcal V$) by a \textit{singular} mass density distribution -- say $\rho_\text{skel}$ -- which is supported on a single point of space, $\text{supp}\,\rho_\text{skel}= \qty{\mathbf x_0}\in\mathcal V$. The key result allowing such a skeletonization to be performed can be stated as follows:
\begin{proposition}{}{}
 Let $\mathbf x_0$ be a point of $\mathcal V$. For any point $\mathbf x\notin\mathcal V$ outside of the object, the distributional mass density
\begin{align}
    \rho_\text{skel}(t,\mathbf x)\triangleq\sum_{l=0}^{+\infty}\frac{(-)^l}{l!}I^{i_1\ldots i_l}(t,\mathbf x_0)\partial_{i_1}\ldots\partial_{i_l}\delta^{(3)}(\mathbf x-\mathbf x_0)\label{singular_mass_density}
\end{align}
generates the same potential $U(t,\mathbf x)$ as the smooth mass density $\rho(t,\mathbf x)$.
\end{proposition}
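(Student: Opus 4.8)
The plan is to show that the two mass densities produce the same potential \emph{outside} $\mathcal V$ by directly substituting the singular density \eqref{singular_mass_density} into the Newtonian potential integral and checking that it reproduces the multipole expansion \eqref{decomposition_newton_potential}, which was already established for the smooth density. First I would recall that for any (sufficiently regular) mass density $\varrho$, the generated potential is
\begin{align}
    U_\varrho(t,\mathbf x)=\int_{\mathbb R^3}\dd[3]{x'}\,\frac{\varrho(t,\mathbf x')}{\norm{\mathbf x-\mathbf x'}},
\end{align}
and that this formula extends to distributional $\varrho$ by interpreting the integral as the action of the distribution on the test function $\mathbf x'\mapsto\norm{\mathbf x-\mathbf x'}^{-1}$ — which is legitimate precisely when $\mathbf x\notin\mathcal V$, since then $\mathbf x'\mapsto\norm{\mathbf x-\mathbf x'}^{-1}$ is smooth on a neighbourhood of $\operatorname{supp}\rho_\text{skel}=\{\mathbf x_0\}$.

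The key computational step is then to plug \eqref{singular_mass_density} into this expression and use the defining property of derivatives of the Dirac distribution, namely $\int\dd[3]{x'}\,\partial_{i_1}\cdots\partial_{i_l}\delta^{(3)}(\mathbf x'-\mathbf x_0)\,f(\mathbf x')=(-)^l\,\partial_{i_1}\cdots\partial_{i_l}f(\mathbf x_0)$ for $f$ smooth near $\mathbf x_0$. Applying this with $f(\mathbf x')=\norm{\mathbf x-\mathbf x'}^{-1}$ gives
\begin{align}
    U_{\rho_\text{skel}}(t,\mathbf x)=\sum_{l=0}^{+\infty}\frac{(-)^l}{l!}I^{i_1\ldots i_l}(t,\mathbf x_0)\,(-)^l\,\partial'_{i_1}\cdots\partial'_{i_l}\norm{\mathbf x-\mathbf x'}^{-1}\Big|_{\mathbf x'=\mathbf x_0}.
\end{align}
Since each derivative $\partial'_{i}$ acting on $\norm{\mathbf x-\mathbf x'}^{-1}$ equals $-\partial_{i}$ acting on the same function (translation invariance of the Euclidean norm), the two sign factors $(-)^l$ combine with the $l$ sign flips $\partial'\to-\partial$ to leave exactly one factor $(-)^l$, and one recovers term by term the right-hand side of \eqref{decomposition_newton_potential}. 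Hence $U_{\rho_\text{skel}}(t,\mathbf x)=U(t,\mathbf x)$ for all $\mathbf x\notin\mathcal V$.

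The main obstacle — and the only place real care is needed — is the interchange of the (infinite) multipole sum with the integration against $\delta$-derivatives, i.e. the convergence of the series and the justification that the distributional pairing may be carried out termwise. This is controlled by the same estimate that underlies the convergence of \eqref{decomposition_newton_potential}: for $\mathbf x$ outside $\mathcal V$ one has $\norm{\mathbf x-\mathbf x_0}>\operatorname{diam}\mathcal V\ge\norm{\mathbf x'-\mathbf x_0}$ for all $\mathbf x'\in\mathcal V$, so the Taylor remainder of $\norm{\mathbf x-\mathbf x'}^{-1}$ about $\mathbf x'=\mathbf x_0$ decays geometrically, and the multipole moments $I^{i_1\ldots i_l}$ are bounded by $M\,(\operatorname{diam}\mathcal V)^l$ with $M$ the total mass. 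I would state this uniform bound explicitly, note that it gives absolute and uniform (in $\mathbf x$ on compact subsets of the exterior) convergence, and thereby license the termwise manipulation; the rest is the routine calculation sketched above.
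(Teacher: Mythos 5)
Your argument is correct and reaches the right conclusion, but it runs in the opposite direction to the paper's. The paper's proof is a one-liner: it applies the Laplacian to the multipole expansion \eqref{decomposition_newton_potential} of $U$ and uses the fundamental-solution identity $\Delta\norm{\mathbf x-\mathbf x_0}^{-1}=-4\pi\delta^{(3)}(\mathbf x-\mathbf x_0)$ term by term, so that $\Delta U=-4\pi\rho_\text{skel}$ is read off and $\rho_\text{skel}$ is thereby identified as a source generating $U$. You instead convolve $\rho_\text{skel}$ with the Green's function, evaluate the distributional pairing against $\mathbf x'\mapsto\norm{\mathbf x-\mathbf x'}^{-1}$ (legitimate precisely because this test function is smooth near $\operatorname{supp}\rho_\text{skel}=\qty{\mathbf x_0}$ when $\mathbf x\notin\mathcal V$), and recover \eqref{decomposition_newton_potential} term by term; your sign bookkeeping $(-)^l\cdot(-)^l\cdot(-)^l=(-)^l$ is right. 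Your route is more explicit about what ``generates the same potential'' means for a distributional source and about where the hypothesis $\mathbf x\notin\mathcal V$ enters; the paper's route is shorter but leaves the reader to supply essentially the steps you wrote out. The two are the differential and integral faces of the same Poisson problem.

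One caveat on your convergence remark: the inequality $\norm{\mathbf x-\mathbf x_0}>\operatorname{diam}\mathcal V$ does not hold for every $\mathbf x\notin\mathcal V$ (take $\mathbf x_0$ near the boundary of a large $\mathcal V$ and $\mathbf x$ just outside). What the geometric-decay estimate actually requires is $\norm{\mathbf x-\mathbf x_0}>\sup_{\mathbf x'\in\mathcal V}\norm{\mathbf x'-\mathbf x_0}$, i.e.\ $\mathbf x$ outside the ball centred at $\mathbf x_0$ that circumscribes $\mathcal V$. This limitation afflicts the paper's own statement of \eqref{decomposition_newton_potential} just as much, so it does not undermine your term-by-term identification, but the uniform bound you invoke should be stated on that smaller exterior region rather than on all of the complement of $\mathcal V$.
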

\begin{proof}
It is enough to recall ourselves that the identity 
\begin{align}
\Delta\norm{\mathbf x-\mathbf x_0}^{-1}=-4\pi\delta^{(3)}\qty(\mathbf x-\mathbf x_0).
\end{align}
holds in the sense of distributions \cite{schwartz1966theorie}.
\end{proof}
Of course, the explicit expression of $\rho_\text{skel}$ and all the related equations have to be understood in the sense of distributions (see \textit{e.g.} \cite{schwartz1966theorie} for a clear reminder of the meaning of this assertion).

In others words, when observed from outside, any localized gravitating object can be replaced by a particle located at a single point of spacetime and possessing an infinite tower of multipole moments. This replacement holds in the sense that the gravitational potentials generated by these two systems are identical as long as we remain outside of the object.

\section{General Relativist Skeletons}
\label{sec:GR_skeleton}

We now consider an extended body within the framework of General Relativity. This object is assumed to be described by a smooth stress-energy tensor supported on some worldtube $\mathcal T$. In the same spirit, we replace its smooth stress-energy tensor $T^{\mu\nu}(x)$ by a distributional stress-energy tensor $T^{\mu\nu}_\text{skel}(x)$ supported on a single timelike worldline $\gamma\subset\mathcal T$. By analogy with Eq. \eqref{singular_mass_density}, we assume this stress-energy tensor to take the form
\boxedeqn{
    T^{\mu\nu}_\text{skel}(x)=\sum_{l=0}^{+\infty}\frac{1}{l!}\int_\gamma\dd\lambda\, I^{\mu\nu \alpha_1\ldots\alpha_l}(z)\mathcal D^{(l)}_{\alpha_1\ldots\alpha_l}\delta_4(x,z).\label{GR_skeleton}
}{Gravitational skeleton for the stress-energy tensor}
Here, $\mathcal D^{(k)}_{\alpha_1\ldots\alpha_l}$ is some differential operator which contains at most $l$ derivatives. $z^\mu(\tau)$ are coordinates parametrizing the worldline $\gamma$ with respect to an affine time parameter $\lambda$. We denote the tangent vector to the worldline $v^\mu=\dv{z^\mu}{\lambda}$. For $l=0$, we use the conventions $\mathcal D^{(0)}=\text{Id}$ and $I^{\mu\nu\alpha_1\ldots\alpha_l}=I^{\mu\nu}$. At this level, the multipoles $I^{\mu\nu\alpha_1\ldots\alpha_l}$ are still arbitrary functions.

\subsubsection{Perturbative treatment}

One can show that it makes sense to treat the expansion \eqref{GR_skeleton} perturbatively, and consequently to truncate it at any desired order. By analogy with the non-relativistic case, we do expect the multipole to scale as $I^{\mu\nu \alpha_1\ldots\alpha_l}\sim \mu \ell^l$, with $\mu$ and $\ell$ being respectively the typical mass and size of the extended body under consideration. Moreover, we expect the multiple covariant derivative to scale as $\nabla_{\alpha_1\ldots\alpha_l}\sim r^{-l}$, with $r$ the typical curvature radius of the background metric. All in all, the $l^\text{th}$ term of the expansion \eqref{GR_skeleton} scales as $\mu\qty(\frac{\ell}{r})^l$. Then, if the object is assumed to be \textit{compact} in the sense mentioned in the introduction, we have $r\gg\ell$ and consequently $\frac{\ell}{r}\ll 1$. The expansion \eqref{GR_skeleton} can thus be truncated in a perturbative sense. A truncation at $l=0$ will correspond to the monopole approximation, $l=1$ to the pole-dipole (or simply dipole) one, $l=2$ to the quadrupole, etc.

\subsubsection{Dixon and Ellis representations}

To go further, we shall choose an explicit form for the operator $\mathcal D^{l}_{\alpha_1\ldots\alpha_l}$. Two equivalent choices have been used in the literature \cite{Gratus:2020cok}. The most common is the \defining{Dixon representation} \cite{dixon1973,dixon1974,dixon1979}
\begin{align}
    \mathcal D^{(k)}_{\alpha_1\ldots\alpha_k}&\triangleq\nabla_{\alpha_1}\ldots\nabla_{\alpha_k}.
\end{align}
In this text, we will instead make another choice, the \defining{Ellis representation}, defined by \cite{Ellis:1975rp}
\begin{align}
    \mathcal D^{(l)}_{\alpha_1\ldots\alpha_l}&=\left\lbrace\
    \begin{array}{ll}
      0   & \text{ if }l<N\\
      \partial_{\alpha_1}\ldots\partial_{\alpha_N} & \text{ if }l=l_\text{max} 
    \end{array}\right.,
\end{align}
with $l_\text{max}$ the order in $l$ at which the multipole expansion Eq. \eqref{GR_skeleton} is chosen to be truncated. 

Both of these representations have advantages and drawbacks, which are nicely reviewed in \cite{Gratus:2020cok}. For our purposes, it will be more convenient to work in the Ellis representation, since the computations involved at dipole order turn out to be less technical, and thus more suited for an introductory exposition.

\subsubsection{Reduction of the stress-energy tensor}

The above skeletonization amounts to replace the smooth object by a collection of multipole moments supported on a single worldline contained in the object worldtube. Actually proving the validity of the decomposition \eqref{GR_skeleton} would require to show that there exists some expressions of the multipole moments $I^{\mu\nu\alpha_1\ldots\alpha_l}$ such that both the smooth and the distributional stress-energy tensors generate the same spacetime curvature though the Einstein field equations $G_{\mu\nu}=8\pi T_{\mu\nu}$. This task being extremely involved, we will not attempt to tackle it in the present text, but rather consider \eqref{GR_skeleton} for granted. The interested reader would fruitfully refer to \cite{dixon1979} for a more formal exposition of the subject.

In this text, we will instead discuss the so-called \defining{reduction of the stress-energy tensor}: as it is always the case in GR, our stress-energy tensor must be conserved and consequently obeys \cite{Wald:1984rg,Misner1973}
\begin{align}
    \nabla_\mu T^{\mu\nu}=0.\label{T_conservation}
\end{align}

As we will show, this conservation equation highly constrains the form of the stress-energy tensor. At quadrupole order, one can show that it implies that there must exist a vector $p^\mu$, and antisymmetric tensor $S^{\mu\nu}$ and a rank four tensor $J^{\mu\nu\rho\sigma}$ exhibiting the same symmetries than the Riemann tensor such that \cite{Steinhoff_2010,Gratus:2020cok}
\begin{align}\label{stress_skeleton}
    T^{\mu\nu}(x)&=\int_\gamma\dd\lambda\,\qty[v^{(\mu} p^{\nu)}+\frac{1}{3}R\tud{(\mu}{\alpha\beta\gamma}J^{\nu)\alpha\beta\gamma}]\delta_4(x,z)+\nabla_\alpha\int_\gamma\dd\lambda\, v^{(\mu}S^{\nu)\alpha}\delta_4(x,z)\nonumber\\
    &\quad-\frac{2}{3}\nabla_\alpha\nabla_\beta\int_\gamma \dd\lambda\, J^{\alpha(\mu\nu)\beta}\delta_4(x,z).
\end{align}
Moreover, during the reduction process, we find additional constraints taking the form of differential equations for the quantities $p^\mu$, $S^{\mu\nu}$ and $J^{\mu\nu\rho\sigma}$ that turn out to be precisely the MPD equations. Notice that Eq. \eqref{stress_skeleton} agrees exactly with the form of the stress-energy tensor which has been found through the Lagrangian formulation, given in Eq. \eqref{T_lagrangian}.

\section{Ellis skeleton in adapted coordinates: to dipole order}
\label{sec:ellis_skeleton}

\textit{This section mainly follows the exposition of \cite{Gratus:2020cok}.}

Performing the explicit reduction of the stress-energy tensor at quadrupole order is computationally quite involved, and can be found in \cite{Steinhoff_2010,Gratus:2020cok}. As a proof of principle, we will instead perform the reduction at pole-dipole level and show that it leads to the MPD equations at the corresponding order. In the following, we will work with Ellis representation of multipoles. Explicitly, Ellis representation truncated at order $N$ takes the form
\begin{align}
    T^{\mu\nu}_\text{skel}(x)&=\frac{1}{N!}\int_\gamma\dd\lambda\, I^{\mu\nu\alpha_1\ldots\alpha_N}(z)\partial_{\alpha_1}\ldots\partial_{\alpha_N}\delta_{4}\qty(x,z)\label{ellis_decomp}
\end{align}
with $z^\mu=z^\mu(\lambda)$ the coordinates of the worldline. 

Since $T_\text{skel}^{\mu\nu}$ is a distributional stress-energy tensor, Eq. \eqref{ellis_decomp} shall be formally understood in the sense of distributions: for any symmetric test function $\phi_{\mu\nu}$, the quantity
\begin{align}
    \int_{\mathcal T}\dd[4]x\sqrt{-g} \,T_\text{skel}^{\mu\nu}(x)\phi_{\mu\nu}(x)
\end{align}
is a real number. Integrating distributional quantities against test functions will be the main tool we will use to show that the conservation of the stress-energy tensor leads to the MPD equations.

Notice that since $T^{\mu\nu}$ is symmetric and since the partial derivatives commute, the moment $I^{\mu\nu\alpha_1\ldots\alpha_N}$ must obey the algebraic symmetries
\begin{align}
    I^{\mu\nu\alpha_1\ldots\alpha_k}=I^{(\mu\nu)\alpha_1\ldots\alpha_k}=I^{\mu\nu(\alpha_1\ldots\alpha_k)}.
\end{align}
Finally, let us comment on two drawbacks of Ellis representation: (i) from their very definitions, the moments $I^{\mu\nu\alpha_1\ldots\alpha_N}$ are not tensors, since they are contracted on their last $N$ indices with partial derivatives, and since the total stress-energy tensor must be itself a tensorial object. Moreover, (ii) for a given order of truncation $N$, the full multipolar structure of the test object is represented by a single moment $I^{\mu\nu\alpha_1\ldots\alpha_N}$. There is thus no \textit{a priori} decomposition of this moment between a hierarchy of moments (monopole, dipole\ldots). As we will see in the following, such a split can be obtained by introducing a specific system of coordinates, the so-called \defining{adapted coordinates}.

\subsubsection{Ellis skeleton in adapted coordinates}

We now turn to coordinates adapted to the worldline, $x^\mu\to X^\mu=(X^0,\mathbf X)$, with $\mathbf X\triangleq\qty(X^1,X^2,X^3)$. They are required to satisfy
\begin{align}
    X^0\eval_\gamma&=\lambda,\quad X^i\eval_\gamma=0\qquad\Rightarrow\qquad v^0\eval_\gamma=\dv{X^0}{\lambda}\eval_\gamma=1,\quad v^i\eval_\gamma=\dv{X^i}{\lambda}\eval_\gamma=0.
\end{align}
An example of explicit construction of this type of coordinates are \textit{Fermi normal coordinates} \cite{Poisson:2011nh}. In what follows, we will systematically assume that such coordinates can be constructed over the worldtube $\mathcal T$ of the body. Using these coordinates, we can write
\begin{align}
    \begin{split}
    &\int_\mathcal{T}\dd[4]X\sqrt{-g}\, 
    T_\text{skel}^{\mu\nu}\phi_{\mu\nu}\\
    &=\frac{(-1)^N}{N!}\int_\gamma\dd\lambda\, I^{\mu\nu\alpha_1\ldots\alpha_N}\partial_{\alpha_1}\ldots\partial_{\alpha_N}\phi_{\mu\nu}\eval_{\gamma}\\
    &=\sum_{k=0}^N\frac{(-1)^N}{N!}\frac{N!}{k!(N-k)!}\int_\gamma\dd\lambda\, I^{\mu\nu i_1\ldots i_k0\ldots0}\partial_{i_1}\ldots\partial_{i_k}\partial^{N-k}_0\phi_{\mu\nu}\eval_\gamma\\
    &=\sum_{k=0}^N\frac{(-1)^k}{k!(N-k)!}\int_\gamma\dd\lambda\,\partial^{N-k}_0 I^{\mu\nu i_1\ldots i_k0\ldots0}\partial_{i_1}\ldots\partial_{i_k}\phi_{\mu\nu}\eval_\gamma.\label{adapted_test}
    \end{split}
\end{align}
It is therefore useful to define a new collection of moments 
\begin{align}
    \gamma_{(N)}^{i_1\ldots i_k}\triangleq\frac{1}{(N-k)!}\partial^{N-k}_0 I^{\mu\nu i_1\ldots i_k0\ldots0},\qquad k\leq N.\label{multipole_gamma}
\end{align}
They still satisfy the algebraic symmetries
\begin{align}
    \gamma_{(N)}^{\mu\nu i_1\ldots i_k}=\gamma_{(N)}^{(\mu\nu) i_1\ldots i_k}=\gamma_{(N)}^{\mu\nu (i_1\ldots i_k)}.
\end{align}

In terms of the new moments (and still in adapted coordinates), Ellis decomposition is equivalently given by
\boxedeqn{
    T^{\mu\nu}_\text{skel}=\frac{1}{\sqrt{-g}}\mathcal T^{\mu\nu},\qquad \mathcal T^{\mu\nu}\triangleq\sum_{k=0}^N\frac{1}{k!}\gamma_{(N)}^{\mu\nu i_1\ldots i_k}(\lambda)\partial_{i_1}\ldots\partial_{i_k}\delta^{(3)}\qty(\mathbf X).\label{ellis_adapted}
}{Ellis decomposition in adapted coordinates}
Notice that since $T^{\mu\nu}_\text{skel}$ is a tensor, $\mathcal T^{\mu\nu}$ is a tensor density of weight $-1$.
The proof of Eq. \eqref{ellis_adapted} consists into integrating this expression against an arbitrary, symmetric test function $\phi_{\mu\nu}$:
\begin{align}
    \begin{split}
    \int_\mathcal{T}\dd[4]X\sqrt{-g}\,T_\text{skel}^{\mu\nu}\phi_{\mu\nu}
    &=\sum_{k=0}^N\frac{1}{k!}\int_\gamma\dd\lambda\int\dd[3]X\,\gamma_{(N)}^{\mu\nu i_1\ldots i_k}\partial_{i_1}\ldots\partial_{i_k}\delta^{(3)}(\mathbf X)\phi_{\mu\nu}\\
    &=\sum_{k=0}^N\frac{(-1)^k}{k!}\int_\gamma\dd\lambda\,\gamma_{(N)}^{\mu\nu i_1\ldots i_k}\partial_{i_1}\ldots\partial_{i_k}\phi_{\mu\nu}.
    \end{split}
\end{align}
We therefore recover the expression obtained in Eq. \eqref{adapted_test} using the definition Eq. \eqref{multipole_gamma}.

Before turning to the derivation of the MPD equations using the conservation of the stress-energy tensor, let us make a couple of remarks. As we will check explicitly at the dipole level, the multipoles $\gamma$ are fully determined by the distribution of stress-energy, whereas the $I$s are not, thus leading to the appearance of a gauge freedom in their definition. This can be seen from Eq. \eqref{multipole_gamma}, since the construction of the $I$s from the $\gamma$s require to integrate with respect to time, thus leading to the appearance of arbitrary integration constants.

Moreover, as we will see when discussing the relation between distributional and smooth stress-energy tensors, the split of $I^{\mu\nu\alpha_1\ldots\alpha_N}$ in $k$ moments $\gamma^{\mu\nu i_1\ldots i_k}$ actually corresponds to a physical split between a monopole, a dipole, etc.
Notice that the price to pay for obtaining simple computations in Ellis representation is that we are forced to work in a specific coordinate system, the adapted coordinates. Moreover, the multipole decomposition tuned to this system is not explicitly covariant. Nevertheless, this framework allows to recover the results obtained with more involved approaches, \textit{e.g.} Dixon representation. Finally, notice that a coordinate-free approach to multipoles can be formulated, see \cite{Gratus:2020cok} for more details.

\subsubsection{Conservation equation for stress-energy tensor density}
In adapted coordinates, since the decomposition Eq. \eqref{ellis_adapted} is valid, it is more convenient to write the conservation of the stress-energy tensor Eq. \eqref{T_conservation} in terms of the tensor density $\mathcal T^{\mu\nu}$ introduced in Eq. \eqref{ellis_adapted}. Since $T^{\mu\nu}_\text{skel}$ is a tensor, the covariant derivative appearing in Eq. \eqref{T_conservation} can be expanded in terms of partial derivatives and Christoffel symbols. We get
\begin{align}
    \begin{split}
    \nabla_\mu T^{\mu\nu}_\text{skel}&=\partial_\mu T^{\mu\nu}_\text{skel}+2\Gamma^{(\mu}_{\nu\rho}T^{\nu)\rho}_\text{skel}\\
    &=\partial_\mu\qty(\frac{1}{\sqrt{-g}})\mathcal T^{\mu\nu}+\frac{1}{\sqrt{-g}}\qty(\partial_\mu \mathcal T^{\mu\nu }+2\Gamma^{(\mu}_{\nu\rho}\mathcal T^{\nu)\rho})\\
    &=\frac{1}{\sqrt{-g}}\qty(\partial_\mu \mathcal T^{\mu\nu}+\Gamma^{\nu}_{\mu\rho}\mathcal T^{\mu\rho}).
    \end{split}
\end{align}
The last equality uses the identity
\begin{align}
    \partial_\mu\qty(\sqrt{-g})=\sqrt{-g}\Gamma^\alpha_{\mu\alpha}.
\end{align}
At the end of the day, the conservation equation for the stress-energy tensor Eq. \eqref{T_conservation} is equivalent to the following equation for the stress-energy tensor density
\begin{align}
    \partial_\mu \mathcal T^{\mu\nu}+\Gamma^{\nu}_{\mu\rho}\mathcal T^{\mu\rho}=0.\label{density_conservation}
\end{align}

\subsubsection{MPD equations in adapted coordinates}
Before investigating the consequences of this conservation equation, it is useful to write down the form that the MPD equations take in adapted coordinates. Since we will derive only the pole-dipole equations in next section, we set the force and torque terms to zero in all the equations below.

Recalling that $v^\mu=\delta^\mu_0$ on the worldline, the evolution equation for the spin \eqref{MPD_spinning} takes the form
\begin{align}
    \nabla_0 S^{\mu\nu}=2p^{[\mu}v^{\nu]}.
\end{align}
Separating the spatial coordinates from the temporal one, it is equivalent to the two following equations
\begin{align}
    \nabla_0 S^{0i}&=-p^i,\qquad\nabla_0 S^{ij}=0.\label{spin_adapted}
\end{align}

Using the definition of the Riemann tensor, the evolution equation for the momentum \eqref{MPD_impulsion} takes the form
\begin{align}
    \begin{split}
    \nabla_0 p^\mu&=-\frac{1}{2}R\tud{\mu}{0\alpha\beta}S^{\alpha\beta}\\
    &=-\qty(\partial_\alpha\Gamma^{\mu}_{0\beta}+\Gamma^\mu_{\alpha\lambda}\Gamma^\lambda_{0\beta})S^{\alpha\beta}\\
    &=-\dot\Gamma^0_{0\mu}S^{0\mu}-\partial_i\Gamma^0_{0\mu}S^{i\mu}-\Gamma^\mu_{\alpha\lambda}\Gamma^{\lambda}_{0\beta}S^{\alpha\beta},\label{MPD_imp_adapted}
    \end{split}
\end{align}
where we use the notation $\dot{ }\triangleq\partial_0$.

\subsubsection{Recovering MPD equations from stress-energy conservation}

We will now truncate the multipole expansion to dipole order, and study the constraints enforced by the conservation equation \eqref{density_conservation}. At dipole order, Ellis decomposition takes the form
\begin{align}
    \mathcal T^{\mu\nu}=\gamma^{\mu\nu}\delta^{(3)}(\mathbf X)+\gamma^{\mu\nu i}\partial_i\delta^{(3)}(\mathbf X)\label{dipole_decomp}
\end{align}
To lighten the notations, we drop the subscript ``$(N)$'' from the multipoles $\gamma$ in the continuation of this section. For such a choice of stress-energy density, the conservation equation \eqref{density_conservation} is a distributional identity. It will be satisfied provided that
\begin{align}
    \int\dd[3]X\,\qty(\partial_\mu \mathcal T^{\mu\nu}+\Gamma^{\nu}_{\mu\rho}\mathcal T^{\mu\rho})\phi_\nu=0
\end{align}
for any arbitrary test function $\phi_\nu$. Inserting the decomposition Eq. \eqref{dipole_decomp} in this equation and integrating by parts yields
\begin{align}
    \begin{split}
    &\qty(\dot\gamma^{0\nu}+\Gamma^\nu_{\mu\rho}\gamma^{\mu\rho}-\partial_i\Gamma^\nu_{\mu\rho}\gamma^{\mu\rho i})\phi_\nu\\
    &\quad-\qty(\dot\gamma^{0\nu i}+\gamma^{i\nu}+\Gamma^\nu_{\mu\rho}\gamma^{\mu\rho i})\partial_i\phi_\nu+\gamma^{j\nu i}\partial_i\partial_j\phi_\nu=0
    \end{split}
\end{align}
Because this identity shall be valid for any choice of test function $\phi_\nu$, it is equivalent to the set of three equations
\begin{subequations}
\begin{align}
    &\gamma^{\nu (ij)}=0,\label{eq1}\\
    &\dot\gamma^{0\nu i}+\gamma^{i\nu}+\Gamma^\nu_{\mu\rho}\gamma^{\mu\rho i}=0,\label{eq2}\\
    &\dot\gamma^{0\nu}+\Gamma^\nu_{\mu\rho}\gamma^{\mu\rho}-\partial_i\Gamma^\nu_{\mu\rho}\gamma^{\mu\rho i}=0.\label{eq3}
\end{align}
\end{subequations}
We will now prove that this set of constraints is consistent with setting
\begin{subequations}\label{gamma_constrained}
\begin{align}
    \gamma^{\mu\nu}&=p^{(\mu}v^{\nu)}+v^\rho\Gamma^{(\mu}_{\rho\sigma}S^{\nu)\sigma} +\partial_0\qty(v^{(\mu}S^{\nu)0} ),\\
    \gamma^{\mu\nu i}&=v^{(\mu}S^{\nu) i}
\end{align}
\end{subequations}
in adapted coordinates, where $p^\mu$ is a vector and $S^{\mu\nu}$ an antisymmetric tensor. Notice that these relations can be inverted as
\begin{subequations}\label{inverted_moments}
\begin{align}
    S^{0i}&=\gamma^{00i},\\
    S^{ij}&=2\gamma^{0ij},\\
    p^\mu&=\gamma^{\mu0}-\Gamma^\mu_{0i}\gamma^{00i}.
\end{align}
\end{subequations}
The two first identities are straightforward to prove, whereas the latter is a little bit more involved. First, remark that one can write
\begin{align}
    \dot S^{\mu0}=p^\mu-p^0\delta^\mu_0+\Gamma^\mu_{0\rho} S^{0\rho}+\Gamma^0_{0\rho}S^{\rho\mu}.
\end{align}
We then obtain
\begin{align}
    \begin{split}
        \gamma^{\mu0}&=\frac{1}{2}\qty(p^\mu+p^0\delta^\mu_0+\Gamma^\mu_{0\rho}S^{0\rho}+\Gamma^0_{0\rho}S^{\mu\rho}+\dot S^{\mu0})\\
        &=p^\mu+\Gamma^\mu_{0\rho}S^{0\rho}\\
        &=p^\mu+\Gamma^\mu_{0i}\gamma^{00i},
    \end{split}
\end{align}
which completes the proof.

We will now plug the decomposition Eq. \eqref{gamma_constrained} into the constraint equations. Eq. \eqref{eq1} is automatically satisfied, since $S^{\mu\nu}$ is an antisymmetric tensor. Eq. \eqref{eq2} reduces to
\begin{align}
    \delta^{(\nu}_0\qty[\dot S^{0)i}+\dot S^{i)0}+p^{i)}]+\Gamma^{(i}_{0\rho}S^{0)\rho}+\Gamma^\nu_{0\rho}S^{\rho i}=0.
\end{align}
Setting $\nu=j$, we are left with
\begin{align}
    \dot S^{ij}+\Gamma^i_{\mu 0}S^{\mu j}+\Gamma^j_{\mu 0}S^{i\mu}=0\qquad\Leftrightarrow\qquad\nabla_0S^{ij}=0.
\end{align}
For $\nu=0$, we get
\begin{align}
    \dot S^{0i}+\Gamma^0_{\mu 0}S^{\mu i}+\Gamma^i_{\mu 0}S^{0\mu}+p^i=0\qquad\Leftrightarrow\qquad\nabla_0 S^{0i}=-p^i.\label{MPD_adapted_spin_0i}
\end{align}
These two equations are precisely the spin evolution equation in adapted coordinates, given by Eq. \eqref{spin_adapted}.

The last constraint Eq. \eqref{eq3} is the most involved to deal with. It reads
\begin{align}
    \begin{split}
    &\dot p^{(0}\delta^{\nu)}_0+\partial_0\qty(\Gamma^{(0}_{0\sigma}S^{\nu)\sigma})+\partial_0\qty(\delta^{(0}_0\dot S^{\nu)0})+\Gamma^\nu_{\mu0}\qty(p^\mu+\dot S^{\mu0})\\
    &\quad+\Gamma^\nu_{\mu\alpha}\Gamma^\mu_{0\beta}S^{\alpha\beta}-\partial_i\Gamma^\nu_{0\rho} S^{\rho i}=0.
    \end{split}
\end{align}
For $\nu=0$, direct algebra leads to
\begin{align}
    \nabla_0 p^0&=-\dot \Gamma^0_{0\mu}S^{0\mu}-\partial_i\Gamma^0_{0\mu}S^{i\mu}-\Gamma^0_{\mu\alpha}\Gamma^\mu_{0\beta}S^{\alpha\beta},
\end{align}
which precisely reproduce the component $\mu=0$ of Eq. \eqref{MPD_imp_adapted}. 

For $\nu=i$, one shall use Eq. \eqref{MPD_adapted_spin_0i} to replace the term $\frac{1}{2}\partial_0\dot S^{i0}$. After some algebra, we find
\begin{align}
    \nabla_0 p^i&=-\dot \Gamma^i_{0\mu}S^{0\mu}-\partial_j\Gamma^i_{0\mu}S^{j\mu}-\Gamma^j_{\mu\alpha}\Gamma^\mu_{0\beta}S^{\alpha\beta},
\end{align}
which again agrees with Eq. \eqref{MPD_imp_adapted}.

\subsubsection{Comparison with stress-energy tensor from Lagrangian formulation}
Let us summarize our findings. We have proven that, in adapted coordinates, the stress-energy conservation equation \eqref{T_conservation} enforces the pole-dipole stress energy tensor in Ellis formulation to be given by the decomposition Eq. \eqref{gamma_constrained}, where $p^\mu$ and $S^{\mu\nu}$ satisfy the pole-dipole MPD equations, and can thus be identified to the linear momentum and the spin dipole of the body.

In last chapter, we also found an explicit form for the stress-energy tensor in terms of the momentum and the spin, given by Eq. \eqref{T_lagrangian}. It is however not obvious that the two formulation agree one with another. To prove this, one can integrate the Lagrangian stress-energy tensor Eq. \eqref{T_lagrangian} truncated at dipole order against a test function, still working in adapted coordinates:
\begin{align}
    \begin{split}
    &\int\dd[4]X\sqrt{-g}\int_\gamma\dd\lambda\,\qty[p^{(\mu}v^{\nu)}\delta_4(X,Z)-\nabla_\lambda\qty(S^{\lambda(\mu}v^{\nu)}\delta_4(X,Z))]\phi_{\mu\nu}\\
    &=\int_\gamma\dd\lambda\, p^{(\mu}v^{\nu)}\phi_{\mu\nu}+\int_\gamma\dd\lambda\,S^{\lambda(\mu}v^{\nu)}\nabla_\lambda\phi_{\mu\nu}\\
    &=\int_\gamma\dd\lambda\,\qty[p^{(\mu}v^{\nu)}-S^{\lambda(\mu}\Gamma^{\nu)}_{\lambda\rho}v^\rho-\partial_0\qty(S^{0(\mu}v^{\nu)})]\phi_{\mu\nu}+\int_\gamma\dd\lambda\, S^{i(\mu}v^{\nu)}\partial_i\phi_{\mu\nu}\\
    &=\int_\gamma\dd\lambda\,\gamma^{\mu\nu}\phi_{\mu\nu}-\int_\gamma\dd\lambda\,\gamma^{\mu\nu i}\partial_i\phi_{\mu\nu},
    \end{split}
\end{align}
where all the quantities are understood to be evaluated on the worldline when integration over spacetime has been dropped. The last line precisely agrees with Ellis decomposition in adapted coordinates given in Eq. \eqref{ellis_adapted}, which proves the equivalence of the two frameworks.

\subsubsection{Link with smooth stress-energy distribution}

A last nice feature of Ellis representation in adapted coordinates is that it allows to easily relate the distributional stress-energy tensor $T^{\mu\nu}_\text{skel}$ to the smooth, physical stress-energy tensor $T^{\mu\nu}$ it represents. Since we are interested in compact objects, we can always assume that the stress-energy tensor has compact support on the spatial slices defined in adapted coordinates.

We can then define a one parameter family of regular ``squeezed stress-energy tensors'', given by
\begin{align}
    T_\epsilon^{\mu\nu}(\lambda,\mathbf X)\triangleq\frac{1}{\epsilon^3}T^{\mu\nu}\qty(\lambda,\frac{\mathbf X}{\epsilon}).
\end{align}
For $\abs{\epsilon}\ll 1$, one has
\begin{align}
    T^{\mu\nu}_\epsilon(\lambda,\mathbf X)=\tilde\gamma^{\mu\nu}(\lambda)\delta^{(3)}\qty(\mathbf X)+\epsilon\tilde\gamma^{\mu\nu i}(\lambda)\partial_i\delta^{(3)}\qty(\mathbf X)+\mathcal{O}\qty(\epsilon^2),
\end{align}
with\begin{align}
    \tilde\gamma^{\mu\nu}(\lambda)=\int\dd[3]X\sqrt{-g}\, T^{\mu\nu}(\lambda,\mathbf X),\qquad \tilde\gamma^{\mu\nu i}(\lambda)=\int\dd[3]X\sqrt{-g}\, X^i T^{\mu\nu}(\lambda,\mathbf X)
\end{align}
The proof proceeds by a simple change of variables $W^i=X^i/\epsilon$ and a Taylor expansion around $\mathbf X=\mathbf 0$:
\begin{align}
    \begin{split}
    &\int\dd[4]X\sqrt{-g}\,T^{\mu\nu}_\epsilon(\lambda,\mathbf X)\phi_{\mu\nu}(\lambda,\mathbf X)\\
    &=\int_\gamma\dd\lambda\int\dd[3]W\sqrt{-g}\,T^{\mu\nu}(\lambda,\mathbf W)\phi_{\mu\nu}(\lambda,\epsilon\mathbf W)\\
    &=\int_\gamma\dd\lambda\int\dd[3]W\sqrt{-g}\,T^{\mu\nu}(\lambda,\mathbf W)\phi_{\mu\nu}(\lambda,\mathbf 0)\\
    &\quad+\epsilon\int_\gamma\dd\lambda\int\dd[3]W\sqrt{-g}\,T^{\mu\nu}(\lambda,\mathbf W)W^i\partial_i\phi_{\mu\nu}(\lambda,\mathbf 0)+\mathcal O (\epsilon^2),
    \end{split}
\end{align}
which naturally leads to the result announced. This result is easily extended to higher orders. The moments $\tilde\gamma$ have now the real physical interpretation of a mono-pole, a dipole, etc. 

Notice that if we consider stress-energy distributions representing compact objects, there is no need to introduce an expansion parameter, since the moment $\tilde\gamma^{\mu\nu i_1\ldots i_k}$ will naturally scales as $\mu\qty(\ell/r)^k$, with $\ell/r\ll 1$, as discussed in the previous section. We can thus formally set $\epsilon=1$ in the development above, the consistency of the truncation of the Taylor series being granted by the existence of this new small parameter. In this case, the squeezed stress-energy tensor is equal to the physical one, and 
\begin{align}
    \tilde\gamma^{\mu\nu i_1\ldots i_k}=\gamma_{(N)}^{\mu\nu i_1\ldots i_k}.
\end{align}
For compact objects, there is thus a natural relationship between the distributional moments and the smooth, physical stress-energy tensor.

More concretely, we can provide expressions for the ``physical'' moments using Eq. \eqref{inverted_moments}. For the spin tensor, one has
\begin{align}
    \begin{split}
    S^{0i}&=\gamma^{00i}\\
    &=\int\dd[3]X\sqrt{-g}\, X^iT^{00}\\
    S^{ij}&=2\gamma^{0ij}\\
    &=2\int\dd[3]X\sqrt{-g} X^{i}T^{j0}\\
    &=\int\dd[3]X\sqrt{-g}\qty(X^i T^{j0}-X^j T^{i0}).
    \end{split}
\end{align}
Gathering these two results, we find back the formula mentioned in the introduction,
\boxedeqn{
    S^{\mu\nu}=\int_{X^0=\text{cst}}\dd[3]X\sqrt{-g}\qty(\delta X^\mu T^{\nu0}-\delta X^\nu T^{\mu 0}),\label{multipole_spin}
}{Spin tensor from the stress-energy tensor}
since, in adapted coordinates, $\delta X^0=0$ and $\delta X^i=X^i$. The same game can be played for the momentum $p^\mu$.
We obtain
\boxedeqn{
    p^\mu&=\int\dd[3]X\sqrt{-g}\, T^{\mu 0}-\Gamma^\mu_{0\nu}S^{0\nu}.\label{multipole_momentum}
}{Linear momentum from the stress-energy tensor}
In flat spacetime, this equation reduces to the standard result
\begin{align}
    p^\mu=\int_{X^0=\text{cst}}\dd[3]X\sqrt{-g}T^{\mu 0}.
\end{align}

We end here this discussion. The main point to highlight is that Ellis skeletonization in adapted coordinates has allowed to provide a more concrete viewpoint on physical significance of the linear momentum and the spin tensor, which can now be understood as ``physical'' multipole moments of the smooth stress-energy tensor.

\chapter{Spin supplementary conditions}\label{chap:SSC}


Let us look back at what has been achieved so far. In Chapter \ref{chap:EOM}, we have derived the explicit form of the equations of motion for extended test bodies through the Lagrangian formulation, whereas Chapter \ref{chap:skeleton} has allowed us to gain more insights about the physical meaning of the dynamical variables $p^\mu$, $S^{\mu\nu}$\ldots We have thus now in our possession \textit{ten} differential equations for describing the motion of test bodies in curved spacetime, namely the MPD equations \eqref{MPD_spinning} and \eqref{MPD_impulsion}. However, the motion of a test body is described by \textit{fourteen} dynamical quantities $v^\mu$, $p^\mu$ and $S^{\mu\nu}=S^{[\mu\nu]}$. The system of equations is consequently not closed, and we are left with four extra dynamical quantities. One extra condition can be enforced by setting the time parameter to the proper time of the worldline, $\lambda=\tau$, yielding $v_\mu v^\mu=-1$. Nevertheless, we are still left with three missing constraints.

In both description studied so far, these missing constraints carry a clear physical interpretation:
\begin{itemize}
    \item From the Lagrangian viewpoint, they arise from the fact that the orientation of the body is entirely described by the means of the rotational degrees of freedom of the Lorentz matrix appearing in Eq. \eqref{tfo_tetrads}. As we shall see in Section \ref{sec:ssc:lagrange}, the boost degrees of freedom turn out to be redundant with the linear momentum $p^\mu$, which is expected since this latter quantity already describes the evolution of the position of the ``center'' of the body along the worldline. Therefore, allowing for degrees of freedom boosting the center position does not add any new physical degree of freedom to the system. These boost degrees of freedom are gauge degrees of freedom, and they can be fixed without affecting the physical state of the system.
    \item From the skeletonization perspective, we have not yet specified along \textit{which} worldline (belonging to the body worldtube) the multipole moments were defined. This fact accounts for the degrees of freedom that remain to be specified. A natural choice is to set the worldline to describe the evolution of the position of the body's center of mass (COM) at each instant of the time evolution. However, as we will review in Section \ref{sec:ssc:sr}, the relativistic notion of center of mass is observer-dependent, and we shall therefore specify with respect to \textit{which} observer our choice has been made.
\end{itemize}

In both cases, we have three unnecessary degrees of freedom, which can be fixed by enforcing an algebraic relation of the form
\begin{align}
    \mathcal V_\mu S^{\mu\nu}=0,\label{covariant_ssc}
\end{align}
where $\mathcal V^\mu$ is some normalized timelike vector. Such a constraint is known as a covariant \defining{spin supplementary condition} (SSC). This chapter will be devoted to the study of such constraints, and to the understanding of their emergence from the physical prescriptions discussed above. Remark that Eq. \eqref{covariant_ssc} actually contains only three independent constraints, since $\mathcal V_\mu\mathcal V_\nu S^{\mu\nu}=0$ follows automatically from the antisymmetry of the spin tensor.

This chapter is structured as follows: Section \ref{sec:ssc:sr} will review the problem of center of mass in Special Relativity. Section \ref{sec:ssc:skeleton} and \ref{sec:ssc:lagrange} will respectively discuss spin supplementary conditions from the skeleton and the Lagrangian perspectives. In Section \ref{sec:ssc:ortho}, we will relate the presence of SSCs to an orthogonal split of the spin tensor with respect to the timelike direction $\mathcal V^\mu$. In particular, this will enable us to show that, when a SSC has been enforced, one can always express the content of the spin tensor in terms of a single spin vector. Section \ref{sec:ssc:various} will discuss the main SSCs that have been studied in the literature, that is the various choices of $\mathcal V^\mu$ that can be made.

\section{Invitation: center of mass in special relativity}\label{sec:ssc:sr}

\begin{figure}
    \centering
    \includegraphics[width=\textwidth]{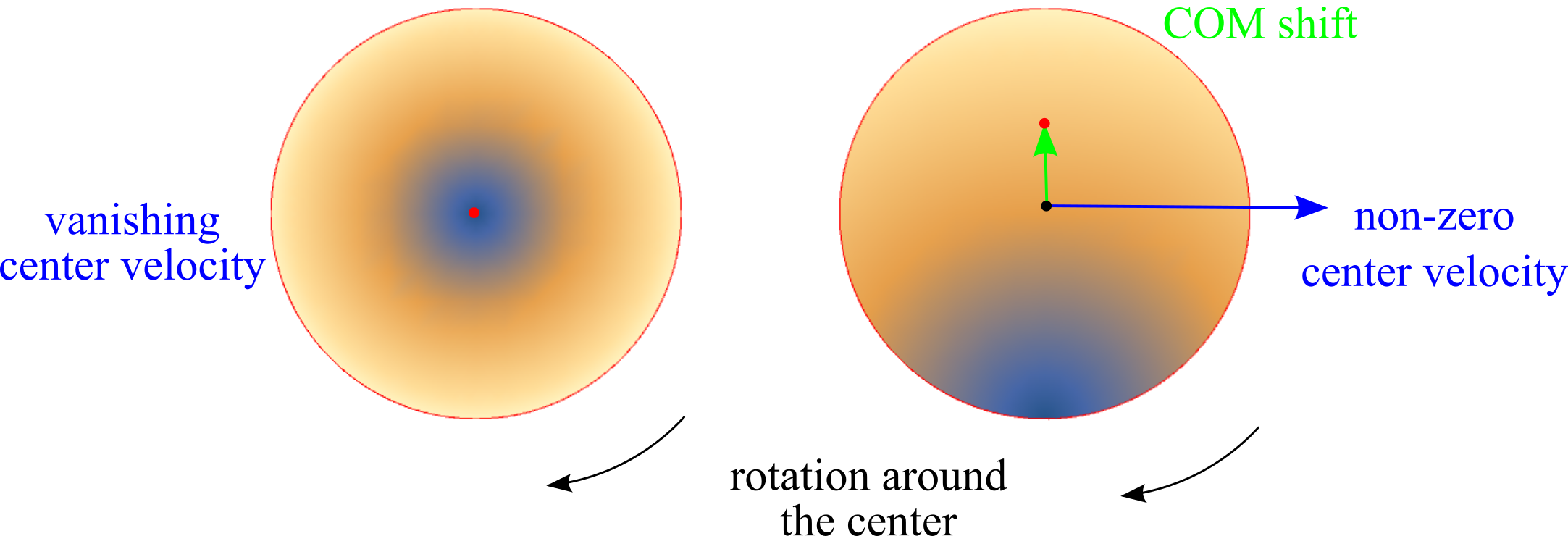}
    \caption{The same special relativist's homogeneous body spinning around its geometrical center seen by two different observers: on the left side, the observer is at rest with respect to the body's rotation axis. The (relativist) mass density of the body is then symmetric with respect to the axis, leading to a center of mass which coincides with its geometrical center. On the right side, the same body seen by a boosted observer (blue arrow). With respect to it, the velocities of the matter lumps of the upper hemisphere are higher than the ones of the lower hemisphere. The relativist mass of the upper hemisphere is thus greater than the one of the lower, leading to a shift of the center of mass (green arrow). The background color depicts the velocity distribution inside the body.}
    \label{fig:center_SR}
\end{figure}

The notion of center of mass becomes an observer dependent notion in relativity. This will be crucial for understanding the status of SSCs from the perspective of gravitational skeletonization (fixation of the worldline). We will illustrate this fact in the simpler framework of Special Relativity (SR). Consider a spinning body in SR described by a center position $x^\mu$ and an intrinsic (spin) angular momentum $S^{\mu\nu}$. Its linear momentum is denoted $p^\mu$. The total angular momentum of the system is \cite{Hanson:1974qy,Steinhoff:2011sya}
\begin{align}
    J^{\mu\nu}=x^\mu p^\nu- x^\nu p^\mu+S^{\mu\nu}.
\end{align}

Because of the Poincaré invariance of the system, $J^{\mu\nu}$ and $p^\mu$ should be conserved with respect to Poincaré transformations. However, the choice of the center $x^i$ is \textit{a priori} arbitrary. Therefore, for a change of center position
\begin{align}
    x^i\to x^i+\delta x^i,
\end{align}
one shall have the transformation rules
\begin{align}
    S^{ij}\to S^{ij}+\delta x^ip^j-\delta x^j p^i,\qquad S^{0i}\to S^{0i}+\delta x^i E\label{tfo_SR}
\end{align}
with $E\triangleq p^0$ the energy of the body. From the transformation rule Eq. \eqref{tfo_SR}, we can make two observations: (i) the components $S^{0i}$ of the spin tensor take the interpretation of the mass dipole of the object, relative to the center $x^i$. (ii) Unlike in classical mechanics, the \defining{center of mass} (that is, the choice of center for which the mass dipole vanishes) becomes an observer dependent notion, as graphically depicted in Figure \ref{fig:center_SR}. However, because of the transformation rule Eq. \eqref{tfo_SR}, one can always choose the center $x^i$ such that the mass dipole vanishes for \textit{one} specific timelike observer of velocity $\mathcal V^\mu$. These observer-dependent COMs are sometimes referred to as \defining{centroids} in the literature. In its proper frame, the four-velocity of an observer becomes
\begin{align}
    \mathcal V^{\hat{\mu}}=\qty(1,\mathbf 0).
\end{align} Therefore, the condition $S^{\hat 0\hat i}=0$ can be covariantly written 
\begin{align}
    \mathcal V_\mu S^{\mu\nu}=0,
\end{align}
which correspond to the generic canonical form of the generic covariant SSC Eq. \eqref{covariant_ssc}.

\section{SSCs in the skeleton formulation: fixation of the worldline}\label{sec:ssc:skeleton}

Having gained some intuition from SR, let us now turn back to our original problem. In the skeleton formulation of test bodies, we want to fix the representative worldline of the body. The most physically meaningful way of proceeding is to choose the worldline tight to a given timelike centroid. As we will see, this still corresponds to choose $S^{\hat 0 \hat \mu}$ to be vanishing in the proper frame of the timelike observer of four-velocity $\mathcal V^\mu$ characterizing the centroid. Actually, from the multipole decomposition of the spin tensor Eq. \eqref{multipole_spin}, we can write
\begin{align}
    \begin{split}
    S^{0\mu}&=-\int_{x^0=\text{cst}}\dd[3]x\sqrt{-g}\,\qty(x^\mu-x^\mu_\text{c}) T^{00}\\
    &=x^\mu_\text{c}\int_{x^0=\text{cst}}\dd[3]x\sqrt{-g}\, T^{00}-\int_{x^0=\text{cst}}\dd[3]x\sqrt{-g}\, x^\mu T^{00}\\
    &=x^\mu_\text{c}\qty(p^0+\Gamma^0_{0\nu}S^{0\nu})-\int_{x^0=\text{cst}}\dd[3]x\sqrt{-g}\,x^\mu T^{00},
    \end{split}
\end{align}
where the last equality follows from Eq. \eqref{multipole_momentum}. Now, asking that $S^{\hat 0\hat \mu}=0$ in the proper frame of a given observer amounts to set
\begin{align}
    x^{\hat \mu}_\text{c}=\frac{1}{p^{\hat 0}}\int_{x^{\hat 0}=\text{cst}}\dd[3]\hat x\sqrt{-g}\, x^{\hat \mu}T^{\hat 0 \hat 0}
\end{align}
in that frame, \textit{i.e.} to set the worldline $x_\text{c}^{\hat \mu}$ to be the center of mass of the body with respect to that observer. For the very same reason as in SR, the condition $S^{\hat 0\hat \mu}=0$ can be written covariantly under the form Eq. \eqref{covariant_ssc}.

We have therefore reached the following conclusion: \textit{enforcing a given covariant SSC Eq. \eqref{covariant_ssc} amounts to set the representative worldline of the body to be its center of mass with respect to a timelike observer of four-velocity $\mathcal V^\mu$.}

\section{SSCs in the Lagrangian formulation: spin gauge symmetry}\label{sec:ssc:lagrange}
We now turn to the analysis of spin supplementary conditions from the perspective of Lagrangian formulation. This discussion will be of prime relevance for turning to the Hamiltonian formulation, as will be undertaken in Part \ref{part:hamilton} of this thesis. As already discussed in Chapter \ref{chap:hamilton_kerr} for the geodesic case, turning to the Hamiltonian description first requires to check whether all the momenta are independent or not, that is if the system is subjected to constraints. The mass-shell constraint will be discussed in Chapter \ref{chap:symplectic}, we will here focus on the spin degrees of freedom contained in the Lorentz matrix\footnote{From now on in this thesis, we drop the underlined indices for the sake of lisibility of the present discussion.} $\Lambda\tud{A}{B}$. One can show that the degrees of freedom implemented in the Lorentz matrix are not all independent from the linear momentum $p^\mu$. Since we have the freedom to perform local Lorentz transformations without affecting the physical state of the system, let us consider the specific transformation
\begin{align}
    L\tud{A}{B}\triangleq\delta^A_B-2\hat p^A\Lambda_{0B}-\frac{w^A w_B}{\hat p_C w^C},
\end{align}
where $\hat p^A\triangleq p^A/\mu$ and $w^A\triangleq\hat p^A+\Lambda\tud{A}{0}$. Under this transformation, the $\Lambda\tud{A}{0}$ components of the matrix transform as
\begin{align}
    \Lambda\tud{A}{0}&\to L\tud{A}{B}\Lambda\tud{B}{0}=\Lambda\tud{A}{0}+2\hat p^A-w^A=\hat p^A.
\end{align}
The other components transform as
\begin{align}
    \Lambda\tud{A}{i}\to\Lambda\tud{A}{i}-\Lambda\tud{B}{i}\frac{p_B w^A}{p_Cw^C}.
\end{align}
Therefore, using the freedom of performing local Lorentz transformations, we were able to bring the $\Lambda\tud{A}{0}$ components of the Lorentz matrix to $\hat p^A=\underline e\tud{A}{\mu}\hat p^\mu$. They are consequently redundant degrees of freedom and yield to the presence of constraints when turning to Hamiltonian formulation. As we will discuss in more details after having established the symplectic structure on phase space for spinning test bodies (see Chapter \ref{chap:symplectic}), the corresponding constraints will read
\begin{align}
    \phi^\mu\triangleq w_\alpha S^{\alpha\mu}=\qty(\hat p_\alpha+\Lambda_{0\alpha })S^{\alpha\mu}\approx 0.
\end{align}
This is very similar to the spin supplementary condition Eq. \eqref{covariant_ssc}. Different SSCs will correspond to different choices of $\Lambda_{0\alpha}$. Moreover, these constraints turn out to be first class, and the redundant degrees of freedom that we want to get rid of by enforcing a specific SSC are genuine gauge degrees of freedom of the system \cite{Steinhoff:2014kwa}. Working under a specific SSC will correspond to work within a specific choice of gauge fixation, or, from the Hamiltonian perspective, on a given constraint surface lying in the phase space. As already mentioned, more details will be provided in Chapter \ref{chap:symplectic}.

\section{SSCs and orthogonal decomposition of the spin tensor}
\label{sec:ssc:ortho}

\subsubsection{Orthogonal decomposition}
One can further explore the structure obtained when a covariant SSC has been enforced as follows. Let $\mathcal V^\mu$ be a timelike unit vector. In full generality, we can perform an orthogonal decomposition of the spin tensor with respect to that direction, and write \cite{Witzany:2018ahb,Ramond_2021}
\begin{align}
    S^{\mu\nu}&=-\epsilon^{\mu\nu\rho\sigma}\mathcal V_\rho S_\sigma+2 D^{[\mu}\mathcal V^{\nu]}.\label{decomposition_spin}
\end{align}
This relation inverts as
\begin{align}
    S^\mu=\frac{1}{2}\epsilon^{\mu\nu\rho\sigma}\mathcal V_\nu S_{\rho\sigma},\qquad D^\mu=\mathcal V_\alpha S^{\alpha\mu}.\label{inverse_rel}
\end{align}
Notice that $\mathcal V^\mu$ is orthogonal to both $S^\mu$ and $D^\mu$.

Now, identifying the vector $\mathcal V^\mu$ with the one of Eq. \eqref{covariant_ssc} shows that enforcing a covariant spin supplementary condition amounts to set
\begin{align}
    D^\mu=0
\end{align}
in the decomposition Eq. \eqref{decomposition_spin}. Therefore, \textit{when a covariant spin supplementary condition is enforced, the spin tensor is solely described in terms of a spin vector $S^\mu$.} Moreover, the vector $D^\mu$ takes the natural interpretation of the mass dipole of the system. One can pass from one description to another thanks to
\begin{align}
    S^\mu=\frac{1}{2}\epsilon^{\mu\nu\rho\sigma}\mathcal V_\nu S_{\rho\sigma}\qquad\Leftrightarrow\qquad S^{\mu\nu}=-\epsilon^{\mu\nu\rho\sigma}\mathcal V_\rho S_\sigma.
\end{align}

\subsubsection{SSC in adapted tetrad frame}
Any covariant spin supplementary condition takes a very simple form when the spin tensor components are expressed in a well-chosen background tetrad. Consider a background tetrad $\underline e\tdu{A}{\mu}$ whose timelike leg is given by
\begin{align}
    \underline e\tdu{0}{\mu}=\mathcal V^\mu.
\end{align}
This does not fix the tetrad uniquely, since its spatial legs can still be defined up to an arbitrary $\mathsf{SO}(3)$ transformation. In this tetrad, the components of the spin tensor are defined as $S^{AB}=\underline e\tud{A}{\mu}\underline e\tud{B}{\nu}S^{\mu\nu}$. A direct consequence of the SSC Eq. \eqref{covariant_ssc} is that
\begin{align}
    S^{0A}=\mathcal V_\mu \underline e\tud{A}{\nu}S^{\mu\nu}=0.
\end{align}
Therefore, \textit{when a covariant SSC is enforced, the only non-vanishing ``adapted tetrad'' components of the spin tensor are the purely spatial ones, $S^{IJ}$ (with $I,J=1,2,3$).}

The tetrad components of the vectors $D^\mu$ and $S^\mu$ read
\begin{subequations}
\begin{align}
    D^A&=\qty(0,\mathbf D),&&\mathbf D\triangleq\qty(S^{10},S^{20},S^{30}),\\
    S^A&=\qty(0,\mathbf S),&&\mathbf S\triangleq\qty(S^{23},S^{31},S^{12}).
\end{align}
\end{subequations}
Therefore, \textit{enforcing a covariant SSC amounts to set $\mathbf D=\mathbf 0$ in this formulation.}  Notice that the spacetime norms of the vectors $S^\mu$ and $D^\mu$ coincide with the (euclidean) norms of the three-vectors $\mathbf S$ and $\mathbf D$,
\begin{align}
    S^\mu S^\nu g_{\mu\nu}= S^I S^J\delta_{IJ}\triangleq\norm{\mathbf D}^2,\qquad S^\mu S^\nu g_{\mu\nu}= 
    D^I D^J\delta_{IJ}\triangleq\norm{\mathbf D}^2.
\end{align}

\subsubsection{The two Casimir invariants}
As will be found when studying Hamiltonian formulation of test bodies, there exist two Casimir invariants
\begin{align}
    \mathcal S^2\triangleq\frac{1}{2}S_{\mu\nu}S^{\mu\nu},\qquad \mathcal S_*^2\triangleq\frac{1}{8}\epsilon_{\mu\nu\rho\sigma}S^{\mu\nu}S^{\rho\sigma},
\end{align}
whose Poisson brackets with any other dynamical variable vanish.
Notice that $\mathcal S^2$ is precisely the spin magnitude introduced before.
Their explicit expressions become particularly enlightening when expressed in terms of the tetrad components of the vectors $S^A$ and $D^A$. Straightforward algebra allows to show that
\begin{subequations}
\begin{align}
    \mathcal S^2&= S_{0A}S^{0A}+\frac{1}{2}S_{IJ}S^{IJ}=\norm{\mathbf S}^2-\norm{\mathbf D}^2,\\
    \mathcal S_*^2&=\mathbf S\cdot\mathbf D.
\end{align}
\end{subequations}
Here, the dot stands for the usual euclidean scalar product. When a covariant SSC is enforced, the Casimir invariants reduce to
\begin{align}
    \mathcal S^2=\norm{\mathbf S}^2=S_\mu S^\mu,\qquad \mathcal S^2_*=0.
\end{align}
Therefore, $\mathcal S^2$ takes the simple interpretation of being the norm of the spin vector, whereas $\mathcal S_*^2$ is set to zero when a SSC is enforced. 

\section{Review of the various SSCs}
\label{sec:ssc:various}

In this final section, we will review the main covariant SSCs studied in the literature, \textit{i.e.} the possible choices of vector $\mathcal V^\mu$ in Eq. \eqref{covariant_ssc}. It is interesting to wonder how many independent degrees of freedom we are left with. This is a priori not clear, since the vector $\mathcal V^\mu$ is unspecified and can include additional degrees of freedom. This problem should be discussed separately for each specific spin supplementary condition. A brief review of each SSC is provided below, more details may be found in the references mentioned in the text or in the more recent papers \cite{Kyrian:2007zz,Witzany:2018ahb}. A pictorial overview is provided in Table \ref{tab:sscs}.

\subsubsection{Corinaldesi-Papapetrou (CP) and Newton-Wigner (NW)}
These two SSCs can be written together as
\boxedeqn{
    \qty(\xi_\mu+\alpha\hat p_\mu)S^{\mu\nu}=0,
}{Corinaldesi-Papapetrou/Newton-Wigner spin supplementary conditions}
where $\alpha=0$ for the CP condition \cite{1951RSPSA.209..248P} and $\alpha=1$ for the NW one \cite{RevModPhys.21.400}. Here, $\xi^\mu$ is an external timelike vector field that shall be specified. This kind of SSC amounts to choose some ``laboratory frame'' characterized by $\xi^\mu$ for expressing the multipole moments of the objects. We are left with two independent degrees of freedom in the spin tensor. The main advantage of this SSC is that it allows to formulate the motion under the form of a coordinate-time Hamiltonian system whose spin vector satisfies the canonical $\mathsf{SO}(3)$ algebra \cite{Vines:2016unv,Hanson:1974qy,Barausse:2009aa}. Its main drawback -- which will make it useless for our purposes -- is that it is not compatible with a covariant Hamiltonian formulation, since it is not preserved by the covariant Poisson bracket structure that we will derive in the next part of the thesis. 

\subsubsection{Mathisson-Pirani (MP)}
The Mathisson-Pirani \cite{mathisson_1940,Pirani:1956tn} SSC reads
\boxedeqn{
    v_\mu S^{\mu\nu}=0.
}{Mathisson-Pirani spin supplementary condition}
Unlike the CP/NW conditions, it does not not depends upon exterior structures. It corresponds to compute the moments in a frame $\mathcal V^\mu\propto v^\mu$ with respect to which the COM of the body is at rest. However, the spin tensor contains now four independent degrees of freedom, since the MP condition can be shown to hold provided that $S^{\mu\nu}$ is degenerate in \textit{some} timelike direction. Moreover, using this SSC, many initial data can give rise to the same physical situation, see \cite{Costa:2017kdr} for a detailed discussion.

\subsubsection{Kyrian-Semerák (KS)}
The KS condition provides a SSC for which the linear momentum is proportional to the four-velocity, $p^\mu=\mu v^\mu$. It reads
\boxedeqn{
    w_\mu S^{\mu\nu}=0
}{Kyrian-Semerák spin supplementary condition}
where $w^\mu$ is an \textit{arbitrary} unit timelike vector which is parallel-transported along the worldline, $\frac{\text D w^\mu}{\dd\lambda}=0$. Physically, it corresponds to first choose an arbitrary frame for computing the moments, and then parallel transport this frame along the worldline. The spin tensor is left with four independent degrees of freedom. Arbitrariness of the frame can be a problem, see \cite{Kyrian:2007zz,Vines:2016unv,Costa:2017kdr,Levi:2015msa} for details.

\subsubsection{Tulczyjew-Dixon (TD)}
The TD condition reads \cite{Tulczyjew:1959aa,dixon1973}
\boxedeqn{
    p_\mu S^{\mu\nu}=0.
}{Tulczyjew-Dixon spin supplementary condition}
Using this SSC, the components of the spin tensor and the linear momentum are not anymore all linearly independent, thus adding less dimensions to the spin sector of the phase space. This is the spin supplementary condition that we will use in the continuation of this thesis, since it is the one that provide the greatest number of advantages for our purposes, see Table \ref{tab:sscs}. The mains benefits of using TD SSC were nicely summarized by P. Ramond \cite{Ramond:2022vhj}. We briefly summarize the main ones here:
\begin{enumerate}
    \item TD SSC is compatible with covariant Hamiltonian formulation (like the MP and KS ones, but unlike the CP/NW one), which will be of prime importance for studying the associated Hamilton-Jacobi equation in Chapter \ref{chap:HJ};
    \item It does not require any background or external structure in its formulation, but only depends upon the intrinsic properties of the body;
    \item It reduces the number of degrees of freedom by the largest amount. We are left with only two DOFs for parametrizing the spin tensor, as for the CP/NW SSCs;
    \item TD condition is the only SSC that was proven to allow to define a unique center of mass worldline for the test body, see \cite{Harte_2015,Schattner:1979vp,Schattner:1979vn};
    \item It provides us with a dynamical mass $\mu$ which is conserved at pole-dipole order;
    \item Finally, it is the spin supplementary condition which naturally arises when deriving MPD equations from the perspective of Harte's generalized Killing vectors. This originates from the uniqueness of the worldline that the TD SSC defines.
\end{enumerate}

We end our short guided tour of the SSCs zoo here. We shall consider again SSCs from the Hamiltonian perspective in the last part of the thesis. Chapter \ref{chap:building} will derive more identities and properties that hold when the TD SSC is obeyed.

\begin{landscape}
\begin{table}
    \centering
    \renewcommand{\arraystretch}{1.2}
    \begin{tabular}{|c|c|c|c|c|}\hline
      \multirow{2}{*}{\textsc{Name}} & \textsc{Corinaldesi-Papapetrou/} & \textsc{Mathisson-}& \multirow{2}{*}{\textsc{Kyrian-Semerak}} & \textsc{Tulczyjew-} \\
       & \textsc{Newton-Wigner} & \textsc{Pirani} & & \textsc{Dixon} \\\hline
      \textsc{Abbreviation}& \textsc{CP/NW}& \textsc{MP} & \textsc{KS} & \textsc{TD}\\\hline
      \multirow{2}{*}{\textsc{Frame} $\mathcal V^\mu$}& 
        $\xi^\mu+\alpha\hat p^\mu$
      &  \multirow{2}{*}{$v^\mu$} & $w^\mu$ timelike & \multirow{2}{*}{$p^\mu$}\\
       & ($\alpha=0$ CP, $\alpha=1$ NW) & & s.t.  $\text D w^\mu/\dd\tau=0$ & \\\hline
      \textsc{Independent components} & \multirow{2}{*}{$2$} & \multirow{2}{*}{$4$} & \multirow{2}{*}{$4$} & \multirow{2}{*}{$2$} \\
      \textsc{of the spin tensor} &  & & & \\\hline
      \textsc{Requires additional} & \multirow{2}{*}{yes} & \multirow{2}{*}{no} & \multirow{2}{*}{ no} & \multirow{2}{*}{no} \\
      \textsc{background structure?} & & & & \\\hline
      \textsc{Compatible with covariant} & \multirow{2}{*}{no} & \multirow{2}{*}{yes} & \multirow{2}{*}{yes} & \multirow{2}{*}{yes} \\
      \textsc{Hamiltonian formulation?} & & & & \\\hline
    \end{tabular}
    \caption{The main spin supplementary conditions used in the literature.}
    \label{tab:sscs}
\end{table}
\end{landscape}

\chapter[Beyond the Pole-Dipole approximation: spin-induced multipoles]{Beyond the Pole-Dipole approximation:\\spin-induced multipoles}\label{chap:quadrupole}
\chaptermark{\textsc{Spin-induced multipoles}}

\vspace{\stretch{1}}

We have now in our possession a closed system of equations for describing the motion of extended test bodies in generic curved spacetime, up to quadrupole order included. However, as already mentioned in Chapter \ref{chap:EOM}, unlike the linear momentum and the spin dipole who both play the role of dynamical variables, the quadrupole moment $J^{\mu\nu\rho\sigma}$ should be prescribed. The choice of this prescription will encode the various physical effects leading to the appearance of a quadrupole moment, and depends on the internal structure of the body. Because our main concern is the description of a binary black hole system, we will only consider here the so-called \defining{spin-induced quadrupole}, that is, the quadrupole moment induced by the rotation of the test body. Other type of effects, such as tidal deformations \cite{Porto:2008tb} will be discarded here, since Kerr black holes can be shown to exhibit zero tidal deformability, see \textit{e.g.} \cite{LeTiec:2020bos}. This choice is also coherent with our underlying idea of understanding the motion of spinning test bodies as a perturbative expansion in $\mathcal S$ built upon the geodesic motion, since the presence of tidal-type quadrupole terms would disable us to recover the geodesic equations by taking the $\mathcal S\to 0$ limit of the MPD equations.

As it has been developed and applied in a number of works using various techniques (see e.g.\ \cite{Porto:2005ac,Porto:2008jj,Steinhoff:2011sya,Steinhoff:2014kwa,Marsat_2015,Bohe:2015ana,Levi:2015msa,Bini:2015zya,Vines:2016unv}), the form of $J^{\mu\nu\rho\sigma}$ appropriate to describe a spin-induced quadrupole moment is given (in a reparametrization invariant form) by
\boxedeqn{
J^{\mu\nu\rho\sigma}=\kappa\frac{3 p\cdot v}{(p^2)^2}p^{[\mu}S^{\nu]\lambda}S^{[\rho}{}_{\lambda}p^{\sigma]}+\os{4}.\label{quad_intro}
}{Spin-induced quadrupole moment}
At $\os{2}$, it is therefore unique up to a response coefficient $\kappa$ which controls the magnitude of the quadrupolar deformation, proportional to the square of the spin.  Typical values for $\kappa$ for a neutron star are in the range 4 to 8 \cite{Laarakkers:1997hb}, while for a Kerr black hole $\kappa$ is exactly equal to one, $\kappa_\text{BH}=1$. A broader introduction on the history of quadrupole and higher moments as well as their relations with PN expansion can be found in \cite{Marsat_2015}.

Actually, as implicitly stated in Eq. \eqref{quad_intro}, this quadrupole moment is not exact, but admits corrections that scale as $\mathcal O\qty(\mathcal S^4)$, which are consequently discarded in the present text. As will be shown in the discussion, two of our main perspectives on the approximation scheme used to truncate the multipole expansion (either considering it as a truncation in the number of derivatives of the Riemann tensor allowed in the Lagrangian or as a truncation of the expansion in the spin magnitude) will reveal to be consistent one to another when solely spin-induced multipole moments are taken into account.

In this short chapter, we aim to provide a pedagogical, self-contained derivation of the spin-induced quadrupole, expanding the procedure described in \cite{Marsat_2015}. We will always assume that Tulczyjew-Dixon spin supplementary condition holds, as well as consider the background metric to be Ricci-flat. Notice that choosing the evolution parameter to be the proper time of the body, Eq. \eqref{quad_intro} reduces to the simpler expression
\begin{align}
    J^{\mu\nu\rho\sigma}= -\frac{3\kappa}{\mu} v^{[\mu}\Theta^{\nu][\rho}v^{\sigma]}+\os{4},\quad\text{where }\quad\Theta^{\alpha\beta}\triangleq S^{\alpha\lambda}S\tud{\beta}{\lambda}.\label{spin_induced_Q7}
\end{align}

\section{Spin-induced multipoles from dimensional analysis}

We aim to derive the most generic form of the spin-induced quadrupole term appearing in the MPD equations \eqref{MPD_spinning} and \eqref{MPD_impulsion}. Recall that the quadrupole tensor is defined in the Lagrangian formulation as 
\begin{align}
    J^{\alpha\beta\gamma\delta}\triangleq -6\pdv{L}{R_{\alpha\beta\gamma\delta}}.
\end{align}
This can be straightforwardly extended to the $2^{l+2}$-pole moment, which will be defined as
\begin{align}
    J^{\alpha\beta\gamma\delta\mu_1\ldots\mu_l}\triangleq C_l \pdv{L}{\qty(\nabla\mu_{1}\ldots\nabla_{\mu_l}R_{\alpha\beta\gamma\delta})},
\end{align}
where $C_l$ is an arbitrary normalization constant (for the quadrupole, $C_0=-6$).

Without loss of generality, we can assume that it originates from a term present in the Lagrangian \eqref{L_quad} taking the form
\begin{align}
    L\ni\frac{1}{C_l}\nabla_{\mu_1}\ldots\nabla_{\mu_l}R_{\alpha\beta\gamma\delta}J^{\alpha\beta\gamma\delta\mu_1\ldots\mu_l}.\label{quadrupole_Lagragian}
\end{align}
Notice that from its very definition, it is clear that the tensor $J^{\alpha\beta\gamma\delta\mu_1\ldots\mu_l}$ should possess the same algebraic symmetries than the Riemann tensor in its four first indices $\alpha\beta\gamma\delta$. 

Thanks to dimensional analysis, we will constrain the form of such multipole tensors. We will always consider \defining{spin-induced multipoles}, in the sense that they can only depend on $p^\mu$, $v^\mu$, $S^{\mu\nu}$, $g_{\mu\nu}$ and $R_{\alpha\beta\gamma\delta}$ and that they should be vanishing when the spin vanishes itself. Under the Tulczyjew-Dixon spin supplementary condition, the relation between four-velocity and linear momentum allows to replace the dependence in the latter by a dependence in the mass $\mu$.

We will subsequently show that the form of this spin-induced quadrupole tensor is necessarily given by Eq. \eqref{spin_induced_Q7}, that is, it is unique at $\os{2}$ up to an overall coefficient depending on the body's nature.

\subsubsection{Dimensional analysis}
To further constrain the form of the spin-induced quadrupole, we will use dimensional analysis. Restoring the dimensional character of the gravitational constant but still setting the speed of light to one ($G\neq c=1$), units of length are still equal to units of time, but not to units of energy (which are the same than units of mass), \textit{i.e.} $L=T\neq E=M$. The dimensions of the relevant quantities for the following are given by \cite{Gratus:2020cok}
\begin{align}
    &[g_{\mu\nu}]=[v^\mu]=1,\qquad [\mu]=[p^\mu]=[L]=M,\qquad [\nabla_\mu]=L^{-1},\nonumber\\
    &[R\tud{\mu}{\nu\rho\sigma}]=L^{-2},\qquad[S^{\mu\nu}]=ML,\qquad [J^{\alpha\beta\gamma\delta\mu_1\ldots\mu_l}]=ML^{l+2}.\label{dimensions}
\end{align}
As stated above, spin-induced multipole moments should only depend on the mass, the four-velocity, the metric and the Riemann tensor. In this text, we will by assumption exclude any non-trivial dependence in the Levi-Civita tensor, since this would break the parity \cite{Marsat_2015}. Schematically (\textit{i.e.} without writing explicitly the various contractions between indices), we can assume\footnote{Actually, this only amounts to assume that the corresponding term of the Lagrangian is smooth. It consequently admits an expansion in an integer power series, which is nothing but a polynomial containing infinitely many terms. The only important point is that these terms should take the form given in Eq. \eqref{generic_term}.} that Eq. \eqref{quadrupole_Lagragian} will take the form of a polynomial whose terms look like
\begin{align}
    \mu ^{N_1}\qty[\qty(\nabla_\lambda)^{N_2} \qty(R\tud{\mu}{\nu\rho\sigma})^{N_3}]  \qty(S^{\alpha\beta})^{N_4}\qty(v^\kappa)^{N_5},\label{generic_term}
\end{align}
with $N_1\in\mathbb Z,~N_2,N_3,N_4,N_5\in\mathbb N$ and $N_4>0$. These choices for the coefficients $N_i$ can be understood as follows:
\begin{itemize}
    \item We allow for negative powers of the mass $\mu$ to appear, in order to control the dimension of the expression;
    \item Since we are considering spin-induced multipoles, they shall shall consistently tend to zero when $\mathcal S\to 0$, yielding $N_4>0$.
\end{itemize}
Moreover, the metric does not appear in Eq. \eqref{generic_term} since it is viewed as a dimensionless quantity that is only used to lower and raise indices.
Performing a dimensional analysis of the expression Eq. \eqref{generic_term} with the dimensions given in Eq. \eqref{dimensions}, it is easy to show that the following relations between the power coefficients hold:
\begin{align}
    N_1=\frac{N_4-N_2}{2},\qquad N_3=1-N_4,\qquad N_2,N_4\text{ and }N_5\text{ arbitrary}.
\end{align}
The parameter $N_2$ controls the number of covariant derivatives of the Riemann tensor appearing in the expression, that is, the order of the approximation from the Lagrangian perspective. The number of copies of the four-velocity appearing in the expression is always left unconstrained from the dimensional analysis viewpoint, since it is a dimensionless quantity. As we will show in next section, this $N_5$ will actually be constrained by the number of non-equivalent, non-trivial contractions that can be made between the indices of Eq. \eqref{generic_term}.

The first non-trivial solutions to our dimensional analysis are schematically depicted in Table \ref{tab:multipoles}. We directly notice that, in the quadrupole approximation (that is, for $N_2=0$), the non-trivial term containing the smallest number of copies of the spin tensor corresponds to $N_3=1$, $N_1=-1$ and $N_4=2$, which thus scales as $RSSv^{N_5}/\mu$. All the other consistent terms will include at least four copies of the spin tensor (the next one being $RRSSSSv^{N_5}/\mu^3$), and they can consequently be discarded in our analysis, since we are only interested in (at most) quadratic terms in the spin.

\begin{table}[ht]
    \centering\renewcommand{\arraystretch}{1.5}
    \begin{tabular}{c|ccc}
         & $\propto R$ ($N_2=0$) & $\propto\nabla R$ ($N_2=1$) & $\propto\nabla\nabla R$ ($N_2=2$) \\\hline
        $\mathcal O\qty(\mathcal S^2)$ ($N_4=2$) & $RSS v^{N_5}/\mu$ & $0$ & $0$ \\
        $\mathcal O\qty(\mathcal S^3)$ $(N_4=3$) & $0$ & $\nabla R SSSv^{N_5}/\mu^2$& $0$ \\
        $\mathcal O\qty(\mathcal S^4)$ ($N_4=4$) & $RRSSSSv^{N_5}/\mu^3$ & $0$ & $\nabla\nabla RSSSSv^{N_5}/\mu^3$\\
    \end{tabular}
    \caption{The first terms resulting from the dimensional analysis described in the text.}
    \label{tab:multipoles}
\end{table}

\section{Explicit form of the spin-induced quadrupole}

We will now focus on the $\mathcal O\qty(\mathcal S^2)$ contribution to the quadrupole, which reads schematically
\begin{align}
    \frac{1}{\mu}R\tud{\mu}{\nu\rho\sigma}S^{\alpha\beta}S^{\gamma\delta}v^{\alpha_1}\ldots v^{\alpha_{N_5}}.\label{expr_no_contr}
\end{align}
Finding the most generic expression for this spin-induced quadrupole amounts to list all the independent, non-trivial contractions that can be made between the free indices of Eq. \eqref{expr_no_contr}. Since we only consider Ricci-flat spacetimes, the Riemann tensor reduces to its trace-free part, namely the Weyl tensor \cite{Wald:1984rg}. Consequently, its indices cannot be contracted together. Moreover, it is not possible to contract the four-velocity with the spin tensor, since the TD SSC enforces the relation $v_\mu S^{\mu\nu}=\os{3}$ to hold. Finally, contracting the four-velocity with itself leads to trivial contributions, since $v_\mu v^\mu=-1$ when the evolution is parametrized by the proper time. At the end of the day, we are only left with three (seemingly independent) combinations of the desired form that produce a well-defined scalar that can contribute to the Lagrangian:
\begin{align}
    L_\text{quad}=\frac{C_1}{\mu}R_{\alpha\beta\gamma\delta}v^{\alpha}\Theta^{\beta\gamma}  v^{\delta}+ \frac{C_2}{\mu}R_{\alpha\beta\gamma\delta}S^{\alpha\beta}S^{\gamma\delta}+\frac{C_3}{\mu}R_{\alpha\beta\gamma\delta}S^{\alpha\gamma}S^{\beta\delta}+\mathcal O\qty(\mathcal S^4),\label{quadrupole_3_coeffs}
\end{align}
with $C_1,C_2,C_3\in\mathbb R$. 

We will prove now that these three terms are not independent and that there is only one relevant combination that can be written out of them. Obviously, from the algebraic Ricci identity $R_{\alpha[\beta\gamma\delta]}=0$, one has
\begin{align}
    R_{\alpha\beta\gamma\delta}S^{\alpha\beta}S^{\gamma\delta}=2R_{\alpha\beta\gamma\delta}S^{\alpha\gamma}S^{\beta\delta}.
\end{align}
The two last terms of Eq. \eqref{quadrupole_3_coeffs} are consequently the same. Moreover, the Weyl tensor $C_{\mu\nu\rho\sigma}$ can be shown to obey the identity \cite{Marsat_2015}
\begin{align}
    C_{\alpha\beta\gamma\delta}=4g_{\rho[\alpha}C_{\beta]\mu\nu[\gamma}\delta_{\delta]}^\rho v^\mu v^\nu+2C_{\alpha\beta\mu[\gamma}v_{\delta]}v^\mu+2C_{\gamma\delta\mu[\alpha}v_{\beta]}v^\mu.
\end{align}
Therefore, one can show that
\begin{align}
    C_{\alpha\beta\gamma\delta}S^{\alpha\gamma}S^{\beta\delta}=-2C_{\alpha\beta\gamma\delta}v^\alpha\Theta^{\beta\gamma}v^\delta.
\end{align}
The quadrupole term of the Lagrangian then reduces to
\begin{align}
    L\ni L_\text{quad}=\frac{C_1-2\qty(2C_2+C_3)}{\mu}R_{\alpha\beta\gamma\delta}v^{\alpha}\Theta^{\beta\gamma}  v^{\delta}+\mathcal O\qty(\mathcal S^4),
\end{align}
which gives rise to the spin-induced quadrupole tensor announced in Eq. \eqref{quad_intro}, provided we rename the overall constant $C_1-2\qty(2C_2+C_3)\triangleq 3\kappa$ in order to agree with the conventions widely used in the literature. One can check explicitly that this spin-induced quadrupole tensor obeys all the algebraic symmetries of the Riemann tensor.

We have therefore demonstrated that the form of the spin-induced quadrupole is unique, up to an overall coupling coefficient $\kappa$ whose numerical value depends on the nature of the test body. Discussing the value taken by $\kappa$ goes beyond the scope of this thesis, and is actually quite involved, see \textit{e.g.} \cite{Marsat_2015} and references therein. 

\partimage[width=\textwidth]{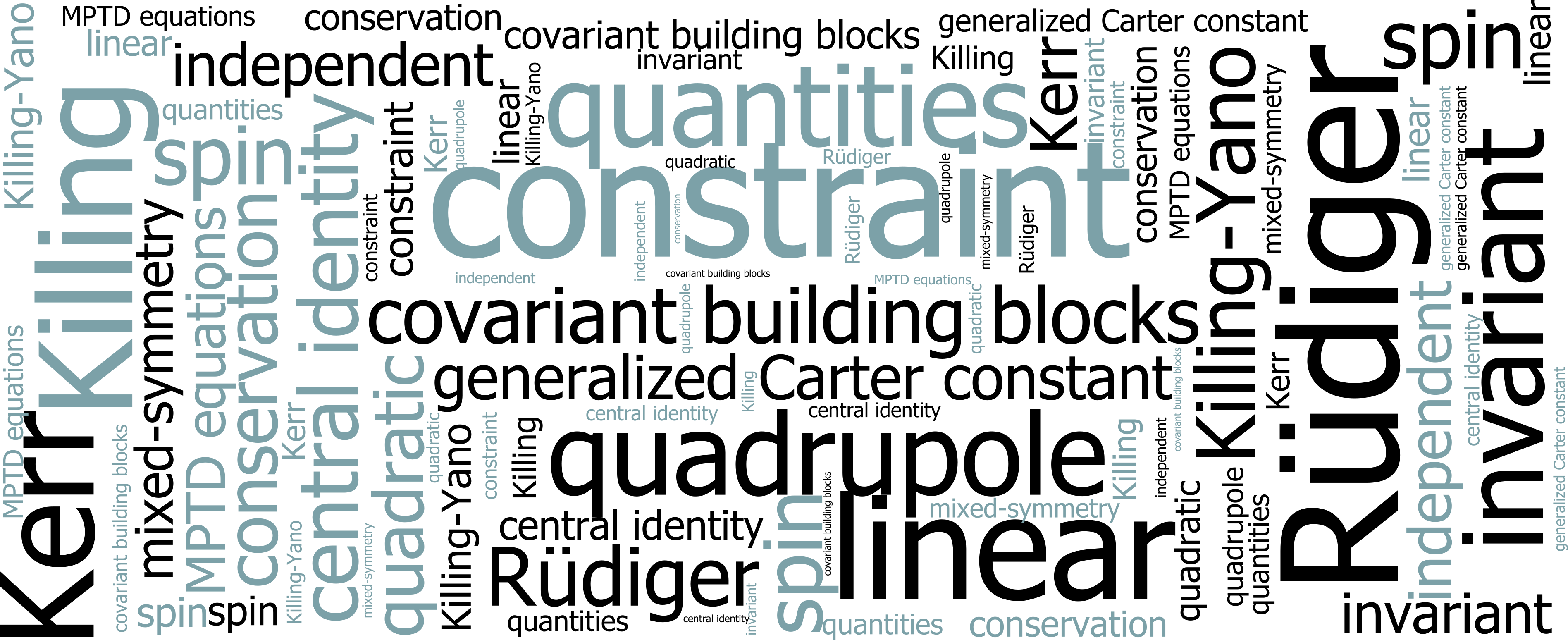}

\part[\textsc{Test Bodies in Kerr Spacetime: Conserved Quantities}]{\textsc{Test Bodies in Kerr Spacetime:\\Conserved Quantities}}\label{part:conserved_quantities}
\noindent

{
\renewcommand{\thefigure}{III.1}
\begin{figure}
\centering
\begin{tikzpicture}[scale=0.7]
\fill[black!10] (0,0) rectangle (6,12.1);
\fill[black!15] (6,0) rectangle (12,12.1);
\fill[black!20] (12,0) rectangle (18,12.1);

\node[below] () at (3,12) {\Large $\mathcal O\qty(\mathcal S^0)$};
\node[below] () at (9,12) {\Large $\mathcal O\qty(\mathcal S^1)$};
\node[below] () at (15,12) {\Large $\mathcal O\qty(\mathcal S^2)$};

\fill[blue!30, rounded corners, draw=blue,ultra thick] (1,10) rectangle (17,8.5);
\node[blue] () at (9,9.25) {\Large $\mathcal S$};

\draw[ultra thick, red, ->] (11,7.25)--(13,7.25);

\fill[red!30, rounded corners, draw= red,ultra thick] (1,8) rectangle (11,6.5);
\node[red] () at (6,7.25) {\Large $\mu$};

\fill[red!20, rounded corners, draw=red,ultra thick] (13,8) rectangle (17,6.5);
\node[red] () at (15,7.25) {\Large $\tilde\mu$};

\draw[ultra thick, green, ->] (5,5.25)--(7,5.25);

\fill[green!30, rounded corners, draw=green,ultra thick] (1,6) rectangle (5,4.5);
\node[green] () at (3,5.25) {\Large $\mathcal E_0,\mathcal L_0$};

\fill[green!20, rounded corners, draw=green,ultra thick] (7,6) rectangle (17,4.5);
\node[green] () at (12,5.25) {\Large $\mathcal E,\mathcal L$};

\draw[ultra thick, orange, ->] (5,3.25)--(7,3.25);
\draw[ultra thick, orange, ->] (11,3.25)--(13,3.25);

\fill[orange!30, rounded corners, draw=orange,ultra thick] (1,4) rectangle (5,2.5);
\node[orange] () at (3,3.25) {\Large $\mathcal K_0$};

\fill[orange!20, rounded corners, draw=orange,ultra thick] (7,4) rectangle (11,2.5);
\node[orange] () at (9,3.25) {\Large $\mathcal Q_R$};

\fill[orange!10, rounded corners, draw=orange, dashed,ultra thick] (13,4) rectangle (17,2.5);
\node[orange] () at (15,3.25) {\Large $\mathcal Q_\text{BH}^{(2)}$};

\fill[cyan!30, rounded corners] (7,2) rectangle (17,0.5);
\node[cyan] () at (12,1.25) {\Large $\mathcal Q_Y$};
\draw[cyan,rounded corners,ultra thick] (12,2)--(7,2)--(7,0.5)--(12,0.5);
\draw[cyan,rounded corners,dashed,ultra thick] (12,0.5)--(17,0.5)--(17,2)--(12,2); 

\end{tikzpicture}
    \caption{Conserved quantities for the motion of spinning test bodies in Kerr spacetime. The figure summarizes the status at $\mathcal O\qty(S^0)$ (geodesic motion), $\mathcal O\qty(S^1)$ (linearized MPTD equations) and $\mathcal O\qty(S^2)$ (quadratic MPTD equations endowed with a spin-induced quadrupole). The dashed contours mean that the conservation only holds provided the quadrupole coupling is the one of a test black hole, $\kappa=1$.}
    
    \label{fig:spinning_conserved_quantities}
\end{figure}
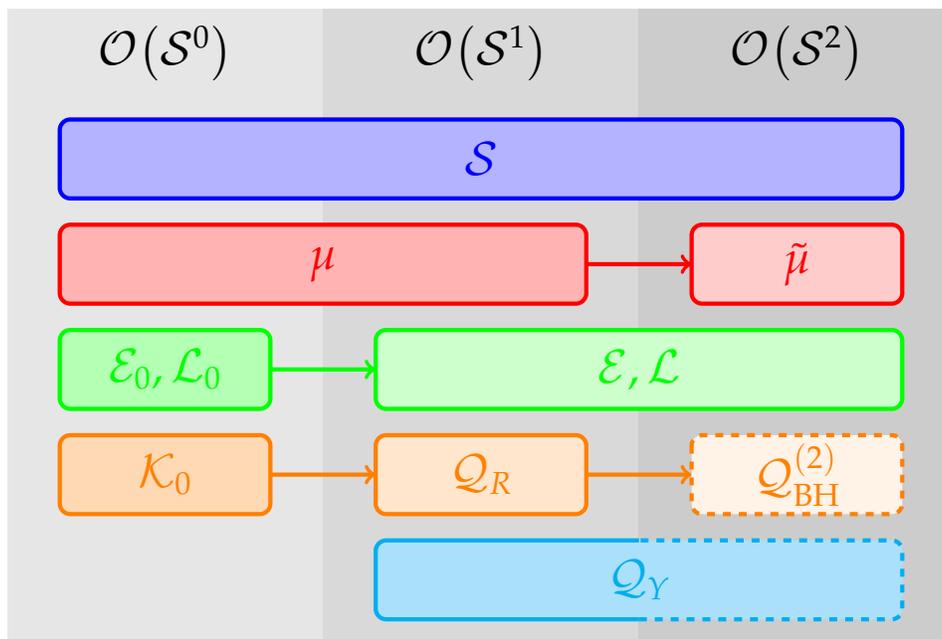
}

\lettrine{W}{e} have now in our possession of a system of equations for describing the motion of spinning test bodies in any curved spacetime, the MPD equations. In the continuation of this text, and for the reasons discussed in Chapter \ref{chap:SSC}, we will always supplement them by the Tulczyjew-Dixon condition. The total system of equations obtained will be referred to as the \defining{MPTD equations}. 

The MPTD equations take the form of a set of first order partial differential equations for the linear momentum $p^\mu$ and the spin $S^\mu$. However, the very goal of anyone wanting to solve the MPTD equations is to obtain the position of the test body as a function of the proper time, $z^\mu(\tau)$. With respect to the positions $x^\mu$, the MPTD equations are a set of \textit{second order} PDEs. As it is always the case when studying such a dynamical system, much information can be obtained  if one is able to build \textit{first integrals} of motion (or invariants or conserved quantities), \textit{i.e.} functions of the dynamical variables $\mathcal Q(p^\alpha,S^\alpha)$ that are constant along the motion:
\begin{align*}
    \dot{\mathcal{Q}}(p^\alpha,S^\alpha)=0,\qquad \dot{}\triangleq\dv{\tau}.\label{cons}
\end{align*}
As always in General Relativity, the existence of first integrals of motion will be strongly related with the presence of symmetries of the background spacetime ({i.e.} the existence of Killing vectors or Killing(-Yano) tensors). Building such conserved quantities will be the main concern of this part of the thesis. 

Because what we have in mind is the description of the dynamics of extreme mass-ratio inspirals, we will treat the problem perturbatively, order by order in the spin magnitude $\mathcal S$. The conservation equation will thus be required to hold only up to some given order in $\mathcal S$, giving rise to quantities which are only ``quasi-conserved''. Nevertheless, we will often refer to them as conserved quantities. The reader should keep this remark in mind for what follows. We will push our analysis up to second order in $\mathcal S$. This is the first order at which the universal character of the MPTD dynamics is broken, since the explicit form of the quadrupole moment will depend upon the internal structure of the test body.

Moreover, we will take the force and torque terms to originate only from the spin-induced quadrupole moment introduced in Chapter \ref{chap:quadrupole}. This choice provide an exact description for the secondary object being a black hole, since in that case it will not exhibit tidal deformability, see \textit{e.g.} \cite{Charalambous:2021mea} and references therein. However, for the secondary being another type of compact object (\textit{e.g.} a neutron star), tidal-type contributions will also arise at quadrupole order, see \textit{e.g.} \cite{Porto:2016pyg}. We expect these contributions to break the conservation of the so-called ``hidden'' constants of motion discussed below. As we will see, this expectation will turn out to be consistent with our findings, since such constants of motion will be shown to only exist for the spin-induced quadrupole coupling $\kappa$ taking the value $\kappa=1$, which is precisely the one expected for a black hole. The various conserved quantities that can be built are summarized in Figure \ref{fig:spinning_conserved_quantities}, and briefly discussed below. Most of them will take the form of (perturbative) deformations of the quantities conserved along the geodesic motion.

\subsubsection{``Explicit'' constants of motion}

As it has been proven in Chapter \ref{chap:EOM}, the spin magnitude $\mathcal S$ is exactly conserved for any choice of spin supplementary condition. Moreover, up to $\os{1}$, the dynamical mass $\mu^2=-p_\alpha p^\alpha$ is also invariant. At second order in $\mathcal S$, $\mu$ is not conserved anymore, but we can still define a mass-like quantity $\tilde\mu$ which is quasi-conserved, see Section \ref{sec:MPT}.

As shown by Dixon \cite{1970RSPSA.314..499D,1977GReGr...8..197E}, and as was central to his construction of the multipole moments, if the background has a Killing vector $\xi^\mu$, then the quantity
\begin{equation*}
\mathcal C_\xi=p_\mu\xi^\mu+\frac{1}{2}S^{\mu\nu}\nabla_\mu\xi_\nu
\end{equation*}
is exactly conserved along any worldline when $p^\mu$ and $S^{\mu\nu}$ are evolved by the MPD equations, to all orders in the multipole expansion, for arbitrary quadrupole and higher moments.  (See also, \textit{e.g.}, the earlier derivation by Souriau for the pole-dipole system \cite{Souriau:1974}, and the insightful exposition by Harte \cite{Harte:2014wya}.) Specializing to a background Kerr spacetime, the existence of its two Killing vector fields leads to the appearance of two conserved quantities, denoted $\mathcal E$ and $\mathcal L$. They will naturally take the interpretation of energy and (projection onto the black hole axis of the) angular momentum of the test body. 

One is then naturally lead to wonder whether the hidden symmetry of Kerr spacetime leads to conserved quantities for the MPD dynamics, including a generalization of the Carter constant to the case of spinning extended test bodies.

\subsubsection{``Hidden'' constants: to linear order in $\mathcal S$}

This question was answered for the case of the pole-dipole MPTD equations by R. R\"udiger \cite{doi:10.1098/rspa.1981.0046,doi:10.1098/rspa.1983.0012}. First he showed that the quantity
\begin{align*}
    \mathcal Q_Y={}Y^*_{\mu\nu}S^{\mu\nu}
\end{align*}
built from the Killing-Yano tensor $Y_{\mu\nu}$ of Kerr is conserved, up to remainders quadra-tic in the spin tensor and quadrupolar corrections. Here $Y^*_{\mu\nu}\triangleq\tfrac{1}{2}\epsilon_{\mu\nu\alpha\beta}Y^{\alpha\beta}$ is the dual of the Kerr Killing-Yano tensor $Y^{\mu\nu}$. 

He further showed that there is indeed a generalized Carter constant of the form 
\begin{equation*}\label{Q_dipole_intro}
\mathcal Q_R=Y\tdu{\mu}{\lambda}Y_{\nu\lambda}p^{\mu} p^{\nu}+4 \xi^\lambda\epsilon_{\lambda \mu\sigma[\rho}Y\tud{\sigma}{\nu]}S^{\mu\nu}p^\rho
\end{equation*}
which is conserved along the MPTD equations, up to remainders quadratic in the spin tensor and quadrupolar corrections that were not determined. A generalization of this result to the Kerr-Newman (charged spinning black hole) spacetime was independently discovered by Gibbons \emph{et al.}~\cite{Gibbons:1993ap} using a supersymmetric description of spinning particle dynamics.    The existence of these constants of motion has been shown by Witzany to allow the separation of a Hamilton-Jacobi equation for the pole-dipole system in Kerr, leading to analytic expressions for the fundamental frequencies of the motion \cite{Witzany:2019nml} using a Hamiltonian formalism for spinning test bodies  \cite{Witzany:2018ahb}.

The unanswered question of the existence of other independent quasi-conserved quantities for the MPTD equations led to an undetermined status of the role of integrability and by opposition, chaos, in the dynamics of spinning test bodies around Kerr. While chaos has been established to appear at second order in the spin \cite{Zelenka:2019nyp}, numerical simulations suggest that no chaos occurs at linear order in the spin \cite{Kunst:2015tla,Ruangsri:2015cvg,Zelenka:2019nyp}. From the  solution of the Hamilton-Jacobi equations at linear order in the spin, one can infer that chaotic motion at linear order is negligible \cite{Witzany:2019nml}. In \cite{Compere:2021kjz}, we have related the existence of new quasi-conserved quantities homogeneously linear in the spin to the existence of a new tensorial structure on the background, that we will refer to as a \textit{mixed-symmetry Killing tensor}.
This result applies to the Kerr background and more generally to Ricci-flat spacetimes admitting a Killing-Yano tensor. We have demonstrated that under the assumption of stationarity and axisymmetry, no such non-trivial structure exists on the Schwarzschild background and no non-trivial mixed-symmetry Killing tensor on Kerr can be constructed from deformations of trivial mixed-symmetry Killing tensors on Schwarzschild.

\subsubsection{``Hidden constants'': quadratic order in $\mathcal S$}

In \cite{Compere:2023alp}, we have explored whether such ``hidden constants'' exist for test bodies with spin-induced quadrupole moments moving in a Kerr background. As the central results of that paper, we established that two quantities, $\mathcal Q_Y$ and $\mathcal Q^{(2)}_\text{BH}$, are conserved up to cubic-in-spin or octupolar corrections,
\begin{equation*}
\frac{\dd \mathcal Q_Y}{\dd\tau}=\mathcal O(\mathcal S^3),
\qquad
\frac{\dd \mathcal Q^{(2)}_\text{BH}}{\dd\tau}=\mathcal O(\mathcal S^3),
\end{equation*}
along the motion of a ``quadrupolar test black hole'', governed by the MPTD equations endowed with the spin-induced quadrupole term \eqref{spin_induced_Q}, with $\kappa=1$, in a background Kerr spacetime, for arbitrary orbital and spin orientations.  The first quantity $\mathcal Q_Y$ is R\"udiger's linear-in-spin constant, unmodified.
The second quantity $\mathcal Q^{(2)}_\text{BH}$ is quadratic in $p^\mu$ and $S^{\mu\nu}$ and generalizes R\"udiger's constant $\mathcal Q_R$ to the quadrupolar order for a test black hole; it is given explicitly by
\begin{align*}
\begin{split}
\mathcal{Q}^{(2)}_{\text{BH}}&=
Y\tdu{\mu}{\lambda}Y_{\nu\lambda}p^{\mu} p^{\nu}+4 \xi^\lambda\epsilon_{\lambda \mu\sigma[\rho}Y\tud{\sigma}{\nu]}S^{\mu\nu}p^\rho
\\
&\quad+\bigg[
g_{\mu\rho}\Big(\xi_\nu \xi_\sigma-\frac{1}{2}g_{\nu\sigma}\xi^2\Big)-\frac{1}{2}Y_\mu{}^\lambda\Big(Y_\rho{}^\kappa R_{\lambda\nu\kappa\sigma}+\frac{1}{2}Y_\lambda{}^\kappa R_{\kappa\nu\rho\sigma}\Big)
\bigg]S^{\mu\nu}S^{\rho\sigma}.
\end{split}
\end{align*}

\subsubsection{Plan of the text}
This part of the text is structured as follows: Chapter \ref{chap:building} will review the MPTD equations endowed with the spin-induced quadrupole term, and derive a number of useful identities. Rüdiger's procedure for building conserved quantities through resolving explicitly the conservation equations will then be discussed, and illustrated on two simple examples. Subsequently, Chapters \ref{chap:first_order_solutions} and \ref{chap:second_order_solution} will apply this procedure, respectively at first and second order in the spin magnitude $\mathcal S$. We shall warn the reader that these two last chapters consist into long and very technical derivations, which only lead to the results exposed in this introduction.

\chapter{Building conserved quantities}\label{chap:building}


This chapter aims to introduce in a somehow pedagogical way the general scheme introduced by R. Rüdiger in the early eighties \cite{doi:10.1098/rspa.1981.0046,doi:10.1098/rspa.1983.0012} for finding quantities that are conserved for the motion of a test body driven by the MPD equations. Even if it is conceptually simple, the actual computations required by this scheme turn out to be very cumbersome. This is the reason why this chapter introduces the procedure conceptually, before applying it on the two simplest examples one can find: generic conserved quantities for geodesic motion on a generic curved background, and linear (in the sense defined in Section \ref{sec:invariants}) conserved quantities for the MPD linearized equations. The ``real'' computations will be delayed to the two following chapters. 

\section{Review: MPD equations with spin-induced quadru-pole under the Tulczyjew-Dixon SSC}
\label{sec:MPT}

Before tackling this program, let us recall ourselves that MPD equations only form a closed system of equations when a spin supplementary condition is enforced, see Chapter \ref{chap:SSC}. In this part of the thesis, we will always consider MPD equations endowed with the spin-induced quadrupole term and supplemented by the Tulczyjew-Dixon SSC. The full system of equations will be referred to as the MPTD equations. This section aims to both review these equations and their properties, introduce some useful notations and derive relations that will become of prime importance for the continuation of this work. This section is somehow redundant with Part \ref{part:spinnin_bodies} of the text, but aims to gather all the results relevant for the forthcoming computations.

\subsection{Mathisson-Papapetrou-Dixon equations}
As it has been extensively discussed in Part \ref{part:spinnin_bodies} of the thesis, the motion of an extended test body over a curved background is described -- in General Relativity -- by the Mathisson-Papapetrou-Dixon (MPD) equations
\begin{subequations}\label{MPD}
\begin{align}
    \frac{\text{D}p^\mu}{\dd\tau}&=-\frac{1}{2}R\tud{\mu}{\nu\alpha\beta}v^\nu S^{\alpha\beta}+\mathcal F^\mu,\label{MPD_1}\\
    \frac{\text{D}S^{\mu\nu}}{\dd\tau}&=2 p^{[\mu}v^{\nu]}+\mathcal L^{\mu\nu}.\label{MPD_2}
\end{align}
\end{subequations}
where we defined the tangent vector to the worldline as $v^\mu\triangleq\dv{x^\mu}{\lambda}$ ($\lambda$ being any affine parameter) and where $\ldv{}{\lambda}\triangleq v^\mu\nabla_\mu$ is the covariant derivative along the worldline. $\mathcal F^\mu$ and $\mathcal L^{\mu\nu}$ are force and torque terms that shall be specified, depending upon the multipole moments the body is described with. We also introduce the notations
\begin{subequations}
    \begin{align}
    \mathfrak m&\triangleq -p^\mu v_\mu, \\
    \mu^2&\triangleq-p^\mu p_\mu, \\
    \mathcal S^2&\triangleq\frac{1}{2}S^{\mu\nu}S_{\mu\nu}. 
\end{align}
\end{subequations}
Here, $\mu^2$ is the \defining{dynamical rest mass} of the object, \textit{i.e.} the mass of the object measured by an observer in a frame where the spatial components of the linear momentum $p^i$ do vanish; $\mathfrak m$ will be referred to as the \defining{kinetic mass} and $\mathcal S$ as the \defining{spin magnitude}.

At this point, let us emphasize that the linear momentum is no aligned with the velocity, and thus not tangent to the worldline, since contracting Eq. \eqref{MPD_2} with $v^\nu$ yields
\begin{equation}
    p^\mu=\frac{1}{v^2}\qty(v_\alpha\ldv{S^{\mu\alpha}}{\lambda}-\mathfrak m\,v^\mu).\label{pAsv} 
\end{equation}

The dynamical and kinetic masses are in general \textit{not} constants of motion: in fact, one can show that
\begin{align}
    \dv{\mathfrak m}{\lambda}&=-\frac{1}{v^2}\ldv{v_\alpha}{\lambda} v_\beta \ldv{S^{\alpha\beta}}{\lambda}, \\
    \dv{\mu}{\lambda}&=-\frac{1}{\mu \,\mathfrak m}p_\alpha\ldv{p_\beta}{\lambda}\ldv{S^{\alpha\beta}}{\lambda}.\label{evolMu} 
\end{align}
As discussed in Chapter \ref{chap:EOM}, the spin magnitude is exactly conserved for any choice of multipoles:
\begin{equation}
    \dv{(\mathcal S^2)}{\lambda}=0. 
\end{equation}

\subsection{The spin supplementary condition}
We supplement the MPD equations by enforcing the covariant  \textit{Tulczyjew-Dixon} \textit{spin supplementary condition} \begin{equation}
    S^{\mu\nu}p_\nu=0. \label{ssc}
\end{equation}
The TD SSC forms a set of three additional constraints since a contraction with $p_\mu$ leads to a trivial identity. In what follows, we will choose the affine parameter driving the evolution as the body's proper time, $\lambda=\tau$.  This enforces the four-velocity to be normalized,
\begin{equation}\label{v2}
    v^\mu v_\mu=-1, 
\end{equation}
which thereby guarantees its timelike nature along the evolution of the system. These conditions consequently close our system of equations which is then call the MPTD equations. These conditions fix uniquely the worldline and allow to invert Eq. \eqref{pAsv} in order to express the four-velocity as a function of the linear momentum. 

\subsection{Spin-induced quadrupole approximation} 
We will consider only spin-induced multipole moments and work in the quadrupole approximation, \textit{i.e.} neglecting octupole and higher moments. This is the relevant approximation for addressing spin-squared interactions: considering only spin-indu-ced multipole terms, the $2^n$-pole scales as $\mathcal O(\mathcal S^n)$, with $\mathcal S^2\triangleq\frac{1}{2}S_{\mu\nu}S^{\mu\nu}$.

At the level of the equations of motion, this corresponds to choose the force and torque given by 
\begin{align}
    \mathcal F^\mu&=-\frac{1}{6}J^{\alpha\beta\gamma\delta}\nabla^\mu R_{\alpha\beta\gamma\delta},\qquad
    \mathcal L^{\mu\nu}=\frac{4}{3}R\tud{[\mu}{\alpha\beta\gamma}J^{\nu]\alpha\beta\gamma}.
\end{align}
The quadrupole tensor $J^{\mu\nu\rho\sigma}$ possesses the same algebraic symmetries as the Riemann tensor. 

We further particularize our setup by considering only a quadrupole moment that is induced by the spin of the body, discarding the possible presence of some intrinsic quadrupole moment. This \textit{spin-induced quadrupole} was shown to take the form given in Eq. \eqref{quad_intro}, see \textit{e.g.} \cite{Porto:2005ac,Porto:2008jj,Steinhoff:2011sya,Steinhoff:2014kwa,Marsat_2015,Bohe:2015ana,Levi:2015msa,Bini:2015zya,Vines:2016unv}. Specialized to the case $v_\alpha v^\alpha=-1$ and at leading order it reduces to 
\begin{align}
    J^{\mu\nu\rho\sigma}=\frac{3\kappa}{\mu}v^{[\mu}S^{\nu]\lambda}S\tdu{\lambda}{[\rho}v^{\sigma]} = -\frac{3\kappa}{\mu} v^{[\mu}\Theta^{\nu][\rho}v^{\sigma]},\qquad\text{where }\Theta^{\alpha\beta}\triangleq S^{\alpha\lambda}S\tud{\beta}{\lambda}.\label{spin_induced_Q}
\end{align}
Here $\kappa$ is a free coupling parameter that equals 1 for a Kerr black hole and takes another value if the test-body is another compact object, \textit{e.g.} a neutron star.

\subsection{The spin vector}

Having imposed the Tulczyjew-Dixon condition, all the information contained in the spin-dipole tensor can be recast into a \textit{spin vector} $S^\mu$. Indeed, if one defines\footnote{The convention chosen here differs from the one of \cite{doi:10.1098/rspa.1981.0046,Batista:2020cto} by a global `$-$' sign, but agrees with our previous publications \cite{Compere:2021kjz,Compere:2023alp}.}
\begin{equation}
    S^\alpha\triangleq\frac{1}{2}\epsilon^{\alpha\beta\gamma\delta}\hat p_\beta S_{\gamma\delta} 
\end{equation}
where $\hat p^\mu\triangleq \frac{p^\mu}{\mu}$ (yielding $\hat p^2=-1$), one can invert the previous relation in order to rewrite $S^{\alpha\beta}$ in terms of $S^\alpha$:
\begin{equation}\label{defSmu}
    S^{\alpha\beta}=-\epsilon^{\alpha\beta\gamma\delta} \hat p_\gamma S_\delta. 
\end{equation}
This can be easily checked thanks to the identity \cite{Wald:1984rg}
\begin{equation}
    \epsilon^{\alpha_1\ldots\alpha_j\alpha_{j+1}\ldots\alpha_n}\epsilon_{\alpha_1\ldots\alpha_j\beta_{j+1}\ldots\beta_n}=-(n-j)!\,j!\,\delta^{[\alpha_{j+1}\ldots\alpha_n]}_{\beta_{j+1}\ldots\beta_n}\label{contraction}
\end{equation}
which is valid in any n-dimensional Lorentzian manifold. We make use of the shortcut notation $\delta^{\mu_1\ldots\mu_N}_{\nu_1\ldots\nu_N}\triangleq\delta^{\mu_1}_{\nu_1}\ldots\delta^{\mu_N}_{\nu_N}$. By definition, the spin vector is automatically orthogonal to the linear momentum:
\begin{align}
    p_\mu S^\mu=0.
\end{align}
Finally, also notice that the spin parameter is simply the squared norm of the spin vector, $\mathcal S^2=S^\alpha S_\alpha$ .

\subsection{Conservation of the spin, mass; relation between four-velocity and linear momentum}
Differentiating the SSC Eq. \eqref{ssc} yields
\begin{align}
    \mu^2 v^\mu-\mathfrak m p^\mu=\frac{1}{2}S^{\mu\nu}R_{\nu\lambda\rho\sigma}v^\lambda S^{\rho\sigma}-\mathcal L^{\mu\nu}p_\nu- S^{\mu\nu}\mathcal F_\nu.
\end{align}
Contracting this equation with $v_\mu$ provides us with
\begin{align}
    \mu^2=\mathfrak m^2+\mathcal O(\mathcal S^3).\label{masses}
\end{align}
The expression of the linear momentum in terms of the four-velocity reads 
\begin{align}
    p^\mu=\mu v^\mu-\frac{1}{2\mu}S^{\mu\nu}R_{\nu\lambda\rho\sigma}v^\lambda S^{\rho\sigma}+\mathcal L^{\mu\nu}v_\nu+\mathcal O(\mathcal S^3).
\label{p_of_v}
\end{align}
In the quadrupole approximation, the dynamical mass $\mu$ is no longer conserved at $\mathcal O(\mathcal S^2)$, since
\begin{align}
    \dv{\mu}{\tau}=-v_\mu\mathcal F^\mu+\mathcal O(\mathcal S^3).
\end{align}
However, notice that, provided we assume\footnote{This condition is automatically satisfied for the spin-induced quadrupole.}
\begin{align}
    \frac{\text{D}}{\dd\tau}J^{\alpha\beta\gamma\delta}=\mathcal O(\mathcal S^3),
\end{align}
one can still define a mass-like quantity, given by
\begin{align}
    \tilde\mu\triangleq \mu-\frac{1}{6}J^{\alpha\beta\gamma\delta}R_{\alpha\beta\gamma\delta},\label{mass_like}
\end{align}
which is quasi-conserved independently of the spin, namely
\begin{align}
    \dv{\tilde\mu}{\tau}=\mathcal O(\mathcal S^3).
\end{align}
Moreover, one can perturbatively invert \eqref{p_of_v} to obtain an expression of the four-velocity in terms of the linear momentum and the spin:
\begin{align}
    v^\mu = \hat p^\mu +(D^{\mu\nu}-\frac{1}{\mu}\mathcal L^{\mu\nu})\hat p_\nu+\mathcal O(\mathcal S^3),
    \label{v_of_p}
\end{align}
with
\begin{align}
    \hat p^\alpha\triangleq\frac{p^\alpha}{\mu}=\frac{p^\alpha}{\tilde\mu}+\mathcal O(\mathcal S^2),\qquad D\tud{\mu}{\nu}&\triangleq\frac{1}{2\mu^2}S^{\mu\lambda}R_{\lambda\nu\rho\sigma}S^{\rho\sigma}.
\end{align}
Eq. \eqref{v_of_p} will play a central role when we will work out the conservation equations in the following chapters. 

\subsection{Projector and Hodge dualities: useful relations}
\subsubsection{Projector on the hypersurface orthogonal to $p^\mu$} 
We introduce
\begin{equation}
    \Pi^\mu_\nu\triangleq\delta^\mu_\nu+\hat p^\mu \hat p_\nu, 
\end{equation}
the projector onto the hypersurface orthogonal to $p^\mu$. It can be directly checked from the definition that $\Pi^\mu_\nu$ satisfies the properties:
\begin{subequations}
    \begin{align}
    \text{(I)}&\quad \Pi^\mu_\nu\,\Pi^\nu_\rho=\Pi^\mu_\rho, \\
    \text{(II)}&\quad\Pi^\mu_\nu\, p^\nu=0, \\
    \text{(III)}&\quad\Pi^\mu_\alpha\,S^{\alpha\nu}=S^{\mu\nu}. 
\end{align}
\end{subequations}

\subsubsection{Hodge duality} 
Given any tensor $\mathbf A$, one can define the left and the right Hodge duals as
\begin{subequations}
    \begin{align}
    {}^*\!A_{\mu\nu\alpha_1\ldots\alpha_p}&\triangleq\frac{1}{2}\epsilon\tdu{\mu\nu}{\rho\sigma}A_{\rho \sigma\alpha_1\ldots\alpha_p}, \\
    A^*_{\alpha_1\ldots\alpha_p\mu\nu}&\triangleq\frac{1}{2}\epsilon\tdu{\mu\nu}{\rho\sigma}A_{\alpha_1\ldots\alpha_p\rho\sigma}. 
\end{align}
\end{subequations}
They correspond, respectively, to the Hodge dualization on the two first, resp. the two last, indices of $\mathbf A$. The definition of the bidual $\mathbf{ {}^*\! A^*}$ follows directly from the definitions above. Finally, given the product of two antisymmetrized vectors, one can similarly define
\begin{equation}
    l^{[\mu}m^{\nu]*}\triangleq\frac{1}{2}\epsilon^{\mu\nu\rho\sigma}l_\rho m_\sigma. 
\end{equation}
The dual tensors obey the following properties:
\begin{subequations}
    \begin{align}
    \text{(I)}&\quad A^*_{\mu\nu}={}^*\!A_{\mu\nu}, \\
    \text{(II)}&\quad A^{**}_{\mu\nu}={}^{**}\!A_{\mu\nu}=-A_{\mu\nu}, \\
    \text{(III)}&\quad{}^*\!A_{[\mu\nu]}B^{[\mu\nu]}=A_{[\mu\nu]}{}^*\!B^{[\mu\nu]}. 
\end{align}
\end{subequations}
Moreover, for any geometry we have
\begin{equation}
\mbox{}^*R\tud{\alpha}{\mu\alpha\nu}=\frac{1}{2}\epsilon\tudu{\alpha}{\beta}{\beta\gamma}R_{[\beta\gamma\alpha]\nu}=0.
\end{equation}
Notice also that Bianchi's identities $R_{\alpha\beta [\mu\nu ; \sigma]}=0$ can be equivalently written as
\begin{align}
R\tudud{*}{\alpha\beta\gamma}{\sigma}{;\sigma}= 0.   
\end{align}

\subsubsection{Identities for the MPTD equations} 
Let us turn to the more specific context of the MPTD theory. Using all the previous definitions, it is not complicated to check that the following properties hold:
\begin{subequations}
    \begin{align}
    \text{(I)}&\quad S^{\alpha\beta}=2S^{[\alpha}\hat p^{\beta]*},  \label{subsS}\\
    \text{(II)}&\quad \Pi^{\alpha[\beta}S^{\gamma]*}=S^{\alpha[\beta}\hat p^{\gamma]}. 
\end{align}
\end{subequations}
One can show that the first identity implies the following decomposition for $\Theta^{\alpha\beta}$:
\begin{align}
    \Theta^{\alpha\beta}&=\Pi^{\alpha\beta}\mathcal S^2-S^\alpha S^\beta.
\end{align}

Moreover, one has the identity
\begin{align}
    D\tud{\mu}{\alpha}p^\alpha&=\frac{1}{2\mu}S^{\mu[\nu}\hat p^{\alpha]}R_{\nu\alpha\rho\sigma}S^{\rho\sigma} \nonumber\\
    &=\frac{1}{\mu}\Pi^{\mu[\nu}S^{\alpha]*}R_{\nu\alpha\rho\sigma}S^{[\rho}\hat p^{\sigma]*} \nonumber\\
    &=\frac{1}{\mu}\Pi^{\mu\alpha}{}^*\!R^*_{\alpha\beta\gamma\delta}S^\beta S^\gamma \hat p^\delta,
\end{align}
as well as
\begin{subequations}
\begin{align}
    \mathcal F^\mu&=\frac{\kappa}{2\mu}\hat p^\alpha\Theta^{\beta\gamma}\hat p^\delta\nabla^\mu R_{\alpha\beta\gamma\delta}\label{F}+\mathcal O(\mathcal S^3) ,\\ 
    \mathcal L^{\mu\nu}&=\frac{2\kappa}{\mu}R\tud{\nu}{\alpha\beta\gamma}v^{[\mu}\Theta^{\alpha]\beta}v^\gamma-\qty(\mu\leftrightarrow\nu),\label{L}\\
    \mathcal L^{\mu\nu}v_\nu&=\frac{\kappa}{\mu}\qty(\hat p^\mu \hat p^\nu R_{\nu\alpha\beta\gamma}+R\tud{\mu}{\alpha\beta\gamma})\Theta^{\alpha\beta} \hat p^\gamma+\mathcal O(\mathcal S^4)\nonumber\\
    &\quad = \frac{\kappa}{\mu} \Pi^{\mu\nu} R_{\nu \alpha\beta\gamma}\Theta^{\alpha\beta}\hat p^\gamma +\mathcal O(\mathcal S^4).\label{Lv}
\end{align}
\end{subequations}

\subsection{Independent dynamical variables}

Let us now summarize the independent dynamical variables of the MPTD system. Under the SSC \eqref{ssc}, one can write
\begin{subequations}
\begin{align}
    \mu&=\mu(p^\alpha)=\sqrt{-p_\alpha p^\alpha}, \\
    \mathfrak m & = \mathfrak m (p^\alpha,S^\alpha), \\
    S^{\mu\nu}&=S^{\mu\nu}(p^\alpha,S^\alpha)=-\epsilon^{\mu\nu\alpha\beta}\hat p_\alpha S_\beta, \\
    v^\mu&=v^\mu(p^\alpha, S^\alpha).
\end{align}
\end{subequations}
The explicit expression for $v^\mu$ has been worked out in Eq. \eqref{v_of_p}. Consequently, the system can be fully described in terms of the dynamical variables $x^\mu$, $p^\mu$ and $S^\mu$. However, the four components of the spin vector $S^\mu$ are not independent, since they are subjected to the orthogonality condition $p^\mu S_\mu=0$. This fact leads to complications when we will seek to build invariants, as will be detailed in Section \ref{sec:invariants}. To overcome this difficulty, we introduce the \textit{relaxed spin vector} $s^\alpha$ from
\begin{equation}
    S^\alpha\triangleq\Pi^\alpha_\beta s^\beta\label{relaxed_spin} 
\end{equation}
where the part of $\mathbf s$ aligned with $\mathbf p$ is left arbitrary, but is assumed (without loss of generality) to be of the same order of magnitude. It ensures the relation $\mathcal O(s)=\mathcal O(\mathcal S)$ to hold (where $s^2\triangleq s_\alpha s^\alpha$). While working out the conservation constraints, one will often encounter the spin vector antisymmetrized with $\mathbf p$, in expressions of the type $p^{[\mu}S^{\nu]}$. In that case, we can directly write $p^{[\mu}S^{\nu]}=p^{[\mu}s^{\nu]}$, thereby replacing the (constrained) $S^\alpha$ by the (independent) variables $s^\alpha$. 

\subsection{MPTD equations at linear order in the spin}
As discussed in the introduction, the complexity of the constraint equations will motivate us to proceed perturbatively: one will first seek for solutions in the linearized theory, thus neglecting all $\mathcal O\qty(\mathcal S^2)$ terms. The solutions valid a second order in $\mathcal S$ will then be build as perturbations on the top of the previously found linear solutions. It is therefore relevant to understand how MPTD equations simplify at linear order in $\mathcal S$.

Neglecting all $\mathcal O(\mathcal S^2)$ terms, Eq. \eqref{p_of_v} leads to the usual relation between the linear momentum and the four-velocity,
\begin{align}
    p^\mu=\mu v^\mu.\label{pv}
\end{align}
Once linearized in the spin, the MPD equations \eqref{MPD} reduce to
\begin{subequations}\label{linearized_MPTD}
    \begin{align}
    \frac{\text D p^\mu}{\dd\tau}&=f^{(1)\mu}_{S}\triangleq-\frac{1}{2\mu}R\tud{\mu}{\nu\alpha\beta}p^\nu S^{\alpha\beta},\label{MPT_lin_1}\\
    \frac{\text D S^{\mu\nu}}{\dd\tau}&=0\qquad\Leftrightarrow\qquad\frac{\text D S^{\mu}}{\dd\tau}=0,\label{MPT_lin_2}
\end{align}
\end{subequations}
which are, respectively, the forced geodesic equation with force $f^{(1)\mu}_{S}=\mathcal O\qty(\mathcal S^1)$ and the parallel transport equation of the spin tensor/vector studied \textit{e.g.} in \cite{Ruangsri:2015cvg,vandeMeent:2019cam}.

\section{Rüdiger's procedure}\label{sec:invariants}

In two papers published in the early 80s, R. Rüdiger described a scheme for constructing quantities conserved along the motion driven by the MPTD equations \cite{doi:10.1098/rspa.1981.0046,doi:10.1098/rspa.1983.0012}. The basic guideline followed in his scheme was to enforce directly the conservation equation on a generic Ansatz for the conserved quantity, and to subsequently solve the constraints obtained. We summarize here this procedure in an abstract way, before applying it to simple concrete examples in the next section. 

\begin{itemize} 
    \item \textbf{Step 1: postulate an Ansatz for the conserved quantity.} The conserved quantity should be a function of the dynamical variables $p^\mu$ and $S^{\mu}$. It is therefore a function $\mathcal Q\qty(x^\mu,S^{\mu},p^\mu)$. Assuming its analyticity, it can be expanded as
\begin{align}
    \mathcal Q\qty(x^\mu,S^\alpha,p^\mu)=\sum_{\substack{s,p\geq0\\s+p>0}}\mathcal Q^{[s,p]}\qty(x^\mu,S^\alpha,p^\mu)\label{ansatz_all_orders}
\end{align}
with
\begin{align}
    \mathcal Q^{[s,p]}\qty(x^\mu,S^\alpha,p^\mu)
    \triangleq \mathcal Q^{[s,p]}_{\alpha_1\ldots\alpha_{s}\mu_1\ldots\mu_p}(x^\mu)S^{\alpha_1}\ldots S^{\alpha_{s}}p^{\mu_1}\ldots p^{\mu_p}.\label{ansatz_term}
\end{align}
Expressions like this one -- that is, tensorial quantities fully contracted with occurrences of the linear momentum and the spin -- will often appear in the following computations. It is useful to enable a distinction between them by introducing a grading allowing the counting of the number of occurrences of both the spin vector $S^{\mu}$ and the linear momentum $p^\mu$, which is provided by the notation $[s,p]$. More generally, we define:

\begin{definition}
A fully-contracted expression of the type
\begin{align}
    T_{\alpha_1\ldots\alpha_s\mu_1\ldots\mu_p}\ell_s^{\alpha_1}\ldots\ell_s^{\alpha_s}\ell_p^{\mu_1}\ldots\ell_p^{\mu_p}\label{fully_contracted}
\end{align}
where $\ell_s^\alpha=S^\alpha,s^\alpha$ (the relaxed spin vector $s^\mu$ will be defined below) and $\ell_p^\mu=p^\mu,\hat p^\mu$ is said to be of grading $[s,p]$. Equivalently, $s$ (resp. $p$) will be referred to as the spin (resp. momentum) grading of this expression.
\end{definition}

Since we have only included the quadrupole term in the equations of motion but neglected all the $\mathcal{O}\qty(\mathcal S^3)$ terms, it is not self-consistent to look at quantities which are conserved beyond second order in the spin magnitude. We therefore restrict our analysis to Ansätze that contain terms of of spin grading at most equal to two. Historically, Rüdiger didn't consider the full set of possible Ansätze originating from this discussion, but only the two restricted cases
\begin{subequations}\label{ansatze}
    \begin{align}
    \mathcal Q^{(1)}&\triangleq\sum_{p=1}Q^{[s,p]}\triangleq X_\mu p^\mu+W_{\mu\nu}S^{\mu\nu},\label{linear_ansatz}\\
    \mathcal Q^{(2)}&\triangleq\sum_{p=2}Q^{[s,p]}\triangleq K_{\mu\nu}p^\mu p^\nu+L_{\mu\nu\rho}S^{\mu\nu}p^\rho+M_{\mu\nu\rho\sigma}S^{\mu\nu}S^{\rho\sigma}. \label{quadratic_ansatz}
\end{align}
\end{subequations}
We will refer to them as respectively the \textit{linear} and the \textit{quadratic} invariants in $p^\mu$. They are homogeneous in the number of occurrences of $p^\mu$ and $S^{\mu\nu}$ they contain. As long as we consider the MPD equations at linear order in the spin magnitude or at quadratic order with the quadrupole coupling of the test body being the one of a black hole ($\kappa=1$), it turns out that considering only these two types of ansatzes will be enough to derive a complete set of conserved quantities. However, a more general ansatz will be necessary to consider arbitrary quadrupole couplings ($\kappa\neq 1$), as discussed in Section \ref{sec:NS}.

\item \textbf{Step 2: write down the conservation equation.} 
Because we always consider the MPTD equations truncated up to some order in $\mathcal S$, it is consistent to require our quantities to be only \defining{quasi-conserved}. For the MPDT equations including terms up to $\mathcal O\qty(\mathcal S^n)$ included, we only require the conservation to hold up at order $n+1$ in the spin magnitude:
\begin{align}
    \dot{\mathcal{Q}}\triangleq v^\lambda\nabla_\lambda \mathcal Q\stackrel{!}{=}\mathcal O\qty(\mathcal S^{n+1}).
\end{align}

\item \textbf{Step 3: expand the conservation equation using the equations of motion.}
The next step is to plug the explicit form of the Ansatz chosen in the conservation equation, and to use the MPD equations \eqref{MPD} to replace the covariant derivatives of the linear momentum and of the spin tensor. The occurrences of the four-velocity are replaced by the means of Eq. \eqref{v_of_p}.

\item \textbf{Step 4: express the conservation equation in terms of independent variables.}
As already discussed above, the presence of the SSC make the variables $p^\mu$, $S^{\alpha\beta}$ not independent among themselves. We turn to an independent set of variables in two steps: (i) we use the relation $S^{\alpha\beta}=2S^{[\alpha}\hat p^{\beta]*}$ to replace all the spin tensors $S^{\mu\nu}$ by the spin vectors $S^{\mu}$ and (ii) we replace the occurrences of the spin vector by the \textit{relaxed spin vector} $s^\alpha$ defined through
\begin{align}
    S^\alpha=\Pi^\alpha_\beta s^\beta.
\end{align}
It allows to relax the residual constraint $S_\mu p^\mu=0$ by considering a spin vector possessing a non-vanishing component along the direction of the linear momentum. Physical quantities will be independent of this component. It is introduced in order to decouple the conservation equation. For convenience, we scale the unphysical component of the relaxed spin vector such that $s_\alpha s^\alpha\sim S_\alpha S^\alpha=\mathcal S^2$. Notice that we have the useful identity
\begin{align}
    S^{[\alpha}p^{\beta]}=s^{[\alpha}p^{\beta]}\qquad\Rightarrow\qquad S^{\alpha\beta}=2s^{[\alpha}\hat p^{\beta]*}.
\end{align}

\item \textbf{Step 5: infer the independent constraints.} The conservation equation takes now the form of a sum of fully-contracted expressions of the type \eqref{fully_contracted}, involving only the \textit{independent} dynamical variables $p^\mu$ and $s^\alpha$:
\begin{align}
    \dot{\mathcal{Q}}=\sum_{\substack{s,p\geq 0\\s+p>0}}T^{[s,p]}_{\alpha_1\ldots\alpha_s\mu_1\ldots\mu_p}s^{\alpha_1}\ldots s^{\alpha_s} p^{\mu_1}\ldots p^{\mu_p}\stackrel{!}{=}\mathcal O\qty(\mathcal S^{n+1}).
\end{align}
The conservation equation is then equivalent to the requirement that all the terms of different gradings $[s,p]$ vanish independently:
\begin{align}
    T^{[s,p]}_{\alpha_1\ldots\alpha_s\mu_1\ldots\mu_p}s^{\alpha_1}\ldots s^{\alpha_s} p^{\mu_1}\ldots p^{\mu_p}\stackrel{!}{=}\mathcal O\qty(\mathcal S^{n+1}).
\end{align}
$s^\alpha$ and $p^\mu$ being arbitrary, this is equivalent to the \emph{constraint equations}
\begin{align}
    T^{[s,p]}_{(\alpha_1\ldots\alpha_s)(\mu_1\ldots\mu_p)}\stackrel{!}{=}\mathcal O\qty(\mathcal S^{n+1}).\label{generic_tensor_cst}
\end{align}

\item \textbf{Step 6: find a solution and prove uniqueness.} This final step is non-systematic. For the simplest cases (linear invariant with black hole quadrupole coupling, quadratic invariant at first order in the spin magnitude), it will be sufficient to work only with the tensorial constraints \eqref{generic_tensor_cst}. However, for more involved cases (\textit{e.g.} quadratic invariant at second order in $\mathcal S$), the tensorial relations will become so cumbersome that turning to another formulation of the problem will appear to be fruitful. This will be the purpose of the covariant building blocks for Kerr introduced in Section \ref{sec:CBB}.
\end{itemize}

\section{The simplest examples}
In this final section, we apply Rüdiger's procedure to the two simplest examples: generic polynomial invariants for geodesic motion and linear invariants for linearized MPTD equation.

\subsection{Generic polynomial invariants for geodesic motion }
In the geodesic case, the linear momentum is tangent to the worldline, $p^\mu=\mu v^\mu$ and geodesic equations are simply the $\mathcal O\qty(\mathcal S^0)$ MPTD equations:
\begin{equation}
    \frac{D p^\mu}{\dd\tau}=0,\qquad\frac{D}{\dd\tau}\triangleq v^\alpha\nabla_\alpha.\label{geodesic_equation}
\end{equation}

In most GR textbooks and lectures, the problem is tackled from the perspective ``\textit{symmetry implies conservation}'': one first introduces the notion of Killing vector fields ($\nabla_{(\alpha}\xi_{\beta)}=0$) and \textit{then} prove the well-known property stating that, given any geodesic of linear momentum $p^\mu$ and a Killing vector field $\xi^\mu$ of the background spacetime, the quantity $C_\xi\triangleq \xi_\alpha p^\alpha$ is constant along the geodesic. One also shows that this property generalizes in the presence of a Killing tensor, and that the invariant mass $\mu^2=-p_\alpha p^\alpha$ is also constant along the geodesic trajectory. Along a geodesic, the conservation equation for a quantity $C(p^\alpha)$ takes the form
\begin{equation}
    \dot{{C}}(p^\alpha)=0\quad\Leftrightarrow\quad p^\mu\nabla_\mu  C(p^\alpha)=0.\label{conservation_equation}
\end{equation}

In applying Rüdiger algorithm, we will tackle the problem in the opposite way (``\textit{conservation requires symmetry}''): given an arbitrary geodesic, is it possible to construct invariants of motion that are polynomial quantities of the linear momentum, \textit{i.e.} that are composed of monomials of the form
\begin{equation}
 C_\mathbf{K}^{(n)}\triangleq K_{\alpha_1\ldots\alpha_n}p^{\alpha_1}\ldots p^{\alpha_n}\label{geodesic_invariant}
\end{equation}
where, at this point, $\mathbf K$ is an arbitrary, by definition totally symmetric tensor of rank $n$? We will show that requiring the conservation of $ C_\mathbf{K}^{(n)}$ will require either $\mathbf K$ to be a Killing vector/tensor or either that $C_\mathbf{K}^{(2)}$ is the invariant mass. 

In order to work out the most general constraint on $\mathbf K$, we plug the definition of $C_\mathbf{K}$ \eqref{geodesic_invariant} into the conservation equation \eqref{conservation_equation}. Using the geodesic equation \eqref{geodesic_equation} and relabelling the indices, one  gets
\begin{equation}
p^\mu\nabla_\mu K_{\alpha_1\ldots\alpha_n} p^{\alpha_1}\ldots p^{\alpha_n}=0.
\end{equation}
The crucial point is that the dynamical variables $p^\alpha$ are \textit{independent} among themselves. The above relation must hold for any values of the independent $p^\alpha$, yielding the general constraint
\begin{equation}
    \nabla_{(\mu}K_{\alpha_1\ldots \alpha_n)}=0.\label{genK}
\end{equation}
The only possible cases for solving this constraint are the following:
\begin{itemize}
    \item for $n=1$, $K_\mu$ must be a Killing vector, $\nabla_{(\mu}K_{\nu)}=0$;
    \item for $n=2$, either $K_{\mu\nu}$ must be a rank-2 Killing tensor ($\nabla_{(\mu}K_{\nu\rho)}=0$), either one takes $K_{\mu\nu}=g_{\mu\nu}$ which leads to the conservation of the invariant mass, $ C_\mathbf{g}^{(2)}=-\mu^2$;
    \item for any $n\geq3$, $\mathbf K$ must be a rank-n Killing tensor.
\end{itemize}
Before turning to the spinning body case, let us make a couple of remarks:
\begin{enumerate}
    \item As stated above, the viewpoint adopted here is reversed with respect to the `traditional' one: we have proven that the existence of conserved quantities along geodesic trajectories that are \textit{polynomial} in the linear momentum require the existence of symmetries of the background spacetime (except for the invariant mass $\mu$ which is always conserved).
    \item Any linear combination of the invariants defined above remains of course invariant. Nevertheless, the conservation can be checked separately at each order in $\mathbf p$ because the application of the conservation condition \eqref{conservation_equation} doesn't change the order in $\mathbf p$ of the terms contained in the resulting expression.
    \item The invariant related to a Killing tensor is  relevant only if the latter is irreducible, \textit{i.e.} if it cannot be written as the product of Killing vectors. Otherwise, the invariant at order $n$ in $\mathbf p$ is just a product of invariants of lower order. 
\end{enumerate}

\subsection{Linear invariants for the linearized MPTD equations}

We now turn to an example which exhibits a non-trivial dependence in the spin. We will seek for a quantity conserved for the linearized MPTD equations \eqref{linearized_MPTD} of the form given in Eq. \eqref{linear_ansatz}, that is
\begin{align}
    \mathcal Q^{(1)}\triangleq X_\mu p^\mu + W_{\mu\nu}S^{\mu\nu}.
\end{align}
Notice that, by construction, $W_{\mu\nu}$ shall be an antisymmetric tensor. Using Eqs. \eqref{linearized_MPTD}, the conservation equation $\dot{\mathcal{Q}}^{(1)}=\mathcal{O}\qty(\mathcal S^2)$ becomes
\begin{align}
    v^\lambda\nabla_\lambda X_\mu p^\mu+v^\lambda\nabla_\lambda W_{\mu\nu}S^{\mu\nu}-\frac{1}{2}X_{\mu}R\tud{\mu}{\nu\alpha\beta}v^\nu S^{\alpha\beta}+2W_{\mu\nu}p^\mu v^\nu=\os{2}.
\end{align}
Expressing this equation in terms of the independent variables $\hat p^\mu$ and $s^\alpha$ yields
\begin{align}
    \mu\nabla_\mu X_\nu\hat p^\mu\hat p^\nu-\qty(2\nabla_\mu W^*_{\nu\alpha}-X_\lambda R\tud{*\lambda}{\mu\nu\alpha})s^\alpha\hat p^\mu\hat p^\nu=\os{2}.
\end{align}
It is equivalent to the set of two independent constraints
\begin{subequations}\label{test_cst}
    \begin{align}
    [0,2]:&\quad \nabla_{(\mu}X_{\nu)}=\os{2},\label{test_cst1}\\
    [1,2]:&\quad 2\nabla_{(\mu}W^*_{\nu)\alpha}-X^\lambda R_{*\lambda(\mu\nu)\alpha}=\os{2}.\label{test_cst2}
\end{align}
\end{subequations}
The development above can be summarized as follows:
\begin{proposition}
For any pair $(X_\mu,W_{\mu\nu})$ satisfying the constraint equations \eqref{test_cst} and assuming the linearized MPTD equations \eqref{linearized_MPTD} are obeyed, the quantity $\mathcal Q^{(1)}$ given in Eq. \eqref{linear_ansatz} will be conserved up to first order in the spin parameter, \textit{i.e.} $\dot{\mathcal{Q}}^{(1)}=\mathcal O\qty(\mathcal S^2)$.
\end{proposition}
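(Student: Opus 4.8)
The plan is to compute $\dot{\mathcal Q}^{(1)}$ directly, substitute the linearized equations of motion, convert to the independent variables $(\hat p^\mu, s^\alpha)$, and read off the vanishing of each grading component — which is precisely the chain of manipulations already sketched just above the statement. Concretely, I would first differentiate $\mathcal Q^{(1)}=X_\mu p^\mu + W_{\mu\nu}S^{\mu\nu}$ along the worldline using $\dot{}\triangleq v^\lambda\nabla_\lambda$ and the Leibniz rule, obtaining
\begin{align*}
\dot{\mathcal Q}^{(1)}=v^\lambda\nabla_\lambda X_\mu\, p^\mu + X_\mu\frac{\text D p^\mu}{\dd\tau}+ v^\lambda\nabla_\lambda W_{\mu\nu}\, S^{\mu\nu}+ W_{\mu\nu}\frac{\text D S^{\mu\nu}}{\dd\tau}.
\end{align*}
Then I would insert the linearized MPTD equations \eqref{MPT_lin_1}--\eqref{MPT_lin_2}: the term $X_\mu\, \text D p^\mu/\dd\tau$ becomes $-\tfrac{1}{2\mu}X_\mu R\tud{\mu}{\nu\alpha\beta}p^\nu S^{\alpha\beta}$, while $W_{\mu\nu}\,\text D S^{\mu\nu}/\dd\tau=0$ exactly at this order. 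Using $p^\mu=\mu v^\mu$ (Eq. \eqref{pv}, valid to $\mathcal O(\mathcal S^2)$) to replace $v^\mu$ by $\hat p^\mu$, I arrive at
\begin{align*}
\dot{\mathcal Q}^{(1)}=\mu\,\nabla_\lambda X_\mu\, \hat p^\lambda \hat p^\mu + \nabla_\lambda W_{\mu\nu}\, \hat p^\lambda\, S^{\mu\nu}\cdot(\text{factor})-\tfrac{1}{2}X_\mu R\tud{\mu}{\nu\alpha\beta}\hat p^\nu S^{\alpha\beta}+\mathcal O(\mathcal S^2),
\end{align*}
keeping careful track of the powers of $\mu$ (they are immaterial for the vanishing conditions but must be bookkept).

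Next I would pass to independent variables. Using the identity $S^{\alpha\beta}=2 S^{[\alpha}\hat p^{\beta]*}=\epsilon^{\alpha\beta\gamma\delta}\hat p_\gamma S_\delta$ (equivalently Eq. \eqref{subsS} with the sign convention of \eqref{defSmu}) and then the substitution $S^\alpha=\Pi^\alpha_\beta s^\beta$, the spin-tensor contractions become contractions with $s^\alpha$ and the Levi-Civita tensor; the projector drops out because every surviving index is already contracted against a quantity orthogonal to $\hat p$ (or the orthogonal part is killed by the antisymmetry of $\epsilon$). The Riemann term then carries a dual, $X_\mu R\tud{\mu}{\nu\alpha\beta}S^{\alpha\beta}\to X^\lambda R\tdu{*\lambda}{\mu\nu\alpha}\,s^\alpha\hat p^\mu\hat p^\nu$ up to a numerical factor and signs, and the $\nabla W$ term becomes $\nabla_\mu W^*_{\nu\alpha}\,s^\alpha\hat p^\mu\hat p^\nu$. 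Collecting, $\dot{\mathcal Q}^{(1)}$ splits into a grading-$[0,2]$ piece $\propto\nabla_{(\mu}X_{\nu)}\hat p^\mu\hat p^\nu$ and a grading-$[1,2]$ piece $\propto(2\nabla_{(\mu}W^*_{\nu)\alpha}-X^\lambda R_{*\lambda(\mu\nu)\alpha})\,s^\alpha\hat p^\mu\hat p^\nu$.

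Finally, since $\hat p^\mu$ and $s^\alpha$ are now genuinely independent (the SSC constraint having been relaxed) and otherwise arbitrary at each spacetime point, the vanishing of $\dot{\mathcal Q}^{(1)}$ up to $\mathcal O(\mathcal S^2)$ is equivalent to the simultaneous vanishing of the symmetric parts of the two tensorial coefficients, i.e. to Eqs. \eqref{test_cst1}--\eqref{test_cst2}. Conversely, if $(X_\mu,W_{\mu\nu})$ satisfy \eqref{test_cst}, every term on the right-hand side is $\mathcal O(\mathcal S^2)$, which is the claim. The only mild subtlety — and the step I expect to demand the most care — is the bookkeeping of Levi-Civita contractions and signs when trading $S^{\alpha\beta}$ for $s^\alpha$: one must verify that no grading-$[1,2]$ contribution is lost or double-counted when the projector $\Pi^\alpha_\beta$ is applied, and that the dual placement in the Riemann term matches the convention fixed in \eqref{defSmu}. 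This is routine but is where an overlooked factor of two or a sign would propagate into the constraint equations; everything else is a direct application of the Leibniz rule and the linearized equations of motion.
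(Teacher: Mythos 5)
Your proposal is correct and follows essentially the same route as the paper: differentiate $\mathcal Q^{(1)}$ along the worldline, substitute the linearized MPTD equations (the paper keeps the term $2W_{\mu\nu}p^\mu v^\nu$, which vanishes by antisymmetry once $p^\mu=\mu v^\mu$ is used, exactly as your "$W_{\mu\nu}\,\mathrm{D}S^{\mu\nu}/\dd\tau=0$" step), dualize the spin tensor to pass to the independent variables $(\hat p^\mu,s^\alpha)$, and identify the vanishing of the $[0,2]$ and $[1,2]$ graded coefficients with Eqs.~\eqref{test_cst1}--\eqref{test_cst2}. The sign/factor bookkeeping in the dualization that you flag is indeed the only delicate point, and it works out as you anticipate.
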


The set of equations \eqref{test_cst} admits two independent solutions:
\begin{itemize}
    \item For $X^\mu\neq 0$, Eq. \eqref{test_cst1} is satisfied provided that $X^\mu$ is a Killing vector. Recall that for any Killing vector $X^\mu$, one has the Kostant formula \cite{carroll2003spacetime}
\begin{align}
    \nabla_\mu\nabla_\nu X_\alpha=X_\lambda R\tud{\lambda}{\mu\nu\alpha}
\end{align}
and the second constraint Eq. \eqref{test_cst2} will admit a solution if the stronger constraint
\begin{align}
    2\nabla_{\mu}\qty(W_{\nu\alpha}-\frac{1}{2}\nabla_{\nu}X_\alpha)=\os{2}
\end{align}
holds. It is naturally solved by
\begin{align}
    W_{\nu\alpha}=\frac{1}{2}\nabla_\nu X_\alpha.
\end{align}
At the end of the day, we have shown that the quantity
\boxedeqn{
    \mathcal C_X=X_\mu p^\mu+\frac{1}{2}\nabla_\mu X_\nu S^{\mu\nu}\label{killing_conserved_spin}
}{Conserved quantity for Killing vectors}

\noindent
is conserved up to $\os{2}$ corrections along the motion generated by the linearized MPTD equations \eqref{linearized_MPTD}. This is the generalization to the case of spinning objects of the quantity conserved for Killing vector along geodesics. Actually, one can show that this quantity is exactly conserved at any order of the multipole expansion, regardless to the nature of the multipoles considered \cite{dixon1979}.

\item The constraints equations \eqref{test_cst} admit another independent solution, which corresponds to $X^\mu=0$. In that case, the second equation \eqref{test_cst2} reduces to
\begin{align}
    \nabla_{(\mu}W^*_{\nu)\alpha}=\os{2}.
\end{align}
It is satisfied provided that $W^*_{\mu\nu}$ is a Killing-Yano tensor. Denoting $Y_{\mu\nu}=W^*_{\mu\nu}$, the corresponding conserved quantity is
\boxedeqn{
    \mathcal Q_Y=Y^*_{\mu\nu}S^{\mu\nu}.\label{linear_rudiger_invariant}
}{Rüdiger linear invariant}
This is \defining{Rüdiger linear invariant}, which was first unravelled by R. Rüdiger in 1981 \cite{doi:10.1098/rspa.1981.0046}.
\end{itemize}

\chapter[{Solving the constraints: first order in the spin}]{Solving the constraints:\\first order in the spin}
\label{chap:first_order_solutions}
\chaptermark{\textsc{First order in the spin}}

\vspace{\stretch{1}}

In this chapter, we will be concerned with the construction of quadratic invariants for the linearized MPTD equations \eqref{linearized_MPTD}. 
It is thus consistent to consider a quadratic invariant that is at most linear in $\mathcal S$:\begin{equation}
    \mathcal Q^{(2)}\triangleq K_{\mu\nu}p^\mu p^\nu+L_{\mu\nu\rho}S^{\mu\nu}p^\rho.\label{quadratic_invariant}
\end{equation}
The tensors $K_{\mu\nu}$ and $L_{\mu\nu\rho}$, which are by definition independent of $p^\mu$ and $S^{\mu\nu}$, satisfy the algebraic symmetries $K_{\mu\nu}=K_{(\mu\nu)}$ and $L_{\mu\nu\rho}=L_{[\mu\nu]\rho}$.

In Ricci-flat spacetimes that admit a Killing-Yano tensor on their background, we will show that the conservation equation Eq. \eqref{quadratic_invariant} admits a solution which is a generalization of the usual Carter constant, already found by R. Rüdiger in 1983 \cite{doi:10.1098/rspa.1983.0012}. The uniqueness of this construction will be proven in Kerr spacetime.

This chapter is organized as follows. In Section \ref{sec:cst_lin}, we explicit the set of constraints that must be fulfilled for such an invariant at most linear in the spin to exist. Conservation at linear order requires to solve only two constraints. We simplify the second, most difficult, constraint in Section \ref{sec:C2}. 
We subsequently particularize our setup in Section \ref{sec:KY} to spacetimes admitting a Killing-Yano (KY) tensor. After deriving some general properties of KY tensors, we will prove a cornerstone result for the continuation of our work, which we will refer to as the \textit{central identity}. Building on all previous sections, we will solve the aforementioned constraint for Ricci-flat (vacuum) spacetimes possessing a KY tensor in Section \ref{sec:unicity}. This will enable us to study in full generality the quasi-invariants for the MPTD equations that are quadratic in the combination of spin and momentum. On the one hand, we recover Rüdiger's results \cite{doi:10.1098/rspa.1981.0046,doi:10.1098/rspa.1983.0012}. On the other hand, we prove that the existence of any further quasi-invariant, which is then necessarily homogeneously linear in the spin, reduces to the existence of a non-trivial mixed-symmetry Killing tensor on the background. The significance of this result is examined for spinning test bodies in Kerr spacetime in the final Section \ref{sec:Kerr}. We show that a  stationary and axisymmetric non-trivial mixed-symmetry Killing tensor does not exist on the Kerr geometry. Consequently, an additional independent quasi-constant of motion for the linearized MPTD equations does not exist. As will be detailed later on in Chapter \ref{chap:integrability}, the linearized MPTD integrals of motion are not in involution, which implies that the system is not integrable in the sense of Liouville.

\section{Constraints for a quadratic invariant linear in \texorpdfstring{$\mathcal S$}{}: pole-dipole MPTD equations, to quadratic order in the spin}\label{sec:cst_lin}

Despite we will hereafter focus on the linearized MPTD equations, we will derive the generic constraints equation for the quantity Eq. \eqref{quadratic_invariant} to be conserved for the full pole-dipole MPTD equations, up to second order in the spin. Even the pole-dipole equations (that is, the MPTD equations with the force and torque terms being set to zero) only make sense at linear order in the spin $\mathcal S$, it is computationally relevant to derive here the constraint they generate at $\os{2}$. This will allow us (in the next chapter) to disentangle the terms originating from the pole-dipole sector of the equations from the ones originating from the quadrupole sector. The main advantage taken will be to significantly reduce the computational load that one will encounter.

\subsubsection{Useful identity for the four-velocity}
Let us start by deriving a useful identity. It can be shown \cite{1977GReGr...8..197E,Suzuki:1997tg,Batista:2020cto} that the following relation holds exactly:
\begin{align}
    v^\mu=\frac{\mathfrak m}{\mu^2}\qty(p^\mu+\frac{D\tud{\mu}{\alpha}p^\alpha}{1-\frac{d}{2}}),\qquad d&\triangleq D\tud{\alpha}{\alpha}.\label{velocity}
\end{align}
Moreover, one can show that\begin{equation}
    d=-\frac{2}{\mu^2}{}^*\!R^*_{\alpha\beta\gamma\delta} S^\alpha \hat p^\beta S^\gamma \hat p^\delta. 
\end{equation}
Putting all the pieces together, one can rewrite Eq. \eqref{velocity} as
\begin{align}
\begin{split}
    \frac{\mu^2}{\mathfrak m}\qty(1-\frac{d}{2})v^\mu&=\qty(1-\frac{d}{2})p^\mu+D\tud{\mu}{\alpha}p^\alpha \\
    &=p^\mu-\frac{1}{\mu}\qty(\hat p^\mu\hat p^\alpha-\Pi^{\mu\alpha}){}^*\!R^*_{\alpha\beta\gamma\delta}S^\beta S^\gamma \hat p^\delta\\
    &=p^\mu+\frac{1}{\mu}g^{\mu\alpha}{}^*\!R^*_{\alpha\beta\gamma\delta}S^\beta S^\gamma \hat p^\delta \\
    &=p^\mu+\frac{1}{\mu}{}^*\!{R^*}\tud{\mu}{\beta\gamma\delta}S^\beta S^\gamma \hat p^\delta.\label{velocity2}
\end{split}
\end{align}
This relation will be a fundamental building block of the forthcoming computations.
 
\subsubsection{Reduction of the conservation equation} 
Using the MPTD equations \eqref{MPD} and the expression \eqref{velocity2} for the four-velocity, the conservation equation for the quantity \eqref{quadratic_invariant} can be written
\begin{align}
\begin{split}
    \dot{\mathcal Q}^{(2)}
    &=\Xi\qty(p^\lambda+\frac{1}{\mu}\sRs\tud{\lambda}{\kappa\theta\sigma}S^\kappa S^\theta \hat p^\sigma)
    \bigg[\qty(\nabla_\lambda K_{\mu\nu}-2L_{\lambda\mu\nu})p^\mu p^\nu \\
    &\quad+ \qty(\nabla_\lambda L_{\alpha\beta\mu}-K_{\mu\rho}R\tud{\rho}{\lambda\alpha\beta})S^{\alpha\beta}p^\mu-\frac{1}{2}L_{\alpha\beta\rho}R\tud{\rho}{\lambda\gamma\delta}S^{\alpha\beta}S^{\gamma\delta}\bigg]\overset{!}{=}\os{3}.
\end{split}
\end{align}
Here, the coefficient $\Xi\triangleq\frac{\mathfrak m^2}{\mu^2\qty(1-d/2)}$ is non-vanishing. Let us introduce the tensors
\begin{subequations}
\begin{align}
    U_{\alpha\beta\gamma}&\triangleq\nabla_\gamma K_{\alpha\beta}-2L_{\gamma(\alpha\beta)},\\
    V_{\alpha\beta\gamma\delta}&\triangleq\nabla_\delta L_{\alpha\beta\gamma}-K_{\lambda\gamma}R\tud{\lambda}{\delta\alpha\beta}+\frac{2}{3}K_{\lambda\rho}{R\tud{\lambda}{\delta[\alpha}}^\rho g_{\beta]\gamma},\\
    W_{\alpha\beta\gamma\delta\epsilon}&\triangleq-\frac{1}{2}L_{\alpha\beta\lambda}R\tud{\lambda}{\gamma\delta\epsilon}.
\end{align}
\end{subequations}
By construction, they obey the algebraic symmetries $U_{\alpha\beta\gamma}=U_{(\alpha\beta)\gamma}$, $V_{\alpha\beta\gamma\delta}=V_{[\alpha\beta]\gamma\delta}$, $W_{\alpha\beta\gamma\delta\epsilon}=W_{[\alpha\beta]\gamma\delta\epsilon}$. Notice that the orthogonality conditions $S^{\alpha\beta}p_\beta=0$ imply
\begin{align}
    \frac{2}{3}K_{\lambda\rho}{R\tud{\lambda}{\delta[\alpha}}^\rho g_{\beta]\gamma}S^{\alpha\beta}p^\gamma=0.
\end{align}
With these notations, he conservation equation simplifies to
\begin{align}
\begin{split}
    \dot{\mathcal Q}^{(2)}&={\Xi}\qty(p^\lambda+\frac{1}{\mu}\sRs\tud{\lambda}{\kappa\theta\sigma}S^\kappa S^\theta \hat p^\sigma)\\
    &\quad\times\qty[U_{\mu\nu\lambda}p^\mu p^\nu+V_{\alpha\beta\mu\lambda}S^{\alpha\beta}p^\mu+W_{\alpha\beta\lambda\gamma\delta}S^{\alpha\beta}S^{\gamma\delta}]\overset{!}{=}\os{3}.
\end{split}
\end{align}

We will now go through a number of steps in order to express this condition in terms of the independent variables $s_\alpha$ and $\hat p^\mu$.
First, let us expand all terms and express the spin-related quantities in terms of the independent variables $s^\alpha$. For this purpose, we will make use of the aforementioned identities
\begin{align}
    p^{[\alpha}S^{\beta]}&=p^{[\alpha}s^{\beta]},\qquad S^{\alpha\beta}=2S^{[\alpha}\hat p^{\beta]*}=2s^{[\alpha}\hat p^{\beta]*},\qquad S^\alpha=\Pi^\alpha_\beta s^\beta.\label{id}
\end{align}
The conservation equation becomes
\begin{align}
    \begin{split}
    \dot{\mathcal Q}^{(2)}&={\frac{\Xi}{\mu}}\Bigg[\mu^4 U_{\mu\nu\rho}\hat p^\mu \hat p^\nu \hat p^\rho +2 \mu^3\sV_{\alpha\mu\nu\rho}s^\alpha \hat p^\mu \hat p^\nu \hat p^\rho\\
    &\quad +\mu^2\qty(4\sWs_{\alpha\mu\nu\beta\rho}+\sRs\tud{\lambda}{\kappa\alpha\rho}U_{\mu\nu\lambda}\Pi^\kappa_\beta) s^\alpha s^\beta \hat p^\mu \hat p^\nu \hat p^\rho\\
    &\quad+ 2\mu \sRs\tud{\lambda}{\kappa\alpha\rho}\sV_{\gamma\mu\nu\lambda}\Pi^\kappa_\beta s^\alpha s^\beta s^\gamma \hat p^\mu \hat p^\nu \hat p^\rho \\
    &\quad+4 \sRs\tud{\lambda}{\kappa\alpha\rho} \sWs_{\delta\mu\lambda\gamma\nu}\Pi^\kappa_\beta s^\alpha s^\beta s^\gamma s^\delta \hat p^\mu \hat p^\nu \hat p^\rho\Bigg]\overset{!}{=}\os{3}.
    \end{split}
\end{align}
Second, we will remove the projectors. One has the identity
\begin{align}
    \sRs\tud{\lambda}{\kappa\alpha\rho}\Pi_\beta^\kappa s^\alpha s^\beta=- I\tud{\lambda\alpha\beta}{\rho\sigma\kappa}s_\alpha s_\beta \hat p^\sigma \hat p^\kappa
\end{align}
where we have defined
\begin{align}
    I\tud{\lambda\alpha\beta}{\rho\sigma\kappa}\triangleq {\sRs\tud{\lambda}{\kappa\rho}}^\alpha\delta_\sigma^\beta+\sRs\tud{\lambda\alpha\beta}{\rho}g_{\sigma\kappa}.
\end{align}
The proof is easily carried out, using the fact that $\hat p_\mu \hat p^\mu=-1$: 
\begin{align}
     \begin{split}
     \sRs\tud{\lambda}{\kappa\alpha\rho}\Pi_\beta^\kappa s^\alpha s^\beta&= \sRs\tud{\lambda}{\kappa\alpha\rho}\qty(\delta^\kappa_\beta+\hat p^\kappa \hat p_\beta) s^\alpha s^\beta\\
     &=\qty[\qty(-\hat p^\sigma \hat p^\kappa g_{\sigma\kappa}) \sRs\tud{\lambda}{\beta\alpha\rho}+\sRs\tud{\lambda}{\kappa\alpha\rho}\hat p^\kappa \delta_\beta^\sigma\hat p_\sigma]s^\alpha s^\beta\\
     &=-\qty(\sRs{\tud{\lambda}{\kappa\rho}}^\alpha\delta^\beta_\sigma+\sRs\tud{\lambda\alpha\beta}{\rho}g_{\sigma\kappa})s_\alpha s_\beta \hat p^\sigma \hat p^\kappa.
     \end{split}
\end{align}
Using this identity, the conservation equation finally reads as
\begin{align}
   \begin{split}
    \dot{\mathcal Q}^{(2)}&={\frac{\Xi}{\mu}}\Bigg[\mu^4 U_{\mu\nu\rho}\hat p^\mu \hat p^\nu \hat p^\rho+2\mu^3 \sV\tud{\alpha}{\mu\nu\rho}s_\alpha\hat p^\mu \hat p^\nu \hat p^\rho\\
    &\quad-\mu^2\qty(I\tud{\lambda\alpha\beta}{\rho\sigma\kappa}U_{\mu\nu\lambda}+4\sWs\tudud{\alpha}{\mu\nu}{\beta}{\rho}\, g_{\sigma\kappa})s_\alpha s_\beta \hat p^\mu \hat p^\nu \hat p^\rho \hat p^\sigma \hat p^\kappa\\
    &\quad-2\mu I\tud{\lambda\alpha\beta}{\rho\sigma\kappa}\sV\tud{\gamma}{\mu\nu\lambda}s_\alpha s_\beta s_\gamma \hat p^\mu \hat p^\nu \hat p^\rho \hat p^\sigma \hat p^\kappa\\
    &\quad-4 I\tud{\lambda\alpha\beta}{\rho\sigma\kappa}\sWs\tudud{\gamma}{\mu\lambda}{\delta}{\nu}\,s_\alpha s_\beta s_\gamma s_\delta \hat p^\mu \hat p^\nu \hat p^\rho \hat p^\sigma \hat p^\kappa\Bigg] \overset{!}{=}\os{3}.\label{conservation_eqn}
   \end{split}
\end{align}
 
Because the variables $\hat p^\mu$ and $s_\alpha$ are independent, Eq. \eqref{conservation_eqn} is equivalent to the following set of three constraints, each of them arising at a different order in the spin parameter:
\begin{subequations}
\begin{align}
    [0,3]:&\quad U_{(\mu\nu\rho)}=\os{3},\label{C1}\\
    [1,3]:&\quad\sV\tud{\alpha}{(\mu\nu\rho)}=\os{3},\label{C2}\\
    [2,5]:&\quad I\tud{\lambda(\alpha\beta)}{(\mu\nu\rho}U_{\sigma\kappa)\lambda}+4\sWs\tudud{(\alpha}{(\mu\nu}{\beta)}{\rho}\,g_{\sigma\kappa)}=\os{3},\label{C3}
\end{align}
\end{subequations}

Notice that the $[0,3]$ constraint \eqref{C1} simply reduces to
\begin{equation}
    \nabla_{(\alpha}K_{\beta\gamma)}=0,
\end{equation}
\textit{i.e.} $K_{\mu\nu}$ must be a Killing tensor of the background spacetime.

The $[1,3]$ constraint \eqref{C2} is more difficult to work out. In Section \ref{sec:C2}, we will proceed to a clever rewriting of this constraint, which will then be particularized to spacetimes admitting a Killing-Yano tensor in Section \ref{sec:KY}. Section \ref{sec:unicity} will aim to solve it generally. Finally, all these results will be particularized to a Kerr background in Section \ref{sec:Kerr}. The $[2,5]$ constraint Eq. \eqref{C3} will be a fundamental building block for computations of Chapter \ref{chap:second_order_solution}.

\section{Conservation equation at linear order in the spin}\label{sec:C2}
We will now proceed to the aforementioned rewriting of the constraint \eqref{C2} by introducing a new set of variables. The three first parts of this section are devoted to the derivation of preliminary results, that will be crucial for working out the main result.

\subsection{Dual form of $\mathbf V$}
We want to compute the dual form of the tensor
\begin{align}
    V_{\alpha\beta\gamma\delta}&\triangleq\nabla_\delta L_{\alpha\beta\gamma}-K_{\lambda\gamma}R\tud{\lambda}{\delta\alpha\beta}+\frac{2}{3}K_{\lambda\rho}{R\tud{\lambda}{\delta[\alpha}}^\rho g_{\beta]\gamma}
\end{align}
with respect to its two first indices. One has
\begin{align}
    \sV_{\alpha\beta\gamma\delta}=\nabla_\delta \sL_{\alpha\beta\gamma}-K_{\lambda\gamma} {R^*}\tud{\lambda}{\delta\alpha\beta}+\frac{2}{3}K_{\lambda\rho}R\tudu{\lambda}{\delta[\alpha}{\rho}g_{\beta]*\gamma}.
\end{align}
The last term of this equality can be written as
\begin{align}
    \begin{split}
        \frac{2}{3}K_{\lambda\rho}R\tudu{\lambda}{\delta[\alpha}{\rho}g_{\beta]*\gamma}&=\frac{1}{3}K^{\lambda\rho}\epsilon\tdu{\alpha\beta}{\mu\nu}R_{\lambda\delta\mu\rho}g_{\nu\gamma}
    =\frac{1}{3}K^{\lambda\rho}\epsilon\tdud{\alpha\beta}{\mu}{\gamma}R_{\lambda\delta\mu\rho}\\
    &=\frac{1}{3}K^{\lambda[\mu}\epsilon\tdud{\alpha\beta}{\nu]}{\gamma}R^{**}_{\lambda\delta\mu\nu}
    =\frac{1}{3}K^{\lambda[\mu}\epsilon\tdud{\alpha\beta}{\nu]*}{\gamma}R^*_{\lambda\delta\mu\nu}\\
    &=-\frac{1}{6}\epsilon^{\sigma\mu\nu\rho}\epsilon_{\sigma\alpha\beta\gamma}K_{\lambda\rho}R\tud{*\lambda}{\delta\mu\nu}\\
    &=\frac{1}{3}\qty(K_{\lambda\gamma}R\tud{*\lambda}{\delta\alpha\beta}+K_{\lambda\alpha}R\tud{*\lambda}{\delta\beta\gamma}+K_{\lambda\beta}R\tud{*\lambda}{\delta\gamma\alpha})\\
    &=\frac{1}{3}K_{\gamma\lambda}R\tud{*\lambda}{\delta\alpha\beta}+\frac{2}{3}R\tud{*\lambda}{\delta\gamma[\alpha}K_{\beta]\lambda}.
    \end{split}
\end{align}
This finally yields
\begin{align}
    \sV_{\alpha\beta\gamma\delta}=\nabla_\delta\sL_{\alpha\beta\gamma}-\frac{2}{3}K_{\lambda\gamma}R\tud{*\lambda}{\delta\alpha\beta}+\frac{2}{3}R\tud{*\lambda}{\delta\gamma[\alpha}K_{\beta]\lambda}.
\end{align}

\subsection{Rüdiger variables}
 
Following Rüdiger \cite{doi:10.1098/rspa.1983.0012}, let us introduce
\begin{align}
    \tilde X_{\alpha\beta\gamma}\triangleq L_{\alpha\beta\gamma}-\frac{1}{3}\qty(\lambda_{\alpha\beta\gamma}+g_{\gamma[\alpha}\nabla_{\beta]}K),\label{tildeX}
\end{align}
where we have made use of the notations
\begin{align}
    \lambda_{\alpha\beta\gamma}\triangleq 2\nabla_{[\alpha}K_{\beta]\gamma},\qquad K\triangleq K^\alpha_{\;\;\alpha}.\label{def_lambda}
\end{align}
The irreducible parts $X_\alpha$ and $X_{\alpha\beta\gamma}$ of $\tilde X_{\alpha\beta\gamma}$ are defined through the relation
\begin{align}
    \tilde X_{\alpha\beta\gamma}\triangleq X_{\alpha\beta\gamma}+\epsilon_{\alpha\beta\gamma\delta}X^\delta,\qquad \text{ with }X_{[\alpha\beta\gamma]}\overset{!}{=}0.\label{irreducible_parts}
\end{align}
They provide an equivalent description, since Eq. \eqref{irreducible_parts} can be inverted as
\begin{align}
    X_{\alpha\beta\gamma}&=\tilde X_{\alpha\beta\gamma}-\tilde X_{[\alpha\beta\gamma]},\\
    X^\alpha&=\frac{1}{6}\epsilon^{\alpha\beta\gamma\delta}\tilde X_{\beta\gamma\delta}.
\end{align}
Finally, a simple computation shows that the dual of $\mathbf{\tilde X}$ is given by
\begin{align}
    {}^*\tilde X_{\alpha\beta\gamma}={}^*X_{\alpha\beta\gamma}-2g_{\gamma[\alpha}X_{\beta]}.\label{dual_tilde_X}
\end{align}
 
\subsection{The structural equation}
This third preliminary part will be devoted to the proof of the \textit{structural equation} \cite{doi:10.1098/rspa.1983.0012}
\boxedeqn{
    \nabla_\delta\lambda_{\alpha\beta\gamma}=2\qty(R\tud{\lambda}{\delta\alpha\beta}K_{\gamma\lambda}-R\tud{\lambda}{\delta\gamma[\alpha}K_{\beta]\lambda})+\mu_{\alpha\beta\gamma\delta}\label{structural}
}{Structural equation}
with
\begin{align}
    \begin{split}
        \mu_{\alpha\beta\gamma\delta}&\triangleq\frac{1}{2}\bigg[K_{\beta\gamma;(\alpha\delta)}+K_{\alpha\delta;(\beta\gamma)}-K_{\alpha\gamma;(\beta\delta)}-K_{\beta\delta;(\alpha\gamma)}\\
        &\quad-3\qty(K_{\lambda[\alpha}R\tud{\lambda}{\beta]\gamma\delta}+K_{\lambda[\gamma}R\tud{\lambda}{\delta]\alpha\beta})\bigg].\label{mu}
    \end{split}
\end{align}
Here, $\mu_{\alpha\beta\gamma\delta}$ possesses the same algebraic symmetries than the Riemann tensor.
We remind the reader that $\lambda_{\alpha\beta\gamma}\triangleq 2\nabla_{[\alpha}K_{\beta]\gamma}$. We will use indifferently the notations $\nabla_\alpha \mathbf T$ or $\mathbf T_{;\alpha}$ for the covariant derivative of a tensor $\mathbf T$.
The proof goes as a lengthy rewriting of the original expression:
\begin{align}
    \begin{split}
    \nabla_\delta\lambda_{\alpha\beta\gamma}&=\nabla_\delta\nabla_\alpha K_{\beta\gamma}-\nabla_\delta\nabla_\beta K_{\alpha\gamma}\\
    &=\nabla_{(\delta}\nabla_{\alpha)} K_{\beta\gamma}+\nabla_{[\delta}\nabla_{\alpha]} K_{\beta\gamma}-\nabla_{(\delta}\nabla_{\beta)} K_{\alpha\gamma}-\nabla_{[\delta}\nabla_{\beta]} K_{\alpha\gamma}\\
    &=\frac{1}{2}\nabla_{(\alpha}\nabla_{\delta)}K_{\beta\gamma}-\frac{1}{2}\nabla_{(\beta}\nabla_{\delta)}K_{\alpha\gamma}+\frac{1}{2}\qty(\nabla_{(\alpha}\nabla_{\delta)}K_{\beta\gamma}-\nabla_{(\beta}\nabla_{\delta)}K_{\alpha\gamma})\\
    &\quad+\frac{1}{2}\comm{\nabla_\delta}{\nabla_\alpha}K_{\beta\gamma}-\frac{1}{2}\comm{\nabla_\delta}{\nabla_\beta}K_{\alpha\gamma}.
    \end{split}
\end{align}
We proceed to the following rewriting of twice the quantity in brackets contained in the above expression:
\begin{align}
    \begin{split}
    &\nabla_\alpha\nabla_\delta K_{\beta\gamma}+\nabla_\delta\nabla_\alpha K_{\beta\gamma}-\nabla_\beta\nabla_\delta K_{\alpha\gamma}-\nabla_\delta\nabla_\beta K_{\alpha\gamma}\\
    &=2\nabla_\alpha\nabla_\delta K_{\beta\gamma}-2\nabla_\beta\nabla_\delta K_{\alpha\gamma}+\comm{\nabla_\delta}{\nabla_\alpha}K_{\beta\gamma}-\comm{\nabla_\delta}{\nabla_\beta}K_{\alpha\gamma}\\
    &=2\qty(\nabla_\beta\nabla_\alpha K_{\gamma\delta}+\nabla_\beta\nabla_\gamma K_{\alpha\delta}-\nabla_\alpha\nabla_\beta K_{\gamma\delta}-\nabla_\alpha\nabla_\gamma K_{\beta\delta})\\&\quad+\comm{\nabla_\delta}{\nabla_\alpha}K_{\beta\gamma}-\comm{\nabla_\delta}{\nabla_\beta}K_{\alpha\gamma}\\
    &=2\qty(\nabla_{(\beta}\nabla_{\gamma)}K_{\alpha\delta}-\nabla_{(\alpha}\nabla_{\gamma)}K_{\beta\delta})+\comm{\nabla_\delta}{\nabla_\alpha}K_{\beta\gamma}-\comm{\nabla_\delta}{\nabla_\beta}K_{\alpha\gamma}\\
    &\quad-2\comm{\nabla_\alpha}{\nabla_\beta}K_{\gamma\delta}-\comm{\nabla_\alpha}{\nabla_\gamma}K_{\beta\delta}+\comm{\nabla_\beta}{\nabla_\gamma}K_{\alpha\delta}.
    \end{split}
\end{align}
This yields
\begin{align}
    \nabla_\delta\lambda_{\alpha\beta\gamma}&=\frac{1}{2}\qty(K_{\beta\gamma;(\alpha\delta)}+K_{\alpha\delta;(\beta\gamma)}-K_{\alpha\gamma;(\beta\delta)}-K_{\beta\delta;(\alpha\gamma)})\nonumber\\
    &\quad+\bigg(\frac{3}{4}\comm{\nabla_\delta}{\nabla_\alpha}K_{\beta\gamma}-\frac{3}{4}\comm{\nabla_\delta}{\nabla_\beta}K_{\alpha\gamma}-\frac{1}{2}\comm{\nabla_\alpha}{\nabla_\beta}K_{\gamma\delta}\\
    &\quad-\frac{1}{4}\comm{\nabla_\alpha}{\nabla_\gamma}K_{\beta\delta}+\frac{1}{4}\comm{\nabla_\beta}{\nabla_\gamma}K_{\alpha\delta}\bigg)\label{structural_int}.
\end{align}
Let us denote $\heart$ the quantity between parentheses in the last equation. It can be rearranged in the following way:
\begin{align}
\begin{split}
        4\heart&=3\comm{\nabla_\delta}{\nabla_\alpha}K_{\beta\gamma}-3\comm{\nabla_\delta}{\nabla_\beta}K_{\alpha\gamma}-2\comm{\nabla_\alpha}{\nabla_\beta}K_{\gamma\delta}\\
    &\quad-\comm{\nabla_\alpha}{\nabla_\gamma}K_{\beta\delta}+\comm{\nabla_\beta}{\nabla_\gamma}K_{\alpha\delta}\\
    &=\qty(3 R\tud{\lambda}{\gamma\delta\beta}-R\tud{\lambda}{\delta\beta\gamma})K_{\alpha\lambda}+\qty(R\tud{\lambda}{\delta\alpha\gamma}-3R\tud{\lambda}{\gamma\delta\alpha})K_{\beta\lambda}\nonumber\\
    &\quad+\qty(3R\tud{\lambda}{\alpha\delta\beta}-3R\tud{\lambda}{\beta\delta\alpha}+2R\tud{\lambda}{\delta\alpha\beta})K_{\gamma\lambda}+\qty(2R\tud{\lambda}{\gamma\alpha\beta}+R\tud{\lambda}{\beta\alpha\gamma}-R\tud{\lambda}{\alpha\beta\gamma})K_{\delta\lambda}\\
    &=5R\tud{\lambda}{\delta\alpha\beta}K_{\gamma\lambda}+4\qty(R\tud{\lambda}{\delta\alpha\gamma}K_{\beta\lambda}-R\tud{\lambda}{\delta\beta\gamma}K_{\alpha\lambda})\\
    &\quad+3\qty(R\tud{\lambda}{\alpha\gamma\delta}K_{\beta\lambda}-R\tud{\lambda}{\beta\gamma\delta}K_{\alpha\lambda}+R\tud{\lambda}{\gamma\alpha\beta}K_{\delta\lambda})\\
    &=5R\tud{\lambda}{\delta\alpha\beta}K_{\gamma\lambda}+8R\tud{\lambda}{\delta[\alpha|\gamma}K_{|\beta]\lambda}-6K_{\lambda[\alpha}R\tud{\lambda}{\beta]\gamma\delta}\\
    &\quad+3 R\tud{\lambda}{\gamma\alpha\beta}K_{\delta\lambda}\underbrace{-3R\tud{\lambda}{\delta\alpha\beta}K_{\gamma\lambda}+3R\tud{\lambda}{\delta\alpha\beta}K_{\gamma\lambda}}_{=0}\\
    &=8R\tud{\lambda}{\delta\alpha\beta}K_{\gamma\lambda}-8 R\tud{\lambda}{\delta\gamma[\alpha}K_{\beta]\lambda}-6K_{\lambda[\alpha}R\tud{\lambda}{\beta]\gamma\delta}-6 K_{\lambda[\gamma}R\tud{\lambda}{\delta]\alpha\beta}.
\end{split}
\end{align}
Consequently,
\begin{align}
    \heart=2\qty(R\tud{\lambda}{\delta\alpha\beta}K_{\gamma\lambda}- R\tud{\lambda}{\delta\gamma[\alpha}K_{\beta]\lambda})-\frac{3}{2}\qty(K_{\lambda[\alpha}R\tud{\lambda}{\beta]\gamma\delta}+ K_{\lambda[\gamma}R\tud{\lambda}{\delta]\alpha\beta}).
\end{align}
Inserting this result into Eq. \eqref{structural_int} gives the structural equation \eqref{structural} and consequently concludes the proof.

We will end this section by working out the dual form of the structural equation \eqref{structural}. One has
\begin{align}
    \begin{split}
    \nabla_\delta{}^*\!\lambda_{\alpha\beta\gamma}&=2R\tud{*\lambda}{\delta\alpha\beta}K_{\gamma\lambda}-\epsilon\tdu{\alpha\beta}{\mu\nu}R\tud{\lambda}{\delta\gamma\mu}K_{\nu\lambda}+{}^*\!\mu_{\alpha\beta\gamma\delta}\\
    &=2R\tud{*\lambda}{\delta\alpha\beta}K_{\gamma\lambda}+\epsilon\tdu{\alpha\beta}{\mu\nu}R\tud{**\lambda}{\delta\gamma\mu}K_{\nu\lambda}+{}^*\!\mu_{\alpha\beta\gamma\delta}.
    \end{split}
\end{align}
It is now easier to compute
\begin{align}
    \begin{split}
        \nabla_\delta{}^*\!\lambda\tdu{\alpha\beta}{\gamma}&=2R\tud{*\lambda}{\delta\alpha\beta}K\tud{\gamma}{\lambda}-\frac{1}{2}\epsilon_{\mu\alpha\beta\nu}\epsilon^{\mu\gamma\rho\sigma}R\tud{*\lambda}{\delta\rho\sigma}K\tud{\nu}{\lambda}+{}^*\!\mu\tdud{\alpha\beta}{\gamma}{\delta}\\
    &=2R\tud{*\lambda}{\delta\alpha\beta}K\tud{\gamma}{\lambda}+3\,\delta^{[\gamma}_\alpha\delta^\rho_\beta\delta_\nu^{\sigma]}R\tud{*\lambda}{\delta\rho\sigma}K\tud{\nu}{\lambda}+{}^*\!\mu\tdud{\alpha\beta}{\gamma}{\delta},
    \end{split}
\end{align}
which yields
\begin{align}
    \begin{split}
    \nabla_\delta{}^*\!\lambda_{\alpha\beta\gamma}&=2R\tud{*\lambda}{\delta\alpha\beta}K_{\gamma\lambda}+3\,g_{\gamma[\alpha|}R\tud{*\lambda}{\delta|\beta\nu]}K\tud{\nu}{\lambda}+{}^*\!\mu_{\alpha\beta\gamma\delta}\\
    &=2R\tud{*\lambda}{\delta\alpha\beta}K_{\gamma\lambda}+\big(R\tud{*\lambda}{\delta\alpha\beta}K_{\gamma\lambda}+R\tudu{*\lambda}{\delta\beta}{\rho}K_{\lambda\rho}g_{\alpha\gamma}\\
    &\quad-R\tudu{*\lambda}{\delta\alpha}{\rho}K_{\lambda\rho}g_{\beta\gamma}\big) +{}^*\!\mu_{\alpha\beta\gamma\delta}.
    \end{split}
\end{align}
Rearranging the different terms leads to the final expression
\boxedeqn{
\nabla_\delta{}^*\!\lambda_{\alpha\beta\gamma}=3R\tud{*\lambda}{\delta\alpha\beta}K_{\gamma\lambda}-2R\tudu{*\lambda}{\delta[\alpha}{\rho}g_{\beta]\gamma}K_{\lambda\rho} +{}^*\!\mu_{\alpha\beta\gamma\delta}.\label{covDlambda}
}{Structural equation (dual form)}

\subsection{Reduction of the second constraint}
We will now gather the results obtained in the three previous subsections to express the constraint \eqref{C2} in terms of the irreducible variables introduced above. Let us remind that Eq. \eqref{C2} reads as
\begin{align}
    \sV\tud{\alpha}{(\beta\gamma\delta)}=0.
\end{align}
Using Eqs. \eqref{tildeX}, \eqref{dual_tilde_X} and \eqref{covDlambda}, we can rewrite
\begin{align}
    \sV_{\alpha\beta\gamma\delta}&=\nabla_\delta{}^*\!\tilde X_{\alpha\beta\gamma}+\frac{1}{3}\nabla_\delta\qty({}^*\!\lambda_{\alpha\beta\gamma}+g_{\gamma[\alpha}\nabla_{\beta]*}K)-\frac{2}{3}K_{\lambda\gamma}R\tud{*\lambda}{\delta\alpha\beta}+\frac{2}{3}R\tud{*\lambda}{\delta\gamma[\alpha}K_{\beta]\lambda}\nonumber\\
    &=\nabla_\delta\qty({}^*\!X_{\alpha\beta\gamma}-2g_{\gamma[\alpha}X_{\beta]})+\frac{1}{3}\underbrace{\qty(R\tud{*\lambda}{\delta\alpha\beta}K_{\gamma\lambda}+2R\tud{*\lambda}{\delta\gamma[\alpha}K_{\beta]\lambda})}_{\triangleq\,\clubsuit}\nonumber\\
    &\quad+\frac{1}{3}\nabla_\delta\qty( g_{\gamma[\alpha}\nabla_{\beta]*}K)+\frac{1}{3}{}^*\!\mu_{\alpha\beta\gamma\delta}-\frac{2}{3}R\tudu{*\lambda}{\delta[\alpha}{\rho}g_{\beta]\gamma}K_{\lambda\rho}.
\end{align}
On the one hand, we have
\begin{align}
    \clubsuit&=R\tud{*\lambda}{\delta\alpha\beta}K_{\gamma\lambda}+R\tud{*\lambda}{\delta\gamma\alpha}K_{\beta\lambda}-R\tud{*\lambda}{\delta\gamma\beta}K_{\alpha\lambda}
    =R\tud{*\lambda}{\delta\beta\gamma}K_{\alpha\lambda}+2R\tud{*\lambda}{\delta\alpha[\beta}K_{\gamma]\lambda}.
\end{align}
And on the other hand, we can write
\begin{align}
    \nabla_\delta\qty( g_{\gamma[\alpha}\nabla_{\beta]*}K)&=\frac{1}{2}\epsilon\tdu{\alpha\beta}{\mu\nu}g_{\gamma\mu}\nabla_\delta\nabla_\nu K
    =\frac{1}{2}\epsilon\tdu{\alpha\beta\gamma}{\mu}\nabla_\delta\nabla_\mu K.
\end{align}
Gathering these expressions leads to
\begin{align}
    \begin{split}
    \sV_{\alpha\beta\gamma\delta}&=\nabla_\delta{}^*\!X_{\alpha\beta\gamma}-2g_{\gamma[\alpha|}\nabla_\delta X_{|\beta]}-\frac{2}{3}R\tudu{*\lambda}{\delta[\alpha}{\rho}g_{\beta]\gamma}K_{\lambda\rho}+\frac{1}{3}R\tud{*\lambda}{\delta\beta\gamma}K_{\alpha\lambda}\\
    &\quad+\frac{2}{3}R\tud{*\lambda}{\delta\alpha[\beta}K_{\gamma]\lambda}+\frac{1}{6}\epsilon\tdu{\alpha\beta\gamma}{\mu}\nabla_\delta\nabla_\mu K+\frac{1}{3}{}^*\!\mu_{\alpha\beta\gamma\delta}.
    \end{split}
\end{align}
\noindent
When symmetrizing the three last indices, the four last terms of this expression vanish. The constraint \eqref{C2} takes the final form
\boxedeqn{
    \begin{split}
        &{}^*\!X_{\alpha(\beta\gamma;\delta)}+X_{\alpha;(\beta}g_{\gamma\delta)}-g_{\alpha(\beta}X_{\gamma;\delta)}\\
        &\quad+\frac{1}{3}\qty(g_{\alpha(\beta}R\tudu{*\lambda}{\gamma\delta)}{\rho}-R\tudu{*\lambda}{\alpha(\beta}{\rho}g_{\gamma\delta)})K_{\lambda\rho}=0.\label{reduced_C2}
    \end{split}
    }{First order constraint for quadratic conserved quantities}
Our principal motivation being the motion of spinning particles in Kerr spacetime, we will now focus on spacetimes possessing a Killing-Yano (KY) tensor. It will turn out that the constraint \eqref{reduced_C2} can still be dramatically simplified in such a framework.

\section{Spacetimes admitting a Killing-Yano tensor}\label{sec:KY}
We now particularize our analysis to spacetimes equipped with a Killing-Yano (KY) tensor, \textit{i.e.} a rank-2, antisymmetric tensor $Y_{\mu\nu}=Y_{[\mu\nu]}$ obeying the Killing-Yano equation:
\boxedeqn{\nabla_{(\alpha}Y_{\beta)\gamma}=0.\label{KYdef}
}{Killing-Yano equation}
In this case, the constraint \eqref{C1} is automatically fulfilled, because
\begin{equation}\label{defK}
    K_{\alpha\beta}\triangleq Y\tdu{\alpha}{\lambda}Y_{\lambda\beta}
\end{equation}
is a Killing tensor.

The aim of this section is twofold. First, we will review some general properties of KY tensors useful for the continuation of our analysis. Even if most of them were previously mentioned in the literature \cite{doi:10.1098/rspa.1981.0046,doi:10.1098/rspa.1981.0056}, the goal of the present exposition is to provide a self-contained summary of these results and of their derivations. Second, we will work out an involved identity that will become a cornerstone for solving the constraint \eqref{reduced_C2}. This so-called \textit{central identity} is the generalization of a result mentioned by Rüdiger in \cite{doi:10.1098/rspa.1983.0012}. Rüdiger only quickly sketched the proof of his identity, while we aim here to provide a more pedagogical derivation of this central result.

\subsection{Some general properties of KY tensors}
Let us review some basic properties of KY tensors, sticking to our conventions and simplifying the notations used in the literature \cite{doi:10.1098/rspa.1981.0046,doi:10.1098/rspa.1981.0056}.

\subsubsection{An equivalent form of the KY equation}
The symmetries $\nabla_{(\alpha}Y_{\beta)\gamma}=\nabla_\alpha Y_{(\beta\gamma)}=0$ ensure the quantity $\nabla_\alpha Y_{\beta\gamma}$ to be totally antisymmetric in its three indices. Consequently, there exists a vector $\boldsymbol\xi$ such that
\boxedeqn{\nabla_\alpha Y_{\beta\gamma}=\epsilon_{\alpha\beta\gamma\lambda}\xi^\lambda.\label{KY_equiv}}{Killing-Yano equation (equivalent form)}
The value of $\boldsymbol \xi$ can be found by contracting Eq. \eqref{KY_equiv} with $\epsilon^{\mu\alpha\beta\gamma}$ and making use of the contraction formula for the Levi-Civita tensor \eqref{contraction}. We obtain
\begin{equation}
\xi^\alpha=-\frac{1}{3}\nabla_\lambda Y^{*\lambda\alpha}.\label{xi}
\end{equation}
Notice that Eq. \eqref{KY_equiv} with $\boldsymbol \xi$ given by Eq. \eqref{xi} is totally equivalent to the Killing-Yano equation.

\subsubsection{Dual KY equation}
Let us derive the equivalent to the Killing-Yano equation for the dual of the KY tensor $Y^*_{\mu\nu}$. One has
\begin{align}
    \begin{split}
    \nabla_\alpha Y^{*\beta\gamma}&=\frac{1}{2}\epsilon^{\beta\gamma\mu\nu}\nabla_\alpha Y_{\mu\nu}
    =\frac{1}{2}\epsilon^{\mu\nu\beta\gamma}\epsilon_{\mu\nu\alpha\lambda}\xi^\lambda\\
    &=-2\delta^{[\beta}_\alpha\delta^{\gamma]}_\lambda\xi^\lambda
    =-2\delta^{[\beta}_\alpha\xi^{\gamma]},
    \end{split}
\end{align}
leading to
\boxedeqn{\nabla_\alpha Y^*_{\beta\gamma}=-2g_{\alpha[\beta}\xi_{\gamma]}.\label{KY_bis}
}{Conformal Killing-Yano equation}
This equation is nothing but the conformal Killing-Yano equation for the dual tensor ${Y}^*_{\mu\nu}$. This proves Proposition \ref{prop:killing_quantities}.1.

\subsubsection{Integrability conditions for the KY equation}
We will work out some necessary conditions for the tensor $\mathbf Y$ to be a Killing-Yano tensor, \textit{i.e.} relations that must hold for $\mathbf{Y}$ to satisfy the KY equation. We will refer to them as \textit{integrability conditions} for the Killing-Yano equation.

We begin by proving some preliminary results:
\begin{lemma}\label{lemma:divless}
One has
\begin{align}
    \nabla^\alpha\xi_\alpha=0.
\end{align}
\end{lemma}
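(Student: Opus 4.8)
The plan is to obtain $\nabla^\alpha\xi_\alpha = 0$ by combining the defining relation \eqref{xi}, namely $\xi^\alpha = -\tfrac{1}{3}\nabla_\lambda Y^{*\lambda\alpha}$, with the conformal Killing-Yano equation \eqref{KY_bis} satisfied by the dual tensor $Y^*_{\mu\nu}$. First I would take a divergence of \eqref{xi}, writing
\begin{align*}
\nabla_\alpha\xi^\alpha = -\frac{1}{3}\nabla_\alpha\nabla_\lambda Y^{*\lambda\alpha}.
\end{align*}
Since $Y^{*\lambda\alpha} = -Y^{*\alpha\lambda}$ is antisymmetric, the double covariant derivative $\nabla_\alpha\nabla_\lambda Y^{*\lambda\alpha}$ is really the divergence of a divergence of an antisymmetric tensor; the symmetric part of $\nabla_\alpha\nabla_\lambda$ acting on an antisymmetric object produces a commutator term proportional to curvature, which I expect to vanish either by a Ricci-tensor argument or, more directly, by the following cleaner route.

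The cleaner route: contract \eqref{KY_bis} to extract $\xi_\gamma$ directly. Taking the trace of $\nabla_\alpha Y^*_{\beta\gamma} = -2g_{\alpha[\beta}\xi_{\gamma]} = -g_{\alpha\beta}\xi_\gamma + g_{\alpha\gamma}\xi_\beta$ on the pair $(\alpha,\beta)$ gives $\nabla^\beta Y^*_{\beta\gamma} = -\xi_\gamma \cdot (\text{dim}) + \xi_\gamma$; in $d=4$ this yields $\nabla^\beta Y^*_{\beta\gamma} = -3\xi_\gamma$, consistent with \eqref{xi}. Now I would apply $\nabla^\gamma$ to \eqref{KY_bis} itself and symmetrize/antisymmetrize cleverly, or more simply differentiate the contracted relation: from $3\xi_\gamma = -\nabla^\beta Y^*_{\beta\gamma}$ we get
\begin{align*}
3\nabla^\gamma\xi_\gamma = -\nabla^\gamma\nabla^\beta Y^*_{\beta\gamma} = -\nabla^{[\gamma}\nabla^{\beta]}Y^*_{\beta\gamma},
\end{align*}
where the last equality uses the antisymmetry of $Y^*_{\beta\gamma}$ to replace $\nabla^\gamma\nabla^\beta$ by its antisymmetric part. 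The commutator $\nabla^{[\gamma}\nabla^{\beta]}Y^*_{\beta\gamma}$ expands via the Ricci identity into a sum of two Riemann contractions of the form $R^{\gamma\beta}{}_{\beta}{}^{\lambda}Y^*_{\lambda\gamma}$ and $R^{\gamma\beta}{}_{\gamma}{}^{\lambda}Y^*_{\beta\lambda}$; these involve the Ricci tensor contracted with the antisymmetric $Y^*$, hence vanish identically by the symmetry of $R_{\mu\nu}$ against the antisymmetry of $Y^*_{\mu\nu}$. Therefore $\nabla^\gamma\xi_\gamma = 0$.

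The only mild subtlety — which is the ``main obstacle'' but a very minor one — is bookkeeping the index positions and signs in the Ricci-identity expansion of $\nabla^{[\gamma}\nabla^{\beta]}Y^*_{\beta\gamma}$, and being careful that the curvature terms that survive are genuinely Ricci contractions (so that antisymmetry kills them) rather than full Riemann terms that would not automatically vanish. An alternative, even shorter, derivation simply notes that \eqref{KY_bis} says $Y^*$ is conformal Killing-Yano with associated one-form $\xi$, and it is a standard fact that the ``potential'' $\xi^\alpha = -\tfrac14\nabla_\lambda Y^{*\lambda\alpha}$ (up to the dimension-dependent normalization) of a conformal Killing-Yano two-form in a Ricci-flat — or merely in any — background is automatically divergence-free because $\nabla_\alpha\nabla_\lambda$ of an antisymmetric tensor, fully contracted, reduces to Ricci terms; I would present whichever of these two is shortest in the paper's notational conventions. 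Either way the lemma follows in a couple of lines.
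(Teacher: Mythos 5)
Your proposal is correct and follows essentially the same route as the paper: take the divergence of $\xi^\alpha=-\tfrac{1}{3}\nabla_\lambda Y^{*\lambda\alpha}$, use the antisymmetry of $Y^{*\lambda\alpha}$ to reduce $\nabla_\alpha\nabla_\lambda$ to half a commutator, and apply the Ricci identity so that only Ricci-tensor contractions with the antisymmetric $Y^*_{\mu\nu}$ survive, which vanish identically. The paper's proof is exactly this computation written out explicitly.
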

\begin{proof}
We proceed by applying the Ricci identity to the expression:
\begin{align}
    \begin{split}
    \nabla^\alpha\xi_\alpha&=-\frac{1}{3}\nabla_\alpha\nabla_\lambda Y^{*\lambda\alpha}
    =-\frac{1}{6}\comm{\nabla_\alpha}{\nabla_\lambda}Y^{*\lambda\alpha}\\
    &=\frac{1}{6}g^{\lambda\mu}g^{\alpha\nu}\qty(R\tud{\rho}{\mu\alpha\lambda}Y^*_{\rho\nu}+R\tud{\rho}{\nu\alpha\lambda}Y^*_{\mu\rho})\\
    &=\frac{1}{3}R^{\alpha\beta}Y^*_{\alpha\beta}
    =0.
    \end{split}
\end{align}
\end{proof}

\begin{lemma}\label{lemma:sym}
For any antisymmetric tensor $A_{\alpha\beta}=A_{[\alpha\beta]}$, one has
\begin{align}
    R\tudu{\alpha}{\mu\nu}{\beta}A^{\mu\nu}=R\tudu{[\alpha}{\mu\nu}{\beta]}A^{\mu\nu}.
\end{align}
\end{lemma}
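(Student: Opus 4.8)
The statement to prove is the identity
\begin{equation*}
R\tudu{\alpha}{\mu\nu}{\beta}A^{\mu\nu}=R\tudu{[\alpha}{\mu\nu}{\beta]}A^{\mu\nu}
\end{equation*}
for an arbitrary antisymmetric tensor $A_{\alpha\beta}=A_{[\alpha\beta]}$. The cleanest route is to decompose $R\tudu{\alpha}{\mu\nu}{\beta}$ (with the middle pair of indices contracted against $A^{\mu\nu}$, which is antisymmetric) into its symmetric and antisymmetric parts in the free indices $\alpha,\beta$, and to show that the symmetric part annihilates $A^{\mu\nu}$. In other words, I want to establish
\begin{equation*}
R\tudu{(\alpha}{\mu\nu}{\beta)}A^{\mu\nu}=0,
\end{equation*}
after which the claim follows immediately since $R\tudu{\alpha}{\mu\nu}{\beta}A^{\mu\nu}=R\tudu{(\alpha}{\mu\nu}{\beta)}A^{\mu\nu}+R\tudu{[\alpha}{\mu\nu}{\beta]}A^{\mu\nu}$.

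\textbf{Key step.} To see that the symmetric part vanishes, I would write out $2R\tudu{(\alpha}{\mu\nu}{\beta)}A^{\mu\nu}=R\tudu{\alpha}{\mu\nu}{\beta}A^{\mu\nu}+R\tudu{\beta}{\mu\nu}{\alpha}A^{\mu\nu}$ and use the pair-exchange symmetry of the Riemann tensor, $R_{\alpha\mu\nu\beta}=R_{\nu\beta\alpha\mu}$, to rewrite the second term. Concretely, $R\tudu{\beta}{\mu\nu}{\alpha}=R_{\beta\mu\nu\alpha}g^{\ldots}$ raised appropriately equals $R\tudu{\nu}{\alpha\beta}{\mu}$ after exchanging pairs; then relabelling the contracted dummy indices $\mu\leftrightarrow\nu$ and using $A^{\nu\mu}=-A^{\mu\nu}$ produces a term that combines with the first to give the first Bianchi identity $R\tud{\alpha}{[\mu\nu\beta]}$ contracted with $A^{\mu\nu}$, which vanishes. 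I should be careful here to track index positions consistently — this is the one place where a sign slip is possible, so I would do the manipulation slowly, either lowering all indices to work with the fully covariant $R_{\alpha\mu\nu\beta}$ and raising at the end, or using the already-stated conventions of the paper. The antisymmetry of $A^{\mu\nu}$ is what lets me symmetrize/antisymmetrize the middle indices of the Riemann tensor freely, and the first Bianchi identity $R_{\alpha[\mu\nu\beta]}=0$ is the crucial algebraic input.

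\textbf{Main obstacle.} There is no deep obstacle here; the only real hazard is bookkeeping of the index gymnastics and sign conventions (the paper uses the convention $[\nabla_\alpha,\nabla_\beta]A_\mu=-R\tud{\lambda}{\mu\alpha\beta}A_\lambda$, so I would double-check that the Bianchi identity is being applied in a form consistent with that convention). An alternative, essentially equivalent presentation — which might be slightly more transparent — is to note that contracting any tensor $T_{\mu\nu}$ with the antisymmetric $A^{\mu\nu}$ extracts only $T_{[\mu\nu]}$, so $R\tudu{\alpha}{\mu\nu}{\beta}A^{\mu\nu}=R\tudu{\alpha}{[\mu\nu]}{\beta}A^{\mu\nu}$; then apply the first Bianchi identity in the form $R\tud{\alpha}{[\mu\nu\beta]}=0$ to relate $R\tudu{\alpha}{[\mu\nu]}{\beta}$ to its version with $\alpha,\beta$ antisymmetrized, and conclude. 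I would present whichever of the two is shortest once the signs are pinned down. The whole proof should occupy only a few lines.
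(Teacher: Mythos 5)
Your proposal is correct and follows essentially the same route as the paper: the paper's proof is exactly the three-line computation you sketch, namely $R\tudu{\beta}{\mu\nu}{\alpha}A^{\mu\nu}=R\tdud{\nu}{\alpha\beta}{\mu}A^{\mu\nu}=R\tudu{\alpha}{\mu\nu}{\beta}A^{\nu\mu}=-R\tudu{\alpha}{\mu\nu}{\beta}A^{\mu\nu}$, which is precisely the statement that the symmetric part in $(\alpha,\beta)$ annihilates $A^{\mu\nu}$. One small correction to your description of the key step: the first Bianchi identity is not needed and is not what makes the cancellation happen. After applying the pair-exchange symmetry $R_{abcd}=R_{cdab}$ to the second term of $2R\tudu{(\alpha}{\mu\nu}{\beta)}A^{\mu\nu}$, relabelling the dummies $\mu\leftrightarrow\nu$, and using $A^{\nu\mu}=-A^{\mu\nu}$ together with the antisymmetry of the Riemann tensor in each index pair, that term becomes exactly $-R\tudu{\alpha}{\mu\nu}{\beta}A^{\mu\nu}$ and cancels the first term identically — no cyclic identity is invoked. (The Bianchi identity does appear in the paper, but only in the proof of the subsequent Lemma on $A_{\mu\nu}R\tudu{\mu}{\alpha\beta}{\nu}$, not here.) With that adjustment your argument is the paper's.
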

\begin{proof}
The proof is straightforward:
\begin{align}
    \begin{split}
    R\tudu{\beta}{\mu\nu}{\alpha}A^{\mu\nu}&=R\tdud{\nu}{\alpha\beta}{\mu}A^{\mu\nu}=R\tudu{\alpha}{\mu\nu}{\beta}A^{\nu\mu}\\
    &=-R\tudu{\alpha}{\mu\nu}{\beta}A^{\mu\nu}.
    \end{split}
\end{align}
\end{proof}
\begin{lemma}\label{lemma:indices_R}
For any antisymmetric tensor $A_{\alpha\beta}=A_{[\alpha\beta]}$, we have
\begin{align}
    A_{\mu\nu}R\tudu{\mu}{\alpha\beta}{\nu}=-\frac{1}{2}A^{\mu\nu}R_{\alpha\beta\mu\nu}.
\end{align}
\end{lemma}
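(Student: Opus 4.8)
The statement is a purely algebraic identity: for any antisymmetric tensor $A_{\alpha\beta}=A_{[\alpha\beta]}$, one has $A_{\mu\nu}R\tudu{\mu}{\alpha\beta}{\nu}=-\tfrac12 A^{\mu\nu}R_{\alpha\beta\mu\nu}$. The plan is to reduce the left-hand side to a contraction of $A$ against a fully antisymmetric-in-the-first-pair piece of the Riemann tensor and then invoke the first Bianchi (algebraic Ricci) identity $R_{\alpha[\mu\nu]\beta}$-type cyclic relation to rewrite it in terms of $R_{\alpha\beta\mu\nu}$.

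\textbf{Key steps.} First I would apply Lemma \ref{lemma:sym}, which was just proved: since $A^{\mu\nu}$ is antisymmetric, $A_{\mu\nu}R\tudu{\mu}{\alpha\beta}{\nu}=A_{\mu\nu}R\tudu{[\mu}{\alpha\beta}{\nu]}$, so only the part of $R\tudu{\mu}{\alpha\beta}{\nu}$ antisymmetric in $\mu,\nu$ contributes. Next I would use the algebraic Bianchi identity in the form $R_{\mu\alpha\beta\nu}+R_{\mu\beta\nu\alpha}+R_{\mu\nu\alpha\beta}=0$ to express $R_{\mu[\alpha\beta]\nu}$ (with the pair symmetry $R_{\mu\alpha\beta\nu}=R_{\beta\nu\mu\alpha}$) in terms of $R_{\mu\nu\alpha\beta}$: antisymmetrizing the Bianchi relation appropriately gives $2R_{\mu[\alpha\beta]\nu}=-R_{\mu\nu\alpha\beta}$ when contracted against an object antisymmetric in $[\mu\nu]$. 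Then contracting with $A^{\mu\nu}$ and using $A^{\mu\nu}R_{\mu\nu\alpha\beta}=A^{\mu\nu}R_{\alpha\beta\mu\nu}$ (pair symmetry of Riemann) yields the claimed factor of $-\tfrac12$. Concretely: start from $A^{\mu\nu}R_{\mu\alpha\beta\nu}$, write $R_{\mu\alpha\beta\nu}=-R_{\mu\alpha\nu\beta}$, use the cyclic identity on the last three indices $R_{\mu\alpha\nu\beta}+R_{\mu\nu\beta\alpha}+R_{\mu\beta\alpha\nu}=0$, and combine with the antisymmetry of $A$ to collapse the three terms into a single multiple of $A^{\mu\nu}R_{\mu\nu\alpha\beta}$.

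\textbf{Main obstacle.} There is no serious obstacle here; the only thing requiring care is bookkeeping of index positions and the sign conventions fixed in the Introduction (the paper defines the Riemann tensor via $\comm{\nabla_\alpha}{\nabla_\beta}A_\mu=-R\tud{\lambda}{\mu\alpha\beta}A_\lambda$). I would double-check that the pair-interchange symmetry $R_{\alpha\beta\mu\nu}=R_{\mu\nu\alpha\beta}$ and the first Bianchi identity $R_{\alpha[\beta\mu\nu]}=0$ hold with these conventions (they do, for the Levi-Civita connection), and verify the final coefficient by contracting a concrete test case or by counting terms, since an off-by-two or sign error is the most likely slip. The whole argument is three or four lines once the Bianchi manipulation is written out carefully.
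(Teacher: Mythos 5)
Your proposal is correct and follows essentially the same route as the paper: both rest on the algebraic Bianchi identity combined with the pair symmetries of the Riemann tensor and the antisymmetry of $A^{\mu\nu}$ (via Lemma \ref{lemma:sym}) to turn the contraction into $-\tfrac{1}{2}A^{\mu\nu}R_{\alpha\beta\mu\nu}$. Your intermediate identity $2R_{\mu[\alpha\beta]\nu}=-R_{\mu\nu\alpha\beta}$ is a standard consequence of the first Bianchi identity and is just a repackaging of the two-term expansion the paper writes out explicitly, so the arguments are equivalent.
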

\begin{proof}
The proof is again pretty simple:
\begin{align}
    \begin{split}
    A_{\mu\nu}R\tudu{\mu}{\alpha\beta}{\nu}&=A^{\mu\nu}\qty(R_{\alpha\beta\nu\mu}+R_{\alpha\nu\mu\beta})\\
    &=-A^{\mu\nu}R_{\alpha\beta\mu\nu}+A^{\mu\nu}R_{\nu\alpha\beta\mu}.
    \end{split}
\end{align}
The conclusion is reached using Lemma \ref{lemma:sym}.
\end{proof}

We can now work out the first integrability condition. One has
\begin{align}
    \begin{split}
    \nabla_\alpha\xi_\beta&=-\frac{1}{3}\nabla_\alpha\nabla_\lambda Y\tud{*\lambda}{\beta}\\
    &=-\frac{1}{3}\nabla^\lambda\nabla_\alpha Y^*_{\lambda\beta}-\frac{1}{3}\comm{\nabla_\alpha}{\nabla_\lambda}Y\tud{*\lambda}{\beta}\\
    &=\frac{2}{3}\nabla^\lambda\qty(g_{\alpha[\lambda}\xi_{\beta]})-\frac{1}{3}g^{\lambda\rho}\comm{\nabla_\alpha}{\nabla_\lambda}Y^*_{\rho\beta}\\
    &=\frac{1}{3}\nabla_\alpha\xi_\beta-\frac{1}{3}g_{\alpha\beta}\nabla^\lambda\xi_\lambda+\frac{1}{3}\qty(R\tud{\lambda}{\alpha}Y^*_{\lambda\beta}+R\tudu{\lambda}{\beta\alpha}{\rho}Y^*_{\rho\lambda}).
    \end{split}
\end{align}
This gives rise to the Killing-Yano equation \defining{first integrability condition}:
    \boxedeqn{\nabla_\alpha\xi_\beta=\frac{1}{2}\qty(R\tud{\lambda}{\alpha}Y^*_{\lambda\beta}+R\tudu{\lambda}{\alpha\beta}{\rho}Y^*_{\lambda\rho})\label{int_1bis}}{Killing-Yano first integrability condition}
or, equivalently, using Lemma \ref{lemma:indices_R}:
\begin{align}
    \nabla_\alpha\xi_\beta=\frac{1}{2} R\tud{\lambda}{\alpha}Y^*_{\lambda\beta}-\frac{1}{4}R_{\alpha\beta\mu\nu}Y^{*\mu\nu}.\label{int_1}
\end{align}
Symmetrizing this equation gives the \defining{reduced form}
    \boxedeqn{\nabla_{(\alpha}\xi_{\beta)}=\frac{1}{2}Y^*_{\lambda(\alpha}R\tud{\lambda}{\beta)}.\label{int_1_reduced}}{Reduced Killing-Yano first integrability condition}
In particular, for Ricci-flat spacetimes, $\xi^\mu$ is a Killing vector. This proves proposition \ref{prop:killing_quantities}.5.

A second integrability condition can be written as follows.
Taking the derivative of the defining equation $\nabla_{(\alpha}Y_{\beta)\gamma}=0$ yields
\begin{align}
    \nabla_\alpha\nabla_\beta Y_{\gamma\delta}+\nabla_\alpha\nabla_\gamma Y_{\beta\delta}=0.
\end{align}
Out of this equation, we can write the three (equivalent) identities
\begin{align}
    \begin{split}
        \nabla_\alpha\nabla_\beta Y_{\gamma\delta}+\nabla_\gamma\nabla_\alpha Y_{\beta\delta}+\comm{\nabla_\alpha}{\nabla_\gamma}Y_{\beta\delta}&=0,\\
    \nabla_\gamma\nabla_\alpha Y_{\beta\delta}+\nabla_\beta\nabla_\gamma Y_{\alpha\delta}+\comm{\nabla_\gamma}{\nabla_\beta}Y_{\alpha\delta}&=0,\\
    \nabla_\beta\nabla_\gamma Y_{\alpha\delta}+\nabla_\alpha\nabla_\beta Y_{\gamma\delta}+\comm{\nabla_\beta}{\nabla_\alpha}Y_{\gamma\delta}&=0.
    \end{split}
\end{align}
Summing the first and the third and subtracting the second leads to
\begin{align}
    \begin{split}
    2\nabla_\alpha\nabla_\beta Y_{\gamma\delta}&=\comm{\nabla_\gamma}{\nabla_\alpha}Y_{\beta\delta}+\comm{\nabla_\gamma}{\nabla_\beta}Y_{\alpha\delta}+\comm{\nabla_\alpha}{\nabla_\beta}Y_{\gamma\delta}\\
    &=R\tud{\lambda}{\delta\beta\gamma}Y_{\alpha\lambda}+R\tud{\lambda}{\delta\alpha\gamma}Y_{\beta\lambda}+R\tud{\lambda}{\delta\beta\alpha}Y_{\gamma\lambda}\\
    &\quad-\qty(R\tud{\lambda}{\beta\gamma\alpha}+R\tud{\lambda}{\alpha\gamma\beta}+R\tud{\lambda}{\gamma\alpha\beta})Y_{\lambda\delta}\\
    &=2R\tud{\lambda}{\alpha\beta\gamma}Y_{\lambda\delta}+R\tud{\lambda}{\delta\beta\gamma}Y_{\alpha\lambda}+R\tud{\lambda}{\delta\alpha\gamma}Y_{\beta\lambda}+R\tud{\lambda}{\delta\beta\alpha}Y_{\gamma\lambda}.
    \end{split}
\end{align}
This gives rise to the \defining{second integrability condition}
\boxedeqn{\nabla_\alpha\nabla_\beta Y_{\gamma\delta}=R\tud{\lambda}{\alpha\beta\gamma}Y_{\lambda\delta}+\frac{1}{2}\qty(R\tud{\lambda}{\delta\beta\gamma}Y_{\alpha\lambda}+R\tud{\lambda}{\delta\alpha\gamma}Y_{\beta\lambda}+R\tud{\lambda}{\delta\beta\alpha}Y_{\gamma\lambda}).\label{int_2}
}{Killing-Yano second integrability condition}
A reduced form can be obtained by contracting the equation above with $g^{\gamma\delta}$. We obtain
\begin{align}
    0&=R\tudu{\lambda}{\alpha\beta}{\rho}Y_{\lambda\rho}+\frac{1}{2}\qty(R\tud{\lambda}{\beta}Y_{\alpha\lambda}+R\tud{\lambda}{\alpha}Y_{\beta\lambda}+R\tud{\lambda\rho}{\beta\alpha}Y_{\rho\lambda}).
\end{align}
 
Using Lemma \ref{lemma:indices_R} shows that the first and the fourth terms of the right-hand side of this relation cancel. We obtain the \defining{reduced form of the second integrability condition}
\boxedeqn{
    Y_{\lambda(\alpha}R\tud{\lambda}{\beta)}=0.
    }{Reduced Killing-Yano second integrability condition}
We can also symmetrize the indices $\gamma\delta$ in Eq. \eqref{int_2} to obtain the \defining{symmetrized second integrability condition}
\boxedeqn{R\tud{\lambda}{\alpha\beta(\gamma}Y_{\delta) \lambda}-\frac{1}{2}Y_{\lambda (\gamma}R\tud{\lambda}{\delta)\alpha\beta}+R\tud{\lambda}{(\gamma\delta)(\alpha}Y_{\beta) \lambda}=0.\label{int3}
}{Symmetrized Killing-Yano integrability condition}

\subsection{The central identity}
Our so-called central identity will consists into a clever rewriting of the expression $K_{\lambda(\beta}R\tud{*\lambda}{\gamma\delta)\alpha}$, and was first derived in \cite{Compere:2020eat}. Its reduced version first introduced by Rüdiger \cite{doi:10.1098/rspa.1983.0012} is a rewriting of the contracted expression $K_{\lambda\rho}R\tudu{*\lambda}{\alpha\beta}{\rho}$.

\subsubsection{The KY scalar}
Let us define the scalar quantity
\begin{equation}
    \mathcal Z\triangleq\frac{1}{4}Y^*_{\alpha\beta}Y^{\alpha\beta}.\label{defZ}
\end{equation}
Its first covariant derivative takes the form
\begin{align}
    \nabla_\mu\mathcal Z&=\frac{1}{2}\nabla_\mu Y^*_{\alpha\beta}Y^{\alpha\beta}
    =\xi_{[\alpha}g_{\beta]\mu}Y^{\alpha\beta}
    =\xi_\alpha Y\tud{\alpha}{\mu}.\label{DZ}
\end{align}
Its second covariant derivative can be expressed as (notice that $\nabla_\alpha\nabla_\beta\mathcal Z=\nabla_\beta\nabla_\alpha\mathcal Z$)
\begin{align}
    \nabla_\mu\nabla_\nu\mathcal Z&=\nabla_\mu\xi_\alpha Y\tud{\alpha}{\nu}+\xi^\alpha\epsilon_{\mu\alpha\nu\rho}\,\xi^\rho
    =\nabla_\mu\xi_\alpha Y\tud{\alpha}{\nu}=\nabla_\nu\xi_\alpha Y\tud{\alpha}{\mu},\label{DDZ}
\end{align}
where the last equality follows from symmetry of the right-hand side.

The following identity is also useful:
\begin{align}
    Y_{\alpha[\beta}Y_{\gamma]\delta}=-\frac{1}{2}Y_{\beta\gamma}Y_{\alpha\delta}-\frac{1}{2}\mathcal Z \,\epsilon_{\alpha\beta\gamma\delta}.
    \label{Y_indices}
\end{align}
The proof consists into noticing that the combination $Y_{\alpha\beta}Y_{\gamma\delta}-Y_{\alpha\gamma}Y_{\beta\delta}+Y_{\beta\gamma}Y_{\alpha\delta}$ is totally antisymmetric in all its indices. It must consequently be proportional to the Levi-Civita tensor:
\begin{align}
    Y_{\alpha\beta}Y_{\gamma\delta}-Y_{\alpha\gamma}Y_{\beta\delta}+Y_{\beta\gamma}Y_{\alpha\delta}=\mathcal A\, \epsilon_{\alpha\beta\gamma\delta},
\end{align}
where the constant $\mathcal A$ remains to be determined. This is achieved by contracting the equation above with $\epsilon^{\alpha\beta\gamma\delta}$, yielding
\begin{align}
    3\epsilon^{\alpha\beta\gamma\delta}Y_{\alpha\beta}Y_{\gamma\delta}=-4!\,\mathcal A.
\end{align}
Using the definition of $\mathcal Z$ leads to
\begin{align}
    \mathcal A=-\mathcal Z,
\end{align}
which gives the desired result.

\subsubsection{Derivation of the central identity}
The trick for deriving the central identity is to define the tensor
\begin{align}
    \mathcal T_{\mu\nu\rho\sigma}\triangleq\epsilon_{\mu\alpha\beta\sigma}\nabla^\alpha\nabla^\beta Y_{\nu\lambda}Y\tud{\lambda}{\rho},\label{generalized_T}
\end{align}
to perform two different rewritings of this expression and finally to equate them. 
Notice that we recover the tensor used in Rüdiger's proof \cite{doi:10.1098/rspa.1983.0012} by contracting the last two indices of Eq. \eqref{generalized_T}. 
 
First, applying the Ricci identity to Eq. \eqref{generalized_T} and making use of Eq. \eqref{Y_indices} yields 
\begin{align}
    \begin{split}
    \mathcal T_{\mu\nu\rho\sigma}&=\frac{1}{2}\epsilon_{\mu\alpha\beta\sigma}\comm{\nabla^\alpha}{\nabla^\beta}Y_{\nu\lambda}Y\tud{\lambda}{\rho}
    =R\tud{*\kappa}{\nu\mu\sigma}K_{\kappa\rho}-R\tud{*\kappa\lambda}{\mu\sigma}Y_{\nu[\kappa}Y_{\lambda]\rho}\\
    &=-R\tud{*\kappa}{\nu\sigma\mu}K_{\kappa\rho}+\frac{1}{2}R\tud{*\kappa\lambda}{\mu\sigma}\qty(Y_{\kappa\lambda}Y_{\nu\rho}+\mathcal Z\epsilon_{\nu\kappa\lambda\rho}).
    \end{split}
\end{align}
Second, using Eqs. \eqref{xi}, \eqref{KY_bis} and Lemma \ref{lemma:divless}, we rewrite $\mathcal T_{\mu\nu\rho\sigma}$ as follows:
\begin{align}
\begin{split}
\mathcal T\tudu{\mu}{\nu\rho}{\sigma}&=-\frac{1}{2}\epsilon^{\mu\alpha\beta\sigma}\epsilon_{\nu\lambda\gamma\delta}\nabla_\alpha\nabla_\beta Y^{*\gamma\delta}Y\tud{\lambda}{\rho}\\
&=-12\delta^{[\sigma}_\nu Y\tud{\mu}{\rho}\nabla_\alpha\nabla_\beta Y^{*\alpha\beta]}\\
&=-\delta^\sigma_\nu\qty(Y\tud{\mu}{\rho}\nabla_\alpha\nabla_\beta Y^{*\alpha\beta}+Y\tud{\alpha}{\rho}\nabla_\alpha\nabla_\beta Y^{*\beta\mu}+Y\tud{\beta}{\rho}\nabla_\alpha\nabla_\beta Y^{*\mu\alpha})\\
&\quad + \delta^\mu_\nu\qty(Y\tud{\sigma}{\rho}\nabla_\alpha\nabla_\beta Y^{*\alpha\beta}+Y\tud{\alpha}{\rho}\nabla_\alpha\nabla_\beta Y^{*\beta\sigma}+Y\tud{\beta}{\rho}\nabla_\alpha\nabla_\beta Y^{*\sigma\alpha})\\
&\quad-\delta^\alpha_\nu\qty(Y\tud{\sigma}{\rho}\nabla_\alpha\nabla_\beta Y^{*\mu\beta}+Y\tud{\mu}{\rho}\nabla_\alpha\nabla_\beta Y^{*\beta\sigma}+Y\tud{\beta}{\rho}\nabla_\alpha\nabla_\beta Y^{*\sigma\mu})\\
&\quad+\delta^\beta_\nu\qty(Y\tud{\sigma}{\rho}\nabla_\alpha\nabla_\beta Y^{*\mu\alpha}+Y\tud{\mu}{\rho}\nabla_\alpha\nabla_\beta Y^{*\alpha\sigma}+Y\tud{\alpha}{\rho}\nabla_\alpha\nabla_\beta Y^{*\sigma\mu})\\
&=\delta^\sigma_\nu\qty(3Y\tud{\alpha}{\rho}\nabla_\alpha\xi^\mu+2Y\tud{\beta}{\rho}\delta^{[\mu}_\beta\nabla_\alpha\xi^{\alpha]}){ -}\delta^\mu_\nu\qty(3Y\tud{\alpha}{\rho}\nabla_\alpha\xi^\sigma+2Y\tud{\beta}{\rho}\delta_\beta^{[\sigma}\nabla_\alpha\xi^{\alpha]})\\
&\quad-3Y\tud{\sigma}{\rho}\nabla_\nu\xi^\mu+3Y\tud{\mu}{\rho}\nabla_\nu\xi^\sigma+2Y\tud{\beta}{\rho}\delta_\beta^{[\sigma}\nabla_\nu\xi^{\mu]}\\
&\quad-2\qty(Y\tud{\sigma}{\rho}\delta_\nu^{[\mu}\nabla_\alpha\xi^{\alpha]}+Y\tud{\mu}{\rho}\delta_\nu^{[\alpha}\nabla_\alpha\xi^{\sigma]}+Y\tud{\alpha}{\rho}\delta_\nu^{[\sigma}\nabla_\alpha\xi^{\mu]})
\\
&=\delta^\sigma_\nu Y\tud{\lambda}{\rho}\nabla_\lambda\xi^\mu-\delta^\mu_\nu Y\tud{\lambda}{\rho}\nabla_\lambda\xi^\sigma+ Y\tud{\mu}{\rho}\nabla_\nu\xi^\sigma-Y\tud{\sigma}{\rho}\nabla_\nu\xi^\mu.
\end{split}
\end{align}
Equating the two expressions of $\mathcal{T}_{\mu\nu\rho\sigma}$ obtained leads to
\begin{align}
   \begin{split}
    R\tud{*\lambda}{\nu\mu\sigma}K_{\rho\lambda}&=-\frac{1}{2}R\tud{*\kappa\lambda}{\mu\sigma}\qty(Y_{\kappa\lambda}Y_{\nu\rho}+\mathcal Z\epsilon_{\nu\kappa\lambda\rho})-g_{\mu\nu}Y\tud{\lambda}{\rho}\nabla_\lambda\xi_\sigma\\
    &\quad+g_{\nu\sigma}Y\tud{\lambda}{\rho}\nabla_\lambda\xi_\mu-Y_{\sigma\rho}\nabla_\nu\xi_\mu+Y_{\mu\rho}\nabla_\nu\xi_\sigma.\label{central_relation}
   \end{split}
\end{align}
This is the cornerstone equation for deriving both the central identity and its reduced form.

\subsubsection{Non-reduced form}
Fully symmetrizing Eq. \eqref{central_relation} in $(\mu\nu\rho)$ leads to
\begin{align}
    K_{\lambda(\beta}R\tud{*\lambda}{\gamma\delta)\alpha}=-Y\tud{\lambda}{(\beta}g_{\gamma\delta)}\xi_{\alpha;\lambda}+g_{\alpha(\beta}Y\tud{\lambda}{\gamma}\xi_{\delta);\lambda}-Y_{\alpha(\beta}\xi_{\gamma;\delta)}.
\end{align}
We make use of the reduced integrability condition \eqref{int_1_reduced} to write the last term as
\begin{align}
    \begin{split}
    Y_{\alpha(\beta}\xi_{\gamma;\delta)}&=\frac{1}{2}Y_{\alpha(\beta|}Y^*_{\lambda|\gamma}R\tud{\lambda}{\delta)}\\
    &=\frac{1}{2}Y_{\alpha(\beta}Y\tud{*\lambda}{\gamma}G_{\delta)\lambda}+\frac{R}{4}Y_{\alpha(\beta}Y^*_{\gamma\delta)}\\
    &=\frac{1}{2}Y_{\alpha(\beta}Y\tud{*\lambda}{\gamma}G_{\delta)\lambda}\label{last}
    \end{split}
\end{align}
where $G_{\alpha\beta}\triangleq R_{\alpha\beta}-\frac{R}{2}g_{\alpha\beta}$ is the Einstein tensor. This gives rise to the central identity:
\boxedeqn{
K_{\lambda(\beta}R\tud{*\lambda}{\gamma\delta)\alpha}=-Y\tud{\lambda}{(\beta}g_{\gamma\delta)}\xi_{\alpha;\lambda}+g_{\alpha(\beta}Y\tud{\lambda}{\gamma}\xi_{\delta);\lambda}-\frac{1}{2}Y_{\alpha(\beta}Y\tud{*\lambda}{\gamma}G_{\delta)\lambda}.
\label{generalized_central}
}{Central identity}

\subsubsection{Reduced form} 
Rüdiger's reduced form of the central identity can be derived by contracting Eq. \eqref{central_relation} with $g^{\rho\sigma}$ and using Eqs. \eqref{int_1_reduced} and \eqref{DDZ}:
\begin{align}
    R\tudu{*\lambda}{\nu\mu}{\rho}K_{\lambda\rho}&=-\frac{1}{2}R\tudu{*\kappa\lambda}{\mu}{\rho}\qty(Y_{\kappa\lambda}Y_{\nu\rho}+\mathcal Z\epsilon_{\nu\kappa\lambda\rho})-g_{\mu\nu}Y^{\lambda\rho}\nabla_\lambda\xi_\rho+Y\tud{\lambda}{\nu}\nabla_\lambda\xi_\mu+Y_{\mu\rho}\nabla_\nu\xi^\rho\nonumber\\
    &=-\frac{1}{2}R\tudu{*\kappa\lambda}{\mu}{\rho}\qty(Y_{\kappa\lambda}Y_{\nu\rho}+\mathcal Z\epsilon_{\nu\kappa\lambda\rho})+g_{\mu\nu}Y^{\lambda\rho}\nabla_\rho\xi_\lambda-Y\tud{\lambda}{\nu}\nabla_\mu\xi_\lambda\nonumber\\
    &\quad+Y\tud{\lambda}{\nu}Y^*_{\rho(\lambda}R\tud{\rho}{\mu)}+Y_{\mu\rho}\nabla_\nu\xi^\rho\\
    &=-\frac{1}{2}R\tudu{*\kappa\lambda}{\mu}{\rho}\qty(Y_{\kappa\lambda}Y_{\nu\rho}+\mathcal Z\epsilon_{\nu\kappa\lambda\rho})-2\nabla_\mu\nabla_\nu\mathcal Z+g_{\mu\nu}\Delta\mathcal Z+Y\tud{\lambda}{\nu}Y^*_{\rho(\lambda}G\tud{\rho}{\mu)}.\nonumber
\end{align}
The first term of the above equation can be simplified by noticing that, on the one hand, making use of Lemma \ref{lemma:indices_R},
\begin{align}
\begin{split}
\frac{1}{2}R\tudu{*\kappa\lambda}{\mu}{\rho}Y_{\kappa\lambda}Y_{\nu\rho}&=
\frac{1}{4}\epsilon\tdu{\mu}{\alpha\beta\lambda}R\tud{\sigma\rho}{\alpha\beta}Y_{\rho\sigma}Y_{\lambda\nu}
=-\frac{1}{4}\epsilon\tdu{\mu}{\alpha\beta\lambda}R\tud{\sigma\rho}{\alpha\beta}Y^{**}_{\rho\sigma}Y_{\lambda\nu}\\
&=-\frac{1}{8}\epsilon_{\mu\alpha\beta\lambda}\epsilon^{\rho\sigma\gamma\delta}R\tud{\alpha\beta}{\sigma\rho}Y^*_{\gamma\delta}Y\tud{\lambda}{\nu}
=3R\tud{\alpha\beta}{[\alpha\mu}Y^*_{\beta\lambda]}Y\tud{\lambda}{\nu}\\
&=\frac{1}{4}Y\tud{\lambda}{\nu}\qty(4R\tud{\beta}{\mu}Y^*_{\beta\lambda}+4R\tud{\beta}{\lambda}Y^*_{\mu\beta}-2R\tud{\alpha\beta}{\mu\lambda}Y^*_{\alpha\beta}-2RY^*_{\mu\lambda})\\
&=\frac{1}{4}Y\tud{\lambda}{\nu}\qty[4\qty(R\tudu{\alpha}{\mu\lambda}{\beta}Y^*_{\alpha\beta}+R\tud{\beta}{\mu}Y^*_{\beta\lambda})+4R\tud{\beta}{\lambda}Y^*_{\mu\beta}-2RY^*_{\mu\lambda}]\\
&\overset{\eqref{int_1}}{=}2Y\tud{\lambda}{\nu}\nabla_\mu\xi_\lambda+R\tud{\beta}{\lambda}Y\tud{\lambda}{\nu}Y^*_{\mu\beta}-\frac{1}{2}RY\tud{\lambda}{\nu}Y^*_{\mu\lambda}\\
&=2\nabla_\mu\nabla_\nu\mathcal Z+G\tud{\beta}{\lambda}Y\tud{\lambda}{\nu}Y^*_{\mu\beta}.
\end{split}
\end{align}
On the other hand, the term $\frac{1}{2}R\tudu{*\kappa\lambda}{\mu}{\rho}\mathcal Z\epsilon_{\nu\kappa\lambda\rho}$ can be reduced thanks to the identity
\begin{align}
    \frac{1}{4}\mathcal Z\epsilon^{\mu\alpha\beta\lambda}\epsilon_{\lambda\rho\sigma\nu}R\tud{\sigma\rho}{\alpha\beta}&=\frac{3}{2}\mathcal Z \delta^\mu_{[\rho}R\tud{\sigma\rho}{\sigma\nu]}
    =\frac{1}{2}\mathcal Z\qty(2 R\tud{\mu}{\nu}-\delta^\mu_\nu R)
    =\mathcal Z G\tud{\mu}{\nu}.
\end{align}
Putting all pieces together, we obtain the reduced central identity
\boxedeqn{
    \begin{split}
     R\tudu{*\lambda}{\nu\mu}{\rho}K_{\lambda\rho}&=-4\nabla_\mu\nabla_\nu\mathcal Z+g_{\mu\nu}\Delta \mathcal Z+Y\tud{*\lambda}{(\rho}G_{\mu)\lambda}Y\tud{\rho}{\nu}\\
     &\quad-G\tud{\beta}{\lambda}Y\tud{\lambda}{\nu}Y^*_{\mu\beta}-\mathcal Z G_{\mu\nu}.\label{central_id_general}
    \end{split}
}{Reduced central identity}
This is the generalization of Rüdiger's reduced central  identity \cite{doi:10.1098/rspa.1983.0012} to non Ricci-flat spacetimes. Notice that the occurrence of Einstein tensor in the relation Eq. \eqref{central_id_general} that we derived suggests that Rüdiger's quadratic invariant will admit a generalization to the Kerr-Newmann spacetime which also admits a Killing-Yano tensor, once the Einstein tensor is replaced with the electromagnetic stress-energy tensor. This remains to be investigated.

\subsubsection{Central identity in Ricci-flat spacetimes} 
Let us now particularize our analysis to vacuum spacetimes, i.e. Ricci-flat spacetimes $R_{\alpha\beta}=G_{\alpha\beta}=0$. This includes in particular the astrophysically relevant Kerr spacetime. 

Using the reduced integrability condition \eqref{int_1_reduced}, the central identity \eqref{generalized_central} becomes
\begin{align}
   \begin{split}
    K_{\lambda(\beta}R\tud{*\lambda}{\gamma\delta)\alpha}&=-Y\tud{\lambda}{(\beta}g_{\gamma\delta)}\xi_{\alpha;\lambda}+g_{\alpha(\beta}Y\tud{\lambda}{\gamma}\xi_{\delta);\lambda}\\
    &=Y\tud{\lambda}{(\beta}g_{\gamma\delta)}\xi_{\lambda;\alpha}-g_{\alpha(\beta}Y\tud{\lambda}{\gamma|}\xi_{\lambda;|\delta)}.
   \end{split}
\end{align}
Making use of Eq. \eqref{DDZ}, it takes the final form
    \boxedeqn{
    K_{\lambda(\beta}R\tud{*\lambda}{\gamma\delta)\alpha}=\nabla_\alpha\nabla_{(\beta}\mathcal Z g_{\gamma\delta)}-g_{\alpha(\beta}\nabla_\gamma\nabla_{\delta)}\mathcal Z,\label{central_id_gen_RF}
    }{Central identity in Ricci-flat spacetimes}
which does not appear in Rüdiger \cite{doi:10.1098/rspa.1983.0012}. 
The reduced central identity \eqref{central_id_general} becomes
\boxedeqn{
    R\tudu{*\lambda}{\mu\nu}{\rho}K_{\lambda\rho}=-4\nabla_\mu\nabla_\nu\mathcal Z+g_{\mu\nu}\Delta\mathcal Z,\label{central_id}
    }{Reduced central identity in Ricci-flat spacetimes}
as obtained by Rüdiger \cite{doi:10.1098/rspa.1983.0012}.

\section{Solutions to the \texorpdfstring{$\mathcal O(\mathcal S)$}{} constraint in Ricci-flat spacetimes}\label{sec:unicity}

In this Section, one will gather the results obtained in the two previous sections of this chapter. Our aim will be to solve the constraint \eqref{reduced_C2} in Ricci-flat spacetimes admitting a KY tensor. The Ricci-flatness assumption will enable us to make use of the simple expressions provided by the central identity \eqref{central_id_gen_RF} and its reduced form \eqref{central_id}.

\subsection{Simplification of the constraint}
Plugging Eq. \eqref{central_id} into Eq. \eqref{reduced_C2} leads -- after a few easy manipulations -- to the constraint
\begin{align}
    {}^*\,X_{\alpha(\beta\gamma;\delta)}+\nabla_{(\beta|}\qty[X_\alpha+\frac{4}{3}\nabla_\alpha\mathcal Z]g_{|\gamma\delta)}-g_{\alpha(\beta}\nabla_{\delta}\qty[X_{\gamma)}+\frac{4}{3}\nabla_{\gamma)}\mathcal Z]=0.\label{final_constraint_KY}
\end{align}
It is straightforward to see that it admits the following non-trivial solution
\begin{align}
    X_{\alpha\beta\gamma}=0,\qquad X_{\alpha}=-\frac{4}{3}\nabla_\alpha\mathcal Z.\label{sol_C2}
\end{align}
As we will see later, this solution will lead to Rüdiger's invariant. In what follows, we will seek a more general solution to Eq. \eqref{final_constraint_KY}, which does not assume $X_{\alpha\beta\gamma}=0$. The first step is to simplify more the constraint \eqref{final_constraint_KY}. We begin by rewriting it in the much simpler form
\begin{align}
    \qty[{}^* X_{\alpha(\beta\gamma}+Y_\alpha g_{(\beta\gamma}-g_{\alpha(\beta}Y_\gamma]_{;\delta)}=0\label{constr_red}
\end{align}
by introducing the shifted variable
\begin{align}
    Y_\alpha\triangleq X_\alpha+\frac{4}{3}\nabla_\alpha\mathcal Z.
\end{align}
We subsequently rewrite Eq. \eqref{constr_red} in order to remove the dual operator from the tensor $\mathbf X$. Let us define the Hodge dual $\tilde Y_{\alpha\beta\gamma}$ of $Y_\alpha$:
\begin{align}
    \tilde Y_{\alpha\beta\gamma}\triangleq\epsilon_{\mu\alpha\beta\gamma}Y^\mu\qquad\Leftrightarrow\qquad Y_\alpha\triangleq-\frac{1}{6}\epsilon_{\alpha\mu\nu\rho}\tilde Y^{\mu\nu\rho}.
\end{align}
Contracting Eq. \eqref{constr_red} with $\epsilon^{\alpha\mu\nu\rho}$ and using the usual properties of the Levi-Civita tensor, we get
\begin{align}
-3\delta^{[\mu}_{(\beta}X\tud{\nu\rho]}{\gamma;\delta)}+\tilde Y\tud{\mu\nu\rho}{;(\beta}g_{\gamma\delta)}-\epsilon\tdu{(\beta}{\mu\nu\rho}Y_{\gamma;\delta)}=0.
\end{align}
Now, using the fact that
\begin{align}
    \epsilon^{\beta\mu\nu\rho}Y_{\gamma;\delta}=4\delta_\gamma^{[\beta}\tilde Y\tud{\mu\nu\rho]}{;\delta},
\end{align}
the last term of the previous equation reads
\begin{align}
    -\epsilon\tdu{(\beta}{\mu\nu\rho}Y_{\gamma;\delta)}=-g_{(\beta\gamma}\tilde Y\tud{\mu\nu\rho}{;\delta)}+3\delta^{[\mu}_{(\gamma}\tilde Y\tud{\nu\rho]}{\beta;\delta)}.
\end{align}
Putting all pieces together, Eq. \eqref{final_constraint_KY} becomes equivalent to
\begin{align}
    \delta_{(\beta}^{[\mu}\qty(X\tud{\nu\rho]}{\gamma}-\tilde Y\tud{\nu\rho]}{\gamma})_{;\delta)}=0.\label{c_red_int}
\end{align}
It follows from Eq. \eqref{tildeX} and from the definition of $L_{\alpha\beta\gamma}=L_{[\alpha\beta]\gamma}$ that $X_{\alpha\beta\gamma}$ is antisymmetric on its two first indices, $X_{\alpha\beta\gamma}=X_{[\alpha\beta]\gamma}$.
Let us decompose $X_{\alpha\beta\gamma}$ into its traces and trace-free parts:
\begin{align}
    X_{\alpha\beta\gamma}=A_\alpha g_{\beta\gamma}+B_{\beta}g_{\alpha\gamma}+C_\gamma g_{\alpha\beta}+D^\text{tf}_{\alpha\beta\gamma}.\label{defXD}
\end{align}
Note that the constraint $X_{[\alpha\beta\gamma]}=0$ reduces to $D^\text{tf}_{[\alpha\beta\gamma]}=0$.
Moreover, since $X_{\alpha\beta\gamma}=X_{[\alpha\beta]\gamma}$, it implies that one can set $C_\gamma=0$. Plugging the above decomposition into Eq. \eqref{c_red_int} and using the identity $\delta_{(\alpha}^{[\mu}\delta_{\beta)}^{\nu]}=0$, the constraint \eqref{c_red_int} becomes
\begin{align}
    \delta^{[\mu}_{(\beta}\qty(D\tud{\text{tf}\,\nu \rho]}{\gamma}-\tilde Y\tud{\nu\rho]}{\gamma})_{;\delta)}=0.\label{tf_constraint}
\end{align}
This implies that the 1-forms $A_\alpha$ and $B_\beta$ determining the trace part of $X_{\alpha\beta\gamma}$ are left unconstrained. However, these trace parts produce terms into the conserved quantity \eqref{quadratic_invariant} containing a $p_\mu S^{\mu\nu}$ factor, which vanish due to the spin supplementary condition \eqref{ssc}.
All in all, we can, without loss of generality, set $A_\alpha=B_\beta=0=C_\gamma$, \textit{i.e.} consider that $X_{\alpha\beta\gamma}$ reduces to its traceless part. The constraint equation can be written in the short form
\begin{align}
    \delta_{(\beta}^{[\mu}W\tud{\nu\rho]}{\gamma;\delta)}=0,\label{final_cst}
\end{align}
where we have defined $W_{\alpha\beta\gamma}\triangleq D^{\text{tf}}_{\alpha\beta\gamma}-\tilde Y_{\alpha\beta\gamma}$, which is by definition traceless and antisymmetric in its two first indices. 
Contracting the constraint \eqref{final_cst} with $-\frac{1}{2} \epsilon_{\mu\nu\rho\alpha}$ leads to the equivalent condition
\begin{align}
    ^*\!W_{\alpha(\beta\gamma;\delta)}=0,\label{final_cst_star}
\end{align}
where
\begin{align}
     ^*\!W_{\alpha\beta\gamma}=\,^*\!D^\text{tf}_{\alpha\beta\gamma}-2g_{\gamma[\alpha}Y_{\beta]}.\label{decomp_star}
\end{align}

Given a solution to the simplified constraint \eqref{final_cst_star}, the quasi-invariant $K_{\mu\nu}p^\mu p^\nu + L_{\alpha\beta\gamma}S^{\alpha\beta}p^\gamma$ 
can be constructed using Eqs. \eqref{tildeX}, \eqref{irreducible_parts} and \eqref{defXD} which leads to 
\begin{equation}
  L_{\alpha\beta\gamma}=D^{\text{tf}}_{\alpha\beta\gamma}  +\frac{1}{3}\lambda_{\alpha\beta\gamma}+\epsilon_{\alpha\beta\gamma\delta}(Y^\delta - \frac{4}{3}\nabla^\delta Z).\label{recon}
\end{equation}

\subsection{The R\"udiger quasi-invariant $\mathcal Q_R$}

Let us first recover R\"udiger's quasi-invariant \cite{doi:10.1098/rspa.1983.0012}.
In terms of our new variables, Rüdiger's solution \eqref{sol_C2} is simply
\begin{align}
    D^{\text{tf}}_{\alpha\beta\gamma}=0,\qquad Y_{\alpha}=0. \label{sol_new_var}
\end{align}
Substituting in \eqref{recon}, it leads to the quasi-invariant
\begin{align}
    \mathcal Q_R &=K_{\mu\nu}p^\mu p^\nu+\frac{1}{3}\lambda_{\mu\nu\rho}S^{\mu\nu}p^\rho -\frac{4}{3}\epsilon_{\mu\nu\rho\sigma}S^{\mu\nu}p^\rho \nabla_\sigma \mathcal Z.\label{Rudinv}
\end{align}
Let us work on each term of the right-hand side separately:
\begin{itemize}
    \item We recall the definition 
$L_\alpha\triangleq Y_{\alpha\lambda}p^\lambda$.  The definition \eqref{defK} leads to 
    \begin{align}
        K_{\mu\nu}p^\mu p^\nu=L_\alpha L^\alpha .
    \end{align}
     \item Using the definitions \eqref{def_lambda} and \eqref{defK} and the properties \eqref{KY_equiv} and \eqref{DZ} we have
      
    \begin{align}
        \begin{split}
        \lambda_{\mu\nu\rho}S^{\mu\nu}p^\rho &= 2\nabla_\mu K_{\nu\rho}S^{\mu\nu}p^\rho\\
        &=2\qty(\epsilon_{\mu\nu\lambda\sigma}\epsilon^{\mu\nu\alpha\beta}Y\tud{\lambda}{\rho}+\epsilon_{\mu\lambda\rho\sigma}\epsilon^{\mu\nu\alpha\beta}Y\tdu{\nu}{\lambda})\xi^\sigma\hat p_\alpha S_\beta p^\rho\\
        &=-2\qty(4Y\tud{\lambda}{\rho}\xi^\sigma \hat p_{[\lambda}S_{\sigma]}p^\rho+6Y\tdu{[\lambda}{\lambda}\hat p_\rho S_{\sigma]}\xi^\sigma p^\rho)\\
        &=2\qty(3 Y\tud{\lambda}{\rho}p^\rho\xi^\sigma\hat p_\sigma S_\lambda+Y\tdu{\sigma}{\lambda}S_\lambda\xi^\sigma \hat p_\rho p^\rho-\underbrace{Y\tdu{\sigma}{\lambda}\hat p_\lambda\xi^\sigma S_\rho p^\rho}_{=0})\\
        &=-2\mu\xi^\sigma Y\tdu{\sigma}{\lambda}S_\lambda+6\mu^{-1} L_\alpha S^\alpha \xi_\beta p^\beta\\
        &=-2\mu S^\alpha\partial_\alpha\mathcal Z+6\mu^{-1} L_\alpha S^\alpha \xi_\beta p^\beta.\label{lambdaterm}
        \end{split}
    \end{align}
      
Using \eqref{id}, the factor $L_\alpha S^\alpha$ can be written in a much more enlightening way:
\begin{align}
\begin{split}
    L_\alpha S^\alpha&=Y_{\alpha\lambda}p^\lambda S^\alpha=Y^*_{\alpha\lambda}p^{[\alpha}S^{\lambda]*}=-\frac{\mu}{2}Y^*_{\alpha\lambda}S^{\alpha\lambda} \\ 
    &=-\frac{\mu}{2}\mathcal Q_Y,\label{LS}
\end{split}
\end{align}
where $\mathcal Q_Y$ is Rüdiger's linear invariant \cite{doi:10.1098/rspa.1981.0046}, which is also conserved up to linear order in the spin. 

\item Using the definition of the spin vector \eqref{defSmu} together with the spin supplementary condition allows to write
    \begin{align}
        \begin{split}
        \epsilon_{\mu\nu\rho\sigma} S^{\mu\nu}p^\rho \nabla^\sigma\mathcal Z&=4\delta_\rho^{[\alpha}\delta_\sigma^{\beta]}\nabla^\sigma\mathcal Z \hat p_\alpha S_\beta p^\rho \\ 
        &=4\hat p_{[\rho}S_{\sigma]}\nabla^\sigma \mathcal Z p^\rho  \\ 
        &=2S_\sigma\nabla^\sigma\mathcal{Z}\hat p_\rho p^\rho\\
        &=-2\mu S^{\alpha}\partial_\alpha\mathcal{Z}.\label{third}
        \end{split}
    \end{align}

\end{itemize}

Putting all the pieces together, we get the following form for $\mathcal Q_R$, 
\boxedeqn{
\mathcal Q_R=L_\alpha L^\alpha+2\mu S^\alpha\partial_\alpha\mathcal Z-\mathcal Q_Y\,\xi_\beta p^\beta.\label{QR}
}{Rüdiger quadratic invariant}
which is identically the quadratic R\"udiger invariant \cite{doi:10.1098/rspa.1983.0012}. For Ricci-flat spacetimes, $\xi^\mu$ is a Killing vector and $\xi_\alpha p^\alpha$ can be upgraded to an invariant with a $\mathcal O(\mathcal S^1)$ correction. This implies that the last term in \eqref{QR} is trivial since it is a product of quasi-invariant at linear order in $\mathcal S$.

\subsection{Trivial solutions to the algebraic constraints}
Let us now discuss more general solutions to the simplified algebraic constraints \eqref{final_cst_star}. By linearity, we can substract the R\"udiger quasi-invariant \eqref{Rudinv} from the definition of quasi-invariants \eqref{quadratic_invariant} where $L_{\alpha\beta\gamma}$ is given in Eq. \eqref{recon}. Given a solution to the simplified algebraic constraints \eqref{final_cst_star}, one therefore obtains a quasi-invariant of the form
\begin{align}
    \begin{split}
    \mathcal Q(D_{\alpha\beta\gamma}^{\text{tf}},Y_\alpha)&=\qty(D^\text{tf}_{\alpha\beta\gamma}+\epsilon_{\alpha\beta\gamma\lambda}Y^\lambda)S^{\alpha\beta}p^\gamma\\
    &=2\qty(\,^*\!D^\text{tf}_{\alpha\beta\gamma}+\,^*\!\epsilon_{\alpha\beta\gamma\lambda}Y^\lambda)S^\alpha\hat p^\beta p^\gamma\\
    &=2\mu\,^*\!D^\text{tf}_{\alpha(\beta\gamma)}S^\alpha\hat p^\beta\hat p^\gamma-2\mu S_\alpha Y^\alpha\label{inv_star}
    \end{split}
\end{align}
after using Eq. \eqref{subsS}. Such quasi-invariant is homogeneous in $S$. 

Looking at \eqref{final_cst_star}, it is appealing to first attempt to generate a quasi-invariant through solving the stronger algebraic constraint
\begin{align}
    ^*W_{\alpha(\beta\gamma)}=0.\label{strong_cst}
\end{align}
However, this procedure only leads to identically vanishing quasi-invariants. Indeed, by virtue of Eq. \eqref{decomp_star}, one then has
\begin{align}
    ^*D^\text{tf}_{\alpha(\beta\gamma)}=g_{\alpha(\gamma}Y_{\beta)}-g_{\beta\gamma}Y_\alpha.
\end{align}
This yields
\begin{align}
    \mathcal Q(D_{\alpha\beta\gamma}^{\text{tf}},Y_\alpha)&=-2\mu S_\alpha Y^\alpha \hat p_\beta \hat p^\beta-2\mu S_\alpha Y^\alpha=0.\label{trivid}
\end{align}
This implies that new non-trivial quasi-invariants can only be generated by making a non-trivial use of the symmetrized covariant derivative $_{;\delta)}$ in Eq. \eqref{final_cst_star}.

\subsection{Invariants homogeneously linear in $\mathcal S$}

Non-trivial invariants of the form \eqref{inv_star} can only be generated by finding a solution to the differential constraint \eqref{final_cst_star}. As we will see in this section, this problem and the form of the generated invariant can be recast in a very simple form. The first step is to express the invariant \eqref{inv_star} as a function of $^*W_{\alpha\beta\gamma}$ only. Contracting Eq. \eqref{decomp_star} with $g^{\beta\gamma}$ yields
\begin{align}
    Y_{\alpha}=\frac{1}{3}\,^*W\tdu{\alpha\lambda}{\lambda}.\label{Y}
\end{align}
Rearranging Eq. \eqref{decomp_star} and making use of Eq. \eqref{Y}, we get
\begin{align}
    ^*D_{\alpha\beta\gamma}^\text{tf}=\,^*W_{\alpha\beta\gamma}+\frac{2}{3}g_{\gamma[\alpha}\,^*W\tdu{\beta]\lambda}{\lambda}.
\end{align}
Plugging these two expressions into Eq. \eqref{inv_star} and using the orthogonality condition $\hat p_\alpha S^\alpha=0$ leads to the simple expression
\begin{align}
    \mathcal Q(W_{\alpha\beta\gamma})=2\mu \,^*W_{\alpha(\beta\gamma)}S^\alpha\hat p^\beta\hat p^\gamma.\label{inv_star_final}
\end{align}
Now, because the dualization is an invertible operation, the giving of $^*W_{\alpha(\beta\gamma)}$ is equivalent to the giving of the symmetrized part in its two last indices of a rank-3 tensor $N_{\alpha\beta\gamma}$ antisymmetric in its two first indices, such that
\begin{align}
    N_{\alpha\beta\gamma}\equiv\,^*W_{\alpha\beta\gamma},
\end{align}
which obeys
\begin{align}
    N_{\alpha(\beta\gamma;\delta)}=0.\label{sfinalc}
\end{align}
Note that one \textit{cannot} impose that $N_{\alpha\beta\gamma}$ is also symmetric in $\beta\gamma$ otherwise it would vanish because one would have $N_{\alpha\beta\gamma}=-N_{\beta\alpha\gamma}=-N_{\beta\gamma\alpha}=N_{\gamma\beta\alpha}=N_{\gamma\alpha\beta}=-N_{\alpha\gamma\beta}=-N_{\alpha\beta\gamma}$.

The tensor $T_{\alpha\beta\gamma}\triangleq N_{\alpha(\beta\gamma)}$ obeys the cyclic identity 
\begin{align}\label{cyc}
T_{\alpha\beta\gamma}+T_{\beta\gamma\alpha}+T_{\gamma\alpha\beta}=0
\end{align} 
but since $T_{\alpha\beta\gamma}$ is not symmetric in its two first indices, it is not totally symmetric and the condition defining the mixed-symmetry tensor \eqref{sfinalc} is \emph{distinct} from the condition defining a Killing tensor $K_{(\alpha\beta\gamma;\delta)}=0$ where $K_{\alpha\beta\gamma}$ is totally symmetric $K_{(\alpha\beta\gamma)}=K_{\alpha\beta\gamma}$. In terms of representation of the permutation group, $T_{\alpha\beta\gamma}$ is a $\{2,1\}$ Young diagram.

Consequently, the following statement holds:
 \begin{proposition}
Let $(\mathcal M,g_{\mu\nu})$ be a (3+1)-dimensional Ricci-flat spacetime admitting a Killing-Yano tensor such that there exists on $\mathcal M$ a mixed-symmetry Killing tensor, \emph{i.e.}, a rank-3 tensor $T_{\alpha\beta\gamma}$ which is a $\{2,1\}$ Young tableau, \emph{i.e.}, built as $T_{\alpha\beta\gamma}= N_{\alpha(\beta\gamma)}$ such that $N_{\alpha\beta\gamma}=N_{[\alpha\beta]\gamma}$,  satisfying the (differential) constraint
\begin{align}
    T_{\alpha(\beta\gamma;\delta)}=0.\label{finalc}
\end{align}
Then, the quantity
\begin{align}
    \mathcal{N}\triangleq T_{\alpha\beta\gamma}S^\alpha\hat p^\beta\hat p^\gamma\label{cons_qty}
\end{align}
is a (homogeneously linear in $\mathcal S$) quasi-invariant for the linearized MPT equations on $\mathcal M$, \emph{i.e.}
\begin{align}
    \dv{\mathcal{N}}{\tau}=\mathcal O(\mathcal S^2).
\end{align}
\end{proposition}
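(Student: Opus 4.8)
The plan is to verify directly that $\dot{\mathcal N}=\mathcal O(\mathcal S^2)$ by differentiating the Ansatz along the linearized MPTD flow and checking that the constraint \eqref{finalc} is exactly what is needed. First I would write $\mathcal N = T_{\alpha\beta\gamma}S^\alpha \hat p^\beta \hat p^\gamma$ with $\hat p^\mu = p^\mu/\mu$, and recall that along the linearized MPTD equations \eqref{linearized_MPTD} one has $\dd S^\mu/\dd\tau = 0$ (parallel transport of the spin vector), $\dd p^\mu/\dd\tau = f_S^{(1)\mu}=\mathcal O(\mathcal S)$, $p^\mu=\mu v^\mu$, and $\dd\mu/\dd\tau=0$ at this order. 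Since $S^\mu = \mathcal O(\mathcal S)$ already, any term in $\dot{\mathcal N}$ in which a derivative hits $p^\mu$ (producing an extra factor of $\mathcal S$ through $f_S^{(1)}$) is automatically $\mathcal O(\mathcal S^2)$ and can be dropped. Likewise $\dd S^\mu/\dd\tau = 0$ exactly, so the only surviving contribution at $\mathcal O(\mathcal S)$ comes from the covariant derivative acting on the coefficient tensor $T_{\alpha\beta\gamma}$.

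Carrying this out: $\dot{\mathcal N} = v^\delta \nabla_\delta\!\left(T_{\alpha\beta\gamma}\right) S^\alpha \hat p^\beta \hat p^\gamma + T_{\alpha\beta\gamma}\,\frac{\mathrm D S^\alpha}{\dd\tau}\,\hat p^\beta\hat p^\gamma + 2\,T_{\alpha\beta\gamma}S^\alpha \hat p^\beta\,\frac{\mathrm D \hat p^\gamma}{\dd\tau}$. The second term vanishes identically by \eqref{MPT_lin_2}. The third term is $T_{\alpha\beta\gamma}S^\alpha\hat p^\beta f_S^{(1)\gamma}/\mu$, which is $\mathcal O(\mathcal S^2)$ since $S^\alpha = \mathcal O(\mathcal S)$ and $f_S^{(1)\gamma}=\mathcal O(\mathcal S)$. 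In the first term we replace $v^\delta$ by $\hat p^\delta$ (the correction is $\mathcal O(\mathcal S^2)$ at this order by \eqref{v_of_p}), so we are left with $\nabla_{\delta}T_{\alpha\beta\gamma}\,S^\alpha\hat p^\beta\hat p^\gamma\hat p^\delta + \mathcal O(\mathcal S^2)$. Because $\hat p^\beta\hat p^\gamma\hat p^\delta$ is totally symmetric in $\beta\gamma\delta$, only the symmetrized part $\nabla_{(\delta}T_{|\alpha|\beta\gamma)}$ contributes, i.e. $T_{\alpha(\beta\gamma;\delta)}$ with the $\alpha$ index held fixed — and this is precisely the tensor set to zero by the hypothesis \eqref{finalc}. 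Hence $\dot{\mathcal N}=\mathcal O(\mathcal S^2)$, as claimed.

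Two small points deserve care and I would address them explicitly. First, one should note that the contraction $T_{\alpha\beta\gamma}S^\alpha\hat p^\beta\hat p^\gamma$ only sees $T_{\alpha(\beta\gamma)}=N_{\alpha(\beta\gamma)}$, so there is no loss in working with the symmetrized-in-last-two-indices object throughout; the antisymmetry $N_{[\alpha\beta]\gamma}$ and the $\{2,1\}$ Young-tableau structure are what guarantee $T_{\alpha\beta\gamma}$ is genuinely not totally symmetric, but this structural remark is not needed for the conservation computation itself — it only matters (as the surrounding text emphasizes) for deciding whether $\mathcal N$ is a new, independent quasi-invariant rather than a disguised Killing-tensor constant. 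Second, one should double-check that using $S^\mu$ versus the relaxed spin $s^\mu$ makes no difference here: since $\mathcal N$ already carries one explicit power of $S^\alpha$ and we only work to $\mathcal O(\mathcal S)$, the projector correction $S^\alpha = \Pi^\alpha_\beta s^\beta$ is irrelevant, and the orthogonality $\hat p_\alpha S^\alpha = 0$ can be used freely.

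I do not expect a serious obstacle in this proof — it is essentially a one-line computation once the bookkeeping of spin orders is set up, and it is structurally identical to the derivation of Rüdiger's linear invariant \eqref{linear_rudiger_invariant} and the Killing-tensor constant for geodesics. The only mildly delicate step is justifying that every dropped term is honestly $\mathcal O(\mathcal S^2)$: this requires invoking $f_S^{(1)\mu}=\mathcal O(\mathcal S)$, $\dd S^\mu/\dd\tau=0$, $\dd\mu/\dd\tau = \mathcal O(\mathcal S^2)$, and $v^\mu - \hat p^\mu = \mathcal O(\mathcal S^2)$, all of which are established earlier in the excerpt. I would present the argument in the order above: (i) differentiate; (ii) kill the spin-transport term exactly; (iii) bound the $f_S^{(1)}$-terms as $\mathcal O(\mathcal S^2)$; (iv) replace $v^\delta\to\hat p^\delta$; (v) symmetrize and invoke \eqref{finalc}.
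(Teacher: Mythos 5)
Your proof is correct, and it takes a genuinely different — and more elementary — route than the paper. The paper never verifies this proposition by direct differentiation: there, the statement emerges as the endpoint of the long reduction of the quadratic-invariant constraint \eqref{C2} (structural equation, central identity, Rüdiger variables, dualization), culminating in the condition $^*W_{\alpha(\beta\gamma;\delta)}=0$ and the identification $\mathcal Q = 2\mu\,^*W_{\alpha(\beta\gamma)}S^\alpha\hat p^\beta\hat p^\gamma$; the proposition is then read off from that chain. Your argument instead checks the claimed implication directly: differentiate $\mathcal N$, kill the spin term exactly via $\mathrm{D}S^\mu/\dd\tau=0$, bound the momentum-derivative term by $\mathcal O(\mathcal S^2)$ using $f^{(1)\mu}_S=\mathcal O(\mathcal S)$ together with the explicit factor $S^\alpha$, use $p^\mu=\mu v^\mu$ (which at linearized order is exact, Eq.~\eqref{pv}, so you do not even need the $\mathcal O(\mathcal S^2)$ estimate from the quadrupole-order velocity–momentum relation), and conclude by total symmetrization against $\hat p^\beta\hat p^\gamma\hat p^\delta$ and the hypothesis \eqref{finalc}. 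This is a complete proof of the stated (one-directional) implication, and your remark that the orthogonality $\hat p_\alpha S^\alpha=0$ and the relaxed-spin substitution are irrelevant here is right: the symmetrization step is an algebraic identity valid for any $\hat p$, so the constraint between $S$ and $p$ plays no role in this direction. What the paper's heavier machinery buys, and what your direct check does not address (nor does the proposition claim it), is the converse: that every homogeneously-linear-in-spin quasi-invariant of the quadratic ansatz necessarily arises from a mixed-symmetry Killing tensor, which is the fact actually needed for the uniqueness/no-go analysis that follows in the Kerr section.
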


\subsection{Trivial mixed-symmetry Killing tensors}

If a mixed-symmetry Killing tensor $T_{\alpha\beta\gamma}$ is found, the only point to be addressed before claiming the existence of a new quasi-invariant is to check its non-triviality,   \textit{i.e.},  it should not be the product of two others quasi-invariants. We define a \emph{trivial mixed-symmetry Killing tensor} $T_{\alpha\beta\gamma}$ as mixed-symmetry Killing tensor that generates a trivial quasi-invariant, \textit{i.e.} an invariant which can be written as the product of other quasi-invariants.\footnote{It remains to be investigated if such a tensor necessarily takes the form of a sum of direct products of tensors such that Eq. \eqref{finalc} holds. We have not found any simple argument for proving this assertion.} A \emph{non-trivial mixed-symmetry Killing tensor} is a mixed-symmetry Killing tensor which is not trivial. 

If the spacetime admits a Killing-Yano tensor and a Killing vector, we can construct a trivial mixed-symmetry Killing tensor of the form 
\begin{equation}
T_{\alpha\beta\gamma}=Y_{\alpha(\beta}\xi_{\gamma )} .\label{Ntrivial}
\end{equation}
built as $T_{\alpha\beta\gamma} = N^{(1)}_{\alpha(\beta\gamma)}=N^{(2)}_{\alpha(\beta\gamma)}$ from either 
\begin{equation}
N^{(1)}_{\alpha\beta\gamma}=Y_{\alpha\beta}\xi_\gamma    , \quad \text{or}\quad 
N^{(2)}_{\alpha\beta\gamma}=Y_{\alpha\gamma}\xi_\beta - Y_{\beta\gamma}\xi_\alpha . 
\end{equation}
It is straightforward from Eq. \eqref{KYdef} and the Killing equation that they obey \eqref{finalc}.

Now, for this $T_{\alpha\beta\gamma}$ the associated quasi-invariant is the following product of quasi-invariants: 
\begin{equation}
\mathcal Q=-\frac{1}{2} \mathcal Q_Y \; \xi^\alpha \hat p_\alpha ,
\end{equation}
where the linear R\"udiger quasi-invariant $\mathcal Q_Y$ is defined in Eq. \eqref{QR} and obeys Eq. \eqref{LS}.

The question of existence of quasi-invariants beyond the ones found by R\"udiger therefore amounts to determine the existence of non-trivial mixed-symmetry Killing tensors. 

In order to gain some intuition about mixed-symmetry Killing tensors, it is useful to derive the general such tensor in Minkowski spacetime. For a Riemann flat spacetime, the covariant derivative becomes a coordinate derivative in Minkowskian coordinates (in any dimension). We can therefore consider the index $\alpha$ in Eq. \eqref{finalc} as a parametric index and consider the list of 2-components objects $K_{(\alpha)\beta\gamma}=T_{\alpha\beta\gamma}$ symmetric under the exchange of $\beta\gamma$, parametrized by $\alpha$. The constraint \eqref{finalc} is then equivalent to the Killing tensor equation for each $K_{(\alpha)\beta\gamma}$, $\alpha$ fixed. Since all Killing tensors in Minkowski spacetime are direct products of Killing vectors, we can write $K_{(\alpha)\beta\gamma}$ as a sum of terms of the form $K_{(\alpha)\beta\gamma}=M_{(\alpha)(i)(j)}\xi^{(i)}_\beta \xi^{(j)}_\gamma$ where $(i),(j)$ label the independent Killing vectors. In order to respect the symmetries of a mixed-symmetry tensor we can write in particular the resulting $T_{\alpha\beta\gamma}$ as a linear combination of terms  $X_{\alpha(\beta}\xi_{\gamma)}$ where $\xi^\gamma$ is a Killing vector. The symmetry properties of a  mixed-symmetry tensor imply that $X_{\alpha\beta}=X_{[\alpha\beta]}$. The constraint \eqref{finalc} finally reduces to the condition that $X_{\alpha\beta}$ is a Killing-Yano tensor. We have therefore proven that any mixed-symmetry tensor in Minkowski spacetime takes the trivial form \eqref{Ntrivial}. 

Now, Kerr spacetime admits a non-trivial Killing tensor and curvature and further analysis is required. We will address the existence of a mixed-symmetry tensor for Kerr spacetime in the following.

\section{Linear invariants for Kerr spacetime}\label{sec:Kerr}

We will now derive the explicit expressions for the various invariants discussed above particularized to Kerr spacetime, which is a Ricci-flat spacetime admitting an irreducible Killing-Yano tensor as well as two Killing vectors. In order to simplify the expressions obtained, we introduce the tetrad\footnote{Following \cite{Ruangsri:2015cvg}, we've introduced the symbol ``$\stackrel{.}{=}$'', whose meaning is ``the tensorial object of the left-hand side is represented in Boyer-Lindquist coordinates by the components given in the right-hand side''. Similarly, we introduce the symbol ``$\stackrel{,}{=}$'', bearing the same meaning, but in the tetrad basis.} \cite{Mino:1997bx}
\begin{subequations}
\begin{align}
    e\tud{0}{\mu}&\stackrel{.}{=}\mqty(\sqrt{\frac{\Delta}{\Sigma}},&0,&0,&-a\sin^2\theta\sqrt{\frac{\Delta}{\Sigma}}),\\
    e\tud{1}{\mu}&\stackrel{.}{=}\mqty(0,&\sqrt{\frac{\Sigma}{\Delta}},&0,&0),\\
    e\tud{2}{\mu}&\stackrel{.}{=}\mqty(0,&0,&\sqrt{\Sigma},&0),\\
    e\tud{3}{\mu}&\stackrel{.}{=}\mqty(-\frac{a}{\sqrt{\Sigma}}\sin\theta,&0,&0,&\frac{r^2+a^2}{\sqrt{\Sigma}}\sin\theta).
\end{align}
\end{subequations}
We use lower-case Latin indices to denote tetrad components, \textit{e.g.} 
\begin{align}
    e\tud{a}{\mu}=\qty(e\tud{0}{\mu},e\tud{1}{\mu},e\tud{2}{\mu},e\tud{3}{\mu}).
\end{align}
The tetrad components $V^a$ of any vector $\mathbf V$ are given by $V^a=e\tud{a}{\mu} V^\mu$. Tetrad indices are lowered and raised with the Minkowski metric $\eta_{ab}=\text{diag}\mqty(-1,&1,&1,&1)$. We choose the convention $\epsilon^{t r \theta \varphi }=+1$. In this tetrad basis, the Kerr Killing-Yano tensor and its dual take the elegant form
\begin{subequations}
\begin{align}
    &\frac{1}{2} Y_{ab}\dd x^a\wedge\dd x^b=a\cos\theta\,\dd x^1\wedge\dd x^0+r\,\dd x^2\wedge\dd x^3,\label{Yddt}\\
    &\frac{1}{2}Y^*_{ab}\dd x^a\wedge\dd x^b=r\,\dd x^1\wedge\dd x^0+a\cos\theta\,\dd x^3\wedge\dd x^2.\label{YStarddt}
\end{align}
\end{subequations}

\subsection{Known quasi-conserved quantities}

We now explicit the various quasi-invariants for the linearized MPTD equations on Kerr spacetime. We recall that the norms of the vector-variables
\begin{align}
    \mu^2=-p_a p^a,\qquad S^2\triangleq S_a S^a
\end{align}
are conserved along the motion. In what follows, we will work with the dimensionful quantities $\mathcal E_0$, $\mathcal L_0$, $\mathcal K_0$. They are related to the proper quantities $E_0$, $L_0$ and $K_0$ through
\begin{align}
    \mathcal E_0=\mu E_0,\qquad\mathcal L_0=\mu L_0,\qquad\mathcal K_0=\mu^2 K_0.
\end{align}
This is totally equivalent, since the ``proper'' versions of the constants of motion are obtained from the non-proper ones through the replacements 
\begin{align}
    p^\mu\to \hat p^\mu=\frac{p^\mu}{\mu},\qquad S^{\mu\nu}\to \frac{S^{\mu\nu}}{\mu},\qquad S^\mu\to\frac{S^\mu}{\mu}.
\end{align}

\subsubsection{Invariants generated by Killing fields} 
From the existence of the two Killing fields $\xi^\mu$ and $\eta^\mu$, one can construct two linear invariants, namely the energy $\mathcal E\triangleq -\mathcal C_\xi$ and the projection of the angular momentum along the direction of the black hole spin, $\mathcal L \triangleq\mathcal C_\eta$ where $\mathcal C_\xi$ is defined in Eq. \eqref{killing_conserved_spin}. Their explicit expressions are given by \cite{Saijo:1998mn}
\begin{subequations}\label{spin_EL}
\begin{align}
    \mathcal E&=\mathcal E_0+\frac{M}{\Sigma^2}\qty[\qty(r^2-a^2\cos^2\theta)S^{10}-2ar\cos\theta S^{32}],\\
    \mathcal L&=\mathcal L_0 + \frac{a \sin^2\theta}{\Sigma^2}\qty[(r-M)\Sigma+2Mr^2]S^{10}+\frac{a\sqrt{\Delta}\sin\theta\cos\theta}{\Sigma}S^{20}\nonumber\\
    &\quad+\frac{r\sqrt{\Delta}\sin\theta}{\Sigma}S^{13}
    +\frac{\cos\theta}{\Sigma^2}\qty[(r^2+a^2)^2-a^2\Delta\sin^2\theta]S^{23}.
\end{align}
\end{subequations}
Here, $\mathcal E_0$ and $\mathcal L_0$ denote respectively the geodesic energy and angular momentum
\begin{subequations}
\begin{align}
\mathcal E_0&=-\xi_\mu p^\mu=-p_t=\sqrt{\frac{\Delta}{\Sigma}}p^0+\frac{a\sin\theta}{\sqrt{\Sigma}}p^3,\label{energy}\\
\mathcal L_0 &= \eta_\mu p^\mu=p_\varphi= a\sin^2\theta\sqrt{\frac{\Delta}{\Sigma}}p^0+(a^2+r^2)\frac{\sin\theta}{\sqrt{\Sigma}}p^3.\label{angular_momentum}
\end{align}
\end{subequations}

\subsubsection{Linear Rüdiger quasi-invariant} 
An elegant expression for $\mathcal Q_Y$ \eqref{linear_rudiger_invariant} is given through expressing the spin tensor components in the tetrad frame.  Using Eq. \eqref{YStarddt}, we find that Rüdiger linear quasi-invariant is given by the simple expression
\boxedeqn{
    \mathcal Q_Y=2\qty(rS^{10}+a\cos\theta S^{32}).\label{linear_rudiger_kerr}
}{Linear Rüdiger invariant in Kerr spacetime}

\subsubsection{Quadratic Rüdiger quasi-invariant} 
An explicit expression for Rüdiger quadratic invariant \eqref{quadratic_rudiger} is obtained by evaluating the three scalar products $L_\alpha L^\alpha$, $S^\alpha\partial_\alpha\mathcal Z$ and $\xi_\alpha p^\alpha$. First, the product $L_\alpha L^\alpha$ reduces to the usual Carter constant $\mathcal K_0$,
\begin{align}
    L_\alpha L^\alpha = K_{\alpha\beta}p^\alpha p^\beta = \mathcal K_0.
\end{align}

Second, a direct computation reveals that the Killing vector $\xi^\alpha$ \eqref{xi} reduces to the timelike Kerr Killing vector:
\begin{align}
    \xi^\alpha\stackrel{.}{=}\mqty(1,&0,&0,&0).
\end{align}
The product $\xi_\alpha p^\alpha$ is consequently equal to minus the geodesic energy $\mathcal E_0$:
\begin{align}
    \xi_\alpha p^\alpha=-\mathcal E_0.
\end{align}

Only the factor $S^\alpha\partial_\alpha\mathcal Z$ remains to be computed. The Killing-Yano scalar $\mathcal Z$ \eqref{defZ} reads
\begin{align}
    \mathcal Z=-ar\cos\theta.
\end{align}
The simplest expression for $S^\alpha\partial_\alpha\mathcal Z$ is provided through expressing the components of the spin vector in Boyer-Lindquist coordinates:
\begin{align}
    S^\alpha\partial_\alpha\mathcal Z&=a(r \sin\theta\, S^\theta - \cos\theta\, S^r).
\end{align}
In terms of the spin tensor, we get the more involved expression
\begin{align}
    S^\alpha\partial_\alpha\mathcal Z&= -\frac{3 a}{\mu\sqrt{\Sigma}}(\sqrt{\Delta}\cos\theta p^{[0}S^{23]}+r \sin\theta p^{[0}S^{13]}). \label{SDZ_tetrad}
\end{align}
Rüdiger's quadratic invariant \eqref{Rudinv} consequently takes the form
\boxedeqn{
    \mathcal Q_R=\mathcal K_0+2\mu a \qty(r\sin\theta S^\theta-\cos\theta S^r)+\mathcal Q_Y \mathcal E_0.\label{quadratic_rudiger_kerr}
}{Rüdiger quadratic invariant in Kerr spacetime}

In summary, we have in our possession  four quasi-constants of motion (in addition to $\mu^2$ and $\mathcal S$): $\mathcal E$, $\mathcal L$, $\mathcal Q_Y$ and $\mathcal Q_R$ that are conserved along the flow generated by the MPTD equations at linear order in $\mathcal S$. They consequently form a set of four linearly independent first integrals for the linearized MPTD equations which, however, are \textit{not} in involution, as will be later proven in Chapter \ref{chap:integrability}.

\subsection{Looking for mixed-symmetry tensors}
Let us now address the question of the existence of a non-trivial mixed-symmetry Killing tensor, i.e. a tensor $T_{\alpha\beta\gamma}$ obeying the symmetries of a $\{2,1\}$ Young tableau and obeying Eq. \eqref{finalc}, on Kerr spacetime. It is highly relevant because such the existence of such a tensor is in one-to-one correspondence with the existence of another independent quasi-conserved quantity that could, possibly, be in involution with the others first integrals and consequently leading the system to be integrable.
All the computations mentioned below being very cumbersome, they will not be reproduced here but are encoded in a \textit{Mathematica} notebook, which is available on simple request.
 
From the symmetry in $(\beta\gamma)$ alone, there are $4 \times 10=40$ components in $T_{\alpha\beta\gamma}$ in 4 spacetime dimensions but they are not all independent because $\mathbf T$ is defined from $\mathbf N$ which is antisymmetric in its first indices. From the cyclic identity \eqref{cyc} we deduce the following: (i) The components of $T_{\alpha\beta\gamma}$ with all indices set equal to $i=1, \dots, 4$ is 0, $T_{iii}=0$ (4 identities); (ii) In the presence of two distinct components $i,j=1, \dots , 4$ we have $2T_{iij}+T_{jii}=0$ (12 identities); (iii) In the presence of three distinct components $i,j,k=1,\dots ,4$ we have $T_{ijk}+T_{jki}+T_{kij}=0$ (4 identities). There are no further algebraic symmetries. There are therefore 20 independent components, which we can canonically choose to be the union of the sets of components $\mathcal T_2$ (of order $\vert \mathcal T_2 \vert=12$) and $\mathcal T_3$ (of order $\vert \mathcal T_3 \vert=8$) that are defined as 
\begin{subequations}
\begin{align}
\mathcal T_2 &=\{ T_{ijj} \,\vert\, i,j=1,\dots , 4, \, j \neq i \}, \\ 
    \mathcal T_3 &= \{T_{123},T_{213},T_{124},T_{214},T_{134},T_{314},T_{234},T_{324} \}. 
\end{align}
\end{subequations}
The constraints \eqref{finalc} are 64 equations which can be splitted as follows: (i) 4 equations with 4 distinct indices $T_{i(jk;l)}=0$; (ii) 12 equations with 3 distinct indices of type $T_{i(ij;k)}=0$; (iii) 24 equations with 3 distinct indices of type $T_{i(jj;k)}=0$; (iv) 12 equations with 2 distinct indices of type $T_{i(jj;j)}=0$ and (v) 12 equations with 2 distinct indices of type $T_{i(ii;j)}=0$. 

Let us now specialize to the Kerr background and impose stationarity and axisymmetry, $\partial_t T_{\alpha\beta\gamma}=\partial_\varphi T_{\alpha\beta\gamma}=0$. In that case, the 4 equations $T_{r(tt;\varphi)}=0$, $T_{r(\varphi\varphi;t)}=0$, $T_{\theta (tt;\varphi)}=0$, $T_{\theta(\varphi\varphi;t)}=0$ and the 6 equations $T_{t(\varphi\varphi;\varphi)}=0$, $T_{r(\varphi\varphi;\varphi)}=0$, $T_{\theta(\varphi\varphi;\varphi)}=0$, $T_{r(tt;t)}=0$, $T_{\theta(tt;t)}=0$, $T_{\varphi(tt;t)}=0$  are algebraic, and can be algebraically solved for 10 out of the 20 variables. There are two additional combinations of the remaining equations that allow to algebraically solve for 2 further variables. After removing redundant equations, we can finally algebraically reduce the system of 64 equations in 20 variables to 13 partial differential equations in 8 variables.

Further specializing to the Schwarzschild background, we found the general solution to the thirteen equations. There are exactly two regular solutions which are both trivial mixed-symmetry tensors of the form \eqref{Ntrivial} with either $\xi = \partial_t$ or $\xi = \partial_\varphi$. Note that if we relax axisymmetry, there are two further trivial mixed-symmetry tensors \eqref{Ntrivial} with $\xi$ given by the two additional $\text{\textsf{SO}}(3)$ vectors.

For Kerr, we did not find the general solution of the 13 partial differential equations in 8 variables. However, we obtained the most general perturbative deformation in $a$ of the 2-parameter family of trivial mixed-symmetry tensors of the form \eqref{Ntrivial} with $\xi = \partial_t$ and $\partial_\varphi$, assuming stationarity and axisymmetry. We obtained that the most general deformation is precisely a linear combination of the two trivial mixed-symmetry tensors of the form \eqref{Ntrivial} with $\xi = \partial_t$ and $\partial_\varphi$. Moreover, we also checked that there does not exist a consistent linear deformation of a linear combination of the two $\varphi$-dependent $\text{\textsf{SO}}(3)$ Schwarzschild trivial mixed-symmetry tensors. Since we physically expect continuity of the quasi-conserved quantities between Kerr and Schwarzschild, we ruled out the existence of new stationary and axisymmetric quasi-conserved quantities of the MPTD equations. 

\chapter[Solving the constraints: second order in the spin]{Solving the constraints:\\ second order in the spin}\label{chap:second_order_solution}
\chaptermark{\textsc{Second order in the spin}}

\vspace{\stretch{1}}

In this chapter, we will look again at the conservation equations for the two Ansätze Eqs. \eqref{ansatze}. This time, we will consider the full MPTD equations \eqref{MPD} up to second order in $\mathcal S$. We assume the force and torque terms to be generated by the presence of a spin-induced quadrupole term, given by Eq. \eqref{spin_induced_Q}.

We will heavily base ourselves on the quantities conserved at linear order that were discussed in the previous chapter, building our second order conserved quantities as perturbations of the former. Before starting, two important points shall be raised: first, by contrast to the linearized case, we will have to specialize to Kerr background to be able to work out explicit solutions to the conservation equations. As we will see, a crucial feature for the analytic solving of the problem is the existence of \textit{covariant building blocks} for all Kerr tensorial quantities. In this formulation, all differential relations will reduce to algebraic identities. This will enable to turn the task of solving the constraint equations to a purely algebraic, enumerative problem. Second, the constraints will now explicitly depend upon the quadrupole coupling $\kappa$. This reflects the fact that the universality of the motion of spinning test bodies present at first order is now broken, since $\kappa$ depends upon the nature of the body. Moreover, the value of the coupling $\kappa$ will appear to be of prime importance: one will only be able to find generalizations of the constants $\mathcal Q_Y$ and $\mathcal Q_R$ if the body's quadrupole coupling is itself the one of a Kerr black hole, $\kappa=1$. 

We will proceed as follows. We start by developing the constraint equations for the existence of the conserved quantities in tensorial form in Section~\ref{sec:ConstrainEqs}. 
In Section~\ref{sec:CBB}, we discuss covariant algebraic and differential relations amongst basic fields, ``covariant building blocks,'' characterizing the Kerr geometry, which play a central role in our solutions to the constraint equations.  We use these to reduce the tensorial constraint equations to a system of scalar equations in Section~\ref{sec:Reduction}, and we derive our solutions for the special black-hole case $\kappa=1$ in Section~\ref{sec:BH}.  In Section~\ref{sec:NS}, we investigate the case $\kappa\ne 1$, for non-black-hole bodies such as neutron stars with spin-induced quadrupoles, concluding that there is no solution to the constraints for $\kappa\ne 1$.  Finally we summarize our findings in Section~\ref{sec:outlooks}.

\section{Constraint equations: tensorial formulation}\label{sec:ConstrainEqs}
In this section, we will apply Rüdiger's procedure to derive the tensorial constraint equations that should be obeyed for ensuring the conservation of the two quantities Eqs. \eqref{ansatze}. Form now on, we take the background spacetime to be \textsf{Kerr}$(M,a)$.

\subsection{Linear constraint}
Let us begin by looking again at the linear invariant \eqref{linear_ansatz}. Recall that is takes the form
\begin{align}
    \mathcal Q^{(1)}\triangleq X_\mu p^\mu+W_{\mu\nu}S^{\mu\nu}
\end{align}
with $W_{\mu\nu}$ a skew-symmetric tensor. 

Applying Rüdiger's procedure, the conservation equation $\dot{\mathcal{ Q}}^{(1)}=\mathcal O\qty(\mathcal S^3)$ reduces to the following set of equations:
\begin{subequations}\label{linear_cst}
\begin{align}
    [0,2]:&\qquad\nabla_\mu X_\nu\hat p^\mu\hat p^\nu=\mathcal O\qty(\mathcal S^3),\label{[0,2]}\\
    [1,2]:&\qquad\nabla_\mu W^*_{\alpha\nu}s^\alpha\hat p^\mu\hat p^\nu-\frac{1}{2}X^\lambda R^*_{\lambda\nu\beta\rho}s^\beta\hat p^\nu\hat p^\rho=\mathcal O\qty(\mathcal S^3),\label{[1,2]}\\
    [2,2]:&\qquad\frac{\kappa}{2\mu}X^\lambda\nabla_\lambda R_{\nu\alpha\beta\rho}s^\alpha s^\beta\hat p^\nu\hat p^\rho+Y_{\mu\nu}\mathcal L^{*\mu\nu}=\mathcal O\qty(\mathcal S^3),\label{[2,2]}\\
    [2,4]:&\qquad\qty(\nabla_\lambda X_\mu-2 W_{\lambda\mu})\qty(\mu D\tud{\lambda}{\nu}-\mathcal L\tud{\lambda}{\nu})\hat p^\mu\hat p^\nu=\mathcal O\qty(\mathcal S^3).\label{[2,4]}
\end{align}
\end{subequations}
We therefore have the following proposition:
\begin{proposition}
For any pair $(X_\mu,W_{\mu\nu})$ satisfying the constraint equations \eqref{linear_cst} and assuming the MPD equations \eqref{MPD} are obeyed, the quantity $\mathcal Q^{(1)}$ \eqref{linear_ansatz} will be conserved up to second order in the spin parameter, \textit{i.e.} $\dot{\mathcal{Q}}^{(1)}=\mathcal O\qty(\mathcal S^3)$.
\end{proposition}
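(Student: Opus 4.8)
The statement is the second-order analogue of the proposition proved for the linearized case in Section~\ref{sec:invariants}: it asserts that the four tensorial constraints \eqref{linear_cst} are precisely the conditions under which $\dot{\mathcal Q}^{(1)}$ vanishes up to $\mathcal O(\mathcal S^3)$. So the plan is to differentiate $\mathcal Q^{(1)}=X_\mu p^\mu+W_{\mu\nu}S^{\mu\nu}$ along the MPTD flow, substitute the full (quadrupolar) equations of motion \eqref{MPD}, re-express everything in terms of the independent variables $\hat p^\mu$ and the relaxed spin vector $s^\alpha$, sort the resulting expression by its grading $[s,p]$, and read off that the vanishing of each graded piece (modulo $\mathcal O(\mathcal S^3)$) is exactly one of the four equations \eqref{[0,2]}--\eqref{[2,4]}. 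This is ``Step 5/Step 6'' of Rüdiger's procedure applied to the linear Ansatz.

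\textbf{Key steps, in order.} First I would write $\dot{\mathcal Q}^{(1)}=v^\lambda\nabla_\lambda X_\mu\, p^\mu + X_\mu\,\tfrac{\mathrm D p^\mu}{\mathrm d\tau} + v^\lambda\nabla_\lambda W_{\mu\nu}\,S^{\mu\nu} + W_{\mu\nu}\,\tfrac{\mathrm D S^{\mu\nu}}{\mathrm d\tau}$, then insert \eqref{MPD_1}--\eqref{MPD_2}: $\tfrac{\mathrm D p^\mu}{\mathrm d\tau}=-\tfrac12 R\tud{\mu}{\nu\alpha\beta}v^\nu S^{\alpha\beta}+\mathcal F^\mu$ and $\tfrac{\mathrm D S^{\mu\nu}}{\mathrm d\tau}=2p^{[\mu}v^{\nu]}+\mathcal L^{\mu\nu}$. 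The term $2W_{\mu\nu}p^{[\mu}v^{\nu]}=W_{\mu\nu}p^\mu v^\nu - W_{\mu\nu}p^\nu v^\mu$ is handled by replacing $v^\mu$ via \eqref{v_of_p}, i.e. $v^\mu=\hat p^\mu+(D^{\mu\nu}-\tfrac1\mu\mathcal L^{\mu\nu})\hat p_\nu+\mathcal O(\mathcal S^3)$; the leading piece $\hat p^\mu\hat p^\nu W_{\mu\nu}=0$ by antisymmetry, so the momentum-grading-4 contribution of this term survives only through the $D-\tfrac1\mu\mathcal L$ correction, which is exactly the structure appearing in \eqref{[2,4]}. Second, I would convert every occurrence of $S^{\mu\nu}$ to spin-vector form using $S^{\alpha\beta}=2s^{[\alpha}\hat p^{\beta]*}$ and every occurrence of $S^\alpha$ to $s^\alpha$ via $S^\alpha=\Pi^\alpha_\beta s^\beta$, using the identities \eqref{subsS} and the Hodge-duality rules (III) to move the stars onto $W$ and $R$; this is where $W^*_{\mu\nu}$ and $R^*_{\lambda\nu\beta\rho}$ enter \eqref{[1,2]}. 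Third, I would expand $\mathcal F^\mu$ and $\mathcal L^{\mu\nu}$ using their spin-induced-quadrupole expressions \eqref{F}--\eqref{Lv}, noting $\mathcal F^\mu=\mathcal O(\mathcal S^2)$ and $\mathcal L^{\mu\nu}=\mathcal O(\mathcal S^2)$, so that $X_\mu\mathcal F^\mu$ and $W_{\mu\nu}\mathcal L^{\mu\nu}$ contribute at grading $[2,2]$ (matching \eqref{[2,2]}), while $X_\mu(-\tfrac12 R\tud{\mu}{\nu\alpha\beta}v^\nu S^{\alpha\beta})$ splits into a grading-$[1,2]$ piece (the $v^\nu\to\hat p^\nu$ part, entering \eqref{[1,2]}) and a grading-$[2,2]$ piece (from the $\mathcal O(\mathcal S^1)$ correction to $v^\nu$, which via $\nabla X$ and the Bianchi/Kostant-type manipulations reassembles into the $X^\lambda\nabla_\lambda R$ term of \eqref{[2,2]}). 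Fourth, after collecting terms I would invoke that $\hat p^\mu$ and $s^\alpha$ are algebraically independent, so the graded decomposition $\dot{\mathcal Q}^{(1)}=\sum_{[s,p]}T^{[s,p]}_{\cdots}\hat p\cdots s\cdots$ vanishes mod $\mathcal O(\mathcal S^3)$ iff each symmetrized coefficient $T^{[s,p]}_{(\cdots)}$ does; checking that the four surviving gradings are exactly $[0,2],[1,2],[2,2],[2,4]$ and that their coefficients are the left-hand sides of \eqref{[0,2]}--\eqref{[2,4]} completes the argument. Finally, since the hypothesis is that $(X_\mu,W_{\mu\nu})$ solves all of \eqref{linear_cst}, every graded piece vanishes and $\dot{\mathcal Q}^{(1)}=\mathcal O(\mathcal S^3)$.

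\textbf{Main obstacle.} The delicate bookkeeping is the $[2,2]$ sector: one must correctly track the three distinct origins of grading-$[2,2]$ terms — the $\mathcal O(\mathcal S^1)$ correction to $v^\nu$ hitting $-\tfrac12 X_\mu R\tud{\mu}{\nu\alpha\beta}v^\nu S^{\alpha\beta}$, the force term $X_\mu\mathcal F^\mu$, and the torque term $W_{\mu\nu}\mathcal L^{\mu\nu}$ — and verify they combine into the compact form \eqref{[2,2]} with the stated $\kappa/(2\mu)$ coefficient. This requires using the explicit spin-induced form $J^{\mu\nu\rho\sigma}=-\tfrac{3\kappa}{\mu}v^{[\mu}\Theta^{\nu][\rho}v^{\sigma]}$, the decomposition $\Theta^{\alpha\beta}=\Pi^{\alpha\beta}\mathcal S^2-S^\alpha S^\beta$, the Bianchi identity in the dual form $R\tudud{*}{\alpha\beta\gamma}{\sigma}{;\sigma}=0$, and possibly the relation $W_{\mu\nu}=\tfrac12\nabla_\mu X_\nu$ that emerged at linear order (here one writes $W=\tfrac12\nabla X+Y^*$ with $Y^*$ a Killing--Yano contribution and checks each block separately). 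A secondary subtlety is ensuring that terms proportional to $p_\mu S^{\mu\nu}$ or $\hat p^\mu\hat p^\nu$ contracted into antisymmetric objects are consistently discarded (they vanish by the Tulczyjew--Dixon SSC and by antisymmetry), so that the relaxation $S^\alpha\to s^\alpha$ does not spuriously generate or kill constraint terms; this is routine once one is careful, but it is where sign errors are most likely to creep in.
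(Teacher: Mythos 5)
Your proposal is correct and follows exactly the route the paper takes: the proposition is obtained by applying Rüdiger's procedure (differentiate the Ansatz, insert the quadrupolar MPD equations and the momentum–velocity relation \eqref{v_of_p}, pass to the independent variables $\hat p^\mu$, $s^\alpha$, and read off the graded constraints), and your identification of the origins of each of the four gradings $[0,2]$, $[1,2]$, $[2,2]$, $[2,4]$ — including the vanishing of $W_{\mu\nu}\hat p^\mu\hat p^\nu$ by antisymmetry and the role of the $D^{\mu\nu}-\mathcal L^{\mu\nu}/\mu$ correction in producing \eqref{[2,4]} — matches the paper's derivation. No gaps.
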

The two first equations \eqref{[0,2]}-\eqref{[1,2]} arise at zeroth and first order in $\mathcal S$, and are identical to Eqs. \eqref{test_cst} encountered at linear order. The conserved quantities $\mathcal C_X$ and $\mathcal Q_Y$ found at linear order are therefore the possible candidates for being conserved at this order. We shall examine the two cases independently:
\begin{itemize} 
    \item For $\mathcal C_X$, Eq. \eqref{[2,2]} is the only constraint which is not trivially vanishing. also holds for this value of $Y_{\alpha\beta}$. Using the explicit form of the torque for the spin-induced quadrupole allows to prove that this equation is well obeyed regarless to the value of $\kappa$.
    \item In the case $X_\mu=0$, the constraint equations \eqref{linear_cst} reduce to
        \begin{subequations}\label{linear_cst2}
        \begin{align}
            [1,2]:&\qquad\nabla_\mu Y_{\alpha\nu}s^\alpha\hat p^\mu\hat p^\nu=\mathcal O\qty(\mathcal S^3),\label{[1,2]Q}\\
            [2,2]:&\qquad Y_{\mu\nu}\mathcal L^{*\mu\nu}=\mathcal O\qty(\mathcal S^3),\label{[2,2]Q}\\
            [2,4]:&\qquad W_{\lambda\mu}\qty(\mu D\tud{\lambda}{\nu}-\mathcal L\tud{\lambda}{\nu})\hat p^\mu\hat p^\nu=\mathcal O\qty(\mathcal S^3).\label{[2,4]Q}
        \end{align}
        \end{subequations}
       Eq. \eqref{[1,2]Q} is unchanged and still enforces $Y_{\mu\nu}$ to be a Killing-Yano tensor. The two additional conditions Eqs. \eqref{[2,2]Q}-\eqref{[2,4]Q} are more involved and will be discussed in Section \ref{sec:linCBB}.
\end{itemize}

\subsection{Quadratic constraint}
We now turn to the constraint equations for the quadratic invariants. It is useful to write the Ansatz for the quadratic quantity Eq. \eqref{quadratic_ansatz} as a $\os{2}$ correction added to Rüdiger invariant $\mathcal Q_R$. In doing so, we do not loose in generality, since $\mathcal Q_R$ is the more generic non-trivial (stationary and axisymmetric) quadratic invariant that can be built in Kerr spacetime. We therefore set
\begin{align}
    \mathcal Q^{(2)}=\mathcal Q_R+ \mathcal Q^\text{quad}\label{quasi_inv}
\end{align}
where
\begin{align}
    \mathcal Q_R\triangleq K_{\mu\nu}p^\mu p^\nu+L_{\mu\nu\rho}S^{\mu\nu}p^\rho,\qquad
    \mathcal Q^\text{quad}\triangleq M_{\alpha\beta\gamma\delta}S^{\alpha\beta}S^{\gamma\delta}.
\end{align}
Recall that Rüdiger invariant $\mathcal Q_R$ is defined by
\begin{align}
    K_{\mu\nu}&=Y\tdu{\mu}{\lambda}Y_{\nu\lambda},\quad
    L_{\alpha\beta\gamma}=\frac{2}{3}\nabla_{[\alpha}K_{\beta]\gamma}+\frac{4}{3}\epsilon_{\alpha\beta\gamma\delta}\nabla^\delta\mathcal Z,\label{R2}
\end{align}
with $Y_{\mu\nu}$ Kerr Killing-Yano tensor Eq. \eqref{kerr_KY}. 
Moreover, the tensor $M_{\alpha\beta\gamma\delta}$ admits the same algebraic symmetries than the Riemann tensor\begin{align}
    M_{\alpha\beta\gamma\delta}=M_{[\alpha\beta]\gamma\delta}=M_{\alpha\beta[\gamma\delta]}=M_{\gamma\delta\alpha\beta}.
\end{align}

The conservation conditions at zeroth and first order in $\mathcal S$ are left unchanged by the presence of quadrupolar terms in the MPTD equations, and still give rise to Eqs. \eqref{C1} and \eqref{C2}.  The presence of quadrupole terms in the MPD equations \eqref{MPD} will only appear at quadratic order in $\mathcal S$. Actually, we are left with only one constraint, which is of grading $[2,3]$. The derivation of this quadratic constraint is too long to be provided in the main text and can be found instead in Appendix \ref{app:quadratic_constraint}. We have now demonstrated the following proposition:
\begin{proposition}
Any tensor $N_{\alpha\beta\gamma\delta}$ possessing the same algebraic symmetries as the Riemann tensor and satisfying the constraint equation
\begin{align}
\begin{split}
    &\bigg[4\nabla_\mu N_{\alpha\nu\beta\rho}-2\kappa\nabla_{[\alpha}\mathcal M^{(1)}_{\vert\mu\vert \nu]\beta\rho} +\kappa\qty(g_{\alpha\mu}Y_{\lambda\nu}-g_{\mu\nu}Y_{\lambda\alpha})\xi_\kappa\,^*\!R\tud{\lambda\kappa}{\beta\rho}\\
    &+\qty(2\kappa Y_{\alpha\mu}\xi_\lambda+\qty(2-\kappa)\qty(Y_{\lambda\mu}\xi_\alpha+Y_{\alpha\lambda}\xi_\mu)+{3\kappa}g_{\alpha\mu}\nabla_\lambda\mathcal Z)\,^*\!R\tud{\lambda}{\nu\beta\rho}\\
    &-3\kappa g_{\mu\nu}\nabla_\lambda\mathcal Z\,^*\!R\tud{\lambda}{\alpha\beta\rho}+(3\kappa-2)\nabla_\mu\mathcal Z R^*_{\nu\alpha\beta\rho}\bigg]s^\alpha s^\beta \hat p^\mu \hat p^\nu \hat p^\rho
    \stackrel{!}{=}\mathcal O( \mathcal S^3),\label{developped_cst}
\end{split}
\end{align}
where\footnote{This notation $\mathcal M^{(1)}_{\alpha\beta\gamma\delta}$ will become clearer later on.}
\begin{align}
    \mathcal M^{(1)}_{\alpha\beta\gamma\delta}&\triangleq K_{\alpha\lambda}R\tud{\lambda}{\beta\gamma\delta}\label{M1}
\end{align}
gives rise to a quantity
\begin{align}
    \mathcal Q^{(2)}=\mathcal Q_R+M_{\alpha\beta\gamma\delta}S^{\alpha\beta}S^{\gamma\delta},\qquad M_{\alpha\beta\gamma\delta}\triangleq\, ^*\!N^*_{\alpha\beta\gamma\delta}.\label{explicit_quantity}
\end{align}
which is conserved up to second order in the spin parameter for the  MPD equations with spin-induced quadrupole \eqref{MPD}, \textit{i.e.} $\dot{\mathcal Q}^{(2)}=\mathcal O\qty(\mathcal S^3)$.
\end{proposition}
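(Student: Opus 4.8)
The plan is to follow Rüdiger's procedure as laid out in Section~\ref{sec:invariants}, specialized to the quadratic Ansatz \eqref{quasi_inv} with the spin-induced quadrupole force and torque \eqref{F}--\eqref{L}. First I would write out the conservation equation $\dot{\mathcal Q}^{(2)} = v^\lambda \nabla_\lambda \mathcal Q^{(2)} \overset{!}{=} \mathcal O(\mathcal S^3)$, splitting it as $\dot{\mathcal Q}_R + \dot{\mathcal Q}^{\text{quad}}$. For the $\mathcal Q_R$ piece, I would reuse the computation of Chapter~\ref{chap:first_order_solutions}: since $K_{\mu\nu}$ is the Kerr Killing tensor and $L_{\alpha\beta\gamma}$ is built from \eqref{R2}, the $[0,3]$ and $[1,3]$ constraints \eqref{C1}--\eqref{C2} are already satisfied, so the only surviving contributions to $\dot{\mathcal Q}_R$ at $\mathcal O(\mathcal S^2)$ come from (i) the $[2,5]$-type terms \eqref{C3} coming from the velocity correction \eqref{velocity2} acting on the geodesic-level structure, and (ii) the new force/torque insertions $\mathcal F^\mu, \mathcal L^{\mu\nu}$ contracted into $\partial \mathcal Q_R / \partial p^\mu$ and $\partial \mathcal Q_R/\partial S^{\mu\nu}$. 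These must be expanded using the spin-induced expressions \eqref{F}, \eqref{L}, \eqref{Lv}, the identity \eqref{v_of_p} for $v^\mu$ in terms of $p^\mu$ and $S^\mu$, and the decomposition $\Theta^{\alpha\beta} = \Pi^{\alpha\beta}\mathcal S^2 - S^\alpha S^\beta$, finally re-expressing everything in the independent variables $(\hat p^\mu, s^\alpha)$ via $S^{\alpha\beta} = 2 s^{[\alpha}\hat p^{\beta]*}$ and $S^\alpha = \Pi^\alpha_\beta s^\beta$.

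Next I would compute $\dot{\mathcal Q}^{\text{quad}}$ for $\mathcal Q^{\text{quad}} = M_{\alpha\beta\gamma\delta} S^{\alpha\beta}S^{\gamma\delta}$: here a single covariant derivative $v^\lambda \nabla_\lambda$ hits either $M_{\alpha\beta\gamma\delta}$ (giving $\nabla_\mu M$ contracted with $\hat p^\mu$ and four spin factors, i.e.\ $\mathcal O(\mathcal S^2)$ already, so $v^\mu \to \hat p^\mu$ is enough) or one of the spin tensors, whose derivative by the MPD equation \eqref{MPD_2} is $2p^{[\mu}v^{\nu]} + \mathcal L^{\mu\nu}$; the $2p^{[\mu}v^{\nu]}$ part combines with $S^{\rho\sigma}$ to give (to leading order) $2\hat p^{[\mu}$ times something orthogonal, and since $M$ is contracted symmetrically with two copies of $S$, I expect these ``momentum'' pieces to either cancel among themselves or reproduce the $[2,5]$ structure already accounted for. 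The torque $\mathcal L^{\mu\nu}$ insertions are $\mathcal O(\mathcal S^2)$. Collecting all $\mathcal O(\mathcal S^2)$ terms, expressing them in $(\hat p, s)$, and demanding the vanishing of the symmetrized coefficient of $s^\alpha s^\beta \hat p^\mu \hat p^\nu \hat p^\rho$ (the unique grading $[2,3]$ that arises), I would obtain a tensorial equation; rewriting $M_{\alpha\beta\gamma\delta} = \,^*\!N^*_{\alpha\beta\gamma\delta}$ so that $M$'s derivative becomes $\nabla_\mu N$ dualized, and using the dual identities from Section~\ref{sec:KY} together with the central identity \eqref{central_id_gen_RF} in Ricci-flat form to convert the $K_{\lambda(\beta}R^{*\lambda}{}_{\gamma\delta)\alpha}$-type terms that appear from $\mathcal L^{\mu\nu}$ and from the $\mathcal M^{(1)}_{\alpha\beta\gamma\delta} = K_{\alpha\lambda}R^\lambda{}_{\beta\gamma\delta}$ combination into $\nabla\nabla\mathcal Z$ terms, should collapse everything to the stated constraint \eqref{developped_cst}. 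The explicit appearance of $\kappa$ and of $Y_{\mu\nu}$, $\xi_\mu$, $\nabla_\mu \mathcal Z$ in \eqref{developped_cst} reflects precisely the quadrupole force/torque and the central-identity substitutions, so the structure is dictated.

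The main obstacle I anticipate is purely the bookkeeping: keeping track of the many $\mathcal O(\mathcal S^2)$ contributions — from the velocity correction $D^{\mu\nu}\hat p_\nu - \mathcal L^{\mu\nu}\hat p_\nu/\mu$ in $v^\mu$, from $\mathcal F^\mu$ in $\dot p^\mu$, from $\mathcal L^{\mu\nu}$ in $\dot S^{\mu\nu}$, and from $\nabla M$ — and ensuring that the numerous partial contractions of $\,^*\!R$ with $S^\alpha$, $\hat p^\mu$, $Y$, and $\xi$ are consistently reduced using the projector identities (I)–(III), the duality relations (I)–(III), and the Kerr-specific integrability conditions. A particularly delicate point will be correctly handling the antisymmetrization/symmetrization when passing from $S^{\mu\nu}$ to $s^\alpha \hat p^\beta$, since terms that naively look like they contribute can vanish by the identity $\delta^{[\mu}_{(\alpha}\delta^{\nu]}_{\beta)} = 0$ (as exploited in Chapter~\ref{chap:first_order_solutions}), and conversely terms involving $\,^*\!R$ acting on the projector $\Pi^\kappa_\beta$ generate genuinely new pieces via $\,^*\!R^\lambda{}_{\kappa\alpha\rho}\Pi^\kappa_\beta s^\alpha s^\beta = -I^{\lambda\alpha\beta}{}_{\rho\sigma\kappa}s_\alpha s_\beta \hat p^\sigma \hat p^\kappa$. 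I would organize the calculation by first establishing which gradings can appear (only $[2,3]$, after using that the Casimirs $\mathcal S^2$, $\mathcal S_*^2$ are constants and that the $[0,3]$, $[1,3]$ constraints are inherited from the linear analysis), then assembling the coefficient tensor term by term, and finally verifying the claimed form against the defining properties $M_{\alpha\beta\gamma\delta} = \,^*\!N^*_{\alpha\beta\gamma\delta}$ and the Riemann symmetries of $N$. The reduction to a manageable scalar system (Section~\ref{sec:Reduction}) and the explicit black-hole solution ($\kappa=1$) would then be carried out using the covariant building blocks of Section~\ref{sec:CBB}, but that is beyond the statement to be proved here.
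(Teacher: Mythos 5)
Your proposal follows essentially the same route as the paper's derivation in Appendix~\ref{app:quadratic_constraint}: the split $\dot{\mathcal Q}^{(2)}=\dot{\mathcal Q}_R+\dot{\mathcal Q}^{\text{quad}}$, the further separation of $\dot{\mathcal Q}_R$ into the monopole--dipole ($[2,5]$-type) piece and the force/torque-induced piece, the observation that $\dot{\mathcal Q}^{\text{quad}}$ collapses to $\nabla_\mu M_{\alpha\beta\gamma\delta}\,\hat p^\mu S^{\alpha\beta}S^{\gamma\delta}$ because the $p^{[\alpha}v^{\beta]}$ and $\mathcal L^{\alpha\beta}$ insertions are $\mathcal O(\mathcal S^3)$, and the final re-expression in the independent variables $(\hat p^\mu,s^\alpha)$ with $M_{\alpha\beta\gamma\delta}={}^*\!N^*_{\alpha\beta\gamma\delta}$. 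The only peripheral inaccuracy is that the paper does not invoke the central identity to assemble the constraint; it instead uses $\nabla_\mu K_{\alpha\beta}=2\epsilon_{\lambda\rho\mu(\alpha}Y\tud{\lambda}{\beta)}\xi^\rho$ together with $\nabla_{[\alpha}K_{\beta]*\lambda}=3Y_{\lambda[\alpha}\xi_{\beta]}+g_{\lambda[\alpha}\nabla_{\beta]}\mathcal Z$, which is how the $Y\xi\,{}^*\!R$ and $\nabla\mathcal Z\,{}^*\!R$ structures of \eqref{developped_cst} arise.
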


Our next goal with be to find a way to disentangle the $\kappa=1$ and the $\kappa\neq 1$ problems. Without loss of generality, we set
\begin{align}
N_{\alpha\beta\gamma\delta}={\triangleq N^\text{BH}_{\alpha\beta\gamma\delta}}+(\kappa-1) N^{\text{NS}}_{\alpha\beta\gamma\delta}, 
\end{align}
Because $\kappa$ is \textit{a priori} arbitrary, the constraint \eqref{developped_cst} turns out to be equivalent to the two independent equations
\begin{align}
    \begin{split}
    &\bigg[4\nabla_\mu N^{\text{BH}}_{\alpha\nu\beta\rho}-2\nabla_{[\alpha}\mathcal M^{(1)}_{\vert\mu\vert \nu]\beta\rho} +\qty(g_{\alpha\mu}Y_{\lambda\nu}-g_{\mu\nu}Y_{\lambda\alpha})\xi_\kappa\,^*\!R\tud{\lambda\kappa}{\beta\rho}\\
    &+\big(2 Y_{\alpha\mu}\xi_\lambda+\qty(Y_{\lambda\mu}\xi_\alpha+Y_{\alpha\lambda}\xi_\mu)+{3}g_{\alpha\mu}\nabla_\lambda\mathcal Z\big)\,^*\!R\tud{\lambda}{\nu\beta\rho}\\
    &-3 g_{\mu\nu}\nabla_\lambda\mathcal Z\,^*\!R\tud{\lambda}{\alpha\beta\rho}+\nabla_\mu\mathcal Z R^*_{\nu\alpha\beta\rho}\bigg]s^\alpha s^\beta \hat p^\mu \hat p^\nu \hat p^\rho
    =\mathcal O( \mathcal S^3)\label{developped_cst_BH}
    \end{split}
\end{align}
and
\begin{align}
    \begin{split}
    &\bigg[4\nabla_\mu N^{\text{NS}}_{\alpha\nu\beta\rho}-2\nabla_{[\alpha}\mathcal M^{(1)}_{\vert\mu\vert \nu]\beta\rho} +\qty(g_{\alpha\mu}Y_{\lambda\nu}-g_{\mu\nu}Y_{\lambda\alpha})\xi_\kappa\,^*\!R\tud{\lambda\kappa}{\beta\rho}\\
    &+\big(2 Y_{\alpha\mu}\xi_\lambda-\qty(Y_{\lambda\mu}\xi_\alpha+Y_{\alpha\lambda}\xi_\mu)+{3}g_{\alpha\mu}\nabla_\lambda\mathcal Z\big)\,^*\!R\tud{\lambda}{\nu\beta\rho}\\
    &-3 g_{\mu\nu}\nabla_\lambda\mathcal Z\,^*\!R\tud{\lambda}{\alpha\beta\rho}+3\nabla_\mu\mathcal Z R^*_{\nu\alpha\beta\rho}\bigg]s^\alpha s^\beta \hat p^\mu \hat p^\nu \hat p^\rho=\mathcal O( \mathcal S^3).\label{developped_cst_NS}
    \end{split}
\end{align}
In the continuation, we will refer to these two problems are respectively the ``black hole problem'' ($\kappa=1$) and the ``neutron star problem'' ($\kappa\neq 1$). Their resolutions are independent and will be addressed separately. Notice that the overall quasi-conserved quantity is given by
\begin{align}
\mathcal Q^{(2)}=\mathcal Q_\text{R}+\mathcal Q_{\text{BH}}+(\kappa-1)\,\mathcal Q_\text{NS}. \label{conserved_splitting}
\end{align}
The contributions $\mathcal Q_{\text{BH}}$ and $ \mathcal Q_\text{NS}$ can be directly computed from the corresponding $N_{\alpha\beta\gamma\delta}$ tensor through Eq. \eqref{explicit_quantity}. 

\section{Kerr covariant formalism: generalities}\label{sec:CBB}

In this section, we will show that the very structure of Kerr spacetime allows us to reduce the  differential constraint equations \eqref{linear_cst}-\eqref{developped_cst_BH}-\eqref{developped_cst_NS} to \textit{purely algebraic} relations. It is then possible to find a unique non-trivial solution to the black hole constraint \eqref{developped_cst_BH}, as will be demonstrated in Section \ref{sec:BH}. It also enables to provide an algebraic way for solving the  $\kappa\neq 1$ linear and quadratic problems (\textit{i.e.} Eq. \eqref{linear_cst} and \eqref{developped_cst_NS}, respectively), as will be discussed in Section \ref{sec:NS}.

\subsection{Covariant building blocks for Kerr}
In Kerr spacetime, the constraint equations \eqref{linear_cst}-\eqref{developped_cst_BH}-\eqref{developped_cst_NS} can be \textit{fully} expressed in terms of the basic tensors that live on the manifold (that is the metric $g_{\mu\nu}$, the Levi-Civita tensor $\epsilon_{\mu\nu\rho\sigma}$ and the Kronecker symbol $\delta^\mu_\nu$) and of three additional tensorial structures that we will refer to as \defining{Kerr's covariant building blocks}: the timelike Killing vector field $\xi^\mu$, the complex scalar
\begin{align}\label{defR}
    \mathcal R\triangleq r+ia\cos\theta
\end{align}
and the 2-form
\begin{align}
    N_{\alpha\beta}\triangleq -i G_{\alpha\beta\mu\nu}l^\mu n^\nu.
\end{align} Here, $l^\mu$ and $n^\nu$ are the two principal null directions of Kerr given in Eq. \eqref{kinnersley} and $G\tdu{\alpha\beta}{\gamma\delta}$ is (four times) the projector
\begin{align}
    G\tdu{\alpha\beta}{\gamma\delta}\triangleq 2\delta^{[\gamma}_\alpha\delta^{\delta]}_\beta-i\epsilon\tdu{\alpha\beta}{\gamma\delta}.
\end{align}
Notice that we have the property
\begin{equation}
N_{\alpha\beta}=\frac{2}{\xi^2} ( \nabla_{[\alpha}\mathcal R\xi_{\beta]^*}+i\nabla_{[\alpha}\mathcal R\xi_{\beta]}).     
\end{equation}
The Killing-Yano and Riemann tensors can be written algebraically in terms of these objects:
\begin{align}
    Y_{\alpha\beta}=-\frac{1}{2}\mathcal R N_{\alpha\beta}+c.c.,\qquad R_{\alpha\beta\gamma\delta}=M\Re\qty(\frac{3N_{\alpha\beta}N_{\gamma\delta}-G_{\alpha\beta\gamma\delta}}{\mathcal R^3}).
\end{align}
Moreover, they obey  the following closed differential relations,
\begin{align}
    i\nabla_\alpha\mathcal R= N_{\alpha\beta}\xi^\beta,\quad i\nabla_\gamma\qty(\mathcal R N_{\alpha\beta})=G_{\alpha\beta\gamma\delta}\xi^\delta,\quad i\nabla_\alpha\xi_\beta=-\frac{M}{2}\qty(\frac{N_{\alpha\beta}}{\mathcal R^2}-\frac{\bar N_{\alpha\beta}}{\bar{\mathcal{R}}^2}).\label{diff_to_alg}
\end{align}
All the derivatives appearing in the constraints can consequently be expressed in terms of purely algebraic relations between the covariant building blocks.

\subsection{Some identities}

Let us first derive some useful identities. Many of them can be found in \cite{floyd_phdthesis,Yasui:2011pr}. We have the algebraic identities
\begin{align}
    N_{\alpha\beta}N\tud{\beta}{\gamma}=-g_{\alpha\gamma},\quad N_{\alpha\beta}\bar N^{\alpha\beta}=0.
\end{align}
Notice that this first relation yields
\begin{align}
    N_{\alpha\beta}N^{\alpha\beta}=4.
\end{align}
Both $N_{\alpha\beta}$ and $G\tdu{\alpha\beta}{\gamma\delta}$ are self-dual tensors:
\begin{align}
    N^*_{\alpha\beta}=i N_{\alpha\beta},\qquad ^*G\tdu{\alpha\beta}{\gamma\delta}={G^*}\tdu{\alpha\beta}{\gamma\delta}=i G\tdu{\alpha\beta}{\gamma\delta}.
\end{align}
This leads to the relations
\begin{align}
    ^*R_{\alpha\beta\gamma\delta}=R^*_{\alpha\beta\gamma\delta}=-M\Im\qty(\frac{3N_{\alpha\beta}N_{\gamma\delta}-G_{\alpha\beta\gamma\delta}}{\mathcal R^3}),\qquad \bar N^*_{\alpha\beta}=-i\bar N_{\alpha\beta}.
\end{align}
Given the identities just derived, the only non-trivial contraction of the 2-form that can be written is
\begin{align}
    h_{\mu\nu}\triangleq N\tdu{\mu}{\alpha}\bar N_{\nu\alpha}.
\end{align}
It is a real, symmetric and traceless tensor:
\begin{align}
    h_{\mu\nu}=h_{(\mu\nu)}=\bar h_{\mu\nu},\qquad h^\mu_\mu=0.
\end{align}
Using the previous identities, one shows that
\begin{align}
    \mathcal Z=-\frac{1}{2}\Im\qty(\mathcal R^2).
\end{align}
This yields
\begin{align}
    \nabla_\alpha\mathcal Z=-\Re\qty(\mathcal R \xi^\lambda N_{\lambda\alpha}).
\end{align}
The Killing tensor can be written as
\begin{align}
    K_{\mu\nu}=\frac{1}{2}\qty(\Re\qty(\mathcal R^2)g_{\mu\nu}+\abs{\mathcal R^2}h_{\mu\nu}).
\end{align}
Its trace is simply
\begin{align}
    K&=2\Re\qty(\mathcal R^2).
\end{align}
Other useful identities include
\begin{align}
   \begin{split}
    &N_{\lambda\kappa}G\tud{\lambda\kappa}{\beta\rho}=4N_{\beta\rho},\qquad N_{\lambda\kappa}\bar G\tud{\lambda\kappa}{\beta\rho}=\bar N_{\lambda\kappa}G\tud{\lambda\kappa}{\beta\rho}=0,\\
    &\bar N_{\lambda\kappa}\bar G\tud{\lambda\kappa}{\beta\rho}=4\bar N_{\beta\rho},\qquad h_{\alpha\lambda}N\tud{\lambda}{\beta}=\bar N_{\alpha\beta}.\label{lastId}
   \end{split}
\end{align}

\subsection{Basis of contractions}

Our goal is now to rewrite Eqs. \eqref{linear_cst}-\eqref{developped_cst_BH}-\eqref{developped_cst_NS} as \textit{scalar} (that is, fully-contracted) equations involving only contractions between the Kerr covariant building blocks and the dynamical variables $s^\alpha$ and $\hat p^\alpha$. We define
\begin{align}
    \mathcal S^2&\triangleq s_\alpha s^\alpha,\qquad
    \mathcal P^2\triangleq -\hat p_\alpha \hat p^\alpha,\qquad
    \mathcal A\triangleq s_\alpha\hat p^\alpha.
\end{align}
We will naturally set $\mathcal P^2=1$ at the end of the computation, but we find useful to keep this quantity explicit in the intermediate algebra. Since $\mathcal A$ is unphysical, it will disappear from any physical expression but it will appear in intermediate computations. 

We further define the following quantities at least linear in either $\hat p^\mu$ or $s^\mu$, 

\begin{align}
    \begin{split}
    A&\triangleq N_{\lambda\mu}\xi^\lambda\hat p^\mu,\qquad
    B\triangleq N_{\alpha\mu}s^\alpha \hat p^\mu,\qquad
    C\triangleq N_{\lambda\alpha}\xi^\lambda s^\alpha,\qquad D\triangleq h_{\lambda\alpha}\xi^\lambda s^\alpha,\\
    E&\triangleq -\xi_\alpha \hat p^\alpha,\qquad
    E_s\triangleq-\xi_\alpha s^\alpha,\qquad
    F\triangleq h_{\lambda\mu}\xi^\lambda \hat p^\mu,\qquad G\triangleq h_{\alpha\mu}s^\alpha\hat p^\mu,\\
    H&\triangleq h_{\mu\nu}\hat p^\mu\hat p^\nu,\qquad
    I\triangleq h_{\alpha\beta} s^\alpha s^\beta. \label{ABCD}
    \end{split}
\end{align}
We further define: 
\begin{equation}
J \triangleq \xi^\alpha (h_{\alpha\beta}+g_{\alpha\beta})\xi^\beta \stackrel{.}{=} \frac{2a^2\sin^2\theta}{r^2+a^2 \cos^2\theta}  
\end{equation}
Notice that we can now use $J$ as a Kerr covariant substitute for $a$.

Because of the algebraic identities derived above, these scalars form a spanning set of scalars built from contractions among the Kerr covariant building blocks. Any higher order contraction between building blocks will reduce to a product of the ones provided in the above list with coefficients that may depend upon $M$, $a$ and $\mathcal R$. The quantities $A,B,C$ are complex while the others are real. Notice that we don't have to include $\xi^2$ in our basis of building blocks, since
\begin{align}
    \xi^2=-1+2M\Re\qty(\mathcal R^{-1}).
\end{align}
It can consequently be written in terms of the other quantities. In what follows, we will always consider quantities built from \eqref{ABCD}, the complex scalar $\mathcal R$, the mass $M$ and $J$.

\subsection{A $\mathbb Z_2$ grading}

It is possible to further restrict the number of combinations of the previously defined building blocks appearing in the equations. In order to achieve this goal, let us define a $\mathbb Z_2$ grading $\{\cdot\}$ as follows. We note that the determining equations for the covariant building blocks for Kerr from Eq. \eqref{defR} to Eq. \eqref{lastId} are invariant under the following $\mathbb Z_2$ grading: 
\begin{align}
    \begin{split}
        \{g_{\alpha\beta}\}&=\{M\}=\{x^\mu\}=\{\nabla_\mu\}=\{\mathcal R\}\\
        &=\{\mathcal Z\}=\{ G_{\alpha\beta\gamma\delta}\}=\{h_{\mu\nu}\}=\{K_{\mu\nu}\}=+1\\
        \{N_{\alpha\beta}\}&=\{\xi^\alpha\}=\{Y_{\alpha\beta}\}=-1.
    \end{split}
\end{align}
Further assigning $\{s^\mu\}=\{p^\mu\}=+1$, we deduce that 
\begin{align}
    \begin{split}
        \{A\}&=\{C\}=\{G\}=\{H\}=\{I\}=+1,\\ \{B\}&=\{D\}=\{E\}=\{E_s\}=\{F\}=-1. 
    \end{split}
\end{align}
Since the constraints \eqref{developped_cst_BH}, \eqref{developped_cst_NS} have grading $+1$, the odd quantities will have to be combined in pairs in order to build a solution to the constraint.

We define the $(s,p)^\pm$ grading of an expression as the $s$ number of $s^\alpha$ and $p$ number of $\hat p^\alpha$ factors in the expression with the sign $\pm$ indicating the $\mathbb Z_2$ grading. The complete list of the lowest $s+p=1$ and $s+p=2$ grading spanning elements is given in Table \ref{tab:spbasis}. The list of spanning elements of grading $(s,p)^\pm$ for $s+p \geq 3$ is obtained iteratively by direct product of the lower order basis elements. For example, the independent real terms of grading $(2,1)^+$ are obtained from $(2,0)^+ \times (0,1)^+$, $(2,0)^- \times (0,1)^-$, $(1,1)^+ \times (1,0)^+$ and $(1,1)^- \times (1,0)^-$ with duplicated elements suppressed.

\begin{table}[!htb]
    \centering
    \renewcommand{\arraystretch}{1.3}
    \begin{tabular}{|c|c|c|}\hline
     $(s,p)^\pm$ & \textsc{Spanning set} & \rule{0pt}{13pt}\shortstack{\textsc{Real dimension}} \\ \hline
        \rule{0pt}{13pt}$(1,0)^+$ &  $C$ & 2 \\
        $(1,0)^-$ & $D$,\; $E_s$ &2 \\
        $(0,1)^+$& $A$ & 2\\
        $(0,1)^-$& $E$,\; $F$ &2 \\
    $(2,0)^+$ &$I$,\; $\mathcal S$,\; $(1,0)^+ \times (1,0)^+$,\; $(1,0)^- \times (1,0)^-$ & 8\\
$(2,0)^-$& $(1,0)^+ \times (1,0)^-$& 4\\
$(1,1)^+$& $G$, \;$\mathcal A$, \; $(1,0)^+ \times (0,1)^+$ , \; $(1,0)^- \times (0,1)^-$  & 10\\
$(1,1)^-$& $B$, \; $(1,0)^+ \times (0,1)^-$,\; $(1,0)^- \times (0,1)^+$   & 10\\
$(0,2)^+$& $H$,\; $\mathcal P$,\;  $(0,1)^+ \times (0,1)^+$,\; $(0,1)^- \times (0,1)^-$  &8\\
$(0,2)^-$&  $(0,1)^+ \times (0,1)^-$ & 4\\        
\hline    \end{tabular}
    \caption{Spanning set of elements with $(s,p)^\pm$ grading such that $s+p=1$ or $s+p=2$ and their real dimensions.}
    \label{tab:spbasis}
\end{table}

Of prime importance for solving the linear and quadratic constraint equations will be the elements of gradings $(2,2)^+$ and $(2,3)^+$. Their respective spanning sets contain $118$ and $284$ elements, which are explicitly listed in the \textit{Mathematica} notebooks appended to this chapter, available at \url{https://github.com/addruart/generalizedCarterConstant}.

\subsection{The $\alpha$-$\omega$ basis}

We note that the covariant building blocks all depend on $\mathcal R$ through real and imaginary parts of expressions containing fractions of $\mathcal R$ and $\bar{\mathcal R}$. We find therefore natural to define the objects ($n,p \in\mathbb Z$ and $K=1,A,B,C,\ldots,J$ or any combination of these objects):
\begin{align}
    \alpha_K^{(n,p)}\triangleq\Re\qty(\frac{K\bar{\mathcal R}^n}{\mathcal R^p}),\qquad\omega_K^{(n,p)}\triangleq\Im\qty(\frac{K\bar{\mathcal R}^n}{\mathcal R^p}).
\end{align}
They satisfy the following properties:
\begin{align}
    \begin{split}
    \alpha_{iK}^{(n,p)}&=-\omega_K^{(n,p)},\qquad
    \omega_{iK}^{(n,p)}=\alpha_K^{(n,p)},\\
    \alpha_{\bar K}^{(n,p)}&=\alpha_K^{(-p,-n)},\qquad
    \omega_{\bar K}^{(n,p)}=-\omega_K^{(-p,-n)}.\label{ao_prop}
    \end{split}
\end{align}
Moreover, one has
\begin{align}
    \begin{split}
    &\abs{\mathcal R}^2\alpha_K^{(n,p)}=\alpha_K^{(n+1,p-1)},\qquad
    \Re\qty(\mathcal R^2)\alpha_K^{(n,p)}=\frac{1}{2}\qty[\alpha_K^{(n,p-2)}+\alpha_K^{(n+2,p)}],\\
    &
    \alpha_{\mathcal R^k K}^{(n,p)}=\alpha_K^{(n,p-k)},\qquad
    \alpha_{\bar{\mathcal R}^kK}^{(n,p)}=\alpha_K^{(n+k,p)}.
    \end{split}
\end{align}
The same properties hold with $\omega$ instead of $\alpha$. Finally, $\alpha$ and $\omega$ are linear in their subscript argument with respect to real-valued functions.
Let us denote $\ell^\mu=\hat p^\mu$ or $s^\mu$. Then, for any $T\triangleq T_{\mu_1\ldots\mu_k}\ell^{\mu_1}\ldots\ell^{\mu_k}$, we define the operator $\hat\nabla$ as
\begin{align}
    \hat \nabla T\triangleq \hat p^\lambda\nabla_\lambda\qty(T_{\mu_1\ldots\mu_k})\ell^{\mu_1}\ldots\ell^{\mu_k}.
\end{align}
Making use of the identities
\begin{align}
    \hat\nabla \mathcal R^n=in\mathcal R^{n-1}A,\qquad\hat\nabla\bar{\mathcal R}=-in\bar{\mathcal R}^{n-1}\bar A,
\end{align}
we get the following relations:
\begin{align}
    \begin{split}
    \hat\nabla\alpha_K^{(n,p)}&=\alpha_{\hat\nabla K}^{(n,p)}+n\omega_{K\bar A}^{(n-1,p)}+p\omega_{KA}^{(n,p+1)},\\
    \hat\nabla\omega_K^{(n,p)}&=\omega_{\hat\nabla K}^{(n,p)}-n\alpha_{K\bar A}^{(n-1,p)}-p\alpha_{KA}^{(n,p+1)}.\label{deriv_ao}
    \end{split}
\end{align}

We use dimensions such that $G=c=1$. Given the large amount of definitions just provided, we find useful to summarize the mass dimensions $[\cdot]$ of all quantities in order to keep track of the powers of the mass $M$ that can arise. We have the following mass dimensions  $[\nabla_\mu]=-1$, $[g_{\mu\nu}]=[\xi^\alpha]=[N_{\alpha\beta}]=[G_{\alpha\beta\mu\nu}]=0$, $[x^\mu]=[M]=[\mathcal R]=[Y_{\alpha\beta}]=1$ and $[K_{\alpha\beta}]=[K]=2$. We deduce 
\begin{eqnarray}
[X]=0,\qquad [\alpha_X^{(n,p)}]=[\omega_X^{(n,p)}]=n-p, 
\end{eqnarray}
where $X$ is any function of the set $A,B,C,D,E,E_s,F,G,H,I,J$ defined in Eqs. \eqref{ABCD}. 

\section{Kerr covariant formalism: reduction of the constraints}\label{sec:Reduction}

\subsection{Linear constraint with $X^\mu=0$}
\label{sec:linCBB}
We will now reduce the linear constraint equations in the case where $Y_{\mu\nu}$ is the Kerr Killing-Yano tensor. Using the explicit form of $\mathcal L_{\mu\nu}$ as defined in \eqref{L} and expressing the quantity $   \mathcal L^*_{\mu\nu} Y^{\mu\nu}$ in terms of covariant building blocks we find after evaluation
\begin{align}
    \mathcal L^*_{\mu\nu} Y^{\mu\nu}=0,
\end{align}
and therefore the $[2,2]$ constraint is automatically fulfilled. The $[2,4]$ constraint can be rewritten
\begin{align}
    -\mu Y_{\alpha\beta}p^{[\alpha}D^{\beta]*\lambda}\hat p_\lambda+Y_{\alpha\beta}p^{[\alpha}\mathcal L^{\beta]*\lambda}\hat p_\lambda=\mathcal O\qty(\mathcal S^3).
\end{align}
A direct computation shows that
\begin{align}
    \begin{split}
        \mu Y_{\alpha\beta}p^{[\alpha}D^{\beta]*\lambda}\hat p_\lambda&=-3M\qty(\mathcal A H+\mathcal P^2 G)\omega_B^{(1,3)},\\
    Y_{\alpha\beta}p^{[\alpha}\mathcal L^{\beta]*\lambda}\hat p_\lambda&=3\kappa M\qty(\mathcal A H+\mathcal P^2 G)\omega_B^{(1,3)}.
    \end{split}
\end{align}
Using these identities and defining $\kappa\triangleq 1+\delta\kappa$, the $[2,4]$ constraint takes the very simple form
\begin{align}
    -3M\delta\kappa\qty(\mathcal AH+\mathcal P^2G)\omega_B^{(1,3)}=\mathcal O\qty(\mathcal S^3).
\end{align}
It is automatically fulfilled for the test body being a Kerr black hole, because $\delta\kappa=0$ in this case. However, if the test body is a neutron star, $\delta\kappa\neq 0$ and the $[2,4]$ constraint is not obeyed anymore. Therefore, $\mathcal Q_Y$ is not anymore a constant of motion at second order in the spin in the NS case.

A way to enable the $[2,4]$ constraint to be solvable in the neutron star case is to supplement the Ansatz for the conserved quantity with a term
\begin{align}
    \mathcal Q^{(1)}_\text{NS}=\delta\kappa M_{\alpha\beta\mu\gamma\delta}S^{\alpha\beta}S^{\gamma\delta}p^\mu.
\end{align}
The conservation equation will then acquire a correction given by
\begin{align}
    \dot{\mathcal{Q}}^{(1)}_\text{NS}&=\delta\kappa v^\lambda\nabla_\lambda\qty(M_{\alpha\beta\mu\gamma\delta}S^{\alpha\beta}S^{\gamma\delta}p^\mu)\\
    &=\delta\kappa\hat p^\lambda\nabla_\lambda M_{\alpha\beta\mu\gamma\delta}S^{\alpha\beta}S^{\gamma\delta}p^\mu+\mathcal O\qty(\mathcal S^3)\\
    &=4\delta\kappa\hat p^\lambda\nabla_\lambda N_{\alpha\beta\mu\gamma\delta}s^\alpha\hat p^\beta s^\gamma\hat p^\delta p^\mu+\mathcal O\qty(\mathcal S^3)
\end{align}
where $N_{\alpha\beta\mu\gamma\delta}={}^\star M_{\alpha\beta\mu\gamma\delta}^\star$. In our scalar notation, it corresponds to supplement the $[2,4]$ constraint with a term $4\delta\kappa\hat \nabla N$, with $N$ being of grading $[2,3]$. The constraint to be solved then takes the simple form
\begin{align}
    \hat\nabla N=\frac{3M}{4}\qty(\mathcal A H+\mathcal P^2 G)\omega_B^{(1,3)}.
\end{align}

It is useful to summarize the discussion by the two following statements:
\begin{main_result}
Rüdiger's linear invariant $\mathcal Q_Y=Y_{\alpha\beta}^* S^{\alpha\beta}$ is still conserved for the MPTD equations at second order in the spin magnitude for spin-induced quadrupoles, \textit{i.e.} $\dot{\mathcal{Q}}_Y=\mathcal O\qty(\mathcal S^3)$ provided that $\delta\kappa=0$, \textit{i.e.} if the test body possesses the multipole structure of a black hole.
\end{main_result}

\begin{preliminary_result}
Any tensor $N_{\alpha\beta\mu\gamma\delta}$ possessing the algebraic symmetries $N_{\alpha\beta\mu\gamma\delta}=N_{\gamma\delta\mu\alpha\beta}=N_{[\alpha\beta]\mu\gamma\delta}=N_{\alpha\beta\mu[\gamma\delta]}$ and satisfying the constraint equation
\begin{align}
\hat\nabla N=\frac{3M}{4}\qty(\mathcal A H+\mathcal P^2 G)\omega_B^{(1,3)}\label{lin_cst}
\end{align}
will give rise to a quantity
\begin{align}
    \mathcal Q^{(1)}=\mathcal Q_Y+\delta\kappa M_{\alpha\beta\mu\gamma\delta}S^{\alpha\beta}S^{\gamma\delta}p^\mu,\qquad M_{\alpha\beta\mu\gamma\delta}=\,^*\!N^*_{\alpha\beta\mu\gamma\delta}
\end{align}
which is conserved up to second order in the spin parameter for the spin-induced quadrupole MPTD equations \eqref{MPD}, \textit{i.e.} $\dot{\mathcal{Q}}^{(1)}=\mathcal O\qty(\mathcal S^3)$, regardless to the value taken by $\delta\kappa$.
\end{preliminary_result}

\subsection{Quadratic constraint}

\subsubsection{Some identities}
Before going further on, it is useful to notice that all the covariant building blocks combinations that will appear in our equations will not be linearly independent. Actually, a direct computation shows that
\begin{subequations}\label{lin_dep_terms}
\begin{align}
    &2\abs{B}^2+2\mathcal AG+\mathcal P^2I-\mathcal S^2H=0,\\
    &\qty(\mathcal A H+\mathcal P^2G)\omega^{(1,3)}_C=\qty(\mathcal AG+\mathcal P^2I)\omega^{(1,3)}_A-\qty(\mathcal A F+\mathcal P^2 D)\omega^{(1,3)}_B\nonumber\\
    &\quad+\qty(\mathcal A^2+\mathcal P^2\mathcal S^2)\omega^{(1,3)}_{\bar A}+\qty(\mathcal AE+\mathcal P^2E_s)\omega^{(1,3)}_{\bar B},\\
    &\omega^{(1,3)}_{\bar AB^2}=-\abs{B}^2\omega^{(1,3)}_A-\qty(\mathcal AF-EG+E_sH+\mathcal P^2D)\omega_B^{(1,3)}.
\end{align}
\end{subequations}
Moreover, let us mention the identities
\begin{subequations}
    \begin{align}
    &\omega^{(0,k)}_K\alpha_L^{(n,p)}=\frac{1}{2}\qty[\omega_{KL}^{(n,p+k)}-\omega_{\bar KL}^{(n-k,p)}],\\
    &\Re\qty(\mathcal R^2)\Im\qty(\frac{K}{\mathcal R^4})=\frac{1}{2}\qty(\omega^{(0,2)}_K+\omega^{(2,4)}_K),\\
    &\abs{\mathcal R}^2\Im\qty(\frac{K}{\mathcal R^4})=\omega^{(1,3)}_K,
    \end{align}
\end{subequations}
which will be useful in the forthocoming computations.

\subsubsection{Reducing the $\mathcal M^{(1)}$ contribution}
Our goal is here to compute the contribution 
\begin{align}
    \text{DM}\triangleq 2\nabla_{[\alpha|}\mathcal M^{(1)}_{\mu|\nu]\beta\rho}s^\alpha s^\beta\hat p^\mu\hat p^\nu\hat p^\rho
\end{align}
in some details, as a proof of principle of the computations to follow, which will not be developed in full details. Noticing the identity
\begin{align}
    \nabla_\mu K_{\alpha\beta}=2\epsilon_{\lambda\rho\mu(\alpha}Y\tud{\lambda}{\beta)}\xi^\rho,\label{DK}
\end{align}
we get
\begin{align}
    \nabla_\mu \mathcal M^{(1)}_{\alpha\nu\beta\rho} &=K_{\alpha\lambda}\nabla_\mu R\tud{\lambda}{\nu\beta\rho}-\nabla_\nu\mathcal Z R^*_{\mu\alpha\beta\rho}+\qty(Y_{\mu\nu}\xi_\lambda+g_{\mu\nu}\nabla_\lambda\mathcal{Z}+Y_{\lambda\mu}\xi_\nu)R\tud{*\lambda}{\alpha\beta\rho}\nonumber\\
    &\quad-\qty(2\xi_\nu Y_{\lambda\alpha}-2\xi_\lambda Y_{\nu\alpha}+g_{\alpha\nu}\nabla_\lambda\mathcal Z)R\tud{*\lambda}{\mu\beta\rho}\label{DM}\\
    &\quad-\qty(2g_{\mu\nu}Y_{\kappa\alpha}-g_{\alpha\nu} Y_{\kappa\mu})\xi_\lambda R\tud{*\lambda\kappa}{\beta\rho}.\nonumber
\end{align}
Making use of Eq. \eqref{DM}, one can show that
\begin{align}
    \text{DM}&=\bigg[2K_{\mu\lambda}\nabla_{[\alpha}R\tud{\lambda}{\nu]\beta\rho}-\nabla_\nu\mathcal Z R^*_{\alpha\mu\beta\rho}-\qty(2\xi_\nu Y_{\lambda\mu}+g_{\mu\nu}\nabla_\lambda\mathcal Z)R\tud{*\lambda}{\alpha\beta\rho}\nonumber\\
    &\quad+\qty(Y_{\lambda\alpha}\xi_\mu+\xi_\alpha Y_{\lambda\mu}+g_{\mu\alpha}\nabla_\lambda\mathcal Z)R\tud{*\lambda}{\nu\beta\rho}\\
    &\quad-\qty(g_{\mu\alpha}Y_{\kappa\nu}-g_{\mu\nu}Y_{\kappa\alpha})\xi_\lambda R\tud{*\lambda\kappa}{\beta\rho}\bigg]s^\alpha s^\beta\hat p^\mu\hat p^\nu\hat p^\rho.\nonumber
\end{align}
Using the various identities derived above, the relations of Appendix \ref{app:CBB_identities} and performing some simple algebra, one can express this contribution in terms of linearly independent quantities as
\begin{align}
    \text{DM}&=\frac{M}{4}\qty(\mathcal A^2+\mathcal P^2\mathcal S^2)\qty(5\omega^{(0,2)}_A+4\omega^{(1,3)}_{\bar A}+3 \omega^{(2,4)}_A)\nonumber\\
    &\quad+\frac{M}{2}\qty(\mathcal AE+\mathcal P^2E_s)\qty(\omega^{(0,2)}_B-\omega^{(1,3)}_{\bar B}-3\omega^{(2,4)}_B)\\
    &\quad-\frac{3M}{2}\qty(2\mathcal A F-EG+E_sH+2\mathcal P^2D)\omega^{(1,3)}_B+\frac{9M}{4}\omega^{(0,2)}_{AB^2}+\frac{15M}{4}\omega^{(2,4)}_{AB^2}.\nonumber
\end{align}

\subsubsection{The black hole constraint equation}
Making use of the notations introduced above, the constraint equation \eqref{developped_cst_BH} can be written as
\begin{align}
    4\hat\nabla N^\text{BH}+\text{DM}-\Upsilon=\mathcal O\qty(\mathcal S^3),\label{scalar_like_constraint}
\end{align}
where
\begin{align}
    \Upsilon&\triangleq\bigg[\qty(g_{\alpha\mu}Y_{\lambda\nu}-g_{\mu\nu}Y_{\lambda\alpha})\xi_\kappa\,^*\!R\tud{\lambda\kappa}{\beta\rho}+\qty(2 Y_{\alpha\mu}\xi_\lambda+\qty(Y_{\lambda\mu}\xi_\alpha+Y_{\alpha\lambda}\xi_\mu)+3g_{\alpha\mu}\nabla_\lambda\mathcal Z)\,^*\!R\tud{\lambda}{\nu\beta\rho}\nonumber\\
    &\quad-3 g_{\mu\nu}\nabla_\lambda\mathcal Z\,^*\!R\tud{\lambda}{\alpha\beta\rho}+\nabla_\mu\mathcal Z R^*_{\nu\alpha\beta\rho}\bigg]s^\alpha s^\beta \hat p^\mu \hat p^\nu \hat p^\rho.
\end{align}
Using the scalar basis introduced above and the identities \eqref{lin_dep_terms} yields
\begin{align}
    \Upsilon&=\frac{M}{2}\qty(\mathcal A^2+\mathcal P^2\mathcal S^2)\qty[3\omega^{(0,2)}_A{+}\omega^{(1,3)}_{\bar A}]+\frac{M}{2}\qty(\mathcal AE+\mathcal P^2E_s)\qty[8\omega^{(0,2)}_B-3\omega^{(1,3)}_{\bar B}]\nonumber\\
    &\quad+\frac{3M}{2}\omega^{(0,2)}_{AB^2}+\frac{9M}{2}\abs{B}^2\omega^{(1,3)}_A-\frac{3M}{2}\qty(\mathcal AF+\mathcal P^2D)\omega^{(1,3)}_B.
\end{align}
In summary, Eq. \eqref{scalar_like_constraint} can be written as 
\begin{align}
    \hat \nabla N^\text{BH}=\Upsilon_\text{BH},\label{BH_scalar_constraint}
\end{align}
with the source term
\begin{align}
\Upsilon_\text{BH}=\frac{\Upsilon-\text{DM}}{4}.
\end{align}

\subsubsection{The neutron star constraint equation}
Repeating the very same procedure, the neutron star constraint \eqref{developped_cst_NS} reduces to the scalar-like equation
\begin{align}
  \hat\nabla N^\text{NS}=\Upsilon_\text{NS}
    \label{source_BH}
\end{align}
with source
\begin{align}
\Upsilon_\text{NS}&=-\frac{3M}{16}\qty(\mathcal A^2+\mathcal P^2\mathcal S^2)\qty(\omega_A^{(0,2)}+\omega_A^{(2,4)}+2\omega_{\bar A}^{(1,3)})\nonumber\\
&\quad+\frac{3M}{8}\qty(\mathcal AE+\mathcal P^2E_s)\qty(\omega_B^{(0,2)}+\omega_B^{(2,4)})-\frac{15M}{16}\qty(\omega_{AB^2}^{(0,2)}+\omega_{AB^2}^{(2,4)})\label{source_NS}\\
&\quad-\frac{15M}{8}\omega_{\bar AB^2}^{(1,3)}-\frac{3}{4}M\qty(\mathcal AF+\mathcal P^2D)\omega_B^{(1,3)}.\nonumber
\end{align}

\section{Solution for the quadratic invariant in the black hole case}
\label{sec:BH}

We will now try to find a quadratic conserved quantity for the $\delta\kappa=0$ case. This corresponds to find a solution to the black hole constraint equation \eqref{BH_scalar_constraint}. In order to reach this goal, we will postulate an Ansatz for the fully-contracted quantity $N$ appearing in the left-hand side of Eq. \eqref{BH_scalar_constraint} and then use the covariant building blocks formulation to constrain the Ansatz coefficients. 

Notice that we proceed here by postulating an Ansatz because we known from the start what the solution will look like; it allows to present in only a few pages a fully analytical derivation of the conserved quantity. However, if it has not been the case, we could have proceeded from scratch, by writing explicitly all the terms possessing the good grading to appear in the conserved quantity, and then fixing the coefficients by numerical evaluation. This procedure will be applied in the neutron star case, see next section for more details.

\subsection{The Ansatz}
Let us consider the following Ansatz
\begin{align}
    N^\text{BH}_{\alpha\beta\gamma\delta}\triangleq\sum_{A=1}^4 \Lambda_A N_{\alpha\beta\gamma\delta}^{(A)}\label{ansatz}
\end{align}
where $\Lambda_A$ are arbitrary coefficients and where
\begin{align}
    N_{\alpha\beta\gamma\delta}^{(A)}\triangleq \,^*\!\mathcal M^{*(A)}_{\alpha\beta\gamma\delta}.
\end{align}
The quantity $\mathcal M^{(1)}_{\alpha\beta\gamma\delta}$ has been defined in Eq. \eqref{M1}, and we introduce
\begin{align}
    \mathcal M^{(2)}_{\alpha\beta\gamma\delta}&\triangleq Y\tud{\lambda}{\alpha}Y\tud{\sigma}{\gamma}R_{\lambda\beta\sigma\delta},\qquad
        \mathcal M^{(3)}_{\alpha\beta\gamma\delta}\triangleq g_{\alpha\gamma}\xi_\beta\xi_\delta,\qquad
    \mathcal M^{(4)}_{\alpha\beta\gamma\delta}\triangleq g_{\alpha\gamma}g_{\beta\delta}\xi^2.
\end{align}

Using the identities derived in Appendix \ref{app:CBB_identities}, one can show that the directional derivatives of the $N^{(A)}$ are given by
\begin{subequations}
\begin{align}
    \hat\nabla N^{(1)}&=-\frac{M}{4}\qty(\mathcal A^2+\mathcal P^2\mathcal S^2)\qty(\omega^{(0,2)}_A+2\omega^{(1,3)}_{\bar A}+3\omega^{(2,4)}_A)\nonumber\\
    &\quad+\frac{M}{2}\qty(\mathcal AE+\mathcal P^2E_s)\qty(5\omega^{(0,2)}_B-2\omega^{(1,3)}_{\bar B}+3\omega^{(2,4)}_B)\quad+\frac{9M}{2}\abs{B}^2\omega^{(1,3)}_A\nonumber\\
    & +\frac{3M}{2}\qty(\mathcal AF-EG+E_sH+\mathcal P^2D)\omega^{(1,3)}_B-\frac{9M}{4}\omega^{(0,2)}_{AB^2}-\frac{15M}{4}\omega^{(2,4)}_{AB^2},\label{DN1_Kerr}\\
    \hat\nabla N^{(2)}&=-\frac{M}{4}\qty(\mathcal A^2+\mathcal P^2\mathcal S^2)\omega^{(0,2)}_A+\frac{M}{2}\qty(\mathcal A E+\mathcal P^2 E_s)\omega^{(0,2)}_B-\frac{3M}{4}\omega^{(0,2)}_{AB^2},\\
    \hat\nabla N^{(3)}&=\frac{M}{2}\qty[\qty(\mathcal S^2\mathcal P^2+\mathcal A^2)\omega^{(0,2)}_A+\qty(\mathcal A E+\mathcal P^2 E_s)\omega^{(0,2)}_B],\label{DN3_Kerr}\\
    \hat\nabla N^{(4)}&=M\qty(\mathcal S^2\mathcal P^2+\mathcal A^2)\omega^{(0,2)}_A.\label{DN4_Kerr}
\end{align}
\end{subequations}

\subsection{Solution to the constraint}

We will now look for a solution to the black hole constraint equation \eqref{BH_scalar_constraint} using the Ansatz \eqref{ansatz}, \textit{i.e.} we are seeking for specific values of the parameters $\Lambda_A$ such that Eq. \eqref{scalar_like_constraint} is fulfilled. More explicitly, one therefore requires
\begin{align}
    \sum_{A=1}^4\Lambda_A \text{DN}^{(A)}-\Upsilon_\text{BH}\stackrel{!}{=}0.
\end{align}
The left-hand side of this equation takes the form of a first order polynomial, homogeneous in the ten \textit{linearly independent} elements (as it can be shown through a direct computation)
\begin{align}
    \begin{split}
    &\omega^{(0,2)}_A,\quad\omega^{(0,2)}_B,\quad\omega^{(0,2)}_{AB^2},\quad\omega^{(1,3)}_A,\quad\omega^{(1,3)}_B,\\
    &\omega^{(1,3)}_{\bar A},\quad\omega^{(1,3)}_{\bar B},\quad\omega^{(2,4)}_A,\quad\omega^{(2,4)}_B,\quad\omega^{(2,4)}_{AB^2}.
    \end{split}
\end{align}
Because all the combinations of these elements implied in the constraint equation are linearly independent, all the coefficients appearing in front of these expressions   should vanish independently. 

\begin{table}[b]
    \centering
    \renewcommand{\arraystretch}{1.3}
    \begin{tabular}{|c|c|c|c|c|c|c|c|c|c|c|}
    \hline
 &\rule{0pt}{15pt} $\omega^{(0,2)}_A$ & $\omega^{(0,2)}_B$& $\omega^{(0,2)}_{AB^2}$ & $\omega^{(1,3)}_A$  & $\omega^{(1,3)}_B$ & $\omega^{(1,3)}_{\bar A}$ & $\omega^{(1,3)}_{\bar B}$ & $\omega^{(2,4)}_A$ & $\omega^{(2,4)}_B$ & $\omega^{(2,4)}_{AB^2}$\\
 \hline
 $\text{DN}^{(1)}$& \checkmark& \checkmark&\checkmark &\checkmark &\checkmark &\checkmark &\checkmark &\checkmark &\checkmark &\checkmark \\\hline
 $\text{DN}^{(2)}$&\checkmark &\checkmark & \checkmark & & & & & & & \\\hline
 $\text{DN}^{(3)}$& \checkmark & \checkmark & & & & & & & & \\\hline
 $\text{DN}^{(4)}$& \checkmark & & & & & & & & & \\\hline
 $\Upsilon_\text{BH}$& \checkmark &\checkmark  &\checkmark  &\checkmark  &\checkmark  &\checkmark  &\checkmark  & \checkmark & \checkmark & \checkmark\\\hline
\end{tabular}
    \caption{Structure of the distribution of the different types of contractions in the various contribution to the black hole constraint equation.}
    \label{tab:terms}
\end{table}

All the terms do not appear in all the contributions, as depicted in Table \ref{tab:terms}. In order to fix the values of the Ansatz coefficients, let us proceed along the following sequence:
\begin{itemize} 
    \item  {\textbf{$\omega^{(1,3)}_A$ term}:} this contribution reads
    \begin{align}
        3M\qty(6\Lambda_1-\frac{3}{2})\abs{B}^2\omega^{(1,3)}_A.
    \end{align}
    One therefore requires
    \begin{align}
      \Lambda_1=\frac{1}{4}\label{L1_kappa}\, .
    \end{align}
    \item  {\textbf{$\omega^{(1,3)}_{\bar A}$, $\omega^{(1,3)}_{\bar B}$, $\omega^{(1,3)}_B$, $\omega^{(2,4)}_A$, $\omega^{(2,4)}_B$ and $\omega^{(2,4)}_{AB^2}$ terms}:} their coefficients consistently vanish when \eqref{L1_kappa} is fulfilled.
    \item  {\textbf{$\omega^{(0,2)}_{AB^2}$ term}:} this contribution reads
    \begin{align}
        3M\qty(-3\Lambda_1-\Lambda_2+\frac{1}{4})\omega^{(0,2)}_{AB^2}.
    \end{align}
    Using \eqref{L1_kappa}, this yields
    \begin{align}
       \Lambda_2=-\frac{1}{2}.\label{L2}
    \end{align}
    \item  {\textbf{$\omega^{(1,3)}_{\bar A}$, $\omega^{(1,3)}_{\bar B}$ and $\omega^{(1,3)}_B$ terms:}} their coefficients consistently vanish when Eqs. \eqref{L1_kappa} and \eqref{L2} are fulfilled.
    \item  {\textbf{$\omega^{(0,2)}_{B}$ term}:} this contribution reads
    \begin{align}
        M\qty(\mathcal AE+\mathcal P^2 E_s)\qty(10\Lambda_1+2\Lambda_2+2\Lambda_3-\frac{7}{2})\omega^{(0,2)}_B.
    \end{align}
    Using \eqref{L1_kappa} and \eqref{L2}, this yields
    \begin{align}
       \Lambda_3=1.\label{L3}
    \end{align}
    \item  {\textbf{$\omega^{(0,2)}_{A}$ term}:} this contribution reads
    \begin{align}
        M\qty(\mathcal A^2+\mathcal P^2 \mathcal S^2)\qty(-\Lambda_1-\Lambda_2+2\Lambda_3+4\Lambda_4-\frac{1}{4})\omega^{(0,2)}_A.
    \end{align}
    Using Eqs. \eqref{L1_kappa}, \eqref{L2} and \eqref{L3}, this finally yields
    \begin{align}
      \Lambda_4=-\frac{1}{2}.\label{L4}
    \end{align}
\end{itemize}
In conclusion, the Ansatz \eqref{ansatz} gives a coherent solution to the constraint equation \eqref{BH_scalar_constraint} only if 
\begin{align}
    \Lambda_1=\frac{1}{4},\quad\Lambda_2=-\frac{1}{2},\quad\Lambda_3=1,\quad\Lambda_4=-\frac{1}{2}.\label{coeff_values}
\end{align}
More explicitly, it corresponds to set
\begin{align}
    M_{\alpha\beta\gamma\delta}=g_{\alpha\gamma}\qty(\xi_\beta \xi_\delta-\frac{1}{2}g_{\beta\delta}\xi^2)-\frac{1}{2}Y\tdu{\alpha}{\lambda}\qty(Y\tdu{\gamma}{\kappa}R_{\lambda\beta\kappa\delta}+\frac{1}{2}Y\tdu{\lambda}{\kappa}R_{\kappa\beta\gamma\delta}).\label{sol_M}
\end{align}

\subsection{Uniqueness of the solution}

We now address the uniqueness to the non-trivial solution \eqref{sol_M} to the constraint \eqref{BH_scalar_constraint} derived above.
If one adds an additional piece to our Ansatz, it will satisfy an homogeneous equation since all source terms have been cancelled by the Ansatz. Demonstrating uniqueness of the non-trivial solution \eqref{sol_M} therefore amounts to prove that 
\begin{align}
    \nabla_{(\mu}N_{\nu|(\alpha\beta)|\rho)}=0\label{homogeneous_eqn}
\end{align}
does not admit any non-trivial solution in Kerr spacetime. We call such a tensor field a Young tableau $\qty{2,2}$ Killing tensor. A trivial Killing tensor is defined as a Killing tensor which is given by a cross-product. Such a trivial Killing tensor would add to the quadratic conserved quantity a product of conserved quantities that are already defined. There is only trivial Killing tensor of symmetry type $\qty{2,2}$, namely $N_{\mu\nu\alpha\beta}=Y_{\mu\nu}Y_{\alpha\beta}$ which correspond to add the product $(\mathcal Q_Y)^2$ to the quadratic conserved quantity. 
We checked explicitly by solving the partial differential equations analytically using a \textit{Mathematica} notebook that no non-trivial such tensor exists in a perturbative series expansion in $a$ around $a=0$ assuming that it only depends upon $r$ and $\theta$.

\subsection{Summary of the results}
Let us summarize the results we have obtained about the quadratic invariants. Our discussion can be compactified in the two following propositions:
\begin{main_result}
The quadratic invariant 
\boxedeqn{
\begin{split}
\mathcal Q^{(2)}_\text{BH}&=\mathcal Q_R+\bigg[g_{\alpha\gamma}\qty(\xi_\beta \xi_\delta-\frac{1}{2}g_{\beta\delta}\xi^2)\\
&\quad-\frac{1}{2}Y\tdu{\alpha}{\lambda}\qty(Y\tdu{\gamma}{\kappa}R_{\lambda\beta\kappa\delta}+\frac{1}{2}Y\tdu{\lambda}{\kappa}R_{\kappa\beta\gamma\delta})\bigg]S^{\alpha\beta}S^{\gamma\delta}\label{quadratic_rudiger}
\end{split}
}{Generalized quadratic invariant}
is conserved for the MPTD equations at second order in the spin magnitude for spin-induced quadrupole, \textit{i.e.} $\dot{\mathcal{Q}}^{(2)}_\text{BH}=\mathcal O\qty(\mathcal S^3)$ provided that $\delta\kappa=0$, \textit{i.e.} if the test body possesses the multipole structure of a black hole. Here, $\mathcal Q_R=  K_{\mu\nu}p^\mu p^\nu+L_{\mu\nu\rho}S^{\mu\nu}p^\rho$ with $L_{\mu\nu\rho}$ given in Eq. \eqref{R2} is Rüdiger's quadratic invariant \cite{doi:10.1098/rspa.1983.0012,Compere:2021kjz}.
\end{main_result}

\begin{preliminary_result}
Any tensor $N_{\alpha\beta\gamma\delta}$ possessing the same algebraic symmetries than the Riemann tensor and satisfying the constraint equation
\begin{align}
\hat\nabla N=\Upsilon_\text{NS},
\end{align}
where the source term $\Upsilon_\text{NS}$ is given in Eq. \eqref{source_NS}
will give rise to a quantity
\begin{align}
    \mathcal Q^{(2)}=\mathcal Q^{(2)}_\text{BH}+\delta\kappa M_{\alpha\beta\gamma\delta}S^{\alpha\beta}S^{\gamma\delta},\qquad M_{\alpha\beta\gamma\delta}=\,^*\!N^*_{\alpha\beta\gamma\delta}
\end{align}
which is conserved up to second order in the spin parameter for the spin-induced quadrupole MPTD equations \eqref{MPD}, \textit{i.e.} $\dot {\mathcal{Q}}^{(2)}=\mathcal O\qty(\mathcal S^3)$, regardless to the value taken by $\delta\kappa$.
\end{preliminary_result}

We finally notice that $\Upsilon_\text{NS}\eval_{a=0}=0$ in the Schwarzschild case by explicit evaluation of \eqref{source_NS}. A direct consequence is that the deformation of Rüdiger's quadratic invariant constructed in the black hole case is still quasi-conserved for arbitrary $\kappa$:
\begin{main_result}
In Schwarzschild spacetime ($a=0$), the deformation of Rüdiger's quadratic invariant $\mathcal Q_\text{BH}^{(2)}$
given in Eq. \eqref{quadratic_rudiger} 
is still conserved for the MPTD equations up to $\mathcal O(\mathcal S^3)$ corrections for arbitrary ($\kappa \in \mathbb R$)  spin-induced quadrupole.
\end{main_result}
Notice that the conservation does not hold for Rüdiger's linear invariant $\mathcal Q_Y$. The Kerr case ($a \neq 0$) will be further discussed in Section \ref{sec:NS}.

\section{Neutron star case around Kerr: a no-go result}
\label{sec:NS}

We summarize in Table \ref{tab:sources} the three constraint equations discussed previously. They all take the form 
\begin{align}
    \hat\nabla N=\Upsilon,
\end{align}
with $\Upsilon$ being of grading $[s,p]^+$. It implies that $N$ should be of grading $[s,p-1]^+$. 

\begin{table}[!bt]
\centering
\renewcommand{\arraystretch}{1.3}
\begin{tabular}{|c|c|c|c|c|}
\hline
\textsc{problem} & \textsc{source term $\Upsilon$} & \textsc{equation} & $\Upsilon\eval_{a=0}$ & \textsc{grading}\\\hline
NS linear invariant & $\Upsilon_\text{lin}$ & \eqref{lin_cst} & $\neq 0$ & $[2,4]^+$ \\\hline
BH quadratic invariant & $\Upsilon_\text{BH}$ & \eqref{source_BH} & $\neq 0$& $[2,3]^+$ \\\hline
NS quadratic invariant & $\Upsilon_\text{NS}$ & \eqref{source_NS} & $0$ & $[2,3]^+$\\\hline
\end{tabular}
\caption{Source terms for the various constraint equations $\hat\nabla N=\Upsilon$ studied in the thesis.}
\label{tab:sources}
\end{table}

Let $\qty{K_\mathfrak{a}}$ be a basis of linearly independent and dimensionless functions build from the (manifestly real) functions $\mathcal A, \mathcal P^2,\mathcal S^2,\Re A,\Im A,\ldots,\Re C,\Im C,D,\ldots,I$. Given that $N$ is dimensionless and given the structure of the source terms $\Upsilon$, we propose the following Ansatz,  
\begin{align}
    N=\sum_\mathfrak{a}\sum_{(k,l)\in\mathbb Z^2}K_\mathfrak{a}M^{l}\qty(C_\mathfrak{a}^{(k,l)}f_\mathfrak{a}^{(k,l)}(J)\alpha_1^{(k,k+l)}+D_\mathfrak{a}^{(k,l)}g^{(k,l)}_\mathfrak{a}(J)\omega_1^{(k,k+l)}).\label{generic_ansatz}
\end{align}
Here, $C_\mathfrak{a}^{k,l}$ and $D_\mathfrak{a}^{k,l}$ are numerical coefficients and $f_\mathfrak{a}^{k,l}(J)$ and $g_\mathfrak{a}^{k,l}(J)$ are smooth functions of $J$. 

We can work with dimensionless quantities by first introducing the dimensionless variables
\begin{align}
    \tilde r=\frac{r}{M},\qquad\tilde a=\frac{a}{M}.
\end{align}
We notice that the $K_\mathfrak{a}$'s are left unchanged and do not depend anymore on $M$, whereas $\mathcal R\triangleq M\tilde {\mathcal R}$, with $\tilde {\mathcal R}\triangleq \tilde r+i\tilde a$. This yields
\begin{align}
    \begin{split}
    \alpha^{(n,p)}_1=M^{n-p}\Re\qty(\frac{\bar{\tilde{\mathcal{R}}}}{\tilde{\mathcal{R}}})\triangleq M^{n-p}\tilde{\alpha}^{(n,p)},\\
    \omega^{(n,p)}_1=M^{n-p}\Im\qty(\frac{\bar{\tilde{\mathcal{R}}}}{\tilde{\mathcal{R}}})\triangleq M^{n-p}\tilde{\omega}^{(n,p)}.
    \end{split}
\end{align}
Each derivative of the term present in the Ansatz scales as $M^{-1}$ times a manifestly dimensionless quantity. All the source terms appearing earlier can be written as $\Upsilon=M^{-1}\tilde\Upsilon$, with $\tilde{\Upsilon}$ being an dimensionless quantity. 
This implies that Eq. \eqref{generic_ansatz} reduces to
\begin{align}
   N=\sum_\mathfrak{a}\sum_{(k,l)\in\mathbb Z^2} K_\mathfrak{a}\qty(C_\mathfrak{a}^{k,l}f_\mathfrak{a}^{k,l}(J)\tilde\alpha^{(k,k+l)}+D_\mathfrak{a}^{k,l}g^{k,l}_\mathfrak{a}(J)\tilde\omega^{(k,k+l)}),\label{generic_ansatz_dimless} 
\end{align}
which contain only terms that are explicitly independent of $M$. We can further define $\tilde \nabla = M \hat \nabla$ the dimensionless derivative operator and the constraints take the dimensionless form $\tilde\nabla N = \tilde \Upsilon$.

\subsection{Perturbative expansion in $a$ of the constraint equations}

Instead of addressing the non-linear problem in $a$ we will perform a perturbative series in $a$. For any smooth function $f$ of $a$, we define
\begin{align}
    \qty(f)_n\triangleq\dv[n]{f}{a}\eval_{a=0}.
\end{align}
The constraint equation then becomes an infinite hierarchy of equations
\begin{align}
\qty(\tilde{\nabla} N)_n=\tilde\Upsilon_n,\qquad\forall n\geq 0.
\end{align}

Let us describe the $n=0$ and $n=1$ equations. 
Since $J\propto a^2$, the functions $f_\mathfrak{a}^{k,l}(J)$ and $g_\mathfrak{a}^{k,l}(J)$ do not contribute  and can be set to one without loss of generality.

\subsubsection{$n=0$ equation} 

Noticing the identities
\begin{align}
    \begin{split}
    &\tilde\omega^{(k,k+l)}\eval_{a=0}=\tilde\alpha^{(k,k+l+1)}_A\eval_{a=0}=\tilde\alpha^{(k-1,k+l)}_{\bar A}\eval_{a=0}=0,
    \\
    &\tilde\alpha^{(k,k+l)}\eval_{a=0}=\tilde r^{-l},\quad\tilde\omega^{(k,k+l+1)}_A\eval_{a=0}=-p_r \tilde r^{-(l+1)},\\
    &\quad\tilde\omega^{(k-1,k+l)}_{\bar A}\eval_{a=0}=p_r \tilde r^{-(l+1)}
    \end{split}
\end{align}
and making use of Eq. \eqref{deriv_ao}, the $n=0$ constraint becomes
\begin{align}
    \sum_\mathfrak{a}\sum_{(k,l)\in\mathbb Z^2} C_\mathfrak{a}^{(k,l)}\qty[\qty({\tilde\nabla K_\mathfrak{a}})_0\tilde{r}^{-l}-l \qty(K_\mathfrak{a})_0 p_r \tilde r^{-(l+1)}]=\qty(\tilde\Upsilon)_0.
\end{align}
It does not depend on the terms involving $\omega$'s contributions. Moreover, denoting
\begin{align}
C^{(l)}_\mathfrak{a}\triangleq\sum_{k\in\mathbb Z}C^{(k,l)}_\mathfrak{a},
\end{align}
this equation can be further simplified to
\begin{align}
\sum_\mathfrak{a}\sum_{l\in\mathbb Z} C^{(l)}_\mathfrak{a}\qty[\qty({\tilde\nabla K_\mathfrak{a}})_0-l \qty(K_\mathfrak{a})_0 p_r \tilde r^{-1}]\tilde r^{-l}=\qty(\tilde\Upsilon)_0.
\label{eq_LO}
\end{align}

\subsubsection{$n=1$ equation} 

Following an identical procedure and denoting
\begin{align}
    D_\mathfrak{a}^{(l)}\triangleq\sum_{k\in\mathbb Z}\qty(2k+l)D_\mathfrak{a}^{(k,l)},
\end{align}
the $n=1$ constraint equation can be shown to take the form
\begin{align}
    \begin{split}
    &\sum_\mathfrak{a}\sum_{l\in\mathbb Z}\bigg\lbrace C_\mathfrak{a}^{(l)}\qty[\qty({\tilde\nabla K_\mathfrak{a}})_1-\qty(K_\mathfrak{a})_1lp_r\tilde r^{-1}]\tilde r^{-l}\\
    &-D_\mathfrak{a}^{(l)}\qty[\qty({\tilde\nabla K_\mathfrak{a}})_0x+\qty(K_\mathfrak{a})_0\qty(p_\theta-\qty(l+1)p_rx\tilde r^{-1})]\tilde r^{-(l+1)}\bigg\rbrace=(\tilde\Upsilon)_1.\label{eq_NLO}
    \end{split}
\end{align}

\subsubsection{Numerical evaluation}

Eqs. \eqref{eq_LO} and \eqref{eq_NLO} have been numerically evaluated using \textit{Mathematica} in order to try to fix the values of the coefficients $C_\mathfrak{a}^{(l)}$ and $D_\mathfrak{a}^{(l)}$ that would enable a possible solution to the neutron star cases.  We have only looked for ``polynomial'' solutions to these equations, \textit{i.e.} solutions for which the coefficients of the ansatz are non-vanishing only over a finite interval $[l_\text{min},l_\text{max}]$. Given the size of the expressions involved, the only computationally reasonable solving method available to us was the following: let us denote $N$ the number of terms present in the left-hand side of \eqref{eq_LO} (resp. \eqref{eq_NLO}) for a given $[l_\text{min},l_\text{max}]$. Eq. \eqref{eq_LO} (resp. \eqref{eq_NLO}) was then evaluated $N+1$ times at different random values of its variables and parameters, resulting into a linear system of $N+1$ algebraic equations in $N$ variables (the coefficients $C_\mathfrak{a}^{(l)}$ and $D_\mathfrak{a}^{(l)}$) that was then solved using the built-in numerical equation solver of \textit{Mathematica}.

This procedure has been proof-tested by reproducing the coefficients corresponding to the black hole quadratic invariant from the source term $\Upsilon_\text{BH}$. It has then been used to attempt to find a solution to both the linear and the quadratic neutron star problems, with $[l_\text{min},l_\text{max}]=[-10,100]$. No solution has been found, discarding \textit{a priori} the existence of polynomial-type solutions. 

In the appended \textit{Mathematica} notebooks\footnote{\url{https://github.com/addruart/generalizedCarterConstant}}, the interval of $l$ is reduced to $[l_\text{min},l_\text{max}]=[0,5]$ in order to reduce the computational time for the interested reader. The four notebooks related to this section are:
\begin{itemize} 
    \item \texttt{\lstinline{Quadratic_BH_LO_final.nb}}: check of the numerical evaluation of the $n=0$ equation: reproduction of the black hole quadratic invariant from $\Upsilon_\text{BH}$;
    \item \texttt{\lstinline{Linear_NS_LO_final.nb}}: attempt of finding a polynomial solution to the $n=0$ equation for the neutron star linear problem (source term $\Upsilon_\text{lin}$);
    \item \texttt{\lstinline{Quadratic_BH_NLO_final.nb}}: check of the numerical evaluation of the $n=1$ equation: reproduction of the black hole quadratic invariant from $\Upsilon_\text{BH}$;
    \item \texttt{\lstinline{Quadratic_NS_NLO_final.nb}}: attempt of finding a polynomial solution to the $n=1$ equation for the neutron star quadratic problem (source term $\Upsilon_\text{NS}$).
\end{itemize}

\section{Summary}
\label{sec:outlooks}

We have completed our exploration of the conserved quantities for the MPTD equations (that is, the MPD equations endowed with the TD spin supplementary condition) in Kerr spacetime. Our discussion is graphically summarized in Fig. \ref{fig:spinning_conserved_quantities}. Notice that by ``conservation'', we really mean ``quasi-conservation'', \textit{i.e.} we only require the conservation equation to hold up the some given order in $\mathcal S$. The magnitude $\mathcal S$ of the spin is always conserved. At linear order in the spin $\mathcal S$, the dynamical mass $\mu$ of the body is also conserved. Moreover, the three additional quantities conserved along geodesic motion (namely $\mathcal E_0$, $\mathcal L_0$ and $\mathcal K_0$) can be deformed to construct quantities $\mathcal E$, $\mathcal L$ and $\mathcal Q_R$ which are still conserved. They are respectively given by Eqs. \eqref{spin_EL} and \eqref{quadratic_rudiger_kerr}. Moreover, a new invariant $\mathcal Q_Y$ appears, homogeneously linear in $\mathcal S$. It is given by Eq. \eqref{linear_rudiger_kerr}. For historical reasons, $\mathcal Q_Y$ and $\mathcal Q_R$ are respectively referred to as the linear and the quadratic Rüdiger invariants.

At second order in the spin magnitude and for spin-induced quadrupole with black hole-type coupling ($\kappa=1$), the mass $\mu$ is no more conserved, but one can still define a mass-like quantity $\tilde\mu$ which is still conserved, see Eq. \eqref{mass_like}. Our others finding are as follows: (i) $\mathcal E$ and $\mathcal L$ are still constants of motion. This property can be shown to hold at any order of the multipole expansion of the equations of motion \cite{dixon1979}; (ii) the linear Rüdiger invariant $\mathcal Q_Y$ is still conserved and (iii) a deformation of Rüdiger's quadratic invariant denoted $\mathcal Q_\text{BH}^{(2)}$ exists such that the deformed Rüdiger's quadratic invariant is also quasi-conserved. It is given by Eq. \eqref{quadratic_rudiger}. Finally, (iv) the conservation of the deformed quadratic invariant can only be extended to arbitrary coupling ($\kappa\neq 1$) around the  Schwarzschild spacetime ($a=0$).

All our attempts to find solutions to the constraint equations in the case of an arbitrary spin-induced coupling in generic Kerr spacetime have failed. Let us notice that, even if someone would succeed in solving them, the quasi-invariants obtained would not be of direct astrophysical  interest. This is due to the fact that, except in the special case where the test body is itself a black hole, the spin-induced term is not the only contribution to the quadrupole. Tidal-type contributions will also arise, breaking the quasi-conservation obtained for spin-induced quadrupole only.  

The existence of these constants of motion has interesting consequences when we study the motion of spinning test bodies from the Hamiltonian perspective. First, as it was the case for geodesics, the existence of these constants is actually strongly related to the separability of the Hamilton-Jacobi equation for spinning bodies. Second, having in our possession a set of conserved quantities will enable us to address the status of integrability of the motion of spinning test bodies in Kerr spacetime. These topics will be the ones that will be treated in the final part of this thesis.

\partimage[width=\textwidth]{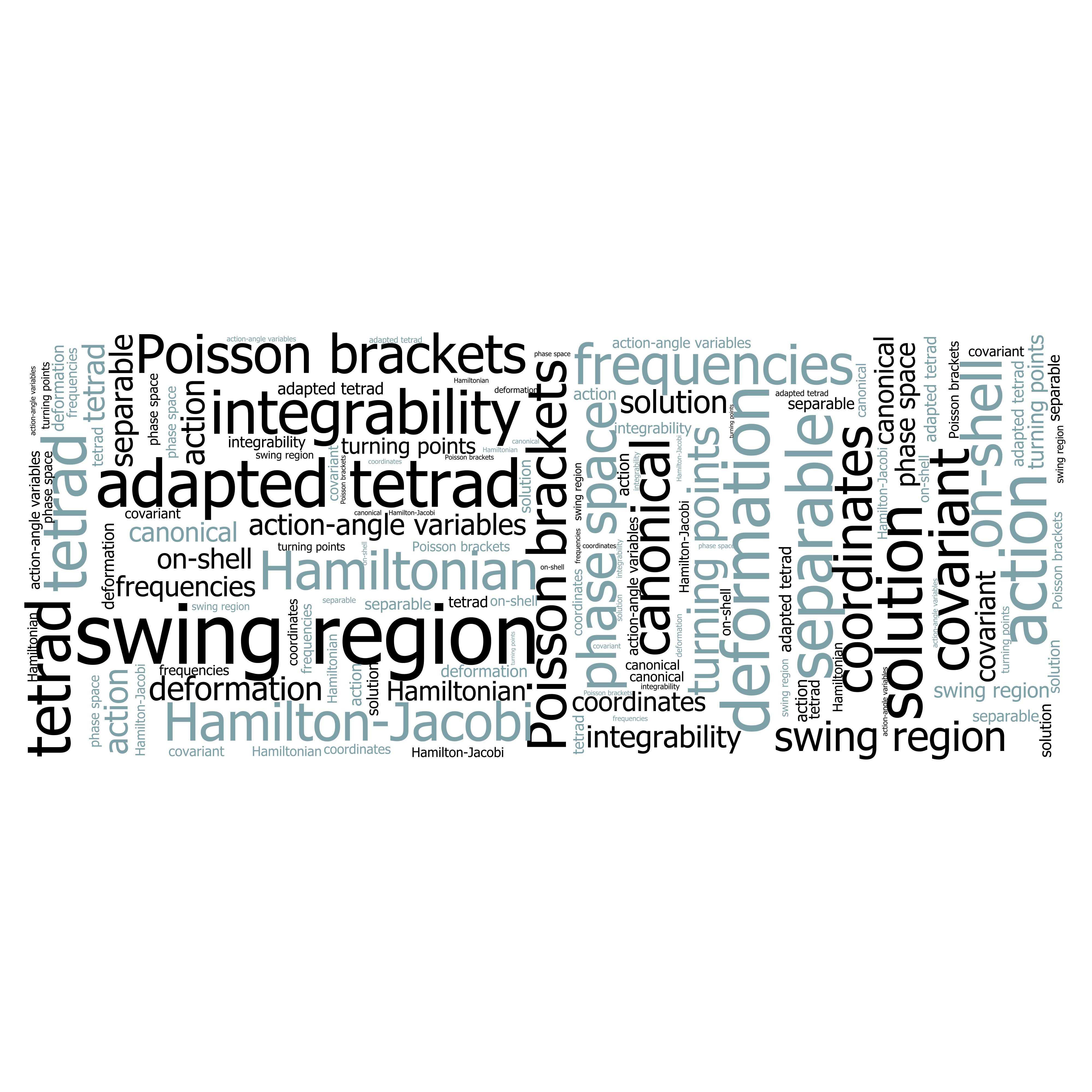}

\part[\textsc{Hamiltonian Description of Extended Test Bodies}]{\textsc{Hamiltonian Description\\of Extended Test Bodies}}\label{part:hamilton}

\noindent

\lettrine{A}{} more comprehensive understanding of some properties of test body motion in Kerr spacetime can be achieved by turning to an Hamiltonian formulation of the problem. In particular, it will enable us to discuss the integrability properties of the motion and the relation between the conserved quantities derived in Part \ref{part:conserved_quantities} and the solution of the associated Hamilton-Jacobi equation.

As it was the case for geodesic motion (and was extensively described in Chapter \ref{chap:hamilton_kerr}), there are two main formulations of test body motion in curved spacetime. The first one is based on a 3+1 decomposition of the background spacetime and uses the coordinate time $t$ for evolving the motion. Its associated phase space is 12-dimensional, and is spanned by the variables $\qty{x^i,p_i,S^{\mu\nu}}$. Historically, this is the Hamiltonian formulation of test bodies that has been the most widely used in the literature, particularly because it is very convenient for performing post-Newtonian expansions of results obtained in the MPD framework. See \textit{e.g.} \cite{Steinhoff_2010,Barausse_2009,Vines:2016unv} and references therein for more details.

Even if this formulation is well-suited for numerous applications, it is not well adapted to some others. One of these is the study of the structure of the associated Hamilton-Jacobi equation, and the computation of the related spin-induced shifts in the orbital turning points and in the fundamental frequencies of the motion. However, this loophole can be overcome by constructing covariant Hamiltonians, whose time evolution is parametrized by the proper time of the body or by any other external time parameter. These covariant Hamiltonians will drive the evolution on a larger, 14-dimensional, phase space parametrized by $(x^\mu,p_\mu,S^{\mu\nu})$. The construction of such Hamiltonians valid at all orders in $\mathcal S$ but restricted to the pole-dipole MPD equations was discussed for various spin supplementary conditions by W. Witzany, J. Steinhoff and G. Lukes-Gerakopoulos in \cite{Witzany:2018ahb}.
They also discussed the construction of canonical coordinates on phase space. Their derivation is valid provided that an arbitrary covariant SSC of the form Eq. \eqref{covariant_ssc} has been enforced.

This part of the thesis is voluntarily a little bit more sketchy than the previous ones, and present some partial results and conjectures, leaving significant room for subsequent work to be performed. The text is structured as follows. Chapter \ref{chap:symplectic} introduces a phase space compatible with a covariant Hamiltonian description of test bodies in generic curved spacetime. Both non-symplectic and symplectic coordinates systems are described, and the associated Poisson brackets algebras are derived. Finally, spin supplementary conditions viewed as constraints between the phase space variables are discussed from the action perspective. Chapter \ref{chap:covariant_H} reviews and generalizes the method of \cite{Witzany:2018ahb} for building covariant Hamiltonians, by including quadrupole moment but restricting to quadratic order in $\mathcal S$, as physical coherence of the multipole expansion requires. Using this technique, we present a fully covariant Hamiltonian which generates the MPD equations endowed with spin-induced quadrupole moment under the Tulczyjew-Dixon spin supplementary condition. To our knowledge, this result does not appear in the present form in the literature. The two final chapters of the thesis are devoted to applications of the Hamiltonian formalism. Chapter \ref{chap:integrability} reviews the status of integrability of test body motion in both Schwarzschild and Kerr spacetimes, at linear and quadratic orders in the spin magnitude $\mathcal S$. As claimed in \cite{Compere:2020eat} and numerically corroborated by \cite{Kunst:2015tla}, we argue that MPD equations in Kerr spacetime are not integrable already at linear order in $\mathcal S$, by explicitly computing the Poisson brackets between the maximal set of linearly independent conserved quantities found in Part \ref{part:conserved_quantities}. Finally, Chapter \ref{chap:HJ} reviews Witzany's analysis of Hamilton-Jacobi equation for test bodies in Kerr spacetime \cite{Witzany:2019nml}. Strictly speaking, the solution of the Hamilton-Jacobi equation is not anymore separable when spin has been turned on. However, at linear order in $\mathcal S$, the terms breaking the separability are negligible everywhere in the phase space, but near the turning points of the associated zeroth-order geodesic motion. In this so-called ``swing region'' where it is separable (to be precisely defined in Eq. \eqref{swing_region}), the solution takes a form very similar to the one worked out by Carter for geodesic motion (\textit{cf.} Chapter \ref{chap:hamilton_kerr}). In particular, Rüdiger's deformation of the Carter constant still plays the role of the separation constant of the Hamilton-Jacobi problem. The main advantage of this Hamilton-Jacobi formulation is that it allows to compute easily the spin-induced shifts in the orbital turning points and in the fundamental frequencies of the motion, with respect to a purely geodesic trajectory \cite{Witzany:2019nml}. The chapter ends by discussing a partial solution at quadratic order in $\mathcal S$, valid in the swing region of phase space, and by conjecturing the status of separability of the Hamilton-Jacobi problem at this order.

\chapter{Symplectic formulation}
\label{chap:symplectic}


This chapters aims to set the stage for the Hamiltonian description of spinning test bodies. Its goal will be twofold, namely (i) discussing the phase space structure and its associated symplectic structure (Poisson brackets algebra) and (ii) understanding which are the constraints the phase space variables are subjected to. The discussion is organized as follows: Section \ref{sec:phase_space} will discuss the structure of phase space for spinning test bodies, and how an associated Poisson brackets structure can be build. Section \ref{sec:symplectic} will review two ways of constructing symplectic coordinates on the phase space. Finally, we turn to the discussion of constraints, which is the subject of Section \ref{sec:csts}. 

In what follows, we start again from the abstract form of the action obtained in Chapter \ref{chap:EOM}:
\begin{align}
    S=\int\dd\lambda\qty(p_\mu v^\mu+\frac{1}{2}S_{\mu\nu}\Omega^{\mu\nu}),
\end{align}
where $p_\mu$ and $S_{\mu\nu}$ denote the momenta conjugated to $v^\mu$ and $\Omega^{\mu\nu}$.

\section{Phase space and non-symplectic coordinates}\label{sec:phase_space}

We will take the phase space $\mathcal M$ of spinning test bodies to be spanned by the variables $\qty{x^\mu,p_\mu,S^{\mu\nu}}$. It is therefore 14-dimensional, and has the simple structure
\begin{align}
    \mathcal M\simeq\mathbb R^4\times\mathbb R^4\times \mathbb R^6.
\end{align}
In this section, we aim to derive a symplectic structure on $\mathcal M$ that is compatible with the Hamiltonian description of test bodies.  

In order to reach this aim, a guideline is to recall that one of the keypoints of Hamiltonian description is to replace the velocities $\dot q^\mathfrak{i}$ by the conjugate momenta $P_\mathfrak{i}\triangleq\pdv{L}{\dot q^\mathfrak{i}}$. In terms of the coordinates $q^\mathfrak{i}$ and of the momenta $P_\mathfrak{j}$, Eq. \eqref{poisson_brackets} implies that the Poisson brackets algebra is simply\footnote{From this point, we understand that all the Poisson brackets that are not explicitly written are vanishing.}
\begin{align}
    \pb{q^\mathfrak{i}}{P_\mathfrak{j}}=\delta^\mathfrak{i}_\mathfrak{j}.\label{PB_Darboux}
\end{align}
In that case, we will say that $q^\mathfrak{i}$, $P_\mathfrak{i}$ are symplectic variables, or equivalently that the Poisson brackets algebra is in Darboux form.

Our strategy for computing the Poisson brackets algebra between the usual phase space variables consists in two steps: (i) identify the variables that play the role of the coordinates $q^\mathfrak{i}$ and of their associated velocities $\dot q^\mathfrak{i}$ in the Lagrangian formulation of test bodies introduced in Chapter \ref{chap:EOM}. As we will see, it will allow us to find explicit expressions for the form of the canonically conjugate momenta. Subsequently, (ii) we will use our findings as well as the canonical algebra Eq. \eqref{PB_Darboux} to obtain the standard Poisson brackets between the variables $x^\mu,p_\mu$ and $S^{\mu\nu}$.

\subsection{Coordinates and velocities}
The identification of coordinates and velocities in the Lagrangian can be performed in two steps \cite{Barausse_2009}:
\begin{enumerate}
    \item Up to this point, we have written our Lagrangian action as
\begin{align}
    S=\int\dd\lambda\,L\qty(v^\mu,\Omega^{\mu\nu},g_{\mu\nu},R_{\mu\nu\rho\sigma},\nabla_\alpha R_{\mu\nu\rho\sigma},\ldots).
\end{align}
For a stationary background, the metric, the Riemann tensor and all its covariant derivatives are only functions of the spacetime coordinates $x^\mu$. Therefore, the action above can be formally written
\begin{align}
    S=\int\dd\lambda\, L\qty(x^\mu,v^\mu,\Omega^{\mu\nu}).
\end{align}
    \item Given an arbitrary background tetrad $\underline e\tdu{ A}{\mu}$, the worldline tetrad $e\tdu{A}{\mu}$ can be expressed as
    \begin{align}
     e\tdu{A}{\mu}\qty(\lambda)=\qty({\Lambda^{-1}})\tud{B}{A}(\lambda)\underline e\tdu{B}{\mu}\qty(z(\lambda)),  
    \end{align}
    where the Lorentz transformation matrix\footnote{Recall that we dropped the underline in the background tetrad indices.} $\Lambda\tud{A}{B}$ has been introduced in Eq. \eqref{tfo_tetrads}. As mentioned earlier, this matrix depends on six (reals) parameters. Denoting them collectively $\qty{\phi^I}$ ($I=1,\ldots,6$), the worldline tetrad can be seen as a function of the worldline position $x^\mu$ and of the Lorentz parameters $\phi\triangleq\qty{\phi^I}$, which depend both on $\lambda$:
    \begin{align}
        e\tdu{A}{\mu}(\lambda)=e\tdu{A}{\mu}\qty(\phi(\lambda),x(\lambda)).
    \end{align}
    This implies that the rotation coefficients can be written as
    \begin{align}
        \Omega^{\mu\nu}=e^{A\mu}\qty(\phi,x)\qty(\dv{\phi^I}{\lambda}\pdv{e\tdu{A}{\nu}}{\phi^I}\qty(\phi,x)+v^\lambda e\tdud{A}{\nu}{,\lambda}\qty(\phi,x))+\Gamma^\nu_{\alpha\beta}g^{\mu\alpha}v^\beta.\label{rotation_coeff_expansion}
    \end{align}
    It follows that the rotation coefficients are only functions of $x^\mu$, $\phi^I$ and its first derivative:
    \begin{align}
        \Omega^{\mu\nu}=\Omega^{\mu\nu}\qty(x,\phi,\dv{\phi}{\lambda}).
    \end{align}
\end{enumerate}
From the above discussion, our action can be written as
\begin{align}
    S=\int\dd\lambda\, \bar 
    L\qty(x^\mu,v^\mu,\phi^I,\dv{\phi^I}{\lambda}).
\end{align}
The notation $\bar L$ has been introduced to make explicit the new functional dependence of the Lagrangian, which will be enlightening when computing the conjugate momenta. We also introduce the shortcut notation $\dot{\phi}^I\triangleq\dv{\phi^I}{\lambda}$. The configuration space of the system is spanned by the generalized coordinates $\mathbf q\triangleq\qty{x^\mu,\phi^I}$ and its action is explicitly written in terms of the coordinates $\mathbf q$ and their associated velocities $\dot{\mathbf{q}}$:
\begin{align}
    S=\int\dd \lambda\,\bar L\qty(x^\mu,v^\mu,\phi^I,\dot{\phi}^I,t).
\end{align}

\subsection{Conjugate momenta}
The next step is to write down the momenta conjugated to the variables $\mathbf q$. As revealed by the above discussion, the Lagrangian can be seen either as a function $L\qty(x^\mu,v^\mu, \Omega^{\mu\nu},\lambda)$ or as a function $\bar L\qty(x^\mu,v^\mu,\phi^I,\dot\phi^I,t)$. These function must satisfy
\begin{align}
    L\qty(x^\mu,v^\mu, \Omega^{\mu\nu},\lambda)\eval_{\Omega^{\mu\nu}=\Omega^{\mu\nu}\qty(x^\mu,\phi^I,\dot{\phi^I})}=\bar L\qty(x^\mu,v^\mu,\phi^I,\dot\phi^I,\lambda).
\end{align}
Varying the two sides of this equation gives respectively
\begin{align}
    \delta\bar L=\pdv{\bar L}{x^\mu}\delta x^\mu+P_\mu\delta v^\mu+\pdv{\bar L}{\phi^I}\delta\phi^I+P_{\phi^I}\delta\dot{\phi}^I
\end{align}
with
\begin{align}
    P_\mu\triangleq\pdv{\bar L}{v^\mu},\qquad P_{\phi^I}\triangleq\pdv{\bar L}{\dot{\phi}^I}.
\end{align}
and 
\begin{align}
    \begin{split}
    \delta L&= \pdv{L}{x^\mu}\delta x^\mu+\pdv{L}{v^\mu}\delta v^\mu+\qty(\pdv{L}{\Omega^{\mu\nu}}\delta\Omega^{\mu\nu})\eval_{\Omega^{\mu\nu}=\Omega^{\mu\nu}\qty(x^\mu,\phi^I,\dot{\phi^I})}\\
    &=\qty(\pdv{L}{x^\mu}+\pdv{L}{\Omega^{\mu\nu}}\pdv{\Omega^{\mu\nu}}{x^\mu})\delta x^\mu+\qty(\pdv{L}{v^\mu}+\pdv{L}{\Omega^{\mu\nu}}\pdv{\Omega^{\mu\nu}}{v^\mu})\delta v^\mu\\
    &\quad+\pdv{L}{\Omega^{\mu\nu}}\pdv{\Omega^{\mu\nu}}{\phi^I}\delta\phi^I+\pdv{L}{\Omega^{\mu\nu}}\pdv{\Omega^{\mu\nu}}{\dot{\phi}^I}\delta\dot{\phi}^I.
    \end{split}
\end{align}
Comparing the two variations and using the definition \eqref{momenta} yields
\begin{align}
    P_\mu&=p_\mu+\frac{1}{2}S_{\mu\nu}\pdv{\Omega^{\mu\nu}}{v^\mu},\qquad
    P_{\phi^I}=\frac{1}{2}S_{\mu\nu}\pdv{\Omega^{\mu\nu}}{\dot{\phi}^I}.\label{canonical_coordinates}
\end{align}
Using Eq. \eqref{rotation_coeff_expansion}, one can write, after some straightforward algebra
\begin{subequations}\label{conj}
\begin{align}
    P_\mu&=p_\mu-\frac{1}{2}\omega_{\mu AB}S^{AB},\qquad \omega_{\mu AB}\triangleq \underline e\tdu{A}{\lambda}\underline e_{B\lambda;\mu}\label{conj_pos}\\
    P_{\phi ^I}&=\frac{1}{2}S_{\mu\nu}\lambda^{AB}_I\underline e\tdu{A}{\mu}\underline e\tdu{B}{\nu}, \qquad
    \lambda^{AB}_I\triangleq{\Lambda^{-1}}\tdu{C}{A}\pdv{ {\Lambda^{-1}}^{CB}}{\phi^I}.\label{conj_spin}
\end{align}
\end{subequations}
Notice that we have introduced the \defining{connection 1-forms} $\omega_{\mu AB}$ following the notation of Wald \cite{Wald:1984rg}. These quantities satisfy $\omega_{\mu AB}=-\omega_{\mu BA}$. In terms of the coordinates $x^\mu,\phi^I$ and of their conjugate momenta $P_\mu,P_{\phi^I}$, the Poisson brackets algebra is simply of the Darboux form \eqref{PB_Darboux}, which reads explicitly
\begin{align}
    \pb{x^\mu}{P_\nu}=\delta^\mu_\nu,\qquad \pb{\phi^I}{P_{\phi^J}}=\delta^I_J.\label{darboux_spin}
\end{align}
The expressions found for the canonically conjugate momenta are rather abstract, and more work will be required to end up with explicit formulae for these momenta, as well as to disentangle the independent variables. Nevertheless, this abstract formulation will already allow us to compute the Poisson algebra between the conventional phase space variables. 

\subsection{Quasi-symplectic coordinates}
The Poisson algebra takes a particularly simple form if one choose the phase space coordinates to be $\qty{x^\mu,P_\mu,S^{AB}}$, where $S^{AB}=\underline e\tud{A}{\mu}\underline e\tud{B}{\nu}S^{\mu\nu}$ are the components of the spin tensor expressed in an arbitrary background orthonormal tetrad $\underline e\tdu{A}{\mu}$. For these coordinates, the only non-vanishing brackets are (see \cite{Witzany:2018ahb} and references therein)
\begin{subequations}\label{quasi_symplectic}
\boxedeqn{
\pb{x^\mu}{P_\nu}&=\delta^\mu_\nu,\\ \pb{S^{AB}}{S^{CD}}&=S^{AC}\eta^{BD}+S^{BD}\eta^{AC}-S^{AD}\eta^{BC}-S^{BC}\eta^{AD}.
}{Poisson brackets for quasi symplectic coordinates}
\end{subequations}
Following \cite{Ramond_2021}, the form of this algebra leads us to refer to these coordinates as \defining{quasi-symplectic coordinates}: they behave as symplectic coordinates for the ``position sector'' of the phase space, covered by the variables $x^\mu$ and $P_\mu$, but not for the ``spin sector'', covered by $S^{AB}$. Notice however that the brackets between the tetrad spin tensor components are a realization of the $\mathfrak{so}(1,3)$ (\textit{i.e.} Lorentz) algebra, which originates from the definition of $S^{AB}$ as being expressed in an orthonormal tetrad frame.

The rest of this section will be devoted to the proof of the algebra Eq. \eqref{quasi_symplectic}. We begin by deriving a couple of useful identities, following \cite{Hanson:1974qy,Barausse_2009}. First, let us assume -- without loss of generality -- that the inverse of Eq. \eqref{conj_spin} takes the form
\begin{align}
    S^{\mu\nu}=\underline e\tdu{A}{\mu}\underline e\tdu{B}{\nu}\rho^{AB}_I(\phi) P_{\phi^I}.\label{inverse_spin}
\end{align}
Plugging this equation in Eq. \eqref{conj_spin}, we get
\begin{align}
    P_{\phi^I}\qty(\delta_{IJ}-\frac{1}{2}\rho^{AB}_I\lambda_{JAB}P_{\phi^J})=0,
\end{align}
which yields
\begin{align}
    \rho^{AB}_I\lambda_{JAB}=2\delta_{IJ}.\label{id_one}
\end{align}
Now, proceeding the other way around and plugging Eq. \eqref{conj_spin} into Eq. \eqref{inverse_spin}, we get
\begin{align}
    S^{\alpha\beta}\underline e\tdu{A}{\mu}\underline e\tdu{B}{\nu}\underline e_{C\alpha}\underline e_{D\beta}\qty(\eta^{AC}\eta^{BD}-\frac{1}{2}\rho^{AB}_I\lambda^{CD}_I)=0,
\end{align}
which yields, since it must be valid for any antisymmetric spin tensor $S^{\alpha\beta}$,
\begin{align}
    \rho_I^{AB}\lambda^{CD}_I=\eta^{AC}\eta^{BD}-\eta^{AD}\eta^{BC}.\label{id_two}
\end{align}
Finally, by differentiating directly the definition Eq. \eqref{conj_spin} of $\lambda_I^{AB}$, we get the third identity
\begin{align}
    \pdv{\lambda^{AB}_I}{\phi^J}-\pdv{\lambda^{AB}_J}{\phi^I}=\lambda^{AC}_I\lambda\tdu{JC}{B}-\lambda^{AC}_J\lambda\tdu{IC}{B}\label{id_three}.
\end{align}

Combining the three identities \eqref{id_one}, \eqref{id_two} and \eqref{id_three} yields, after a bunch of algebra
\begin{align}
    \rho^{AB}_J\pdv{\rho_I^{CD}}{\phi^J}-\rho^{CD}_J\pdv{\rho_I^{AB}}{\phi^J}=-\rho^{AC}_I\eta^{BD}-\rho^{BD}_I\eta^{AC}+\rho^{AD}_I\eta^{BC}+\rho_I^{BC}\eta^{AD},\label{func_id}
\end{align}
which can be recognized as another realization of the Lorentz algebra.

We can now turn to the computation of the Poisson brackets between the components of the spin tensor. Since $S^{AB}=\rho^{AB}_IP_{\phi^I}$, we get
\begin{align}
\begin{split}
    \pb{S^{AB}}{S^{CD}}&=-\rho^{AB}_I\pdv{\rho^{CD}_J}{\phi^K}\pb{\phi^K}{P_{\phi^I}}P_{\phi^J}-\qty(A\leftrightarrow C,B\leftrightarrow D)\\
    &=-\rho^{AB}_I\pdv{\rho^{CD}_J}{\phi^I}P_{\phi^J}-\qty(A\leftrightarrow C,B\leftrightarrow D)\\
    &=\qty(\rho^{AC}_I\eta^{BD}+\rho^{BD}_I\eta^{AC}-\rho^{AD}_I\eta^{BC}-\rho_I^{BC}\eta^{AD})P_{\phi^J}\\&=S^{AC}\eta^{BD}+S^{BD}\eta^{AC}-S^{AD}\eta^{BC}-S^{BC}\eta^{AD}.
\end{split}
\end{align}
The second equality was obtained using the Poisson algebra Eq. \eqref{darboux_spin}, whereas the third one has made use of the identity Eq. \eqref{func_id}. The vanishing of the others brackets between quasi-symplectic coordinates is straightforward to check, and we end up with the algebra announced in Eq. \eqref{quasi_symplectic}.

\subsection{Non-symplectic coordinates}
From the quasi-symplectic Poisson algebra, we can descent to the algebra for the non-symplectic coordinates $\qty{x^\mu,p_\mu,S^{\mu\nu}}$, which appear as the most ``basic'' coordinates for parametrizing the phase space, and which have been used in a wide range of works (see again \cite{Witzany:2018ahb} and references therein). In terms of these variables, the Poisson algebra reads
\begin{subequations}\label{non_cov_brackets}
\boxedeqn{
    \pb{x^\mu}{p_\nu}&=\delta^\mu_\nu,\\
    \pb{p_\mu}{p_\nu}&=-\frac{1}{2}R_{\mu\nu\alpha\beta}S^{\alpha\beta},\label{pb:pp}\\
    \pb{S^{\mu\nu}}{p_\kappa}&=2\Gamma^{[\mu}_{\lambda\kappa}S^{\nu]\lambda},\label{pb:sp}\\
    \pb{S^{\mu\nu}}{S^{\rho\sigma}}&=g^{\mu\rho}S^{\nu\sigma}-g^{\mu\sigma}S^{\nu\rho}+g^{\nu\sigma}S^{\mu\rho}-g^{\nu\rho}S^{\mu\sigma}.\label{pb:ss}
}{Poisson brackets for non-symplectic coordinates}
\end{subequations}
The derivation of this algebra is rather straightforward from the quasi-symplectic algebra Eq. \eqref{quasi_symplectic} and the definitions of the conjugate momenta Eq. \eqref{conj}. It is however quite long and not particularly enlightening, and will therefore not be reproduced here. The only keypoint to mention is that the appearance of the Riemann tensor in Eq. \eqref{pb:pp} comes from the possibility of writing the Riemann tensor solely in terms of the $\omega_{\mu AB}$, see Eqs. (3.4.20) and (3.4.21) of \cite{Wald:1984rg}.

\subsection{Casimir invariants}
From the Poisson algebra Eq. \eqref{quasi_symplectic}, it is easy to prove that the two quantities 
\begin{align}
    \mathcal S^2\triangleq\frac{1}{2}S_{AB}S^{AB},\qquad\mathcal S_*^2\triangleq\frac{1}{8}\epsilon_{ABCD}S^{AB}S^{CD}\label{casimirs}
\end{align}
have vanishing Poisson brackets with any variable $x^\mu$, $P_\mu$ and $S^{AB}$. Their Poisson brackets with any phase space function $F$ are therefore vanishing, 
\begin{align}
    \pb{\mathcal S^2}{F}=\pb{\mathcal S^2_*}{F}=0.
\end{align}
Taking $F=H$ implies that they are conserved for the evolution of any dynamical system of Hamiltonian $H$ evolving on $\mathcal M$ endowed with the set of Poisson brackets Eq. \eqref{quasi_symplectic}.

\section{Symplectic coordinates}\label{sec:symplectic}

We now turn to the construction of \textit{fully} symplectic coordinates on the phase space. Since the quasi-symplectic coordinates are already symplectic for the position sector of the phase space, only the spin sector remains to be discussed. This sector is six dimensional, since it is parametrized by $S^{AB}$. Given the existence of the two Casimir invariants \eqref{casimirs}, we are only missing two pairs of canonically conjugate coordinates for fully describing the spin sector.

There exist currently two approaches for building such symplectic coordinates for the spin sector: the first one, due to W. Witzany, J. Steinhoff and G. Lukes-Gerakopoulos \cite{Witzany:2018ahb} amounts to parameterize in a clever way the Lorentz matrix \eqref{tfo_tetrads} encoding the spin degrees of freedom of the system. The procedure for deriving these symplectic coordinates will be discussed in details below, and is valid provided that a covariant spin supplementary condition of the form \eqref{covariant_ssc} has been enforced. The second path has been recently introduced by P. Ramond \cite{Ramond:2022vhj}, and consists into explicitly solving the differential equations constraining the spin sector variables to be symplectic. He found variables which are linear combinations of Witzany's ones, but his derivation  is more generic, since it is totally independent from any spin supplementary condition.

\subsection{Decomposition of the Lorentz matrix: Witzany \textit{et al.} coordinates}

In this section, we will derive canonical coordinates for the spin sector of the problem following Witzany \textit{et al.} \cite{Witzany:2018ahb}. \textit{We temporarily introduce again the underline notation for the background tetrad indices}, since both background and object tetrads will be involved in the discussion.

This construction restricts ourselves to non-trivial spin tensors subjected to an arbitrary covariant SSC of the form Eq. \eqref{covariant_ssc}, thus possessing one timelike degenerate direction and one spatial non-degenerate one. This restriction on the form of the spin tensor allows to tune the object tetrad so that
\begin{align}
    S_{AB}=\mqty(0&0&0&0\\0&0&\mathcal S&0\\0&-\mathcal S&0&0\\0&0&0&0).\label{tetrad_spin_tensor}
\end{align}

We will now construct canonical coordinates in a very pedestrian way: dynamical variables $(\phi_i,\chi^i)$ will be canonical coordinates for the spin sector of the problem provided that the spin term $\frac{1}{2}S_{\mu\nu}\Omega^{\mu\nu}$ present in the action can be written as
\begin{align}
    \frac{1}{2}S_{\mu\nu}\Omega^{\mu\nu}\stackrel{!}{=}\phi_i\dot\chi^i
\end{align}
with $\dot{ }\triangleq\dv{\lambda}$.
We will now exhibit such coordinates. First, remark that a straightforward computation yields
\begin{equation}
    \frac{1}{2}S_{\mu\nu}\Omega^{\mu\nu}=\frac{1}{2}S_{\underline A\,\underline B}\Omega^{\underline A\,\underline B}=-\frac{1}{2}S_{AB}\Lambda\tud{A}{\underline A}\dv{\Lambda^{\underline AB}}{\lambda}.\label{spin_term}
\end{equation}
Said with words, the dynamics of the spin sector of the problem can be entirely expressed in terms of the Lorentz matrices linking the body and the background tetrads. The second step is to explicit such a transformation in terms of its six Lorentz parameters. To do it in a clever way, let us notice that the spin tensor \eqref{tetrad_spin_tensor} is invariant with respect to rotations and boosts in the $z$ direction. Therefore, two out of the six Lorentz parameters will play the role of gauge degrees of freedom, whereas the four other will be physically relevant. To take advantage of this situation, let us decompose our generic Lorentz transformation as follows \cite{Witzany:2018ahb}:
\begin{align}
    \Lambda=R(\alpha,\boldsymbol n_z)B(v_z,\boldsymbol n_z)B(u,\boldsymbol n_\psi)R(-\vartheta,\boldsymbol n_\phi).\label{lorentz_tfo}
\end{align}
Here, the directions of the two rightmost boost and rotation are chosen to lie in the $x-y$ plane,
\begin{align}
    \boldsymbol n_\psi\triangleq\mqty(\sin\psi&\cos\psi&0),\qquad\boldsymbol n_\phi\triangleq\mqty(\sin\phi&\cos\phi&0).
\end{align}
All the parameters of the Lorentz transformations $\alpha,v_z,u,\psi,\theta,\phi$ are considered as depending on the time parameter $\lambda$. The explicit rotation and boost matrices are respectively given by
\begin{align}
    R(\theta,\boldsymbol u)=
    {\scriptsize
    \mqty(1&0&0&0\\
    0 & \cos\vartheta+u_x^2(1-\cos\vartheta) & u_x u_y(1-\cos\vartheta)-u_z\sin\vartheta & u_xu_z(1-\cos\vartheta)+u_y\sin\vartheta\\
    0 & u_yu_x(1-\cos\vartheta)+u_z\sin\vartheta & \cos\vartheta+u_y^2(1-\cos\vartheta) & u_yu_z(1-\cos\vartheta)-u_x\sin\vartheta\\
    0 & u_zu_x(1-\cos\vartheta)-u_y\sin\vartheta & u_zu_y(1-\cos\vartheta)+u_x\sin\vartheta & \cos\vartheta+u_z^2(1-\cos\vartheta))}
\end{align}
and
\begin{align}
    B(v,\boldsymbol u)=\mqty(\gamma&-\gamma v u_x & -\gamma v u_y & -\gamma vu_z\\
    -\gamma vu_x & 1+(\gamma-1)u_x^2 & (\gamma-1) u_xu_y & (\gamma-1)u_xu_z\\
    -\gamma vu_y & (\gamma-1) u_yu_x & 1+(\gamma-1) u_y^2 & (\gamma-1)u_yu_z\\
    -\gamma vu_z & (\gamma-1) u_zu_x & (\gamma-1)u_zu_y & 1+(\gamma-1)u_z^2)
\end{align}
with $\boldsymbol u=(u_x,u_y,u_z)$ a normalized spatial vector ($u_x^2+u_y^2+u_z^2=1$) and $\gamma=\frac{1}{\sqrt{1-v^2}}$.

Using the form of the spin tensor \eqref{tetrad_spin_tensor}, Eq. \eqref{spin_term} becomes
\begin{align}
    \frac{1}{2}S_{\mu\nu}\Omega^{\mu\nu}=-\mathcal S\Lambda\tud{1}{\underline A}\dv{\Lambda^{\underline A 2}}{\lambda}.
\end{align}
Plugging the explicit Lorentz transformation \eqref{lorentz_tfo} in this expression leads to
\begin{align}
    \frac{1}{2}S_{\mu\nu}\Omega^{\mu\nu}=\mathcal S\qty[\dot\alpha+\frac{\cos\vartheta-1}{\sqrt{1-u^2}}\dot\phi+\qty(\frac{1}{\sqrt{1-u^2}}-1)\dot\psi].
\end{align}
The first term being a total derivative, it can be dropped out of the action. By inspection of this expression, we therefore find that pairs of canonical coordinates covering the spin sector are given by $(\phi,A)$ and $(\psi,B)$, where the conjugate momenta $A$ and $B$ are respectively given by
\begin{align}
    A&=\mathcal S\frac{\cos\vartheta-1}{\sqrt{1-u^2}},\qquad
    B=\mathcal S\qty(\frac{1}{\sqrt{1-u^2}}-1).
\end{align}
As expected, the expression above do not depend upon the gauge degrees of freedom $\alpha$ and $v_z$. In terms of the canonical coordinates, the components of the spin tensor in the background tetrad $S_{\underline A\,\underline B}=\Lambda\tud{A}{\underline A}\Lambda\tud{B}{\underline B} S_{AB}$ read
\begin{subequations}\label{background_canonical}
\boxedeqn{
    S_{\underline 0\,\underline 1}&=-\mathcal D\qty[A\cos\qty(2\phi-\psi)+\qty(A+2B+2\mathcal S)\cos\psi],\\
    S_{\underline 0\,\underline 2}&=\mathcal D\qty[A\sin\qty(2\phi-\psi)+\qty(A+2B+2\mathcal S)\sin\psi],\\
    S_{\underline 0\,\underline 3}&=2\mathcal D\mathcal E\cos\qty(\phi-\psi),\\
    S_{\underline 1\,\underline 2}&=A+B+\mathcal S,\\
    S_{\underline 1\,\underline 3}&=\mathcal E\sin\phi,\\
    S_{\underline 2\,\underline 3}&=\mathcal E\cos\phi.
}{Spin tensor in Witzany's symplectic coordinates}
\end{subequations}
where
\begin{align}
    \mathcal D\triangleq-\frac{\sqrt{B\qty(B+2\mathcal S)}}{2\qty(B+\mathcal S)},\qquad\mathcal E\triangleq\sqrt{-A\qty(A+2B+2\mathcal S)}.
\end{align}
The relations \eqref{background_canonical} can be inverted to obtain $A,B,\phi$ and $\psi$ as a function of the background tetrad components of the spin tensor. Using these relations, one can check explicitly that we have indeed build canonical coordinates, since the only non-vanishing Poisson brackets are
\begin{align}
    \pb{x^\mu}{P_\nu}=\delta^\mu_\nu,\qquad\pb{\phi}{A}=\pb{\psi}{B}=1.
\end{align}
Notice that this parametrization can be checked to consistently imply $\mathcal S^2_*=\mathbf S\cdot\mathbf D=\mathbf 0$, which shall be obeyd since a covariant SSC is assumed to hold.

\subsection{Ramond's coordinates}
In 2022, P. Ramond showed \cite{Ramond:2022vhj} that the spin tensor can be replaced (without any assumption regarding the existence of a spin supplementary condition) by the two Casimir invariants \eqref{casimirs} together with two pairs of symplectic coordinates $(\sigma,\pi_\sigma)$ and $(\zeta,\pi_\zeta)$. In terms of these coordinates, the background tetrad components of the spin tensor are \cite{Ramond:2022vhj}
\begin{subequations}
\boxedeqn{
    S_{\underline 0\,\underline 1}&=Y\pi_\sigma\sin\zeta\cos\sigma+Y\pi_\zeta\cos\zeta\sin\sigma+XZ\cos\sigma,\\
    S_{\underline 0\,\underline 2}&=Y\pi_\sigma\sin\zeta\sin\sigma-Y\pi_\zeta\cos\zeta\cos\sigma+XZ\sin\sigma,\\
    S_{\underline 0\,\underline 3}&=Z\pi_\sigma-XY\sin\zeta,\\
    S_{\underline 1\,\underline 2}&=\pi_\sigma,\\
    S_{\underline 1\,\underline 3}&=X\sin\sigma,\\
    S_{\underline 2\,\underline 3}&=X\cos\sigma,
}{Spin tensor in Ramond's symplectic coordinates}
\end{subequations}
where
\begin{align}
    X\triangleq\sqrt{\pi_\zeta^2-\pi_\sigma^2},\quad Y\triangleq\sqrt{1-\frac{\mathcal S^2}{\pi_\zeta^2}-\frac{\mathcal S^4_*}{\pi_\zeta^2}},\quad Z\triangleq\frac{\mathcal S^2_*}{\pi^2_\zeta}.
\end{align}
They are related to Witzany coordinates by a simple affine map
\begin{align}
    \qty(\sigma,\pi_\sigma)=\qty(\phi,A+B+\mathcal S),\qquad \qty(\zeta,\pi_\zeta)=\qty(\psi-\phi+\frac{\pi}{2},B+\mathcal S)
\end{align}
and carry a nice physical interpretation in terms of respective orientations of the 3-vectors $\mathbf S$ and $\mathbf D$ and the spatial legs of the background tetrad frame, see \cite{Ramond:2022vhj} for explicit details.

\section{Constraints}\label{sec:csts}

\begin{table}[t]
    \centering
    \renewcommand{\arraystretch}{1.5}
    \begin{tabular}{|c|c|}\hline
       \textsc{Value of} $\Lambda\tdu{0}{A}$  & \textsc{SSC} \\\hline
        $\hat p^A$ & TD\\
        $2\hat p^0\delta^A_0-\hat p^A$ & CP\\
        $\delta^A_0$ & NW\\\hline
    \end{tabular}
    \caption{Various choices of $\Lambda\tdu{0}{A}$ allowing to enforce the main SSC used in the literature at the level of the constrained action.}
    \label{tab:gauges}
\end{table}

This section will briefly discuss the constraints that shall be taken into account when turning to the Hamiltonian description, and how they can be implemented at the level of the action. This is a rather technical subject, based upon results and ideas introduced in the context of geodesic motion in Chapter \ref{chap:hamilton_kerr}. We only review here the main ideas discussed in the literature. Technical exposition of the subject may be found in \cite{Steinhoff_2010,Steinhoff:2014kwa,Levi:2015msa,Vines:2016unv}.

Two kinds of constraints shall be enforced to ensure the independence of the conjugate momenta of Hamiltonian formulation: the first one comes from the existence of redundant degrees of freedom in the spin sector and is directly related to spin supplementary conditions, as discussed in Section \ref{sec:ssc:lagrange}. It was first introduced in the case of a flat background in \cite{Steinhoff:2014kwa}, and subsequently expanded to generic curved background, see \cite{Levi:2015msa,Vines:2016unv}. Moreover, the fact that the generic action described by Eq. \eqref{generic_lagragian} is reparametrization invariant leads to the existence of a mass-shell constraint of the same type as in the geodesic case, as already pointed out in \cite{Hanson:1974qy}. Altogether, these constraints can be implemented at the level of the action by considering
\begin{align}
    S= \int\dd\lambda\qty( p_\mu v^\mu+\frac{1}{2} S_{\mu\nu}\Omega^{\mu\nu}- H_\text{D})
\end{align}
with
\begin{align}
    H_\text{D}=\chi^\mu  \mathcal C_\mu +\frac{\lambda}{2}\mathcal H
\end{align}
and
\begin{align}
\mathcal C_\mu&\triangleq S_{\mu\nu}\qty(\hat{{p}}^\nu+\Lambda\tdu{0}{\nu}),\qquad\mathcal H\triangleq p^2+\mu^2.
\end{align}
Here, $\chi^\mu$ and $\lambda$ play the role of Lagrange multipliers enforcing the constraints
\begin{align}
    \mathcal C_\mu\approx0,\qquad \mathcal H\approx 0\label{constraints}
\end{align}
to hold on-shell. 

An important point to notice is that $\mathcal C_\mu\approx 0$ is \textit{not} the realization of any specific SSC. Actually, $\Lambda\tdu{0}{A}$ plays the role of a gauge field, whose fixing corresponds to a specific choice of a particular covariant SSC. Choosing a specific SSC thus correspond to a gauge fixation. Examples of values of $\Lambda\tdu{0}{A}$ leading to some of the most common SSCs are given in Table \ref{tab:gauges}. 

When going beyond the linear order in the spin magnitude $\mathcal S$, one of the main differences arising with respect to geodesic motion is that the mass $\mu^2$ appearing in the mass-shell constraint $\mathcal H\approx 0$ is not a constant anymore, as was discussed in Section \ref{sec:MPT}. For $\mu^2$ depending in $x^\mu$ only through the metric and the Riemann tensor, we get back the quadrupole approximation of Chapter \ref{chap:quadrupole}, with this time \cite{Vines:2016unv}
\begin{align}
    J^{\mu\nu\rho\sigma}=\frac{3 p_\alpha v^\alpha}{p^2}\pdv{\mu^2}{R_{\mu\nu\rho\sigma}}.
\end{align}

In the case of the spin-induced quadrupole together with the TD SSC, this mass-shell constraint can be rewritten in a more enlightening way. As discussed in Section \ref{sec:MPT} of the present thesis, there exists a mass-like quantity $\tilde \mu$ which is conserved in that case. From its definition Eq. \eqref{mass_like}, it is straightforward to see that one can perform the rewriting 
\begin{align}
    \tilde \mu^2&=-\qty(g^{\mu\nu}-\frac{\kappa}{\mu^2}\Theta^{\alpha\beta} R\tudu{\mu}{\alpha\beta}{\nu})p_\mu p_\nu+\mathcal O\qty(\mathcal S^3).\label{tilde_mu_metric}
\end{align}
Therefore, at $\os{2}$, one can replace the mass-shell condition defined above by
\begin{align}
    \tilde g^{\mu\nu}p_\mu p_\nu\approx-\tilde\mu^2,
\end{align}
where we defined
\begin{align}
    \tilde g_{\mu\nu}\triangleq g_{\mu\nu}-\frac{\kappa}{\mu^2}\Theta^{\alpha\beta} R_{\mu\alpha\beta\nu}.
\end{align}

Several formal developments regarding the constrained formulation have been worked out in the literature. In the case of a flat background, J. Steinhoff showed that the constraints \eqref{constraints} were first class among themselves for a wide class of functional dependence of $\mu^2$ \cite{Steinhoff:2014kwa}.
The relation between shifts of the representative worldline and the enforcement of covariant spin supplementary conditions viewed as gauge fixations was explored in \cite{Steinhoff:2014kwa,Vines:2016unv}. More recently, P. Ramond discussed in quite details the first class nature of TD condition for a generic curved background up to linear order in the spin expansion \cite{Ramond:2022vhj}.

In the following, we will remain at a more pedestrian level. The next chapter will be devoted to the construction of covariant Hamiltonians for extended test bodies endowed with a spin-induced quadrupole. We will work under the TD condition and explicitly check that the proposed Hamiltonians will preserve it under time evolution. A more careful analysis of the exact status of constraints at second order in the spin expansion remains to be performed. This task will be left for subsequent works.

\chapter{Generally covariant Hamiltonians}\label{chap:covariant_H}

\vspace{\stretch{1}}

This chapter will be devoted to the construction of covariant Hamiltonians describing the motion of extended test bodies over the 14-dimensional phase space introduced in the previous chapter. We will work up to second order in the spin magnitude included and restricting to the spin-induced quadrupole term, under the Tulczyjew-Dixon spin supplementary condition. The Hamiltonians that we will derive will therefore drive the motion only on the constraint surface where this condition is enforced.

The most rigorous approach for deriving an Hamiltonian would have been to apply a Legendre transformation to the Lagrangian Eq. \eqref{generic_lagragian}, while taking care of the associated constraints. Since this Lagrangian is reparametrization invariant, one can convince ourselves that we would have ended up with an identically vanishing Hamiltonian as it was the case for the reparametrization invariant formulation of geodesic motion. The associated Hamiltonian action principle would have therefore be only composed of the constraints, which would have been rather not well-suited to our subsequent purposes.

The construction of the Hamiltonians presented in this chapter will therefore follow another path, based on the more heuristic approach of W. Witzany \textit{et al.} \cite{Witzany:2018ahb}. The original derivation was only aimed at reproducing pole-dipole MPD equations, including all orders in the spin magnitude. However, as argued in \cite{Ramond:2022vhj}, such an exact treatment of the pole-dipole is physically inconsistent with respect to the multipole expansion of the test body structure, since the equations of motion expanded at order $\os{n}$ must account for the presence of the $2^n$-pole moment. The treatment of \cite{Witzany:2018ahb} is therefore physically consistent when linearized in the spin magnitude. Otherwise, the Hamiltonian must account for the presence of higher order multipole moments. The treatment of the spin-induced quadrupole will be the task that will be carried out in this chapter.

The basic idea of Witzany \textit{et al.} is to notice that the TD momentum-velocity relation Eq. \eqref{v_of_p} was actually providing us with the partial derivative of the (unknown) Hamiltonian with respect to the linear momentum. If one manage to integrate this expression, we end up with a candidate expression for being the Hamiltonian driving the motion on the constraint surface. However, there is \textit{a priori} no guarantee that the guessed Hamiltonian will reproduce correctly the MPD equations and preserves the SSC. This shall be checked \textit{a posteriori}. Section \ref{sec:H:conditions} will derive computationally practical sufficient conditions that shall be obeyed for the validity of the Hamiltonian to be granted. Section \ref{sec:H:constr} will be devoted to the derivation of a candidate Hamiltonian, whose validity will be checked in Section \ref{sec:H:check}. The linearized Hamiltonian is finally discussed in Section \ref{sec:H:linear}, and we end the discussion with a couple of remarks. Notice that the procedure used in this chapter is quite generic, and can be extended to derive covariant Hamiltonians valid for other covariant spin supplementary conditions and/or including higher order multipole moments.

\section{Hamilton and MPD equations}\label{sec:H:conditions}
The Hamilton equations for some generic Hamiltonian $H(x^\mu,p_\mu,S^{\mu\nu})$ are directly obtained from the Poisson brackets algebra Eq. \eqref{non_cov_brackets}, and are given by Eq. (36) of \cite{Witzany:2018ahb}:
\begin{subequations}\label{hamilton_spin}
\begin{align}
    &\dv{x^\mu}{\lambda}=\pdv{H}{p_\mu},\label{hamilton_1}\\
    &\dv{p_\nu}{\lambda}+\pdv{H}{x^\nu}-\pdv{H}{S^{\mu\kappa}}\qty(\Gamma^\mu_{\nu\gamma}S^{\gamma\kappa}+\Gamma^\kappa_{\nu\gamma}S^{\mu\gamma})=-\frac{1}{2}R_{\nu\omega\lambda\chi}\pdv{H}{p_\omega}S^{\lambda\chi},\\
    &\dv{S^{\gamma\kappa}}{\lambda}+\Gamma^\gamma_{\nu\lambda}S^{\lambda\kappa}+\pdv{H}{p_\nu} \qty(\Gamma^\kappa_{\nu\lambda}S^{\gamma\lambda})\nonumber\\
    &\quad=\pdv{H}{S^{\mu\nu}}\qty(g^{\gamma\mu}S^{\kappa\nu}-g^{\gamma\nu}S^{\kappa\mu}+g^{\kappa\nu}S^{\gamma\mu}-g^{\kappa\mu}S^{\gamma\nu}).
\end{align}
\end{subequations}
Here, $\lambda$ stands for an arbitrary time parameter. A natural question to be asked is then: for which choice(s) of the Hamiltonian $H$ will these equation reproduce the MPD equations?  Comparing Hamilton equations \eqref{hamilton_spin} with MPD equations \eqref{MPD_1} and \eqref{MPD_2}, one finds that the former will reduce to the latter provided that the following conditions hold:
\begin{subequations}\label{MPD_conditions}
\begin{align} 
    &\pdv{H}{x^\nu}-\pdv{H}{S^{\mu\kappa}}\qty(\Gamma^\mu_{\nu\gamma}S^{\gamma\kappa}+\Gamma^\kappa_{\nu\gamma}S^{\mu\gamma})\approx -\Gamma^\alpha_{\beta\nu}\pdv{H}{p_\beta}p_\alpha-\mathcal F^\nu,\\
    &\pdv{H}{S^{\mu\nu}}\qty(g^{\gamma\mu}S^{\kappa\nu}-g^{\gamma\nu}S^{\kappa\mu}+g^{\kappa\nu}S^{\gamma\mu}-g^{\kappa\mu}S^{\gamma\nu})\approx p^\gamma\pdv{H}{p_\kappa}-p^\kappa\pdv{H}{p_\gamma}+\mathcal{L}^{\gamma\kappa},\\
    &\pb{S^{\mu\nu}p_\nu}{H}\approx 0.
\end{align}
\end{subequations}
The last condition originates from the fact that we will seek for an Hamiltonian valid for the TD SSC, which must be preserved under time evolution of the system. Recall that the weak-equality symbol ``$\approx$'' present in the three equalities above is defined as follows:

\begin{definition}
Two  quantities $F$ and $G$ valued on the phase space are said to be equal ``on-shell'' (which is denoted by the weak equality $F\approx G$) when they are equal on the constraint surface, \textit{i.e.} the submanifold of the phase space where the constraints present in the action principle have been enforced. In the present context,
\begin{itemize}
    \item the TD spin supplementary condition $S^{\mu\nu}p_\nu\approx0$ holds;
    \item the mass shell condition $p_\mu p_\nu g^{\mu\nu}+\mu^2\approx 0$ (or equivalently $p_\mu p_\nu \tilde g^{\mu\nu}+\tilde\mu^2\approx 0$) is enforced.
\end{itemize}
\end{definition}
Notice that here, we do not require $\mu$ to be a constant. As discussed in the previous part of this thesis, the invariant mass $\mu$ is conserved for the pole-dipole linearized MPD equation, but not for the spin-induced quadrupole MPD equations. In both cases, it amounts to enforce an additional algebraic constraint between the dynamical variables.

\section{Construction of the Hamiltonian}\label{sec:H:constr}
We will now build an Hamiltonian generating the MPD equations and including the spin-induced quadrupole term. From now on, we set the time parameter $\lambda$ to the proper time of the body $\tau$. Our starting point is the relation between the four-velocity and the linear momentum for the TD condition:
\begin{align}
    v^\mu\approx\frac{1}{\mu}\qty[p^\mu+\qty(D^{\mu\nu}-\frac{1}{\mu}\mathcal L^{\mu\nu})p_\nu]+\mathcal O\qty(\mathcal S^3).\label{TD_velocity}
\end{align}
We recall the identities
\begin{subequations}
\begin{align}
    D\tud{\mu}{\nu}&\approx\frac{1}{2\mu^2}S^{\mu\lambda}R_{\lambda\nu\rho\sigma}S^{\rho\sigma},\\
    \mathcal L^{\mu\nu}p_\nu&\approx\frac{\kappa}{\mu}\Pi^{\mu\nu}R_{\nu\alpha\beta\gamma}\Theta^{\alpha\beta}p^\gamma+\mathcal O\qty(\mathcal S^4).
\end{align}
\end{subequations}
Using Eq. \eqref{hamilton_1} and the relations above, the relation \eqref{TD_velocity} takes the explicit form
\begin{align}
    \pdv{H_0}{p_\mu}&\approx\frac{1}{\mu}p^\mu+\frac{1}{2\mu^3}S^{\mu\lambda}R_{\lambda\nu\alpha\beta}S^{\alpha\beta}p^\nu\nonumber\\
    &\quad-\frac{\kappa}{\mu^3}R\tud{\mu}{\alpha\beta\gamma}\Theta^{\alpha\beta} p^\gamma-\frac{\kappa}{\mu^5}p^\mu p^\nu R_{\nu\alpha\beta\gamma}\Theta^{\alpha\beta}p^\gamma
    +\mathcal O\qty(\mathcal S^3).\label{HQ_deriv_p}
\end{align}
We have here denoted the Hamiltonian $H_0$ because, as we will see later, this construction will lead to an Hamiltonian that is vanishing on-shell. Now, still following \cite{Witzany:2018ahb}, for any arbitrary functions $G^\mu$ and $F$ of the dynamical variables $x^\mu$, $p_\mu$ and $S^{\mu\nu}$, the following relations hold \textit{on-shell}:
\begin{subequations}
\begin{align}
    \pdv{p_\mu}\qty[\frac{1}{2}\qty(g^{\alpha\beta}p_\alpha p_\beta+\mu^2)F]&\approx F p^\mu,\\
    \pdv{p_\mu}\qty[G_\alpha p_\beta S^{\alpha\beta}]&\approx G_\alpha S^{\alpha\mu},\\
    \pdv{p_\mu}\qty[S^{\nu\lambda}R_{\lambda\rho\alpha\beta}S^{\alpha\beta}p^\rho]&=S^{\nu\lambda}R\tdud{\lambda}{\mu}{\alpha\beta}S^{\alpha\beta},\\
    \pdv{p_\mu}\qty[\frac{1}{2}p_\nu p_\rho R\tudu{\nu}{\alpha\beta}{\rho}\Theta^{\alpha\beta}]&\approx R\tud{\mu}{\alpha\beta\rho}\Theta^{\alpha\beta}p^\rho.
\end{align}
\end{subequations}
Making use of these identities and treating $\mu$ as being independent of the dynamical variables, one can infer the following form for the Hamiltonian:
\begin{align}
    \begin{split}
    H_0&=\frac{1}{2\mu}\qty[\qty(g^{\mu\nu}+2D^{\mu\nu}-\frac{\kappa}{\mu^2}\qty(\frac{g^{\rho\sigma}p_\rho p_\sigma}{\mu^2}+2)R\tudu{\mu}{\alpha\beta}{\nu}\Theta^{\alpha\beta})p_\mu p_\nu+\mu^2]\\
    &\quad+\mathcal O\qty(\mathcal S^3).\label{quadrupole_hamiltonian}
    \end{split}
\end{align}
Actually, this expression shall in principle be supplemented by an integration constant depending upon the variables $x^\mu$ and $S^{\mu\nu}$. However, this constant can be set to zero since the Hamiltonian \eqref{quadrupole_hamiltonian} already satisfies the conditions for generating the MPD equations, as will be checked in the next section. 

The final step is to rewrite our candidate Hamiltonian in a way that allows to replace the dependence in the non-constant dynamical mass $\mu$ by the constant shifted mass-like quantity $\tilde \mu$. This will allow huge simplifications while checking the equations of motion. A conventional Taylor expansion can be used to show that
\begin{align}
\frac{1}{\mu}=\frac{1}{\tilde\mu}\qty(1+\frac{\kappa}{2\tilde\mu^2}p^\alpha\Theta^{\beta\gamma}p^\delta R_{\alpha\beta\gamma\delta})+\mathcal O\qty(\mathcal S^3).   
\end{align}
Using this expression as well as Eq. \eqref{tilde_mu_metric}, we end up with
\begin{align}
    \begin{split}
    H_0&=\frac{1}{2\tilde\mu}\qty[\qty(\tilde g^{\mu\nu}+2D^{\mu\nu}-\frac{\kappa}{2\tilde\mu^2}\qty(\frac{g^{\rho\sigma} p_\rho p_\sigma}{\tilde\mu^2}+1)R\tudu{\mu}{\alpha\beta}{\nu}\Theta^{\alpha\beta})p_\mu p_\nu+\tilde\mu^2]\\
    &\quad+\mathcal O\qty(\mathcal S^3).\label{quadratic_hamiltonian_tilde}
    \end{split}
\end{align}
Recalling that $p_\mu D^{\mu\nu}\approx 0$, we directly see that this expression does vanish on-shell,
\begin{align*}
    H_0\approx 0+\mathcal O\qty(\mathcal S^3).
\end{align*}
Notice that it is possible to obtain an Hamiltonian which does not vanish on-shell by shifting $H_0$ by a constant quantity. This operation leaves the dynamics of the system invariant, since Hamilton's equations only depend on the derivatives of the Hamiltonian with respect to the dynamical variables. Choosing the value of the shift to be equal to $-\tilde \mu/2$, we end up with a new Hamiltonian $H\triangleq H_0-\frac{\tilde\mu}{2}$ equal to
\begin{align}
\boxed{
    H=\frac{1}{2\tilde\mu}\qty[\qty(\tilde g^{\mu\nu}+2D^{\mu\nu}-\frac{\kappa}{2\tilde\mu^2}\qty(\frac{g^{\rho\sigma} p_\rho p_\sigma}{\tilde\mu^2}+1)R\tudu{\mu}{\alpha\beta}{\nu}\Theta^{\alpha\beta})p_\mu p_\nu]+\mathcal O\qty(\mathcal S^3).
    }\label{HQ_offshell}
\end{align}
On-shell, this expression simply reduces to
\begin{align}
    \boxed{
    H\approx\frac{1}{2\tilde \mu}\tilde g^{\mu\nu}p_\mu p_\nu\approx-\frac{\tilde\mu}{2}.
    }\label{HQ_onshell}
\end{align}
The Hamiltonian then reduces to (minus one half of) the conserved mass $\tilde\mu$, which is the exact same situation than in the geodesic case (where we had $\bar H\approx-\frac{\mu}{2}$, $\mu$ being conserved along the motion), as was discussed in Chapter \ref{chap:hamilton_kerr}.

\section{Check of Hamilton equations}\label{sec:H:check}
We will now plug our candidate Hamiltonians in Eqs. \eqref{MPD_conditions}, which will enable to prove that they will indeed generate the MPD equations. Actually, since the terms of the equations that are (in)dependent of $\kappa$ shall vanish independently, one can split these equations in two parts, one containing the terms proportional to $\kappa$ (originating from the quadrupole terms of the EOMs) and the other the ones that are independent of it (originating from the pole dipole part of the EOMs). It is therefore relevant to split the Hamiltonian as
\begin{align}
    H=H_\textsc{PD}+\kappa H_\textsc{Q},\label{H_decompos}
\end{align}
with $H_\textsc{PD}=H\eval_{\kappa=0}$ and  $H_\textsc{Q}=\pdv{\kappa} H\eval_{\kappa=0}$. We find
\begin{subequations}
    \begin{align}
    H_\textsc{PD}&=\frac{1}{2\tilde \mu}\qty(g^{\mu\nu}+2D^{\mu\nu})p_{\mu}p_\nu,\\
    H_\textsc{Q}&=-\frac{1}{4\tilde \mu^3}\qty(\frac{g^{\rho\sigma}p_\rho p_\sigma}{\tilde \mu^2}+3)R\tudu{\mu}{\alpha\beta}{\nu}\Theta^{\alpha\beta}p_\mu p_\nu
\end{align}
\end{subequations}
and the verification of Eqs. \eqref{MPD_conditions} will be valid for both Hamiltonians. Using decomposition \eqref{H_decompos}, 
Eqs. \eqref{MPD_conditions} split in two sets of three equations, one corresponding to the pole-dipole part and one to the quadrupole part:
\begin{subequations}
    \begin{align}
    &\pdv{H_\textsc{PD}}{x^\nu}+2\pdv{H_\textsc{PD}}{S^{\mu\rho}}\Gamma^{[\mu}_{\nu\lambda}S^{\rho]\lambda}\approx-\Gamma^\mu_{\nu\rho}\pdv{H_\textsc{PD}}{p_\rho}p_\mu,\label{PD1}\\
    &\pdv{H_\textsc{PD}}{S^{\mu\nu}}\qty[g^{\rho\mu}S^{\sigma\nu}+\qty(\text{usual permutations})]\approx p^\rho\pdv{H_\textsc{PD}}{p_\sigma}-\qty(\rho\leftrightarrow\sigma),\label{PD2}\\
    &2S^{\mu[\rho}p^{\sigma]}\pdv{H_\textsc{PD}}{S^{\rho\sigma}}-\Gamma^\nu_{\lambda\kappa}S^{\mu\lambda}\pdv{H_\textsc{PD}}{p_\kappa}p_\nu-S^{\mu\nu}\pdv{H_\textsc{PD}}{x^\nu}-\tilde\mu^2 D\tud{\mu}{\nu}\pdv{H_\textsc{PD}}{p_\nu}\approx 0,\label{PD3}\\
    &\pdv{H_\textsc{Q}}{x^\nu}+2\pdv{H_\textsc{Q}}{S^{\mu\rho}}\Gamma^{[\mu}_{\nu\lambda}S^{\rho]\lambda}\approx-\Gamma^\mu_{\nu\rho}\pdv{H_\textsc{Q}}{p_\rho}p_\mu-\mathcal F_\nu,\label{Q1}\\
    &\pdv{H_\textsc{Q}}{S^{\mu\nu}}\qty[g^{\rho\mu}S^{\sigma\nu}+\qty(\text{usual permutations})]\approx \qty[p^\rho\pdv{H_\textsc{Q}}{p_\sigma}-\qty(\rho\leftrightarrow\sigma)]+\mathcal L^{\rho\sigma},\label{Q2}\\
    &2S^{\mu[\rho}p^{\sigma]}\pdv{H_\textsc{Q}}{S^{\rho\sigma}}-\Gamma^\nu_{\lambda\kappa}S^{\mu\lambda}\pdv{H_\textsc{Q}}{p_\kappa}p_\nu-S^{\mu\nu}\pdv{H_\textsc{Q}}{x^\nu}-\tilde\mu^2 D\tud{\mu}{\nu}\pdv{H_\textsc{Q}}{p_\nu}\approx 0.\label{Q3}
\end{align}
\end{subequations}
We will now check these equations one by one.

\paragraph{Pole-dipole equations.}
Let us begin with the pole dipole equations, and compute the relevant derivatives of $H_\textsc{PD}$. Two useful preliminary identities are
\begin{align}
    \partial_\nu g_{\alpha\beta}=2\Gamma^\lambda_{\nu(\alpha}g_{\beta)\lambda},\qquad\partial_\nu g^{\alpha\beta}=-2\Gamma^{(\alpha}_{\nu\lambda}g^{\beta)\lambda}.
\end{align}
These equations are straightforward consequences of the metric compatibility condition $\nabla_\nu g_{\alpha\beta}=0$. Consequently,
\begin{subequations}
    \begin{align}
    2\tilde \mu\pdv{H_\textsc{PD}}{x^\nu}&=\qty(\partial_\nu g^{\alpha\beta}+\frac{1}{\tilde\mu^2}S^{\alpha\lambda}\partial_\nu R\tdud{\lambda}{\beta}{\gamma\delta}S^{\gamma\delta})p_\alpha p_\beta\nonumber\\
    &\approx \partial_\nu g^{\alpha\beta}p_\alpha p_\beta
    =-2\Gamma^{\alpha}_{\nu\beta}p_\alpha p^\beta,\\
    2\tilde\mu\pdv{H_\textsc{PD}}{p_\rho}&=2\qty(g^{\rho\nu}p_\nu+D^{\rho\nu}p_\nu+p_\mu D^{\mu\rho})\nonumber\\
    &\approx 2\qty(p^\rho+D^{\rho\nu}p_\nu),\\
    2\tilde\mu\pdv{H_\textsc{PD}}{S^{\mu\rho}}&\approx\frac{1}{\tilde\mu^2}p_\mu R_{\rho\sigma\gamma\delta}S^{\gamma\delta}p^\sigma.
\end{align}
\end{subequations}
These identities can now be used to check Eqs. \eqref{PD1}-\eqref{PD3}:
\begin{itemize}
    \item \textit{Eq. \eqref{PD1}:} one has
    \begin{align}
        2\tilde\mu\,\qty(\text{LHS})&\approx-2\Gamma^\alpha_{\nu\beta}p_\alpha p^\beta -2p_\mu \Gamma^\mu_{\nu\lambda}D^{\lambda\sigma}p_\sigma\approx 2\tilde\mu\,\qty(\text{RHS});
    \end{align}
    \item \textit{Eq. \eqref{PD2}:} we obtain
    \begin{align}
        2\tilde\mu\,\qty(\text{LHS})&\approx
        p^{[\rho}D^{\sigma]\lambda}p_\lambda
        \approx 2\tilde\mu\,\qty(\text{RHS});
    \end{align}
    \item \textit{Eq. \eqref{PD3}:} a similar computation shows that
    \begin{align}
        2\tilde\mu\,\qty(\text{LHS})&\approx
       0
        \approx 2\tilde\mu\,\qty(\text{RHS}).
    \end{align}
\end{itemize}

\paragraph{Quadrupole equations.} In a similar fashion, we show that
\begin{subequations}
    \begin{align}
    -4\tilde \mu^3\pdv{H_\textsc{Q}}{x^\nu}&\approx -\frac{2}{\tilde \mu^2}\Gamma^\rho_{\nu\lambda}p_\rho p^\lambda R\tudu{\alpha}{\beta\gamma}{\delta}\Theta^{\beta\gamma} p_\alpha p_\delta+\partial_\nu R\tudu{\alpha}{\beta\gamma}{\delta}\Theta^{\beta\gamma} p_\alpha p_\delta\nonumber\\
    &\quad+2\Gamma^\kappa_{\nu\lambda}S^{\beta\lambda}S\tud{\gamma}{\kappa}R\tudu{\alpha}{\beta\gamma}{\delta}\Theta^{\beta\gamma} p_\alpha p_\delta,\\
    -4\tilde\mu^3\pdv{H_\textsc{Q}}{p_\rho}&\approx\frac{2}{\tilde\mu^2}p^\rho R\tudu{\alpha}{\beta\gamma}{\delta}\Theta^{\beta\gamma} p_\alpha p_\delta+2R\tudu{\rho}{\beta\gamma}{\delta}\Theta^{\beta\gamma} p_\delta,\\
    -4\tilde \mu^3\pdv{H_\textsc{Q}}{S^{\mu\nu}}&\approx 2 R\tudu{\alpha}{\mu\gamma}{\delta}S\tud{\gamma}{\nu} p_\alpha p_\delta.
\end{align}
\end{subequations}
These relations can be directly used to verify equations \eqref{Q1}-\eqref{Q3}. The computation is quite straightforward, and we will not reproduce it here because it is not particularly enlightening. The only technical point involved shows up while checking \eqref{Q1}. It consists in noticing that
\begin{align}
    \partial_\nu R\tudu{\alpha}{\beta\gamma}{\delta}\Theta^{\beta\gamma} p_\alpha p_\delta&=\bigg(\nabla_\nu R\tudu{\alpha}{\beta\gamma}{\delta}
    -2\Gamma^\alpha_{\nu\lambda}R\tudu{\lambda}{\beta\gamma}{\delta}+2\Gamma^\lambda_{\nu\beta}R\tudu{\alpha}{\lambda\gamma}{\delta}
    \bigg)\Theta^{\beta\gamma} p_\alpha p_\delta.
\end{align}
The first term of the RHS of this equation will cancel with the force term $\mathcal F_\nu$, whereas the two other will respectively cancel with other terms involving Christoffel symbols. Our candidate Hamiltonian therefore reproduces correctly the MPD equations endowed with the spin-induced quadrupole term and preserves the TD spin supplementary condition along time evolution.

\section{Linearized Hamiltonian}\label{sec:H:linear}
When discussing the integrability of MPD equations in next chapter, it will be of prime importance to know which Hamiltonian can be used to generate the linearized pole-dipole motion. It is given by the linearized version of \eqref{HQ_offshell}, which is simply
\begin{align}
    H_{\text{lin}}=\frac{1}{2\mu}g^{\mu\nu}p_\mu p_\nu\approx-\frac{\mu}{2}.\label{lin_H}
\end{align}
Therefore, when expressed in the non-symplectic coordinates, the linearized MPD equations are generated by the same Hamiltonian than the geodesic motion, the only (but huge) difference lying in the phase space and the Poisson brackets algebra being considered. This difference can be made manifest in the Hamiltonian by turning to  quasi-symplectic coordinates. In terms of these variables, the linearized Hamiltonian takes the form
\begin{align}
    H_\text{lin}=\frac{1}{2\mu}\qty(g^{\mu\nu} P_\mu P_\nu+P^\mu \underline e\tdu{A}{\lambda}\underline e_{B\lambda;\mu}S^{AB}).\label{lin_H_quasi_sympl}
\end{align}
This linearized Hamiltonian is standard and has already been extensively used in the literature, see \cite{Witzany:2018ahb,witzany2019spinperturbed,Ramond_2021} and references therein.

\section{Concluding remarks}

We are now in possession of a covariant Hamiltonian generating the MPD equations under the TD condition in the spin-induced quadrupole approximation. The Hamiltonian valid at quadrupole order is given in Eq. \eqref{HQ_offshell}, whereas its pole-dipole linearized version is provided in Eq. \eqref{lin_H}. These Hamiltonian will be at the heart of the two applications discussed in the last chapters of this thesis: the discussion of the non-integrability of MPD equations in Kerr spacetime and the investigation of the associated Hamilton-Jacobi equation. 


\chapter{Non-integrability of the MPD equations in Kerr spacetime}
\label{chap:integrability}
\chaptermark{\textsc{Non-integrability in Kerr spacetime}}

In this chapter, we will explore the first application of the covariant Hamiltonian formalism for extended bodies in Kerr spacetime: the discussion of (Liouville) integrability of the motion. Integrability of MPD equations in Kerr and Schwarzschild spacetimes can be summarized by the following table \cite{Ramond:2022vhj}:
\begin{table}[h!]
    \centering
    \renewcommand{\arraystretch}{1.5}
    \begin{tabular}{c|cc}
       & \textsc{Schwarzschild} ($a=0$) & \textsc{Kerr} ($a\neq 0$) \\\hline
       $\os{}$  & integrable & non-integrable \\
       $\os{2}$  & non-integrable & non-integrable \\
    \end{tabular}
    \caption{Status of the integrability for test bodies in Schwarzschild and Kerr spacetimes. Table after \cite{Ramond:2022vhj}.}
    \label{tab:integrable}
\end{table}

The status of the proofs of the statements formulated in Table \ref{tab:integrable} differs between the different cells: in Schwarzschild at linear order in $\mathcal S$, the proof of integrability of MPD equations has been provided in several works, both by using non-covariant Hamiltonians \cite{Kunst:2015tla} and covariant ones \cite{Ramond:2022vhj}. This is the only case where Liouville integrability still holds when departing from geodesic motion. Still in Schwarzschild, but at quadratic order in $\mathcal S$, the breaking of integrability is suggested from numerical studies, see \cite{Zelenka:2019nyp} and references therein. However, an analytical study of the breaking of integrability in this case is still missing.

Turning to Kerr spacetime, the breaking of integrability at linear order in the spin magnitude $\mathcal S$ has been demonstrated both from numerical perspective \cite{Kunst:2015tla} and from analytical one \cite{Compere:2021kjz}. The status as second order simply follows from the non-integrability already present at first order.

In this chapter, we will review the analytical argument leading to the conclusion that linearized MPD equations in Kerr spacetime do not form an integrable system. In Part \ref{part:conserved_quantities} of this thesis, we have shown that -- up to $\os{2}$ corrections -- the only non-trivial, polynomial quantities conserved for linearized MPD equations were the dynamical mass $\mu$ and the four invariants
\begin{align}
    \mathcal I_A&=\qty(\mathcal E,\,\mathcal L,\,\mathcal Q_Y,\,\mathcal Q_R).\label{lin_qties}
\end{align}
Using the symplectic structure introduced in Chapter \ref{chap:symplectic}, a direct computation of the Poisson brackets between these quantities show that they fail to be in involution, since
\begin{align}
    \pb{\mathcal Q_Y}{\mathcal Q_R}=\os{1}.
\end{align}
This is the only non-vanishing bracket at linear order in $\mathcal S$, and one can convince ourselves that there is no hope to find another combination of conserved quantities which are Poisson-commuting among themselves, thus suggesting that linearized MPD equations are not integrable in Kerr spacetime.

This Chapter is organized as follows: Section \ref{sec:hamilton} will discuss the computation of the Poisson brackets between the conserved quantities $\mathcal I_A$, while Section \ref{sec:NI} briefly explores the consequences of perturbative non-integrability of a dynamical system.

\section{Non-integrability of the linearized MPD equations in Kerr spacetime}\label{sec:hamilton}

From the discussion of Chapters \ref{chap:symplectic} and \ref{chap:covariant_H}, the linearized MPD equations endowed with the Tulzcyjew-Dixon spin supplementary condition can be described by a $N=5$ Hamiltonian system, whose evolution is driven by the Hamiltonian
\begin{align}
    H_\text{lin}=\frac{1}{2\mu}\qty(g^{\mu\nu} P_\mu P_\nu+P^\mu \underline e\tdu{A}{\lambda}\underline e_{B\lambda;\mu}S^{AB})\approx-\frac{\mu}{2}.
\end{align}
The counting of the independent degrees of freedom is advantageously performed using the symplectic coordinates introduced in Chapter \ref{chap:symplectic}. Without loss of generality, we will choose Witzany's coordinates for the discussion. We start from the full 14-dimensional phase space ($N=7$ Hamiltonian system). Turning to symplectic coordinates shows that the motion can be parametrized by only 12 variables ${x^\mu,P_\mu,\phi,A,\psi,B}$, two two extra degrees of freedom corresponding to the Casimir invariants $\mathcal S^2$ and $\mathcal S^2_*$. We are then left with a 12-dimensional phase space ($N=6$ Hamiltonian system). Moreover, expressing the TD spin supplementary condition in an adapted background tetrad frame (see Section \ref{sec:ssc:ortho}) allows to rewrite it as
\begin{align}
    S^{0A}=0
\end{align}
in that frame. Comparing this requirement with Eq. \eqref{background_canonical} implies that, under the TD SSC, the Hamiltonian will not depend upon the coordinate $\psi$, which is thus cyclical. Its conjugated moment $B$ is therefore conserved, and we end up with a final 10-dimensional phase space ($N=5$ Hamiltonian system), as announced.

We now stand in a comfortable position for discussing the integrability of the linearized MPD equations in Kerr spacetime. Since the Hamiltonian $H_\text{lin}\approx-\frac{\mu}{2}$ is itself a constant of the motion, the system will be integrable only if the four independent constants of the motion given in Eq. \eqref{lin_qties} are in involution:
\begin{align}
    \pb{\mathcal I_A}{\mathcal I_B}=0,\qquad\forall A,B\in 1,\ldots,N.
\end{align}

\begin{table}[t!]
    \begin{center}
        \begin{tikzpicture}[xscale=.75,yscale=.85]
        \cell{1.5}{0}{$\mathcal S$}{black!10};
        \cell{3}{0}{$\mathcal E$}{black!15};
        \cell{4.5}{0}{$\mathcal L$}{black!10};
        \cell{6}{0}{$\mathcal Q_Y$}{black!15};
        \cell{7.5}{0}{$\mathcal Q_R$}{black!10};
        
        \cell{0}{-.75}{$\mu$}{black!15};
        \cell{0}{-1.5}{$\mathcal S$}{black!10};
        \cell{0}{-2.25}{$\mathcal E$}{black!15};
        \cell{0}{-3}{$\mathcal L$}{black!10};
        \cell{0}{-3.75}{$\mathcal Q_Y$}{black!15};
        
        \cell{1.5}{-.75}{$0$}{green!60};
        \cell{3}{-.75}{$0$}{green!60};
        \cell{4.5}{-.75}{$0$}{green!60};
        \cell{6}{-.75}{$\mathcal O(\mathcal S^2)$}{green!20};
        \cell{7.5}{-.75}{$\mathcal O(\mathcal S^2)$}{green!20};
        
        \cell{3}{-1.5}{$0$}{green!60};
        \cell{4.5}{-1.5}{$0$}{green!60};
        \cell{6}{-1.5}{$0$}{green!60};
        \cell{7.5}{-1.5}{$0$}{green!60};
        
        \cell{4.5}{-2.25}{$0$}{green!60};
        \cell{6}{-2.25}{$0$}{green!60};
        \cell{7.5}{-2.25}{$\mathcal O(\mathcal S^2)$}{green!20};
        
        \cell{6}{-3}{$0$}{green!60};
        \cell{7.5}{-3}{$\mathcal O(\mathcal S^2)$}{green!20};
        
        \cell{7.5}{-3.75}{$\mathcal O(\mathcal S)$}{red!60};
        
        
        \cell{1.5}{-1.5}{}{black!5};
        
        \cell{1.5}{-2.25}{}{black!5};
        \cell{3}{-2.25}{}{black!5};
        
        \cell{1.5}{-3}{}{black!5};
        \cell{3}{-3}{}{black!5};
        \cell{4.5}{-3}{}{black!5};
        
        \cell{1.5}{-3.75}{}{black!5};
        \cell{3}{-3.75}{}{black!5};
        \cell{4.5}{-3.75}{}{black!5};
        \cell{6}{-3.75}{}{black!5};
        
        \end{tikzpicture}
    \end{center}
    \caption{Poisson brackets between the first integrals $\mathcal I_A$ of linearized MPTD equations. The bracket that is generally non-vanishing at order $\mathcal O(\mathcal S)$ is represented in red, the other ones in green.}
    \label{tab:pb}
\end{table}
The results of the computations of these brackets are summarized in Table \ref{tab:pb}.

We begin by deriving an useful result about Poisson brackets: let $f(X)$ be an analytic function of some dynamical quantity $X$ such that the coefficients $f_n$ of the Taylor expansion $f(X)=\sum_{n=0}^{+\infty}f_n\frac{X^n}{n!}$ are constant, and let $Y$ be a dynamical quantity such that $\pb{X}{Y}\neq 0$. Then, from the Leibniz rule for Poisson brackets we have $\pb{X^n}{Y}=nX^{n-1}\pb{X}{Y}$ and one has the chain rule property
\begin{align}
\begin{split}
        \pb{f(X)}{Y}&=\sum_{n=0}^{+\infty}\frac{f_n}{n!}\pb{X^n}{Y}\\
    &=\sum_{n=1}^{+\infty}f_n\frac{X^{n-1}}{(n-1)!}\pb{X}{Y}\\
    &=\dv{f}{X}\pb{X}{Y}.
\end{split}
\end{align}
This relation is easily generalized to any analytic function of $n$ variables $X^\alpha$ ($\alpha=1,\ldots,n$):
\begin{align}
    \pb{f(X^\alpha)}{Y}&=\pdv{f}{X^\lambda}\pb{X^\lambda}{Y}.\label{pb:chain_rule}
\end{align}

\subsection{$\pb{\mathcal I_{A}}{\mathcal E}$-type brackets}
 
\subsubsection{$\mathbf{\mathcal I_{A}=\mathcal L}$}
Using the identities  $\nabla_\alpha \xi^\mu=\Gamma^\mu_{\alpha t}$ and $\nabla_\alpha \eta^\mu=\Gamma^\mu_{\alpha\varphi}$, the bracket reads
\begin{align}
    \begin{split}
    \pb{\mathcal L}{\mathcal E}&=\pb{p_t}{p_\varphi}+\frac{1}{2}\nabla_\alpha \xi_\beta \pb{S^{\alpha\beta}}{p_\varphi}\\
    &\quad-\frac{1}{2}\nabla_\alpha \eta_\beta \pb{S^{\alpha\beta}}{p_t}-\frac{1}{4}\nabla_\alpha \eta_\beta  \nabla_\gamma  \xi_\delta \pb{S^{\alpha\beta}}{S^{\gamma\delta}}\\
    &=-\frac{1}{2}R_{t\varphi \alpha\beta}S^{\alpha\beta}+\qty(\nabla_\alpha \eta_\beta  \Gamma ^\alpha_{\lambda  t}-\nabla_\alpha \xi_\beta \Gamma ^\alpha_{\lambda \varphi})S^{\lambda \beta}+\nabla_\alpha \eta^\lambda \nabla_\lambda  \xi_\beta  S^{\alpha\beta}\\
    &=-\frac{1}{2}R_{t\varphi \alpha\beta}S^{\alpha\beta}-\nabla_\alpha \eta^\lambda \nabla_\lambda  \xi_\beta  S^{\alpha\beta}\\
    &=0.
    \end{split}
\end{align}
The last equality follows from the fact that the axisymmetry of Kerr spacetime together with the definition of the Riemann tensor enforce the relation
\begin{align}
    R_{t\varphi \alpha\beta}S^{\alpha\beta}&=2\Gamma^\alpha_{t\lambda}\Gamma^\lambda_{\varphi \beta}S\tdu{\alpha}{\beta}=-2\nabla_\alpha \eta^\lambda\nabla_\lambda \xi_\beta S^{\alpha\beta}
\end{align}
to hold.

\subsubsection{$\mathcal I_{A}=\mathcal Q_Y$}

The axisymmetric character of Kerr spacetime allows to consider any background tetrad $\underline e\tdu{A}{\mu}$ to be independent of $t$ and $\varphi$. This yields $\pb{p_{t,\varphi}}{\underline e\tdu{A}{\mu}}=-\partial_{t,\varphi}\underline e\tdu{A}{\mu}=0$. Consequently,
\begin{align}
    \pb{p_{t,\varphi}}{S^{AB}}=\pb{p_{t,\varphi}}{S^{\mu\nu}}\underline e\tdu{A}{\mu}\underline e\tdu{B}{\nu}.\label{ss_tetrad}
\end{align}
Using this last equation and the fact that $\mathcal Q_Y$ commute with $x^\mu$, we get
\begin{align}
    \begin{split}
    \pb{\mathcal Q_Y}{\mathcal E}&=\pb{p_t}{\mathcal Q_Y}+\frac{1}{2}\nabla_A \xi_B\pb{S^{AB}}{\mathcal Q_Y}\\
    &\overset{\eqref{ss_tetrad}}{=}-4\bigg[r\qty(\Gamma^{[1}_{At}-\nabla_A \xi^{[1})S^{0]A}+a\cos\theta\qty(\Gamma^{[3}_{At}-\nabla_A\xi^{[3})S^{2]A}\bigg]\\
    &=0.
    \end{split}
\end{align}

\subsubsection{$\mathcal I_{\hat A}=\mathcal Q_R$} Using the identity
\begin{align}
    \pb{\mathcal E_0}{\mathcal E}&=\frac{1}{2}\nabla_\alpha \xi_\beta\Gamma^\alpha_{\lambda t}S^{\lambda\beta}=-\frac{1}{2}\nabla_\beta \xi_\alpha\nabla_\lambda \xi^\alpha S^{\lambda\beta}=0
\end{align}
and the chain rule \eqref{pb:chain_rule}, one has
\begin{align}
    \begin{split}
    \pb{\mathcal Q_R}{\mathcal E}&=2p_\mu K^{\mu\nu}\pb{p_\nu}{\mathcal E}-2\mu\pb{S^\alpha\partial_\alpha\mathcal Z}{\mathcal E}\\
    &=-p_\mu K^{\mu\nu}\qty(2\pb{p_\nu}{p_t}+\pb{p_\nu}{\nabla_\alpha \xi_\beta S^{\alpha\beta}})\\
    \hspace{-12pt}&\quad+\mu\qty(2\pb{S^\alpha\partial_\alpha\mathcal Z}{p_t}+\pb{S^\alpha\partial_\alpha\mathcal Z}{\nabla_\mu \xi_\nu S^{\mu\nu}}).
    \end{split}
\end{align}
Finally, making use of the identity
\begin{align}
  R_{t\beta\mu\nu}S^{\mu\nu}&=\partial_\beta\qty(\nabla_\mu  \xi_\nu)S^{\mu\nu}+2\nabla_\alpha \xi_\nu\Gamma^\alpha_{\beta\lambda}S^{\nu\lambda},
\end{align}
the two first Poisson brackets of this expression can be shown to cancel mutually, and we are left with
\begin{align}
    \begin{split}
    \hspace{-6pt}\pb{\mathcal Q_R}{\mathcal E}&=\frac{1}{2}\epsilon^{\alpha\beta\gamma\delta}S_{\gamma\delta}\,\partial_\alpha\mathcal Z\qty[R_{t\beta\mu\nu}-\partial_\beta\qty(\nabla_\mu \xi_\nu)]S^{\mu\nu}\\
    &=\epsilon^{\alpha\beta\gamma\delta}\partial_\alpha\mathcal Z S_{\gamma\delta}\nabla_\rho \xi_\nu\Gamma^\rho_{\lambda\beta}S^{\nu\lambda}\\
    &=\mathcal O(\mathcal S^2).
    \end{split}
\end{align}

\subsection{$\pb{\mathcal I_{A}}{\mathcal L}$-type brackets}

The computations are identical to the $\pb{\mathcal I_{A}}{\mathcal E}$-type case, but with $\xi^\alpha\to \eta^\alpha$. We consequently find
\begin{align}
    \pb{\mathcal Q_Y}{\mathcal L}&=0,\qquad
    \pb{\mathcal Q_R}{\mathcal L}=\mathcal O(\mathcal S^2).
\end{align}

\subsection{$\pb{\mathcal I_{A}}{\mathcal Q_Y}$-type brackets}

The final bracket to be computed takes the form
\begin{align}
    \begin{split}
    \pb{\mathcal Q_R}{\mathcal Q_Y}&=2p_\mu K^{\mu\nu}\pb{p_\nu}{\mathcal Q_Y}-2\mu\,\partial_\alpha\mathcal Z\pb{S^\alpha}{\mathcal Q_Y}\\
    &=-2p_\mu K^{\mu\nu}\qty(\partial_\nu Y^*_{\alpha\beta} S^{\alpha\beta}+2\Gamma^\alpha_{\rho\nu}S^{\beta\rho}Y^*_{\alpha\beta})\\
    &\quad-4\epsilon\tud{\alpha\beta}{\gamma\delta}\partial_\alpha\mathcal Z p_\beta Y\tud{*\gamma}{\nu}S^{\delta\nu}+\mathcal O(\mathcal S^2)\\
    &=-2p_\mu K^{\mu\nu}\nabla_\nu Y^*_{\alpha\beta}S^{\alpha\beta}-2\epsilon^{\alpha\beta\gamma\delta}\partial_\alpha\mathcal Z p_\beta \epsilon_{\gamma\nu\rho\sigma}Y^{\rho\sigma}S\tdu{\delta}{\nu}+\mathcal O(\mathcal S^2)\\
    &=4p_\mu S^{\alpha\beta}\qty(K\tud{\mu}{\alpha}\xi_\beta+Y\tud{\mu}{\alpha}Y\tud{\lambda}{\beta}\xi_\lambda)+\mathcal O(\mathcal S^2).
    \end{split}
\end{align}
This expression is generally non-vanishing at order $\mathcal O(\mathcal S)$.

After analysis, it is thus found that only the Poisson bracket $\pb{\mathcal Q_Y}{\mathcal Q_R}$ is non-vanishing at order $\mathcal O(\mathcal S)$, as displayed in Table \ref{tab:pb}.
The four linearly independent first integrals $\mathcal I_A$ are consequently \textit{not} in involution at the linear level, and the linearized MPTD equations do not form an integrable system in the sense of Liouville.

\section{Non-integrability: is that so bad?}\label{sec:NI}

Contrarily to geodesic motion, extended test body motion is not anymore integrable in Kerr spacetime, even at linear order in $\mathcal S$. As discussed in the introduction of this thesis, one of the main benefits of integrability of geodesic motion was that it allowed to turn to action-angle formulation (see Chapter \ref{chap:hamilton_kerr}), thus making explicit the fundamental frequencies of the bounded motion, which can subsequently be used for building models of inspiralling self-forced motion \cite{Hinderer:2008dm}. Actually, since the integrability is only broken by a small non-integrable perturbation, we can expect that not so much has been lost! In particular, one should expect the very powerful result known as the \defining{Kolmogorov–Arnold–Moser (KAM) theorem} to apply. The exact statement of this theorem is rather involved \cite{jose:1998} and its description goes beyond the scope of this thesis. An intuitive formulation can be found in \cite{Goldstein2001}, and takes the following form:
\begin{theorem}[Kolmogorov–Arnold–Moser]
If the bounded motion of an integrable Hamiltonian $H_0$ is disturbed by a small perturbation $\Delta H$ that makes the total Hamiltonian $H=H_0+\Delta H$ non-integrable and provided that:
\begin{enumerate}
    \item the perturbation $\Delta H$ is small, and
    \item the fundamental frequencies of $H_0$ are incommensurate,
\end{enumerate}
then the motion remains confined on a N-torus, excepted for a negligible set of initial conditions.
\end{theorem}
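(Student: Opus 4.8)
The statement is a version of the Kolmogorov–Arnold–Moser theorem, and the plan is to reduce it to (and then prove) the standard KAM statement for nearly-integrable analytic Hamiltonian systems. First I would pass to action-angle variables $(q,J)$ for the integrable part $H_0$, writing the full Hamiltonian as $H(q,J)=H_0(J)+\Delta H(q,J)$ with $\Delta H$ of size $\epsilon$ in a suitable analytic norm on a complex strip $|\Im q|\le\rho$, $J\in B$ (in our application $H_0$ is the integrable Hamiltonian generating the bounded geodesic-type motion and $\Delta H$ is the spin-induced perturbation, of order $\mathcal S$). The frequencies are $\omega(J)=\partial H_0/\partial J$, and I would strengthen hypothesis (ii) to the Kolmogorov nondegeneracy condition $\det(\partial\omega/\partial J)\neq 0$, so that fixing a target frequency vector $\omega_*$ pins down an action $J_*$. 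The incommensurability in (ii) is upgraded to the Diophantine condition $|k\cdot\omega_*|\ge\gamma|k|^{-\tau}$ for all $k\in\mathbb Z^N\setminus\{0\}$; the set of $\omega_*$ failing this has Lebesgue measure $\mathcal O(\gamma)$, which is how the ``negligible set of initial conditions'' enters.

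Second, the heart of the proof is a single KAM step: construct a canonical transformation, generated by a function $\chi(q,J)$, that conjugates $H$ to a new Hamiltonian whose angle-dependent part is quadratically smaller, i.e.\ $\mathcal O(\epsilon^2)$. Expanding to first order, this requires solving the homological equation $\omega_*\cdot\partial_q\chi=\Delta H(q,J)-\langle\Delta H\rangle_q$, together with an adjustment of the action re-centering the torus on $J_*$. Passing to Fourier series in $q$, the solution is $\hat\chi_k=\hat{(\Delta H)}_k/(\mathrm i\,k\cdot\omega_*)$, and the Diophantine bound controls the small divisors: using $|k\cdot\omega_*|\ge\gamma|k|^{-\tau}$ together with the exponential decay $|\hat{(\Delta H)}_k|\lesssim\epsilon\,e^{-\rho|k|}$ of the Fourier coefficients of an analytic function, one estimates $\chi$ on a slightly narrower strip $|\Im q|\le\rho-\delta$ by $\|\chi\|\lesssim\gamma^{-1}\delta^{-\tau}\epsilon$, with analogous Cauchy estimates for its derivatives.

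Third, I would iterate this step infinitely often. At stage $n$ the perturbation size is $\epsilon_n\sim\epsilon^{2^n}$ (superconvergence), while the analyticity strip shrinks by $\delta_n$ and the small-divisor and Cauchy losses contribute only algebraic factors in $1/\delta_n$; choosing $\delta_n\sim 2^{-n}\rho$ one checks that $\sum_n\delta_n<\rho$ and that the quadratic gain in $\epsilon_n$ overwhelms the accumulated algebraic losses, so the composition of all transformations converges on a nonempty domain and brings $H$ to a form for which $\{J=J_*\}$ is an invariant $N$-torus carrying the linear flow $\dot q=\omega_*$. Repeating the construction for every Diophantine $\omega_*$ and using Kolmogorov nondegeneracy to transport the estimates uniformly in $J_*$ (a Fubini-type argument) yields a Cantor family of invariant tori whose union has relative measure $1-\mathcal O(\sqrt\epsilon)$ in the bounded phase-space region; its complement is the ``negligible set'' of the statement.

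The main obstacle — and the reason KAM is a hard theorem — is the \emph{small-divisors problem}: the naive perturbation series diverges because the denominators $k\cdot\omega_*$ accumulate at zero. Overcoming it requires the three ingredients to interlock precisely: the Diophantine condition, which quantifies incommensurability and already forces the exclusion of a positive-measure set of frequencies; the superconvergent Newton scheme, without which the domain losses would not be summable; and sharp analytic (Cauchy) estimates on nested complex domains, whose bookkeeping is the technically delicate part. A secondary subtlety is the measure estimate: showing that the excluded set is uniformly small as $J_*$ varies requires the nondegeneracy condition, and the $\mathcal O(\sqrt\epsilon)$ deficiency cannot be improved in general.
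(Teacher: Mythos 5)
The paper does not prove this statement: it is quoted as the classical KAM theorem in the informal formulation of Goldstein, with the remark that ``the exact statement of this theorem is rather involved and its description goes beyond the scope of this thesis.'' So there is no in-paper proof to compare against, and your proposal should be judged on its own terms as a reconstruction of the classical argument.

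As such, your sketch is the standard Kolmogorov--Arnold route and its skeleton is correct: action-angle variables for $H_0$, upgrading the hypotheses to Kolmogorov nondegeneracy plus a Diophantine condition on $\omega_*$, the homological equation $\omega_*\cdot\partial_q\chi=\Delta H-\langle\Delta H\rangle_q$ solved by Fourier series with small divisors controlled by $|k\cdot\omega_*|\ge\gamma|k|^{-\tau}$, a superconvergent Newton iteration on a nested family of complex strips, and a Fubini-type measure estimate giving a Cantor family of surviving tori of relative measure $1-\mathcal O(\sqrt{\epsilon})$. You also correctly flag that the theorem as stated in the thesis is too weak to be provable literally --- mere incommensurability of the unperturbed frequencies does not suffice, and the ``negligible set'' is really the complement of the Diophantine frequencies pulled back through the frequency map --- which is an honest and necessary repair of the hypotheses. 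What you have written is, however, a proof \emph{outline} rather than a proof: the two genuinely hard points, namely (i) the verification that the composed sequence of canonical transformations converges on a nonempty domain despite the accumulating loss $\gamma^{-1}\delta_n^{-\tau}$ at each stage, and (ii) the uniformity in $J_*$ of the measure estimate via the inverse function theorem applied to $\omega(J)$, are asserted with the right scalings but not carried out. For the purposes of this thesis, where the theorem is invoked as background and never proved, that level of detail is appropriate; as a standalone proof it would need the quantitative iteration lemma made explicit.
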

In our case, $H_0$ would be the geodesic Hamiltonian while $\Delta H$ is the first order correction induced by the spin. There is no rigorous proof that KAM theorem applies to the system of MPD equations in Kerr, but typical features of KAM behaviour were numerically observed in Schwarzschild spacetime \cite{Zelenka:2019nyp}. Such features are also expected to show up in Kerr spacetime. The main consequence of KAM theorem is that the appearance of chaotic motion is confined to a negligible part of phase space, namely the one whose initial conditions lead some of the geodesic fundamental frequencies to be commensurate one to another (resonances). In the remaining part of phase space, the motion still exhibits a periodic-like behaviour, and one expects to still be able to compute the associated fundamental frequencies. Still in a perturbative spirit, these frequencies may be though as perturbations of the geodesic fundamental frequencies discussed in Chapter \ref{chap:hamilton_kerr}.

Actually, the spin-induced corrections to the fundamental frequencies of the motions were computed by W. Witzany at first order in $\mathcal S$ \cite{witzany2019spinperturbed} by studying the Hamilton-Jacobi formulation of the spinning test body problem in Kerr spacetime, to which the next chapter is devoted. In principle, the knowledge of these frequency shifts would allow to account for the finite-size nature of test bodies in evolution schemes describing self-forced EMRI's motion, such as the two-timescale expansion \cite{Hinderer:2008dm} (see also \cite{Pound:2021qin}).

\chapter{Hamilton-Jacobi equation}\label{chap:HJ}

\vspace{-12pt}

In this final chapter, we analyze the Hamilton-Jacobi equation for extended test bodies in Kerr spacetime and its relation with the constants of motion discussed in Part \ref{part:conserved_quantities} of the present thesis. As before, we restrict to the spin-induced quadrupole approximation and work under the TD spin supplementary condition.

Contrarily to the geodesic case, Hamilton-Jacobi equation for spinning test bodies is not separable in Kerr spacetime. Actually, W. Witzany showed that \cite{Witzany:2019nml}, at linear order in the spin magnitude, the terms that break the separability can be neglected everywhere except near the radial and polar turning points of the associated (zeroth order in the spin) geodesic motion. In this so-called ``swing region'', a perturbatively separable solution can be provided, and the separation constant arising in the analysis turns out to be precisely Rüdiger's extension of Carter constant. Moreover, an explicit solution to Hamilton-Jacobi equation valid both in the swing region and near the turning points can be built. 

This chapter is structured as follows: Section \ref{sec:HJ:HJ} will derive the explicit form of the Hamilton-Jacobi equation. Section \ref{sec:HJ:witzany} will discuss Witzany's solution at linear order in the spin magnitude. Section \ref{sec:HJ:separation} will discussed the interpretation of the separations constants in terms of the constants of motion of Part \ref{part:conserved_quantities}. Finally, Section \ref{sec:HJ:second} will provide some preliminary results concerning the extension of the solution at second order in the spin, namely a form of an explicit solution in the swing region valid regardless the value of the quadrupole coupling constant $\kappa$. Moreover, a conjecture concerning the expected form of the solution for the peculiar value of the coupling $\kappa=1$ will be proposed.

\section[Hamilton-Jacobi equation in the spin-induced quadrupole case]{Hamilton-Jacobi equation in the spin-induced qua-drupole case}\label{sec:HJ:HJ}

The method for building Hamilton-Jacobi equation has been reviewed in the geodesic case in Chapter \ref{chap:hamilton_kerr}. We shall first turn to symplectic coordinates. In order to stick to the original derivation \cite{Witzany:2019nml}, we will choose Witzany's coordinates $(x^\mu,P_\mu,\phi,A,\psi,B)$. In terms of these variables, the Hamiltonian \eqref{quadrupole_hamiltonian} reads\footnote{Again, we have dropped the underlines in the background tetrad indices.}
\begin{align}
    \begin{split}
    H&=\frac{\tilde\mu}{2}\bigg[ g^{\mu\nu} U_\mu U_\nu+U^\mu \omega_{\mu AB}S^{AB}+\frac{1}{4}\qty(\omega_{\mu AB}S^{AB})^2\\
    &\quad+\kappa\qty(\underline e\tdu{A}{\mu}s^{A B}R\tdud{ B}{\nu}{CD}s^{CD}-\qty(g^{\rho\sigma}U_\rho U_\sigma+2)R\tudu{\mu}{AB}{\nu}\theta^{AB})U_\mu U_\nu+1\bigg]\\
    &\quad+\mathcal O\qty(\mathcal S^3).
    \end{split}
\end{align}
In order to get rid of the constant $\tilde\mu$ in the equations, we have introduced the rescaled quantities
\begin{align}
    U_\mu\triangleq \frac{P_\mu}{\tilde\mu},\qquad s^{\mu\nu}\triangleq\frac{S^{\mu\nu}}{\tilde\mu},\qquad\theta^{\mu\nu}\triangleq\frac{\Theta^{\mu\nu}}{\tilde\mu^2}.
\end{align}

We now stand in a comfortable position to write down the Hamilton-Jacobi equation. Here, the canonical coordinates are $(x^\mu,\phi,\psi)$, and $(U_\mu,A,B)$ their respective conjugate momenta. The Hamilton-Jacobi equation is a first order PDE for Hamilton's characteristic function $W(x^\mu,\phi,\psi)$ (abbreviated as ``the action'' in the continuation of the chapter) which takes the explicit form
\begin{align}
    \begin{split}
    &1+g^{\mu\nu}W_{,\mu}W_{,\nu}+W^{,\mu}\omega_{\mu AB}s^{AB}+\frac{1}{4}\qty(\omega_{\mu AB}s^{AB})^2\\
    &+\kappa\qty[\underline e\tdu{A}{\mu}s^{AB}R\tdud{B}{\nu}{CD}s^{CD}-\qty(g^{\rho\sigma}W_{,\rho} W_{,\sigma}+2)R\tudu{\mu}{AB}{\nu}\theta^{AB}]W_{,\mu} W_{,\nu}=\mathcal O\qty(\mathcal S^3)\label{hamilton_jacobi}
    \end{split}
\end{align}
From here, we always understand all the occurrences of the spin tensor to be computed using Eq. \eqref{background_canonical} together with the substitutions $A\to W_{,\phi}$ and $B\to W_{,\psi}$. These substitutions originate from the fact that, while deriving the Hamilton-Jacobi equation, one shall substitute the conjugate momenta by the derivatives of the action with respect to their respective coordinates. See Chapter \ref{chap:hamilton_kerr} for more details.

We will now attempt to solve this equation perturbatively, order by order in the spin.

\section{Witzany's solution at first order}\label{sec:HJ:witzany}
We will now look at the first order in the spin Hamilton-Jacobi equation, which has been worked out recently by Witzany \cite{Witzany:2019nml}. At this order, Eq. \eqref{hamilton_jacobi} reduces to
\begin{align}
    1+g^{\mu\nu}W_{,\mu}W_{,\nu}+W^{,\mu}\omega_{\mu AB}s^{AB}=\mathcal O\qty(\mathcal S^2).
\end{align}
As we will see, the status of the separability of this equation is rather subtle at this order in the perturbative development.

\subsection{Geodesic-adapted tetrad}
The first step is to chose a specific form for the background tetrad $\underline e\tdu{\underline A}{\mu}$. We will use Witzany's ``geodesic adapted" tetrad \cite{Witzany:2019nml}, which has been independently studied by Marck in 1983 \cite{Marck1983SolutionTT}. This tetrad is constructed as follows: its timelike leg is taken tangent to a geodesic congruence of parameters $(E_c,L_c,K_c)$ (see Section \ref{sec:conserved_killing_geodesic}), \textit{i.e.} $\underline e_{0 \mu}=u_{c\mu}$ with
\begin{align}
    &u_{ct}=-E_c,\qquad &&u_{cr}=\pm_r\frac{\sqrt{R(r;E_c,L_c,K_c)}}{\Delta},\nonumber\\
    &u_{c x}=\pm_x\frac{\sqrt{\Theta_x(x;E_c,L_c,K_c)}}{1-x^2},\qquad &&u_{c\varphi}=L_c.
\end{align}
The three spatial legs of the tetrad are defined as
\begin{subequations}
\begin{align}
    \underline e_{1\mu}&=\frac{1}{N_{(1)}}\qty(K_{\mu\nu}+K_c g_{\mu\nu})u_c^\nu,\\
    \underline e_{2\mu}&=\frac{1}{N_{(2)}}\qty(K_{\mu\nu}-\frac{K_c^{(2)}}{K_c}g_{\mu\nu})Y\tud{\nu}{\lambda}u^\lambda_c,\\
    \underline e_{3\mu}&=\frac{1}{\sqrt{K_c}}Y_{\mu\nu}u^\nu_c,
\end{align}
\end{subequations}
with
\begin{subequations}
\begin{align}
    K_{\mu\nu}&\triangleq Y_{\mu\lambda}Y\tdu{\nu}{\lambda},\\
    K_c^{(2)}&\triangleq K_{\mu\nu}K\tud{\nu}{\rho}u_c^\mu u_c^\rho,\qquad && N_{(1)}^2\triangleq K_c^{(2)}+K_c^2,\\
    K_c^{(3)}&\triangleq K_{\mu\nu}K\tud{\nu}{\rho}K\tud{\rho}{\sigma}u_c^\mu u_c^\sigma,\qquad && N_{(2)}^2\triangleq K_c^{(3)}-\frac{\qty(K_c^{(2)})^2}{K_c}
\end{align}
\end{subequations}
and where $Y_{\mu\nu}$ is Kerr's Killing-Yano tensor.
A direct computation shows that this tetrad is indeed orthonormal. Moreover, it is defined up to the signs $\pm_{r,x}$ that shall be specified and depend on the specific geodesic that is followed. Finally, we notice that the tetrad is only well-defined away from the turning points of the geodesic congruence, where one has $\underline e_{0 r}=\underline e_{0\theta}=0$. This observation will be essential for the following, and will enforce us to solve the problem two times, both near and away from the turning points of the congruence.

A crucial property of this tetrad is the remarkable simple form of the projection of the connection 1-form onto its timelike leg. As for any orthogonal tetrad, one has of course
\begin{align}
    \omega_{0AB}\triangleq\underline e_{A\mu}\underline e\tdud{B}{\mu}{;\nu}\underline e\tdu{0}{\nu}=-\underline e_{B\mu}\underline e\tdud{A}{\mu}{;\nu}\underline e\tdu{0}{\nu}=-\omega_{0BA},
\end{align}
but the only non-vanishing component of this projection is given by
\begin{align}
    \omega_{012}=-\omega_{021}=\frac{\sqrt{K_c}}{\Sigma}\qty(\frac{P(E_c,L_c)}{r^2+K_c}+a\frac{L_c-aE_c(1-x^2)}{K_c-a^2x^2}).\label{non_vanishing_projection}
\end{align}
Recall that $P(E,L)\triangleq E(r^2+a^2)-aL$. As one can notice, except for the factor $1/\Sigma$ which is easy to get rid of, the RHS of this equation is separated with respect to the variables $r$ and $x$. This fact, together with the experience we have gained while discussing the geodesic problem in Chapter \ref{chap:hamilton_kerr}, will enable us to provide efficiently a solution to the first order problem away from the radial and polar turning points of the congruence.

\subsection{Solution in the swing region}
Let us first formalize the idea of ``being away from the turning points of the congruence''. Denoting respectively $y_t$ ($y=r,x$) the turning points of the geodesic congruence, one has 
\begin{align}
    R(r)\propto (r-r_t),\qquad \Theta_x(x)\propto (x-x_t)
\end{align}
near the turning points. As detailed in next section, this will lead to a $\propto(y-y_t)^{-1/2}$ divergence in the spin connection terms involved in the Hamilton-Jacobi equation.

In order to avoid this complication during a first time, we will seek a solution in the so-called \defining{swing region}, defined by
\begin{align}
    \abs{r-r_t}\gg\mathcal S,\qquad r\abs{x-x_t}\gg\mathcal S.\label{swing_region}
\end{align}
In this region, the only terms of the Hamilton-Jacobi equation that will scale as powers of the spin parameters are those that involve explicitly occurrences of the spin tensor.

We will look for a solution to the Hamilton-Jacobi equation that takes the form
\begin{align}
    W^{(1\sw)}=W^{(0)}+\mathcal O\qty(\mathcal S).
\end{align}
At first order, Eq. \eqref{hamilton_jacobi} becomes
simply
\begin{align}
    1+g^{\mu\nu}W^{(1\sw)}_{,\mu}W^{(1\sw)}_{,\nu}+W^{(0),\mu}\omega_{\mu AB}s^{AB}=\mathcal O\qty(\mathcal S^2).
\end{align}

To go further on, we set the constants of motion of the geodesic congruence to be close to the ones of the associated zeroth order geodesic motion,
\begin{align}
    E_c-E_0\sim L_c-L_0\sim K_c-K_0\sim\mathcal O\qty(\mathcal S)
\end{align}
and we choose the same signs $\pm_{r,x}$ for both the zeroth order action and the geodesic congruence. It allows to write
\begin{align}
    W^{(0)}_{,\mu}=u_{c\mu}+\mathcal O\qty(\mathcal S)=\underline e_{0\mu}+\mathcal O\qty(\mathcal S).
\end{align}
Using Eq. \eqref{non_vanishing_projection} allows to rewrite the spin connection term as
\begin{align}
    \begin{split}
    &W^{(0),\mu}\omega_{\mu AB}s^{AB}=2\frac{\sqrt{K_c}}{\Sigma}\qty(\frac{P(E_c,L_c)}{r^2+K_c}+a\frac{L_c-aE_c(1-x^2)}{K_c-a^2x^2})\\
    &\quad\times\qty(W^{(1sw)}_{,\phi}+W^{(1sw)}_{,\psi}+s)+\mathcal O\qty(\mathcal S).
    \end{split}
\end{align}

Moreover, since we assume that the TD condition holds and given the choice of tetrad and the discussion of Section \ref{sec:ssc:ortho} about SSCs in adapted tetrads, it is rather easy to convince ourselves that 
\begin{align}
    S^{0A}=\mathcal O\qty(\mathcal S^2).
\end{align}
Comparing this results with Eq. \eqref{background_canonical}, one notices that the explicit form of the spin tensor is independent of $\psi$, which is consequently a cyclic coordinate. In particular, one has $W^{(1\sw)}_{,\psi}=\mathcal O\qty(\mathcal S^2)$, and the action is therefore independent of $\psi$ at this order of the perturbative expansion.

Mimicking Carter's procedure for the geodesic case, we therefore make the following Ansatz for the swing region action:
\begin{align}
    W^{(1\sw)}=-E_\so t+L_\so\varphi+\qty(s_{\parallel}-s) \phi+w_{1r}(r)+w_{1x}(x).
\end{align}
The separation constants are chosen to be labelled with subscripts `so' (spin-orbit). Plugging this Anstaz into the Hamilton-Jacobi equation yields
\begin{align}
    \begin{split}
    &a\Sigma\qty(1+g^{\mu\nu}W^{(1\sw)}_{,\mu}W^{(1\sw)}_{,\nu})\\
    &+2s_{\parallel}\sqrt{K_c}\qty(\frac{P(E_c,L_c)}{r^2+K_c}+a\frac{L_c-aE_c(1-x^2)}{K_c-a^2x^2})=\mathcal O\qty(\mathcal S^2).
    \end{split}
\end{align}

A direct comparison of this equation with the one arising in the zeroth order computation allows us to use the very same procedure for the separation, with one additional term arising from the connection appearing in each separated part:
\begin{subequations}
\begin{align}
    \qty(1-x^2)\qty(w'_{1x})^2&=Q_\so-x^2\qty[a^2\qty(1-E_\so^2)+\frac{L_\so^2}{1-x^2}]\nonumber\\
    &\quad-2as_{\parallel}\sqrt{K_c}\frac{L_c-aE_c(1-x^2)}{K_c-a^2x^2},\label{eq2_rud}\\
    \Delta^2\qty(w'_{1r})^2&=-\qty(K_\so+r^2)\Delta+P^2(E_\so,L_\so)\nonumber\\
    &\quad-2s_{\parallel}\Delta\sqrt{K_c}\frac{P(E_c,L_c)}{K_c+r^2}.
\end{align}
\end{subequations}
This provides the full first order solution in the swing region. For now, $E_\so$, $L_\so$, $K_\so$ and $s_{\parallel}$ are just some separation constants, and $Q_\so$ is defined in the same way than $Q_0$. Their physical meaning will be investigated in the last section of the present chapter.

\subsection{Solution near the turning points}

Near the turning points $y_t$ of the geodesic congruence, the geodesic-adapted tetrad becomes ill-defined since some of its legs vanish. Nevertheless, this problem can be overcome in the following way: first, notice that the swing region solution remains valid even if we shift the constants $E_c$, $L_c$ and $K_c$ by an $\mathcal O\qty(\mathcal S)$ quantity. Therefore, one can always choose these constants to avoid the turning points by an $\mathcal O\qty(\mathcal S)$ distance. Nevertheless, the price to pay is that some other quantities in the Hamilton-Jacobi equation will scale with the spin parameter near the turning points. The various terms that acquire such a scaling are listed in Table \ref{tab:spin_scaling}.

\begin{table}[h!]
   \begin{center}
     \begin{tabular}{c|cc}
     & \textsc{turning point region} & \textsc{swing region}\\\hline
     \rule{0pt}{15pt}Connection terms $\omega_{\mu AB}$ & $\sim 1/\sqrt{y-y_t}\sim 1/\sqrt{\mathcal S}$ & $\sim 1$\\
     \rule{0pt}{15pt}Swing region action $W^{(1\sw)}$ & $\sim\sqrt{\mathcal S}$ & $\sim\mathcal S$\\
     \rule{0pt}{15pt}Correction terms $\delta_y W^{(1\tp)}$ & $\sim \mathcal S^2/\sqrt{y-y_t}\sim \mathcal S^{3/2}$ & $\sim\mathcal S^2$
    \end{tabular}
\end{center}
    \caption{Scalings in the spin both near and far way from the turning points of the congruence.}
    \label{tab:spin_scaling}
\end{table}

The main consequence of this new scaling behaviour is that Hamilton-Jacobi equation will now include some new terms. These corrections will break its separability. Nevertheless, a solution can still be provided, as we will demonstrate now. Explicitly, the equation to be solved becomes \cite{Witzany:2019nml}
\begin{align}
    \begin{split}
    &1+g^{\mu\nu}W^{(1)}_{,\mu}W^{(1)}_{,\nu}+\omega_{0AB}s^{AB}\\
    &+\sum_{y=r,x}\qty[\qty(W^{(1)}_{,y}-\underline e_{0 y})\omega_{y AB} s^{AB}+\frac{1}{4}\qty(\omega_{y AB}s^{AB})^2]g^{yy}=\mathcal O \qty(\mathcal S^2).
    \end{split}
\end{align}
Since the first line of this equation is already solved by the swing region solution, it is reasonable to make the following Ansatz near the turning points:
\begin{itemize}
    \item The action behaves as in the swing region with respect to $\phi$ and $\psi$. Therefore, one can replace the occurrences of the spin tensor $s^{AB}$ in the equations with
    \begin{align}
        \tilde s^{AB}\triangleq s^{AB}\qty(A= s_{1\parallel}-s,B= 0).
    \end{align}
    The only non-vanishing components up to $\mathcal O\qty(\mathcal S^2)$ are
    \begin{subequations}
    \begin{align}
        \tilde s^{12}&=-\tilde s^{21}=s_{1\parallel},\\
        \tilde s^{13}&=-\tilde s^{31}=\sqrt{s^2-s_{1\parallel}^2}\sin\phi,\\
        \tilde s^{23}&=-\tilde s^{32}=\sqrt{s^2-s^2_{1\parallel}}\cos\phi.
    \end{align}
    \end{subequations}
    \item The action valid both near and far away from the turning points is assumed to take the form
    \begin{align}
        W^{(1)}=W^{(1\sw)}+\sum_{y=r,x}\delta_y W^{(1\tp)}.
    \end{align}
    Here, the corrections $\delta_y W^{(1\tp)}$ are assumed to scale as $\mathcal S^2/\sqrt{y-y_t}$. They are therefore relevant in the turning points region, whereas they are only contained in higher-order terms in the swing regions (see Table \ref{tab:spin_scaling}).
\end{itemize}
Plugging this Ansatz into the Hamilton-Jacobi equation allows to  reduce our problem to the following sets of two equations
\begin{align}
    \begin{split}
    &\qty(\delta_y W^{(1\tp)}_{,y})^2+2 S'_{1y}\delta_y W^{(1\tp)}_{,y}+\omega_{yAB} s^{AB}\qty(S'_{1y}+\delta_y W^{(1\tp)}_{,y}-\underline e_{0 y})\\
    &+\frac{1}{4}\qty(\omega_{yAB}\tilde s^{AB})^2=0,\qquad\text{ for } y=r,x.
    \end{split}
\end{align}
These equations are easily solved by quadrature for $\delta_y W^{(1t)}_{,y}$. When the dust settles, we obtain
\begin{align}
    \delta_y W^{(1\tp)}_{,y}=-w'_{1y}-\frac{1}{2}\omega_{yAB}\tilde s^{AB}\pm_y\sqrt{\qty(w_{1y}')^2+\underline e_{0 y}\omega_{yAB} \tilde s^{AB}}.
\end{align}
This solution can be integrated as
\begin{align}
    \delta_y W^{(1\tp)}=-w_{1y}+\int\qty(-\frac{1}{2}\omega_{yAB}\tilde s^{AB}\pm_y\sqrt{\qty(w_{1y}')^2+\underline e_{0 y} \omega_{yAB}\tilde s^{AB}})\dd y+C_y,
\end{align}
where $C_r=C_r(\phi,x)$ and $C_x=C_x(\phi,r)$ are integration constant that will be set to zero in the continuation of this work. 

At the end of the day, we obtain the following form of the action at first order in the spin, which is valid both near and far away of the turning points of the congruence:
\boxedeqn{
    \begin{split}
    W^{(1)}&=-E_\so t+L_\so\varphi+\qty(s_{\parallel}-s)\phi\\
    &\quad+ \sum_{y=r,x}\int\qty(-\frac{1}{2}\omega_{yAB} s^{AB}\pm_y\sqrt{\qty(w_{1y}')^2+\underline e_{0 y}\omega_{yAB} \tilde s^{AB}})\dd y
    \end{split}
}{Hamilton-Jacobi equation solution at first order in the spin}

\section{The separation constants}\label{sec:HJ:separation}
We will now discuss the physical interpretation of the separation spin-orbit separation constants $E_\so, L_\so, K_\so$ and $s_\parallel$ that showed up in our discussion. As we will show, these constants are directly related to the constants of motion arising from the symmetries of the background geometry that were studied in Part \ref{part:conserved_quantities} of the present text.

\subsection{Spin-orbit energy and angular momentum}

First notice that, since
\begin{align}
    W_{,\mu}=U_{\mu}=\hat p_\mu-\frac{1}{2}\omega_{\mu AB}\tilde s^{AB},
\end{align}
one has the cornerstone identity
\begin{align}
    \hat p_\mu=W_{,\mu}+\frac{1}{2}\omega_{\mu AB}\tilde s^{AB}.\label{eq_corner}
\end{align}
Using Eq. \eqref{E0_L0} to express the quantities $E_0$ and $L_0$, we directly get the following relation for the spin-orbit energy and angular momentum:
\begin{subequations}\label{so_E_L}
\begin{align}
    E_\so&=E_0+\frac{1}{2}\omega_{t AB}\tilde s^{AB},\\
    L_\so&=L_0-\frac{1}{2}\omega_{\varphi AB}\tilde s^{AB}.
\end{align}
\end{subequations}
A simple computation can show that these quantities are precisely equal to the conserved quantities for MPD equations related to the existence of Killing vector fields. Recall that, for any Killing vector $\xi^\mu$ of the background, the quantity
\begin{align}
    \mathcal C_\xi\triangleq\xi_\mu\hat p^\mu+\frac{1}{2}\nabla_\mu\xi_\nu s^{\mu\nu}
\end{align}
is conserved at any order of the multipole expansion. Let us now particularize this result to Kerr's timelike and axial Killing vectors $\xi=\partial_t,\partial_\varphi$. In this case, one has
\begin{align}
    \nabla_\mu \xi_\nu&=g_{\nu\rho}\qty(\partial_\mu\xi^\rho+\Gamma^\rho_{\mu\sigma}\xi^\sigma)
    =\Gamma_{\nu\mu\sigma}\xi^\sigma,
\end{align}
where the last equality follows from the fact that the components $\qty(\partial_t)^\mu=\qty(1,0,0,0)$ and $\qty(\partial_\varphi)^\mu=\qty(0,0,0,1)$ are constant in Boyer-Lindquist coordinates. Making use of this result and of identity (7.142) of \cite{Nakahara:206619} to write the Christoffel symbols in terms of the connection 1-forms, we get the final expression
\begin{align}
    \mathcal C_\xi=\xi_\mu\hat p^\mu-\frac{1}{2}\omega_{\mu AB}\xi^\mu \tilde s^{AB}.
\end{align}
Using this result to compare $E=-C_{\partial_t}$ and $L=C_{\partial_\varphi}$ with the spin-orbit separation constants \eqref{so_E_L}, we directly find that
\begin{align}
    E=E_\so,\qquad L=L_\so.
\end{align}

\subsection{Spin projection $s_{\parallel}$}
We now turn to the spin separation constant $s_\parallel$. One the one side, one can write
\begin{align}
    s_\parallel&=\tilde s^{12}
    =\frac{1}{2}\epsilon^{30AB}\tilde s_{AB}
    =\frac{1}{2}\underline e_{3\mu}\underline e_{0\nu}\epsilon^{\mu\nu\rho\sigma}\tilde s_{\rho\sigma}
    =\frac{1}{2\sqrt{K_c}}Y_{\mu\lambda} u_c^\lambda u_{c\nu}\epsilon^{\mu\nu\rho\sigma}\tilde s_{\rho\sigma}.
\end{align}
On the other side,
\begin{align}
    \begin{split}
    Y_{\mu\lambda} u_c^\lambda u_{c\nu}\epsilon^{\mu\nu\rho\sigma}\tilde s_{\rho\sigma}&=-Y^{**}_{\mu\lambda} u_c^\lambda u_{c\nu}\epsilon^{\mu\nu\rho\sigma}\tilde s_{\rho\sigma}\\
    &=-\frac{1}{2}\epsilon_{\mu\lambda\alpha\beta}\epsilon^{\mu\nu\gamma\delta}Y^{*\alpha\beta}u_c^\lambda u_{c\nu}\tilde s_{\gamma\delta}\\
    &=-Y^*_{\mu\nu}\tilde s^{\mu\nu}\\
    &=-\mathcal Q_Y,
    \end{split}
\end{align}
where $\mathcal Q_Y$ is the Rüdiger's linear invariant discussed in Part \ref{part:conserved_quantities}. Comparing the two previous equations and still choosing $K_c=K_0+\mathcal O \qty(\mathcal S^2)$, we get
\begin{align}
    \boxed{
    s_\parallel=-\frac{1}{2}\frac{\mathcal Q_Y}{\sqrt{K_0}}+\mathcal O \qty(\mathcal S^3).}
\end{align}

Two remarks can be made before closing this discussion. First, let us define the \textit{specific angular momentum vector}
\begin{align}
    \ell^\mu\triangleq Y^{\mu\nu}u_{c\nu}.
\end{align}
The terminology makes sense, since Carter's constant obeys $K_0=\ell_\alpha \ell^\alpha$ and can be roughly interpreted as some sort of ``orbital angular momentum squared") (see \textit{e.g.} \cite{Witzany:2019nml}). Moreover, since
\begin{align}
    \mathcal Q_Y=-2\ell_\alpha s^\alpha.
\end{align}
We get the enlightening expression
\begin{align}
    s_\parallel=\frac{\ell_\alpha \tilde s^\alpha}{\sqrt{\ell_\alpha \ell^\alpha}}+\mathcal O \qty(\mathcal S^2).
\end{align}
The separation constant $s_\parallel$ can thus be understood as the projection of the spin vector $\tilde s^\mu$ onto the direction of the specific orbital angular momentum vector $\ell^\mu$.

Second, a simple computation allows to show that the background tetrad components of the normalized spin vector are
\begin{align}
    \tilde s^{\underline A}=\qty(0,\,\sqrt{s^2-s_\parallel^2}\cos\phi,\,-\sqrt{s^2-s^2_\parallel}\sin\phi,\,s_\parallel)+\mathcal O\qty(\mathcal S^2)
\end{align}
We therefore see that the cases $s_\parallel=\pm s$ correspond to the spin being totally (anti-)aligned with the angular orbital momentum, since $\underline e\tdu{3}{\mu}\propto \ell_\mu$. Moreover, the canonical coordinate $\phi$ can be viewed as a precession angle of the spin around this axis.

\subsection{Generalized Carter constant $K_\so$}
The evaluation of the separation constant $K_\so$ is the most tedious, and will not be discussed in full details here, since the computations are not particularly enlightening. The basic idea consists, on the one side, to make use of Eqs. \eqref{eq2_rud} and \eqref{eq_corner} to write a closed form expression for $K_\so$ in terms of $r$, $x$, $p_r$, $p_x$ and of the constants of motion. On the other side, Rüdiger's quadratic invariant can be written
\begin{align}
    \mathcal Q_R=K_0+2 s^\alpha\partial_\alpha \mathcal Z+E_0\mathcal Q_Y.
\end{align}
Computing explicitly the second term of this expression allow to show that the two constants do coincide \cite{Witzany:2019nml},
\begin{align}
    K_\so=\mathcal Q_R.
\end{align}
The separation constant of the Hamilton-Jacobi equation at first order in the spin expansion is therefore precisely Rüdiger's quadratic invariant.

To summarize the discussion, all the constants appearing in the swing region separation of Hamilton-Jacobi equation at linear order in the spin magnitude $\mathcal S$ are in direct correspondence with the conserved quantities found by explicitly solving the conservation constraint equations of Part \ref{part:conserved_quantities}.

\section{Solution at second order in the swing region}\label{sec:HJ:second}
Given the discussion of last section, it is natural to wonder what happens at second order in the spin. This section will provide partial results at this order, and we will end by conjecturing the expected relation between Hamilton-Jacobi formulation and conserved quantities of Part \ref{part:conserved_quantities}.

At second order in $\mathcal S$, Hamilton-Jacobi equation is given by the following mess
\begin{align}
    \begin{split}
    &1+g^{\mu\nu}W_{,\mu}^{(2\sw)}W_{,\nu}^{(2\sw)}+W^{(1\sw),\mu}\omega_{\mu AB}s^{AB}+\frac{1}{4}\qty(\omega_{\mu AB}s^{AB})^2\\
    &+\kappa\qty[\underline e\tdu{A}{\mu}s^{AB}R\tdud{B}{\nu}{CD}s^{CD}-\qty(g^{\rho\sigma}W^{(0)}_{,\rho} W^{(0)}_{,\sigma}+2)R\tudu{\mu}{AB}{\nu}\theta^{AB}]W^{(0)}_{,\mu} W^{(0)}_{,\nu}\\
    &=\mathcal O\qty(\mathcal S^3).\label{os2_HJ}
    \end{split}
\end{align}
The second line contains only terms involving occurrences of the Riemann tensor. Using the TD SSC $s^{0A}=\mathcal O\qty(\mathcal S^2)$, it can be simplified as follows:
\begin{align}
    \begin{split}
    &\qty[\underline e\tdu{A}{\mu}s^{AB}R\tdud{B}{\nu}{CD}s^{CD}-\qty(g^{\rho\sigma}W^{(0)}_{,\rho} W^{(0)}_{,\sigma}+2)R\tudu{\mu}{AB}{\nu}\theta^{AB}]W^{(0)}_{,\mu} W^{(0)}_{,\nu}\\
    &=\qty[\underline e\tdu{A}{\mu}s^{AB}R\tdud{B}{\nu}{CD}s^{CD}-\qty(g^{\rho\sigma}\underline e_{0\rho}\underline e_{0\sigma}+2)R\tudu{\mu}{AB}{\nu}\theta^{AB}]\underline e_{0\mu}\underline e_{0\nu}+\mathcal O\qty(\mathcal S^3)\\
    &=R_{0A0B}\theta^{AB}+\mathcal O\qty(\mathcal S^3).
    \end{split}
\end{align}

In order to solve this equation in the swing region, we mimic the first order turning region procedure. We make the following Ansatz
\begin{align}
    W^{(2\sw)}=W^{(1\sw)}+\sum_{y=r,x}\delta_yW^{(2\sw)}
\end{align}
and still consider the replacement of the spin tensor by its tilded version. Moreover, we choose the congruence constants as
\begin{align}
    E_c-E_0\sim L_c-L_0\sim K_c-K_0\sim\mathcal O\qty(\mathcal S^2)\quad\Leftrightarrow\quad
    W^{(0)}_{,\mu}=\underline e_{0\mu}+\mathcal O\qty(\mathcal S^2).
\end{align}
Hamilton-Jacobi equation then becomes
\begin{align}
\begin{split}
    &1+g^{\mu\nu}W^{(1\sw)}_{,\mu}W^{(1\sw)}_{,\nu}+W^{(0),\mu}\omega_{\mu AB} \tilde s^{AB}\\
    &+\sum_{y=r,x}\qty[\qty(\delta_y W^{(2\sw)}_{,y})^2+2w'_{1y}\delta_y W^{(2\sw)}_{,y}+F^{(2\sw)}_y]g^{yy}=\mathcal O\qty(\mathcal S^3).\label{HJ_swing_2}
\end{split}
\end{align}
Here, $F^{(2\sw)}_y=F^{(2\sw)}_y(\phi,r,x)$ denote two functions that satisfy to the algebraic equation
\begin{align}
    \begin{split}
    &\sum_{y=rx}g^{yy}F^{(2\sw)}_y=\frac{1}{4}\qty(\omega_{\mu AB}\tilde s^{AB})^2\\
    &+\qty(W^{(1\sw),\mu}
    -W^{(0),\mu})\omega_{\mu AB}\tilde s^{A B}+\kappa R_{0A0B}\tilde \theta^{AB}.\label{source_function}
    \end{split}
\end{align}
This equation is always solvable, and its simplest solution reads
\begin{align}
    F^{(2\sw)}_y=\frac{g_{yy}}{2}\,\qty(\text{RHS}),
\end{align}
where `RHS' denotes the RHS of Eq. \eqref{source_function}. Nevertheless, this splitting is in principle tunable to improve the properties of the solution. Eq. \eqref{HJ_swing_2} is satisfied provided that, for both $y=r$ and $x$:
\begin{align}
    \qty(\delta_y W^{(2\sw)}_{,y})^2+2w'_{1y}\delta_y W^{(2\sw)}_{,y}+F_y^{(2\sw)}=0.
\end{align}
The solution of this second order algebraic equation is simply
\begin{align}
    \delta_y W^{(2\sw)}_{,y}=-w'_{1y}\pm_y\sqrt{\qty(w'_{1y})^2-F_y^{(2\sw)}(\phi,r,x)},
\end{align}
which integrates as
\begin{align}
    \delta_y W^{(2\sw)}=-w_{1y}\pm_y\int\sqrt{\qty(w'_{1y})^2-F_y^{(2\sw)}(\phi,r,x)}\dd y.
\end{align}
Here, we have set the possible integration constants to zero. The full second order swing region solution is
\begin{align}
    \begin{split}
    W^{(2\sw)}&=-E_\so t+L_\so\varphi+\qty(s_{\parallel}-s)\phi\\
    &\quad+\sum_{y=r,x}\int\pm_y\sqrt{\qty(w'_{1y})^2-F_y^{(2\sw)}(\phi,r,x)}\dd y.
    \end{split}
\end{align}
This solution is no more separated for generic values of $\kappa$. However, for the specific case $\kappa=1$, the existence of a deformation of Rüdiger's constant $\mathcal Q^{(2)}_\text{BH}$ and the fact that -- both at zeroth and first order -- the constants $K_0$ and $K_\so$ play the role of separation constants of the associated Hamilton-Jacobi equations lead us to propose the following conjecture:
\begin{conjecture}\label{conject}
In Kerr spacetime, at quadratic order in the spin parameter $\mathcal S$, Hamilton-Jacobi equation Eq. \eqref{os2_HJ} endowed with TD spin supplementary condition and spin-induced quadrupole moment is separable in the swing region of phase space. The separated solution has the same form as at linear order, but the separation constant is now the quasi invariant $\mathcal Q^{(2)}_\text{BH}$ given in Eq. \eqref{quadratic_rudiger}. All the others constants appearing in the solution ($E_\so$, $L_\so$ and $s_\parallel$) are left unchanged.
\end{conjecture}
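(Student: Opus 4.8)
\textbf{Proof proposal for Conjecture~\ref{conject}.}
The plan is to promote the second-order swing-region solution $W^{(2\sw)}$ constructed above to a genuinely separated one by exploiting the existence of the quasi-invariant $\mathcal Q^{(2)}_\text{BH}$ established in Chapter~\ref{chap:second_order_solution}. First I would revisit the freedom left in the construction: the functions $F^{(2\sw)}_r$ and $F^{(2\sw)}_x$ are only constrained by the single algebraic relation \eqref{source_function}, which fixes the combination $g^{rr}F^{(2\sw)}_r + g^{xx}F^{(2\sw)}_x$ but not the individual pieces. The claim is that this freedom is exactly what is needed: one should be able to choose the split so that $F^{(2\sw)}_r$ depends only on $r$ (and the separation constants) and $F^{(2\sw)}_x$ only on $x$, provided the source term $\tfrac14(\omega_{\mu AB}\tilde s^{AB})^2 + (W^{(1\sw),\mu}-W^{(0),\mu})\omega_{\mu AB}\tilde s^{AB} + \kappa R_{0A0B}\tilde\theta^{AB}$, after multiplication by $\Sigma = r^2 + a^2 x^2$ and with $\kappa=1$, splits additively into an $r$-part and an $x$-part. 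Checking this additive separability of the source, using the covariant-building-block identities of Section~\ref{sec:CBB} and the geodesic-adapted-tetrad form \eqref{non_vanishing_projection}, is the technical core of the argument.

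Assuming that split goes through, the natural next step is to introduce a separation constant $\mathcal K^{(2)}$ via the usual Carter-type manoeuvre: write the separated equations as $\Delta^2 (w'_{2r})^2 = -(\mathcal K^{(2)}+r^2)\Delta + P^2(E_\so,L_\so) + (\text{spin-squared }r\text{-terms})$ and $(1-x^2)(w'_{2x})^2 = \mathcal Q^{(2)} - x^2[\cdots] + (\text{spin-squared }x\text{-terms})$, in direct parallel with Eqs.~\eqref{eq2_rud}. One then has to verify two things. First, that the quantity $\mathcal K^{(2)}$ so defined, re-expressed through $\hat p_\mu = W_{,\mu} + \tfrac12\omega_{\mu AB}\tilde s^{AB}$ in terms of $x^\mu$, $p_\mu$, $S^{\mu\nu}$, coincides with $\mathcal Q^{(2)}_\text{BH}$ of \eqref{quadratic_rudiger} up to $\mathcal O(\mathcal S^3)$. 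This is a closed-form algebraic comparison analogous to the linear-order identification $K_\so = \mathcal Q_R$ already carried out in Section~\ref{sec:HJ:separation}; it should follow by expanding $\mathcal Q^{(2)}_\text{BH} = \mathcal Q_R + M_{\alpha\beta\gamma\delta}S^{\alpha\beta}S^{\gamma\delta}$ on the geodesic-adapted tetrad and matching the $\Theta^{\alpha\beta}$- and $\xi^\mu$-bilinears against the spin-squared terms appearing in the separated potentials. Second, that $E_\so$, $L_\so$ and $s_\parallel$ are genuinely unchanged at this order — this is immediate, since the $\phi$- and $\psi$-dependence of $W^{(2\sw)}$ is still governed solely by the additive $(s_\parallel - s)\phi$ term (the TD condition still forces $s^{0A} = \mathcal O(\mathcal S^2)$, hence $\psi$ remains cyclic), and the $t$, $\varphi$ linear pieces are untouched by the correction $\sum_y \delta_y W^{(2\sw)}$.

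The main obstacle I expect is precisely the additive separability of the spin-squared source term for $\kappa=1$. Unlike the linear-order case, where separability followed almost for free from the clean structure of $\omega_{012}$ in \eqref{non_vanishing_projection}, here one has the square $(\omega_{\mu AB}\tilde s^{AB})^2$, cross terms between $W^{(1\sw)}$ and the connection, and the curvature contraction $R_{0A0B}\tilde\theta^{AB}$, all of which mix $r$ and $x$ through $\Sigma$, through $\mathcal R = r + ia\cos\theta$, and through the $\phi$-dependent components of $\tilde s^{AB}$. It is entirely plausible that the sum of these three contributions separates only when $\kappa=1$ — this would be the Hamilton-Jacobi counterpart of the fact that $\Upsilon_\text{NS}\ne 0$ for $a\ne 0$ in Chapter~\ref{chap:second_order_solution} — but establishing it requires a careful reduction using the $\alpha$-$\omega$ basis and the identities \eqref{diff_to_alg}. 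A secondary subtlety is that the ``simplest'' split $F^{(2\sw)}_y = \tfrac12 g_{yy}(\text{RHS})$ written above is manifestly \emph{not} separated, so one must actively search for the correct split rather than read it off; a useful strategy would be to first write down what the separated potentials \emph{must} look like if the conjecture holds (i.e. start from $\mathcal Q^{(2)}_\text{BH}$ expressed via $\hat p_\mu$, solve for $w'_{2r}$ and $w'_{2x}$), and then check backwards that the resulting $F^{(2\sw)}_r + F^{(2\sw)}_x$ reproduces \eqref{source_function}. If that consistency check fails, the conjecture is false as stated; if it succeeds, the proof is essentially complete modulo bookkeeping.
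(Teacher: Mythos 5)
You should first be aware that the paper itself does not prove this statement: it is stated as a conjecture, and the text explicitly says it ``remains to be investigated in full details.'' The paper's own sketch of a strategy is essentially the one you propose — reduce the problem to showing that the source functions $F^{(2\sw)}_y$ in Eq.~\eqref{source_function} can be chosen separated, $F^{(2\sw)}_y(\phi,r,x)=F^{(2\sw)}_y(y)$ — so your plan is aligned with the intended route. The one mechanistic ingredient the paper's sketch contains that yours only gestures at is \emph{why} $\kappa=1$ should be the magic value: the paper anticipates using the identity expressing the Riemann tensor as quadratic combinations of the connection 1-forms plus their first derivatives (Wald's (3.4.20)), so that the curvature contraction $R_{0A0B}\tilde\theta^{AB}$ at $\kappa=1$ cancels against (or combines cleanly with) the $\tfrac14(\omega_{\mu AB}\tilde s^{AB})^2$ term, leaving a separable remainder. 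You instead treat the whole sum as a black box whose separability is ``entirely plausible.'' Your reverse-engineering refinement — derive the candidate separated potentials from $\mathcal Q^{(2)}_\text{BH}$ expressed via $\hat p_\mu = W_{,\mu}+\tfrac12\omega_{\mu AB}\tilde s^{AB}$ and then check that the implied $g^{rr}F^{(2\sw)}_r+g^{xx}F^{(2\sw)}_x$ reproduces \eqref{source_function} — is a genuinely useful addition, since it converts an open-ended search into a single consistency check and would simultaneously establish the identification of the separation constant.

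That said, what you have written is a proof plan, not a proof, and the gap is exactly the step you yourself flag as the ``main obstacle.'' Nothing in the proposal verifies that, after multiplication by $\Sigma$ and at $\kappa=1$, the combination $\tfrac14(\omega_{\mu AB}\tilde s^{AB})^2+(W^{(1\sw),\mu}-W^{(0),\mu})\omega_{\mu AB}\tilde s^{AB}+R_{0A0B}\tilde\theta^{AB}$ splits into an $r$-piece plus an $x$-piece. This is not a routine bookkeeping step: unlike the linear order, where separability rested on the single clean component $\omega_{012}$ of \eqref{non_vanishing_projection}, the quadratic source involves \emph{all} components $\omega_{yAB}$ with $y=r,x$, the explicitly $\phi$-dependent entries of $\tilde s^{AB}$, and curvature contractions mixing $r$ and $x$ through $\mathcal R=r+ia\cos\theta$; there is no a priori guarantee that the $\phi$-dependence drops out of the would-be separated potentials, and if it does not, the conjecture fails as stated. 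Likewise, the claimed identification of the separation constant with $\mathcal Q^{(2)}_\text{BH}$ is asserted by analogy with the first-order computation $K_\so=\mathcal Q_R$ but not carried out; the quadratic-in-spin tensor $M_{\alpha\beta\gamma\delta}$ of \eqref{quadratic_rudiger} must actually be projected onto the geodesic-adapted tetrad and matched term by term. Until both computations are done (the covariant-building-block machinery of Section~\ref{sec:CBB} is the natural tool), the statement remains a conjecture, and your proposal should be presented as a strategy for attacking it rather than as a proof.
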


On the technical level, the proof of this conjecture shall consist in the possibility of finding separated forms for the source functions of Eq. \eqref{source_function}, namely
\begin{align}
    F^{(2\sw)}_y(\phi,r,x)=F^{(2\sw)}_y(y).
\end{align}
Identities such as (3.4.20) of \cite{Wald:1984rg} allow in principle to express the Riemann tensor only in terms of quadratic combinations of the connection 1-form and of its first covariant derivative. We believe that this fact will allow to dramatically simplify the RHS of Eq. \eqref{source_function}, the Riemann term then having the right $\kappa=1$ coefficient for simplifying with the quadratic connection term, yielding the separability property announced. This conjecture remains to be investigated in full details.

Finally, notice that the knowledge of the solution to Hamilton-Jacobi equation allows to compute the spin-induced shifts in both the fundamental frequencies and the orbital turning points of the motion. This has been discussed by W. Witzany in \cite{Witzany:2019nml}, up to first order in $\mathcal S$. The computation of the corresponding shifts at $\os{2}$ would first require a more complete understanding of the solution of Hamilton-Jacobi equation at quadratic order in $\mathcal S$. This remains to be investigated in coming works.

\pagestyle{conclusion}

\addcontentsline{toc}{part}{\textsc{Conclusion and Outlook}}
\part*{\textsc{Conclusion and Outlook}}

\lettrine{O}{ur} journey aiming at understanding the motion of extended test bodies in Kerr spacetime is about to end. Before closing definitively the discussion, let us briefly summarize the main outcomes of this thesis and the principal directions that remain to be explored.

\subsubsection{Summary of the main results}

The first part of this text was devoted to a review of the main features of geodesic motion in Kerr spacetime, including a discussion of Hamiltonian formulation and of classification of timelike geodesics. The geodesic equations are easily derived by solving the associated Hamilton-Jacobi equation, which turns out to be separable. They form a completely integrable system of equations, and the motion is characterized by four constants: the mass of the body, its energy and angular momentum and the Carter constant, the latter playing the role of the separation constant of the Hamilton-Jacobi problem. The integrable nature of motion allows to both (i) provide an action-angle formulation for radially bounded geodesic motion, thus making explicit its tri-periodic character and (ii) solve explicitly the equations of motion by quadratures. Moreover, radial and polar parts of the motion decouple one from the other, and can be solved independently. This allowed us to provide two original results: the classification of polar timelike geodesic motion in generic Kerr spacetime, and the one of radial motion for the near-horizon geodesics of high spin black holes.

The next part was aimed at pedagogically reviewing standard results of the literature. We derived the MPD equations, which describe the motion of a finite size test body in curved spacetime, by modelling it as a worldline endowed with a collection of multipole moments accounting for its internal structure. This multipole expansion is consistent with an expansion of the motion in integer powers of the spin magnitude above a spinless, geodesic motion. Moreover, for bodies spinning at astrophysically realistic rates, it can be consistently truncated at any desired order. In this thesis, it has been chosen to work up to quadrupole order included, the presence of a quadrupole moment originating only from the proper rotation of the body. The explicit form of such a spin-induced quadrupole moment was explicitly derived. Moreover, it was shown that obtaining a closed system of equations required to enforced additional algebraic constraints, known as spin supplementary conditions. This amounts to specify the worldline which is chosen for defining the body's multipole structure. Several spin supplementary conditions have been discussed, and the so-called Tulczyjew-Dixon condition (TD SSC) was argued to be the best suited for the continuation of the work.

Still tackling the problem in a perturbative spirit, the question of finding conserved quantities for the motion of extended test bodies was addressed both at linear and quadratic order in the (conserved) spin magnitude in the third part of the text. Energy and angular momentum can be deformed to obtain quantities which are conserved at any order of the multipole expansion. Mass is conserved at linear order, and a mass-like conserved quantity can still be defined at quadratic order. Extending the Carter constant to spinning bodies is more subtle. Its generalization at first order is due to Rüdiger, who also unrevealed the existence of another quantity, homogeneously linear in the spin. In Kerr spacetime, we demonstrated the uniqueness of this construction at first order in the spin. At quadratic order and under the spin-induced quadrupole approximation, it was shown that one can still find a generalized Carter-type constant and that the homogeneously linear invariant was still conserved provided that the quadrupole coupling has the value of the test body being itself a Kerr black hole. For other values of the coupling, we argued that it was impossible to find polynomial deformations of the known conserved quantities in Kerr spacetime.

The last part of the thesis was aimed as depicting some aspects of the Hamiltonian description of test body motion in curved spacetime. We first reviewed a phase space a different coordinates systems (symplectic or not) adapted to a covariant Hamiltonian description of the problem. Expanding on the results known at pole-dipole level in the literature, we then presented the construction of a Hamiltonian reproducing the MPD equations endowed with a spin-induced quadru-pole moment under the TD spin supplementary condition, which is valid up to quadratic order in the spin magnitude. Two applications of this formalism to Kerr spacetime were finally discussed. First, we argued that the MPD equations were not anymore integrable in Kerr spacetime already at linear order, contrarily to the geodesic ones. Second, we reproduced Witzany's analysis, which showed that all the conserved quantities previously found were in direct one-to-one correspondence with the constants appearing in the (almost) separable solution to the Hamilton-Jacobi equation at first order in the spin magnitude.

\subsubsection{Outlook}

Various extensions of this work would be interesting to explore. Some of the main ones will be briefly detailed below.

\begin{itemize}
    \item \textbf{Conserved quantities in Kerr spacetime}

    At linear order in the spin magnitude, the occurrence of Einstein's tensor in the central identity \eqref{generalized_central} that we derived suggests that Rüdiger's quadratic invariant will admit a generalization to the Kerr-Newmann spacetime which also admits a Killing-Yano tensor, once the Einstein tensor is replaced with the electromagnetic stress-energy tensor. This remains to be investigated.

    Moreover, even if we used the Tulczyjew-Dixon spin supplementary condition to derive the conserved quantities, the reasoning could be in principle translated to others choices of SSCs. At linear order in $\mathcal S$, the transposition to MP and KS conditions is trivial, since the momentum and the four-velocity are related through Eq. \eqref{pv} for any of these three conditions. At quadratic order in $\mathcal S$, the translation of our results from TD to other SSCs would in principle be facilitated by the formalism describing transitions between SSCs derived in \cite{Steinhoff:2014kwa,Vines:2016unv}.

    It would also be enlightening to explore the relationship between the new constant $\mathcal Q^{(2)}_\text{BH}$ for test (that is, non-backreacting) black holes in Kerr spacetime and those for arbitrary-mass-ratio binary black holes at second-post-Newtonian order (including the spin-induced quadrupole effects) found by Tanay \emph{et al.}\ in Ref.~\cite{Tanay:2020gfb}, ensuring the integrability of the system at that order.

    Finally, the covariant building blocks formalism introduced in the derivation may be applied for solving others problems in Kerr spacetime. It is very powerful, since it turns all differential identities into algebraic ones, and consequently allows for a purely enumerative approach for solving the original problems. Moreover, if the structure of the 4-dimensional Kerr covariant building blocks can be generalized to higher dimensions and more generic spacetimes, it could bring new insights on hidden symmetries of such spacetimes \cite{Frolov:2017kze}.

    \item \textbf{Higher multipoles}
    
    One could wonder whether deformations of the conserved quantities that have studied in this thesis still exist at higher orders in the multipole expansion for black hole-type couplings. The structure of the spin-induced octupole moment has already been well established \cite{Marsat_2015}, but the very form of higher order spin-induced moments as well as the corresponding equations of motion remains unclear \cite{Steinhoff_2010,Marsat_2015}. The assumption of the existence of deformations of the present quadrupole invariants could potentially serves as a guide for investigating test black holes motion at higher orders.

    \item \textbf{Hamilton and Hamilton-Jacobi formulations}

    The main direction concerning the Hamiltonian sector of this thesis would be to prove (or invalidate) Conjecture \ref{conject}, which would enable to understand the link between the existence of the quasi-conserved quantities and the swing-region separability of the associated Hamilton-Jacobi equation at second order in the spin magnitude for black hole coupling, thus pushing the analysis of Witzany \cite{Witzany:2019nml} to the next order in the multipole expansion. This would also enable to compute the corresponding shifts in the fundamental frequencies of the action-angle variables description of the finite size particle, which are of direct relevance for modeling EMRIs involving spinning secondaries.

    The (marginal) appearance of chaos in the motion at linear order in $\mathcal S$ was suggested in \cite{Kunst:2015tla}, but ruled out by the conclusions of \cite{Witzany:2019nml}. Notice that a direct comparison between analytical results and the outcomes of numerical investigations \cite{Ruangsri:2015cvg,Kunst:2015tla,Zelenka:2019nyp} still deserves caution due to the actual hypotheses under which the analysis is carried out. Moreover, Witzany's solution to Hamilton-Jacobi equation \cite{Witzany:2019nml} did not allow for a separation of the orbital equations of motion at linear order in $\mathcal S$. These results are consistent with our findings that Liouville integrability does not hold due to a single non-vanishing Poisson bracket between the two R\"udiger quasi-invariants, even though the explicit relationship between these statements remains to be deepened. A more careful investigation using a symplectic Hamiltonian formulation (in the spirit of the one of P. Ramond for Schwarzschild spacetime \cite{Ramond:2022vhj}) of the motion in Kerr at linear order in $\mathcal S$ would enable to better understand the breaking of the integrability, as well as provide more definitive answers regarding the relationships between non-integrability, structure of the Hamilton-Jacobi equation solution, and the (non-)appearance of chaos.
\end{itemize}

\vspace{\stretch{1}}
\noindent
Here ends this thesis. We have been pleased to share a part of the enthusiasm we have experienced while digging into the entangled but fascinating world of Kerr black holes, still impressed by the possibility of being able to carry out so far exact analytic computations. We sincerely hope that the attentive reader may have found, well hidden behind all the formal statements, some insightful shards of beauty.

\vspace{\stretch{0.25}}

\begin{center}
    *\\
    *\hspace{1cm}*
\end{center}

\vspace{\stretch{1}}

\pagestyle{plain}

\appendix

\addcontentsline{toc}{part}{\textsc{Appendices}}
\part*{\textsc{Appendices}}

\chapter[Elliptic integrals and Jacobi functions]{Elliptic integrals\\and Jacobi functions}\label{app:ellipticFunctions}

\vspace{\stretch{1}}

In this appendix, we set our conventions for the elliptic integrals and Jacobi functions used in the main text, following \cite{olver10}.

The \defining{incomplete elliptic integrals} of the first, second, and third kind are defined as
\begin{subequations}
\begin{align}
F(\varphi , m) &\triangleq  \int_0^\varphi \frac{\text{d} \theta}{\sqrt{1- m \sin^2 \theta}} = \int_0^{\sin \varphi} \frac{\text{d} t}{\sqrt{(1-t^2)(1-m t^2)}}, \\
E(\varphi , m) &\triangleq  \int_0^\varphi d\theta \sqrt{1- m \sin^2\theta}= \int_0^{\sin \varphi} \text{d} t \sqrt{\frac{1-m t^2}{1-t^2}}, \\
\Pi(n, \varphi ,m)& \triangleq \int_0^\varphi \frac{1}{1- n \sin^2 \theta}\frac{\text{d} \theta}{\sqrt{1- m \sin^2 \theta}}\nonumber\\
&\quad= \int_0^{\sin\varphi} \frac{1}{1- n t^2} \frac{\text{d} t}{\sqrt{(1-m t^2)(1-t^2)}},
\end{align}
\end{subequations}
respectively. We also define $E'(\varphi, m ) = \partial_m E(\varphi,m) = \frac{1}{2m} [E(\varphi,m) - F(\varphi , m)] $. 

The \defining{complete elliptic integrals} of the first, second, and third kind are defined as
\begin{subequations}
    \begin{align}
K(m) &\triangleq  F(\frac{\pi}{2},m), \\
E(m) &\triangleq  E(\frac{\pi}{2},m), \\
\Pi(n,m) &\triangleq  \Pi(n,\frac{\pi}{2},m),
\end{align}
\end{subequations}
respectively, and $E'(m)=\partial_m E(m)$. 

\defining{Jacobi functions} are defined as the inverse of the incomplete elliptic integrals of the first kind. More precisely, one can invert $u = F(\varphi,m)$ into $\varphi =\text{am}(u,m)$ in the interval $-K(m) \leq u \leq K(m)$. The elliptic sinus, elliptic cosinus, and delta amplitude are defined as 
\begin{subequations}
    \begin{align}
\text{sn}(u,m) &\triangleq \sin\text{am}(u,m), \qquad \text{cn}(u,m) \triangleq \cos\text{am}(u,m),\\
 \text{dn}(u,m) &\triangleq  \sqrt{1-m (\sin\text{am}(u,m))^2},
\end{align}
\end{subequations}
respectively. They have the periodicity ($k,l \in \mathbb Z$)
\begin{subequations}
    \begin{align}
\text{sn}(u + 2 k K(m)+2ilK(1-m) ,m)  &= (-1)^k \text{sn}(u,m) ,\label{per}\\
\text{dn}(u + 2 k K(m)+2ilK(1-m) ,m)  &= (-1)^l \text{sn}(u,m)\label{eqn:perDn}
\end{align}
\end{subequations}
and obey the properties
\begin{subequations}
    \begin{align}
\text{cn}^2(u,m)+\text{sn}^2(u,m) &=1 ,\qquad \text{dn}^2(u,m)+m\,\text{sn}^2(u,m) =1, \\
\frac{\partial}{\partial u}\text{am}(u,m) &=\text{dn}(u,m), \\
\frac{\partial}{\partial u}\text{sn}(u,m) &=\text{cn}(u,m)\text{dn}(u,m),\\
\frac{\partial}{\partial u}\text{cn}(u,m) &= - \text{sn}(u,m)\text{dn}(u,m),\\
\frac{\partial}{\partial u}\text{dn}(u,m) &= - m\, \text{sn}(u,m)\text{cn}(u,m).
\end{align}
\end{subequations}

\chapter{Explicit solutions to Kerr, NHEK and near-NHEK geodesic equations: technical details}
\chaptermark{\textsc{Geodesics: technical details}}

\vspace{\stretch{1}}

\section{Elementary Polar Integrals}\label{app:basicIntegrals}

For the pendular and equator-attractive cases, one has to compute the following integrals:
\begin{align}
    \hat I^{(0)}(x)&\triangleq\int_{x_0}^x\frac{\dd t}{\sqrt{\Theta(t^2)}},\quad\hat I^{(1)}(x)\triangleq\int_{x_0}^x\frac{t^2\dd t}{\sqrt{\Theta(t^2)}}\nonumber\\
    \hat I^{(2)}(x)&\triangleq\int_{x_0}^x\frac{\dd t}{\sqrt{\Theta(t^2)}} \frac{1}{1-t^2} \label{eqn:polarInt}
\end{align}
where $-1 \leq x \leq 1$. In the main text, $x$ will be substituted by $\cos\theta$, where $\theta$ is a polar angle.

\subsection{$\epsilon_0=0$}
In this case, $\Theta(t^2)=\sqrt{\frac{z_0}{Q}}(z_0-t^2)$. Choosing $x_0=0$, one finds directly
\begin{subequations}
    \begin{align}
    \hat I^{(0)}(x)&=\sqrt{\frac{z_0}{Q}}\arcsin{\frac{x}{\sqrt{z_0}}},\label{eqn:I0epsEq0}\\
    \hat I^{(1)}(x)&=\frac{1}{2}\sqrt{\frac{z_0}{Q}}\qty[z_0\arcsin{\frac{x}{\sqrt{z_0}}}-x\sqrt{z_0-x^2}],\\
    \hat I^{(2)}(x)&=\sqrt{\frac{z_0}{Q(1-z_0)}}\arcsin\sqrt{\frac{x}{z_0}\frac{1-z_0}{1-x}}
\end{align}
\end{subequations}
and the particular values
\begin{subequations}
    \begin{align}
    \hat I^{(0)}\qty(\sqrt{z_0})&=\frac{\pi}{2}\sqrt{\frac{z_0}{Q}},\\
    \hat I^{(1)}\qty(\sqrt{z_0})&=\frac{\pi}{4}\sqrt{\frac{z_0}{Q}},\\
    \hat I^{(2)}\qty(\sqrt{z_0})&=\frac{\pi}{2}\sqrt{\frac{z_0}{Q(1-z_0)}}.
\end{align}
\end{subequations}
One inverts \eqref{eqn:I0epsEq0} as
\begin{equation}
    x=\sqrt{z_0}\sin\qty(\frac{Q}{z_0}\hat I^{(0)}).
\end{equation}

\subsection{$\epsilon_0\neq0$, $z_-\neq0$}
In this case, $\Theta(t^2)=\epsilon_0(t^2-z_-)(z_+-t^2)$, where $t=\cos\theta$. Instead of solving separately the pendular and vortical cases as in Ref. \cite{Kapec:2019hro}, we will introduce a formal notation enabling us to treat both cases simultaneously. Let us define
\begin{equation}
    q\triangleq\,\text{sign}\, Q,\qquad z_{\pm1}\triangleq z_\pm,\qquad m\triangleq\left\lbrace\begin{array}{cc}
    \frac{z_+}{z_-},  & q=+1 \\
    1-\frac{z_-}{z_+},     & q=-1
    \end{array}\right.
\end{equation}
and
\begin{equation}
    y(t)\triangleq\left\lbrace\begin{array}{cc}
    \frac{t}{\sqrt{z_+}},  & q=+1  \\
   \,\text{sign}\, t \sqrt{\frac{z_+-t^2}{z_+-z_-}},     & q=-1
    \end{array}\right.,\qquad\Psi^q(x)\triangleq\arcsin{y(x)}.
\end{equation}
Pendular motion corresponds to $q = 1$, which has $0 \leq t^2 \leq z_+ <1$ and $\epsilon_0 z_- <0$, while vortical motion corresponds to $q = -1$, which has $0< z_- \leq t^2 \leq z_+ <1$ and $\epsilon_0 >0$. One can then rewrite
\begin{equation}
    \frac{\dd t}{\sqrt{\epsilon_0(t^2-z_-)(z_+-t^2)}}=\frac{q}{\sqrt{-q\epsilon_0z_{-q}}}\frac{\dd y}{\sqrt{(1-y^2)(1-my^2)}}.
\end{equation}
Both factors of the right side of this equation are real, either for pendular or for vortical motions. The lower bound of the integral will be chosen as $x_0=0$ for $Q\geq0$ and $x_0=\,\text{sign}\,{x}\sqrt{z_+}$ for $Q<0$. Then $y(x_0)=0$ for both values of $q$. This allows us to solve directly the integrals in terms of elliptic integrals (see Appendix \ref{app:ellipticFunctions} for definitions and conventions used):
\begin{subequations}
    \begin{align}
    \hat I^{(0)}(x)&=\frac{q}{\sqrt{-q\epsilon_0z_{-q}}}F\qty(\Psi^q(x),m)\label{eqn:I0epsNeq0},\\
    \hat I^{(1)}(x)&=\left\lbrace\begin{array}{ll}
    \frac{-2z_+}{\sqrt{-\epsilon_0 z_-}}E'\qty(\Psi^+(x),m), & q=+1\\
   - \sqrt{\frac{z_+}{\epsilon_0}}E\qty(\Psi^-(x),m),& q=-1
    \end{array}\right., \label{eqn:I1epsNeq0}\\
    \hat I^{(2)}(x)&=\left\lbrace\begin{array}{ll}
    \frac{1}{\sqrt{-\epsilon_0 z_-}}\Pi\qty(z_+,\Psi^+(x),m), & q=+1,\\
    \frac{-1}{(1-z_+)\sqrt{\epsilon_0 z_+}}\Pi\qty(\frac{z_--z_+}{1-z_+},\Psi^-(x),m), & q=-1
    \end{array}\right. .\label{eqn:I2epsNeq0}
\end{align}
\end{subequations}
For $x=\sqrt{z_q}$, $\Psi^q(\sqrt{z_q})=\frac{\pi}{2}$ for both $q= \pm 1$ and the incomplete elliptic integrals are replaced by complete ones:
\begin{subequations}
    \begin{align}
    \hat I^{(0)}(\sqrt{z_q})&=\frac{q}{\sqrt{-q\epsilon_0z_{-q}}}K\qty(m),\\
    \hat I^{(1)}(\sqrt{z_q})&=\left\lbrace\begin{array}{ll}
    \frac{-2z_+}{\sqrt{-\epsilon_0 z_-}}E'\qty(m), & q=+1\\
  -  \sqrt{\frac{z_+}{\epsilon_0}}E\qty(m),& q=-1
    \end{array}\right.,\\
    \hat I^{(2)}(\sqrt{z_q})&=\left\lbrace\begin{array}{ll}
    \frac{1}{\sqrt{-\epsilon_0 z_-}}\Pi\qty(z_+,m), & q=+1,\\
    \frac{-1}{(1-z_+)\sqrt{\epsilon_0 z_+}}\Pi\qty(\frac{z_--z_+}{1-z_+},m), & q=-1
    \end{array}\right. .
\end{align}
\end{subequations}
For $q=-1$, all integrals \eqref{eqn:I0epsNeq0}, \eqref{eqn:I1epsNeq0}, and \eqref{eqn:I2epsNeq0} vanish when evaluated at $x=\pm\sqrt{z_+}$ because $\Psi^-(\sqrt{z_+})=0$. Finally, \eqref{eqn:I0epsNeq0} can be inverted as
\begin{equation}
    x=y^{-1}\qty(\,\text{sn}\,\qty(\sqrt{-q\epsilon_0z_{-q}}\hat I^{(0)},m))\label{eqn:inversionEpsNeq0}
\end{equation}
with $y^{-1}$ the inverse function of $y$:
\begin{equation}
    y^{-1}(t)=\left\lbrace\begin{array}{ll}
        \sqrt{z_+}t, & q=+1  \\
         \,\text{sign}\, t\sqrt{z_+(1-mt^2)},& q=-1
    \end{array}\right.
\end{equation}
leading to the explicit formula
\begin{equation}
    x=\left\lbrace\begin{array}{ll}
        \sqrt{z_+}\,\text{sn}\,\qty(\sqrt{-\epsilon_0z_{-}}\hat I^{(0)},m), & q=+1 \\
        \,\text{sign}\, x \sqrt{z_+}\,\text{dn}\,\qty(\sqrt{\epsilon_0z_{+}}\hat I^{(0)},m),& q=-1.
    \end{array}\right.\label{eqn:inversionFormula}
\end{equation}
In the context of this paper, we will always have $x=\cos\theta$. 

\subsection{$\epsilon_0>0$, $z_-=0$}
This case is relevant for equator-attractive orbits. In this case, the potential reduces to $\Theta(t^2)=\epsilon_0\,t^2(z_+-t^2)$. One needs the following integrals (see also Ref. \cite{Kapec:2019hro}):
\begin{subequations}
    \begin{align}
    \hat{\mathcal{I}}^{(0)}&\triangleq\int_x^{\sqrt{z_+}}\frac{\dd t}{\sqrt{\Theta(t^2)}}=\frac{1}{\sqrt{\epsilon_0z_+}}\,\text{arctanh}\,\sqrt{1-\frac{x^2}{z_+}},\\
    \hat{\mathcal{I}}^{(1)}&\triangleq\int_x^{\sqrt{z_+}}\frac{t^2\dd t}{\sqrt{\Theta(t^2)}}=\sqrt{\frac{z_+-x^2}{\epsilon_0}},\\
    \hat{\mathcal{I}}^{(2)}&\triangleq\int_x^{\sqrt{z_+}}\frac{\dd t}{\sqrt{\Theta(t^2)}}\qty(\frac{1}{1-t^2}-1)=\frac{1}{\sqrt{\epsilon_0(1-z_+)}}\arctan\sqrt{\frac{z_+-x^2}{1-z_+}}
\end{align}
\end{subequations}
for any $x\leq \sqrt{z_+}$. All such integrals are obviously vanishing when evaluated at $x=\sqrt{z_+}$. Note that $ \hat{\mathcal{I}}^{(2)}$ contains a $-1$ integrand for simplicity of the final answer.

\section{Explicit form of prograde NHEK geodesics}
\label{app:equatorial}

We provide here the explicit motion in Mino time and the parametrized form of all classes of future-oriented geodesics in NHEK in the region outside the horizon $R > 0$. Without loss of generality, we can use the existence of the $\rightleftarrows$ and \rotatebox[origin=c]{-45}{$\rightleftarrows$}-flips to restrict ourselves to the subclasses of NHEK future-oriented geodesics, determined as
\begin{itemize}
    \item {\textit{Prograde ($\ell\geq 0$) geodesics;}} 
    \item {\textit{Partially ingoing geodesics,}} i.e., trajectories whose radial coordinate is decreasing on at least part of the total motion.
\end{itemize}

Because the near-horizon motion admits at most one turning point, we will only encounter spherical, plunging, marginal, bounded and deflecting motions.

\paragraph{Generic integrals.} One must provide explicit solutions to equations \eqref{eq:nhekT} to \eqref{eq:nhekphi}. The main point one has to deal with consists of solving the radial integrals \eqref{eqn:radialNHEK}. Notice that the primitives \begin{equation}
\mathbb{T}^{(i)}(R)=\int\frac{\dd R}{R^i\sqrt{E^2+2E\ell R-\mathcal{C}R^2}},\qquad i=0,1,2
\end{equation}
can be directly integrated as
\begin{subequations}
    \begin{align}
\mathbb{T}^{(0)}(R)&=\left\lbrace
\begin{array}{ll}
\frac{1}{\sqrt{-\mathcal{C}}}\log\qty[E\ell-\mathcal{C} R+\sqrt{-\mathcal{C}}\sqrt{v_R(R)}],  & \qquad\mathcal{C}\neq 0 \\
\frac{\sqrt{E^2+2E\ell_*R}}{E\ell_*},  & \qquad\mathcal{C}=0,~E\neq 0,
\end{array}
\right.\\
\mathbb{T}^{(1)}(R)&=\frac{\log R-\log\qty[E+\ell R+\sqrt{v_R(R)}]}{E}\qquad E\neq 0,\\
\mathbb{T}^{(2)}(R)&=-\frac{1}{E}\qty(\frac{\sqrt{v_R(R)}}{ER}+\ell\,\mathbb{T}^{(1)}(R)),\qquad E\neq 0.
\end{align}
\end{subequations}

We treat each type of motion cited above separately. We consider a geodesic path linking two events as described in the main text. We consider here a future-oriented path, $\Delta T>0$ and $\Delta\lambda>0$. Let us proceed systematically:
\begin{itemize}
    \item \textit{Spherical} motion has $R$ constant, and the radial integrals are ill defined and irrelevant. One can directly integrate the basic geodesic equations in this case.
    
    \item For \textit{plunging} motion, we will fix the final conditions at the final event $(T_f,R_f,$ $\theta_f,\Phi_f)$ such that $R_f<0$ (We will choose $R_f$ to be a root of the radial potential in all cases where they are real. Otherwise, we will chose it for convenience.) We will drop the subscript $i$ of the initial event. We denote here $\Delta\lambda=\lambda_f-\lambda$, $\Delta T=T_f-T(\lambda)$,\ldots and
    \begin{equation}
        T^{(j)}_R(R)=\mathbb{T}^{(j)}(R)-\mathbb T^{(j)}(R_f),\qquad j=0,1,2.
    \end{equation}
    
    \item For \textit{bounded} motion, we identify the physically relevant root of the radial potential $R_{\text{turn}}>0$ at $T_{\text{turn}}$ (which is an integration constant) that represents the turning point of the motion. We consider the motion either before or after the turning point: 
    \begin{itemize}[label=$\rightarrow$]
        \item ${\boldsymbol{T(\lambda)>T_{\text{turn}}}:}$ We choose the initial event as the turning point and the final one as $(T_>(\lambda),R_>(\lambda),$ $\theta_>(\lambda),\Phi_>(\lambda)).$ This leads to
        \begin{equation}
         T^{(j)}_R(R_>)=\mathbb{T}^{(j)}(R_{\text{turn}})-\mathbb T^{(j)}(R_>),\qquad j=0,1,2.   
        \end{equation}
        \item ${\boldsymbol{T(\lambda)<T_{\text{turn}}}:}$ We choose the initial event as $(T_<(\lambda),R_<(\lambda),\theta_<(\lambda),$ $\Phi_<(\lambda))$ and the final event as the turning point. This leads to
        \begin{equation}
         T^{(j)}_R(R_<)=\mathbb{T}^{(j)}(R_{\text{turn}})-\mathbb T^{(j)}(R_<),\qquad j=0,1,2.
        \end{equation}
    \end{itemize}
    The motion being symmetric in $R$ with respect to $R_{\text{turn}}$, one must only determine $T_>(R)$ and $\Phi_>(R)$, because
    \begin{equation}
        T_<(R)=-T_>(R),\qquad\Phi_<(R)=-\Phi_>(R).\label{eqn:symmetry}
    \end{equation}
    \item For \textit{deflecting} motion, there is also a turning point and \eqref{eqn:symmetry} remains true. This case is treated similarly to the bounded case but with a minus-sign change. We get here
    \begin{align}
        T_{R}^{(j)}(R_>)&=\mathbb T^{(j)}(R_>)-\mathbb T^{(j)}(R_{\text{turn}}),\nonumber\\
        T^{(j)}_R(R_<)&=\mathbb T^{(j)}(R_<)-\mathbb{T}^{(j)}(R_{\text{turn}}).
    \end{align}
    
    \item Finally, for \textit{marginal} motion, there is no turning point and the integral is immediate.

\end{itemize}
In what follows, for the motions with one turning point, we will only make explicit the motion after the turning point, $T>T_{\text{turn}}$, and we will denote $T=T_>$ and $\Phi=\Phi_>$ in order to simplify the notations.

\vspace{10pt}
\paragraph{\bf Spherical$_*$ (ISSO).} The explicit form is
\begin{subequations}
    \begin{align}
 T(\lambda) &= T_0 + \frac{\ell_*}{R_0}(\lambda-\lambda_0)  , \\
 R(\lambda) &= R_0 , \\
\Phi(\lambda) &=\Phi_0 - \frac{3}{4}\ell_* (\lambda-\lambda_0) + \ell_* \Phi_\theta(\lambda-\lambda_0). 
\end{align}
\end{subequations}
The parametrized form is 
\begin{subequations}
    \begin{align}
R & = R_0 , \\
\Phi & =  \Phi_0 -\frac{3}{4}R_0 (T-T_0) + \ell_*  \Phi_\theta\qty( \frac{R_0}{\ell_*} (T-T_0)). 
\end{align}
\end{subequations}

\paragraph{\bf Marginal$(\ell)$.} The Casimir obeys here $\mathcal{C}<0$; we denote $q\triangleq\sqrt{-\mathcal{C}}$ and consider the initial condition $R(\lambda_m)=R_m$. The explicit form of the solution is given (as a function of the Mino time) by
\begin{subequations}
    \begin{align}
 T(\lambda) &= T_0+\frac{\ell}{R_m q}\exp\qty(q(\lambda-\lambda_m)), \\
 R(\lambda) &= R_m \exp\qty(-q(\lambda-\lambda_m)) , \\
\Phi(\lambda) &= \Phi_0 -\frac{3}{4}\ell(\lambda-\lambda_m)+\ell\Phi_\theta(\lambda-\lambda_m). 
\end{align}
\end{subequations}
The parametrized form is
\begin{subequations}
    \begin{align}
T(R) & =  T_0+\frac{\ell}{q\,R}, \\
\Phi(R) & =  \Phi_0+\frac{3}{4}\frac{\ell}{q}\log\frac{R}{R_m}+\ell\Phi_\theta\qty(\lambda\qty(\frac{R}{R_m})). 
\end{align}
\end{subequations}
Here, the constants $T_0$ and $\Phi_0$ remain arbitrary.

\paragraph{\bf Plunging$_*(E)$.} The energy satisfies $E>0$ and the initial condition is $R(\lambda_0)=R_0=-\frac{E}{2\ell_*}$, leading to
\begin{align}
\lambda(R) -\lambda_0= - \frac{1}{\ell_*}  \sqrt{1+\frac{2 \ell_* R}{E}}. 
\end{align}
The explicit form is
\begin{subequations}
    \begin{align}
T(\lambda) &= \frac{2  \ell_*^2 (\lambda - \lambda_0)}{E(1-\ell_*^2 (\lambda-\lambda_0)^2)}, \\
 R(\lambda) &= \frac{E}{2\ell_*} \left( \ell_*^2 (\lambda-\lambda_0)^2 - 1 \right) , \\
\Phi(\lambda) &= -\frac{3}{4}\ell_* (\lambda-\lambda_0) +2 \, \,\text{arctanh}\, (\ell_* (\lambda - \lambda_0)) \nonumber\\
&+ \ell_* \Phi_\theta(\lambda-\lambda_0) . 
\end{align}
\end{subequations}
The parametrized form is 
\begin{subequations}
    \begin{align}
T(R) & = \frac{1}{R} \sqrt{1+\frac{2 \ell_* R}{E}}, \\
\Phi(R) & = \frac{3}{4} \sqrt{1+\frac{2 \ell_* R}{E}}-2 \, \,\text{arctanh}\,\sqrt{1+\frac{2 \ell_* R}{E}}+\ell_* \Phi_\theta (\lambda(R)-\lambda_0). 
\end{align}
\end{subequations}

\paragraph{\bf Plunging$(E,\ell)$.} The parameters obey $\mathcal C < 0$ and $E>0$. We denote $q = \sqrt{-\mathcal C}$ and $\lambda_-$ the value of the Mino time such that $R(\lambda_-)= R_-$. The orbit is given by
\begin{align}
\lambda(R) - \lambda_- = \frac{1}{q} \log\left(  \frac{E \sqrt{\mathcal C+\ell^2}}{E \ell - \mathcal C R +q \sqrt{v_R}} \right)=\frac{i}{q}\arccos\qty(\frac{E\ell-\mathcal{C}R}{E\sqrt{\mathcal{C}+\ell^2}}).
\end{align}
The last form is not explicitly real, but it allows us to find easily
\begin{subequations}
    \begin{align}
T(\lambda) &= \frac{q}{E} \frac{\abs{\sinh (q(\lambda-\lambda_-))}}{\frac{\ell}{\sqrt{\mathcal C + \ell^2}}-\cosh (q(\lambda-\lambda_-))}, \\
R(\lambda) &= \frac{E}{\mathcal C} \left( \ell - \sqrt{\mathcal C + \ell^2} \cosh (q (\lambda - \lambda_-))\right), \\
\Phi(\lambda) &= - \frac{3 \ell}{4}(\lambda-\lambda_-) +2\, \,\text{arctanh}\, \left( \frac{\ell + \sqrt{\mathcal C + \ell^2}}{q} \tanh (\frac{q}{2}(\lambda-\lambda_-))\right)\nonumber\\
&+\ell\, \Phi_\theta(\lambda-\lambda_-) . 
\end{align}
\end{subequations}
The parametrized form can be simplified as
\begin{subequations}
    \begin{align}
T(R) & = \frac{\sqrt{v_R}}{E R},\label{eqn:paramNHEKR} \\
\Phi(R) & = \Phi_0-\log\frac{E+\ell R+\sqrt{v_R(R)}}  {R}\nonumber\\
&~+\frac{3\ell}{4q}\log\qty(E\ell-\mathcal{C}R+q\sqrt{v_R(R)})+\ell\Phi_\theta(\lambda(R)-\lambda_-)  \label{eqn:paramNHEKphi}
\end{align}
\end{subequations}
where $\Phi_0\triangleq \log\frac{E+\ell R_-}{R_-}-\frac{3\ell}{4q}\log\qty(E\ell-\mathcal C R_-)$. The orbit start at $R=+\infty$ at $\lambda=-\infty$ and reaches the black hole horizon $R=0$ at $\lambda = \lambda_- - \lambda_H$, where $\lambda_H \equiv \frac{1}{q}\text{arccosh}(\frac{\ell}{\sqrt{\mathcal C + \ell^2}})$. It never reaches $R_- < 0$. 

\paragraph{\bf Bounded$_<(E,\ell)$.}
We have $\mathcal C > 0$ and $E>0$, leading to
\begin{align}
\lambda(R) -\lambda_+ = \frac{1}{q}\text{arccos}\left( \frac{\mathcal C R - E \ell}{E \sqrt{\mathcal C+\ell^2}} \right).
\end{align}
We normalized Mino time such that $R(\lambda_+) = R_+$ and denoted $q \triangleq \sqrt{\mathcal C}$. The orbit starts at $\lambda = \lambda_+$ at the turning point $R_+$ and plunges inside the black hole $R=0$ at $\lambda = \lambda_++\frac{1}{q}\text{arccos}(\frac{-\ell}{\sqrt{\mathcal C+\ell^2}})$. We have
\begin{subequations}
    \begin{align}
T (\lambda) &=  \frac{q}{E} \frac{\sin (q(\lambda-\lambda_+))}{\frac{\ell}{\sqrt{\mathcal C + \ell^2}}+\cos (q(\lambda-\lambda_+))} , \\
R(\lambda) &= \frac{E}{\mathcal C} \left( \ell + \sqrt{\mathcal C + \ell^2} \cos (q (\lambda-\lambda_+) )\right) , \\
\Phi(\lambda)  &= - \frac{3 \ell}{4}(\lambda-\lambda_+) +2\,\,\text{arctanh}\, \left( \frac{\ell - \sqrt{\mathcal C + \ell^2}}{q} \tan (\frac{q}{2}(\lambda-\lambda_+))\right)\nonumber\\
&+\ell \Phi_\theta(\lambda-\lambda_+). 
\end{align}
\end{subequations}
The parametrized form is given by \eqref{eqn:paramNHEKR} and \eqref{eqn:paramNHEKphi}, but with $q\to iq$. One can write a manifestly real form of the azimutal coordinate by shifting the initial value $\Phi_0$, leading to
    \begin{align}
    \Phi(R)=\Phi_0'-\log\frac{E+\ell R+\sqrt{v_R(R)}}  {R}+\frac{3\ell}{4q}\arctan\frac{q\sqrt{v_R}}{E\ell-\mathcal C R}+\ell\Phi_\theta(\lambda(R)-\lambda_-)
\end{align}
with $\Phi'_0\triangleq\log\frac{E+\ell R_-}{R_-}.$

\paragraph{Def\mbox{}lecting$(E,\ell)$.} We have $\mathcal C < 0$ and $E<0$, leading to ($q\triangleq\sqrt{-\mathcal{C}}$)
\begin{equation}
    \lambda(R)-\lambda_+=-\frac{i}{q}\arccos{\frac{E\ell-\mathcal{C}R}{E\sqrt{\mathcal{C}+\ell^2}}}.\label{eq45}
\end{equation}
The initial condition $R(\lambda_+)=R_+$ corresponds to the minimal radius reached by the trajectory. The orbit starts and ends at $R=+\infty$ at Mino time $\lambda=\pm\infty$. We have
\begin{subequations}
    \begin{align}
    T(\lambda)&=\frac{q}{E}\frac{\sinh{q(\lambda-\lambda_+)}}{\frac{\ell}{\sqrt{\mathcal{C}+\ell^2}}-\cosh{q(\lambda-\lambda_+)}}\\
    R(\lambda)&=\frac{E}{\mathcal{C}}\qty(\ell-\sqrt{\mathcal{C}+\ell^2}\cosh{q(\lambda-\lambda_+)})\\
    \Phi(\lambda)&=-\frac{3\ell}{4}(\lambda-\lambda_+)+2\,\,\text{arctanh}\, \left( \frac{\ell - \sqrt{\mathcal C + \ell^2}}{q} \tanh (\frac{q}{2}(\lambda-\lambda_+))\right)
    \nonumber\\
&+\ell \Phi_\theta(\lambda-\lambda_+). 
\end{align}
\end{subequations}
We finally have the parametrized form 
\begin{subequations}
    \begin{align}
    T(R) & = - \frac{\sqrt{v_R(R)}}{E R},\\
    \Phi(R)& = \Phi_0+\log\frac{E+\ell R+\sqrt{v_R(R)}}  {R}\nonumber\\
&~-\frac{3\ell}{4q}\log\qty(E\ell-\mathcal{C}R+q\sqrt{v_R(R)})+\ell\Phi_\theta(\lambda(R)-\lambda_+)
\end{align}
\end{subequations}
with $\Phi_0\triangleq-\log\frac{E+\ell R_+}{R_+}+\frac{3\ell}{4q}\log\qty(E\ell-\mathcal{C}R_+)$.

\section{Explicit form of prograde near-NHEK geodesics}
\label{app:explicitNN}

We follow the same procedure as the one in Appendix \ref{app:equatorial}. As before, we only focus, without loss of generality, on future-oriented partially ingoing prograde orbits. For convenience, we also include one class of retrograde bounded geodesics. 

\paragraph{Spherical$(\ell)$.} The explicit form reads as
\begin{subequations}
    \begin{align}
    R&=R_0=\frac{\kappa\ell}{\sqrt{-\mathcal C}},\\
    t(\lambda)&=t_0+\frac{\ell}{R_0}(\lambda-\lambda_0),\\
    \phi(\lambda)&=\phi_0-\frac{3}{4}\ell(\lambda-\lambda_0)+\ell\Phi_\theta(\lambda-\lambda_0).
\end{align}
\end{subequations}
The parametrized form is
\begin{subequations}
    \begin{align}
    R&=R_0=\frac{\kappa\ell}{\sqrt{-\mathcal C}},\\
    \phi(t)&=\phi_0-\frac{3}{4}R_0(t-t_0)+\ell\Phi_\theta(\lambda-\lambda_0).\label{eq46}
\end{align}
\end{subequations}
Note that $R_0\geq\qty(\frac{2}{\sqrt{3}}-1)\kappa$.

\paragraph{Plunging$_*$.} One has
\begin{equation}
    \lambda-\lambda_i=-\frac{R-R_i}{\kappa\ell_*}.
\end{equation}
The explicit form is
\begin{subequations}
    \begin{align}
    R(\lambda)&=R_i-\kappa\ell_*(\lambda-\lambda_i),\\
    t(\lambda)&=-\frac{1}{2\kappa}\log\qty[1+\kappa\ell_*(\lambda-\lambda_i)\frac{\kappa\ell_*(\lambda-\lambda_i)-2R_i}{R_i^2-\kappa^2}],\\
    \phi(\lambda)&=-\frac{3}{4}\ell_*(\lambda-\lambda_i)+\frac{1}{2}\log\frac{1-\frac{\kappa\ell_*(\lambda-\lambda_i)}{R_i-\kappa}}{1-\frac{\kappa\ell_*(\lambda-\lambda_i)}{R_i+\kappa}}+\ell_*\Phi_\theta(\lambda-\lambda_i).
\end{align}
\end{subequations}
The parametrized form is
\begin{subequations}
    \begin{align}
    t(R)&=-\frac{1}{2\kappa}\log\frac{R^2-\kappa^2}{R_i^2-\kappa^2},\\
    \phi(R)&= \phi_i+\frac{3}{4\kappa}R+\frac{1}{2}\log\frac{R-\kappa}{R+\kappa}+\ell_*\Phi_\theta(\lambda(R)-\lambda_i)
\end{align}
\end{subequations}
with $\phi_i\triangleq -\frac{3R_i}{4\kappa}-\frac{1}{2}\log\frac{R_i-\kappa}{R_i+\kappa}$. The geodesic starts from $R=+\infty$ at $\lambda=-\infty$ and reaches the horizon at Mino time $\lambda_H=\lambda_i+\frac{1}{\ell_*}\qty(\frac{R_i}{\kappa}-1)$.

\paragraph{Bounded$_*(e)$ and Plunging$_*(e)$.} The orbital parameters satisfy $\mathcal C=0$ and $e<0$ (bounded) or $e>0$ (plunging). The potential is simply $v_{R;\kappa}(R)=e^2+\kappa^2\ell_*^2+2e\ell_* R$ and the initial conditions are imposed at $R(\lambda_0)=R_0$, where $R_0$ is the (unique) root of the radial potential. One directly obtains
\begin{equation}
    \lambda(R)-\lambda_0=-\frac{\sqrt{v_{R;\kappa(R)}}}{e\ell_*},
\end{equation}
leading to
\begin{subequations}
    \begin{align}
    R(\lambda)&=R_0+\frac{e\ell_*}{2}(\lambda-\lambda_0)^2,\\
    t(\lambda)&= -\frac{1}{\kappa}\,\,\text{arctanh}\,\frac{2\kappa e \ell_*^2(\lambda-\lambda_0)}{\kappa^2\ell_*^2+e^2\qty[\ell^2_*(\lambda-\lambda_0)^2-1]},\\
    \phi(\lambda)&=-\frac{3}{4}\ell_*(\lambda-\lambda_0)+\,\text{sign}\, e\, \,\text{arctanh}\,\frac{2e^2\ell_*(\lambda-\lambda_0)}{-\kappa^2\ell_*^2+e^2\qty[\ell_*^2(\lambda-\lambda_0)^2+1]}\nonumber\\
    &~+\ell_*\Phi_\theta(\lambda-\lambda_0).   
\end{align}
\end{subequations}
The parametric form is
\begin{subequations}
    \begin{align}
    t(R)&=\frac{1}{\kappa}\,\text{arccosh} \frac{\abs{R+\frac{\kappa^2\ell_*}{e}}}{\sqrt{R^2-\kappa^2}},\\
    \phi(R)&=-\frac{3}{4e}\sqrt{v_{R;\kappa}(R)}+\,\text{arctanh}\,\frac{\sqrt{v_{R;\kappa}(R)}}{e+\ell_* R}+\ell_*\Phi_\theta(\lambda(R)-\lambda_0).
\end{align}
\end{subequations}
Notice that the requirements $v_{R;\kappa}\geq0$ and $R\geq\kappa$ are sufficient to guarantee the reality of the inverse hyperbolic functions involved. The trajectory reaches the horizon at Mino time $\lambda_H=\lambda_0-\,\text{sign}\, e \sqrt{\frac{2(\kappa-R_0)}{e\ell_*}}$, which is smaller than $\lambda_0$ for plunging motion and greater for bounded motion, as expected.

\paragraph{(Retrograde) Bounded$_>(e,\ell)$.} The geodesic parameters satisfy $\mathcal{C}<0$, $\ell < 0$, and $e>0$. Therefore, $R_-$ is positive, and we choose the initial condition as $R(\lambda_-)=R_-$. Defining $q\triangleq\sqrt{-\mathcal{C}}$, the explicit form reads as
\begin{align}
    R(\lambda)&= \frac{1}{\mathcal{C}}\qty[e\ell-\sqrt{(\mathcal{C}+\ell^2)(e^2+\kappa^2\mathcal{C})}\cosh{q(\lambda-\lambda_-)}].
\end{align}
The parametrized form is
\begin{subequations}
    \begin{align}
    t(R)&=\frac{1}{4\kappa}\log\frac{F_+(R)}{F_-(R)},\label{eqn:paramNNt}\\
    \phi(R)&= \frac{3\ell}{4q}\log\qty[e\ell-\mathcal{C}R+q\sqrt{v_{R;\kappa}(R)}]-\frac{1}{4}\log\frac{G_+(R)}{G_-(R)}+\ell\Phi_\theta(\lambda(R)-\lambda_-)\label{eqn:paramNNphi}
\end{align}
\end{subequations}
where we define
\begin{subequations}
    \begin{align}
    F_\pm(R)&\triangleq \qty[eR+\kappa\qty(\kappa\ell\pm\sqrt{v_{R;\kappa}(R)})]^2,\\
    G_\pm(R)&\triangleq\qty(e+\ell R\pm\sqrt{v_{R;\kappa}(R)})^2.
\end{align}
\end{subequations}
Note that using the identities
\begin{subequations}
    \begin{align}
    F_+(R)F_-(R)&=(e^2+\kappa^2\mathcal C)^2(R^2-\kappa^2)^2,\\
    G_+(R)G_-(R)&=(\mathcal C+\ell^2)^2(R^2-\kappa^2)^2,
\end{align}
\end{subequations}
one can rewrite \eqref{eqn:paramNNt} and \eqref{eqn:paramNNphi} as
\begin{subequations}
    \begin{align}
    t(R)&=-\frac{1}{2\kappa}\log\frac{F_+(R)}{(e^2+\kappa^2\mathcal{C})(R^2-\kappa^2)},\\
    \phi(R)&= \frac{3\ell}{4q}\log\qty[e\ell-\mathcal{C}R+q\sqrt{v_{R;\kappa}(R)}]\nonumber\\
    &\quad-\frac{1}{2}\log\frac{G_+(R)}{(\mathcal{C}+\ell^2)(R^2-\kappa^2)}+\ell\Phi_\theta(\lambda(R)-\lambda_-).
\end{align}
\end{subequations}
The geodesic motion starts from the past horizon, reaches $R_-$ at Mino time $\lambda_-$ and crosses the future horizon at  $\lambda_H-\lambda_-=\frac{1}{q}\text{arccosh}\frac{e\ell-\kappa\mathcal{C}}{\sqrt{(\mathcal{C}+\ell^2)(e^2+\kappa^2\mathcal{C})}}$.

\paragraph{Bounded$_<(e,\ell)$.} The parameters obey $\mathcal{C}>0$ and $e\neq0$; therefore, $R_+$ is positive and the initial condition is chosen as $R(\lambda_+)=R_+$. One defines $q\triangleq\sqrt{\mathcal{C}}$ and gets the explicit form:
\begin{align}
    R(\lambda)&= \frac{1}{\mathcal{C}}\qty[e\ell+\sqrt{(\mathcal{C}+\ell^2)(e^2+\kappa^2\mathcal{C})}\cos\qty(q(\lambda-\lambda_+))].
\end{align}
The parametrized form is given by \eqref{eqn:paramNNt} and \eqref{eqn:paramNNphi}, with the replacement rule $q\to iq$. A manifestly real form of $\phi$ is
\begin{align}
    \phi(R)&=\frac{3\ell}{4q}\arctan\frac{q\sqrt{v_{R;\kappa}}}{e\ell-\mathcal C R}-\frac{1}{2}\log\frac{G_+(R)}{(\mathcal{C}+\ell^2)(R^2-\kappa^2)}+\ell\Phi_\theta(\lambda(R)-\lambda_-).
\end{align}
The geodesic motion starts from the white hole past horizon, reaches $R_+$ at Mino time $\lambda_+$ and crosses the future horizon at  $\lambda_H-\lambda_+=\frac{1}{q}\,\text{arccos}\frac{\kappa\mathcal{C}-e\ell}{\sqrt{(\mathcal{C}+\ell^2)(e^2+\kappa^2\mathcal{C})}}$.

\paragraph{Plunging$(e,\ell)$.} The parameters satisfy $\mathcal C<0$ and $e>-\kappa q$, where $q\triangleq\sqrt{-\mathcal C}$. The roots of the radial potential are consequently either complex or negative. In the complex case ($e^2+\kappa^2\mathcal{C} < 0$), we define the real quantity 
\begin{equation}
    R_f \triangleq\frac{1}{\mathcal{C}}\qty[e\ell-\sqrt{-(\mathcal{C}+\ell^2)(e^2+\kappa^2\mathcal{C})}]
\end{equation}
and impose the final condition $R({\lambda}_f)={R}_f$, leading to
\begin{align}
    R(\lambda)&= \frac{1}{\mathcal{C}}\bigg[e\ell-\sqrt{-( \mathcal{C}+\ell^2)(e^2+\kappa^2\mathcal{C})}\times\qty(\cosh{q(\lambda-{\lambda}_f)}-\sqrt{2}\sinh{q(\lambda-{\lambda}_f)})\bigg].\nonumber
\end{align}
Note that $\lim_{x\to\pm\infty}\cosh x -\sqrt{2}\sinh x=\mp\infty$ leads to the expected behavior. For negative roots, $R=R_-$ can be used as the final condition. In both cases, the parametrized form is again as given in \eqref{eqn:paramNNt} and \eqref{eqn:paramNNphi}. The orbit starts from $R=+\infty$ at $\lambda=-\infty$ and reaches the horizon at Mino time $\lambda_H-{\lambda}_f=\frac{1}{q}\,\text{arcsinh}\, 1+\log\frac{e\ell-\kappa\mathcal{C}+q((e+\kappa\ell)}{\sqrt{-(\mathcal{C}+\ell^2)(e^2+\kappa^2\mathcal{C})}}$.

\paragraph{Def\mbox{}lecting$(e,\ell)$.} One has $\mathcal{C}<0$ and $e<0$. Choosing the initial condition as $R(\lambda_-)=R_-$ and defining again $q\triangleq\sqrt{-\mathcal{C}}$, we get
\begin{align}
    R(\lambda)&=\frac{1}{\mathcal{C}}\qty[e\ell-\sqrt{(\mathcal{C}+\ell^2)(e^2+\kappa^2\mathcal{C})}\cosh\qty(q(\lambda-\lambda_-))].
\end{align}
The parametrized form is
\begin{subequations}
    \begin{align}
    t(R)&=-\frac{1}{4\kappa}\log\frac{F_+(R)}{F_-(R)},\\
    \phi(R)&=- \frac{3\ell}{4q}\log\qty[e\ell-\mathcal{C}R+q\sqrt{v_{R;\kappa}(R)}]\nonumber\\
    &\quad+\frac{1}{4}\log\frac{G_+(R)}{G_-(R)}+\ell\Phi_\theta(\lambda(R)-\lambda_-).
\end{align}
\end{subequations}
The orbit starts from $R=+\infty$ at $\lambda=-\infty$, reaches its minimal radial value $R_-$ at Mino time $\lambda_-$, and goes back to the asymptotic region at $\lambda\to+\infty$.

\chapter{Results for the second-order quadratic conserved quantities}
\chaptermark{\textsc{Second-order conserved quantities}}

\vspace{\stretch{1}}

\section{Derivation of the quadratic quantity constraint}\label{app:quadratic_constraint}

The aim of this appendix is to provide a derivation of the reduced expression \eqref{developped_cst} for the constraint of grading $[2,3]$  for the quadratic conserved quantity \eqref{quadratic_ansatz}.

Let us denote by $\expval{f}_{(n)}$ the terms contained in $f$ that are homogeneous of order $\mathcal O(\mathcal S^n)$.
We now take for granted the validity and uniqueness of the solution \eqref{R2}. The $\mathcal O(\mathcal S^2)$ constraint equation takes the form
\begin{align}
    \expval{\dot{\mathcal{Q}}}_{(2)}=\expval{\dot{\mathcal{Q}}_R}_{(2)}+\expval{\dot{\mathcal{Q}}^{\text{quad}}}_{(2)}\stackrel{!}{=}0.
\end{align}
We will compute these two contributions separately.

\subsection{Terms coming from $\mathcal Q^{\text{quad}}$}
Using Eqs. \eqref{MPD_2}, \eqref{masses} and \eqref{p_of_v}, the variation of ${\mathcal{Q}}^{\text{quad}}$ along the trajectory is given by
\begin{align}
   \begin{split}
    \dot{\mathcal{Q}}^{\text{quad}}&=\frac{\text{D}}{\dd\tau}M_{\alpha\beta\gamma\delta}S^{\alpha\beta}S^{\gamma\delta}+2\frac{\text{D}S^{\alpha\beta}}{\dd\tau}M_{\alpha\beta\gamma\delta}S^{\gamma\delta}\\
    &=v^\lambda\qty(\nabla_\lambda M_{\alpha\beta\gamma\delta} S^{\alpha\beta}S^{\gamma\delta}+4M_{\alpha\beta\gamma\lambda}S^{\alpha\beta}p^\gamma)+\mathcal O(\mathcal S^3)\\
    &=\hat p^\lambda \qty(\nabla_\lambda M_{\alpha\beta\gamma\delta} S^{\alpha\beta}S^{\gamma\delta}+4M_{\alpha\beta\gamma\lambda}S^{\alpha\beta}p^\gamma)+\mathcal O(\mathcal S^3)\\
    &= \nabla_\lambda M_{\alpha\beta\gamma\delta}\hat p^\lambda S^{\alpha\beta}S^{\gamma\delta}+\mathcal O(\mathcal S^3).
   \end{split}
\end{align}
Recalling that $S^{\alpha\beta}=2s^{[\alpha}\hat p^{\beta]*}$, we obtain the following equation in terms of the independent variables $s_\alpha$ and $p^\mu$:
\begin{eqnarray}
   \expval{\dot{\mathcal{Q}}^{\text{quad}}}_{(2)}&=4\,\nabla_\mu \sMs\tudud{\alpha}{\nu}{\beta}{\rho}s_\alpha s_\beta \hat p^\mu \hat p^\nu \hat p^\rho
=4\,\nabla_\mu N_{\alpha\nu\beta\rho}s^\alpha s^\beta \hat p^\mu \hat p^\nu \hat p^\rho.\label{Qdot}
\end{eqnarray}

\subsection{Terms coming from $\mathcal Q_R$}
One can perform the splitting
\begin{align}
    \expval{\dot{\mathcal{Q}}_R}_{(2)}&=\expval{\dot{\mathcal{Q}}^{MD}_R}_{(2)}+\expval{\dot{\mathcal{Q}}^Q_R}_{(2)}.
\end{align}
Here, the ``monopole-dipole'' terms $\dot{\mathcal{Q}}_R^{MD}$ are those that where already present in \cite{Compere:2021kjz}:
\begin{align}
 \begin{split}
     \dot{\mathcal{Q}}_R^{MD}&\triangleq \mu^2 D\tud{\lambda}{\rho}\nabla_\lambda K_{\mu\nu}\hat p^\mu \hat p^\nu\hat p^\rho-\frac{\mu}{2}L_{\mu\nu\rho}S^{\mu\nu}R\tud{\rho}{\alpha\beta\gamma}\hat p^\alpha S^{\beta\gamma} +2\mu^2  L_{\mu\lambda\rho}D\tud{\lambda}{\nu}\hat p^\mu \hat p^\nu p^\rho  \\
    &=4 \mu \sWs \tudud{\alpha}{\mu\nu}{\beta}{\rho} s_\alpha s_\beta \hat p^\mu \hat p^\nu \hat p^\rho +O(\mathcal S^3). \label{muProblem}
 \end{split}
\end{align}
where (see Eq. (73) of \cite{Compere:2021kjz})
\begin{align}
    \sWs_{\alpha\beta\gamma\delta\epsilon}=-\frac{1}{2}\,^*L_{\alpha\beta\lambda}R\tud{*\lambda}{\gamma\delta\epsilon}.\label{Wss}
\end{align}
The \textit{relaxed spin vector} $s^\alpha$ is defined from $ S^\alpha\triangleq\Pi^\alpha_\beta s^\beta$ where the part of $\mathbf s$ aligned with $\mathbf p$ is left arbitrary, but is assumed (without loss of generality) to be of the same order of magnitude. 

The tensor $L_{\alpha\beta\gamma}$ is defined in Eq. \eqref{R2}. In order to simplify Eq. \eqref{Wss}, we have on the one hand
\begin{align}
    ^*\epsilon_{\alpha\beta\lambda\rho}\nabla^\rho\mathcal Z=-2g_{\lambda[\alpha}\nabla_{\beta]}\mathcal Z.
\end{align}
On the other hand,
\begin{align}
    \nabla_{[\alpha}K_{\beta]*\lambda}&=2Y_{\lambda[\alpha}\xi_{\beta]}+3g_{\lambda[\alpha}Y\tud{\rho}{\beta}\xi_{\rho]} =3Y_{\lambda[\alpha}\xi_{\beta]}+g_{\lambda[\alpha}\nabla_{\beta]}\mathcal Z,
\end{align}
where $\xi^\alpha=-\frac{1}{3}\nabla_\lambda Y^{*\lambda\alpha}$ is the timelike Killing vector associated with the Killing-Yano tensor.  Gathering these pieces together, we obtain the reduced expression
\begin{align}
    \sWs_{\alpha\beta\gamma\delta\epsilon}s^\alpha s^\beta \hat p^\mu \hat p^\nu \hat p^\rho=-\qty[R\tud{*\lambda}{\gamma\delta\epsilon}Y_{\lambda[\alpha}\xi_{\beta]}+\nabla_{[\alpha}\mathcal Z R^*_{\beta]\gamma\delta\epsilon}]s^\alpha s^\beta\hat p^\mu\hat p^\nu\hat p^\rho.\label{WssS}
\end{align}
The monopole-dipole piece is consequently given by
\begin{align}
    \expval{\dot{\mathcal{Q}}^{MD}_R}_{(2)}&=\left( 2\qty(Y_{\lambda\mu}\xi_\alpha-Y_{\lambda\alpha}\xi_\mu)R\tud{*\lambda}{\nu\beta\rho}-2\nabla_\mu\mathcal Z R^*_{\nu\alpha\beta\rho}\right) s^\alpha s^\beta \hat p^\mu \hat p^\nu \hat p^\rho.\label{exprO}
\end{align}

The ``quadrupolar'' terms $\dot{\mathcal{Q}}_R^{Q}$ are the ones induced by the presence of the quadru-pole, namely
\begin{align}
  \begin{split}
  \dot{\mathcal{Q}}_R^{Q}&\triangleq 2\mu K_{\mu\nu}\hat p^\mu\mathcal F^\nu-\mu\mathcal L\tud{\lambda}{\rho}\nabla_\lambda K_{\mu\nu}\hat p^\mu\hat p^\nu\hat p^\rho\\
  &\quad-2\mu L_{\mu\lambda\rho}\mathcal L\tud{\lambda}{\nu}\hat p^\mu\hat p^\nu\hat p^\rho+\mu L_{\mu\nu\rho}\mathcal L^{\mu\nu}\hat p^\rho.
  \end{split}
\end{align}

Considering only the spin-induced quadrupole \eqref{spin_induced_Q} and making use of the identities \eqref{F} to \eqref{Lv} as well as the explicit form of the tensor $L_{\mu\nu\rho}$ \eqref{R2}, we obtain the reduced expression
\begin{align}
    \begin{split}
    \expval{\dot{\mathcal{Q}}^{Q}_R}_{(2)}&=\kappa \bigg[K_{\mu\lambda}\nabla^\lambda R_{\nu\alpha\beta\rho}-\nabla_\lambda K_{\mu\nu} R\tud{\lambda}{\alpha\beta\rho} \\
    &\quad-\frac{4}{3}\nabla_{[\alpha}K_{\lambda]\nu} R\tud{\lambda}{\mu\beta\rho} -\frac{8}{3}\epsilon_{\alpha\gamma\rho\lambda}\nabla^\lambda \mathcal Z R\tud{\gamma}{\mu\beta\nu}\bigg]\Theta^{\alpha\beta}\hat p^\mu \hat p^\nu \hat p^\rho.
    \end{split}
\end{align}
Thanks to the orthogonality condition $p_\alpha S^\alpha=0$, we can substitute $\Theta^{\alpha\beta}$ in this expression with $\theta^{\alpha\beta}$ defined as 
\begin{equation}
    \theta^{\alpha\beta} = \Pi^{\alpha\beta}\mathcal S^2 -s^\alpha s^\beta .
\end{equation}
After a few algebraic manipulations and making use of Bianchi identities, we obtain 
\begin{align}
    \begin{split}
    &\expval{\dot{\mathcal{Q}}^{Q}_R}_{(2)}=\kappa\bigg[ \left( \nabla_\nu\qty(K_{\mu\lambda}R\tud{\lambda}{\alpha\beta\rho})+\nabla_\nu K_{\mu\lambda}R\tud{\lambda}{\alpha\beta\rho}-\qty(\nu\leftrightarrow\alpha)\right)\\
    &\quad+\frac{4}{3}\nabla_{[\alpha} K_{\mu ] \lambda}R\tud{\lambda}{\nu\beta\rho} {-\frac{16}{3}\nabla_\lambda\mathcal Z g_{\mu[\nu} \,^*\!R\tud{\lambda}{\alpha]\beta\rho}+}\frac{8}{3}\nabla_\mu\mathcal Z\,^*\!R_{\nu\alpha\beta\rho}\bigg]\theta^{\alpha\beta}\hat p^\mu\hat p^\nu\hat p^\rho.
    \end{split}
\end{align}
It is possible to further reduce this expression. It is useful to first derive some properties of the dualizations of Riemann tensor\textit{ in Ricci-flat spacetimes}. For any tensor $M_{abcd}$ with the symmetries of the Riemann tensor, one has
\begin{align}
    \sMs\tud{\alpha\beta}{\mu\nu}&=-6\delta^{\alpha\beta ab}_{[\mu\nu cd]}M\tdu{ab}{cd}.
\end{align}
In Ricci-flat spacetimes, this yields for the Riemann tensor,
\begin{align}
    \sRs_{\alpha\beta\gamma\delta}=-R_{\alpha\beta\gamma\delta}.\label{sRs}
\end{align}
Moreover, dualizing this equation one more time gives rise to the identity
\begin{align}
    R^*_{\alpha\beta\gamma\delta}=^*\! R_{\alpha\beta\gamma\delta}.
\end{align}
Using $R_{\alpha\beta\gamma\delta}=R_{\gamma\delta \alpha\beta}$, we also deduce 
\begin{align}
    R^*_{\alpha\beta\gamma\delta}=R^*_{\gamma\delta\alpha\beta}. \label{relR}
\end{align}
Notice that we also have the identity
\begin{align}
\nabla^\lambda R_{\lambda\alpha\beta\gamma}=0.
\end{align}
Making use of the properties of the Riemann tensor and of the identity $^*R_{\lambda\alpha\beta\rho}g^{\alpha\beta}=0$, we get the additional relations
\begin{subequations}
\begin{align}
    R_{\lambda\alpha\beta\rho}\theta^{\alpha\beta}\hat p^\rho &= -R_{\lambda\alpha\beta\rho}s^\alpha s^\beta \hat p^\rho,\\
    ^*R_{\lambda\alpha\beta\rho}\theta^{\alpha\beta}\hat p^\rho &= -^*R_{\lambda\alpha\beta\rho}s^\alpha s^\beta \hat p^\rho,\\
    R_{\lambda\nu\beta\rho}\theta^{\alpha\beta}\hat p^\nu \hat p^\rho &= R_{\lambda\nu\beta\rho}\qty(g^{\alpha\beta}\mathcal S^2-s^\alpha s^\beta) \hat p^\nu \hat p^\rho.
\end{align}
\end{subequations}
This allows to express $\expval{\dot{\mathcal{Q}}^{Q}_R}_{(2)}$ in terms of the independent variables as
\begin{align}
    \begin{split}
    \expval{\dot{\mathcal{Q}}^{Q}_R}_{(2)}&=-\kappa\bigg[\left( \nabla_\nu\qty(K_{\mu\lambda}R\tud{\lambda}{\alpha\beta\rho})+\nabla_\nu K_{\mu\lambda}R\tud{\lambda}{\alpha\beta\rho}-\qty(\nu\leftrightarrow\alpha)\right)\\
    &\quad+\frac{4}{3}\nabla_{[\alpha} K_{\mu ] \lambda}R\tud{\lambda}{\nu\beta\rho}+\qty(\frac{4}{3}\nabla_\sigma K_{\mu\lambda}R\tudud{\lambda}{\nu}{\sigma}{\rho}+\frac{2}{3}\nabla_\mu K_{\sigma\lambda}R\tudud{\lambda}{\nu}{\sigma}{\rho})g_{\alpha\beta}\\
    &\quad{+\frac{16}{3}\nabla_\lambda\mathcal Z g_{\mu[\nu} \,^*R\tud{\lambda}{\alpha]\beta\rho}}-\frac{8}{3}\nabla_\mu\mathcal Z\,^*R_{\nu\alpha\beta\rho}\bigg]s^\alpha s^\beta \hat p^\mu\hat p^\nu\hat p^\rho.\label{exprN}
    \end{split}
\end{align}
The quantity into brackets appearing in the second line of this expression is actually vanishing:
\begin{align}
    \begin{split}
    &\qty(\frac{4}{3}\nabla_\sigma K_{\mu\lambda}R\tudud{\lambda}{\nu}{\sigma}{\rho}+\frac{2}{3}\nabla_\mu K_{\sigma\lambda}R\tudud{\lambda}{\nu}{\sigma}{\rho})\hat p^\nu \hat p^\rho\\
    &\quad=\frac{2}{3}\qty(2\nabla_\sigma K_{\mu\lambda}+\nabla_\mu K_{\sigma\lambda})R\tudud{\lambda}{\nu}{\sigma}{\rho} \hat p^\nu \hat p^\rho\\
    &\quad=\frac{2}{3}\qty(\nabla_\sigma K_{\mu\lambda}+\nabla_\lambda K_{\sigma\mu}+\nabla_\mu K_{\lambda\sigma})R\tudud{\lambda}{\nu}{\sigma}{\rho} \hat p^\nu \hat p^\rho=0,
    \end{split}
\end{align}
which follows from the definition of Killing tensors.

\subsection{Reduced expression}

We can now write down the complete expression $\dot{\mathcal{Q}}^{(2)}=\dot{\mathcal{Q}}_R^{MD}+\dot{\mathcal{Q}}_R^{Q}+\dot{\mathcal{Q}}^{\text{quad}}$. Gathering all the pieces \eqref{Qdot}, \eqref{exprO}, \eqref{exprN}, we get
\begin{align}
    \begin{split}
    \dot{\mathcal{Q}}^{(2)}&=\bigg[\kappa\nabla_\alpha\qty(K_{\mu\lambda}R\tud{\lambda}{\nu\beta\rho})+\nabla_\nu\qty(4 N_{\alpha\mu\beta\rho} -\kappa K_{\mu\lambda}R\tud{\lambda}{\alpha\beta\rho})-\kappa\nabla_\mu K_{\nu\lambda}R\tud{\lambda}{\alpha\beta\rho}\\
    &\quad+\frac{2\kappa}{3}\nabla_{[\mu}K_{\lambda]\alpha}R\tud{\lambda}{\nu\beta\rho}+2\qty(Y_{\lambda\mu}\xi_\alpha-Y_{\lambda\alpha}\xi_\mu)R\tud{*\lambda}{\nu\beta\rho}-{\frac{16\kappa}{3}\nabla_\lambda\mathcal Z g_{\mu[\nu} R\tud{*\lambda}{\alpha]\beta\rho}}\\
    &\quad+2\qty(\frac{4\kappa}{3}-1)\nabla_\mu\mathcal Z R^*_{\nu\alpha\beta\rho}\bigg]s^\alpha s^\beta \hat p^\mu \hat p^\nu \hat p^\rho+\mathcal O( \mathcal S^3).\label{constraint_pre_reduced}
    \end{split}
\end{align}
In order to further simplify this expression we  first derive some additional useful identities. One has
\begin{align}
    \nabla_\mu K_{\nu\lambda}R\tud{\lambda}{\alpha\beta\rho}\hat p^\mu\hat p^\nu=Y\tud{\sigma}{\mu}\nabla_\lambda Y_{\sigma\nu} R\tud{\lambda}{\alpha\beta\rho}\hat p^\mu \hat p^\nu.
\end{align}
We obtain the relation
\begin{align}
    \nabla_\mu K_{\nu\lambda}R\tud{\lambda}{\alpha\beta\rho}\hat p^\mu\hat p^\nu=\qty[\qty(Y_{\lambda\mu}\xi_\alpha-Y_{\alpha\mu}\xi_\lambda)\,^*\!R\tud{\lambda}{\nu\beta\rho}-g_{\alpha\nu}Y_{\lambda\mu}\xi_\kappa \,^*\!R\tud{\lambda\kappa}{\beta\rho}]\hat p^\mu\hat p^\nu.
\end{align}
In a similar fashion, one can prove that
\begin{align}
    \begin{split}
    &\nabla_{[\mu}K_{\lambda]\alpha}R\tud{\lambda}{\nu\beta\rho}\hat p^\mu \hat p^\nu=\bigg[\frac{3}{2}\qty(Y_{\alpha\mu}\xi_\lambda+Y_{\lambda\alpha}\xi_\mu)\,^*\!R\tud{\lambda}{\nu\beta\rho}-\frac{3}{2}g_{\mu\nu} Y_{\lambda\alpha}\xi_\kappa\,^*\!R\tud{\lambda\kappa}{\beta\rho}\\
    &\quad+\frac{1}{2}g_{\alpha\mu}\nabla_\lambda\mathcal Z \,^*\!R\tud{\lambda}{\nu\beta\rho}-\frac{1}{2}g_{\mu\nu}\nabla_\lambda\mathcal Z\,^*\!R\tud{\lambda}{\alpha\beta\rho}-\frac{1}{2}\nabla_\mu \mathcal Z ^*\!R_{\alpha\nu\beta\rho}\bigg]\hat p^\mu\hat p^\nu.
    \end{split}
\end{align}
When the dust settles, we are left with
\begin{align}
    \begin{split}
    \dot{\mathcal{Q}}^{(2)}&=\bigg[4\nabla_\mu N_{\alpha\nu\beta\rho}+2\kappa\nabla_{[\alpha}\mathcal M^{(1)}_{\vert\mu\vert \nu]\beta\rho} +\kappa\qty(g_{\alpha\mu}Y_{\lambda\nu}-g_{\mu\nu}Y_{\lambda\alpha})\xi_\kappa\,^*\!R\tud{\lambda\kappa}{\beta\rho}\\
    &\quad +\qty(2\kappa Y_{\alpha\mu}\xi_\lambda+\qty(2-\kappa)\qty(Y_{\lambda\mu}\xi_\alpha+Y_{\alpha\lambda}\xi_\mu)+{3\kappa}g_{\alpha\mu}\nabla_\lambda\mathcal Z)\,^*\!R\tud{\lambda}{\nu\beta\rho}\\
    &\quad-3\kappa g_{\mu\nu}\nabla_\lambda\mathcal Z\,^*\!R\tud{\lambda}{\alpha\beta\rho}+(3\kappa-2)\nabla_\mu\mathcal Z R^*_{\nu\alpha\beta\rho}\bigg]s^\alpha s^\beta \hat p^\mu \hat p^\nu \hat p^\rho
    +\mathcal O( \mathcal S^3),
    \end{split}
\end{align}
where $\mathcal M^{(1)}_{\alpha\beta\gamma\delta} \triangleq K_{\alpha\lambda}R\tud{\lambda}{\beta\gamma\delta}$. This is precisely the constraint equation \eqref{developped_cst}.

\section{Reducing the constraints with the covariant building blocks: intermediate algebra}\label{app:CBB_identities}

A bit of cumbersome (but straightforward) algebra leads to the following identities written most shortly in the $\alpha$-$\omega$ formulation,
\begin{subequations}
\begin{align}
    Y_{\alpha\mu}s^\alpha\hat p^\mu&=-\alpha^{(0,-1)}_B,\\
    \nabla_\mu\mathcal Z\hat p^\mu&=-\alpha_A^{(0,-1)},\\
    R^*_{\nu\alpha\beta\rho}s^\alpha s^\beta\hat p^\nu\hat p^\rho&=3M\omega^{(0,3)}_{B^2}+M\qty(\mathcal A^2+\mathcal P^2\mathcal S^2)\omega_1^{(0,3)},\\
    \nabla_\lambda\mathcal Z R\tud{*\lambda}{\alpha\beta\rho} s^\alpha s^\beta \hat p^\rho &= \frac{3M}{2}\qty
    (E_s\omega^{(0,2)}_B-D\omega^{(1,3)}_B)\nonumber\\
    &\quad+M\qty(\mathcal S^2\alpha^{(0,-1)}_A-\mathcal A\alpha^{(0,-1)}_C)\omega_1^{(0,3)},\\
    \nabla_\lambda\mathcal Z R\tud{*\lambda}{\nu\beta\rho} s^\beta \hat p^\nu \hat p^\rho &= \frac{3M}{2}\qty(E\omega^{(0,2)}_B-F\omega^{(1,3)}_B)\nonumber\\
    &\quad+M\qty(\mathcal A\alpha^{(0,-1)}_A+\mathcal P^2\alpha^{(0,-1)}_C)\alpha^{(0,3)}_1,\\
    Y_{\alpha\lambda}R\tud{*\lambda}{\nu\beta\rho}s^\alpha s^\beta\hat p^\nu\hat p^\rho&=-M\mathcal A\omega^{(0,2)}_B+\frac{M}{2}\qty(\mathcal A\omega^{(1,3)}_{\bar B}-3G\omega^{(1,3)}_B),\\
    Y_{\mu\lambda}R\tud{*\lambda}{\nu\beta\rho} s^\beta\hat p^\mu \hat p^\nu\hat p^\rho&=M\mathcal P^2\omega^{(0,2)}_B-\frac{M}{2}\qty(\mathcal P^2\omega^{(1,3)}_{\bar B}+3H\omega^{(1,3)}_B),\\
    Y_{\mu\lambda}R\tud{*\lambda}{\alpha\beta\rho} s^\alpha s^\beta\hat p^\mu \hat p^\rho&=-M\mathcal A\omega^{(0,2)}_B+\frac{M}{2}\qty(\mathcal A\omega^{(1,3)}_{\bar B}-3G\omega^{(1,3)}_B),
    \\
    \xi_\lambda R\tud{*\lambda}{\nu\beta\rho} s^\beta\hat p^\nu\hat p^\rho &= -3M\omega^{(0,3)}_{AB}+M\qty(E_s\mathcal P^2+E\mathcal A)\omega^{(0,3)}_1,\\
    Y_{\lambda\kappa}R\tud{*\lambda\kappa}{\beta\rho} s^\beta\hat p^\rho &= 4M\omega^{(0,2)}_B,\\
    Y_{\alpha\lambda}\xi_\kappa R\tud{*\lambda\kappa}{\beta\rho} s^\alpha s^\beta\hat p^\rho&=  \frac{M}{2}\qty(E_S\omega^{(0,2)}_B-3D\omega^{(1,3)}_B)\nonumber\\
    &\quad +M\qty(\mathcal A\omega^{(0,-1)}_C-\mathcal S^2\omega^{(0,-1)}_A) \alpha^{(0,3)}_1,\\ 
    Y_{\nu\lambda}\xi_\kappa R\tud{*\lambda\kappa}{\beta\rho}  s^\beta\hat p^\nu\hat p^\rho&=\frac{M}{2}\qty(E\omega^{(0,2)}_B-3F\omega^{(1,3)}_B)\nonumber\\
    &\quad-M\qty(\mathcal A\omega^{(0,-1)}_A+\mathcal P^2\omega^{(0,-1)}_C)\alpha_1^{(0,3)},\\
    \xi_\lambda Y_{\alpha\kappa}R\tudu{*\lambda}{\rho\beta}{\kappa}s^\alpha s^\beta\hat p^\rho&=M\qty(\mathcal S^2\omega^{(0,2)}_A+E_s\omega^{(1,3)}_{\bar B}-\mathcal A \omega^{(1,3)}_{\bar C})\nonumber\\
    &\quad+\frac{M}{2}\qty(3 I\omega^{(1,3)}_A+\mathcal S^2\omega^{(1,3)}_{\bar A}),\\
    \xi_\lambda Y_{\beta\kappa}R\tudu{*\lambda}{\nu\rho}{\kappa} s^\beta\hat p^\nu \hat p^\rho&=\frac{3M}{2}\qty(\mathcal A\omega^{(0,2)}_A+G\omega^{(1,3)}_A)\nonumber\\
    &\quad+M\qty(E\alpha^{(0,-1)}_B+\mathcal P^2\alpha^{(0,-1)}_C)\omega_1^{(0,3)}.
\end{align}
\end{subequations}
Let us turn to identities involving covariant derivatives of the Riemann tensor. Making use of the identities \eqref{diff_to_alg} enforces the fundamental relation:
\begin{align}
    \begin{split}
    R_{\alpha\nu\beta\rho;\mu}&=M\nabla_\mu\Re\qty(\frac{3\qty(\mathcal R N_{\alpha\nu})\qty(\mathcal R) N_{\beta\rho}-\mathcal R^2 G_{\alpha\nu\beta\rho}}{\mathcal R^5})\\
    &=-M\Im\bigg\lbrace\mathcal R^{-4}\bigg[5N_{\mu\lambda}\xi^\lambda\qty(3 N_{\alpha\nu}N_{\beta\rho}-G_{\alpha\nu\beta\rho})\\
    &\quad-3\qty(G_{\alpha\nu\mu\lambda}\xi^\lambda N_{\beta\rho}+N_{\alpha\nu}G_{\beta\rho\mu\lambda}\xi^\lambda)+2 N_{\mu\lambda}\xi^\lambda G_{\alpha\nu\beta\rho}\bigg]\bigg\rbrace.
    \end{split}
\end{align}
It leads to the following `differential-to-algebraic' identities:
\begin{subequations}
\begin{align}
    &\nabla_\mu R_{\alpha\nu\beta\rho}s^\alpha s^\beta\hat p^\mu\hat p^\nu\hat p^\rho\nonumber\\
    &=3M\Im\qty{\mathcal R^{-4}\qty[5AB^2-2B\qty(\mathcal AE+\mathcal P^2 E_S)+A\qty(\mathcal S^2\mathcal P^2+\mathcal A^2)]}
    ,\\
    &K_{\alpha\lambda}\nabla_\mu R\tud{\lambda}{\nu\beta\rho} s^\alpha s^\beta\hat p^\mu\hat p^\nu\hat p^\rho\nonumber\\
    &=-\frac{3M}{2}\Re\qty(\mathcal R^2) \Im\qty{\mathcal R^{-4}\qty[5AB^2-2B\qty(\mathcal AE+\mathcal P^2 E_S)+A\qty(\mathcal S^2\mathcal P^2+\mathcal A^2)]}\nonumber\\
    &\quad-\frac{3M}{2}\abs{\mathcal R}^2\Im\big\lbrace\mathcal R^{-4}\big[5A\abs{B}^2+A(\mathcal P^2 I+\mathcal A G)-B\qty(GE-D\mathcal P^2)\nonumber\\
    &\quad-\bar B\qty(\mathcal A E+\mathcal P^2 E_s)\big]\big\rbrace
    ,\\
    &K_{\nu\lambda}\nabla_\mu R\tud{\lambda}{\alpha\beta\rho} s^\alpha s^\beta\hat p^\mu\hat p^\nu\hat p^\rho \nonumber\\
    &=\frac{3M}{2}\Re\qty(\mathcal R^2) \Im\qty{\mathcal R^{-4}\qty[5AB^2-2B\qty(\mathcal AE+\mathcal P^2 E_S)+A\qty(\mathcal S^2\mathcal P^2+\mathcal A^2)]}\nonumber\\
    &\quad+\frac{3M}{2}\abs{\mathcal R}^2\Im\big\lbrace\mathcal R^{-4}\big[5A\abs{B}^2+A\qty(\mathcal A G-\mathcal S^2 H)\nonumber\\
    &\quad+B\qty(E_sH+\mathcal A F+\bar AB-A\bar B)-\bar B\qty(\mathcal A E+\mathcal P^2E_s)\big]\big\rbrace
    ,\\
    &K_{\mu\lambda}\nabla_\alpha R\tud{\lambda}{\nu\beta\rho} s^\alpha s^\beta\hat p^\mu\hat p^\nu\hat p^\rho\nonumber\\
    &=\frac{3M}{2}\abs{\mathcal R}^2\Im\big\lbrace\mathcal R^{-4}\big[C\qty(\mathcal P^2 G+\mathcal A H)-B\qty(EG+\mathcal AF)+B\qty(\bar AB-A\bar B)\big]\big\rbrace.
\end{align}
\end{subequations}

\subsubsection{Ansatz terms for the quadratic invariant}\label{app:ansatz_terms}
We provide here the explicit form of the terms constituting the Ansatz for the black hole quadratic invariant in terms of the covariant building blocks. We use the notation $N^{(A)}\triangleq N_{\mu\alpha\nu\beta}^{(A)}s^\alpha s^\beta\hat p^\mu\hat p^\nu$ ($A=1,\ldots,4)$.
\begin{subequations}
\begin{align}
    N^{(1)}&=-M\abs{B}^2\alpha^{(1,2)}_1+\frac{M}{4}\bigg[3\qty(\alpha_{B^2}^{(0,1)}+\alpha_{B^2}^{(2,3)})\nonumber\\
    &\quad+\qty(\mathcal A^2+\mathcal P^2\mathcal S^2)\qty(\alpha^{(0,1)}_1+\alpha^{(2,3)}_1)],\\
    N^{(2)}&=-\frac{M}{4}\qty[\qty(\mathcal A^2+\mathcal P^2\mathcal S^2)\alpha_1^{(0,1)}+\alpha_{B^2}^{(0,1)}],\\
    N^{(3)}&=\frac{1}{4}\qty[-\qty(\mathcal A^2+\mathcal P^2\mathcal S^2)+E \qty(E \mathcal S^2-E_S\mathcal A)-E_S\qty(E_S\mathcal P^2+\mathcal A E)]\nonumber\\
    &\quad+\frac{M}{2}\qty(\mathcal A^2+\mathcal P^2\mathcal S^2)\alpha^{(0,1)}_1,\\
    N^{(4)}&=\frac{1}{2}\qty(\mathcal A^2+\mathcal P^2\mathcal S^2)\qty(2M\alpha^{(0,1)}_1-1).
\end{align}
\end{subequations}

\subsubsection{Derivatives of the scalar basis}
A direct computation gives the following identities
\begin{subequations}
\begin{align}
    \hat\nabla\mathcal S^2&=\hat\nabla\mathcal P^2=\hat\nabla\mathcal A=0,\\
    \hat\nabla A&=-i\qty(E^2+\mathcal P^2\xi^2)\mathcal R^{-1}+\frac{iM}{2}\qty(\frac{\mathcal P^2}{\mathcal R^2}+\frac{H}{\bar{\mathcal R}^2})-i A^2\mathcal R^{-1},\\
    \hat\nabla B&=i\qty(\mathcal AE+\mathcal P^2E_s)\mathcal R^{-1}-iAB\mathcal R^{-1},\\
    \hat\nabla C&=i\qty(\mathcal A\xi^2-EE_s)\mathcal R^{-1}+\frac{iM}{2}\qty(\frac{G}{\bar{\mathcal R}^2}-\frac{\mathcal A}{\mathcal R^2})-iAC\mathcal R^{-1},\\
    \hat\nabla D&=2E\omega^{(0,1)}_{\bar{C}}-2\xi^2\omega^{(0,1)}_{\bar{B}}+M\omega^{(0,2)}_{\bar{B}}+2D\omega_A^{(0,1)},\\
    \hat\nabla E&=0,\\
    \hat\nabla E_s&=-M\omega_B^{(0,2)},\\
    \hat\nabla F&=2E\omega_{\bar A}^{(0,1)}+2F\omega_A^{(0,1)},\\
    \hat\nabla G&=2G\omega_A^{(0,1)}+2E\omega_{\bar B}^{(0,1)}+2\mathcal P^2\omega_{\bar C}^{(0,1)},\\
    \hat\nabla H&=2\omega_A^{(0,1)}H+2\mathcal P^2\omega_{\bar A}^{(0,1)}
    ,\\
    \hat\nabla I&=2\mathcal S^2\omega^{(0,1)}_{\bar A}+4E_s\omega^{(0,1)}_{\bar B}-4\mathcal A\omega^{(0,1)}_{\bar C}+2I\omega_A^{(0,1)}.
\end{align}
\end{subequations}

\subsubsection{Directional derivatives of the Ansatz terms}

We aim to compute the contributions $\hat\nabla N^{(A)}$. Let us proceed step by step. First, we compute $N_{\alpha\beta\gamma\delta}^{(A)}$ for each $A=1,\dots 4$. In Ricci-flat spacetimes, using the identity \eqref{sRs}, we obtain
\begin{align}
    N^{(1)}_{\alpha\beta\gamma\delta}=-\frac{1}{2}K R_{\alpha\beta\gamma\delta}+\mathcal M^{(1)}_{[\alpha\beta]\gamma\delta}
\end{align}
where $K\triangleq K\tud{\alpha}{\alpha}$. Moreover, noticing that
\begin{align}
    ^*\!\mathcal M^{(2)}_{\alpha\beta\gamma\delta}&=Y_{\lambda[\alpha}\,^*\!R\tud{\lambda}{\beta]\sigma\delta}Y\tud{\sigma}{\gamma}=-R\tudu{*}{\sigma\delta[\alpha}{\lambda}Y_{\beta]\lambda}Y\tud{\sigma}{\gamma}
\end{align}
and using the symmetries of the Riemann tensor, we can write
\begin{align}
    N^{(2)}_{\alpha\nu\beta\rho}&=-Y_{\lambda[\alpha}\sRs\tudu{\lambda}{\nu][\beta}{\sigma}Y_{\rho]\sigma}.
\end{align}
In Ricci-flat spacetimes, this boils down to
\begin{align}
    N^{(2)}_{\alpha\nu\beta\rho}&=Y_{\lambda[\alpha} R\tudu{\lambda}{\nu][\beta}{\sigma}Y_{\rho]\sigma}.
\end{align}
The two last computations are more straightforward and give
\begin{align}
    N^{(3)}_{\alpha\beta\gamma\delta}&=\frac{1}{2}N^{(4)}_{\alpha\beta\gamma\delta}-\xi_{[\alpha}g_{\beta][\gamma}\xi_{\delta]},\qquad 
N^{(4)}_{\alpha\beta\gamma\delta}=-g_{\alpha[\gamma}g_{\delta]\beta}\xi^2.
\end{align}
We shall evaluate the following covariant derivatives:
\begin{align}
    \nabla_\mu \qty(K R_{\alpha\nu\beta\rho}) s^\alpha s^\beta \hat p^\mu \hat p^\nu \hat p^\rho &=\qty(\nabla_\mu K R_{\alpha\nu\beta\rho}+ K\nabla_\mu R_{\alpha\nu\beta\rho})s^\alpha s^\beta \hat p^\mu \hat p^\nu \hat p^\rho.\label{eq_DK}
\end{align}
Using \eqref{DK} the first term of the right-hand side of \eqref{eq_DK} can be written
\begin{align}
    \begin{split}
    &\nabla_\mu K R_{\alpha\nu\beta\rho}s^\alpha s^\beta \hat p^\mu \hat p^\nu \hat p^\rho\\
    &=\bigg\lbrace 4\qty(\xi_\lambda Y_{\alpha\mu}+\xi_\mu Y_{\lambda\alpha}-\xi_\alpha Y_{\lambda\mu})R\tud{*\lambda}{\nu\beta\rho}\\
    &\quad+2\qty[g_{\alpha\mu}\qty(2Y_{\lambda\nu}\xi_\kappa-Y_{\lambda\kappa}\xi_\nu)+g_{\mu\nu}\qty(2\xi_\lambda Y_{\kappa\alpha}+\xi_\alpha Y_{\lambda\kappa})]R\tud{*\lambda\kappa}{\beta\rho}
    \bigg\rbrace s^\alpha s^\beta\hat p^\mu \hat p^\nu \hat p^\rho\\
    &=4\bigg[ \qty(\xi_\lambda Y_{\alpha\mu}+\xi_\mu Y_{\lambda\alpha}-\xi_\alpha Y_{\lambda\mu})R\tud{*\lambda}{\nu\beta\rho}\\
    &\quad+\qty(Y_{\lambda\kappa}\xi_{[\alpha}+2\xi_\lambda Y_{\kappa[\alpha})g_{\mu]\nu}
    R\tud{*\lambda\kappa}{\beta\rho}
    \bigg] s^\alpha s^\beta\hat p^\mu \hat p^\nu \hat p^\rho.
    \end{split}
\end{align}
Gathering the two pieces above yields
\begin{align}
    \begin{split}
    &\nabla_\mu N^{(1)}_{\alpha\nu\beta\rho} s^\alpha s^\beta \hat p^\mu \hat p^\nu \hat p^\rho \\
    &=\bigg[K_{\lambda[\alpha}R\tud{\lambda}{\nu]\beta\rho;\mu}-\frac{1}{2}K\nabla_\mu R_{\alpha\nu\beta\rho}-\frac{1}{2}\nabla_\mu\mathcal Z R^*_{\alpha\nu\beta\rho}\\
    &\quad-\frac{1}{2}\qty(\nabla_\lambda\mathcal Z g_{\mu\nu}+Y_{\lambda\mu}\xi_\nu)R\tud{*\lambda}{\alpha\beta\rho}
    +\frac{1}{2}\big(\nabla_\lambda\mathcal Z g_{\mu\alpha}+{ 3Y_{\lambda\mu}\xi_\alpha+Y_{\mu\alpha}\xi_\lambda}\\
    &\quad+2Y_{\alpha\lambda}\xi_\mu\big)R\tud{*\lambda}{\nu\beta\rho}-2\qty(Y_{\lambda\kappa}\xi_{[\alpha}+\xi_\lambda Y_{\kappa[\alpha})g_{\mu]\nu}R\tud{*\lambda\kappa}{\beta\rho}
    \bigg]s^\alpha s^\beta \hat p^\mu \hat p^\nu \hat p^\rho. \label{DN1}
    \end{split}
\end{align}
On the other hand,
\begin{align}
    \begin{split}
    &\nabla_\mu N^{(2)}_{\alpha\nu\beta\rho} s^\alpha s^\beta \hat p^\mu \hat p^\nu \hat p^\rho \\
    &=\nabla_\mu \qty(Y_{\lambda[\alpha} R\tudu{\lambda}{\nu][\beta}{\sigma}Y_{\rho]\sigma}) s^\alpha s^\beta \hat p^\mu \hat p^\nu \hat p^\rho\\
    &=\nabla_\mu \qty(Y_{\lambda\alpha} R\tudu{\lambda}{\nu\beta}{\sigma}Y_{\rho\sigma}) \hat p^\mu s^{[\alpha}\hat p^{\nu]} s^{[\beta} \hat p^{\rho]}\\
    &=\qty(2\nabla_\mu Y_{\lambda\alpha}Y_{\rho\sigma}R\tudu{\lambda}{\nu\beta}{\sigma}+Y_{\lambda\alpha}Y_{\rho\sigma}\nabla_\mu R\tudu{\lambda}{\nu\beta}{\sigma})\hat p^\mu s^{[\alpha}\hat p^{\nu]} s^{[\beta} \hat p^{\rho]}\\
    &=\qty(2\epsilon_{\mu\lambda\alpha\kappa}\xi^\kappa Y_{\rho\sigma}R\tudu{\lambda}{\nu\beta}{\sigma}+Y_{\lambda\alpha}Y_{\rho\sigma}\nabla_\mu R\tudu{\lambda}{\nu\beta}{\sigma})\hat p^\mu s^{[\alpha}\hat p^{\nu]} s^{[\beta} \hat p^{\rho]}\\
    &={\frac{1}{2} }2\epsilon_{\mu\lambda\alpha\kappa}\xi^\kappa Y_{\rho\sigma}R\tudu{\lambda}{\nu\beta}{\sigma} s^{\alpha} \hat p^\mu \hat p^{\nu} s^{[\beta} \hat p^{\rho]} + Y_{\lambda[\alpha|}\nabla_\mu R\tudu{\lambda}{|\nu][\beta}{\sigma}Y_{\rho]\sigma}s^\alpha s^\beta\hat p^\mu\hat p^\nu\hat p^\rho\\
    &=\bigg[\frac{1}{2}Y_{\alpha\lambda}\xi_\mu R\tud{*\lambda}{\nu\beta\rho}-\frac{1}{2}Y_{\mu\lambda}\xi_\nu R\tud{*\lambda}{\alpha\beta\rho}-\frac{1}{2}\qty(g_{\mu\nu}Y_{\alpha\kappa}+g_{\alpha\mu}Y_{\nu\kappa})\xi_\lambda R\tudu{*\lambda}{\rho\beta}{\kappa}\\
    &\quad+\frac{1}{2}g_{\mu\nu}\xi_\lambda Y_{\rho\kappa}R\tudu{*\lambda}{\alpha\beta}{\kappa}+\frac{1}{2}g_{\alpha\mu}\xi_\lambda Y_{\beta\kappa}R\tudu{*\lambda}{\nu\rho}{\kappa}
    + Y_{\lambda[\alpha|}\nabla_\mu R\tudu{\lambda}{|\nu][\beta}{\sigma}Y_{\rho]\sigma}\bigg]\\
    &\quad\times s^\alpha s^\beta\hat p^\mu\hat p^\nu\hat p^\rho.\label{DN2bis}
    \end{split}
\end{align}

Finally, one has
\begin{align}
    \nabla_\mu\qty(\xi_{(\alpha}\xi_{\beta)})=2\nabla_\mu\xi_{(\alpha}\xi_{\beta)}.
\end{align}
In Ricci-flat spacetimes, Eq. (157) of \cite{Compere:2021kjz} boils down to the identity
\begin{align}
    \nabla_\alpha\xi_\beta=-\frac{1}{4}R^*_{\alpha\beta\gamma\delta}Y^{\gamma\delta}.
\end{align}
All in all, we obtain the relations
\begin{align}
    \nabla_\mu\qty(\xi_{(\alpha}\xi_{\beta)})&=\frac{1}{2}\xi_{(\alpha}R^*_{\beta)\mu\gamma\delta}Y^{\gamma\delta},\qquad
    \nabla_\mu\qty(\xi^2)=\frac{1}{2}\xi_\lambda R\tud{*\lambda}{\mu\gamma\delta}Y^{\gamma\delta}.
\end{align}
This yields
\begin{align}
    \begin{split}
    \nabla_\mu N^{(3)}_{\alpha\nu\beta\rho} s^\alpha s^\beta \hat p^\mu \hat p^\nu \hat p^\rho&=\qty[\frac{1}{2}\nabla_\mu N^{(4)}_{\alpha\nu\beta\rho}-\nabla_\mu\qty(\xi_{[\alpha}g_{\nu][\beta}\xi_{\rho]})]s^\alpha s^\beta\hat p^\mu\hat p^\nu\hat p^\rho\\
    &=\qty[\frac{1}{2}\nabla_\mu N^{(4)}_{\alpha\nu\beta\rho}+\frac{1}{2}\xi_{[\alpha}g_{\nu][\rho}R^*_{\beta]\mu\gamma\delta}Y^{\gamma\delta}]s^\alpha s^\beta\hat p^\mu\hat p^\nu\hat p^\rho\\
    &=\qty[\frac{1}{2}\nabla_\mu N^{(4)}_{\alpha\nu\beta\rho}+\frac{1}{4}\xi_{[\alpha}g_{\mu]\nu}R^*_{\beta\rho\gamma\delta}Y^{\gamma\delta}]s^\alpha s^\beta\hat p^\mu\hat p^\nu\hat p^\rho\label{DN3}
    \end{split}
\end{align}
as well as
\begin{align}
    \begin{split}
    \nabla_\mu N^{(4)}_{\alpha\nu\beta\rho} s^\alpha s^\beta \hat p^\mu \hat p^\nu \hat p^\rho &= -g_{\alpha[\beta}g_{\rho]\nu}\nabla_\mu\qty(\xi^2) s^\alpha s^\beta \hat p^\mu \hat p^\nu \hat p^\rho\\
    &=-\frac{1}{2}g_{\alpha[\beta}g_{\mu]\nu}\xi_\lambda R\tud{*\lambda}{\rho\gamma\delta}Y^{\gamma\delta}s^\alpha s^\beta \hat p^\mu \hat p^\nu \hat p^\rho.\label{DN4}
    \end{split}
\end{align}

Let us now use the preceding relations to write down the desired contribution in the covariant building blocks language. We will demonstrate the procedure on the $A=1$ term, which turns out to be the most involved to compute. The computations of the others contributions will not be detailed in this text. One has
\begin{align}
   \begin{split}
    &\qty(K_{\lambda[\alpha}R\tud{\lambda}{\nu]\beta\rho;\mu}-\frac{1}{2}K\nabla_\mu R_{\alpha\nu\beta\rho})s^\alpha s^\beta \hat p^\mu \hat p^\nu \hat p^\rho=\frac{15M}{4}\qty(\omega^{(0,2)}_{AB^2}+\omega^{(2,4)}_{AB^2})\\
    &\quad+\frac{3M}{4}\qty(\mathcal S^2\mathcal P^2+\mathcal A^2)\qty(\omega^{(0,2)}_A+\omega^{(2,4)}_A)\\
    &\quad-\frac{3M}{2}\qty(\mathcal AE+\mathcal P^2E_s)\qty(\omega^{(0,2)}_B+\omega^{(2,4)}_B)\\
    &\quad-\frac{3M}{4}\qty(\mathcal P^2I-\mathcal S^2 H+10\abs{B}^2+2\mathcal AG)\omega^{(1,3)}_A\\
    &\quad-\frac{3M}{4}\qty(\mathcal P^2D-EG+E_sH+\mathcal AF)\omega^{(1,3)}_B\\
    &\quad+\frac{3M}{2}\qty(\mathcal AE+\mathcal P^2E_s)\omega^{(1,3)}_{\bar B}-\frac{3M}{2}\Im\qty(\bar AB)\alpha^{(1,3)}_B.
   \end{split}
\end{align}
Using the various identities derived above in Eq. \eqref{DN1} allows to write 
\begin{align}
    \begin{split}
    \hat\nabla N^{(1)}&=\frac{15M}{4}\qty(\omega^{(0,2)}_{AB^2}+\omega^{(2,4)}_{AB^2})-\frac{3M}{2}\qty(\omega^{(0,3)}_{B^2}\alpha^{(0,-1)}_A+\omega^{(0,3)}_{AB}\alpha^{(0,-1)}_B)\\
    &\quad+\frac{M}{4}\qty(\mathcal A^2+\mathcal P^2\mathcal S^2)\qty(\omega^{(0,2)}_A+2\omega^{(1,3)}_{\bar A}+3\omega^{(2,4)}_A)\\
    &\quad-\frac{M}{2}\qty(\mathcal AE+\mathcal P^2E_s)\qty(5\omega^{(0,2)}_B-2\omega^{(1,3)}_{\bar B}+3\omega^{(2,4)}_B)\\
    &\quad-\frac{3M}{4}\qty(10\abs{B}^2+2\mathcal AG+\mathcal P^2I-\mathcal S^2H)\omega^{(1,3)}_A\\
    &\quad-3M\qty(\mathcal AF-EG+E_sH+\mathcal P^2D)\omega^{(1,3)}_B-\frac{3M}{2}\omega^{(0,0)}_{\bar AB}\alpha^{(1,3)}_B.
    \end{split}
\end{align}
Making use of the relations \eqref{lin_dep_terms} allows to express $\text{DN}^{(1)}$ in terms of linearly independent contributions. When the dust settles down, we are left with
\begin{align}
    \begin{split}
    \hat\nabla N^{(1)}&=\frac{M}{4}\qty(\mathcal A^2+\mathcal P^2\mathcal S^2)\qty(\omega^{(0,2)}_A+2\omega^{(1,3)}_{\bar A}+3\omega^{(2,4)}_A)\\
    &\quad-\frac{M}{2}\qty(\mathcal AE+\mathcal P^2E_s)\qty(5\omega^{(0,2)}_B-2\omega^{(1,3)}_{\bar B}+3\omega^{(2,4)}_B)\\
    &\quad-\frac{9M}{2}\abs{B}^2\omega^{(1,3)}_A -\frac{3M}{2}\qty(\mathcal AF-EG+E_sH+\mathcal P^2D)\omega^{(1,3)}_B\\
    &\quad+\frac{9M}{4}\omega^{(0,2)}_{AB^2}+\frac{15M}{4}\omega^{(2,4)}_{AB^2}.
    \end{split}
\end{align}

\newpage

\phantom{a}

\newpage

\setlength{\columnsep}{1cm}

\thispagestyle{empty}
\addcontentsline{toc}{chapter}{Index of the main equations}
\thispagestyle{empty}
\printindex

\addcontentsline{toc}{chapter}{Bibliography}
\bibliographystyle{unsrt}
\bibliography{author/biblio}

\pagestyle{empty}
\pagenumbering{gobble}

\newpage

\phantom{a}

\newpage

\color{white}

\newpagecolor{dukeblue}\afterpage{\restorepagecolor}

\phantom{a}

\vspace{\stretch{1}}

\begin{center}
    \includegraphics[width=0.75\textwidth]{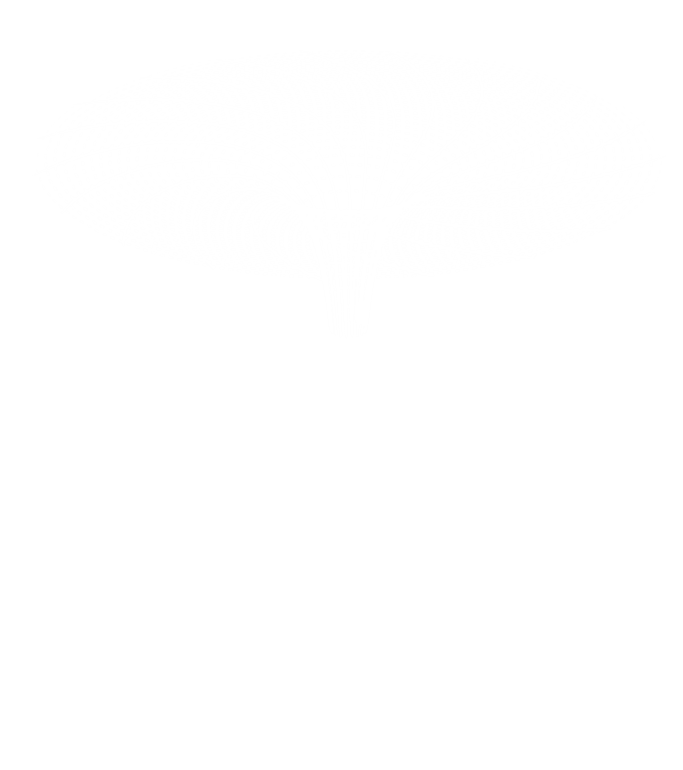}
\end{center}

\vspace{\stretch{1}}

\end{document}